\let\oldchaptermark\chaptermark
\renewcommand{\chaptermark}[1]{\oldchaptermark{\@chaptermark}}
\let\old@chapter\@chapter
\def\@chapter[#1]#2{%
	\def\@chaptermark{#1}
	\old@chapter[#2]{#2}%
}
\let\oldsectionmark\sectionmark
\renewcommand{\sectionmark}[1]{\oldsectionmark{\@sectionmark}}
\let\oldsubsectionmark\subsectionmark
\renewcommand{\subsectionmark}[1]{\oldsubsectionmark{\@subsectionmark}}
\let\old@sect\@sect
\def\@sect#1#2#3#4#5#6[#7]#8{%
	\@namedef{@#1mark}{#7}
	\old@sect{#1}{#2}{#3}{#4}{#5}{#6}[#8]{#8}%
}
\renewcommand*{\backrefalt}[4]{%
	\ifcase #1 %
	No citations.%
	\or
	(Cited on~p.~#2)%
	\else
	(Cited on~pp.~#2)%
	\fi
}
\theoremstyle{definition}
\newtheorem{definition}{Definition}[chapter]
\theoremstyle{plain}
\newtheorem{theorem}{Theorem}[chapter]
\newtheorem{proposition}[definition]{Proposition}
\newtheorem{observation}[definition]{Observation}
\newtheorem{lemma}[definition]{Lemma}
\newtheorem{corollary}[definition]{Corollary}
\newtheorem{rr}[definition]{Reduction Rule}
\newtheorem{construction}[definition]{Construction}
\newenvironment{propEnum}{%
	\enumerate
}{%
	\endenumerate
}
\newenvironment{claimproof}
  {\noindent \textit{Proof.} }
  {\hfill$\diamond$ \\ \goodbreak}
\newcommand{\kommentar}[1]{}
\newcommand{\Oh}{\ensuremath{\mathcal{O}}}
\newcommand{\proofpara}[1]{\smallskip
	
	\noindent\textit{#1.}}
\DeclareMathOperator{\poly}{poly}
\DeclareMathOperator{\twwithoutN}{{tw}}
\DeclareMathOperator{\vretwithoutN}{{ret}}
\DeclareMathOperator{\eretwithoutN}{{e-ret}}
\DeclareMathOperator{\off}{off}
\DeclareMathOperator{\desc}{desc}
\DeclareMathOperator{\anc}{anc}
\DeclareMathOperator{\DP}{DP}
\DeclareMathOperator{\var}{var}
\DeclareMathOperator{\src}{sources}
\DeclareMathOperator{\wcode}{w-code}
\DeclareMathOperator{\height}{height}
\DeclareMathOperator{\Costs}{Cost}
\newcommand{\numP}{\left\lVert \mathcal{P}\right\rVert}
\newcommand{\Dbar}{{\ensuremath{\overline{D}}}\xspace}
\newcommand{\kbar}{{\ensuremath{\overline{k}}}\xspace}
\newcommand{\w}{{\ensuremath{\lambda}}\xspace}
\newcommand\lb{\linebreak}
\newcommand\Recc[1]{Recurrence~(\ref{#1})}
\newcommand{\predators}[1]{{\ensuremath{N_{>}(#1)}}\xspace}
\newcommand{\prey}[1]{{\ensuremath{N_{<}(#1)}}\xspace}
\newcommand{\sources}{\sourcespersonal{\Food}}
\newcommand{\sourcespersonal}[1]{{\ensuremath{\src(#1)}}\xspace}
\newcommand{\yes}{{\normalfont\texttt{yes}}\xspace}
\newcommand{\no}{{\normalfont\texttt{no}}\xspace}
\newcommand{\Wh}[1]{{\normalfont\texttt W[#1]}\xspace}
\newcommand{\NP}{{\normalfont\texttt{NP}}\xspace}
\renewcommand{\P}{{\normalfont\texttt{P}}\xspace}
\newcommand{\FPT}{{\normalfont\texttt{FPT}}\xspace}
\newcommand{\PFPT}{{\normalfont\texttt{PFPT}}\xspace}
\newcommand{\SETH}{{\normalfont\texttt{SETH}}\xspace}
\newcommand{\ETH}{{\normalfont\texttt{ETH}}\xspace}
\newcommand{\XP}{{\normalfont\texttt{XP}}\xspace}
\newcommand{\PeqNP}{\mbox{\P = \NP}\xspace}
\newcommand{\PneqNP}{\mbox{\P $\neq$ \NP}\xspace}
\newcommand{\NPcoNPpoly}{{\normalfont\texttt{NP~$\not\subseteq$~coNP/poly}}\xspace}
\newcommand{\mcal}{\mathcal}
\newcommand{\mbb}{\mathbb}
\newcommand{\Instance}{{\ensuremath{\mathcal{I}}}\xspace}
\newcommand{\Net}{{\ensuremath{\mathcal{N}}}\xspace}
\newcommand{\Tree}{{\ensuremath{\mathcal{T}}}\xspace}
\newcommand{\Food}{{\ensuremath{\mathcal{F}}}\xspace}
\newcommand{\ret}{{\ensuremath{\vretwithoutN_{\Net}}}\xspace}
\newcommand{\vret}{\ret}
\newcommand{\eret}{{\ensuremath{\eretwithoutN_{\Net}}}\xspace}
\newcommand{\tw}{{\ensuremath{\twwithoutN_{\Food}}}\xspace}
\newcommand{\twN}{{\ensuremath{\twwithoutN_{\Net}}}\xspace}
\newcommand{\PD}{\PDsub\Tree}
\newcommand{\PDsub}[1]{{\ensuremath{\text{PD}_{#1}}}\xspace}
\newcommand{\SstarPrime}[1]{\ensuremath{\mathcal{S}^*_{#1}}\xspace}
\newcommand{\Sstar}{\SstarPrime{t,R,G,B,s}}
\newcommand{\spannbaum}[1]{\spannbaumsub{\Tree}{#1}}
\newcommand{\spannbaumsub}[2]{\ensuremath{#1\langle #2 \rangle}\xspace}
\newcommand{\problemdef}[3]{
	\bigskip
	
	\noindent
	\parbox{0.9\textwidth}{
		\normalsize\textsc{#1} \smallskip \\ \nopagebreak[4]
		\begin{tabularx}{\textwidth}{@{}l@{\hspace{3pt}}X}
			\normalsize\textbf{Input:}    & {\normalsize #2}.\\
			\normalsize\textbf{Question:} & {\normalsize #3}?
		\end{tabularx}
	}
	\bigskip\\
}
\newcommand{\PROB}[1]{{{\normalfont\textsc{#1}}}\xspace}
\newcommand{\VC}{\PROB{Vertex Cover}}
\newcommand{\SC}{\PROB{Set Cover}}
\newcommand{\KP}{\PROB{Knapsack}}
\newcommand{\MCKP}{\PROB{MCKP}}
\newcommand{\MCKPLong}{\PROB{Multiple-Choice Knapsack}}
\newcommand{\SubSum}{\PROB{Subset Sum}}
\newcommand{\msSubSum}{\PROB{Multi-Selectable \SubSum}}
\newcommand{\SubProd}{\PROB{Subset Product}}
\newcommand{\XTClong}{\PROB{Exact Cover by 3-Sets}}
\newcommand{\XTC}{\PROB{X3C}}
\newcommand{\HS}{\PROB{Hitting Set}}
\newcommand{\DS}{\PROB{Dominating Set}}
\newcommand{\ILPF}{\PROB{ILP-Feasibility}}
\newcommand{\rbnb}{\PROB{Red-Blue Non-Blocker}}
\newcommand{\MCNF}{\PROB{MCNF}}
\newcommand{\MCNFlong}{\PROB{Minimum-Cost Network Flow}}
\newcommand{\qSAT}[1]{\PROB{$#1$-SAT}}
\newcommand{\SAT}{\PROB{Satisfiability}}
\newcommand{\WCS}[2]{\PROB{Weighted Circuit Satisfiability over $\mathcal{C}_{#1,#2}$}}
\newcommand{\MPD}{\PROB{Max-PD}}
\newcommand{\MPDlong}{\PROB{Maximize Phylogenetic Diversity}}
\newcommand{\BNAP}{\PROB{Budgeted NAP}}
\newcommand{\PS}{\PROB{Penalty Sum}}
\newcommand{\PenSum}{\PS}
\newcommand{\ucNAP}{\PROB{unit-cost-NAP}}
\newcommand{\HStw}{\PROB{\HS with Tree-Profits}}
\newcommand{\GNAP}{\PROB{GNAP}}
\newcommand{\GNAPLong}{\PROB{Generalized Noah's Ark Problem}}
\newcommand\NAP[4][]{\ensuremath{#1 \stackrel{#2}{\to} #3\:[#4]}-\PROB{NAP}}
\newcommand\NAPLong[4][]{\ensuremath{#1 \stackrel{#2}{\to} #3\:[#4]}-\PROB{Noah's Ark Problem}}
\newcommand{\tPDwslong}{\PROB{Strict Time Sensitive Maximization of Phylogenetic Diversity}}
\newcommand{\tPDslong}{\PROB{Time Sensitive Maximization of Phylogenetic Diversity}}
\newcommand{\tPDws}{\PROB{\mbox{s-Time-PD}}}
\newcommand{\tPDs}{\PROB{\mbox{Time-PD}}}
\newcommand{\ctPDws}{\PROB{\mbox{c-\tPDws}}}
\newcommand{\ctPDs}{\PROB{\mbox{c-\tPDs}}}
\newcommand{\cBartPDslong}[1]{\PROB{extinction $#1$-colored \tPDslong}}
\newcommand{\cBartPDs}[1]{\PROB{\mbox{ex-$#1$-c-\tPDs}}}
\newcommand{\PDDlong}{\PROB{Optimizing PD with Dependencies}}
\newcommand{\sPDDlong}{\PROB{Optimizing PD in Vertex-Weighted Food-Webs}}
\newcommand{\cksPDDlong}{\PROB{$k$-colored \sPDDlong}}
\newcommand{\PDDplong}{\PROB{Optimizing PD with Pattern-Dependencies}}
\newcommand{\cDPDDlong}{\PROB{$D$-colored \PDDlong}}
\newcommand{\ckPDDlong}{\PROB{$2$-colored \PDDlong}}
\newcommand{\PDD}{\PROB{PDD}}
\newcommand{\sPDD}{\PROB{s-PDD}}
\newcommand{\cksPDD}{\PROB{$k$-c-\sPDD}}
\newcommand{\PDDp}{\PROB{\PDD-pattern}}
\newcommand{\cDPDD}{\PROB{$D$-c-\PDD}}
\newcommand{\ckPDD}{\PROB{$2$-c-\PDD}}
\newcommand{\MAPPD}{\PROB{MapPD}}
\newcommand{\MAPPDlong}{\PROB{Max-All-Paths-PD}}
\newcommand{\cMAPPD}{\PROB{colored-\MAPPD}}
\newcommand{\cMAPPDlong}{\PROB{colored-\MAPPDlong}}
\newcommand{\wpSC}{\PROB{wpSC}}
\newcommand{\wpSClong}{\PROB{Item-Weighted Partial Set Cover}}
\newcommand{\MaxNPD}{\PROB{Max-Net-PD}}
\newcommand{\bet}[1]{\ensuremath{\left|#1\right|}}
\newcommand{\sprop}{sur\-vi\-val pro\-ba\-bil\-ity\xspace}
\newcommand{\sprops}{sur\-vi\-val pro\-ba\-bi\-li\-ties\xspace}
\newcommand{\myvec}[1]{\ensuremath{\mathbf{#1}}}
\newcommand{\mo}{-_{\ge 0}}
\DeclareMathOperator{\ex}{ex}
\newcommand{\Pvx}{\ensuremath{P_{v,x}}\xspace}
\newcommand{\mcA}{\ensuremath{\mathcal A}\xspace}
\newcommand{\Hbar}[1]{{\ensuremath{\overline{H_{#1}}}}\xspace}
\newcommand{\Pbar}{{\ensuremath{\overline{P}}}\xspace}
\newcommand{\Cbar}{{\ensuremath{\overline{C}}}\xspace}
\newcommand\proofpart[1]{
	
	\smallskip #1\\ \noindent}
\DeclareMathOperator{\distclust}{dist-clust}
\DeclareMathOperator{\distcoclust}{dist-co-clust}
\newcommand{\apPD}{\apPDsub{\Net}\xspace}
\newcommand{\apPDsub}[1]{\ensuremath{\text{AP-PD}_{#1}}}
\newcommand{\take}{\texttt{take}\xspace}
\newcommand{\leave}{\texttt{leave}\xspace}
\newcommand{\Index}{\texttt{ind}\xspace}
\newcommand{\NetPD}{\NetPDsub{\Net}\xspace}
\newcommand{\NetPDsub}[1]{\ensuremath{\text{Net-PD}_{#1}}}
\newcommand{\N}{\ensuremath{\mathbb{N}}\xspace}
\newcommand{\gam}[2][p]{\ensuremath{\gamma^{#1}_{#2}}}
\newcommand{\floorvar}[2]{\lfloor{#2}\rfloor_{#1}}
\newcommand{\floorH}[1]{\floorvar{H}{#1}}
\newcommand{\ceilvar}[2]{\lceil{#2}\rceil_{#1}}
\newcommand{\ceilH}[1]{\ceilvar{H}{#1}}
\newcommand{\iprop}{in\-her\-i\-tance pro\-por\-tion\xspace}
\newcommand{\iprops}{in\-her\-i\-tance pro\-por\-tions\xspace}
\newcommand{\myrowcols}{\myrowcolsnum{0}}
\newcommand{\myrowcolsnum}[1]{\rowcolors{#1}{gray!10!white}{white}}
\begin{document}

\begin{titlepage}
	\begin{center}
		\vspace{-10cm}
		\includegraphics[width=.5\textheight]{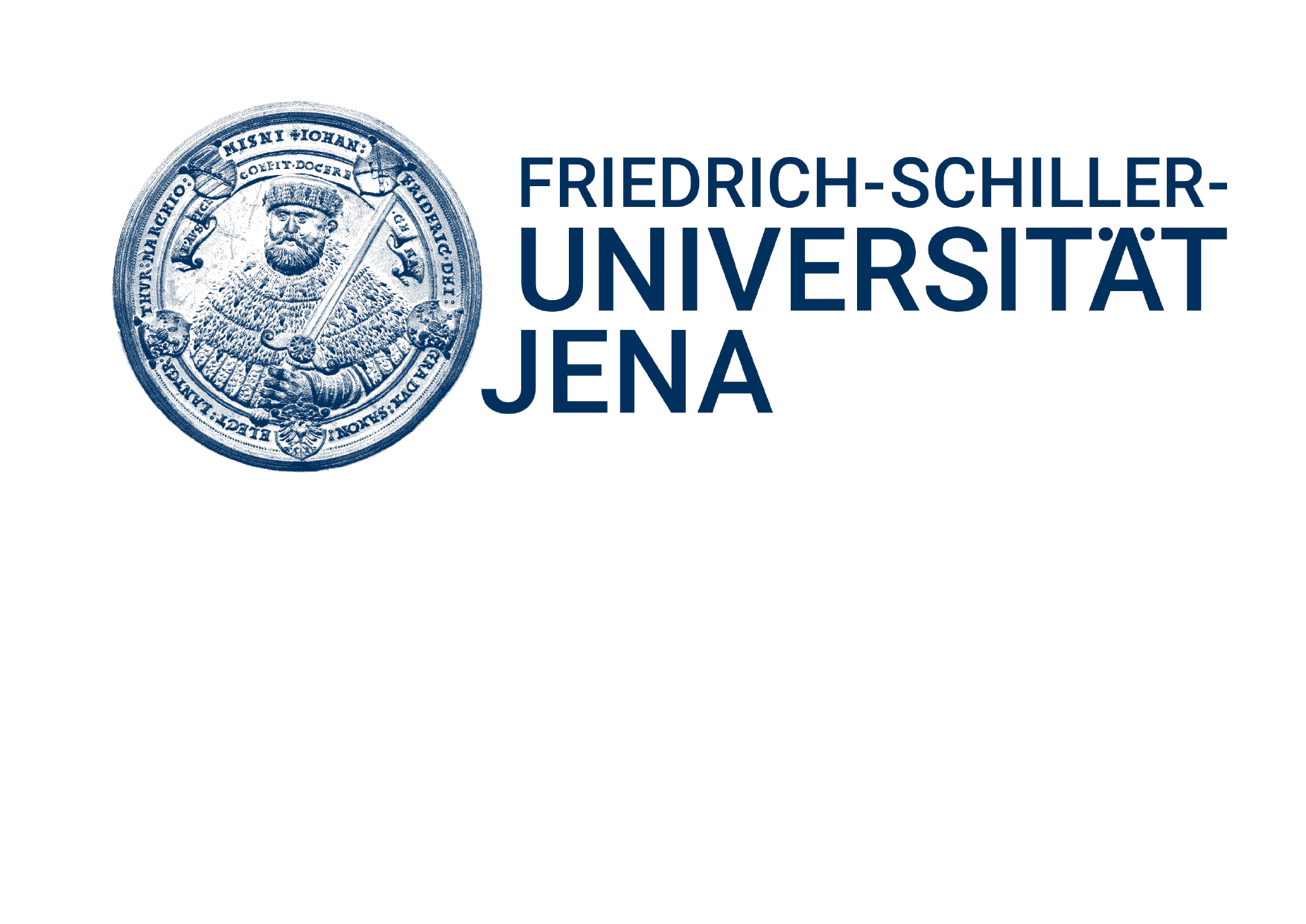}
		\vspace{-3cm}
		
		{\LARGE\textbf{Who Should Have a Place on the Ark?\\[.3cm] Parameterized Algorithms for the Maximization of Phylogenetic Diversity}}\\[1cm]
		
		\vspace{0.5cm}
		
		\textbf{DISSERTATION}\\
		FOR THE DEGREE OF DOCTOR OF NATURAL SCIENCES\\
		DOCTOR RERUM NATURALIUM
		(DR. RER. NAT.)
		
		\vspace{0.5cm}
		
		Submitted to the Council of the Faculty of Mathematics and Computer Science\\
		of the Friedrich Schiller University Jena\\
		on July 29, 2024, by\\
		\textbf{Jannik ``Theodor'' Schestag, M.Sc.},\\
		born May 20, 1995 in Heilbronn, Germany.
	\end{center}
	\vspace{0.5cm}

	\newpage
	\pagestyle{empty}
	
	~
	\vfill
	
	\noindent \textbf{Expert Reviewer} (\textbf{Gutachter}):
	
	\noindent Prof. Dr. Christian Komusiewicz, Friedrich-Schiller-Universität Jena, Germany

	\noindent Prof. Dr.ir. Leo J.J. van Iersel, TU Delft, The Netherlands

	\noindent Dr. Mathias Weller, Université Gustave Eiffel, Paris, France

	\noindent Defended on \textbf{January 17, 2025}
\end{titlepage}
\pagestyle{empty}
\pagenumbering{Roman}



\pagestyle{plain}



\setcounter{page}{5}
\section*{Acknowledgments}
\addcontentsline{toc}{section}{Acknowledgments}
First and foremost, I humbly thank my redeemer, Jesus Christ, through whom the foundations of the heavens and the earth were created (Colossians 1:16, John 1:3) and who died on the cross as I deserve (Romans 6:23) so that I may live as He deserves (2 Corinthians 5:21).
Jesus not only became my God but also my best friend.
All the strength, concentration, and willpower I needed to accomplish every part of this work were surely given by Him (Philippians 2:13).
All glory, honor, and power be unto Him forever and ever (Revelation 4:11). Amen!

Academically, I express my deepest gratitude to my supervisor, Christian\lb Komusiewicz, for his multitude of great ideas and quick answers to my questions. Additionally, I thank him for the endless hours spent writing applications and for always staying supportive, even in situations where many other professors would have opted~out.

Though officially he is just a co-author, I want to thank Mark Jones as a `minor' supervisor.
His knowledge, humor, and way of working during the cooperative time left a very positive feeling with me and definitively improved my working spirit.

Furthermore, I am grateful to Niels Grüttemeier and Frank Sommer, who initially supervised my Bachelor's and Master's degrees and later became colleagues.
Also, I thank my other colleagues Alexander Bille, Jaroslav Garvardt, Nils Morawietz,\lb Luca Staus, and the Master's student Anton Herrmann, for creating a pleasant working atmosphere, sharing one or the other joke in times, and providing deep talks.\lb
I further want to thank all of my co-authors, listed in lexicographic order:
Matthias Bentert,
Jaroslav Garvardt,
Niels Grüttemeier,
Leo van~Iersel,
Mark Jones,\lb
Christian Komusiewicz,
Simone Linz,
Ber Lorke,
Nils Morawietz,
Malte Renken,
Frank Sommer,
Celine Scornavacca, and
Mathias Weller.

Gratefully, I received a financial support by the Deutscher Akademischer Austauschdienst (DAAD---German Academic Exchange Service), project 57556279,\lb from April to September 2023.
I~thank the TU~Delft and especially Leo van~Iersel,\lb Mark Jones, and their colleagues for hosting me during this period.

Last but not least, for their unwavering support throughout my life, I want to express my sincerest thanks to my mother, Kerstin Schestag, and my sister, Mona Felina Schestag.
Surely, I would have not come to where I am without them.
While it is impossible to list all the important people in my life, I explicitly want to express thanks to
my grandmother Edeltraud Schestag,
Klaus~Pieper,
Holger~\&~Regine~Hellmich,\lb
Ricarda~Schwarz,
the~Slembeck family,
the~Köhler family,
Niklas~Vogt,
Jan~Mischon,
my late father Joachim Schestag,
my late grandfather Ulrich Hellmich,
and everyone who supportively prayed for me!

\newpage\null\thispagestyle{empty}\newpage

\section*{Preface}
\addcontentsline{toc}{section}{Preface}
Within this thesis, I present the results of the research I conducted in the field of parameterized algorithms for the maximization of phylogenetic diversity from May~2022 until June 2024.
During these two years, I have at first been part of the Algorithmics research group led by Christian Komusiewicz at the Philipps-Universität Marburg, Germany.
Because of receiving a six-month scholarship I worked some time in the Discrete Mathematics and Optimization research group led by Karen Aardal at the TU Deft, The Netherlands.
After my pleasant stay in the Netherlands, I again moved to Germany to work with Christian Komusiewicz; this time in the Algorithm Engineering research group at the Friedrich-Schiller-Universität Jena, Germany.

The results presented in this thesis are predominantly published in scientific papers on the platform ArXiv.
Many of them are however not yet published in conference proceedings or journals.
In the following, I will explain which chapters are based on which publications and what participation I had in the research and writing process.
Afterward, I will give a brief list of other scientific publications I coauthored that are not included in this thesis.

It is worth mentioning that the order of chapters in this thesis does not corresponding to when the research occurred.
To minimize confusion: The order of research was \GNAPLong, \MAPPDlong, \tPDs, \MaxNPD, and finally \PDDlong.
This order may be good to keep in min for readers, as algorithmic ideas for \MAPPDlong appear in Chapter~\ref{ctr:Networks} of this thesis but are basis for many algorithms in Chapters~\ref{ctr:TimePD} and~\ref{ctr:FoodWebs}.

Chapter~\ref{ctr:GNAP} is based on the publication ``A Multivariate Complexity Analysis of the Generalized Noah's Ark Problem'' written with Christian Komusiewicz, which can be found on the platform ArXiv~\cite{GNAParxiv}.
A preliminary version of this publication appeared in the \emph{Proceedings of the 19th Cologne-Twente Workshop on Graphs and Combinatorial Optimization ({CTW}~2023)}~\cite{GNAP}.
During my initial literature research on what kind of generalizations of \MPD there are, I first considered studying \GNAPLong (and \PDDlong).
As it was my first project as a Ph.D. student and I was at first only working from home, I was relatively slow in the research and first had to start with basic concepts.
I made the key discovery to find the strong connection to the \MCKPLong.
The overall research was conducted in cooperation with Christian whereby Christian gave the better ideas.
The writing-up process was predominately conducted by me.
Christian provided helpful hints and wrote the introduction.
I gave the talk at CTW~2023.
This work also contains a small observation, Observation~\ref{obs:GNAP-ultrametric-greedy}, which is not published elsewhere.

Chapter~\ref{ctr:TimePD} is based on the publication ``Maximizing Phylogenetic Diversity under Time Pressure: Planning with Extinctions Ahead'' written with Mark Jones, which can be found on the platform ArXiv~\cite{TimePD}.
%
The research of \tPDs and \tPDws was initialized by me during my stay in Delft, after Mark and I finished the research on \MAPPDlong.
We were actually stuck in the research process for quite a while because it was difficult to find out how to cope with the scheduling and tree-structures at the same time.
Mark made the final breakthrough and had the rough idea for the \FPT-algorithm for \tPDs with respect to~$D$~(Lemma~\ref{lem:timePD-cs-D}).
I quickly had the idea of how to utilize the idea to show that also \tPDws is \FPT when parameterized by~$D$~(Lemma~\ref{lem:timePD-cws-D}).
Afterward, it took us several months of work to show that \tPDws is \FPT when parameterized by~$\Dbar$~(Theorem~\ref{thm:timePD-DBar}).
The according proof is arguably the most technical and difficult in this thesis.
Mark and I had a fair share in the writing-up process of the results.
I additionally provided the connection to the scheduling field and he wrote up most of the introduction.

Chapter~\ref{ctr:FoodWebs} is based on the publication ``Maximizing Phylogenetic Diversity under Ecological Constraints: A Parameterized Complexity Study'' written with Christian Komusiewicz, which can be found on the platform ArXiv~\cite{PDDarxiv}.
A preliminary version of this publication appeared in the \emph{Proceedings of the 44th IARCS Annual Conference on Foundations of Software Technology and Theoretical Computer Science (FSTTCS~2024)}~\cite{PDD}.
As mentioned above, the idea of the study of \PDDlong came along with my first literature research.
Christian and I conducted the research roughly on an equal level of scientific input.
The writing-up process was predominately conducted by me.
Christian provided helpful hints and wrote the introduction.

Chapter~\ref{ctr:Networks} is predominantly based on the research of \MAPPDlong and only Section~\ref{sec:Net-reduction-MaxNPD} is about \MaxNPD.
However, also the introduction and discussion of the chapter contain material from the \MaxNPD-paper.
The research is contained in the two following publications.
Firstly, ``How Can We Maximize Phylogenetic Diversity? Parameterized Approaches for Networks'' was written with Mark Jones, which can be found on the platform archive~\cite{MAPPDarchive}.
A preliminary version of this publication appeared in the \emph{Proceedings of the 18th International Symposium on Parameterized and Exact Computation (IPEC 2023)}~\cite{MAPPD}.
Secondly, ``Maximizing Network Phylogenetic Diversity'' was written with Leo van Iersel, Mark Jones, Celine Scornavacca, and Mathias Weller, which can be found on the platform ArXiv~\cite{MaxNPD}.
%
Christian Komusiewicz suggested me also to study phylogenetic diversity in phylogenetic networks.
We then came into contact with Mark Jones and Leo van Iersel and asked them whether I could join them in Delft for a time.
After their confirmation, I applied for a scholarship from DAAD with a great deal of help from Christian.
Mark and I were both very active in the research of \MAPPDlong so that Leo did not join the paper in the end.
Mark and I conducted the research roughly on an equal level of scientific input and also the writing-up roughly on equal sides.
I gave the talk at IPEC~2023.
After the conference version, I proposed a color-coding algorithm with respect to~$\Dbar$ (Theorem~\ref{thm:Net-Dbar}) which improves to the previous algorithm in running time and working on general graphs and not only on binary graphs.
Further, Mark and I together developed a kernelization algorithm with respect to the reticulation number.
Arising from the research of \MAPPDlong, Mark proposed to study \MaxNPD in an established research group where I joined.
In the meetings, Mark and I quickly showed that there is a reduction from \GNAP to \MaxNPD on level-1-networks~(Theorem~\ref{thm:Net-level-1}).
We then collectively also proved the \NP-hardness of \PS (Section~\ref{sec:PenSum}) and an \FPT-algorithm for the number of reticulations.
I wrote up what is Section~\ref{sec:Net-reduction-MaxNPD} in this thesis, Mark what is Section~\ref{sec:PenSum} in this thesis, Mathias the \FPT-algorithm, and Celine and Leo the intro, prelims, and discussion of the paper.

Chapters~\ref{ctr:intro},~\ref{ctr:prelims}, and~\ref{ctr:conclusion}, these are the introduction, the preliminaries, and the conclusion, are based on parts of all five papers.
The examination of \MCKPLong, presented in Section~\ref{sec:MCKP}, is part of the \GNAP-paper~\cite{GNAP}.
The examination of \PS, presented in Section~\ref{sec:PenSum}, is part of the \MaxNPD-paper~\cite{MaxNPD}, except for Proposition~\ref{prop:Pre-PS-pseudo} which was not published elsewhere.

\newpage
During my time as a Ph.D. student, I also participated in the research of the following publications.
These are ordered by when the research project started.

\begin{itemize}
	\item ``On Critical Node Problems with Vulnerable Vertices'', with Niels\linebreak Grüttemeier, Christian Komusiewicz, and Frank Sommer.
	This paper is based on my Master's thesis.
	Journal: \emph{Journal of Graph Algorithms and Applications}~\cite{Vulnerable}.
	Conference: \emph{33rd International Workshop on Combinatorial Algorithms (IWOCA 2022)}~\cite{VulnerableIWOCA}.
	
	\item ``On the Complexity of Parameterized Local Search for the Maximum Parsimony Problem'', with Christian Komusiewicz, Simone Linz, and Nils\lb Morawietz.
	Conference: \emph{34th Annual Symposium on Combinatorial Pattern Matching (CPM 2023)}~\cite{Parsimony}.
	
	\item ``Finding Degree-Constrained Acyclic Orientations'', with Jaroslav Garvardt, Malte Renken, and Mathias Weller.
	Conference: \emph{18th International Symposium on Parameterized and Exact Computation (IPEC 2023)}~\cite{Ortientation}.
	
	\item ``On the Complexity of Finding a Sparse Connected Spanning Subgraph in a Non-Uniform Failure Model'', with Matthias Bentert, and Frank Sommer.
	Conference: \emph{18th International Symposium on Parameterized and Exact Computation (IPEC 2023)}~\cite{unsafeSpanning}.
	ArXiv: ~\cite{unsafeSpanningArXiv}.
	
	\item ``Protective and Nonprotective Subset Sum~Games: A Parameterized Complexity Analysis'', with Jaroslav Garvardt, Christian Komusiewicz, and Ber Lorke.
	Conference: \emph{8th International Conference on Algorithmic Decision Theory (ADT 2024)}~\cite{SubsetSumGame}.
\end{itemize}

\pagestyle{plain}
\pagenumbering{arabic}
\setcounter{page}{11}

%

\section*{Abstract}
\addcontentsline{toc}{section}{Abstract}

Phylogenetic Diversity (PD) is a well-regarded measure of the overall biodiversity of a set of present-day species (taxa) that indicates its ecological significance.
In the \MPDlong (\MPD) problem one is asked to find a small set of taxa in a phylogenetic tree for which this measure is maximized.
\mbox{\MPD} is particularly relevant in conservation planning, where limited resources necessitate prioritizing certain taxa to minimize biodiversity loss.
Although \mbox{\MPD} can be solved in polynomial time~[Steel, Systematic Biology, 2005; Pardi and Goldman, PLoS Genetics, 2005], its generalizations---which aim to model biological processes and other aspects in conservation planning with greater accuracy---often exhibit \NP-hardness, making them computationally challenging.
This thesis explores a selection of these generalized problems within the framework of parameterized complexity.

In \GNAPLong (\GNAP), each taxon only survives at a certain \sprop, which can be increased by investing more money in the taxon.
We show that \GNAP is \Wh{1}-hard with respect to the number of taxa but is \XP with respect to the number of different costs and different \sprops.
Additionally, we show that \ucNAP, a special case of \GNAP, is \NP-hard.

In \tPDslong\lb (\tPDs), different extinction times of taxa are considered after which they can no longer be saved.
For \tPDs, we present color-coding algorithms that prove that \tPDs is fixed-parameter tractable (\FPT) with respect to the threshold of diversity and the acceptable loss of diversity.

In \PDDlong (\PDD), each saved taxon must be a source in the ecological system or a predator of another saved species.
These dependencies are given in a food-web.
We show that \PDD is \FPT when parameterized with the size of the solution plus the height of the phylogenetic tree.
Further, we consider parameters characterizing the food-web
and prove that \PDD is \FPT with respect to the treewidth if the phylogenetic tree is a star, disproving an open conjecture.

Phylogenetic Diversity is traditionally defined on phylogenetic trees, but phylogenetic networks have gained popularity as they enable the modeling of so-called reticulation events.
\MAPPDlong (\MAPPD) and \MaxNPD are two problems in maximizing phylogenetic diversity on phylogenetic networks.
\MAPPD is \Wh{2}-hard with respect to the solution size but
we show that \MAPPD is \FPT with respect to the number of reticulations and the treewidth, two parameters that show how tree-like the phylogenetic network is.
\MaxNPD however remains \NP-hard even on level-1-networks.

\newpage

\section*{Zusammenfassung (Translation of the Abstract)}
\addcontentsline{toc}{section}{Zusammenfassung}

Die phylogenetische Diversität (PD) ist ein anerkanntes Maß für die gesamte Bio-\lb diversität, das die ökologische Bedeutung einer Gruppe heutiger Arten (Taxa) anzeigt.
Im Problem \MPDlong (Maximierung der phylogenetischen Diversität, \MPD) wird gefordert, eine kleine Menge an Taxa in einem phylogenetischem Baum zu finden, für welche dieses Maß maximiert wird.
Dieses Problem ist besonders relevant in der Naturschutzplanung, bei der begrenzte Ressourcen die Priorisierung bestimmter Taxa zur Minimierung des Biodiversitätsverlusts erfordern.
Obwohl \MPD in polynomialer Zeit gelöst werden kann~[Steel, Systematic Biology, 2005; Pardi und Goldman, PLoS Genetics, 2005], weisen seine Verallgemeinerungen - welche biologische Prozesse und andere Aspekte des Artenschutzes genauer modellieren sollen - oft \NP-Härte auf, was sie rechnerisch anspruchsvoller macht.
Diese Arbeit untersucht eine Auswahl dieser generalisierten Probleme im Rahmen der parametrisierten Komplexität.

In \GNAPLong (Generalisiertes Noahs-Arche-Pro\-blem, \GNAP) überlebt jedes Taxon nur mit einer bestimmten Überlebenswahrscheinlichkeit, die durch mehr Investitionen in das Taxon erhöht werden kann.
Wir zeigen, dass \GNAP in Bezug auf die Anzahl der Taxa \Wh{1}-schwer, jedoch \XP in Bezug auf die Anzahl der verschiedenen Kosten und Überlebenswahrscheinlichkeiten ist.
Zusätzlich zeigen wir, dass das \ucNAP (Einheitspreis-NAP), ein Spezialfall von \GNAP, \NP-schwer ist.

In \tPDslong (zeitkritische Maximierung der phylogenetischen Diversität, \tPDs) werden unterschied\-liche Aussterbezeiten der Taxa berücksichtigt, nach denen diese nicht mehr gerettet werden können.
Für \tPDs präsentieren wir Farbmarkierungsalgorithmen, die beweisen, dass \tPDs in Bezug auf die geforderte Diversität und den akzeptablen Verlust an Diversität festparameter handhabbar (\FPT) ist.

In \PDDlong (Optimieren phylogenetischer Diversität bei Abhängigkeiten, \PDD) muss jedes gerettete Taxon eine Quelle im ökologischen System oder ein Raubtier einer anderen geretteten Art sein.
Diese Abhängigkeiten sind in einem Nährungsnetz gegeben.
Wir zeigen, dass \PDD{} \FPT ist, wenn es in Bezug auf die Größe der Lösung plus die Höhe des phylogenetischen Baumes parametrisiert wird.
Darüber hinaus betrachten wir Parameter, die das Nährungsnetz charakterisieren, und beweisen, dass \PDD in Bezug auf die Baumweite \FPT ist, wenn der phylogenetische Baum ein Stern ist, womit wir eine offene Vermutung widerlegen.

Phylogenetische Diversität wird traditionell auf phylogenetischen Bäumen de\-fi\-niert, aber phylogenetische Netzwerke haben an Popularität gewonnen, da sie auch die Modellierung sogenannter retikulierender Ereignisse ermöglichen.
\MAPPDlong (Maximierung aller Pfade [zur Bestimmung] der phylogenetischen Diversität, \MAPPD) und \MaxNPD (Maximierung der phylogenetischen Netzwerk-Diversität) sind zwei Probleme zur Maximierung der phylogenetischen Diversität in phylogenetischen Netzwerken.
\MAPPD ist in Bezug auf die Lösungsgröße \Wh{2}-schwer, aber wir zeigen, dass \MAPPD in Bezug auf die Anzahl der Retikulationen und die Baumweite \FPT ist, zwei Parameter, welche zeigen, wie Baum-ähnlich das phylogenetische Netzwerk ist.
\MaxNPD bleibt jedoch auch auf Level-1-Netzwerken \NP-schwer.


\setcounter{tocdepth}{1}
\tableofcontents

\chapter{Introduction}
\label{ctr:intro}


Human activities~\cite{crist} in the current Western economic system, driven by significant overconsumption and the externalization of costs to society and the environment~\cite{marques,lynch}, have significantly contributed to the world's ecological system approaching a sixth mass extinction~\cite{WarningToHumanityTwo} within the last decades and thereby created arguably one of mankind's biggest challenges for this century.
While activists and scientists alike are waiting for a serious political response to this crisis~\cite{WarningToHumanity}, the looming threat posed by climate change is predicted to further exacerbate the decline in species.
The 2023 report of the Intergovernmental Panel on Climate Change (IPCC)~\cite[Page~73]{ipcc2023}
warns that as the average global temperature increases, thousands of plant and animal species around the world face increasing risk of extinction.
With a ``climate breakdown''~\cite{climatebreakdown} on the horizon, media outlets worldwide report with drastic words on an increasing number of species that are already extinct or are inevitably about to become extinct in the near future~\cite{BBC,CNN,Reuters}.


The reason for quantifying the loss of biodiversity in public media by naming the number of extinct species is relatively evident.
The sheer number of extinct species is a very easy-to-understand metric.
Additionally, it subliminally carries a political message: Not a single further species should face extinction.
However, a bit more realism reveals the tragedy that certainly not enough resources---of monetary nature, political willpower, land-wise, time-wise, and further---are available to save each and every single species alive today.
Already in 1990, May~\cite{May} argued in a seminal note that the subsequent question of which set of species should be saved, ``raises larger questions about the ways in which relative values are assigned to different creatures''.

Indeed, when it seems impossible to save the entire ``tree of life'', some species have to be selected to focus on.
May further describes this selection as ``making choices for the ineluctably limited number of places on the ark''~\cite{May},
where he compares to the biblical narrative of Noah who built an ark to save his family and pairs of animals from a great flood.
In the de facto approach, a lot of effort in protection is spent on species which are perceived to be ``cute'' or ``dangerous''~\cite{cuteness} or which have the ``right color''~\cite{CutenessColor}.

\section{Phylogenetic Diversity}


While humans can bond to species~\cite{wilson1984} and therefore feel the need to protect a species that they consider aesthetically pleasing,
such a factor most certainly does not state how important a species is in the ecological system.
However, requiring the information which species have such an importance is crucial, as ``it is foolish to allow destruction of nature without knowing what it is worth''~\cite{crozier}.
In the years after May's note, a lot of effort in the scientific world has been put into the question of how to measure the relevance of a set of taxa or species.
The key for this search was the usage of \emph{phylogenetic trees}~\cite{humphries},
which are represented by graphs with vertices and edges.
Each leaf represents a living taxon (species) and internal vertices represent common ancestors of the leaves that can be reached or biological categories---technically known as operational taxonomic units.
Usually, the set of taxa is denoted with~$X$ and one then refers to phylogenetic $X$-trees.\footnote{Note that phylogenetic trees may be rooted or unrooted. In this thesis, we only consider the rooted variant.}

As an inaugural idea, it was proposed to count the number of internal vertices that the spanning tree of a given set of taxa in the phylogenetic tree has~\cite{vane,nixon} to measure the diversity of this set.
This measure, called \emph{cladistic diversity}, was soon improved by adding weights on the edges of the phylogenetic tree.
Independently, Faith~\cite{FAITH1992} and Weitzmann~\cite{Weitzman1992} proposed \emph{phylogenetic diversity}, where Weitzmann's definition even generalizes the usage of this metric to arbitrary items such as books in a library.
The phylogenetic diversity of a set~$A$ of taxa, denoted by~$\PD(A)$, is the total weight of all edges that are on a path from the root to a leaf that corresponds to a taxon in~$A$.
In Figure~\ref{fig:CD-and-PD} an example of cladistic and phylogenetic diversity is given.

\begin{figure}[t]
	\centering
	
	\begin{forest}
		forked edges,
		/tikz/every pin edge/.append style={Latex-, shorten , gray},
		/tikz/every pin/.append style={darkgray, font=\sffamily},
		/tikz/every label/.append style={gray, font=\sffamily},
		before typesetting nodes={
			delay={
				where content={}{coordinate}{},
			},
			where n children=0{tier=terminus, label/.process={Ow{content}{right:#1}}, content=}{},
		},
		for tree={
			grow'=0,
			s sep'+=10pt,
			l sep'+=20pt,
		},
		l sep'+=50pt,
		[, !1.edge label={node [pos=.85, every label, below] {150}}
		[, !1.edge label={node [pos=.75, every label, below] {154}}
		[
		[
		[(a) Acropora cf rotumana]
		[(b) Acropora valencinessi]
		]
		[, !1.edge label={node [pos=.25, every label, below] {66}}
		[(c) Acropora abtrotanoides]
		]
		]
		[(d) Acropora derawanensis]
		]
		[, !1.edge label={node [pos=.3, every label, below] {113}}
		[(e) Acropora batunai]
		]
		]
	\end{forest}
	\caption{A weighted phylogenetic tree of a selection of Acropora corals. For edges without numbers, the evolutionary distances are not specified~\cite{faith12}.\\
		The cladistic diversity of the sets of corals $\{a, c, e\}$ and $\{a, e\}$ is 4, each.
		The phylogenetic diversity of the sets of corals $\{a, c, e\}$ is 483 and of $\{a, c\}$ is 370, assuming that edges without numbers have a weight of~0.}
	\label{fig:CD-and-PD}
\end{figure}%

Intuitively, with phylogenetic diversity one measures the expected range of biological features of a given set of taxa.
We, however, need to be a bit cautious as it is not always correct to estimate the number of features with the phylogenetic diversity~\cite{wicke2021formal}.
Nevertheless, maximizing phylogenetic diversity has become the standard, albeit imperfect, measure for the biological diversity of a set of taxa~\cite{GCW+15,LondonEDGE,MPC+18}.
This measure forms the basis of the Fair Proportion Index and the Shapley Value~\cite{haake2008shapley,Hartmann2013TheEO,redding2006incorporating}, which are used to evaluate the individual contribution of individual taxa to overall biodiversity.
Over the years, phylogenetic diversity nevertheless became the most well-regarded metric in measuring the biodiversity and therefore the value of a set of species (see~\cite{crozier}),
which for example is used by the IUCN's  Phylogenetic Diversity Task Force \mbox{(\url{https://www.pdtf.org/})}
and the Zoological Society of London's EDGE of Existence program~\cite{LondonEDGE}.
Despite all the mentioned positive features of phylogenetic diversity, to not present a one-sided picture, we also want to mention that there are some objections raised against the usage of phylogenetic diversity~\cite{winter2013}
and many preservation projects also do not consider phylogenetic diversity~\cite{mace2008,santamaria2012}.


Based on the phylogenetic diversity index, Daniel Faith proposed a maximization problem, \MPD, in which one has to find a relatively small set of taxa that achieves maximal phylogenetic diversity~\cite{FAITH1992}.
The set needs to be small because limited resources impose monetary, environmental, or other restrictions to the preservation project.
More formally, \MPD, formulated as a decision problem, is defined as follows.

\problemdef{\MPDlong (\MPD)}{
	A phylogenetic~$X$-tree~\Tree and integers~$k$ and~$D$}{
	Is there a set of taxa~$S \subseteq X$ of size at most~$k$ such that~$\PD(S) \ge D$}

Faith already stated that \MPD can be computed with a greedy algorithm and therefore is computationally easy~\cite{FAITH1992}.
Formal proofs were given by Steel~\cite{steel} and Pardi and Goldman~\cite{Pardi2005}, independently.
Consequently, \MPD can be solved within seconds, even on large instances~\cite{PDinSeconds}.

\section{Modeling Real-World Biological Processes}


As much as the lightweight definition of \MPD is an advantage in understanding and quickly solving the problem,
just as much of a drawback it is in modeling real-world biological processes.
Among these disadvantages stand that it is not realistic that all species can be saved for the same price, that all species have the same remaining time before an extinction, or that the survival of each species is decided without regarding the interaction of species.
To cope with these issues, several further problems were introduced which, in contrast to \MPD, are \NP-hard~\cite{pardi07,hartmann,TimePD,moulton}.
Therefore, it is unlikely that these problems can be solved with an algorithm that only consumes time polynomial to the size of the input.
In this thesis, we examine a selection of \NP-hard problems in which one aims to maximize phylogenetic diversity.
We study these problems from a classical and parameterized point of view and want to help understanding what makes these problems tractable and provide algorithms that break this intractability to a certain degree.
While more mathematical definitions follow in the next chapter, the reader at this point only needs to understand that if a problem is \XP or \FPT, then an algorithm exists which solves the problem in a desirable time.
For \Wh{1}-hard problems, the existence of an \FPT-algorithm is unlikely.
A deeper view of the theoretical framework is given in the next chapter.
In the following, we briefly present the problems considered in this work and summarize some results.

\paragraph{Generalized Noah's Ark Problem.}
One of the first steps was to allow that the protection of taxa may have different costs~\cite{pardi07}.
This approach introduced the problem~\BNAP, also referred to as~\NAP[0]{c_i}{1}{2}, which is shown to be \NP-hard by a reduction from \KP~\cite{pardi07}.

Subsequent approaches also allowed to model uncertainty as follows.
For example, performing an action to protect some species does not guarantee the survival of that species but only raises the survival probability~\cite{weitzman}.
In this model, one now aims to maximize the \emph{expected} phylogenetic diversity.
That is, the weight of an edge is only added with the probability that at least one of the offspring in the subtree survives.
Finally, one may also consider the even more realistic case when for each species, one may choose from a set of different actions or even from combinations of different actions.
Each choice is then associated with a cost and with a resulting survival probability.
This model was studied by Hartmann and Steel~\cite{hartmann}, Pardi~\cite{PardiThesis} and Billionnet~\cite{billionnet13,billionnet17} as~\GNAPLong{} (\GNAP).
An example of an instance of \GNAP is given in Figure~\ref{fig:example-gnap}.
\begin{figure}[t]
	\centering
	\begin{tabular}{m{3.5cm}m{11cm}}
		\begin{tikzpicture}[edge from parent/.style={draw,-{Stealth[length=8pt]}},every node/.style={circle,fill=black,draw=black,inner sep=2pt}]
			\tikzstyle{txt}=[circle,fill=white,draw=white,inner sep=0pt]
			
			\node[txt] at (-.77,-1.9) {$a$};
			\node[txt] at (-.77,-3.5) {$b$};
			\node[txt] at (.77,-3.54) {$c$};
			\node[txt] at (2.31,-3.5) {$d$};
			
			\node[txt] at (-.8,-.7) {$80$};
			\node[txt] at (.8,-.7) {$100$};
			\node[txt] at (-.1,-1.9) {$50$};
			\node[txt] at (.5,-2.3) {$30$};
			\node[txt] at (2,-2.3) {$70$};
			
			\node {}
			child {node {}}
			child {node {}
				child {node {}}
				child {node {}}
				child {node {}}
			};
		\end{tikzpicture}
		& \raisebox{23mm}{%
			\myrowcolsnum3
			\begin{tabular}{c|c}
				\multicolumn{2}{c}{Taxon~$a$}\\
				\hline
				0 & 0\\
				1 & 0.3\\
				2 & 0.5\\
				5 & 0.8\\
				10 & 0.9
			\end{tabular}
			\myrowcolsnum3
			\begin{tabular}{c|c}
				\multicolumn{2}{c}{Taxon~$b$}\\
				\hline
				0 & 0\\
				1 & 0.2\\
				3 & 0.5\\
				10 & 0.75
			\end{tabular}
			\myrowcolsnum3
			\begin{tabular}{c|c}
				\multicolumn{2}{c}{Taxon~$c$}\\
				\hline
				0 & 0\\
				1 & 0.1\\
				2 & 0.3\\
				3 & 0.5\\
				7 & 0.8
			\end{tabular}
			\myrowcolsnum3
			\begin{tabular}{c|c}
				\multicolumn{2}{c}{Taxon~$d$}\\
				\hline
				0 & 0\\
				1 & 0.2\\
				2 & 0.4\\
				5 & 0.6\\
				7 & 0.9
			\end{tabular}
		}
	\end{tabular}
	\caption{An example of an instance of \GNAP with a phylogenetic tree on the left and the lists of projects to the right.
		In each table, in the left column the cost of the project is shown and in the right column the associated \sprop.
		Spending~2 on Taxon~$a$, 1~on Taxon~$b$, 0~on Taxon~$c$, and~5 on Taxon~$d$ would cost~8 and give an expected diversity of~$80 \cdot 0.5 + 50 \cdot 0.2 + 30 \cdot 0 + 70 \cdot 0.6 + 100 \cdot (1 - 0.8 \cdot 0.4) = 40 + 10 + 42 + 68 = 160$.}
	\label{fig:example-gnap}
\end{figure}
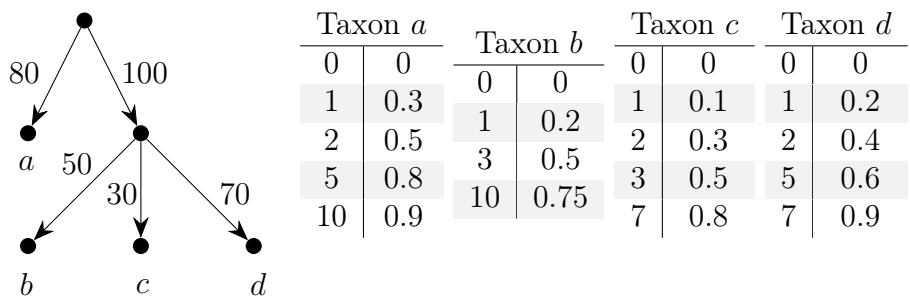%

It feels natural that putting more effort into a certain species increases their \sprop.
While the biblical story tells us that for God it was possible to restore species with a single male and female after the great flood, such a project seems unlikely to succeed for men without divine skills.
So, in \GNAP one can consider the projects to refer to the cost and the according \sprop that are associated with the protection of a certain amount of individuals of the same kind.
However, projects can also be a lot more general.

We consider \GNAP and also put a focus on a special case of \GNAP in which every taxon only has a single project that has a cost of~1 and a positive \sprop.
We show that this special case, which we call \ucNAP, is \NP-hard even if the phylogenetic tree of the input has a height of~2 or is ultrametric and has a height of~3.
Furthermore, we show that \GNAP is \XP with respect to the number of unique costs or the number of unique \sprops,
but when parameterized with the number of taxa, \GNAP is \Wh{1}-hard and therefore unlikely to admit an \FPT algorithm.

\paragraph{Consideration of Extinction Times.}
Efforts in another direction were made to consider that species could have differing times, after which they will die out if they have not been protected.
In this extension of \BNAP, denoted \tPDslong(\tPDs), each taxon has an associated \emph{rescue length} (the amount of time it takes to save them) and also an \emph{extinction time} (the time after which the taxon can not be saved anymore).
Additionally, we know when preservation teams are able to work on saving taxa.
Thus, to ensure that a set of taxa can be saved with the available resources, it is not enough to guarantee that their total cost is below a certain threshold.
One also needs to ensure that there is a schedule for the involved intervention teams under which each taxon is saved before its moment of extinction.
The first steps to addressing this issue were made by defining and analyzing \tPDs and \tPDws \cite{TimePD}, which is a part of this thesis.
\tPDws and \tPDs are related versions of the problem which differ only in the specification of whether operating teams may cooperate on saving a specific taxon or not.

\tPDs and \tPDws have much in common with machine scheduling problems, insofar as we may think of the taxa as corresponding to jobs with a certain due date and the teams corresponding to machines.
One may think of \tPDs and \tPDws as machine scheduling problems, in which the objective to be maximized is the phylogenetic diversity of the set of completed tasks---which are the saved taxa.

We show that both problems, \tPDs and \tPDws, are \FPT when parameterized with the threshold of diversity~$D$.
Further, we present an \FPT-algorithm for \tPDs with respect to the acceptable loss of phylogenetic diversity.
Such an algorithm is unlikely to exist for \tPDws, as this problem is \NP-hard even if every taxon needs to be saved.

\paragraph{Considering Ecological Dependencies.}
Moulton et~al.~\cite{moulton} were the first to consider a formal problem in which the survival of some taxa may also depend on the survival of other taxa and modeled the \PDDlong~(\PDD).
Here, the input additionally contains a directed acyclic graph (DAG)\footnote{A DAG is a directed graph without loops in which there is no path from vertex~$v$ to vertex~$u$ if there is a path from~$u$ to~$v$.}~\Food with vertex set~$X$ where an arc~$uv$ is present if the existence of taxa~$u$ provides all the necessary foundations for the existence of taxon~$v$.
In other words,~\Food models ecological dependencies between taxa.
Now, a taxon~$v$ may survive only if (i)~it does not depend on other taxa at all, that is, it has no incoming arcs, or (ii)~at least one taxon~$u$ survives such that~\Food contains the arc~$uv$.
The most widespread interpretation of such ecological dependency networks are food-webs where the arc~$uv$ means that taxon~$v$ feeds on taxon~$u$.\footnote{We remark that previous works~\cite{moulton,faller} consider a reversed interpretation of the arcs. We define the order in such a way that a source of the network also corresponds to a source of the ecosystem.}
A subset of taxa~$X$ where every vertex fulfills~(i) or~(ii) is called \emph{viable}.
The task in \PDD{} is to select a \emph{viable} set of~$k$ taxa that achieves maximal phylogenetic diversity.
An example of a food-web is given in Figure~\ref{fig:example-PDD}.
\begin{figure}[t]
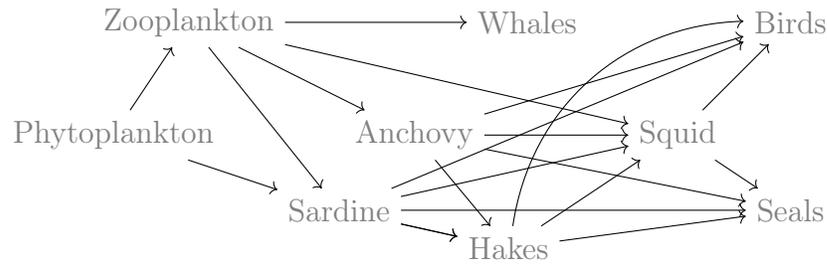

	\centering
	\usetikzlibrary{graphs}
	\tikz {
		\node[gray] (Phy) at (0,0) {Phytoplankton};
		\node[gray] (Zoo) at (1,1.5) {Zooplankton};
		\node[gray] (Sd) at (3,-1) {Sardine};
		\node[gray] (W) at (5.5,1.5) {Whales};
		\node[gray] (Sl) at (4,0) {Anchovy};
		\node[gray] (H) at (5.25,-1.5) {Hakes};
		\node[gray] (T) at (7.5,0) {Squid};
		\node[gray] (R) at (9,-1) {Seals};
		\node[gray] (V) at (9,1.5) {Birds};
		
		\graph { 	
			(Phy) -> (Zoo) -> (W),
			(Phy) -> (Sd) -> (H) ->[bend left=40] (V),
			(Zoo) -> (Sd) -> (T) -> (V),
			(Zoo) -> (Sl) -> (R),
			(Zoo) -> (T) -> (R),
			(Sd) -> (H) -> (R),
			(Sd) -> (R), (Sd) -> (V),
			(Sl) -> (H), (Sl) -> (T), (Sl) -> (V),
			(H) -> (T) };
	}
	\caption{Here, a model of the food-web of the Benguela ecosystem is depicted~\cite{planque}.
		Phytoplankton is the only source.
		If whales are to be saved, then zooplankton and phytoplankton also need to be saved.}
	\label{fig:example-PDD}
\end{figure}%

We consider \PDD and the special case \sPDD, where the phylogenetic tree is a star.
We show that \PDD is \FPT when parameterized by the size of the solution~$k$ plus the height of the phylogenetic tree and therefore with respect to the desired diversity~$D$.
We further examine \PDD with respect to parameters analyzing the structure of the food-web.
Herein, we show that \sPDD is \FPT when parameterized by the distance to cluster or the treewidth of the food-web.
With the distance to cluster it is measures how few vertices could be removed to result in a cluster graph and with the treewidth, one can state how similar a graph is to a tree.
The latter disproves a conjecture of Faller et al.~\cite[Conjecture~4.2]{faller} stating that \sPDD is \NP-hard even when the food-web is a tree, unless \PeqNP.

\section{Phylogenetic Networks}

Phylogenetic trees are the familiar model to represent the inheritance of a set of taxa.
Even though such a representation may models the biological ground truth acceptably well for many applications, it is not flawless.
In the interaction between taxa it is possible that reticulation events occur.
These events include hybridization, horizontal gene transfer and other forms of genetic recombination~\cite{huson2006}.
With hybridization, evolutionists describe the crossing of two groups of species that have developed different features over time~\cite{stebbins1959,stull2023}.
Bacteria and fungi use the mechanism of horizontal gene transfer to pass DNA from a donor cell to a receiver cell in order to make themselves more resistant~\cite{nimmakayala2019}.

These reticulation events can not be represented in a phylogenetic tree.
Biologists have therefore generalized phylogenetic trees and have introduced the concept of \emph{phylogenetic networks}.
A phylogenetic network is a directed acyclic graph~(DAG)
in which there is exactly one vertex with an in-degree of~0 and each vertex with an in-degree bigger than~1---which are called \emph{reticulations}---has an out-degree of~1.
As the name suggests, in the reticulations, reticulation events can be represented.
Similarly as in phylogenetic trees, vertices with an out-degree of~0 represent (a subset of) present day taxa and the other vertices represent assumed extinct ancestors.
Weights on the edges of a phylogenetic network mark a distance between the two endpoints and can stand for a number of features or an estimated evolutionary distance.
We note that this definition of phylogenetic networks is sometimes referred to as phylogenetic \emph{reticulate} networks or \emph{explicit} phylogenetic networks and that other types of phylogenetic networks exist.
An overview of different types of phylogenetic networks can be found in~\cite{huson2006,phylogeneticNetworks}.
An example of a phylogenetic reticulate network is given in Figure~\ref{fig:example-network}.
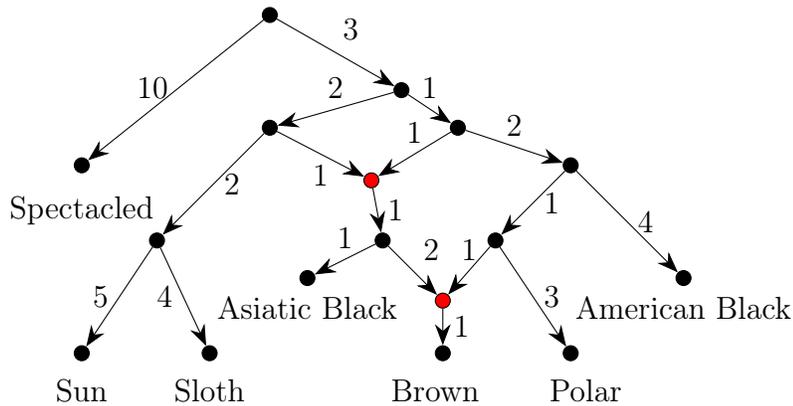
\begin{figure}[t]
	\centering
	\begin{tikzpicture}
		\tikzstyle{txt}=[circle,fill=white,draw=white,inner sep=0pt]
		\tikzstyle{nde}=[circle,fill=black,draw=black,inner sep=2pt]
		
		\node[txt] at (2,2.4) {Spectacled};
		\node[txt] at (2,0) {Sun};
		\node[txt] at (3.7,0) {Sloth};
		\node[txt] at (5,1.1) {Asiatic Black};
		\node[txt] at (10,1.1) {American Black};
		\node[txt] at (8.7,0) {Polar};
		\node[txt] at (6.7,0) {Brown};

		\node[nde] (r) at (4.5,5) {};
		\node[nde] (spec) at (2,3) {};
		\node[nde] (sun) at (2,.5) {};
		\node[nde] (sloth) at (3.7,.5) {};
		\node[nde] (asi) at (5,1.5) {};
		\node[nde] (am) at (10,1.5) {};
		\node[nde] (pol) at (8.5,.5) {};
		\node[nde] (brown) at (6.8,.5) {};

		\node[nde] (v1) at (6.25,4) {};
		
		\node[nde] (v2) at (4.5,3.5) {};
		\node[nde] (v21) at (3,2) {};
		
		\node[nde] (v3) at (7,3.5) {};
		
		\node[nde,fill=red] (v4) at (5.85,2.8) {};
		\node[nde] (v41) at (6,2) {};
		\node[nde,fill=red] (v411) at (6.8,1.2) {};
		
		\node[nde] (v32) at (8.5,3) {};
		\node[nde] (v321) at (7.5,2) {};

		\draw[-{Stealth[length=8pt]}] (4.5,5) -- node[left]{10} (spec);
		\draw[-{Stealth[length=8pt]}] (4.5,5) -- node[above right]{3} (v1);
		
		\draw[-{Stealth[length=8pt]}] (v1) -- node[above]{2} (v2);
		\draw[-{Stealth[length=8pt]}] (v1) -- node[above]{1} (v3);
		\draw[-{Stealth[length=8pt]}] (v3) -- node[above]{2} (v32);
		
		\draw[-{Stealth[length=8pt]}] (v3) -- node[above]{1} (v4);
		
		\draw[-{Stealth[length=8pt]}] (v2) -- node[below]{1} (v4);
		\draw[-{Stealth[length=8pt]}] (v4) -- node[right]{1} (v41);
		\draw[-{Stealth[length=8pt]}] (v41) -- node[above]{1} (asi);
		
		\draw[-{Stealth[length=8pt]}] (v2) -- node[right]{2} (v21);
		\draw[-{Stealth[length=8pt]}] (v21) -- node[left]{5} (sun);
		\draw[-{Stealth[length=8pt]}] (v21) -- node[left]{4} (sloth);
		
		\draw[-{Stealth[length=8pt]}] (v32) -- node[right]{4} (am);
		\draw[-{Stealth[length=8pt]}] (v32) -- node[right]{1} (v321);
		\draw[-{Stealth[length=8pt]}] (v321) -- node[right]{3} (pol);
		
		\draw[-{Stealth[length=8pt]}] (v321) -- node[above]{1} (v411);
		\draw[-{Stealth[length=8pt]}] (v41) -- node[above right]{2} (v411);
		\draw[-{Stealth[length=8pt]}] (v411) -- node[right]{1} (brown);
	\end{tikzpicture}
	\caption{This figure depicts a likely heritage of several bears in a weighted phylogenetic network~\cite{kumar}.
		The two reticulations are depicted in red.}
	\label{fig:example-network}
\end{figure}%


Phylogenetic networks generalize phylogenetic trees.
Therefore, the question of what measure generalizes phylogenetic diversity on phylogenetic networks naturally arises.
First concepts have been introduced for so-called phylogenetic split systems~\cite{volkmann2014,chernomor2016}
and later also for explicit phylogenetic reticulate networks~\cite{WickeFischer2018,bordewichNetworks}.

Arguably the most easy-to-understand measure of phylogenetic diversity of a set of taxa~$A$ in a phylogenetic network~\Net is \emph{all-paths phylogenetic diversity}, denoted with~$\apPD(A)$,
which was first defined as ``phylogenetic subnet diversity'' in~\cite{WickeFischer2018}.
In all-paths phylogenetic diversity, one simply adds the weight of all the edges~$uv$ for which there is a path from~$v$ to a taxon in~$A$.

In all-paths phylogenetic diversity it is therefore assumed that taxa which result from reticulation events have all the features of the parents.
A more realistic assumption is that all parents contribute some features at some \iprop.
As a consequence, to compute \emph{network phylogenetic diversity}, denoted with~$\NetPD(A)$, of a set of taxa~$A$ in a phylogenetic network~$\Net$ we need to know the \iprop~$p(e)$ of every edge~$e$ incoming at some reticulation~$v$.
Then, if a taxon below~$v$ is selected, the features above~$e$ only survive with a probability~$p(e)$---if not covered by a better path.
Observe that if all the \iprops are~1, then the network phylogenetic diversity takes the same value as the all-paths phylogenetic diversity.

Corresponding to these two measures, decision problems~\MAPPDlong (\MAPPD) and~\MaxNPD were introduced~\cite{bordewichNetworks} in which we are given a phylogenetic network, two integers~$k$ and~$D$, and \iprops in the case of the latter problem.
It is asked whether a subset~$S$ of taxa exists such that~$S$ has a size of at most~$k$ and~$\apPD(S) \ge D$ or~$\NetPD(S) \ge D$, respectively.

Based on a known hardness reduction for \MAPPD~\cite{bordewichNetworks}, we establish the \Wh{2}-hardness with respect to the size~$k$ of a solution.
We then show that \MAPPD is, however, \FPT when parameterized by the threshold of diversity~$D$ or by the acceptable loss of diversity~$\Dbar$.
We further show that \MAPPD is \FPT with respect to the number of reticulations and with respect to the number of edges incoming in reticulations and admits a kernelization of polynomial size.

As \MaxNPD generalizes \MAPPD, the hardness-results for \MAPPD also hold for \MaxNPD.
We moreover show that \MaxNPD remains \NP-hard in instances in which the phylogenetic network has a level of~1.
The level measures the tree-likeness of a phylogenetic network.

\section{Structure of this Thesis}
In the following chapter, we formally define phylogenetic diversity in the considered variants and the problems regarded in this thesis.
Furthermore, we give an overview of further relevant definitions such as terms of computational complexity.
At the end of Chapter~\ref{ctr:prelims}, we show parameterized complexity results for \MCKPLong and \PenSum, two problems that are unrelated to phylogenetic diversity but are relevant for results in the following chapters.

In Chapter~\ref{ctr:GNAP}, we consider \GNAPLong with its special cases;
in Chapter~\ref{ctr:TimePD}, we consider \tPDs and \tPDws;
and in Chapter~\ref{ctr:FoodWebs}, we consider \PDDlong with its special cases.
At the beginning of each of these chapters, we give an overview of the known results and then we examine the respective problem within the framework of parameterized complexity.

In Chapter~\ref{ctr:Networks}, with \MAPPDlong and \MaxNPD, we examine two problems that are based on measures of phylogenetic diversity in phylogenetic networks.
As in the chapters before, we first present known results.
Afterward, we show parameterized complexity results for \MAPPDlong before we show that \MaxNPD is \NP-hard even on networks of level~1.

Finally, in Chapter~\ref{ctr:conclusion} we discuss open problems and future research ideas.

\chapter{Preliminaries}
\label{ctr:prelims}

In this chapter, we give the fundamental definitions that we use throughout this thesis.
We start with some mathematical notation, then formally define phylogenetic diversity and \MPDlong.
Afterward, we define notions of classic and parameterized complexity.
Finally, in Sections~\ref{sec:MCKP} and~\ref{sec:PenSum}, we consider two problems, \MCKPLong and~\PS, to which we refer later in the thesis.
These problems are not about the maximization of phylogenetic diversity but are very useful for results later in the thesis.

\section{Mathematical Notation}
For an integer~$n$, let~$[n]$ denote the set~$\{1,\dots,n\}$ and let~$[n]_0$ denote the set~$\{0\}\cup [n]$.
Define $\mathbb{R}_{[0,1]} := \{x \in \mathbb{R} \mid 0 \leq x \leq 1\}$ and $\mathbb{R}_{(0,1)} := \mathbb{R}_{[0,1]} \setminus \{0,1\}$.
A \textit{partition of a set~$N$} is a family of pairwise disjoint sets~$\{ N_1, \dots, N_m \}$ such that~$\bigcup_{i=1}^m N_i=N$.

We extend functions $f: A\to B$, where~$B$ is a family of sets, to handle subsets~$A'\subseteq A$ of the domain by defining $f(A') := \bigcup_{a\in A'} f(a)$.
Functions~$f:A\to \mathbb{N}$ are extended to subsets~$A'\subseteq A$ of the domain by defining $f_\Sigma(A') := \sum_{a\in A'} f(a)$.

Throughout the thesis, we will use both natural and binary logarithms.
We will write $\ln x$ to denote the natural logarithm of $x$ (to the base $e$), and $\log_i x$ to denote the logarithm of $x$ to the base $i\in \mathbb{N}$.
If~$i$ is not further specified, then we use~$i=2$.

The \emph{encoding length} of a number is the number~$z$ of bits necessary to encode~$z$ in binary.
Let~${\myvec{0}}$ denote the multidimensional-dimensional zero.
For a vector~$\myvec{p}$ of numbers, we use the following operations.
We let~$\myvec{p}_{(j) +z}$ denote the vector~$\myvec{p}$ in which in position~$i$, value~$z$ is added.
We write~$\myvec{p}\le \myvec{q}$ if~$\myvec{p}$ and~$\myvec{q}$ have the same dimension~$d$ and~$p_i\le q_i$ for every~$i\in[d]$.

\subsection{Graph-Theoretic Notation}
\label{sec:graph-theory}
We define the graph-theoretic notion used throughout the work.
Standard monographs provide a deeper view~\cite{West00,Diestel12}.

For a set~$V$ and a set~$E \subseteq \{ uv \mid u,v \in V \}$,
a \textit{(directed) graph}~$G$ is a tuple~$(V,E)$, where~$V$ is called the \textit{set of vertices} of~$G$ and~$E$ the \textit{set of edges} of~$G$, respectively.
We write $uv$ or~$(u,v)$ for a \emph{(directed) edge} from~$u$ to~$v$, and $\{u,v\}$ for an \emph{undirected edge} between $u$ and $v$.
For any graph~$G$, we write~$V(G)$ and~$E(G)$, respectively, to denote the set of vertices and edges of~$G$.
If~$E$ is a family of undirected edges, then~$G = (V,E)$ is called an~\emph{undirected graph}.
For a directed graph~$G=(V,E)$, the directed graph~$G' := (V,E')$ with~$E' := \{ \{u,v\} \mid uv \in E \}$ is called the~\emph{underlying undirected graph}.

An edge~$e = uv$ or~$e = \{u,v\}$ is \textit{incident} with~$u$ and~$v$.
Further, the edge $uv$ is \textit{incoming} at $u$ and \textit{outgoing} from $v$. 
We say two vertices are \emph{adjacent} or \emph{neighbors} if they are incident with the same edge, and similarly we say two edges are \emph{adjacent} if they are incident with the same vertex.

The \textit{degree of a vertex~$v$} is the number of edges that are incident with~$v$.
Similarly, the \textit{in-degree} (and \textit{out-degree}, respectively) of $v$ is the number of incoming (outgoing) edges of $v$.

For a graph~$G$ and a set of vertices~$V'\subseteq V(G)$, \emph{the subgraph of~$G$ induced by~$V'$} is~$G[V']:=(V',E')$, where~$E' := \{ e = uv \in E(G) \mid u,v \in V'\}$.
Moreover, with~$G - V':=G[V\setminus V']$ with~$V' \subseteq V$, we denote the graph obtained from~$G$ by removing~$V'$ and its incident edges.
For an edge set~$E' \subseteq E$, we define~$G - E'$ as the graph~$(V,E\setminus E')$.

We say that there is \textit{a path of length~$p$ from $u$ to~$w$} if $v=w$ and~$p=0$ or there is an edge~$uv$ or~$\{u,v\}$ such that there is a path of length~$p-1$ from $v$ to $w$.
We sometimes omit the length of the path.
Two vertices~$u$ and~$v$ are \emph{connected} if there is a path from~$u$ to~$v$ or from~$v$ to~$u$.
An undirected graph~$G$ is \emph{connected} if the vertices in~$V(G)$ are pairwise connected.
For an undirected graph~$G$ and a non-empty set of vertices~$V' \subseteq V(G)$, the subgraph of~$G$ induced by~$V'$ is a \emph{connected component} if~$G[V']$ is connected, and vertices~$u$ and~$v$ are not connected for each~$v\in V(G) \setminus V'$ and each~$u\in V'$.
An undirected graph~$G$ is \emph{biconnected} if~$G - \{v\}$ is connected for every vertex~$v\in V(G)$.

A \emph{topological order} induced by a given directed graph~$G = (V,E)$ is a bijective mapping~$f: V \to [|V|]$ such that~$f(u) \le f(v)$ for every edge~$uv \in E$.

The \emph{complement graph} of an undirected graph is obtained by replacing edges with non-edges and vice versa.

\paragraph{Other Graph Classes.}
An undirected graph~$G$ is \emph{cyclic} if there are\lb vertices~$u,v,w\in V(G)$ such that neighbors~$u$ and~$w$ are neighbors of~$v$ and~$u$ and~$w$ are connected in~$G - \{v\}$.
An undirected graph is not cyclic is \emph{acyclic}.
A self-loop is an edge~$uu$ for some vertex~$u$.
A \emph{directed, acyclic graph} (DAG) is a directed graph that does not contain self-loops and we can conclude~$u=v$ if there is a path from~$u$ to~$v$ and from~$v$ to~$u$.

An undirected graph~$G = (V,E)$ is \emph{bipartite} if there is a partition~$\{V_1, V_2\}$ of the vertex set~$V$ such that~$e \cap V_1$ and~$e \cap V_2$ are both non-empty for each edge~$e\in E$.
In other words, each edge is between~$V_1$ and~$V_2$.
The vertex sets~$V_1$ and~$V_2$ are a \emph{vertex bipartition}.

An undirected graph is a \emph{cluster graph} if the existence of edges~$\{u,v\}$ and~$\{v,w\}$ implies the existence of the edge~$\{u,w\}$.
Equivalently, a graph is a cluster graph if every connected component is a clique.
An undirected graph is a \emph{co-cluster graph} if its complement graph is a cluster graph.
In other words, a graph is a co-cluster graph if its vertex set can be partitioned into independent sets such that each pair of vertices from different independent sets is adjacent.

For a graph class~$\Pi$ defined exclusively on undirected graphs,
we say that a directed graph is of class~$\Pi$ if the underlying undirected graph is of class~$\Pi$.

\paragraph{Trees.}
A \textit{rooted tree~$T$ with root~$\rho$} is a directed, acyclic graph with~$\rho\in V(T)$ where each vertex of~$T$ can be reached from~$r$ via exactly one path.
A vertex~$v$ of a tree~$T$ is a \textit{leaf} when the out-degree of~$v$ is~0.
We refer to the non-leaf vertices of a tree as the \emph{internal vertices}.
In a rooted tree, the~\textit{height of a vertex~$v$} is the length of the (only) path from the root~$\rho$ to~$v$ for each vertex~$v$.
The~\textit{height~$\height_T$ of a rooted tree~$T$} is the maximal height of one of the vertices of~$T$.
A~\textit{star} is a tree with a height of~1.

For an edge~$uv$ of a rooted tree, we call~$u$~\textit{the parent of~$v$} and~$v$~\textit{a child of~$u$}.
For two edges $e$ and $\hat e$ incident with the same vertex $v$, we say $\hat e$ is the \emph{parent-edge} of $e$ if $\hat e = uv$ and $e = vw$. If $e = vu$ and $\hat e = vw$, we say $\hat e$ is a \emph{sibling-edge} of $e$.
For a vertex~$v$ with parent~$u$, the \textit{subtree $T_v$ rooted at $v$} is the connected component containing $v$ in~$T - \{u\}$.
In the special case that $\rho$ is the root of $T$, we define $T_\rho := T$.
For a vertex $v$ with children~$w_1,\dots,w_t$, the \textit{$i$-partial subtree $T_{v,i}$ rooted at~$v$} for~$i\in [t]$ is the connected component containing~$v$ in~$T_v - \{ w_{i+1}, \dots, w_t \}$.

For an undirected graph~$G$ and a set of vertices~$V' \subseteq V(G)$, a \emph{spanning tree of~$V'$} is a tree~$T$ with~$V' \subseteq V(T) \subseteq V(G)$ and~$E(T) \subseteq E(G)$.
The \emph{size of a spanning tree} is the number of vertices.
A \emph{minimum spanning tree of~$V'$} is a spanning tree of~$V'$ of minimum size.
For a tree~$T = (V,E)$ and a vertex set~$V'\subseteq V$, the \emph{minimum spanning tree of~$V'$} is denoted by~$\spannbaumsub{T}{ V' }$.

\paragraph{Graph Parameters.}
For an undirected graph $G = (V,E)$ and a graph class~$\Pi$ the \emph{distance to~$\Pi$} of~$G$ is the smallest number~$d$ such that there exists a set $Y\subseteq V$ of~$d$ such that~$G - Y$ is of class~$\Pi$.
The set~$Y$ in this definition is called a~\emph{modulator} to~$\Pi$.

The \emph{max leaf number} of an undirected graph~$G = (V,E)$ is the maximum number of leaves a spanning tree of~$V$ has.

\begin{definition}[Tree Decomposition]
	\label{def:tw}
	A \emph{tree decomposition} of a directed graph~$G$ is a rooted tree~$T$ and a mapping which assigns each node $t \in T$ a subset $Q_t \subseteq V(G)$ such that
	\begin{propEnum}
		\item every vertex of~$G$ is contained in some bag,
		\item for every edge~$e = \{u,v\}$, there is some bag containing~$u$ and~$v$, and
		\item the nodes of~$T$ whose bags contain any particular vertex~$v$ are connected.
	\end{propEnum}
	
	A tree decomposition is \emph{nice} \cite{CyganTreeDecomp} if
	the bags of the root and all leaves of~$T$ are empty,
	and every non-leaf node $t$ has one of the following types
	\begin{propEnum}
		\item[~] \emph{Introduce vertex node}: $t$ has only one child~$t'$ and $Q_{t'} = Q_t \setminus \{v\}$ for some vertex~$v$ which is said to \emph{be introduced} at~$t$;
		\item[~] \emph{Introduce edge node}: $t$ has only one child~$t'$ and $Q_{t'} = Q_t$ and $t$ is labeled with an edge $e \subseteq Q_t$ which is said to \emph{be introduced} at~$t$;
		\item[~] \emph{Forget node}: $t$ has only one child $t'$ and $Q_{t'} = Q_t \cup \{v\}$ for some vertex~$v$;
		\item[~] \emph{Join node}: $t$ has exactly two children $t_1$, and~$t_2$ and $Q_t = Q_{t_1} = Q_{t_2}$.
	\end{propEnum}
	Every edge of~$G$ must be introduced exactly once, and we may assume this happens as high in~$T$ as possible,
	i.e. we can introduce edges right before some of their endpoints is forgotten.
	For a node $t\in T$, we denote by $G_t = (V_t, E_t)$ the subgraph
	of~$G$ that contains exactly those vertices~$V_t$ and edges~$E_t$ and edges that are introduced at~$t$ or any descendant of~$t$.
\end{definition}

If we do not define introduce edge nodes, then all edges are introduced once both vertices are introduced.
The \emph{width} of a tree decomposition is the size of the biggest bag minus 1.
The \emph{treewidth} of an undirected graph~$G$ is the minimum width of a tree decomposition of~$G$.

For a graph parameter~$\kappa$ defined exclusively on undirected graphs,
we define~$\kappa$ on directed graphs with the size of~$\kappa$ in the underlying undirected graph.

\section{Phylogenetics}
For a given set $X$, a \emph{phylogenetic~$X$-tree~$\Tree=(V,E,\w)$} (in short,~$X$-tree or phylogenetic tree) is a tree~$T=(V,E)$ with an \emph{edge-weight} function~$\w: E\to \mathbb{N}_{>0}$ and a bijective labeling of the leaves with elements from~$X$ where all non-leaves in \Tree have out-degree at least~2.
The set~$X$ is a set of~\emph{taxa}~(species).
Because of the bijective labeling, we interchangeably use the words taxon and leaf.
We write $\max_\w$ to denote the biggest edge weight in~$\Tree$.
An~$X$-tree~$\Tree$ is \textit{ultrametric} if there is an integer~$p$ such that the sum of the weights of the edges of the path from the root~$\rho$ to~$x_i$ equals~$p$ for every leaf~$x_i$.

If there is a directed path from $u$ to $v$ in $\Tree$ (including when~$u=v$ or when~$v$ is a child of $u$), we say that $u$ is an \emph{ancestor} of $v$ and $v$ is a \emph{descendant} of $u$.
If in addition~$u \neq v$, we say $u$ is a \emph{strict ancestor} of $v$ and $v$ a \emph{strict descendant} of $u$.
The sets of ancestors and descendants of $v$ are denoted $\anc(v)$ and $\desc(v)$, respectively.
The set of descendants of~$v$ which are in~$X$ are \emph{offspring} $\off(v)$ of a vertex $v$.
Moreover, we denote $\off(e) = \off(v)$ for an edge $e = uv \in E$.

\paragraph{Phylogenetic Diversity.}
Given a subset of taxa $A \subseteq X$,
let $E_{\Tree}(A)$ denote the set of edges in $e \in E$ with $\off(e)\cap A \neq \emptyset$.
The \emph{phylogenetic diversity} $\PD(A)$ of $A$ is defined by 

\begin{equation}
	\label{eqn:PDdef}
	\PD(A) := \sum_{e \in E_{\Tree}(A)}\w(e).
\end{equation}
That is, $\PD(A)$ is the total weight of all edges $uv$ in $\Tree$ so that there is a path from~$v$ to a vertex in~$A$. 
The phylogenetic diversity model assumes that features of interest appear along the edges of the tree with frequency proportional to the weight of that edge and that any feature belonging to one species is inherited by all its descendants.
Thus, $\PDsub{\Tree}(Z)$ corresponds to the expected number of distinct features appearing in all species in $Z$.

See Figure~\ref{fig:example-PD} for an example of the definition of phylogenetic diversity.

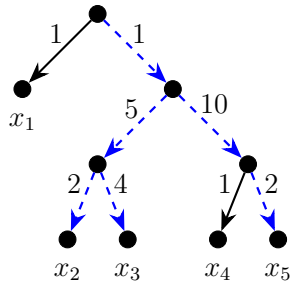
\begin{figure}[t]
	\centering
	\begin{tikzpicture}[scale=1,every node/.style={scale=0.9}]
		\tikzstyle{txt}=[circle,fill=white,draw=white,inner sep=0pt]
		\tikzstyle{nde}=[circle,fill=black,draw=black,inner sep=2.5pt]
		
		\node[nde] (v0) at (1,10) {};
		\node[nde] (v1) at (0,9) {};
		\node[nde] (v2) at (2,9) {};
		\node[nde] (v3) at (1,8) {};
		\node[nde] (v4) at (3,8) {};
		\node[nde] (v5) at (0.6,7) {};
		\node[nde] (v6) at (1.4,7) {};
		\node[nde] (v7) at (2.6,7) {};
		\node[nde] (v8) at (3.4,7) {};
		
		\node[txt,xshift=5mm,yshift=-6mm] (c1) [above=of v1] {$1$};
		\node[txt,xshift=-5mm,yshift=-6mm] (c3) [above=of v2] {$1$};
		\node[txt,xshift=5mm,yshift=-6mm] (c1) [above=of v3] {$5$};
		\node[txt,xshift=-5mm,yshift=-6mm] (c3) [above=of v4] {$10$};
		\node[txt,xshift=1mm,yshift=-6mm] (c1) [above=of v5] {$2$};
		\node[txt,xshift=-1mm,yshift=-6mm] (c1) [above=of v6] {$4$};
		\node[txt,xshift=1mm,yshift=-6mm] (c1) [above=of v7] {$1$};
		\node[txt,xshift=-1mm,yshift=-6mm] (c1) [above=of v8] {$2$};

		\node[txt,yshift=10mm] (x1) [below=of v1] {$x_1$};
		\node[txt,yshift=10mm] (x2) [below=of v5] {$x_2$};
		\node[txt,yshift=10mm] (x3) [below=of v6] {$x_3$};
		\node[txt,yshift=10mm] (x4) [below=of v7] {$x_4$};
		\node[txt,yshift=10mm] (x5) [below=of v8] {$x_5$};

		\draw[thick,arrows = {-Stealth[length=8pt]}] (v0) to (v1);
		\draw[line width=0.03cm,blue,dashed,arrows = {-Stealth[length=8pt]}] (v0) to (v2);
		\draw[line width=0.03cm,blue,dashed,arrows = {-Stealth[length=8pt]}] (v2) to (v3);
		\draw[line width=0.03cm,blue,dashed,arrows = {-Stealth[length=8pt]}] (v2) to (v4);
		\draw[line width=0.03cm,blue,dashed,arrows = {-Stealth[length=8pt]}] (v3) to (v5);
		\draw[line width=0.03cm,blue,dashed,arrows = {-Stealth[length=8pt]}] (v3) to (v6);
		\draw[thick,arrows = {-Stealth[length=8pt]}] (v4) to (v7);
		\draw[line width=0.03cm,blue,dashed,arrows = {-Stealth[length=8pt]}] (v4) to (v8);
	\end{tikzpicture}
	
	\caption{An example of a phylogenetic~$X$-tree $\Tree$ with taxa $X = \{x_1,x_2,x_3,x_4,x_5\}$.
		The set $A = \{x_2, x_3, x_5\}$ has a phylogenetic diversity of~$\PD(A) = 1 + 5 + 10 + 2 +4 + 2 = 24$.
		The set of edges $E_{\Tree}(A)$ is blue and dashed.
	}
	\label{fig:example-PD}
\end{figure}%

In \Cref{sec:GNAP-PD-def} and \Cref{sec:Net-PD-def}, we present the generalizations of phylogenetic diversity to a) handle that species do not survive indefinite but at a given probability and b) phylogenetic networks, respectively.

\paragraph{Phylogenetic Networks.}
A \textit{phylogenetic $X$-network~$\Net=(V,E,\w)$} (in\lb short, $X$-network or network) is a directed acyclic graph with an \textit{edge-weight}\lb function~$\w: E\to \mathbb{N}_{>0}$ and a single vertex with an in-degree of~$0$ (the \textit{root~$\rho$}), in which the vertices with an out-degree of~$0$ (the \textit{leaves}) have an in-degree of~$1$ and are bijectively labeled with elements from a set~$X$, and such that all vertices either have an in-degree of at most~$1$ or an out-degree at most~$1$.
The vertices with in-degree at least~$2$ and out-degree~$1$ are called \textit{reticulations}; the other non-leaf vertices are called \textit{tree vertices}.
Every edge incoming at a reticulation is a \emph{reticulation edge}.
A network is \emph{binary} if the root has an out-degree of 2 and every other non-leaf vertex has a degree of~3.

The \emph{number of reticulations} of a network~$\Net$ is denoted with~$\ret$.
The \emph{number of reticulation-edges}~$\eret$ of a network~$\Net$ is the number of edges that need to be removed such that~\Net is a tree.
We note that this is not state what the number of the reticulation edges is.
Some authors refer to the number of reticulation-edges as the reticulation number of~$\Net$.
If $\Net$ is binary, then the number of reticulations is exactly the number of reticulation-edges.
The \emph{level} of a network~$\Net$ is the maximum reticulation number of a subgraph~$\Net[V']$ for some~$V' \subseteq V(\Net)$ where we require that the underlying undirected graph of~$\Net[V']$ is biconnected.

\section{Classic and Parameterized Complexity}
In this section, we provide the necessary background in complexity theory, both for classical concepts of \NP-hardness and \NP-completeness, and for parameterized complexity.
Readers are referred to monographs for classic computational complexity~\cite{GareyJohnson79,Papadimitriou07,AB09},
and to monographs for parameterized complexity~\cite{FG06,N06,downeybook,cygan} for more detailed introductions.

\subsection{Classic Computational Complexity}
Within the field of classic computational complexity, we deal with \emph{computational problems} and the question of how many computational resources are required to find an answer to the raised question.
In this thesis, we only consider \emph{decision problems}~$\Pi$ (problems for short).
The closely related \emph{optimization problems} can usually be solved with only a small overhead, if an algorithm is given that solves the decision variant.
We consider the following formal definitions.

\begin{definition}[Languages, Decision Problems and Instances]
	\label{def:lang}
	~
	\begin{propEnum}
		\item A \emph{language}~$L \subseteq \Sigma^*$ is a set of (finitely long) strings over a finite \emph{alphabet}~$\Sigma$.
		\item The decision problem associated with a language~$L$ is to determine if~$x\in L$ for any given \emph{instance}~$x\in \Sigma^*$.
		\item An instance~$x\in \Sigma^*$ is a \emph{\yes-instance} if~$x\in L$ and a \emph{\no-instance} if~$x\notin L$.
	\end{propEnum}
\end{definition}

An \emph{algorithm} is a defined order of computational steps to solve a computational problem.
An algorithm~$\mathcal{A}$ solves a decision problem~$\Pi$ if, after a finite number of computation steps,~$\mathcal{A}$ correctly determines whether an instance is a \yes-instance or a \no-instance.
If such an algorithm exists, then~$\Pi$ is \emph{decidable}.
In this thesis, we only consider decidable decision problems.

The complexity measures of algorithms are the computational resources that are required to let the algorithm run on any instance, usually in correspondence with the \emph{encoding length}~$|x|$ of the instance~$x$.
The most frequently used complexity measures are \emph{time} and~\emph{space}.
With time, we refer to the number of steps the computation needs, and with space we refer to the required memory.
Throughout this thesis, we only consider the running time of algorithms.
The complexity of a decision problem~$\Pi$ is given by the fastest, usually unknown algorithm solving~$\Pi$.

In our running time analyses, we assume a unit-cost RAM model where arithmetic addition and multiplication for numbers of any length have a constant running time.
This model is unrealistically strong but avoids that we have to add factors to the running time which are not all too interesting from a theoretical point of view.
We will, in order to have more clarity, recall this fact in the running time analyses of algorithms in which we operate with numbers that are potentially bigger than~$\Oh(\log(|\Instance|))$, where~$|\Instance|$ is the size of the entire instance.

There are several \emph{complexity classes} for decision problems, of which~\P and \NP are the most prominent.

\newpage
\begin{definition}[The classes \P and \NP]
	\label{def:P+NP}
	~
	\begin{propEnum}
		\item A decision problem is in \P if a \emph{deterministic} Turing machine exists which,\lb for each given instance~$x\in \Sigma^*$, determines whether~$x$ is a \yes-instance\lb within~$\poly(|x|)$ time.
		\item A decision problem is in \NP if a \emph{non-deterministic} Turing machine exists which, for each given instance~$x\in \Sigma^*$, accepts~$x$ within~$\poly(|x|)$ time.
	\end{propEnum}
\end{definition}

A full definition of non-determinism is beyond the scope of this work.
Informally, in a non-deterministic Turing machine some configurations may have several successor configurations.
A Turing machine \emph{accepts} an input~$x$ if it is possible to reach an accepting configuration from the starting configuration.

Every problem in~\P is also in~\NP, but it is assumed that certain problems are in~\NP but not in~\P, and, therefore,~\PneqNP.
In other words, it is believed that there are problems in \NP that are not deterministically solvable in polynomial time---the \NP-complete problems.
To mathematically define these, we introduce the concept of reductions, to show that a problem is at least as hard as another.
For two decision problems~$\Pi_1,\Pi_2 \subseteq \Sigma^*$,
a \emph{reduction} from~$\Pi_1$ to~$\Pi_2$ is an algorithm---a computable function~$f: \Sigma^* \to \Sigma^*$---which takes instances of~$\Pi_1$ and returns instances of~$\Pi_2$, and
we require that~$x\in \Sigma^*$ is a \yes-instance of $\Pi_1$ if and only if~$f(x)\in \Sigma^*$ is a \yes-instance of $\Pi_2$.
The instances~$x$ and~$f(x)$ are then called~\emph{equivalent}.
A reduction~$f$ is a \emph{polynomial-time reduction} if~$f(x)$ is computed within~$\poly(|x|)$~time.
A decision problem~$\Pi$ is \emph{\NP-hard} if there is a polynomial-time reduction from~$\Psi$ to~$\Pi$ for every decision problem $\Psi$ in \NP.
A decision problem~$\Pi$ is~\NP-complete if~$\Pi$ is~\NP-hard and in~\NP.
If there was an \NP-complete problem which is also in~\P, then all problems in \NP could be solved deterministically in polynomial time.
As discussed earlier, it is not believed that such a decision problem exists.
In Section~\ref{sec:probs}, a small list of \NP-complete problems is given.

A function~$f: \mathbb{N} \to \mathbb{N}$ is in~$\Oh(g(n))$ for another function~$g: \mathbb{N} \to \mathbb{N}$, if there are integers~$c$ and~$m$ such that~$f(n) \le c \cdot g(n)$ for each integer~$n\ge m$.
Similarly, function~$f: \mathbb{N} \to \mathbb{N}$ is in~$\Oh^*(g(n))$ for another function~$g: \mathbb{N} \to \mathbb{N}$, if there is an integer~$m$ such that~$f(n) \le g(n) \cdot \poly(n)$ for each integer~$n\ge m$.
In other words, if~$f(n) \in \Oh(g(n))$ then~$g$ grows at least as fast as~$f$.
We use in the $\Oh$ and\lb the~$\Oh^*$-notation to describe the running times of algorithms and in the~$\Oh^*$-notation we omit factors polynomial in the input size.

\subsection{Parameterized Complexity} 
\label{sec:para-complexity}
In this section, we give a brief overview of formal definitions of parameterized complexity.

\paragraph*{Fixed-Parameter Tractability.}
Here, we first define what a parameterized problem is before we define the complexity classes \FPT and \XP.
The formal definition for parameterized languages and parameterized decision problems are similar to the definition of languages and decision problems given in Definition~\ref{def:lang}.

\begin{definition}[Parameterized Languages and Parameterized Problems]
	\label{def:paraLan}
	~
	\begin{propEnum}
		\item A \emph{parameterized language}~$L \subseteq \Sigma^* \times \mathbb{N}_0$ is a set of tuples~$(x,k)$, where $x$ is a (finitely long) string over a finite alphabet~$\Sigma$ and~$k$ is an integer.
		\item We call~$x$ the \emph{input},~$k$ the \emph{parameter}, and~$(x,k)$ the \emph{instance}.
		\item The \emph{parameterized decision problem} associated with a parameterized language~$L$ is to determine if~$(x,k)\in L$ for any given \emph{instance}~$(x,k)\in \Sigma^* \times \mathbb{N}_0$.
		\item An instance~$(x,k) \in \Sigma^* \times \mathbb{N}_0$ is a \emph{\yes-instance} if~$(x,k)\in L$ and a \emph{\no-instance} if~$(x,k)\notin L$.
	\end{propEnum}
\end{definition}

The \emph{encoding length} of an instance~$(x,k)$ is denoted by~$|(x,k)|$.
We usually simply use the term problem, when the context makes it clear that it is a parameterized decision problem.
Now, with the definition of parameterized decision problems at hand, we define the two most relevant classes of parameterized decision problems, \FPT and \XP.

\begin{definition}[Fixed-Parameter Tractability]
	\label{def:FPT}
	~
	\begin{propEnum}
	\item\label{def:FPT-algo}An algorithm~$\mathcal{A}$ for a parameterized decision problem~$\Pi$ is \emph{fixed-parameter} if for every instance~$(x,k) \in \Sigma^* \times \mathbb{N}_0$, within~$f(k) \cdot \poly(|x|)$~(deterministic) time the algorithm~$\mathcal{A}$ correctly determines whether~$(x,k)$ is a \yes- or a \no-instance of~$\Pi$.
	Here, the function~$f$ can be any computable function that only depends on~$k$.
	We write in short that~$\mathcal{A}$ is an~\FPT-algorithm.
	\item\label{def:FPT-prop}A parameterized decision problem is \emph{fixed-parameter tractable} (\FPT) if it can be solved by a fixed-parameter algorithm.
	\item\label{def:FPT-class}\FPT is the complexity class of all parameterized decision problems that are fixed-parameter tractable.
	\end{propEnum}
\end{definition}

An algorithm has a \emph{pseudo-polynomial running time} if it has a running time polynomial in the input size when all numbers are encoded in unary, but the running time is not polynomial in the input size when all numbers are encoded in binary.
A parameterized decision problem is \emph{polynomial fixed-parameter tractable} (\PFPT) if the function~$f$ in Definition~\ref{def:FPT}(\ref{def:FPT-algo}) is a polynomial.

\begin{definition}[Slice-wise Polynomial (\XP)]
	\label{def:XP}
	~
	\begin{propEnum}
		\item An algorithm~$\mathcal{A}$ for a parameterized decision problem~$\Pi$ is \emph{slice-wise polynomial} if for every instance~$(x,k) \in \Sigma^* \times \mathbb{N}_0$, within~$\poly(|(x,k)|) ^ {g(k)}$~(deterministic) time the algorithm~$\mathcal{A}$ correctly determines whether~$(x,k)$ is a \yes- or a \no-instance of~$\Pi$.
		Here, the function~$g$ can be any computable function that only depends on~$k$.
		We write in short that~$\mathcal{A}$ is an~\XP-algorithm.
		\item A parameterized decision problem is slice-wise polynomial (\XP) if it can be solved by an \XP-algorithm.
		\item \XP is the complexity class of all parameterized decision problems that are slice-wise polynomial.
	\end{propEnum}
\end{definition}

We observe that the definition of \FPT is stricter than the definition of \XP.
Consequently, any \FPT-algorithm is an \XP-algorithm, but the converse is not given.
We conclude that~$\FPT \subseteq \XP$.

We observe further that problems which are \XP can be solved in polynomial time if we require the parameter to be a constant.
In order to show that a parameterized decision problem~$\Pi$ is not \XP, we therefore show that~$\Pi$ is \NP-hard even for constant values of the parameter.
Next, we define how we exclude that a parameterized decision problem is \FPT under some complexity-theoretic assumptions.

\paragraph*{The W-Hierarchy.}
Of course, we do not assume that all parameterized decision problems are \FPT.
Proving that a parameterized decision problem~$\Pi$ is \NP-hard for a constant size of the parameter shows that~$\Pi$ is not in \XP (assuming \PneqNP).
Because~$\FPT \subseteq \XP$ we can also conclude that~$\Pi$ is not in \FPT in that case.
But then automatically the question arises of how to provide evidence that a problem in \XP is not in \FPT.

To answer this question, Downey and Fellows defined the \emph{W-Hierarchy}~\cite{DF95,downey,downeybook}.
%
This hierarchy is defined with parameterized reductions which generalize reductions.

\begin{definition}[Parameterized Reductions]
	\label{def:paraReductions}
	For two parameterized decision problems~$\Pi_1$ and~$\Pi_2$ and two computable functions~$f,g: \mathbb{N} \to \mathbb{N}$,
	a \emph{parameterized reduction} from~$\Pi_1$ to~$\Pi_2$ is an algorithm~$\mathcal{A}$ which takes instances~$(x,k)\in \Sigma^* \times \mathbb{N}$ of~$\Pi_1$ and returns instances~$(x',k')\in \Sigma^* \times \mathbb{N}$ of~$\Pi_2$ such that
	\begin{propEnum}
		\item $(x,k)$ is a \yes-instance of $\Pi_1$ if and only if~$(x',k')$ is a \yes-instance of $\Pi_2$, and
		\item the computation of~$\mathcal{A}$ takes~$f(k) \cdot \poly(|x|)$~time, and
		\item $k' \le g(k)$.
	\end{propEnum}
\end{definition}

Essential for the W-hierarchy are the following problems.
In \WCS{t}{d}, we are given a logic circuit with a depth of~$d$ and a weft of~$t$.
It is asked whether there is an input that satisfies the circuit.
The depth of a circuit is the maximum number of nodes on a path from an input variable to the output and
the weft of a circuit is the maximum number of nodes with in-degree greater than~2 on a path from an input variable to the output.
It is evident that \WCS{t}{d} generalizes the famous \SAT problem.
The parameter we consider in the definition of the W-hierarchy is the weft of the circuit.

\begin{definition}[The W-Hierarchy]
	\label{def:WHierarchy}
	~
	\begin{propEnum}
		\item The complexity class~\Wh{$i$} for an integer~$i$ consists of the parameterized decision problems~$\Pi$ for which there is a parameterized reduction from~$\Pi$ to~\WCS{i}{d} for some~$d \in \mathbb{N}$.
		\item A parameterized decision problem~$\Pi$ is~\emph{\Wh{$i$}-hard} if there is a parameterized reduction from~$\Psi$ to~$\Pi$ for every parameterized decision problem~$\Psi$ in~\Wh{$i$}.
		\item A parameterized decision problem~$\Pi$ is~\emph{\Wh{$i$}-complete} if~$\Pi$ is~\Wh{$i$}-hard and in~\Wh{$i$}.
	\end{propEnum}
\end{definition}

It is know that~$\FPT \subseteq \Wh{1} \subseteq \Wh{2} \subseteq \dots \subseteq \XP$.
Further, it is widely believed that~$\FPT \subsetneq \Wh{1} \subsetneq \Wh{2} \subsetneq \dots \subsetneq \XP$,
but like \PneqNP, this claim is not yet proven.

\paragraph*{Kernelization.}
For parameterized decision problems, the concept of kernelization provides a means of measuring how much the preprocessing decreases the size of the instance.

\begin{definition}[Kernelization Algorithms]
	\label{def:kernelization}
	A \emph{kernelization algorithm} (also\lb kernelization, kernel or problem kernel) for a parameterized decision problem~$\Pi$\lb is an algorithm~$\mathcal{A}$ which takes instances~$(x,k)\in \Sigma^* \times \mathbb{N}$ of~$\Pi$ and returns\lb instances~$(x',k')\in \Sigma^* \times \mathbb{N}$ of~$\Pi$ such that
	\begin{propEnum}
		\item the computation of~$\mathcal{A}$ takes time polynomial in~$|(x,k)|$, and
		\item $(x,k)$ is a \yes-instance of $\Pi$ if and only if~$(x',k')$ is a \yes-instance of $\Pi$, and
		\item $|x'| + k' \le g(k)$ for some computable function~$g: \mathbb{N} \to \mathbb{N}$.
	\end{propEnum}
\end{definition}

Kernelization algorithms are usually explained in several steps.
These steps are called reduction rules.
A \emph{(data) reduction rule} for a parameterized decision problem~$\Pi$ is an algorithm that given an instance~$(x,k)$ of~$\Pi$ returns an instance~$(x',k')$ of~$\Pi$.
A reduction rule is \emph{correct} if~$(x,k)$ is a \yes-instance of~$\Pi$ if and only if~$(x',k')$ is a \yes-instance of~$\Pi$.
We say that a reduction rule has been \emph{exhaustively applied} on an instance if an application does not change the instance.

Kernelizations have a close connection to the class \FPT, as we see in this theorem.
\begin{theorem}[\cite{cygan,downeybook}]
	A parameterized decision problem~$\Pi$ admits a problem kernel if and only if~$\Pi$ is \FPT.
\end{theorem}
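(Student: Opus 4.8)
The plan is to prove both implications, using throughout the standing assumption that~$\Pi$ is decidable.

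\textbf{Kernel $\Rightarrow$ \FPT.} First I would assume a kernelization algorithm~$\mathcal{K}$ is given. On an instance~$(x,k)$, I run~$\mathcal{K}$, which by Definition~\ref{def:kernelization} takes~$\poly(|(x,k)|)$ time and outputs an equivalent instance~$(x',k')$ of~$\Pi$ with~$|x'| + k' \le g(k)$ for some computable~$g$. Since~$\Pi$ is decidable, I fix any decision algorithm~$\mathcal{B}$ for~$\Pi$; its running time on~$(x',k')$ is bounded by some computable function of the input size~$|(x',k')| \le g(k)$, hence by a computable function~$h(g(k))$ depending only on~$k$. Composing~$\mathcal{K}$ with~$\mathcal{B}$ then decides~$(x,k)$ in time~$\poly(|(x,k)|) + h(g(k))$, which is of the form~$f(k)\cdot\poly(|x|)$. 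By Definition~\ref{def:FPT} this shows~$\Pi \in \FPT$.

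\textbf{\FPT $\Rightarrow$ Kernel.} For the converse I would start from an \FPT-algorithm~$\mathcal{A}$ deciding~$\Pi$ in time~$f(k)\cdot|x|^c$ for a constant~$c$ and computable~$f$. Assuming~$\Pi$ is nontrivial, I fix one \yes-instance~$I_{\yes}$ and one \no-instance~$I_{\no}$, each of constant size. On input~$(x,k)$ the kernelization compares~$f(k)$ with~$|x|$. If~$|x| \ge f(k)$, then running~$\mathcal{A}$ costs at most~$f(k)\cdot|x|^c \le |x|^{c+1}$, which is polynomial in~$|x|$; I run it and output~$I_{\yes}$ when~$\mathcal{A}$ accepts and~$I_{\no}$ otherwise. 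If instead~$|x| < f(k)$, the input is already small, so I return~$(x,k)$ unchanged, whose size satisfies~$|x| + k < f(k) + k$. In either branch the computation is polynomial in~$|(x,k)|$, the output is equivalent to~$(x,k)$, and its size is bounded by~$g(k) := \max\{\,f(k)+k,\ |I_{\yes}|,\ |I_{\no}|\,\}$, a computable function of~$k$. This is exactly a kernelization in the sense of Definition~\ref{def:kernelization}.

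I expect the main obstacle to be the \FPT $\Rightarrow$ Kernel direction, specifically the need to emit constant-size equivalent instances in the ``large input'' branch: this is where the mild nontriviality assumption enters (if~$\Pi$ accepts all or rejects all instances, kernelization is immediate and one outputs a fixed constant-size instance directly). A minor but essential point in the other direction is that decidability of~$\Pi$ must be invoked to bound the cost of solving the bounded-size kernel by a function of~$k$ alone; without it the kernel could not be turned back into a running-time guarantee.
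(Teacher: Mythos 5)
The paper does not prove this statement itself---it cites it to the textbooks of Cygan et al.\ and Downey--Fellows---and your proposal is precisely the standard argument from those sources: kernelize-then-brute-force for one direction, and case distinction on $|x|$ versus $f(k)$ with constant-size trivial instances for the other. Both directions are structurally sound, and you correctly identify the two side conditions (decidability, and nontriviality of~$\Pi$ in the \FPT $\Rightarrow$ kernel direction) that most sloppy statements of this folklore result omit.

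There is, however, one step that fails as literally written. In the \FPT $\Rightarrow$ kernel direction you say the kernelization ``compares $f(k)$ with $|x|$.'' To perform this comparison you must evaluate~$f(k)$, and although~$f$ is computable, the time to compute~$f(k)$ is an arbitrary computable function of~$k$---it need not be polynomial in~$|(x,k)|$. (Take $f(k)=k$ but computed by a machine that first runs for $\mathrm{Ackermann}(k)$ steps: your ``kernelization'' would then violate the polynomial-time requirement of Definition~\ref{def:kernelization}.) The standard repair avoids evaluating~$f$ altogether: run the \FPT-algorithm~$\mathcal{A}$ on~$(x,k)$ with a clock, for at most~$|x|^{c+1}$ steps. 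If~$\mathcal{A}$ halts within the budget, output the appropriate trivial instance~$I_{\yes}$ or~$I_{\no}$; if it does not, then $f(k)\cdot|x|^{c} \ge |x|^{c+1}$, hence $|x| \le f(k)$, and you may output~$(x,k)$ unchanged with the same size bound $g(k)$ you state. With this substitution your argument goes through verbatim; everything else, including the use of decidability to bound the cost of solving the size-$g(k)$ kernel by a function of~$k$ alone, is correct.
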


A \emph{polynomial kernelization (algorithm)} is a kernelization where the function~$g$ in Definition~\ref{def:kernelization} is a polynomial.
To disprove that a parameterized decision problem admits a polynomial kernelization, we use one of the following two concepts.
All these concepts are based on the assumption that~\NPcoNPpoly.
%
Even though it is not proven yet, it is widely believed that \NPcoNPpoly.
In particular, if \NP~$\subseteq$~\texttt{coNP/poly}, then the polynomial hierarchy would collapse at the third level~\cite{yap1983}.

\begin{definition}[Polynomial Parameter Transformation]
	For two parameterized decision problems~$\Pi_1$ and~$\Pi_2$,
	a \emph{polynomial parameter transformation (PPT)} is an algorithm~$\mathcal{A}$ which maps instances~$(x,k)\in \Sigma^* \times \mathbb{N}$ of~$\Pi_1$ to instances~$(x',k')\in \Sigma^* \times \mathbb{N}$ of~$\Pi_2$ such that
	\begin{propEnum}
		\item the computation of~$\mathcal{A}$ takes time polynomial in~$|(x,k)|$, and
		\item $(x,k)$ is a \yes-instance of $\Pi$ if and only if~$(x',k')$ is a \yes-instance of $\Pi$, and
		\item $k' \le p(k)$ for some polynomial~$p: \mathbb{N} \to \mathbb{N}$.
	\end{propEnum}
\end{definition}

If there is a PPT from~$\Pi_1$ to~$\Pi_2$ for decision problems~$\Pi_1$ and~$\Pi_2$ with~$\Pi_1\in \NP$ and~$\Pi_2$ is \NP-hard and admits a kernel of polynomial size, then also~$\Pi_1$ admits a kernel of polynomial size.
We conclude that if there is a PPT from~$\Pi_1$ to~$\Pi_2$ and~$\Pi_1$ does not admit a polynomial kernelization, assuming \NPcoNPpoly, then also~$\Pi_2$ does not admit a polynomial kernelization, assuming \NPcoNPpoly.

Next, we define cross-compositions, another technique for excluding polynomial kernelizations.
For this definition, we require polynomial equivalence relations.

\begin{definition}[Polynomial Equivalence Relations]
	A \emph{polynomial equivalence relation} is an equivalence relation~$R$ on~$\Sigma^*$ for which the following holds.
	\begin{propEnum}
		\item There is an algorithm that for given strings~$x,y \in \Sigma^*$ can check if~$x \sim_R y$ in~$\poly(|x|+|y|)$~time, and
		\item for any finite set~$S\subseteq \Sigma^*$ there are at most~$\poly(\max_{x\in S}|x|)$ equivalence classes with regard to the relation~$R$.
	\end{propEnum}
\end{definition}

\begin{definition}[Cross-Compositions, \cite{BJK14,cygan}]
	Let a decision problems~$\Pi_1 \subseteq \Sigma^*$, 
	a parameterized decision problems~$\Pi_2 \subseteq \Sigma^* \times \mathbb{N}$, and
	a polynomial equivalence relation~$R$,
	and an integer~$t$ be given.
	A \emph{cross-composition} from~$\Pi_1$ into~$\Pi_2$ 
	is an algorithm~$\mathcal{A}$ which takes~$2^t$ instances~$x_1,\dots,x_{2^t}$ of~$\Pi_1$ that are pairwise equivalent with respect to~$R$ and returns an instance~$(x',k')$ of~$\Pi_2$ such that
	\begin{propEnum}
		\item the computation of~$\mathcal{A}$ takes time polynomial in~$\sum_{i=1}^{2^t} |x_i|$, and
		\item $x_i$ is a \yes-instance of $\Pi_1$ for some~$i \in [2^t]$ if and only if~$(x',k')$ is a \yes-instance of $\Pi_2$, and
		\item $k' \le p(t + \max_{i=1}^{2^t}|x_i|)$ for some polynomial~$p: \mathbb{N} \to \mathbb{N}$.
	\end{propEnum}
\end{definition}

\paragraph*{Exponential Time Hypothesis (ETH).}
The \emph{Exponential Time Hypothesis} (\ETH) and the \emph{Strong Exponential Time Hypothesis} (\SETH) are important conjectures in the field of parameterized complexity and have been proposed by Impagliazzo, Paturi, and Zane~\cite{IPZ01}.
Relevant to the definition of \ETH and \SETH is \qSAT{q}, a special case of \SAT.
In \qSAT{q}, we are given a formula~$\psi$ with~$n$ variables that is a conjunction of disjunctions of at most~$q$ literals.
It is asked whether there is an assignment of the variables that fulfills~$\psi$.
The problem \qSAT{q} is \NP-hard for each~$q \ge 3$~\cite{karp}.

Now, for any~$q$ let~$C_q$ be the set of numbers such that~\qSAT{q} can be solved in~$\Oh(2^{c \cdot n})$ time.
As~\qSAT{q} is a special case of~\qSAT{q+1}, we conclude~$C_q \subseteq C_{q+1}$.
Let~$\delta_q$ be the infimum of~$C_q$.
The formal definitions of \ETH and \SETH are as follows.
\begin{definition}[\ETH and \SETH]
	~
	\begin{propEnum}
		\item The Exponential Time Hypothesis (\ETH) states~$\delta_3 > 0$.
		\item The Strong Exponential Time Hypothesis (\SETH) states~$\lim\limits_{q\to \infty} \delta_q = 1$.
	\end{propEnum}
\end{definition}

It is widely believed that \ETH is correct while there are some doubts for the correctness of \SETH.
However, we again do not have a prove of the correctness or incorrectness of either \ETH or \SETH, yet.

%

\paragraph*{Color Coding.}
Here, we briefly want to define some mathematical objects that are relevant for the technique of color coding which we use several times throughout this thesis.
For an in-depth treatment of color coding, we refer the reader to~\cite[Sec.~5.2~and~5.6]{cygan} and~\cite{alon}.
The technique of color-coding is traditionally used for randomized algorithms.
Over the years, concepts for derandomization have been developed.
Derandomization is necessary to meet the formal definition of (deterministic) \FPT-algorithms.
However, we want to mention that randomized algorithms with a very low error probability usually have faster running times.

\begin{definition}[Perfect Hash Families]
	\label{def:perfectHashFamily}
	For integers~$n$ and~$k$,
	an~\emph{$(n,k)$-perfect hash family $\mathcal{H}$} is a family of functions $f: [n] \to [k]$ such that for every subset~$Z$ of~$[n]$ of size~$k$, some $f \in \mathcal{H}$ exists that is injective when restricted to~$Z$.
\end{definition}

We will also resort to another data structure relevant for color coding.

\begin{definition}[Universal Sets]
	For integers~$n$ and~$k$,
	an \emph{$(n,k)$-universal set} is a family ${\cal U}$ of subsets of $[n]$ such that for any~$S\subseteq [n]$ of size $k$, $\{A \cap S \mid A \in {\cal U}\}$ contains all $2^k$ subsets of $S$.
\end{definition}

It has been proven how big~$(n,k)$-perfect hash families and $(n,k)$-universal~sets can be and how fast they can be computed.

\begin{theorem}[\cite{Naor1995SplittersAN}]
	For any integers $n,k \geq 1$,
	an~$(n,k)$-perfect hash family which contains~$e^k k^{\Oh(\log k)} \cdot \log n$ functions can be constructed in time $e^k k^{\Oh(\log k)} \cdot n \log n$.
\end{theorem}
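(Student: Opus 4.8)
The plan is to follow the two-stage approach of Naor, Schulman, and Srinivasan: first reduce the universe from $[n]$ to $[k^2]$, and then build a perfect hash family on the small universe. For the universe-reduction stage I would construct a family $\mathcal{G}$ of functions $g\colon [n]\to[k^2]$ that is itself an $(n,k)$-perfect hash family into $[k^2]$, i.e., for every $k$-set $Z$ some $g\in\mathcal{G}$ is injective on $Z$. A random function of the number-theoretic form $x\mapsto (ax\bmod p)\bmod k^2$, with $p$ a prime slightly larger than $n$, collides on a fixed $Z$ with probability at most $\binom{k}{2}/k^2<1/2$ by the birthday bound, so after derandomizing by conditional expectations (or by invoking the explicit construction of NSS) one obtains such a $\mathcal{G}$ of size $\poly(k)\cdot\log n$ in time $\poly(k)\cdot n\log n$. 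This stage is where both the $\log n$ factor in the family size and the $n\log n$ term in the running time come from.

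The heart of the argument is the second stage: constructing a perfect hash family on the now-small universe $[k^2]$ of size $e^k k^{\Oh(\log k)}$. Here I would use \emph{splitters}, where an $(m,k,\ell)$-splitter is a family of functions $f\colon[m]\to[\ell]$ such that every $k$-set is partitioned by some $f$ into blocks of size at most $\lceil k/\ell\rceil$. Making a function injective on a $k$-set is the special case $\ell\ge k$, and a good splitter into $\ell\approx\sqrt{k}$ blocks reduces the task of separating a $k$-set to separating the roughly $\sqrt{k}$ elements landing in each block, which is then handled recursively. Composing a splitter with the families obtained recursively on each block, over a recursion of depth $\Oh(\log k)$, multiplies the sizes and yields the $k^{\Oh(\log k)}$ factor.

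The $e^k$ factor arises at the base of the recursion (equivalently, from a single combining step): a uniformly random function $[m]\to[k]$ is injective on a fixed $k$-set with probability $k!/k^k\ge e^{-k}$, so by a union bound $e^k\cdot\poly(k)$ random functions suffice to separate every $k$-set once the universe is small. The obstacle, and the technically hardest part, is making every stage \emph{deterministic and efficient} rather than merely proving existence: each splitter and each combining step must be produced constructively within the claimed time bound. I would do this with the method of conditional expectations guided by a pessimistic estimator that upper-bounds, for the current partial assignment of function values, the number of $k$-sets not yet separated; showing that this estimator can be evaluated and kept below $1$ throughout, within the allotted time, is precisely the content of the NSS splitter construction. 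Multiplying the sizes of the two stages gives the stated $e^k k^{\Oh(\log k)}\cdot\log n$ bound on the family, and combining the two running times gives $e^k k^{\Oh(\log k)}\cdot n\log n$.
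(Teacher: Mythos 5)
Your proposal is correct and follows exactly the construction of Naor, Schulman, and Srinivasan that this theorem is cited from: the paper itself gives no proof, and your two-stage argument (universe reduction to $[k^2]$ contributing the $\log n$ factor, then a splitter recursion of depth $\Oh(\log k)$ over the small universe, with the $e^k$ coming from the injectivity probability $k!/k^k \ge e^{-k}$ and derandomization via pessimistic estimators) is the intended proof. One minor caution: naive conditional expectations over all $\binom{n}{k}$ sets in stage one would not run in $\poly(k)\cdot n\log n$ time, so your parenthetical fallback to the explicit NSS splitter machinery is in fact the necessary route there, as you yourself note for stage two.
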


\begin{theorem}[\cite{Naor1995SplittersAN}]
	\label{thm:Net-universalSet}
	For any integers $n,k \geq 1$,
	an~$(n,k)$-universal set which contains~$2^k k^{\Oh(\log k)} \cdot \log n$ functions can be constructed in time $2^kk^{\Oh(\log k)} \cdot n\log n$.
\end{theorem}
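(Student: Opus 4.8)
Since this is a cited result of Naor, Schulman, and Srinivasan, the plan is to reconstruct the standard \emph{splitter}-based construction. The overall strategy is a two-stage composition: first compress the ground set $[n]$ down to a domain of size only $\poly(k)$ while preserving every $k$-subset injectively, and then build a universal set on the small domain by a probabilistic argument followed by derandomization. Concretely, I would produce a family $\mathcal{H}$ of functions $h : [n] \to [k^2]$ that is a perfect hash family in the sense of Definition~\ref{def:perfectHashFamily} (with range $[k^2]$ rather than $[k]$), together with a small $(k^2,k)$-universal set $\mathcal{V}$ on $[k^2]$. The output family is then $\{\, h^{-1}(T) : h \in \mathcal{H},\ T \in \mathcal{V}\,\}$. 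Correctness is immediate: given $S \in \binom{[n]}{k}$, choose $h \in \mathcal{H}$ injective on $S$, so that $h(S)$ is a $k$-subset of $[k^2]$; as $\mathcal{V}$ realizes all $2^k$ patterns on $h(S)$, the preimages realize all $2^k$ patterns on $S$.

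For the first stage I would target range $[k^2]$ precisely because injectivity is then cheap: a uniformly random $h : [n] \to [k^2]$ is injective on a fixed $k$-set with probability $\prod_{i=0}^{k-1}(1 - i/k^2) \ge (1 - 1/k)^k \ge 1/4$, so a random family of $\Oh(k \log n)$ functions is, by a union bound over the at most $n^k$ many $k$-sets, a perfect hash family with positive probability. The work is to make this \emph{explicit} and of size $k^{\Oh(1)} \log n$; I would do this with the recursive splitter composition of Naor--Schulman--Srinivasan, which replaces the union-bound argument by a sequence of deterministic hash steps whose descriptions can be enumerated within time $k^{\Oh(1)} n \log n$.

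For the second stage, a random subset of $[k^2]$ realizes a fixed pattern on a fixed $k$-set with probability $2^{-k}$; since there are only $\binom{k^2}{k} 2^k \le k^{2k} 2^k$ many (set, pattern) pairs, a union bound shows that $\Oh(2^k k \log k)$ random subsets already form a $(k^2,k)$-universal set. Turning this existence statement into a deterministic $\poly$-time construction is exactly where the extra $k^{\Oh(\log k)}$ factor appears: the derandomization again proceeds through a logarithmic-depth splitter recursion (splitting the $k$-set into balanced halves and recursing), and each of the $\Oh(\log k)$ levels contributes a $\poly(k)$ blow-up, for $k^{\Oh(\log k)}$ overall. Multiplying the two stages gives size $k^{\Oh(1)} \log n \cdot 2^k k^{\Oh(\log k)} = 2^k k^{\Oh(\log k)} \log n$ and construction time $2^k k^{\Oh(\log k)} n \log n$, as claimed.

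The main obstacle is entirely in the two derandomization steps: the naive probabilistic arguments give the right \emph{cardinalities} almost for free, but producing the families \emph{deterministically} and within the stated running time requires the splitter machinery, and it is the balanced-halving recursion in the second stage---not the simple random construction---that forces the $k^{\Oh(\log k)}$ rather than $\poly(k)$ dependence. I would therefore spend essentially all of the effort on verifying that the splitter recursion unwinds into an explicit enumeration inside the time budget, treating the combinatorial counting bounds above as routine.
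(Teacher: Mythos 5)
The paper itself gives no proof of this statement---it is imported as a black box from the cited work of Naor, Schulman, and Srinivasan---so the only question is whether your reconstruction is sound. Your high-level architecture (an explicit family of size $k^{\Oh(1)}\log n$ hashing $[n]$ injectively into $[k^2]$ on every $k$-set, composed with a derandomized $(k^2,k)$-universal set) does match the route of the cited paper, and both of your probabilistic counting arguments are correct as stated.

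The genuine gap is in your accounting for the second stage, which is exactly where you say ``essentially all of the effort'' lies. You claim the $k^{\Oh(\log k)}$ factor arises from a balanced-halving recursion in which ``each of the $\Oh(\log k)$ levels contributes a $\poly(k)$ blow-up.'' But a halving recursion combines the two halves by a product: to realize every pattern on the $k$-set you must pair every pattern on one half with every pattern on the other, so if $S(k)$ denotes the family size the natural recurrence is $S(k) = \poly(k)\cdot S(k/2)^{2}$, with the splitter overhead incurred in \emph{each} subproblem. Writing $S(k) = 2^{k} f(k)$ this gives $f(k) = \poly(k)\cdot f(k/2)^{2}$, i.e.\ $\log f(k) = 2\log f(k/2) + \Oh(\log k)$, whose solution is $\log f(k) = \Theta(k)$: the polynomial overheads compound across the $2^{i}$ subproblems at depth $i$ rather than once per level. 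The resulting bound is $c^{k}$ for some constant $c>2$, not $2^{k}k^{\Oh(\log k)}$---and a $2^{\Oh(k)}$ bound with an unspecified constant is both easy to obtain by other means (e.g.\ small-bias spaces) and useless where this theorem is applied in the thesis, since the statements built on it (such as Theorem~\ref{thm:Net-Dbar} and Proposition~\ref{prop:PDD-kbar+indeg}) track the constant in the exponent. The construction of Naor et al.\ is engineered precisely so that the slack beyond the probabilistic optimum accrues only once per level of the composition (their splitter/$k$-restriction machinery performs the derandomization globally over the composed structure, rather than independently in each branch), which is what yields $\poly(k)^{\Oh(\log k)} = k^{\Oh(\log k)}$. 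As written, your symmetric both-halves recursion does not unwind to the stated size bound, so this step must be replaced by the actual composition of the cited paper or by an equivalent mechanism in which polynomial slack accumulates once per level rather than once per subproblem.
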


\section{A List of Frequently Used Problems}
\label{sec:probs}
In this section, we define some problems and results that we frequently refer to throughout this thesis.
This list is far from complete and we also recall problem definitions when we use them.

\problemdef{\KP}
{A set of items~$N=\{a_1,\dots,a_n\}$,
	two functions~$c,d: N\to \mbb N$,
	and two integers~$B$ and~$D$}
{Is there a set~$S\subseteq N$ such that
	$c_\Sigma(S)\le B$ and $d_\Sigma(S)\ge D$}

\problemdef{$k$-\SubSum}
{A set of items~$N=\{a_1,\dots,a_n\}$,
	a function~$c: N\to \mbb N$,
	and two integers~$k$ and~$G$}
{Is there a set~$S\subseteq N$ such that
	$|S| = k$ and $c_\Sigma(S) = G$}
We mostly use an analogous definition in which we say that we are given a multiset of integers instead of~$N$ and the function~$c$.
\KP is a generalization of $k$-\SubSum.
Both problems are \NP-hard~\cite{karp} and \Wh{1}-hard with respect to the size of the solution~\cite{downey}.

\problemdef{\VC}
{An undirected graph~$G = (V,E)$ and an integer~$k$}
{Is there a vertex set~$C \subseteq V$ of size at most~$k$ such that
	$e \cap C \ne \emptyset$ for each edge~$e\in E$}
In other words, every edge has at least one endpoint in~$C$.
Such a vertex set~$C$ is called a \emph{vertex cover} of~$G$.
\VC is \NP-hard even if each vertex in~$G$ has a degree of exactly~3~\cite{mohar}.

\problemdef{\SC}
{A universe $\mathcal{U}$, a family $\mathcal{F}$ of subsets over $\mathcal{U}$, and an integer $k$}
{Are there sets $F_1,\dots,F_k\in \mathcal F$ such that
	$\mathcal{U} := \bigcup_{i=1}^k F_i$}
In other words, every item of $\mathcal{U}$ occurs at least in one of the sets~$F_1,\dots,F_k\in \mathcal F$.
\SC is \NP-hard~\cite{karp} and \Wh{2}-complete when parameterized by~$k$~\cite{downeybook}.
Assuming \NPcoNPpoly, \SC does not admit a polynomial kernel when parameterized by the size of the universe $|\mathcal{U}|$~\cite{dom}.

\problemdef{\rbnb}
{An undirected bipartite graph $G$ with vertex bipartition $V(G)=V_r \cup V_b$ and an integer $k$}
{Is there a set $S\subseteq V_r$ of size at least $k$ such that
	each vertex $v$ of $V_b$ has a neighbor in $V_r \setminus S$}
\rbnb is \Wh{1}-hard when parameterized by~$k$~\cite{downey}.

\problemdef{\ILPF}
{A matrix~$\myvec{A} \in \mathbb{R}^{m \times n}$ and a vector~$\myvec{b} \in \mathbb{R}^{m}$}
{Does a vector~$\myvec{x} \in \mathbb{Z}^{n}$ exist such that~$\myvec{A}\myvec{x} \le \myvec{b}$}
It is possible to redefine equations~$\myvec{a}\myvec{x} = b$ for some vectors~$\myvec{a}$ and~$\myvec{x}$, and a number~$b$ to be two inequalities~$\myvec{a}\myvec{x} \le b$ and~$\myvec{-a}\myvec{x} \le -b$.
In general, \ILPF is \NP-hard~\cite{karp}.
However, it is known that instances of \ILPF with~$n$ variables and input length~$s$ can be solved using~$s \cdot n^{2.5n+o(n)}$ arithmetic operations~\cite{frank,lenstra}.

\section{Multiple-Choice Knapsack}
\label{sec:MCKP}
In this section, we consider \MCKPLong (\MCKP), a variant of \KP, in which the set of items is divided into classes. From every class, exactly one item can be chosen.
\MCKP is algorithmically closely related with \GNAPLong and therefore particularly interesting for the examination we want to conduct in the next chapter.
While \MCKP has been studied from a classical and approximation point of view~\cite{kellerer}, a parameterized point of view has not been considered yet, to the best of our knowledge.
We want to close this gap in this section.
The problem is formally defined as follows.

\problemdef{Multiple-Choice Knapsack Problem (MCKP)}
{A set of items~$N=\{a_1,\dots,a_n\}$,
	a partition~$\{N_1,\dots,N_m\}$ of~$N$,
	two functions~$c,d: N\to \mbb N$,
	and two integers~$B$, and~$D$}
{Is there a set~$S\subseteq N$ such that
	$c_\Sigma(S)\le B$, $d_\Sigma(S)\ge D$,
	and~$|S\cap N_i|=1$ for each~$i\in [m]$}
Recall that we write~$c_\Sigma(A) := \sum_{a_i\in A} c(a_i)$ and~$d_\Sigma(A) := \sum_{a_i\in A} d(a_i)$ for a set~$A\subseteq N$.
We call~$c(a_i)$ the {\it cost of~$a_i$} and~$d(a_i)$ the {\it value of~$a_i$}.
Further, for a set~$A\subseteq N$ we define~$c(A) := \{ c(a) \mid a\in A \}$ and~$d(A) := \{ d(a) \mid a\in A \}$.
A set~$S$ holding the criteria of the question is called a~\textit{solution} for the instance~$\mathcal I$ of \MCKP.
The sets~$N_i$ for~$i\in [m]$ are called~\emph{classes}.

We examine \MCKP with respect to the following parameters.
The input directly gives the \textit{number of classes~$m$}, the \textit{budget~$B$}, and the desired~\textit{value~$D$}.
Closely related to~$B$ is the \textit{maximum cost for an item~$C := \max_{a_j \in N} c(a_j)$}.
By~$\var_c$, we denote the \textit{number of different costs}, that is,~$\var_c := |\{ c(a_j) ~:~ a_j \in N \}|$. We define the \textit{number of different values}~$\var_d$ analogously.
The size of the biggest class is denoted by~$L$.
If a class~$N_i$ contains two items~$a_p$ and~$a_q$ with the same cost and~$d(a_p)\le d(a_q)$, the item~$a_p$ can be removed from the instance.
Thus, we may assume that no class contains two items with the same cost and so~$L\le \var_c$. Analogously, we may assume that no class contains two same-valued items and consequently~$L\le\var_w$.

Since projects whose cost exceeds the budget can be removed from the input, we may assume~$C\le B$. Further, we assume that~$B\le C\cdot m$, as otherwise, we can return \yes if the total value of the most valuable items per class exceeds~$D$, and \no otherwise.

\begin{table}[t]
	\centering
	\caption{Complexity results for \MCKPLong. The two question marks indicate unknown results.}
	\footnotesize
	\label{tab:results-mckp}
	\myrowcols
	\begin{tabular}{lcccl}
		\hline
		Parameter & \PFPT & \FPT & \XP & Remarks\\
		\hline
		Number of classes $m$ & \ding{55} & \ding{55} & \checkmark & Thm.~\ref{thm:Pre-MCKP-m-W1}\\
		Budget $B$ & \checkmark & \checkmark & \checkmark & $\Oh(B \cdot |N|)$~\cite{pisinger}\\
		Maximum cost $C$ & \checkmark & \checkmark & \checkmark & $\Oh(C \cdot |N|\cdot m)$; Obs.~\ref{obs:Pre-C-MCKP}\\
		Threshold $D$ & \checkmark & \checkmark & \checkmark & $\Oh(D \cdot |N|)~$\cite{bansal}\\
		Largest class $L$ & \ding{55} & \ding{55} & \ding{55} & \NP-hard for $L=2$ \cite{kellerer}\\
		Number of costs $\var_c$ & \ding{55} & ? & \checkmark & $\Oh(m^{\var_c-1}\cdot |N|)$; Prop.~\ref{prop:Pre-MCKP-XP-varc}\\
		Number of values $\var_d$ & \ding{55} & ? & \checkmark & $\Oh(m^{\var_d-1}\cdot |N|)$; Prop.~\ref{prop:Pre-MCKP-XP-vard}\\
		$\var_c+\var_d$ & \ding{55} & \checkmark & \checkmark & Thm.~\ref{thm:Pre-MCKP-ILPF}\\
		\hline
	\end{tabular}
	
\end{table}

Table~\ref{tab:results-mckp} presents an overview of known and new complexity results for \MCKP.
Observe that because $\var_c$ and~$\var_d$ are bound in the size of the instance and \MCKP is \NP-hard~\cite{kellerer}, \MCKP can not be \PFPT with respect to $\var_c+\var_d$, unless~\PeqNP.

\subsection{Algorithms for Multiple-Choice Knapsack}
\label{subsec:algos-MCKP}
First, we provide some algorithms that solve~\MCKP. 
It is known that~\MCKP can be solved in~$\Oh(B \cdot |N|)$ time~\cite{pisinger}, or in~$\Oh(D \cdot |N|)$ time~\cite{bansal}.
As we may assume that~$C\cdot m\ge B$, we may also observe the following.
\begin{observation}
	\label{obs:Pre-C-MCKP}
	\MCKP can be solved in~$\Oh(C \cdot \bet N \cdot m)$ time.
\end{observation}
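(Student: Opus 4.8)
The plan is to derive the observation as a direct corollary of the cost-parameterized algorithm already cited in this section, namely that \MCKP can be solved in $\Oh(B \cdot \bet{N})$ time~\cite{pisinger}, by exploiting the standing preprocessing assumption $B \le C \cdot m$ established in the paragraph preceding the observation. Once that bound on the budget is in place, one simply substitutes $B \le C \cdot m$ into $\Oh(B \cdot \bet{N})$, which immediately gives a running time in $\Oh(C \cdot \bet{N} \cdot m)$. So the entire content of the observation is the combination of one known result with one known reduction.

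For a self-contained argument I would instead spell out the cost-bounded dynamic program and the reason the budget may be capped at $C \cdot m$. The latter is immediate: any feasible solution $S$ selects exactly one item from each of the $m$ classes, so $\bet{S} = m$, and every chosen item has cost at most $C = \max_{a_j \in N} c(a_j)$; hence $c_\Sigma(S) \le C \cdot m$ for every feasible $S$, and any budget exceeding $C \cdot m$ is never binding. The dynamic program then maintains a table $f(i,b)$ recording the maximum value attainable by picking exactly one item from each of $N_1,\dots,N_i$ at total cost exactly $b$, for $b \in [C\cdot m]_0$, with base case $f(0,0)=0$ and transition $f(i,b) = \max_{a \in N_i}\bigl(f(i-1,\, b - c(a)) + d(a)\bigr)$. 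Filling the table inspects each item once per budget value, costing $\sum_{i=1}^{m}\bet{N_i}\cdot\Oh(C\cdot m) = \Oh(C \cdot \bet{N} \cdot m)$ time, and the instance is a \yes-instance precisely if $f(m,b) \ge D$ for some $b \le B$, read off in $\Oh(C\cdot m)$ further time.

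The main obstacle here is essentially nonexistent, since the statement is a corollary of the $\Oh(B \cdot \bet{N})$ result under the inequality $B \le C \cdot m$. The only point that deserves care is confirming that the preprocessing legitimately enforces $B \le C \cdot m$ without discarding feasible solutions: items of cost exceeding $B$ are removed (so $C \le B$), the case $B > C \cdot m$ has already been resolved separately by a greedy per-class check, and one must note that if the cost-based removal empties some class $N_i$ then every solution would be forced to exceed the budget, correctly certifying a \no-instance.
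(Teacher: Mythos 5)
Your proposal matches the paper's argument exactly: the paper derives the observation by combining the known $\Oh(B \cdot \bet{N})$ algorithm of Pisinger with the standing preprocessing assumption $B \le C \cdot m$, which is precisely your first paragraph. The additional self-contained dynamic program and the remark about emptied classes are correct but go beyond what the paper records for this observation.
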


In  the following we want to study \MCKP with respect to the number of different costs and different values.
\KP is \FPT with respect to the number of different costs,~$\var_c$~\cite{etscheid}.
This result is shown by a reduction to \ILPF instances with~$f(\var_c)$ variables.
This approach can not be adopted easily, as it has to be checked whether a solution contains exactly one item per class.
In Proposition~\ref{prop:Pre-MCKP-XP-varc} and~\ref{prop:Pre-MCKP-XP-vard} we show that \MCKP is XP with respect to the number of different costs and different values, respectively.
Then, in Theorem~\ref{thm:Pre-MCKP-ILPF} we show that \MCKP is \FPT with respect to the parameter~$\var_c+\var_d$.
In the following, let~$\mcal I=(N,\{N_1,\dots,N_m\},c,d,B,D)$ be an instance of~\MCKP, and let~$\{c_1,\dots,c_{\var_c}\}:=c(N)$ and~$\{d_1,\dots,d_{\var_d}\}:=d(N)$ denote the set of different costs and the set of the different values in~\Instance, respectively.
Without loss of generality, assume~$c_i<c_{i+1}$ for each~$i\in[\var_c-1]$ and likewise we can assume~$d_j<d_{j+1}$ for each~$j\in [\var_d-1]$. In other words,~$c_i$ is the~$i$th cheapest cost in~$c(N)$ and~$d_j$ is the~$j$th smallest value in~$d(N)$.
Recall also that we assume that there is at most one item with cost~$c_p$ and at most one item with value~$d_q$ in~$N_i$, for every~$i\in[m]$,~$p\in[\var_c]$, and~$q\in[\var_d]$.

\begin{proposition}
	\label{prop:Pre-MCKP-XP-varc}
	\MCKP can be solved in~$\Oh(m^{\var_c-1}\cdot |N|)$ time, where~$\var_c$ is the number of different costs.
\end{proposition}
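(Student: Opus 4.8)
The plan is to exploit that, because a solution takes exactly one item from each of the $m$ classes, the total cost is completely determined by how many of the chosen items carry each of the $\var_c$ distinct costs. Writing $n_p$ for the number of selected items of cost $c_p$, we have $\sum_{p=1}^{\var_c} n_p = m$, so the cost $\sum_p n_p c_p$ depends only on the $\var_c-1$ free counts $n_2,\dots,n_{\var_c}$ (since $n_1 = m - \sum_{p\ge 2} n_p$). Concretely, the total cost equals $m\,c_1 + \sum_{p=2}^{\var_c} n_p (c_p - c_1)$. There are only $\binom{m+\var_c-1}{\var_c-1} = \Oh(m^{\var_c-1})$ admissible count vectors $\myvec n = (n_2,\dots,n_{\var_c})$, since their entries are nonnegative and sum to at most $m$; this is the source of the claimed running time.

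First I would set up a dynamic program over the classes that, for every such count vector, records the maximum attainable value. For $i \in [m]_0$ and a count vector $\myvec n$, let $T_i[\myvec n]$ be the maximum of $d_\Sigma(S)$ over all selections $S$ that contain exactly one item from each of $N_1,\dots,N_i$ and for which exactly $n_p$ of the chosen items have cost $c_p$ for every $p \in \{2,\dots,\var_c\}$ (and $T_i[\myvec n] = -\infty$ if no such selection exists). The base case is $T_0[\myvec 0] = 0$ and $T_0[\myvec n] = -\infty$ otherwise. For the transition, processing $N_i$ amounts to choosing one $a \in N_i$; this increments the coordinate associated with $c_p$ if $c(a)=c_p$ for some $p\ge 2$, and leaves $\myvec n$ unchanged if $c(a)=c_1$. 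Hence
\[
T_i[\myvec n] = \max_{a \in N_i \,:\, \myvec n - \myvec e(a) \ge \myvec 0} \bigl( T_{i-1}[\myvec n - \myvec e(a)] + d(a) \bigr),
\]
where $\myvec e(a)$ is the unit vector in the coordinate associated with $c_p$ when $c(a)=c_p$ with $p\ge 2$, and $\myvec e(a) = \myvec 0$ when $c(a)=c_1$. After filling the table, I would answer \yes if and only if some $\myvec n$ satisfies $m\,c_1 + \sum_{p=2}^{\var_c} n_p(c_p - c_1) \le B$ together with $T_m[\myvec n] \ge D$, and \no otherwise.

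Correctness would follow by a straightforward induction on $i$: each selection of one item per class among the first $i$ classes maps to a unique count vector and value, and the recurrence maximizes value over all ways of reaching a fixed count vector; the final inequality merely translates the count vector back into the actual budget constraint via the closed-form cost. For the running time, the table has $m+1$ layers, each with $\Oh(m^{\var_c-1})$ entries, and to compute layer $i$ the algorithm examines, for each entry, the at most $|N_i|$ items of $N_i$, costing $\Oh(m^{\var_c-1}\cdot|N_i|)$. Summing over the classes gives $\Oh\!\left(m^{\var_c-1}\cdot\sum_{i} |N_i|\right) = \Oh(m^{\var_c-1}\cdot|N|)$, as desired.

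The step requiring the most care is the running-time accounting rather than any deep idea: one must bound the count-vector state space by $\Oh(m^{\var_c-1})$ and charge the item-work per class so that the $m$ layers do not introduce an extra factor of $m$, using that the $|N_i|$ sum to $|N|$. The remaining ingredients—the reduction of a solution's cost to its count vector, the $\var_c-1$ degrees of freedom, and the closed form for the cost—are elementary once that reduction is observed.
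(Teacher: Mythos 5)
Your proposal is correct and takes essentially the same approach as the paper: a dynamic program over the classes indexed by the counts of the $\var_c-1$ non-implicit costs, with the budget check deferred to the end via the closed-form cost of a count vector, and the same per-item charging argument for the $\Oh(m^{\var_c-1}\cdot|N|)$ bound. The only (cosmetic) difference is that you make $c_1$ the implicit coordinate where the paper uses $c_{\var_c}$.
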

\begin{proof}
	\proofpara{Table definition}
	We describe a dynamic programming algorithm with a table~$\DP$ that has~$\var_c$ dimensions.
	We want to store the largest value of a set~$S$ that contains exactly one item of each set of~$N_1,\dots,N_i$ and contains exactly~$p_j$ items of cost~$c_j$ for each~$j\in [\var_c-1]$ in entry~$\DP[i,p_1,\dots,p_{\var_c-1}]$.
	Consequently,~$S$ contains exactly~$p_{\var_c}^{(i)} := i-\sum_{j=1}^{\var_c-1} p_j$ items of cost~$c_{\var_c}$.

	We denote by~$\myvec{p}$ in the following~$(p_1,\dots,p_{\var_c-1})$.
	
	\proofpara{Algorithm}
	As a base case, we consider~$i=1$.
	For entries~$\DP[1,\myvec{p}]$, only subsets of~$N_1$ with a single number are considered.
	Thus, for every~$a\in N_1$ with a cost of~$c(a) = c_j < c_{\var_c}$, store~$\DP[1,{\myvec{0}}_{(j) + 1}]=d(a)$.
	If~$N_1$ contains an item~$a$ with a cost of~$c(a)=c_{\var_c}$, then store~$\DP[1,{\myvec{0}}]=d(a)$
	and otherwise store~$\DP[1,{\myvec{0}}]=-\infty$.
	For all other~$\myvec{p}$, store~$\DP[1,\myvec{p}]=-\infty$.
	
	Fix an~$i\in [m]$.
	Once the entries~$\DP[i,\myvec p]$ for all~$\myvec p$ have been computed, we use the following recurrence to compute further values
	\begin{eqnarray}
		\label{eqn:Pre-MCKP-varc}
		\DP[i+1,\myvec{p}] &=&
		\max_{a \in N_{i+1}}
		\left\{
		\begin{array}{ll}
			\DP[i,\myvec{p}_{(j) -1}] + d(a)
			& \text{if } c(a) = c_j < c_{\var_c} \text{ and } p_j \ge 1\\
			\DP[i,\myvec{p}] + d(a)
			& \text{if } c(a) = c_{\var_c}
		\end{array}
		\right.
		.
	\end{eqnarray}
	
	Return \yes if~$\DP[m,\myvec{p}]\ge D$ for some~$\myvec p$ with~$p_{\var_c}^{(m)} \cdot c_{\var_c} + \sum_{i=1}^{\var_c-1} p_i \cdot c_i \le B$ and return \no, otherwise.
	
	\proofpara{Correctness}
	For given integers~$i\in [m]$ and~$\myvec{p} \in [i]_0^{\var_c-1}$, we define~$\mathcal S^{(i)}_{\myvec{p}}$ to be the family of~$i$-sized sets~$S\subseteq N$ that contain exactly one item of each of~$N_1,\dots,N_i$ and where~$p_\ell$ is the number of items in~$S$ with cost~$c_\ell$ for each~$\ell \in [\var_c-1]$.
	
	For fixed a~$\myvec{p}\in [i]_0^{\var_c-1}$, we prove that~$\DP[i,\myvec{p}]$ stores the largest value of a set~$S\in \mathcal S^{(i)}_{\myvec{p}}$, by an induction.
	This implies that the algorithm is correct.
	The base cases are correct.
	Now, as an induction hypothesis assume that the claim is correct for a fixed~$i\in [m-1]$.
	We first prove that if~$\DP[i+1,\myvec{p}]=q$, then there exists a set~$S\in \mathcal S^{(i+1)}_{\myvec{p}}$ with~$d_\Sigma(S)=q$.
	Afterward, we prove that~$\DP[i+1,\myvec{p}]\ge d_\Sigma(S)$ for every set~$S\in \mathcal S^{(i+1)}_{\myvec{p}}$.

	Now, let~$\DP[i+1,\myvec{p}]=q$.
	Let~$a\in N_{i+1}$ be an item with a cost of~$c(a) = c_j$ be an item of~$N_{i+1}$ that maximizes the right side of Equation~(\ref{eqn:Pre-MCKP-varc}) for~$\DP[i+1,\myvec{p}]$.
	Assume first that~$c_j < c_{\var_c}$, and thus~$\DP[i+1,\myvec{p}]=q=\DP[i,\myvec{p}_{(j) -1}] + d(a)$.
	By the induction hypothesis, there is a set~$S\in S^{(i)}_{\myvec{p}_{(j) -1}}$ such that~$\DP[i+1,\myvec{p}_{(j) -1}]=d_\Sigma(S)=q-d(a)$.
	Observe that~$S':=S\cup\{a\}\in S^{(i+1)}_{\myvec{p}}$. The value of~$S'$ is~$d_\Sigma(S')=d_\Sigma(S)+d(a)=q$.
	The other case with~$c(a) = c_{\var_c}$ is shown analogously.
	
	Conversely,
	let~$S\in \mathcal S^{(i+1)}_{\myvec{p}}$ be a set of items and let~$a\in S\cap N_{i+1}$ be an item.
	Assume first that~$c(a) = c_j < c_{\var_c}$.
	Observe that~$S' := S\setminus\{a\}\in S^{(i)}_{\myvec{p}_{(j) -1}}$.
	Consequently,
	\begin{eqnarray}
		\label{eqn:Pre-MCKP-varc-IH1}
		\DP[i+1,\myvec{p}]
		&\ge& \DP[i,\myvec{p}_{(j) -1}] + d(a)\\
		\label{eqn:Pre-MCKP-varc-IH2}
		&=& \max\{ d_\Sigma(S) \mid S \in S^{(i)}_{\myvec{p}_{(j) -1}}\} + d(a)\\
		\nonumber
		&\ge& d_\Sigma(S') + d(a) = d_\Sigma(S).
	\end{eqnarray}
	Herebin, Inequality~(\ref{eqn:Pre-MCKP-varc-IH1}) is the definition of the recurrence in Equation~(\ref{eqn:Pre-MCKP-varc}), and Equation~(\ref{eqn:Pre-MCKP-varc-IH2}) follows by the induction hypothesis.
	The other case with~$c(a) = c_{\var_c}$ is shown analogously.
	
	\proofpara{Running time}
	First, we show how many options of vectors~$\myvec{p}$ there are and then how many equations have to be computed for one of these options.
	
	For~$p_1,\dots,p_{\var_c-1}$ with~$\sum_{j=1}^{\var_c-1} p_j \ge m$ and each~$i\in [m]$, the entry~$\DP[i,\myvec{p}]$ stores~$-\infty$. Consequently, we consider a vector~$\myvec{p}$ with~$p_j=m$ only if~$p_\ell=0$ for each~$\ell \ne j$.
	Thus, we are only interested in~$\myvec{p}\in [m-1]_0^{\var_c-1}$ or~$\myvec{p}={\myvec{0}}_{(j) +m}$ for each~$j\in[\var_c-1]$.
	So, there are~$m^{\var_c-1} + m \in \Oh(m^{\var_c-1})$ options of~$\myvec{p}$.
	
	For a fixed~$\myvec{p}$, each item~$a\in N_i$ is considered exactly once in the computation of~$\DP[i,\myvec{p}]$. Thus, overall~$\Oh(m^{\var_c-1} \cdot |N|)$ time is needed to compute the table~$F$.
	Additionally, we need~$\Oh(\var_c \cdot |N|)$ time to check, whether~$\DP[m,p_1,\dots,p_{\var_c-1}]\ge D$ and~$p_{\var_c}^{(m)} \cdot c_{\var_c} + \sum_{i=1}^{\var_c-1} p_i \cdot c_i \le B$ for any~$\myvec{p}$. As we may assume~$m^{\var_c-1} > \var_c$, the running time of the entire algorithm is~$\Oh(m^{\var_c-1} \cdot |N|)$.
\end{proof}

One can define a dynamic programming algorithm which is very similar to the one in Proposition~\ref{prop:Pre-MCKP-XP-varc} in which the table has~$\var_d$ dimensions.
Herein, instead of storing the maximum value of a set of items with a given set of costs, we store the minimum cost a set of items with a given set of values can have.
\begin{proposition}
	\label{prop:Pre-MCKP-XP-vard}
	\MCKP{} can be solved in~$\Oh(m^{\var_d-1}\cdot |N|)$, where~$\var_d$ is the number of different values.
\end{proposition}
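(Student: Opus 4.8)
The plan is to dualize the dynamic program of Proposition~\ref{prop:Pre-MCKP-XP-varc} by exchanging the roles of cost and value: where the previous table stored, for a prescribed multiset of costs, the maximum attainable value, here the table stores, for a prescribed multiset of values, the \emph{minimum} attainable cost. Concretely, I would let the distinct values be~$\{d_1,\dots,d_{\var_d}\}$ with~$d_j < d_{j+1}$ and define a~$\var_d$-dimensional table in which~$\DP[i,p_1,\dots,p_{\var_d-1}]$ holds the smallest cost~$c_\Sigma(S)$ over all sets~$S$ that contain exactly one item from each of~$N_1,\dots,N_i$ and exactly~$p_j$ items of value~$d_j$ for each~$j\in[\var_d-1]$. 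Such an~$S$ then automatically contains~$p_{\var_d}^{(i)} := i-\sum_{j=1}^{\var_d-1}p_j$ items of value~$d_{\var_d}$. Writing~$\myvec{p}=(p_1,\dots,p_{\var_d-1})$, the crucial sign changes from the previous proof are that unreachable entries are initialised to~$+\infty$ rather than~$-\infty$ and that the recurrence takes a minimum rather than a maximum.

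For the base case~$i=1$ I would set~$\DP[1,\myvec{0}_{(j)+1}]=c(a)$ for each~$a\in N_1$ with~$d(a)=d_j<d_{\var_d}$, set~$\DP[1,\myvec{0}]=c(a)$ if some~$a\in N_1$ has~$d(a)=d_{\var_d}$ (and~$+\infty$ otherwise), and~$\DP[1,\myvec{p}]=+\infty$ for all remaining~$\myvec{p}$. For~$i\in[m-1]$ the recurrence is
\begin{equation*}
	\DP[i+1,\myvec{p}] = \min_{a\in N_{i+1}}
	\begin{cases}
		\DP[i,\myvec{p}_{(j)-1}]+c(a) & \text{if } d(a)=d_j<d_{\var_d} \text{ and } p_j\ge 1,\\
		\DP[i,\myvec{p}]+c(a) & \text{if } d(a)=d_{\var_d}.
	\end{cases}
\end{equation*}
Finally I would answer~\yes exactly when there is a~$\myvec{p}$ with~$\DP[m,\myvec{p}]\le B$ whose induced total value~$p_{\var_d}^{(m)}\cdot d_{\var_d}+\sum_{i=1}^{\var_d-1}p_i\cdot d_i$ is at least~$D$, and~\no otherwise. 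This acceptance test is the mirror image of the earlier one: a budget constraint~$\le B$ is read off the value-minimising table, while the value threshold~$\ge D$ is now the derived quantity.

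The correctness proof mirrors that of Proposition~\ref{prop:Pre-MCKP-XP-varc} after replacing ``largest value'' by ``smallest cost'' and reversing the relevant inequalities: defining~$\mathcal{S}^{(i)}_{\myvec{p}}$ as the family of~$i$-sized sets with exactly one item from each of~$N_1,\dots,N_i$ and exactly~$p_\ell$ items of value~$d_\ell$, one shows by induction on~$i$ that~$\DP[i,\myvec{p}]=\min\{c_\Sigma(S)\mid S\in\mathcal{S}^{(i)}_{\myvec{p}}\}$, handling the two directions (every finite stored cost is realised by some~$S\in\mathcal{S}^{(i)}_{\myvec{p}}$ obtained by deleting the item of~$N_{i+1}$, and~$\DP[i,\myvec{p}]\le c_\Sigma(S)$ for every such~$S$ by applying the recurrence to its~$N_{i+1}$-item). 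The running-time bound is identical to before: only the vectors~$\myvec{p}\in[m-1]_0^{\var_d-1}$ together with the~$m$ vectors of the form~$\myvec{0}_{(j)+m}$ can be reachable, giving~$\Oh(m^{\var_d-1})$ relevant entries, each computed by scanning~$N_{i+1}$ once, for a total of~$\Oh(m^{\var_d-1}\cdot\bet{N})$. Since the construction is an exact transposition of the previous one, I expect no genuine obstacle; the only thing to watch is the bookkeeping, so that the initialisation, the~$\min$, and the final threshold test all flip consistently and cost and value never get swapped by mistake.
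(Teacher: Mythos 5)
Your proposal is correct and is exactly the argument the paper intends: its proof of this proposition consists of the single remark that one dualizes the dynamic program of Proposition~\ref{prop:Pre-MCKP-XP-varc}, storing the minimum cost of a set with a prescribed multiset of values instead of the maximum value for a prescribed multiset of costs. Your worked-out base cases, recurrence, acceptance test~($\DP[m,\myvec{p}]\le B$ with derived total value at least~$D$), and the~$\Oh(m^{\var_d-1}\cdot\bet{N})$ entry count all match this intended transposition, so there is nothing to add.
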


By Propositions~\ref{prop:Pre-MCKP-XP-varc} and~\ref{prop:Pre-MCKP-XP-vard}, \MCKP is~\XP with respect to~$\var_c$ and~$\var_d$, respectively.
In the following, we show that~\MCKP is \FPT with respect to the combined parameter~$\var_c+\var_d$.
To prove this, we reduce an instance of \MCKP to an instance of~\ILPF, in which the number of variables is in~$2^{\var_c+\var_d} \cdot \var_c$.
\begin{theorem}
	\label{thm:Pre-MCKP-ILPF}
	For an instance of~\MCKP one can define an equivalent instance of~\ILPF with~$\Oh(2^{\var_c+\var_d} \cdot \var_c)$ variables.
	Thus,~\MCKP is \FPT with respect to~$\var_c+\var_d$.
\end{theorem}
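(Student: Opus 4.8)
The plan is to mirror the strategy that makes \KP fixed-parameter tractable with respect to $\var_c$: encode a solution as an \ILPF instance whose number of variables depends only on the parameter, and then invoke the cited fact that \ILPF with $n$ variables is solvable in $s\cdot n^{2.5n+o(n)}$ arithmetic operations, which is $f(\var_c+\var_d)\cdot\poly(|\mathcal{I}|)$ time once $n$ is bounded by a function of $\var_c+\var_d$. The extra difficulty, as the preamble already flags, is the \emph{multiple-choice} requirement that exactly one item is taken from each class. First I would exploit the preprocessing assumed earlier in this section: within every class there is at most one item of each cost and at most one item of each value, so each class offers a partial injective assignment (a partial matching) between the $\var_c$ costs $\{c_1,\dots,c_{\var_c}\}$ and the $\var_d$ values $\{d_1,\dots,d_{\var_d}\}$. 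Consequently a class is fully described by the set of $(\text{cost},\text{value})$ pairs it offers, and a selection affects the objective only through its total selected cost and total selected value.

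Next I would set up the integer program. The natural variables are aggregate counts rather than per-class choices: for each pair of indices $(j,\ell)\in[\var_c]\times[\var_d]$, let $y_{j\ell}$ be the number of classes that pick an item of cost $c_j$ and value $d_\ell$. The three global conditions are then linear: $\sum_{j,\ell} y_{j\ell}=m$ (each class contributes one item), $\sum_{j,\ell} c_j\,y_{j\ell}\le B$, and $\sum_{j,\ell} d_\ell\, y_{j\ell}\ge D$. This already keeps both the cost and the value expressible linearly, since we track the cost \emph{and} the value of each chosen item. To match the variable count claimed in the statement I would organize these counts by the \emph{signature} $(S,T)$ of a class, where $S\subseteq\{c_1,\dots,c_{\var_c}\}$ and $T\subseteq\{d_1,\dots,d_{\var_d}\}$ are the sets of costs and of values it offers; there are at most $2^{\var_c}\cdot 2^{\var_d}=2^{\var_c+\var_d}$ signatures, the multiplicity $m_{S,T}$ of each is computable in polynomial time, and per signature the relevant choices are indexed by $\var_c$-many costs, yielding $\Oh(2^{\var_c+\var_d}\cdot\var_c)$ variables.

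The step needing the most care — and where I expect the real obstacle to be — is ensuring that a feasible integer point can actually be \emph{realized} by choosing one genuine item from each class: the constraints above do not by themselves guarantee this, because two classes with the same signature $(S,T)$ may pair costs with values differently, so fixing the counts $y_{j\ell}$ can demand a pairing that no concrete assignment supports. I would make this feasibility explicit by viewing the choice as a bipartite transportation problem, with one unit of supply at each class and demand $y_{j\ell}$ at each offered pair, a class being adjacent to exactly the pairs it offers. By the defect form of Hall's theorem, a valid assignment exists if and only if $\sum_{j,\ell} y_{j\ell}=m$ and, for every subset $P$ of pairs, $\sum_{(j,\ell)\in P} y_{j\ell}$ is at most the number of classes offering at least one pair in $P$; these right-hand sides are precomputable, and there are only a parameter-bounded number of such inequalities, so they add constraints but no asymptotic increase in the number of variables. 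With this realizability correctly encoded, the forward and backward equivalences between \yes-instances of \MCKP and feasible points of the program are routine. Since the \ILPF instance has $\Oh(2^{\var_c+\var_d}\cdot\var_c)$ variables, the cited \ILPF algorithm runs in $f(\var_c+\var_d)\cdot\poly(|\mathcal{I}|)$ time, proving that \MCKP is \FPT with respect to $\var_c+\var_d$.
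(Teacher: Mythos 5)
Your proof is correct, but it takes a genuinely different route from the paper's, and in fact it solves a harder problem than it needs to. The obstacle you flag --- that two classes with the same signature $(S,T)$ may pair costs with values differently --- cannot occur after the standard dominance preprocessing: if a class contains items $a,b$ with $c(a)<c(b)$ and $d(a)\ge d(b)$, then $b$ may simply be deleted, so within each class the cost order coincides with the value order and the $\ell$-th cheapest cost is always paired with the $\ell$-th smallest value. The paper exploits exactly this: a class is fully described by its \emph{type} $(c(N_i),d(N_i))$, classes of equal type have identical menus, and variables $x_{T,i}$ counting how many classes of type $T$ choose cost $c_i$ suffice, with the value of such a choice a precomputed constant $d_{T,i}$; realizability of a feasible point is then trivial (any split of $m_T$ among the offered costs can be distributed over the interchangeable classes of type $T$), giving $\Oh(2^{\var_c+\var_d}\cdot\var_c)$ variables without any matching constraints. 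Your route instead keeps the global pair counts $y_{j\ell}$ and enforces realizability via Gale--Hall transportation inequalities, one per subset $P$ of cost-value pairs. This is sound: with $\sum_{j,\ell}y_{j\ell}=m$, your Hall conditions are exactly equivalent to the existence of an assignment of one offered pair to each class (the singleton sets $P$ even force $y_{j\ell}=0$ for pairs no class offers), and the forward and backward translations to \MCKP are then routine; the resulting program has only $\var_c\cdot\var_d$ variables --- well within the stated bound --- but $2^{\Theta(\var_c\cdot\var_d)}$ constraints, which still yields an \FPT algorithm since the running time of the cited \ILPF procedure is linear in the encoding length. One caution: your intermediate reorganization into signature-indexed variables $x_{(S,T),j}$ is unnecessary for the Hall formulation and, taken by itself, would be broken without the monotone pairing, since the value of a cost-$c_j$ item in a class of signature $(S,T)$ would not be well defined and the target-value constraint could not be written. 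Either commit to the pair-level $y_{j\ell}$ with Hall constraints (your approach: more general, needs no dominance assumption, more constraints) or invoke dominance and the canonical monotone pairing (the paper's approach: fewer, simpler constraints); do not mix the two.
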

\begin{proof}
	\proofpara{Description}
	We may assume that any class~$N_i$ does not contain two items of the same cost or the same value.
	We conclude that~$c(a) \ne c(b)$,~$d(a) \ne d(b)$, and if~$c(a)<c(b)$ then~$d(a)<d(b)$ for items~$a,b\in N_i$.
	Thus, each class~$N_i$ is described by the set of costs~$c(N_i)$ of the items in~$N_i$ and the set of values~$d(N_i)$ of the items in~$N_i$.
	
	In the following, we call~$T=(C,Q)$ a \textit{type}, for sets~$C\subseteq c(N), Q\subseteq d(N)$ with~$|C| = |Q|$. Let~$\Tree$ be the family of types.
	We say that \textit{class~$N_i$ is of\lb type~$T=(c(N_i),d(N_i))$.}
	For each~$T\in\Tree$, let~$m_T$ be the number of classes of type~$T$.
	Clearly,~$\sum_{T\in\Tree} m_T = m$.
	
	Observe, for every class~$N_j$ of type~$T=(C,Q)$, and each item~$a\in N_j$ with a cost of~$c(a)\in C$ the value~$d(a)\in Q$ can be determined.
	More precisely, if~$c(a)$ is the~$\ell$th cheapest cost in~$C$, then the value of~$a$ is the~$\ell$th smallest value in~$Q$.
	For every type~$T=(C,Q)$ and each~$i\in [\var_c]$, we define a constant~$d_{T,i} := -\sum_{i=1}^m \max d(N_i)$ if~$c_i \not\in C$. Otherwise, let~$d_{T,i}$ be the~$\ell$th smallest value in~$Q$, if~$c_i$ is the~$\ell$th smallest cost in~$C$.
	
	We define an instance of~\ILPF that is equivalent to the instance~\Instance of~\MCKP.
	The variable~$x_{T,i}$ expresses the number of items with cost~$c_i$ that are chosen in a class of type~$T$.
	
	\begin{align}
		\label{eqn:Pre-MCKP-ILPF-B}
		\sum_{T \in \mathcal{T}_C} \sum_{i=1}^{\var_c} x_{T,i} \cdot c_i \le & \; B\\
		\label{eqn:Pre-MCKP-ILPF-D}
		\sum_{T \in \mathcal{T}_C} \sum_{i=1}^{\var_c} x_{T,i} \cdot d_{T,i} \ge & \; D\\
		\label{eqn:Pre-MCKP-ILPF-Nr}
		\sum_{i=1}^{\var_c} x_{T,i} = & \; m_{T} & \quad \forall T\in \mathcal{T}\\
		\label{eqn:Pre-MCKP-ILPF-initial}
		x_{T,i} \ge & \; 0 & \quad \forall T\in \mathcal{T}, i\in [\var_c]
	\end{align}
	
	\proofpara{Correctness}
	Observe that if~$c_i\not\in C$, then Inequality~(\ref{eqn:Pre-MCKP-ILPF-D}) would not be fulfilled if~$x_{T,i}>0$  because we defined~$d_{T,i}$ to be~$-\sum_{i=1}^m \max d(N_i)$. Consequently,~$x_{T,i}=0$ if~$c_i\not\in C$ for each type~$T=(C,Q)\in\Tree$ and~$i\in[\var_c]$.
	Inequality~(\ref{eqn:Pre-MCKP-ILPF-B}) can only be correct if the total cost is at most~$B$.
	Inequality~(\ref{eqn:Pre-MCKP-ILPF-D}) can only be correct if the total value is at least~$D$.
	Equation~(\ref{eqn:Pre-MCKP-ILPF-Nr}) can only be correct if exactly~$m_T$ elements are picked from the classes of type~$T$, for each~$T\in \Tree$.
	It remains to show that the instance of the~\ILPF has~$\Oh(2^{\var_c+\var_d}\cdot \var_c)$ variables.
	Because~$\Tree\subseteq 2^{c(N)} \times 2^{d(N)}$, the size of~$\Tree$ is~$\Oh(2^{\var_c+\var_d})$.
	Consequently, there are~$\Oh(2^{\var_c+\var_d}\cdot \var_c)$ different options for the variables~$x_{T,i}$.
\end{proof}
Observe that with the same technique, an instance of~\ILPF\lb with~$2^{\var_c+\var_d}\cdot \var_d$ variables can be described.

\subsection{Hardness With Respect to the Number of Classes}
Kellerer et al. gave a reduction from \KP{} to~\MCKP in which each item in the instance of~\KP{} is added to a unique class with a new item that has no costs and no value~\cite{kellerer}.
\begin{observation}[\cite{kellerer}]
	\label{obs:Pre-MCKP-NP-L=2}
	\MCKP is \NP-hard even if every class contains two items.
\end{observation}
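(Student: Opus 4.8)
The plan is to give a polynomial-time reduction from \KP, which is \NP-hard~\cite{karp}, to \MCKP in such a way that every class produced has exactly two items. Let $(N = \{a_1,\dots,a_n\}, c, d, B, D)$ be an instance of \KP. I would build an \MCKP instance on the item set $N' := N \cup \{a_1',\dots,a_n'\}$ with partition $\{N_1,\dots,N_n\}$, where $N_i := \{a_i, a_i'\}$ keeps the original $c(a_i)$ and $d(a_i)$ and introduces a fresh \emph{dummy} item $a_i'$ with $c(a_i') := 0$ and $d(a_i') := 0$. The budget $B$ and threshold $D$ are copied unchanged. Each class clearly contains exactly two items, so $L = 2$, and the whole construction runs in linear time.

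For correctness, the key observation is that the \MCKP requirement $|S \cap N_i| = 1$ turns the selection into an exact binary choice per index: picking $a_i$ corresponds to placing $a_i$ in the \KP-solution, while picking the dummy $a_i'$ corresponds to leaving it out. Since the dummies contribute nothing to either objective, every \MCKP-solution $S$ satisfies $c_\Sigma(S) = c_\Sigma(S \cap N)$ and $d_\Sigma(S) = d_\Sigma(S \cap N)$. Hence $S$ fulfils $c_\Sigma(S) \le B$ and $d_\Sigma(S) \ge D$ if and only if $S \cap N \subseteq N$ is a valid \KP-solution. Both directions then follow at once: given a \KP-solution $S^\star \subseteq N$, completing it to one item per class by adding the dummies of the unchosen indices yields an \MCKP-solution of identical cost and value; conversely, restricting any \MCKP-solution to $N$ yields a \KP-solution.

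There is no genuinely difficult step here; the only point that warrants care is verifying that the exactly-one-per-class constraint faithfully models inclusion versus exclusion and that the dummies are both cost- and value-neutral, so that the two objective sums are preserved across the reduction. Together these observations show that \MCKP remains \NP-hard when restricted to instances in which every class has size two, that is, for $L = 2$.
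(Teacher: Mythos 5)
Your reduction is exactly the one the paper invokes: Kellerer et al.'s reduction from \KP in which each item is paired in its own class with a dummy item of zero cost and zero value, with $B$ and $D$ unchanged. The construction and the correctness argument match the paper's (cited) proof, so nothing further is needed.
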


The reduction above constructs an instance with many classes.
In the following, we prove that~\MCKP is \Wh{1}-hard with respect to the number of classes~$m$, even if~$B=D$ and~$c(a)=d(a)$ for each~$a\in N$. This special case of~\MCKP is called~\textsc{Multiple-Choice \SubSum}~\cite{kellerer}.
\begin{theorem}
	\label{thm:Pre-MCKP-m-W1}
	\MCKP is \XP and \Wh1-hard with respect to the number of classes~$m$.
\end{theorem}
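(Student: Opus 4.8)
The plan is to establish the two claims separately, with the \XP-result being routine and the \Wh1-hardness carrying all the difficulty. For membership in \XP, I would simply brute-force over all ways of picking one item per class. Since a solution selects exactly one element of each~$N_i$, there are at most~$\prod_{i=1}^m |N_i| \le |N|^m$ candidate sets, and for each we can check~$c_\Sigma(S)\le B$ and~$d_\Sigma(S)\ge D$ in polynomial time. This yields an~$\Oh^*(|N|^m)$ algorithm, which is of the form~$\poly(|\Instance|)^{g(m)}$ and hence witnesses that \MCKP is \XP with respect to~$m$.

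For the hardness, I would give a parameterized reduction from~$k$-\SubSum, which is \Wh1-hard with respect to the solution size~$k$, producing an instance of \textsc{Multiple-Choice \SubSum} (that is, with~$B=D$ and~$c(a)=d(a)$). Let the source instance be a multiset~$\{s_1,\dots,s_n\}$ with target~$G$; a \yes-instance is a choice of~$k$ \emph{distinct} indices~$i_1<\dots<i_k$ with~$\sum_{j} s_{i_j}=G$. I would introduce~$k$ \emph{selection classes} (one per chosen element) together with~$k-1$ \emph{slack classes}, so that~$m=2k-1$, which is bounded by a function of~$k$ as required. Selection class~$j$ contains one item for each index~$i\in[n]$, representing the decision ``the~$j$-th selected element is~$s_i$''.

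The technical heart of the construction — and the step I expect to be the main obstacle — is encoding item values so that the single equality constraint~$c_\Sigma(S)=B$ simultaneously enforces (i)~$\sum_j s_{i_j}=G$, and (ii)~strict monotonicity~$i_1<i_2<\dots<i_k$, the latter guaranteeing distinctness and eliminating the~$k!$-fold over-counting of unordered subsets. I would work in a large base~$R>\max(\sum_i s_i,\;2n+2)$ so that the numbers split into carry-free \emph{blocks}: one \emph{value block} holding~$\sum_j s_{i_j}$ (with target~$G$), and one \emph{junction block}~$J_j$ for each consecutive pair of selection classes (with target~$2n+1$). An index-$i$ item of selection class~$j$ contributes~$s_i$ to the value block, the amount~$i$ to~$J_{j-1}$, and the amount~$n-i$ to~$J_j$; the slack class between classes~$j$ and~$j+1$ supplies items of value~$n-\sigma$ (contributing to~$J_j$) for each~$\sigma\in\{0,\dots,n\}$. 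Then the total deposited into~$J_j$ equals
\[
  i_{j+1} + (n-i_j) + (n-\sigma) = 2n + i_{j+1}-i_j-\sigma,
\]
which matches the target~$2n+1$ exactly when~$i_{j+1}=i_j+1+\sigma$ for some admissible slack~$\sigma\ge 0$, i.e.\ precisely when~$i_{j+1}>i_j$.

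With the base chosen large enough to forbid inter-block carries, the single sum hits~$B$ iff every block meets its target, which is equivalent to selecting indices~$i_1<\dots<i_k$ with~$\sum_j s_{i_j}=G$. This gives the \yes/\no equivalence; the construction is clearly computable in~$\poly(n,k)$ time, produces~$m=2k-1$ classes, and uses~$c=d$ and~$B=D$, so it is a valid parameterized reduction into \textsc{Multiple-Choice \SubSum}. I expect the only delicate points to be fixing the base~$R$ and verifying the absence of carries between blocks, together with checking the boundary classes~$1$ and~$k$ (which contribute to only one junction block each); everything else is a direct block-by-block verification. Combining both parts yields that \MCKP is \XP and \Wh1-hard with respect to~$m$.
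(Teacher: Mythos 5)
Your \XP argument is essentially identical to the paper's: enumerate one item per class ($\Oh(L^m\cdot m)$ nested loops in the paper, $\Oh^*(|N|^m)$ in your phrasing), so nothing to compare there. Your hardness proof, however, takes a genuinely different route. The paper builds no ordering gadget at all: it first strengthens the source problem, observing that the \textsc{Perfect Code} reduction of Downey and Fellows still works when every constructed integer is duplicated $k$ times (a double selection produces carries that destroy the all-ones target), so \SubSum is \Wh{1}-hard with respect to $k$ even on multisets where every integer has multiplicity at least $k$. With duplicates available for free, the reduction to \MCKP is then trivial — $m=k$ classes, each a full copy of the numbers, $c=d$ and $B=D=Q$, and \emph{any} choice of one item per class is automatically a legal multiset selection. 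You instead reduce from plain $k$-\SubSum and push the distinctness enforcement into the \MCKP instance itself, via $k$ selection classes plus $k-1$ slack classes ($m=2k-1$) and carry-free junction blocks forcing $i_1<\dots<i_k$. Your construction is sound and self-contained — it buys independence from the multiplicity-robust hardness result and never reopens the \textsc{Perfect Code} reduction — at the price of a more technical gadget, larger numbers, and about twice as many classes. One concrete slip in the delicate point you yourself flagged: the base $R>\max\left(\sum_i s_i,\,2n+2\right)$ is too small, since a junction block can total up to $i_{j+1}+(n-i_j)+(n-\sigma)\le 3n-1$, which exceeds $2n+2$ for all $n\ge 4$ and hence does not exclude carries; taking $R>3n$ (in addition to $R>\sum_i s_i$ for the value block) repairs this, and your block-by-block verification then goes through.
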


To show the hardness, we reduce from \msSubSum, a version of \SubSum in which every integer can be chosen arbitrarily often, parameterized by~$k$.
More formally, in~\msSubSum, a multiset~$Z=\{z_1,\dots,z_n\}$ of integers and two integers~$Q$ and~$k$ are given, and it is asked, whether there is a \textit{multi}-set~$S\subseteq Z$ of size~$k$ such that~$\sum_{s\in S}=Q$.

In parameterized complexity, the problem is often defined for set inputs.
The original \Wh1-hardness proof for \SubSum relies on a reduction from the \textsc{Perfect Code} problem~\cite[Lemma~4.4]{DF95}.
It is easy to observe that this reduction also works if every constructed integer is added~$k$ times to~$Z$.
We will not repeat the details of the reduction but give a brief intuition.
In the reduction, the target number~$Q$ has a value of one at each digit.
Now, adding~$k$ copies of a number to~$Z$ in the construction maintains correctness because including any number twice in the solution produces carries in the summation which destroys the property that every digit has a value of 1.
\begin{proposition}[\cite{DF95}]
	\label{prop:MultiSubSum}
	\SubSum{} is \Wh{1}-hard with respect to~$k$, even when every integer in~$Z$ has multiplicity at least~$k$.
\end{proposition}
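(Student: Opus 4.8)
The plan is to invoke the classical reduction from \textsc{Perfect Code} to \SubSum due to Downey and Fellows~\cite{DF95} and to verify that the instances it produces already enjoy the desired multiplicity property once each value is padded with further copies. Recall that \textsc{Perfect Code} asks, for a graph $G=(V,E)$ with $V=\{v_1,\dots,v_n\}$ and an integer $k$, whether there is a set $S\subseteq V$ of size $k$ such that every vertex has exactly one element of $S$ in its closed neighborhood $N[v]:=\{v\}\cup N(v)$; this problem is \Wh{1}-hard with respect to $k$. In the reduction one fixes a base $B>k$ and assigns to each vertex $v_i$ the integer $z_i:=\sum_{v_j\in N[v_i]} B^{j-1}$, that is, the base-$B$ number whose $j$th digit is $1$ exactly when $v_j$ lies in the closed neighborhood of $v_i$. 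The target is $Q:=\sum_{j=1}^n B^{j-1}$, the number carrying a digit $1$ in each of the $n$ positions, and one asks for a sub-multiset of size $k$ summing to $Q$.

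First I would recall why this reduction is correct for ordinary $k$-element \SubSum: since exactly $k$ numbers are summed, each per-position contribution is at most $k<B$, so no carries occur and the sum equals $Q$ precisely when every position is covered exactly once, i.e.\ when the chosen vertices form a perfect code of size $k$.

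The new ingredient is the following \emph{diagonal} observation. Because $v_i\in N[v_i]$, the $i$th digit of $z_i$ equals $1$. Hence, if a sub-multiset selects two copies of some value $z_i$, then position $i$ receives a contribution of at least $2$ from these two copies alone. As the base is strictly larger than $k$ and at most $k$ summands are chosen, every per-position contribution stays below $B$, so no carries occur and the $i$th digit of the sum is exactly the sum of the contributions, which is at least $2\neq 1$. This contradicts that the sum equals $Q$, whose $i$th digit is $1$. Consequently, no sub-multiset summing to $Q$ can use any value more than once.

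This is exactly what is needed to conclude. I would form the multiset $Z$ by placing $k$ identical copies of each $z_i$ into $Z$ (so every integer of $Z$ has multiplicity exactly $k$, hence at least $k$) while keeping the same target $Q$ and the same parameter $k$. By the diagonal observation, any size-$k$ sub-multiset of $Z$ summing to $Q$ must select $k$ \emph{distinct} values $z_{i_1},\dots,z_{i_k}$, which therefore correspond to a genuine $k$-vertex perfect code; conversely, any perfect code yields such a selection using one copy of each relevant value. Thus the padded instance is a \yes-instance of \SubSum if and only if $G$ has a perfect code of size $k$, the parameter $k$ is preserved, and every integer of $Z$ occurs with multiplicity at least $k$. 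The main point to get right is precisely the no-carry accounting around the diagonal digits, which guarantees that the additional copies can never be exploited; once that is in place, both the equivalence and the parameter bound are immediate.
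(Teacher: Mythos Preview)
Your proposal is correct and follows essentially the same approach as the paper: both invoke the Downey--Fellows reduction from \textsc{Perfect Code}, pad each constructed integer with $k$ copies, and argue that duplicates cannot contribute to a valid solution because of the per-digit structure of the target~$Q$. Your write-up is in fact more precise than the paper's brief sketch---you make the diagonal observation (that the $i$th digit of $z_i$ is~$1$) explicit and correctly note that with base $B>k$ no carries occur, so a repeated value forces some digit to exceed~$1$; the paper merely says that duplicates ``produce carries,'' which is the same idea stated less carefully.
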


\begin{proof}[Proof of \Cref{thm:Pre-MCKP-m-W1}]
	\proofpara{Slicewice-polynomial}
	We first discuss the \XP-algorithm and then focus on the \Wh1-hardness.
	For every class~$N_i$, iterate over the items~$a_{j_i}\in N_i$ such that there are~$m$ nested loops.
	We check whether~$\sum_{i=1}^m c(a_{j_i}) \le B$\lb and~$\sum_{i=1}^m d(a_{j_i}) \ge D$ in~$\Oh(m)$~time, such that a solution of~\MCKP is computed in~$\Oh(L^m \cdot m)$~time.
	
	\proofpara{Reduction}
	We reduce from~\SubSum with multi-set inputs~$Z$ where every integer~$z$ occurs~$k$ times. 	Let~$\Instance=(Z,Q,k)$ be such an instance of~\SubSum
	and assume, without loss of generality, that~$Z=\{z_{i,j}\mid i\in [n], j\in [k]\}$ such that~$z_{i,1}=z_{i,2}=\ldots =z_{i,k}$ for all~$i\in [n]$. 
	
	We define~$N_j:=\{a_{i,j}\mid i\in [n]\}$ for each~$j\in [k]$ and set~$c(a_{i,j})=d(a_{i,j})=z_{i,j}$ for each element $a_{i,j}$. Then~$\Instance'$ is~$(N,\{N_1,\ldots, N_k\},Q,Q)$ where~$N$ is the union of the sets~$N_j$,~$j\in [k]$. Observe that~$m=k$. 
	
	\proofpara{Correctness}
	We show that~\Instance is a yes-instance if and only if~\Instance' is a yes-instance.
	
	Let~$S=\{z_{i_1},\ldots,z_{i_k}\}$ be a solution for~$\Instance$. Observe that, by construction,~for each~$j\in k$, the set~$N_j$ contains one element~$a_{\ell_j}$ such that~$c(a_{\ell_j})=d(a_{\ell_j})=z_{i_j}$. Then, the set~$S':=\{a_{\ell_1}, a_{\ell_2},\ldots, a_{\ell_k} \}$ is a solution for~$\Instance'$ since
	$$\sum_{j\in [k]} c(a_{\ell_j}) = \sum_{j\in [k]} d(a_{\ell_j}) = \sum_{j\in [k]} z_{i_j}=Q.$$
	
	Conversely, let~$S':=\{a_{\ell_1}, a_{\ell_2},\ldots, a_{\ell_k} \}$ be a solution for~$\Instance'$. By construction, for each~$j\in [k]$, there is a number~$z_{i_j}\in Z$ such that~$c(a_{\ell_j})=z_{i_j}$.  Then, since
	$$Q=\sum_{j\in [k]} c(a_{\ell_j}) = \sum_{j\in [k]}  z_{i_j},$$
	$S:=\{z_{i_j}\mid j\in [k] \}$ is a solution for~$\Instance$.    
\end{proof}

\section{Penalty Sum}
\label{sec:PenSum}
In this section, we examine \PS, a problem that has been introduced in~\cite{GNAP}.
In the next chapter we will present a hardness reduction which builds on results from this chapter.
\PS is defined as follows.

\problemdef{\PS}
{A set of tuples $\{t_i = (a_i,b_i) \mid i \in [m], a_i \in \mathbb{Q}_+\cup \{0\}$, $b_i \in \mathbb(0,1)\}$,
	two integers~$k$ and~$Q$, and
	a number $D \in \mathbb{Q}_+$}
{Is there a set $S\subseteq [m]$
	such that $|S| = k$ and
	$\sum_{i\in S}a_i - Q\cdot \prod_{i\in S}b_i\geq D$}

We first show that \PS can be solved in polynomial running time if the numbers in the input are given in unary.
Afterward, we show that \PS nevertheless is \NP-hard in general.

\begin{proposition}
	\label{prop:Pre-PS-pseudo}
	\PS can be solved in~$\Oh((Q + \lceil D \rceil) \cdot m \cdot k)$~time.
\end{proposition}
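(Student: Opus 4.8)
The plan is to solve the optimization version by dynamic programming and then compare the optimum with $D$. Write $A(S) := \sum_{i\in S} a_i$ and $P(S) := \prod_{i\in S} b_i$, so that the task is to decide whether some $S$ with $|S|=k$ satisfies $A(S) - Q\cdot P(S) \ge D$. Two observations drive the algorithm: (i) for a fixed size $k$ and a fixed value of the sum $A(S)=s$, the objective $s - Q\cdot P(S)$ is maximized exactly when $P(S)$ is minimized, since $Q\ge 0$ and $b_i\in(0,1)$; and (ii) if $A(S)\ge \lceil D\rceil + Q =: C$, then $S$ is automatically a solution, because $P(S)\le 1$ and $Q\ge 0$ give $A(S)-Q\cdot P(S)\ge A(S)-Q\ge C-Q=\lceil D\rceil\ge D$. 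Observation (ii) lets me cap the accumulated sum at $C$, bounding the range of that DP coordinate by $C+1\in\Oh(Q+\lceil D\rceil)$. As is standard for a pseudo-polynomial statement with numbers given in unary, I treat the $a_i$ and $Q$ as non-negative integers so that $A(S)$ is integral; the threshold $D$ may remain rational and is used only through $\lceil D\rceil$ and in the final comparison.

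Concretely, I would process the items $t_1,\dots,t_m$ one at a time and maintain a table $\DP[i,j,s]$ storing, for the first $i$ items, the minimum value of $P(S)$ over all $S\subseteq\{t_1,\dots,t_i\}$ with $|S|=j$ and capped sum $\min(A(S),C)=s$, where $i\in[m]_0$, $j\in[k]_0$, $s\in[C]_0$ (and $\DP=+\infty$ when no such $S$ exists). The base case is $\DP[0,0,0]=1$ and $+\infty$ otherwise. I would realize the transition in ``forward'' (push) style: each reachable state $(i-1,j,s)$ with stored product $p$ contributes to $(i,j,s)$ with product $p$ (not taking $t_i$) and to $(i,j+1,\min(s+a_i,C))$ with product $b_i\cdot p$ (taking $t_i$), keeping the pointwise minimum. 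Since each source state pushes to only two targets in $O(1)$ time, filling the table costs $\Oh(m\cdot k\cdot C)=\Oh((Q+\lceil D\rceil)\cdot m\cdot k)$ time. Finally I would answer \yes iff either $\DP[m,k,C]<+\infty$ (an overflow set, automatically a solution by (ii)) or $s - Q\cdot\DP[m,k,s]\ge D$ for some $s\in[C-1]_0$, and \no otherwise; the case $k=0$ is settled directly since then the objective is $-Q<D$.

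For correctness I would argue by induction on $i$ that $\DP[i,j,s]$ equals the claimed minimum product, the transition being precisely the case distinction on whether $t_i\in S$. Optimality of the final check then follows from observation (i): among all size-$k$ sets with capped sum $s<C$ (which all have true sum exactly $s$, as no capping occurred), the one attaining $\DP[m,k,s]$ maximizes the objective, so if it fails the threshold no size-$k$ set of that sum can meet it. The main obstacle — the one point requiring care — is the treatment of the overflow bucket: sets whose genuine sum exceeds $C$ all collapse into $s=C$ and must \emph{not} be compared through the under-counted stored value $C$, but instead recognized as solutions outright; separating the reachability of bucket $C$ from the true-objective comparison on the buckets $s<C$ resolves this cleanly. (If one insists on rational $a_i$, the same scheme applies after clearing denominators, at the price of the corresponding blow-up in the sum range.)
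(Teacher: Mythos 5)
Your proposal is correct and takes essentially the same route as the paper: a dynamic program indexed by item prefix, cardinality, and accumulated sum $A$, storing the minimum product $\prod_{j\in S} b_j$ and deciding \yes{} via the final check $A - Q\cdot \DP[m,k,A] \ge D$, with the same $\Oh((Q+\lceil D\rceil)\cdot m\cdot k)$ table size. Your explicit overflow bucket at $s=C=\lceil D\rceil+Q$ is in fact a careful repair of a point the paper's proof passes over silently --- its table only covers $A\in[Q+\lceil D\rceil]$ and its converse direction tacitly assumes a solution's sum lies in that range, whereas, exactly as your observation (ii) shows, any size-$k$ set with larger sum is automatically a solution and should be recognized outright rather than read off an under-counted entry.
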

\begin{proof}
	Let~$\Instance = (T, k, Q, D)$ be an instance of \PS, where~$T$ is a set of tuples~$t_i = (a_i,b_i)$ of size~$m$.
	We say that a set~$S\subseteq [m]$ is~\emph{$(c,A)$-feasible} for integers~$c$ and~$A$ if~$S$ has a size of~$c$ and~$\sum_{j\in S} a_j = A$.
	
	\proofpara{Table definition}
	We describe a dynamic programming algorithm with a table~$\DP$.
	We want that entry~$\DP[i,c,A]$ stores the smallest value~$\prod_{j\in S} b_j$ for a~$(c,A)$-feasible set~$S \subseteq [i]$ for each~$i\in [m]$, each~$c\in [k]$, and each~$A \in [Q + \lceil D \rceil]$.
	We want that~$\DP[i,c,A]$ stores~$\infty$ if no such set~$S$ exists.
	
	\proofpara{Algorithm}
	As a base case, for any~$i\in [m]$, store~$\DP[i,0,0] = 1$ and for~$c,A>0$ store~$\DP[i,c,0] = \infty$ and~$\DP[i,0,A] = \infty$.
	Further, store~$\DP[1,c,A] = \infty$ if~$c>1$ or~$A \not\in \{0,a_1\}$.
	We consider $\DP[i,c,A] = \infty$ when we call invalid values such as~$A<0$.
	
	We compute the table for increasing values of~$i$.
	For a fixed~$i$, let the values~$\DP[i,c,A]$ be computed for each~$c$ and~$A$.
	To compute further values we use the recurrence
	\begin{eqnarray}
		\label{eqn:Pre-PS}
		\DP[i+1,c,A] &=& \min \{ \; \DP[i,c,A] ; \; \DP[i,c-1,A-a_{i+1}] \cdot b_{i+1} \; \}.
	\end{eqnarray}
	
	If there is an~$A \in [Q + \lceil D \rceil]$ such that~$\DP[m,k,A] = B$ and~$A - BQ \ge D$
	return~\yes.
	Otherwise, return \no.

	\proofpara{Correctness}
	The base cases are correct.
	We show first that \Recc{eqn:Pre-PS} is correct.
	Afterward, we prove that this algorithm returns the right value.

	As an induction hypothesis, assume that~$\DP[i,c,A]$ stores the desired value for a fixed~$i \in [m]$.
	We show first that if~$\DP[i+1,c,A] = B$, then there is a~$(c,A)$-feasible set~$S \subseteq [i+1]$.
	Then, we show that~$\DP[i+1,c,A] \le \prod_{j\in S} b_j$ for every~$(c,A)$-feasible set~$S\subseteq [i+1]$.

	If~$\DP[i+1,c,A] = B$, then by \Recc{eqn:Pre-PS} $\DP[i,c-1,A-a_{i+1}] \cdot b_{i+1} = B$ or~$\DP[i,c,A] = B$.
	In the latter case, the induction hypothesis directly proves that there is a~$(c,A)$-feasible set~$S\subseteq [i]$.
	In the first case, a~$(c-1,A-a_{i+1})$-feasible set~$S\subseteq [i]$ exists.
	Consequently,~$a_{i+1} + \sum_{j\in S} a_j = A$ such that~$S \cup \{i+1\}$ is~$(c,A)$-feasible.

	Now, let~$S\subseteq [i+1]$ be a~$(c,A)$-feasible set.
	If~$i+1 \not\in S$, then we conclude that~$\DP[i+1,c,A] \le \DP[i,c,A] \le \prod_{j\in S} b_j$, where the first inequality follows from \Recc{eqn:Pre-PS} and the second by the induction hypothesis.
	Otherwise, if~$i+1 \in S$, then~$S' := S\setminus \{i+1\}$ is a~$(c-1,A-a_{i+1})$-feasible set and with the arguments from before~$\DP[i+1,c,A] \le \DP[i,c-1,A-a_{i+1}] \cdot b_{i+1} \le b_{i+1} \cdot \prod_{j\in S'} b_j = \prod_{j\in S} b_j$.

	Finally, assume that there is an~$A \in [Q + \lceil D \rceil]$ such that~$\DP[m,k,A] = B$ and~$A - BQ \ge D$.
	Consequently, a~$(k,A)$-feasible set~$S \subseteq [m]$ with~$\prod_{j\in S} b_j = B$ exists
	and so~$S$ is a solution for~\Instance.
	Conversely, let~$S$ be a solution for~\Instance.
	We define~$A := \sum_{j\in S} a_j$, then~$S$ is a~$(c,A)$-feasible set.
	So,~$B = \DP[m,k,A] \le \prod_{j\in S} b_j$ and so~$A - BQ \ge A - Q \prod_{j\in S} b_j \ge D$.

	\proofpara{Running time}
	The table contains~$m \cdot k \cdot (Q + \lceil D \rceil)$ entries.
	Each of these stores a number between~0 and~1 (if not~$\infty$), that is a multiplication of at most~$k$ numbers~$b_j$.
	Consequently, the encoding length of each table entry is at most~$k$ times the longest encoding length of a~$b_j$.
	Then, \Recc{eqn:Pre-PS} can be computed in constant time in our RAM-model.
	We want to declare, however, that the table entries may store numbers with an encoding length of up to~$|k| \cdot \max_b$, where~$\max_b$ is the maximum encoding length of a number~$b$.
\end{proof}

\subsection{Hardness of Subset Product}
\label{sec:reduction-PenSum}
Now, we prove the \NP-hardness of \PS.
To this end, we first show the \NP-hardness of the following variant of~\SubProd.

\problemdef{\SubProd}
{A multiset of positive integers $\{v_1,v_2,\dots, v_m\}$ and integers~$M$ and~$k$}
{Is there a set $S\subseteq [m]$
	such that $|S| = k$ and
	$\prod_{i \in S}v_i = M$}

We note the definition of \SubProd is slightly different here from the formulation that appears for example in the book of Garey and Johnson \cite{GareyJohnson79}.
In particular, we assume that the size $k$ of the set~$S$ is given and that all integers are positive.
This makes the subsequent \NP-hardness reductions slightly simpler.

The \NP-hardness of \SubProd is not a new result. It was stated by~\cite{GareyJohnson79} without a full proof (the authors indicate that the problem is \NP-hard by reduction from \XTClong (\XTC), citing "Yao, private communication") and a full proof appears in the book of Moret~\cite{Moret97}.
We reprove it here for our slightly adapted variant.

In \XTC, the input is a universe~$\mathcal{U}$ and a family~$\mathcal{C}$ of subsets of~$\mathcal{U}$ which have a size of three each.
It is asked whether there is a subset~$\mathcal{C}'$ of~$\mathcal{C}$ such that~$\mathcal{U} = \bigcup_{C \in \mathcal{C}'} C$ and the sets in~$\mathcal{C}'$ are pairwise disjoint.
In other words, for each item~$u$ of~$\mathcal{U}$ there is exactly one set in~$\mathcal{C}'$ containing~$u$.

\begin{lemma}[\cite{Moret97}]
	\label{lem:Pre-XTCtoSubProb}
	\SubProd is \NP-hard.
\end{lemma}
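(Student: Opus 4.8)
The plan is to give a polynomial-time reduction from \XTC. Let $(\mathcal{U}, \mathcal{C})$ be an instance of \XTC with $\mathcal{U} = \{u_1, \dots, u_n\}$. If $n$ is not divisible by $3$, then no exact cover can exist and I would simply output a fixed \no-instance of \SubProd; so assume $n = 3k$. The idea is to encode set membership through prime factorizations. First I would compute the first $n$ primes $p_1 < p_2 < \dots < p_n$ and assign the prime $p_j$ to the element $u_j$. For each set $C = \{u_a, u_b, u_c\} \in \mathcal{C}$ I define the positive integer $v_C := p_a \cdot p_b \cdot p_c$, and I let the multiset of the \SubProd instance be $\{v_C \mid C \in \mathcal{C}\}$. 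Finally I set the target to $M := \prod_{j=1}^n p_j$ and the size to $k := n/3$.

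For correctness, the key tool is the fundamental theorem of arithmetic. A subfamily $\mathcal{C}' \subseteq \mathcal{C}$ satisfies $\prod_{C \in \mathcal{C}'} v_C = M$ exactly when every prime $p_j$ occurs with multiplicity one in the combined factorization, that is, when each element $u_j$ lies in exactly one set of $\mathcal{C}'$. Hence $\prod_{C \in \mathcal{C}'} v_C = M$ if and only if $\mathcal{C}'$ is an exact cover of $\mathcal{U}$. In the forward direction, an exact cover consists of pairwise disjoint triples covering all of $\mathcal{U}$, so its product is $\prod_{j=1}^n p_j = M$ and it contains exactly $n/3 = k$ sets. In the backward direction, if $S$ has $|S| = k$ and $\prod_{i \in S} v_i = M$, then no prime may repeat (otherwise some $p_j^2$ would divide the product but not $M$) and every prime must appear (otherwise some $p_j$ dividing $M$ would be absent from the product), so the corresponding triples are disjoint and cover $\mathcal{U}$; the size constraint $|S| = k$ is then automatically consistent since $3|S| = n$.

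The main obstacle to address is that the produced numbers have polynomially bounded encoding length, so that the reduction is genuinely polynomial. For this I would invoke a standard bound on the $n$-th prime, $p_n = \Oh(n \ln n)$, which gives $v_C \le p_n^3 = \Oh((n \ln n)^3)$, i.e. $\Oh(\log n)$ bits per integer, and $\log_2 M = \sum_{j=1}^n \log_2 p_j \le n \log_2 p_n = \Oh(n \log n)$, again polynomial in the input size. Computing the first $n$ primes (for instance by a sieve up to $\Oh(n \ln n)$) takes polynomial time, and the rest of the construction is immediate, so the whole reduction runs in polynomial time. Combined with the equivalence above, this establishes the \NP-hardness of \SubProd in the stated variant.
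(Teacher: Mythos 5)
Your reduction is correct and is essentially the same one the paper gives: assign the $j$-th prime to element $u_j$, encode each triple as the product of its three primes, set the target to $M = \prod_j p_j$ and $k = |\mathcal{U}|/3$, and argue equivalence via unique prime factorization, with polynomial running time secured by standard prime bounds (the paper cites $p_j < j(\ln j + \ln\ln j)$ and a fast prime sieve, matching your $p_n = \Oh(n\ln n)$ sieve argument). Your explicit handling of the case $|\mathcal{U}| \not\equiv 0 \pmod 3$ is a harmless extra detail the paper sidesteps by notation.
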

\begin{proof}
	\proofpara{Reduction}
	Let $(\mathcal{U} := \{u_1,\dots, u_{3n}\},\mathcal{C} := \{C_1,\dots, C_m\})$ be an instance of \XTC.
	Let $p_1,\dots, p_{3n}$ be the first $3n$ prime numbers, so that we may associate each~$u_j \in \mathcal{U}$ with a unique prime number $p_j$.
	For each set $C_i = \{u_a, u_b, u_c\}$, define $v_i := p_a\cdot p_b \cdot p_c$, that is, $v_i$ is the product of the three primes associated with the elements of $C_i$.
	Now, let $M$ be the product of the prime numbers $p_1$ to $p_{3n}$.
	Finally, set $k := n$.
	This completes the construction of the instance $(\{v_1,\dots v_m\}, M, k)$ of \SubProd.
	
	\proofpara{Correctness}
	Now, observe that if $\prod_{i \in S}v_i = M$ for some $S \subseteq [m]$, then by the uniqueness of prime factorization, every prime number $p_1,\dots, p_m$ must appear exactly once across the prime factorizations of all numbers in $\{v_i \mid i\in S\}$. It follows by the construction that the collection of subsets $\mathcal{C}' := \{C_i \mid i \in S\}$ contains each element of~$\mathcal{U}$ exactly once.
	Thus, if $(\{v_1,\dots v_m\}, M, k)$ is a \yes-instance of \SubProd then $(\mathcal{U},\mathcal{C})$ is a \yes-instance of \XTC.
	Conversely, if $(\mathcal{U},\mathcal{C})$ is a \yes-instance of~\XTC with solution $\mathcal{C'}$, then we can define $S:= \{i \in [m] \mid C_i \in \mathcal{C}'\}$. Since every element of $\mathcal{U}$ appears in exactly one $C_i \in \mathcal{C}'$ and $|C_i| = 3$ for all $i \in [m]$, we have that~$|\mathcal{C}'| = |\mathcal{U}|/3 = n = k$, and $\prod_{i \in S}v_i = p_1\cdot\dots\cdot p_{3n} = M$. Thus, $(\{v_1,\dots v_m\}, M, k)$ is a \yes-instance of \SubProd.
	
	It remains to show that the construction of $(\{v_1,\dots v_m\}, M, k)$ from  $(\mathcal{U},\mathcal{C})$ takes polynomial time.
	In particular, we need to show that each of the primes $p_1,\dots p_{3n}$ (and thus the product $M$) can be constructed in polynomial time. This can be shown using two results from number theory: 
	$p_j <  j(\ln j + \ln \ln j)$ for $j\geq 6$, \cite{Rosser41, Dusart99} 
	and the set of all prime numbers in $[Z]$ can be computed in time $\Oh(Z/ \ln \ln Z)$~\cite{AtkinBernstein04}. 
	Combining these, we have that the first $3n$ prime numbers can be generated in time~$\Oh(n \ln n / \ln \ln n)$.
	
	Given the prime numbers $p_1,\dots, p_{3n}$, it is clear that the numbers $\{v_i \mid i\in [m]\}$ can also be computed in polynomial time. The number $M$, being the product of $3n$ numbers each less than $3n(\ln 3n + \ln \ln 3n)$, can also be computed in time polynomial in $n$ (though $M$ itself is not polynomial in $n$).
	It follows that $(\{v_1,\dots v_m\}, M, k)$ can be constructed in polynomial time.
\end{proof}

\subsection{Hardness of Penalty Sum}
In the following, we first describe a simple reduction from an instance of \SubProd to an equivalent `instance' of \PenSum, but one in which the numbers involved are irrational (and as such, cannot be produced in polynomial time). 
We then show how this transformation can be turned into a polynomial-time reduction by replacing the irrational numbers with suitably chosen rationals.

The reduction from \SubProd to \PenSum can be informally described as follows:
For an instance $(\{v_1,\dots v_m\}, M, k')$ of \SubProd and a big integer $A$, we let $a_i$ be (a rational close to) $A - \ln v_i$ and  let $b_i := 1/v_i$, for each~$i \in [n]$. Let $Q := M$, let $k := k'$, and let $D$ be (a rational close to) $ kA - \ln Q - 1$.

Observe that we cannot set $a_i := A - \ln v_i$ or $D := kA - \ln Q - 1$ exactly,
because in general these numbers are irrational and cannot be calculated exactly in finite time, or even stored in finite space.
We temporarily forget about the need for rational numbers, and consider how the function $\sum_{i\in S}a_i - Q\cdot \prod_{i\in S}b_i$ behaves when we drop the `a rational close to' qualifiers from the descriptions above.
In particular, we show that the function reaches its theoretical maximum exactly when $S$ is a solution to the \SubProd instance.

\subsubsection{Reduction with irrational numbers}
\begin{construction}
	\label{cons:irrational reduct}
	Let $(\{v_1,\dots v_m\}, M, k)$ be an instance of \SubProd.
	We define the following (not necessarily rational) numbers.
	\begin{itemize}
		\item Define $A := \lceil\max_{i \in [m]} (\ln v_i)\rceil+1$;
		\item Define $a_i^* := A - \ln v_i$ for each $i\in [m]$;
		\item Define $b_i := 1/v_i$ for each $i \in [m]$;
		\item Define $Q := M$;
		\item Define $D^*: = kA - \ln Q - 1$.
	\end{itemize}
	Finally, output instance $(\{(a^*_i,b_i) \mid i \in [m]\}, k, Q, D^*)$ of \PenSum.
\end{construction}
\medskip

We note the purpose of $A$ is to ensure that $a_i^* > 0$ for each $i \in [m]$,
as is required by the formulation of \PenSum.
Now, define $f^*$ to be a function mapping a subset~$S$ of~$[m]$ to~$\mathbb{R}$ by
\begin{equation}
	f^*(S) := \sum_{i\in S}a^*_i - Q\cdot \prod_{i\in S}b_i.
\end{equation}

\begin{lemma}
	\label{lem:Pre-PenSumIrrationalMax}
	For every set $S \subseteq [m]$ of size~$k$ the following hold.
	\begin{enumerate}
		\item $f^*(S) \leq D^*$, and
		\item $f^*(S) = D^*$ if and only if $\prod_{i\in S} v_i = Q$.
	\end{enumerate}
\end{lemma}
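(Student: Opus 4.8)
The plan is to rewrite $f^*(S)$ as a function of the single quantity $P := \prod_{i\in S} v_i$ and thereby reduce both claims to an elementary one-variable minimization. First I would substitute the definitions from Construction~\ref{cons:irrational reduct}. Using $|S| = k$, the sum collapses to $\sum_{i\in S} a_i^* = \sum_{i\in S}(A - \ln v_i) = kA - \ln P$, and the product term becomes $Q\cdot\prod_{i\in S} b_i = Q/P$, so that
\[
  f^*(S) = kA - \ln P - \frac{Q}{P}.
\]
Recalling $D^* = kA - \ln Q - 1$, the inequality $f^*(S) \le D^*$ is then equivalent to $\ln P + Q/P \ge \ln Q + 1$, and equality holds in the former exactly when it holds in the latter.

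Next I would analyze the auxiliary function $g(P) := \ln P + Q/P$ on the domain $P > 0$. Differentiating gives $g'(P) = 1/P - Q/P^2 = (P-Q)/P^2$, which is negative for $P < Q$ and positive for $P > Q$; hence $g$ attains a \emph{strict} global minimum at $P = Q$, where $g(Q) = \ln Q + 1$. Consequently $g(P) \ge \ln Q + 1$ for every $P > 0$, with equality precisely when $P = Q$. Here $Q = M > 0$, since the $v_i$ are positive integers, so the domain and the minimizer are meaningful.

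Combining these observations, for any size-$k$ set $S$ with $P = \prod_{i\in S} v_i > 0$ I obtain $f^*(S) = kA - g(P) \le kA - (\ln Q + 1) = D^*$, which proves claim~(1). Moreover $f^*(S) = D^*$ holds if and only if $g(P) = \ln Q + 1$, which in turn holds if and only if $P = Q$, that is, if and only if $\prod_{i\in S} v_i = Q$; this proves claim~(2).

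I do not expect a genuine obstacle here: once the substitution $P = \prod_{i\in S} v_i$ turns the sum-minus-product expression into the single-variable form $kA - \ln P - Q/P$, both statements follow from the minimization of $\ln P + Q/P$. The only point requiring mild care is confirming that the minimum at $P = Q$ is \emph{strict}, so that part~(2) is a genuine biconditional rather than a one-sided implication; this is immediate from the sign change of $g'$ across $P = Q$.
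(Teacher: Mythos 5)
Your proposal is correct and follows essentially the same route as the paper's proof: both reduce $f^*(S)$ to a single-variable function of $P = \prod_{i\in S} v_i$ and use the sign of the derivative to locate a strict extremum at $P = Q$. The only cosmetic difference is that you minimize $g(P) = \ln P + Q/P$ while the paper maximizes $g^*(x) = kA - \ln x - Qx^{-1}$, which is the same computation up to negation and an additive constant.
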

\begin{proof}
	Observe that given~$S$ with $|S| = k$, the value $f^*(S)$ can be written as 
	\begin{equation*}
		f^*(S) = kA - \sum_{i \in S}\ln v_i - Q/\prod_{i\in S}v_i
		= kA - \ln \left(\prod_{i\in S}v_i\right) - Q/\prod_{i\in S}v_i\
	\end{equation*}
	Letting $x_S := \prod_{i\in S}v_i$, we therefore have $f^*(S) = kA - \ln x_S - Qx_S^{-1}$.
	Define a function $g^*:\mathbb{R}_{>0} \to \mathbb{R}$ by $g^*(x) := kA - \ln x - Qx^{-1}$
	and observe $f^*(S) = g^*(x_S)$ for any $S \in \binom{[m]}{k}$. 
	Recall that a function $f$ has a critical point at $x'$ if $\frac{\text{d} f}{\text{d} x}(x') = 0$.
	Since~$\frac{\text{d} g^*}{\text{d} x} = -x^{-1} + Qx^{-2}$, there are critical points~$x'$ in ${x'}^{-1} = Q{x'}^{-2}$, which is the case in~$x' = Q$.
	Moreover, for $Q > x > 0$, we have $Qx^{-1} > 1$, implying
	\begin{equation*}
		\frac{\text{d} g^*}{\text{d} x} = -x^{-1} + Qx^{-2}  >  -x^{-1} + x^{-1} = 0.
	\end{equation*}
	Further, for $x > Q > 0$, we have $Qx^{-1} < 1$, implying
	\begin{equation*}
		\frac{\text{d} g^*}{\text{d} x} = -x^{-1} + Qx^{-2}  <  -x^{-1} + x^{-1} = 0.
	\end{equation*}
	It follows that $g^*(x)$ is strictly increasing in the range $0 < x < Q$ and
	strictly decreasing in the range~$x > Q$.
	Thus, $g^*$ has a unique maximum in the range~$x > 0$, which is in~$x = Q$.
	In particular, for all $S \subseteq [m]$, we have	
	\begin{equation}
		f^*(S) = g^*(x_S) \leq g^*(Q) = kA - \ln Q - 1 = D^*.
	\end{equation}
	With equality if and only if $x_S = \prod_{i\in S}v_i = Q$.		
\end{proof}

The above result implies that $(\{(a_i^*,b_i) \mid i \in [m]\}, k, Q, D^*)$ is a \yes-instance of `\PenSum' if and only if  $(\{v_1,\dots v_m\}, M, k')$ is a \yes-instance of \SubProd, when allowing irrational numbers.

We are now ready to fully describe the polynomial-time reduction from \SubProd to \PenSum, showing how we can adapt the ideas above to work for rational $a_i$ and $D$.

\subsubsection{Reduction with rational numbers}
Let  $(\{v_1,\dots v_m\}, M, k')$ be an instance of \SubProd.
So far we have seen that if we define~$a_i^*$,~$b_i$,~$Q$,~$k$, and~$D^*$ as beforehand,
then,
by Lemma~\ref{lem:Pre-PenSumIrrationalMax},
for any~$S \subseteq [m]$,
$f^*(S) = \sum_{i\in S} a^*_i - Q\cdot \prod_{i\in S} b_i \geq D^*$ if and only if $\prod_{i \in S} v_i = Q = M$.

Our task now is to show how to replace $a_i^*$ and $D^*$ with rationals $a_i$ and $D$, in such a way that this property---$\sum_{i\in S} a_i - Q\cdot \prod_{i\in S} b_i \geq D$ if and only if $\prod_{i \in S} v_i = M$---still holds, so that an instance of \PS can be constructed in polynomial time.
The key idea is to find rational numbers that can be encoded in polynomially many bits, but are close enough to their respective irrationals such that the difference between $f^*(S)$ and $f(S)$ (and between $D^*$ and $D$) is guaranteed to be small.
To this end, let us fix a positive integer $H$ to be defined later, and we will require all numbers~$a_i$,~$b_i$, and~$D$ to be a multiple of $2^{-H}$. This ensures that the denominator part of any of these rationals can be encoded using $\Oh(H)$ bits.

Given a number $x \in \mathbb{R}$ and a positive integer~$H$,
we define $\floorH{x} : = r_x/2^H$, where~$r_x$ is the largest integer such that~$r_x/2^H \leq x$.
Observe for example that because~$25/2^3 < \pi < 26/2^3$ we have~$r_\pi = 25$ and~$\floorvar{3}{\pi} = 3.125 = 25/2^3$.
One may think of~$\floorH{x}$ as the number derived from the binary representation of $x$ by deleting all digits more than $H$ positions after the binary point. Thus, as the binary expansion of $\pi$ begins~\texttt{11.00100 10000 11111}$\dots$, the binary expression of $\floorvar{3}{\pi}$ is~\texttt{11.001}.
Similarly, define~$\ceilH{x}: = s_x/2^H$, where $s_x$ is the smallest integer such that $x \leq s_x/2^H$.
Finally, define~$\delta:= 1/{2^H}$.

\begin{observation}
	\label{obs:Pre-roundBounds}
	Let $x \in \mathbb{R}$. Then,
	$x- \delta   <   \floorH{x}  \leq x  \leq \ceilH{x} < x + \delta$.
\end{observation}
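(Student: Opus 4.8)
The plan is to prove the four inequalities separately, noting that they split into two mirror-image pairs: the two bounds involving $\floorH{x}$ and the two involving $\ceilH{x}$. All four follow directly from the defining extremality of the integers $r_x$ and $s_x$, so I expect no genuine obstacle here --- the only point requiring a moment's care is extracting the \emph{strict} outer inequalities from the maximality/minimality clauses in the definitions, rather than merely the weak ones.

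For the floor side, the inner inequality $\floorH{x} \leq x$ is immediate: by definition $\floorH{x} = r_x/2^H$ with $r_x$ chosen so that $r_x/2^H \leq x$. For the strict outer bound $x - \delta < \floorH{x}$, I would use that $r_x$ is the \emph{largest} integer with $r_x/2^H \leq x$; hence the next integer violates the condition, $(r_x+1)/2^H > x$, and since $(r_x+1)/2^H = \floorH{x} + \delta$, rearranging gives $x < \floorH{x} + \delta$, i.e. $x - \delta < \floorH{x}$.

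The ceiling side is entirely symmetric. The inner inequality $x \leq \ceilH{x}$ holds by definition, since $s_x$ is chosen so that $x \leq s_x/2^H = \ceilH{x}$. For $\ceilH{x} < x + \delta$, I would invoke minimality of $s_x$: the previous integer $s_x - 1$ fails the defining condition, so $(s_x - 1)/2^H < x$; as $(s_x-1)/2^H = \ceilH{x} - \delta$, this yields $\ceilH{x} - \delta < x$, i.e. $\ceilH{x} < x + \delta$. Chaining the four bounds in the order $x - \delta < \floorH{x} \leq x \leq \ceilH{x} < x + \delta$ then gives the stated observation, with no step more involved than a single rearrangement of a defining inequality.
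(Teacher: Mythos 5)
Your proof is correct and is exactly the argument the paper has in mind: the paper states this as an observation without proof, treating it as immediate from the extremality in the definitions of $r_x$ and $s_x$, and your derivation of the strict outer bounds from $(r_x+1)/2^H > x$ and $(s_x-1)/2^H < x$ is the standard way to make that explicit. Nothing further is needed.
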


We can now describe the reduction from \SubProd to \PenSum.

\begin{construction}
	\label{cons:rational reduct}
	Let $(\{v_1,\dots v_m\}, M, k)$ be an instance of \SubProd.
	\begin{itemize}
		\item Define $A := \lceil\max_{i \in [m]} (\ln v_i)\rceil+1$;
		\item Define $a_i := \ceilH{a_i^*} = \ceilH{A - \ln v_i}$ for each $i\in [m]$;
		\item Define $b_i := 1/v_i$ for each $i \in [m]$;
		\item Define $Q := M$;
		\item Define $D := \floorH{D^*}  = \floorH{kA - \ln Q - 1}$.
	\end{itemize}
	Finally, output instance $\Instance:=(\{(a_i,b_i) \mid i \in [m]\}, k, Q, D)$ of \PenSum.
\end{construction}
\medskip

In the following, we show that the two instances are equivalent.
Now, define $f$ to be a function mapping a subset~$S$ of~$[m]$ to~$\mathbb{R}$ by
\begin{align*}
	f(S) := \sum_{i\in S}a_i - Q\cdot \prod_{i\in S}b_i.
\end{align*}

Note that $f$ is the same as the function $f^*$ defined previously but with each $a_i^*$ replaced by~$a_i$.
Then, $\Instance$ is a \yes-instance of \PenSum
if and only if
there is some $S \subseteq [m]$ such that $f(S) \geq D$.
The next lemma shows the close relation between~$f^*$ and~$f$, and between $D^*$ and $D$.
This will be used in both directions to show the equivalence between \yes-instances of \SubProd and \PenSum.

\begin{lemma}
	\label{lem:Pre-fDbounds}
	$f^*(S) \leq f(S) < f^*(S) + k\delta$
	for each $S \subseteq [m]$;
	and
	$D^* - \delta < D \leq D^*$.
\end{lemma}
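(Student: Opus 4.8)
The plan is to treat the two claims separately, since both reduce to the elementary rounding bounds recorded in Observation~\ref{obs:Pre-roundBounds}. The key structural observation is that $f$ and $f^*$ agree exactly on the penalty term $-Q\cdot\prod_{i\in S} b_i$: the values $b_i = 1/v_i$ and $Q = M$ are defined identically in Construction~\ref{cons:irrational reduct} and Construction~\ref{cons:rational reduct}, so the two functions differ only in their additive $a_i$-terms. This means the whole lemma is bookkeeping of the rounding error incurred by replacing each $a_i^*$ with $a_i$ and $D^*$ with $D$.

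For the bound on $f$, I would first use the cancellation of the penalty terms to write
\begin{equation*}
	f(S) - f^*(S) = \sum_{i\in S} a_i - \sum_{i\in S} a_i^* = \sum_{i\in S} \left(a_i - a_i^*\right).
\end{equation*}
Since $a_i = \ceilH{a_i^*}$, applying Observation~\ref{obs:Pre-roundBounds} with $x = a_i^*$ gives $a_i^* \le a_i < a_i^* + \delta$, hence $0 \le a_i - a_i^* < \delta$ for every $i$. Summing these $|S|$ inequalities yields $0 \le f(S) - f^*(S) < |S|\cdot\delta$. The lower bound $f^*(S)\le f(S)$ therefore holds for every $S\subseteq[m]$, and for the sets relevant to \PenSum, namely those of size $k$, the upper bound becomes $f(S) < f^*(S) + k\delta$.

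For the bound on $D$, I would apply Observation~\ref{obs:Pre-roundBounds} directly with $x = D^*$. Since $D = \floorH{D^*}$, the observation immediately yields $D^* - \delta < \floorH{D^*} \le D^*$, that is, $D^* - \delta < D \le D^*$, as claimed.

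I do not expect a genuine obstacle here, since this is routine control of rounding error via the already-established Observation~\ref{obs:Pre-roundBounds}. The only point requiring care is that the upper bound $f(S) < f^*(S) + k\delta$ relies on summing exactly $k$ error terms, each strictly less than $\delta$, so the statement is to be read for sets $S$ of size $k$; for smaller sets the same argument gives the even tighter bound $|S|\cdot\delta$, and nothing is lost.
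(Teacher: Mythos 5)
Your proof is correct and takes essentially the same route as the paper's: both reduce everything to Observation~\ref{obs:Pre-roundBounds}, using the cancellation of the penalty term to write $f(S)-f^*(S)=\sum_{i\in S}(a_i-a_i^*)$ with each summand in $[0,\delta)$, and applying the floor bound directly to $D^*$. Your closing caveat that the $k\delta$ bound presupposes $|S|\le k$ also matches the paper, whose proof explicitly notes $|S|=k$ despite the lemma's ``for each $S\subseteq[m]$'' phrasing.
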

\begin{proof}
	By Observation~\ref{obs:Pre-roundBounds},~$\ceilH{x} - x < \delta$ and~$x - \floorH{x} < \delta$ for each~$x\in \mathbb{R}$.
	
	Observe that $f(S) - f^*(S) = \sum_{i \in S}(a_i - a_i^*)$ and $|S|=k$.
	We conclude\lb that~$0 \leq a_i - a_i^* = \ceilH{a_i^*} - a_i^* < \delta$ for all $i \in [m]$.
	Consequently, $0 \leq f(S) - f^*(S) < k \delta$ and~$D^* - D = D^* - \floorH{D^*} < \delta$.
\end{proof}

\begin{corollary}\label{cor:Pre-SubProdyesToPenSumyes}
	For every $S \subseteq [m]$ with $\prod_{i \in S}v_i = Q$
	it follows $f(S) \geq D$.
\end{corollary}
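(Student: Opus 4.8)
The plan is to chain together the two preceding lemmas, treating the irrational instance from Construction~\ref{cons:irrational reduct} as an intermediary between the combinatorial condition $\prod_{i\in S}v_i = Q$ and the rational quantities $f$ and $D$ that actually define the \PenSum{} instance. Since the hypothesis $\prod_{i\in S}v_i = Q$ is precisely the condition appearing in the second part of Lemma~\ref{lem:Pre-PenSumIrrationalMax}, I would first use that lemma to pin down the value of $f^*(S)$ exactly, and only afterwards pass to $f$ and $D$.

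Concretely, I would restrict attention to sets $S$ of size $k$, which is the only relevant case for the reduction (a solution to \SubProd{} has exactly $k$ elements, and more generally this is the regime in which Lemma~\ref{lem:Pre-PenSumIrrationalMax} is stated). For such an $S$, the equality case of Lemma~\ref{lem:Pre-PenSumIrrationalMax} gives $f^*(S) = D^*$, since $\prod_{i\in S}v_i = Q$ holds by assumption.

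The next step is to transfer this equality from the irrational setting to the rational one using Lemma~\ref{lem:Pre-fDbounds}. That lemma supplies the two one-sided bounds $f^*(S) \le f(S)$ and $D \le D^*$. Combining these with the equality above produces the chain
\begin{equation*}
	f(S) \;\ge\; f^*(S) \;=\; D^* \;\ge\; D,
\end{equation*}
which is exactly the desired inequality $f(S) \ge D$.

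I do not expect any genuine obstacle, as all the analytic work was already carried out in Lemmas~\ref{lem:Pre-PenSumIrrationalMax} and~\ref{lem:Pre-fDbounds}; the corollary is a one-line bookkeeping consequence. The single point requiring care is that Lemma~\ref{lem:Pre-PenSumIrrationalMax} applies to sets of size $k$, so I would make the size restriction explicit rather than leaving it implicit in the statement, noting that this is precisely the case used when deducing a \yes-instance of \PenSum{} from a \yes-instance of \SubProd.
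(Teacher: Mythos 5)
Your proof is correct and is exactly the paper's argument: the one-line chain $D \leq D^* = f^*(S) \leq f(S)$, obtained from the two bounds in Lemma~\ref{lem:Pre-fDbounds} together with the equality case of Lemma~\ref{lem:Pre-PenSumIrrationalMax}~(2). Your remark about making the restriction to $|S| = k$ explicit is well taken, since Lemma~\ref{lem:Pre-PenSumIrrationalMax} is stated only for sets of size $k$ (the evaluation $f^*(S) = kA - \ln x_S - Qx_S^{-1}$ uses $|S| = k$), and that is precisely the case invoked in the reduction.
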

\begin{proof}
	$D \stackrel{\text{Lem.~\ref{lem:Pre-fDbounds}}}{\leq} D^* \stackrel{\text{Lem.~\ref{lem:Pre-PenSumIrrationalMax}~(2)}}{=} f^*(S)
	\stackrel{\text{Lem.~\ref{lem:Pre-fDbounds}}}{\leq} f(S)$.
\end{proof}

We now have that $(\{v_1,\dots v_m\}, M, k')$ being a \yes-instance of \SubProd implies~$\Instance$ being a \yes-instance of \PenSum.
To show the converse, we show that if~$\prod_{i \in S}v_i = Q' \neq Q$ for every~$S \subseteq [m]$, then~$f(S) < D$.
Since $f(S) < f^*(S) + k\delta$ and $D^* - \delta < D$, it is sufficient to show that $f^*(S) + k\delta \leq D^* - \delta$.
This is equivalent to~$(k+1)\delta \leq D^* - f^*(S)$.
To this end, we first establish a lower bound on $D^* - f^*(S')$ in terms of $Q$, using the following technical lemma.

\begin{lemma}
	\label{lem:Pre-logDifferenceBound}
	For every pair of positive integers~$Q$ and~$Q'$ with $Q \geq 2$ and $Q\neq Q'$
	it follows that $\ln Q' - \ln Q + Q/Q' - 1 > Q^{-4}$.
\end{lemma}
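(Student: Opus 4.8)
The plan is to regard the left-hand side as a single-variable function and exploit that it is minimized exactly at $Q'=Q$. Define $h(x) := \ln x - \ln Q + Q/x - 1$ for real $x>0$. Then $h'(x) = \frac1x - \frac{Q}{x^2} = \frac{x-Q}{x^2}$, so $h$ is strictly decreasing on $(0,Q)$ and strictly increasing on $(Q,\infty)$, with unique global minimum $h(Q)=0$. Since $Q+1$ is the smallest integer exceeding $Q$ and $Q-1\ge 1$ is the largest integer below $Q$, for every integer $Q'\neq Q$ we have $h(Q')\ge \min\{h(Q-1),\,h(Q+1)\}$. Hence it suffices to prove the two inequalities $h(Q-1)>Q^{-4}$ and $h(Q+1)>Q^{-4}$ for all integers $Q\ge 2$.

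For the left neighbour I would substitute $u:=1/Q\in(0,\tfrac12]$ and write $h(Q-1)=\ln(1-u)+\frac{u}{1-u}=:\chi(u)$. A short computation gives $\chi'(u)=\frac{u}{(1-u)^2}\ge 0$ and $\chi(0)=0$, so $\chi(u)=\int_0^u \frac{s}{(1-s)^2}\,ds \ge \int_0^u s\,ds = \frac{u^2}{2}$, using $(1-s)^2\le 1$ on $[0,\tfrac12]$. Since $u\le\tfrac12$ forces $\frac{u^2}{2}>u^4$, this yields $h(Q-1)>Q^{-4}$ with room to spare.

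The right neighbour is the genuinely tight case. Writing $t:=1/Q\in(0,\tfrac12]$ gives $h(Q+1)=\ln(1+t)-\frac{t}{1+t}=:\psi(t)$ with $\psi'(t)=\frac{t}{(1+t)^2}$ and $\psi(0)=0$, so the goal becomes $\psi(t)>t^4$. For $Q\ge 3$, i.e. $t\le\tfrac13$, the bound $(1+s)^2\le\tfrac{16}{9}$ on $[0,t]$ gives $\psi(t)=\int_0^t \frac{s}{(1+s)^2}\,ds \ge \frac{9}{16}\cdot\frac{t^2}{2}=\frac{9t^2}{32}>t^4$, since $t^2\le\tfrac19<\tfrac{9}{32}$. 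The only remaining case is $Q=2$, where $\psi(\tfrac12)=\ln\tfrac32-\tfrac13$ must be compared with $2^{-4}=\tfrac1{16}$; here the slack is tiny, so I would pin down $\ln\tfrac32$ by a rational lower bound from the alternating series $\ln(1+\tfrac12)>\tfrac12-\tfrac18+\tfrac1{24}-\tfrac1{64}=\tfrac{77}{192}$ (the next term being positive and the magnitudes decreasing). This gives $\psi(\tfrac12)>\tfrac{77}{192}-\tfrac{64}{192}=\tfrac{13}{192}>\tfrac{12}{192}=\tfrac1{16}$, completing the case.

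The main obstacle is precisely this near-equality at $(Q,Q')=(2,3)$: because $Q^{-4}$ is only marginally below $h(Q+1)$ there, any crude global estimate of $\psi$ fails, which forces the case distinction at $t=\tfrac13$ and a sufficiently sharp rational approximation of $\ln\tfrac32$. For all larger $Q$ the inequality carries ample slack and the elementary integral bounds on $\chi$ and $\psi$ suffice.
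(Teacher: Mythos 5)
Your proof is correct, and its skeleton coincides with the paper's: both analyze the one-variable function $h(x)=\ln x-\ln Q+Q/x-1$, locate its unique minimum at $x=Q$ via $h'(x)=(x-Q)/x^2$, reduce the claim to the two neighbours $Q'=Q\pm 1$, and then split the right-neighbour case at $Q=2$ versus $Q\ge 3$. The differences lie in how the two neighbour inequalities are certified, and here your estimates are a genuine (and in places cleaner) variant. For $Q'=Q-1$ the paper treats $\lambda(Q)=h_Q(Q-1)-Q^{-4}$ as a function of $Q$, checks $\lambda'>0$, and anchors at $\lambda(2)>0$; you instead substitute $u=1/Q$ and use the integral representation $\chi(u)=\int_0^u s(1-s)^{-2}\,ds\ge u^2/2$, which gives a uniform quadratic lower bound with no monotonicity-in-$Q$ argument needed. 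For $Q'=Q+1$ with $Q\ge 3$ the paper truncates the Mercator series at two terms and simplifies $\frac{2Q-1}{2Q^2}-\frac{1}{Q+1}$ down to $\frac{Q-1}{2Q^2(Q+1)}>Q^{-4}$; your bound $\psi(t)=\int_0^t s(1+s)^{-2}\,ds\ge \frac{9}{32}t^2$ for $t\le\tfrac13$ reaches the same conclusion by a slightly more direct route. The most substantive difference is at the tight case $Q=2$: the paper settles both anchor values by decimal approximation ($\ln(3/2)-\tfrac13\approx 0.0721>0.0625$, $\lambda(2)\approx 0.244>0$), whereas you certify $\ln(3/2)>\tfrac{77}{192}$ rigorously via an even partial sum of the alternating series, giving $\psi(\tfrac12)>\tfrac{13}{192}>\tfrac{1}{16}$. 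That makes your treatment of the near-equality at $(Q,Q')=(2,3)$ fully self-contained, which you correctly identify as the only point where the inequality has essentially no slack; I verified all your derivative computations, the direction of the integral comparisons, and the arithmetic $\tfrac12-\tfrac18+\tfrac1{24}-\tfrac1{64}=\tfrac{77}{192}$, and everything checks out.
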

We explicitly note that we use the natural logarithm.
For other logarithms, this lemma is not true.
Consider for example integers $Q=2$ and $Q'=1$ in the logarithm on basis~2.
We have $\log_2(1) - \log_2(2) + 2/1 -1 = 0 - 1 + 2 - 1 = 0 < 2^{-4}$.
\begin{proof}
	We first show that it is enough to consider the cases $Q' = Q+1$ and $Q' = Q-1$.
	Fix an integer $Q \in \mathbb{N}_+$ with $Q \geq 2$.
	Consider the function $h_Q:\mathbb{R}_{>0}\rightarrow \mathbb{R}$ given by
	$$h_Q(x) = \ln x - \ln Q + Q/x - 1.$$
	
	So, we aim to show that $h_Q(Q') \geq Q^{-4}$.
	Similar to the proof of Lemma~\ref{lem:Pre-PenSumIrrationalMax}, we can observe that the derivation
	\begin{eqnarray*}
		\frac{\text{d} h_Q}{\text{d} x} = x^{-1} - Qx^{-2} = \frac{1}{x}\left(1- \frac{Q}{x}\right).
	\end{eqnarray*}
	is less than $0$ if $x < Q$; exactly $0$ if $x = Q$; and greater than $0$ if $x>Q$.
	It follows that in the range $x>0$, the function $h_Q$ has a unique minimum at $x = Q$, and is decreasing in the range $x< Q$ and increasing in the range $x>Q$.
	Thus, in particular~$h_Q(Q') \geq h_Q(Q-1)$ if $Q' \leq Q-1$ and $h_Q(Q') \geq h_Q(Q+1)$ if $Q' \geq Q+1$.
	Since either $Q' \leq Q-1$ or $Q' \geq Q+1$ for any integer $Q'\neq Q$, it remains to show that 
	$h_Q(Q-1) > Q^{-4}$ and  $h_Q(Q+1) > Q^{-4}$.
	
	To show $h_Q(Q-1) > Q^{-4}$ for any $Q \in \mathbb{N}_{\ge 2}$,
	we define the function $\lambda:\mathbb{R}_{>0}\rightarrow\mathbb{R}$ given by 
	\begin{eqnarray*}
		\lambda(Q) & = & h_Q(Q-1) - Q^{-4}\\
		&=& \ln(Q-1) - \ln Q + Q/(Q-1) - 1 - Q^{-4}\\
		&=& \ln(Q-1) - \ln Q + 1/(Q-1) - Q^{-4}.
	\end{eqnarray*}
	We then observe
	\begin{eqnarray*}
		\frac{\text{d} \lambda}{\text{d} Q} &=& (Q-1)^{-1} - Q^{-1} + (Q-1)^{-2} + 4Q^{-5}\\
		&>& (Q-1)^{-2} + 4Q^{-5} > 0.
	\end{eqnarray*}
	
	Therefore, $\lambda$ is a (strictly) increasing function.
	We conclude $\lambda(Q) > 0$, because of~$\lambda(2) = 0 - \ln 2 + 1 - 1/16 \approx 0.244 > 0$, for all $Q\geq 2$, and thus $h_Q(Q-1) > Q^{-4}$.

	To show that $h_Q(Q+1) > Q^{-4}$ for all $Q \in \mathbb{N}_{\geq 2}$,
	observe first that if $Q = 2$, $h_Q(Q+1) = \ln(3) - \ln(2) + 2/3 - 1 \approx 0.0721 > 0.0625 = 2^{-4}$ and so the claim is true.
	For any $Q \geq 3$, observe that $h_Q(Q+1) = \ln(Q+1) - \ln Q + \frac{Q}{Q+1} - 1 = \ln(\frac{Q+1}{Q}) - \frac{1}{Q+1}$.
	We use the Mercator series for the natural logarithm.
	\begin{eqnarray*}
		\ln\left(\frac{Q+1}{Q}\right)
		= \ln\left(1+\frac{1}{Q}\right)  
		= \sum_{k = 1}^\infty \frac{(-1)^{k+1}}{kQ^k}
		= \frac{1}{Q} - \frac{1}{2Q^2} + \frac{1}{3Q^3} - \frac{1}{4Q^4} + \dots
	\end{eqnarray*}
	
	Since $\frac{1}{kQ^k} - \frac{1}{(k+1)Q^{k+1}} >0$ for all $k > 0$, we conclude
	\begin{eqnarray*}
		\ln\left(\frac{Q+1}{Q}\right)
		> \frac{1}{Q} - \frac{1}{2Q^2}
		= \frac{2Q - 1}{2Q^2}.
	\end{eqnarray*}
	
	Then,
	\begin{eqnarray*}
		\ln\left(\frac{Q+1}{Q}\right) - \frac{1}{Q+1}
		& > & \frac{2Q - 1}{2Q^2} - \frac{1}{Q+1} \\
		& = & \frac{(2Q-1)(Q+1) - 2Q^2}{2Q^2(Q+1)} \\
		& = & \frac{2Q^2 + Q - 1 - 2Q^2}{2Q^2(Q+1)} \\
		& = & \frac{Q-1}{2Q^2(Q+1)} \\
		& \geq & \frac{1}{Q^2(Q+1)} \\
		& > & \frac{1}{Q^4}
	\end{eqnarray*}
	where the last two inequalities use $Q \geq 3$.
\end{proof}

\begin{corollary}
	\label{cor:Pre-deltaGap}
	If $\prod_{i \in S}v_i = Q' \neq Q$ for some $Q\ge 2$ and $S \subseteq [m]$,\lb
	then $D^* - f^*(S) > Q^{-4}$.
\end{corollary}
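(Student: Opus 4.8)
The plan is to observe that this corollary is essentially an algebraic restatement of Lemma~\ref{lem:Pre-logDifferenceBound}, so the only real work is to rewrite $D^* - f^*(S)$ in a form to which that lemma applies directly.

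First I would recall from the proof of Lemma~\ref{lem:Pre-PenSumIrrationalMax} that, writing $x_S := \prod_{i\in S} v_i$, the function $f^*$ satisfies
\[
f^*(S) = kA - \ln x_S - Q x_S^{-1}.
\]
Since $D^* = kA - \ln Q - 1$ by Construction~\ref{cons:irrational reduct}, subtracting gives
\[
D^* - f^*(S) = \ln x_S - \ln Q + Q x_S^{-1} - 1.
\]
Under the hypothesis $\prod_{i\in S} v_i = Q'$, this is exactly $\ln Q' - \ln Q + Q/Q' - 1$.

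Next I would verify that Lemma~\ref{lem:Pre-logDifferenceBound} applies. That lemma requires $Q$ and $Q'$ to be positive integers with $Q \ge 2$ and $Q \neq Q'$. We have $Q \ge 2$ and $Q \neq Q'$ by assumption, and $Q' = \prod_{i\in S} v_i$ is a product of the positive integers $v_i$, hence itself a positive integer. The lemma then yields $\ln Q' - \ln Q + Q/Q' - 1 > Q^{-4}$, that is, $D^* - f^*(S) > Q^{-4}$, as claimed.

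Since all the analytic content has already been isolated in Lemma~\ref{lem:Pre-logDifferenceBound}, there is no genuine obstacle here beyond the bookkeeping of the substitution; the only point that must not be overlooked is the integrality of $Q' = \prod_{i\in S} v_i$, which is precisely what licenses the use of that lemma (and, as the remark following Lemma~\ref{lem:Pre-logDifferenceBound} stresses, the use of the \emph{natural} logarithm throughout).
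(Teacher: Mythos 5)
Your proposal is correct and matches the paper's own proof essentially verbatim: both substitute $x_S = Q'$ into $f^*(S) = kA - \ln x_S - Qx_S^{-1}$, subtract from $D^* = kA - \ln Q - 1$ to obtain $D^* - f^*(S) = \ln Q' - \ln Q + Q/Q' - 1$, and invoke Lemma~\ref{lem:Pre-logDifferenceBound}. Your explicit remark that $Q' = \prod_{i\in S} v_i$ is a positive integer (which is what licenses the lemma) is a point the paper leaves implicit, but otherwise the two arguments coincide.
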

\begin{proof}
	Recall $f^*(S) = kA - \ln(\prod_{i \in S}v_i) - Q/(\prod_{i \in S}v_i) = kA - \ln Q' - Q/Q'$ and that~$D^* = kA - \ln Q - 1$.
	Then, $D^* - f^*(S) = \ln Q' - \ln Q + Q/Q' - 1$.
	With Lemma~\ref{lem:Pre-logDifferenceBound} we conclude $D^* - f^*(S) > Q^{-4}$.
\end{proof}

Given the above, we can now fix a suitable value for $H$. Given that we want to conclude~$D^* - f^*(S) \geq (k+1)\delta = \frac{(k+1)}{2^H}$ from $\prod_{i \in S} v_i \neq Q$, and assuming without loss of generality that $k < Q$, it is sufficient to set $H := 5\lceil\log_2 Q\rceil$.

\begin{corollary}
	\label{cor:Pre-deltaGap2}
	If $H = 5 \lceil\log_2 Q\rceil$ and $\delta = (1/2^H)$,
	then the instance of \PS constructed in Construction~\ref{cons:rational reduct} holds $Q^{-4} \geq (k+1)\delta$.
\end{corollary}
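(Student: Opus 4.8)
The plan is to verify $Q^{-4} \geq (k+1)\delta$ by a short chain of arithmetic estimates, using only the definition $H = 5\lceil \log_2 Q\rceil$, the resulting value $\delta = 2^{-H}$, and the standing assumption $k < Q$ recorded just before the statement. This is a purely numerical claim whose role is to fix a concrete $H$ that is large enough; the substantive content already lies in Lemma~\ref{lem:Pre-logDifferenceBound} and Corollary~\ref{cor:Pre-deltaGap}, so I do not anticipate any genuine obstacle here.

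First I would bound $\delta$ from above. Since $\lceil \log_2 Q\rceil \geq \log_2 Q$, we get $2^{5\lceil \log_2 Q\rceil} \geq 2^{5\log_2 Q} = Q^5$, and hence
\[
\delta = \frac{1}{2^{H}} = \frac{1}{2^{5\lceil \log_2 Q\rceil}} \leq \frac{1}{Q^5} = Q^{-5}.
\]
Next I would invoke $k < Q$, which gives $k+1 \leq Q$. Multiplying the two bounds then yields
\[
(k+1)\,\delta \;\leq\; Q \cdot Q^{-5} \;=\; Q^{-4},
\]
which is exactly the asserted inequality.

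The only hypothesis used beyond the definitions is $k < Q$, taken without loss of generality just before the statement; the bound $2^{\lceil \log_2 Q\rceil} \geq Q$ needed in the first step holds for every positive integer $Q$, so no further assumption is required. The point of the estimate is that, combined with Corollary~\ref{cor:Pre-deltaGap}, it ensures $(k+1)\delta \leq Q^{-4} < D^* - f^*(S)$ whenever $\prod_{i\in S} v_i \neq Q$, which is precisely the gap needed to complete the reduction from \SubProd to \PenSum.
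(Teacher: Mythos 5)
Your proof is correct and follows essentially the same route as the paper's: the paper also uses $k < Q$ to get $(k+1)\delta \leq Q/2^{H} \leq Q/Q^{5} = Q^{-4}$, with your bound $2^{5\lceil\log_2 Q\rceil} \geq Q^{5}$ just spelling out the middle inequality. No gaps.
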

\begin{proof}
	We assume $k < Q$. 
	Then, $(k+1)\delta \leq Q/2^H \leq Q/Q^5 = Q^{-4}$.
\end{proof}
We now have all the necessary pieces to reduce from \SubProd to \PenSum and therefore show the intractability of \PenSum.

\begin{theorem}
	\label{thm:Pre-NP-PenSum}
	\PenSum is \NP-hard.
\end{theorem}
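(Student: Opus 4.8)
The plan is to give a polynomial-time reduction from \SubProd, which is \NP-hard by Lemma~\ref{lem:Pre-XTCtoSubProb}, using exactly the mapping of Construction~\ref{cons:rational reduct}. Given an instance $(\{v_1,\dots,v_m\},M,k)$ of \SubProd I would output the \PenSum instance $\Instance = (\{(a_i,b_i)\mid i\in[m]\}, k, Q, D)$ with $Q := M$, $b_i := 1/v_i$, $a_i := \ceilH{A-\ln v_i}$, and $D := \floorH{kA-\ln Q - 1}$, where $A := \lceil\max_i \ln v_i\rceil + 1$ and $H := 5\lceil\log_2 Q\rceil$. Before touching equivalence I would confirm the reduction is polynomial: since $H$ is linear in the bit-length of $Q$, each of $a_i$, $b_i$, and $D$ is a multiple of $2^{-H}$ whose fractional part needs only $\Oh(H)$ bits beyond a polynomially-sized integer part, and the logarithms need only be approximated to $H$ bits, so the whole construction runs in time polynomial in the input.

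For correctness I would argue the two directions separately, assuming $Q\ge 2$ and $k<Q$. These hold automatically for the instances produced by the \XTC-to-\SubProd reduction of Lemma~\ref{lem:Pre-XTCtoSubProb}, where $Q=M$ is a product of $3n$ distinct primes and $k=n$, so that $Q\ge 2^{3n}>n=k$; composing that reduction with the present one therefore suffices to prove \NP-hardness. The forward direction is immediate: if the \SubProd instance is a \yes-instance, choose a size-$k$ set $S$ with $\prod_{i\in S} v_i = M = Q$; then Corollary~\ref{cor:Pre-SubProdyesToPenSumyes} gives $f(S)\ge D$, so $\Instance$ is a \yes-instance of \PenSum.

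For the converse I would argue by contraposition. Assume the \SubProd instance is a \no-instance, so every size-$k$ set $S$ satisfies $\prod_{i\in S} v_i = Q'\neq Q$. Corollary~\ref{cor:Pre-deltaGap} then yields $D^* - f^*(S) > Q^{-4}$, Corollary~\ref{cor:Pre-deltaGap2} yields $Q^{-4}\ge (k+1)\delta$, and Lemma~\ref{lem:Pre-fDbounds} supplies both $f(S) < f^*(S)+k\delta$ and $D>D^*-\delta$. Chaining these gives
\[
	f(S) < f^*(S) + k\delta < \bigl(D^* - (k+1)\delta\bigr) + k\delta = D^* - \delta < D,
\]
so no size-$k$ set reaches the threshold $D$ and $\Instance$ is a \no-instance. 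Together with the forward direction this establishes equivalence, and hence the \NP-hardness of \PenSum.

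The genuine obstacle is the one already isolated in Lemma~\ref{lem:Pre-logDifferenceBound} and its corollaries: one cannot feed the irrational targets $a_i^* = A-\ln v_i$ and $D^* = kA-\ln Q-1$ to \PenSum, so the reduction must round them to rationals, and the argument survives only if the total rounding error (bounded by $k\delta$) stays strictly below the gap $Q^{-4}$ separating the optimal value $D^*$ (attained exactly when $\prod_{i\in S} v_i = Q$, by Lemma~\ref{lem:Pre-PenSumIrrationalMax}) from the best value achievable by any other product. Producing a clean, logarithm-specific lower bound on that gap and then picking $H$ large enough that $(k+1)\delta\le Q^{-4}$ while keeping $H$ polynomial is the crux; everything else is bookkeeping that assembles Lemmas~\ref{lem:Pre-PenSumIrrationalMax} and~\ref{lem:Pre-fDbounds} with the two corollaries.
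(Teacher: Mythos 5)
Your proposal is correct and follows essentially the same route as the paper's proof: the identical Construction~\ref{cons:rational reduct} with $H = 5\lceil\log_2 Q\rceil$, the forward direction via Corollary~\ref{cor:Pre-SubProdyesToPenSumyes}, and the converse via the same chain through Lemma~\ref{lem:Pre-fDbounds} and Corollaries~\ref{cor:Pre-deltaGap} and~\ref{cor:Pre-deltaGap2}, merely phrased contrapositively rather than directly. Your explicit discharge of the side conditions $Q\ge 2$ and $k<Q$ by composing with the \XTC-to-\SubProd reduction is a slightly more careful bookkeeping of what the paper treats as ``without loss of generality,'' but it is not a different argument.
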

\begin{proof}
	Given an instance $\Instance = (\{v_1,\dots v_m\}, M, k')$ of \SubProd,\lb define~$Q:= M$, $H := 5 \lceil\log_2 Q\rceil$, and $\delta:= (1/2^H)$.
	Construct $A$,~$a_i$,~$b_i$,~$k$, and~$D$ as in Construction~\ref{cons:rational reduct}.
	That is:
	\begin{itemize}
		\item Define $A := \lceil\max_{i \in [m]} (\ln v_i)\rceil+1$;
		\item Define $a_i := \ceilH{a_i^*} = \ceilH{A - \ln v_i}$ for each $i\in [m]$;
		\item Define $b_i := 1/v_i$ for each $i \in [m]$;
		\item Define $Q := M$;
		\item Define $D := \floorH{D^*}  = \floorH{kA - \ln Q - 1}$.
	\end{itemize}
	Let $\Instance' = (\{(a_i, b_i) \mid i \in [m]\}, k, Q, D)$ be the resulting instance of \PenSum.
	
	We first show that $\Instance'$ is a \yes-instance of \PenSum if and only if $\Instance$ is a \yes-instance of \SubProd.
	Suppose first that $\Instance$ is a \yes-instance of \SubProd.
	Then, $\prod_{i \in S} v_i = M = Q$ for some $S \subseteq [m]$.
	Consequently, by Corollary~\ref{cor:Pre-SubProdyesToPenSumyes}, $f(S) \geq D$ and so instance $\Instance'$ is a \yes-instance of~\PenSum.
	
	Conversely, suppose that $\Instance'$ is a \yes-instance of \PenSum.
	Then, there is some $S \subseteq [m]$ such that $f(S) \geq D$.
	By Lemma~\ref{lem:Pre-fDbounds} and Corollary~\ref{cor:Pre-deltaGap2}, we conclude~$f^*(S) > f(S) - k\delta \geq D - k\delta > D^* - (k+1)\delta \geq D^* - Q^{-4}$.
	Consequently, $D^* - f^*(S) \leq Q^{-4}$, which by Corollary~\ref{cor:Pre-deltaGap} implies that $\prod_{i \in S} v_i = Q$.
	Therefore, \Instance~is a \yes-instance of \SubProd.
	
	It remains to show that the reduction takes polynomial time.
	For this, it is sufficient to show that the rationals $A$, $k$, $D$, $a_i$, and~$b_i$ can all be computed in polynomial time for each~$i \in [m]$.
	Observe that $A = \lceil\max_{i \in [m]} (\ln v_i)\rceil$ is the unique integer such that $e^A > \max_{i \in [m]} v_i > e^{A-1}$.
	We have
	$1\leq A \leq \lceil\max_{i \in [m]} \log_2 v_i\rceil$ since~$\ln v_i < \log_2 v_i$.
	So we can compute $A$ in polynomial time by checking all integers in this range.
	
	For each $i \in [m]$, $a_i = \ceilH{A - \ln v_i} = r_i/2^H$, where $r_i$ is the minimum integer such that $A - \ln v_i \leq r_i/2^H$.
	Thus, we can compute $r_i$ by checking $e^{A - r_i/2^H}\leq v_i$ with $r_i = 2^H\cdot (A - \ceilH{\ln v_i})$,
	setting $r_i$ to its successor if the inequality is not satisfied.
	Thus, we can construct $a_i$ in polynomial time, and $a_i$ can be represented with~$\Oh(\log_2 r + H)$ bits.
	The construction of $D$ can be handled in a similar way.
	
	For each $i \in [m]$, rational $b_i = 1/v_i$ can be represented with $\Oh(\log_2 v_i)$ bits
	(recall that we represent $1/v_i$ with binary representations of the integers~$1$ and $v_i$).
	It takes~$\Oh(\log_2 v_i)$ time to construct~$b_i$. 
	The integers~$Q$ and $k$ are taken directly from the instance \Instance of \SubProd.
\end{proof}

\chapter{The Generalized Noah's Ark Problem}
\label{ctr:GNAP}

\section{Introduction}
The definition of \MPDlong was given with the intention of helping to systematically address the challenge of  preservation of biological diversity~\cite{FAITH1992,crozier}.
We, however, have to acknowledge the fact that selecting a species for protection in an ecological intervention will not necessarily help to protect them.
Real-world interaction always have to consider uncertainties such as diseases, famines, floods, or other natural catastrophes.
It is therefore useful to consider the protection of a given taxon only at a specific \sprop~\cite{weitzman}.
Further, we may assume that investing more into a given species will increase their \sprop by a certain amount.

In such a model, we therefore have to consider the \emph{expected} phylogenetic diversity.
In the case of \GNAPLong (\GNAP), for each species one may choose from a set of different actions.
Each choice is then associated with a cost and with a resulting survival probability.
With a preprocessing step, we can even consider combinations of different actions.

Introducing cost differences for species protection makes the problem of maximizing phylogenetic diversity \NP-hard~\cite{pardi07} and thus all of the even richer models are \NP-hard as well.
However, in special cases even then a solution can still be computed greedily~\cite{hartmann,hartmann07}.

Billionnet provided some practical algorithmic ideas and results for \GNAP with integer linear programming~\cite{billionnet13,billionnet17}.
Pardi showed that \GNAP can be solved in pseudo-polynomial running time in the budged plus the number of possible solutions~\cite{PardiThesis}.

Apart from these algorithms and the \NP-hardness, there is no work that systematically studies which structural properties of the input make \GNAP tractable.
In this chapter, we try to fill this gap.
On the way, we also observe close relations to \MCKPLong and to \PS, which we observed in the previous chapter.

\paragraph{Structure of the chapter.}
In the next section, we  give a formal definition of \GNAPLong and we  give an overview over results.
Further, we  provide first observations.
In Section~\ref{sec:GNAP-GNAP}, we provide the algorithmic ideas for solving \GNAP in its most generalized form.
We, additionally, prove that \GNAP is \Wh{1}-hard when parameterized by the number of taxa.
In Section~\ref{sec:GNAP-two-projects}, we consider the special case of \GNAP in which for each taxon we can chose between only two options.

\section{Preliminaries}
In this section, we consider definitions used in this chapter, an overview over the results and some preliminary observations.

\subsection{Projects and Phylogenetic Diversity}
\label{sec:GNAP-PD-def}
A \textit{project}~$p_{i,j}$ is a tuple~$(c_{i,j},w_{i,j})\in \mbb N_0\times \mbb Q\cap \mathbb{R}_{[0,1]}$, where~$c_{i,j}$ is the \textit{cost} and~$w_{i,j}$ is the \textit{\sprop} of~$p_{i,j}$.
For a given phylogenetic~$X$-tree~$\Tree$ and a taxon~$x_i\in X$, a \textit{project list}~$P_i$ is an~$\ell_i$-tuple of projects~$(p_{i,1},\dots,p_{i,\ell_i})$.
As a project with a higher cost will only be considered when the \sprop is higher, we assume the costs and the \sprops{} to be ordered.
That is,~$c_{i,j}<c_{i,j+1}$ and~$w_{i,j}<w_{i,j+1}$ for every project list~$P_i$ and each~$j \in [\ell_i-1]$.
An \textit{$m$-collection of projects}~$\mcal P$ is a set of~$m$~project lists~$\{P_1,\dots,P_m\}$.
For a project set~$S$, the~\textit{total cost~$\Costs(S)$ of~$S$} is~$\sum_{p_{i,j}\in S} c_{i,j}$.

For a given phylogenetic~$X$-tree~$\Tree$, the~\textit{phylogenetic diversity~$PD_\Tree(S)$ of a set of projects~$S=\{p_{1,j_1}, \dots, p_{|X|,j_{|X|}}\}$} is given by

\begin{equation}
PD_\Tree(S) := \sum_{uv\in E} \w(uv) \cdot \left(1 - \prod_{x_i\in \off(v)} (1 - w_{i,j_i}) \right).
\end{equation}
The formula~$\left(1 - \prod_{x_i\in \off(v)} (1 - w_{i,j_i}) \right)$ describes the likelihood that at least one offspring of~$v$ survives. Thus, the phylogenetic diversity is the sum of the expected values of the edges of~$\Tree$ when applying~$S$.
If the \sprops of all projects in~$S$ is either~0 or~1, then this definitions coincides with the standard definition of phylogenetic diversity given in Equation~(\ref{eqn:PDdef}).

In this chapter, we use the convention that~$n := |V(\Tree)|$. Observe~$n \in \Oh(|X|)$.

\subsection{Problem Definitions}
We now define this chapter's main problem, \GNAPLong, and the special case where each species has two projects,~\NAPLong[a_i]{c_i}{b_i}{2}.
\problemdef{\GNAPLong (\GNAP)}
{A phylogenetic~$X$-tree~$\Tree=(V,E,\w)$,
	an~$\bet{X}$-collection of projects~$\mathcal{P}$,
	an integer~$B\in \mbb N_0$, and a number~$D\in \mbb Q_{\ge 0}$}
{Is there a set of projects~$S=\{p_{1,j_1}, \dots, p_{|X|,j_{|X|}}\}$, one from each project list of~$\mathcal{P}$, 
	such that~$PD_\Tree(S)\ge D$ and~$\Costs(S)\le B$}
A project set~$S$ is called~\textit{a solution for instance~$\mathcal I=(\Tree,\mathcal P,B,D)$} if~$S$ satisfies the conditions in the question of the problem definition.

\problemdef{\NAPLong[a_i]{c_i}{b_i}{2} (\NAP[a_i]{c_i}{b_i}{2})}
{A phylogenetic~$X$-tree~$\Tree=(V,E,\w)$,
	a~$\bet{X}$-collection of projects~$\mathcal{P}$ in which the project list~$P_i$ contains exactly two projects~$(0,a_i)$ and~$(c_{i},b_i)$ for each~$i\in [\bet{X}]$,
	an integer~$B\in \mbb N_0$, and a number~$D\in \mbb Q_{\ge 0}$}
{Is there a set of projects~$S=\{p_{1,j_1}, \dots, p_{|X|,j_{|X|}}\}$, one from each project list of~$\mathcal{P}$,
	such that~$PD_\Tree(S)\ge D$, and~$\Costs(S)\le B$}
In other words, in an instance~\NAP[a_i]{c_i}{b_i}{2} we can decide for each taxon~$x_i$ whether we want to spend~$c_i$ to increase the \sprop of~$x_i$ from~$a_i$ to~$b_i$.
If $P_i$ contains two projects~$(\bar c,a_i)$ and~$(\hat c,b_i)$ with $\bar c > 0$ for an~$i\in [\bet{X}]$, we can reduce the cost of the two projects and the budget by $\bar c$ each to obtain an equivalent instance.
Therefore, we will assume that the project with the lower \sprop has a cost of~0 and we refer to the cost of the other project as $c_i$.

As a special case of this problem, we consider~\NAP[0]{c}{b_i}{2} where every project with a positive \sprop has a cost of~$c$.
Observe that for each~$c\in\mathbb{N}$, an instance~$\mathcal{I}=(\Tree,\mathcal{P},B,D)$ of~\NAP[0]{c}{b_i}{2} can be reduced to an equivalent instance~$\mathcal{I}'=(\Tree,\mathcal{P}',B',D)$ of~\NAP[0]{1}{b_i}{2} by replacing each project~$(c,b_i)$ with~$(1,b_i)$, and setting~$B'=\left\lfloor B/c \right\rfloor$. Thus,~\NAP[0]{c}{b_i}{2} can be considered as the special case of~\GNAP with unit costs for projects.
In the following we refer to this problem as~\ucNAP.

\subsection{Parameters, and Results Overview}
We study \GNAP and the special cases of \GNAP with respect to several parameters which we describe in the following; For an overview of the results see Table~\ref{tab:results-gnap}.
If not stated differently, we assume in the following that~$i \in [\bet X]$ and~$j\in [|P_i|]$.
The input of \GNAP directly gives the following natural parameters: The \textit{number of taxa~$|X|$}, the \textit{budget~$B$}, and the required \textit{diversity~$D$}. Closely related to~$B$ is~$C := \max_{i,j} c_{i,j}$, the \textit{maximum cost of a project}. We may assume that no projects have a cost that exceeds the budget, as we can delete them from the input and so~$C\le B$. We may further assume that~$B\le C\cdot \bet X$, as otherwise we can compute in polynomial time whether the diversity of the most valuable projects of the taxa exceeds~$D$ and return \yes, if it does and \no, otherwise.

Further, we consider the \textit{maximum number of projects per taxon~$L:=\max_{i} \bet{P_i}$}. By definition,~$L=2$ in~\NAP[a_i]{c_i}{b_i}{2} and in~\GNAP we have~$L\le C+1$.
We denote the \textit{number of projects} by~$\numP := \sum_{i} |P_i|$. Clearly,~$\bet X\le \numP$, $L\le \numP$,\lb and~$\numP \le \bet X \cdot L$.
We denote with~$\var_c:=|\{ c_{i,j} : (c_{i,j},w_{i,j})\in P_i, P_i\in \mathcal{P} \}|$, the \textit{number of different costs}.
We define the \textit{number of different \sprops}~$\var_w$, analogously.
The consideration of this type of parameterization, called the \textit{number of numbers} parameterization was initiated by Fellows et al.~\cite{fellows12};
It is motivated by the idea that in many real-life instances the number of numbers may be small. In fact, for non-negative numbers, it is never larger than the maximum values which are used in pseudo-polynomial time algorithms.
Also, we consider the~\textit{maximum encoding length for \sprops $\wcode := \max_{i,j}(\text{binary length of } w_{i,j})$} and the~\textit{maximum edge weight~$\max_\w := \max_{e\in E} \w(e)$}. Observe that because the maximal \sprop of a taxon could be smaller than 1, one can not assume that~$\max_\w\le D$, in this chapter.

\begin{table}[t]
	\centering
	\footnotesize
	\caption{Complexity results for \GNAPLong.
		Recall that the special cases mentioned are \NAP[0]{c_i}{1}{2} which is the special case where the \sprops are only 0 or 1, and \ucNAP is the special case where each project has unit costs.
		Entries with the sign ``---'' mark parameters that are (partially) constant in the specific problem definition and thus are not interesting.}
	\label{tab:results-gnap}
	\myrowcols
	\resizebox{\columnwidth}{!}{%
		\begin{tabular}{l ll ll}
			\hline
			Parameter & \multicolumn{2}{c}{\GNAP} & \multicolumn{2}{c}{\GNAP{} with $\height_{\Tree}=1$} \\
			\hline
			$|X|$ & \Wh1-hard, \XP & Thm.~\ref{thm:GNAP-X-W1hard}, Prop.~\ref{prop:GNAP-GNAP-X-XP} & \Wh1-hard, \XP & Thm.~\ref{thm:GNAP-X-W1hard}, Prop.~\ref{prop:GNAP-GNAP-X-XP}\\
			$B$ & \XP; \FPT is \textit{open} & Obs.~\ref{obs:GNAP-XP-B} & \PFPT $\Oh(B \cdot \numP)$ & Prop.~\ref{prop:GNAP-height=1->mckp}\\
			$C$ & \NP-h for $C=1$ & Thm.~\ref{thm:GNAP-ucNAP-hardness} & \PFPT $\Oh(C \cdot \numP \cdot |X|)$ & Prop.~\ref{prop:GNAP-height=1->mckp}\\
			$D$ & \NP-h for $D=1$ & Obs.~\ref{obs:GNAP-D=height=1,varw=2} & \NP-h for $D=1$ & Obs.~\ref{obs:GNAP-D=height=1,varw=2}\\
			\hline
			$\max_\w$ & \NP-h for $\max_\w=1$ & Thm.~\ref{thm:GNAP-X-W1hard} & \NP-h for $\max_\w=1$ & Thm.~\ref{thm:GNAP-X-W1hard}\\
			$\var_c$ & \NP-h for $\var_c=2$ & Thm.~\ref{thm:GNAP-ucNAP-hardness} & \XP $\Oh(|X|^{\var_c-1} \cdot \numP)$ & Prop.~\ref{prop:GNAP-height=1->mckp}\\
			$\var_w$ & \NP-h for $\var_w = 2$ & Obs.~\ref{obs:GNAP-KP-to-height=1} & \NP-h for $\var_w = 2$ & Obs.~\ref{obs:GNAP-KP-to-height=1}\\
			$D+\wcode$ & \textit{open} &  & \FPT $\Oh(D \cdot 2^{\wcode} \cdot \numP)$ & Prop.~\ref{prop:GNAP-height=1->mckp}\\
			$B+\var_w$ & \XP $\Oh(B \cdot |X|^{2\cdot \var_w + 1}$ & Thm.~\ref{thm:GNAP-B+varw} & \PFPT & Prop.~\ref{prop:GNAP-height=1->mckp}\\
			$D+\var_w$ & \NP-h for $D=1, \var_w=2$ & Obs.~\ref{obs:GNAP-D=height=1,varw=2} & \NP-h for $D=1, \var_w=2$ & Obs.~\ref{obs:GNAP-D=height=1,varw=2}\\
			$\var_c+\var_w$ & \XP $\Oh(|X|^{2\cdot (\var_c + \var_w) +1}$ & Thm.~\ref{thm:GNAP-varc+varw} & \FPT & Thm.~\ref{thm:GNAP-height=1:varc+varw}\\
			\hline
			\hline
			Parameter & \multicolumn{2}{c}{\NAP[0]{c_i}{1}{2}} & \multicolumn{2}{c}{\ucNAP} \\
			\hline
			$|X|$ & \FPT & Obs.~\ref{obs:GNAP-FPT-X-2-NAP} & \FPT & Obs.~\ref{obs:GNAP-FPT-X-2-NAP}\\
			$B$ & \PFPT $\Oh(B^2 \cdot n)$ & \cite{pardi07} & \XP & Obs.~\ref{obs:GNAP-XP-B} \\
			$C$ & \PFPT $\Oh(C^2 \cdot n^3)$ & Cor.~\ref{cor:GNAP-C-01-NAP} & --- & \\
			$D$ & \PFPT $\Oh(D^2 \cdot n)$ & Prop.~\ref{prop:GNAP-wcode=1-D} & \textit{open} & \\
			\hline
			$\max_\w$ & \PFPT $\Oh((\max_\w)^2 \cdot n^3)$ & Cor.~\ref{cor:GNAP-wcode=1,val-lambda} & \textit{open} & \\
			$\var_c$ & \XP & Cor.~\ref{cor:GNAP-wcode=1-varc} & --- & \\
			$\var_w$ & --- &  & \XP & Cor.~\ref{cor:GNAP-C=1-XP-varw}\\
			\hline
		\end{tabular}
	}
\end{table}

\subsection{Observations for GNAP}
We first present some basic observations that provide some first complexity classifications.
In the problem with exactly two projects per taxa,~\NAP[a_i]{c_i}{b_i}{2}, one can iterate over all subsets~$X'$ of taxa and check if it is a possible solution pay~$c_i$ to increase the \sprop for each~$x_i\in X'$.
To this end, we check if~$\sum_{x_i\in X'} c_i \le B$ and compute if the phylogenetic diversity is at least~$D$, when the \sprop of every~$x_i\in X'$ is~$b_i$ and~$a_i$ otherwise. Thus,~\NAP[a_i]{c_i}{b_i}{2} is fixed-parameter tractable with respect to the number of taxa.
\begin{observation}
	\label{obs:GNAP-FPT-X-2-NAP}
	\NAP[a_i]{c_i}{b_i}{2} can be solved in~$2^{\bet X} \cdot \poly(|\Instance|)$~time.
\end{observation}

A~$\GNAP$ solution contains at most~$B$ projects with positive costs.
Hence, a solution can be found by iterating over all~$B$-sized subsets~$X'$ of taxa and checking every combination of projects for~$X'$. Like before, we have to check that the budget is not exceeded and the phylogenetic diversity of the selected projects is at least~$D$.
This brute-force algorithm shows that~\GNAP is \XP with respect to the budget.
\begin{observation}
	\label{obs:GNAP-XP-B}
	\GNAP can be solved in~$(|X| \cdot L)^B \cdot \poly(|\Instance|)$~time.
\end{observation}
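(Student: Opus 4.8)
The plan is to exploit that project costs are non-negative integers, so that a feasible solution can only ``spend'' on few taxa. The first step is the observation recorded in the statement: if a chosen project $p_{i,j_i}$ has $c_{i,j_i}\ge 1$, it consumes at least one unit of budget, and since $\Costs(S)\le B$, at most $B$ of the projects in a solution can have positive cost. Equivalently, in every solution all but at most $B$ taxa are assigned a cost-$0$ project; and because costs within a list are strictly increasing ($c_{i,1}<c_{i,2}<\dots$), whenever a cost-$0$ project exists for a taxon it is uniquely its first project $p_{i,1}$.

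Building on this, I would run the following brute-force search. Iterate over every subset $X'\subseteq X$ with $\bet{X'}\le B$, reading $X'$ as the guessed set of taxa that receive a positive-cost project. For each $x_i\notin X'$ fix the choice to the cheapest project $p_{i,1}$, and for each $x_i\in X'$ enumerate all $\bet{P_i}\le L$ projects. Every resulting complete assignment $S$ (one project per list) is then tested by computing $\Costs(S)$ and $PD_{\Tree}(S)$ in polynomial time; return \yes as soon as some assignment satisfies $\Costs(S)\le B$ and $PD_{\Tree}(S)\ge D$, and \no otherwise.

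For correctness I would argue both directions. Any assignment accepted by the procedure is by construction a valid project set obeying the budget and diversity thresholds, hence a solution. Conversely, given a solution $S$, let $X'$ be the set of taxa whose chosen project has positive cost; then $\bet{X'}\le B$ by the first step, so $X'$ occurs among the enumerated subsets. For $x_i\notin X'$ the project chosen by $S$ has cost $0$ and therefore equals $p_{i,1}$ by strict monotonicity, matching the pinned choice, while for $x_i\in X'$ the choice of $S$ is tried explicitly; hence $S$ is one of the examined assignments and is accepted. Taxa possessing no cost-$0$ project cause no difficulty: they lie in $X'$ in every solution, and pinning them to $p_{i,1}$ outside $X'$ merely produces redundant candidates, never loses one.

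Finally, for the running time I would count the candidate assignments. There are at most $\binom{\bet{X}}{s}\le \bet{X}^{s}\le \bet{X}^{B}$ choices of $X'$ for each size $s\le B$, and at most $L^{\bet{X'}}\le L^{B}$ project combinations per choice; summing over the $B+1\le\poly(\Instance)$ possible sizes yields at most $(\bet{X}\cdot L)^{B}\cdot\poly(\Instance)$ assignments, each checked in $\poly(\Instance)$ time, which gives the claimed bound. I expect the only genuinely delicate point to be the completeness half of the correctness argument—justifying via the strict ordering of costs that taxa outside the guessed set $X'$ may be pinned to their first project without overlooking any solution—whereas the counting and the per-assignment evaluation of $PD_{\Tree}$ and $\Costs$ are routine.
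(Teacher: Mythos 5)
Your proof follows essentially the same route as the paper: observe that a solution contains at most $B$ positive-cost projects, enumerate all subsets of at most $B$ taxa together with all $L^{B}$ project combinations for them (pinning the remaining taxa to their cheapest project), and check budget and diversity for each candidate in polynomial time. Your write-up is in fact more careful than the paper's brief observation, in particular in handling taxa without a cost-$0$ project, and it is correct.
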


In \KP, one is given a set of items~$N$, a cost-function~$c: N \to \mathbb N$, a value-function~$d: N \to \mathbb N$, and two integers~$B$ and~$D$ and asks whether there is an item set~$N'$ such that~$c_\Sigma(N') \le B$ and~$d_\Sigma(N') \ge D$.
We describe briefly a known reduction from~\KP to~\NAP[0]{c_i}{1}{2}~\cite{pardi07}.
Let~$\Instance=(N,c,d,B,D)$ be an instance of~\KP.
Define an~$N$-tree~$\Tree:=(V,E,\w)$ with vertices~$V:=\{\rho\} \cup N$ and edges~$E:=\{ \rho x_i \mid x_i\in N \}$ of weight~$\w(\rho x_i) := d(x_i)$.
For each taxon~$x_i$ we define a project list~$P_i$ that contains two projects~$(0,0)$ and~$(c(x_i),1)$.
Then, the instance~$(\Tree,\mathcal{P},B':=B,D':=D)$ is a \yes-instance of~\NAP[0]{c_i}{1}{2} if and only if~$(N,c,d,B,D)$ is a \yes-instance of~\KP.
Because \KP is \NP-hard, also \NAP[0]{c_i}{1}{2} is \NP-hard.
\begin{observation}[\cite{pardi07}]
	\label{obs:GNAP-KP-to-height=1}
	\NAP[0]{c_i}{1}{2} is \NP-hard, even if the tree~$\Tree$ is a star.
\end{observation}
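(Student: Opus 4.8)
The plan is to verify the correctness and polynomial running time of the reduction from \KP to \NAP[0]{c_i}{1}{2} already sketched above, and then to observe that it outputs a star, which yields the stronger claim. Recall that \KP is \NP-hard. The reduction maps an instance $\Instance=(N,c,d,B,D)$ of \KP to the instance $(\Tree,\mcal{P},B,D)$ in which $\Tree$ is the star with root $\rho$, one leaf $x_i$ per item, edge weights $\w(\rho x_i):=d(x_i)$, and project list $P_i=\{(0,0),(c(x_i),1)\}$ for each $x_i$.

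The crucial point I would make explicit is that on a star the \sprop objective collapses to a linear expression. For a project set $S=\{p_{1,j_1},\dots,p_{|X|,j_{|X|}}\}$, every edge of $\Tree$ has the form $\rho x_i$ with $\off(x_i)=\{x_i\}$, so the single-factor product in the definition of $PD_\Tree$ equals $1-w_{i,j_i}$ and
$$PD_\Tree(S)=\sum_{i}\w(\rho x_i)\bigl(1-(1-w_{i,j_i})\bigr)=\sum_i d(x_i)\,w_{i,j_i}.$$
Since each chosen survival probability $w_{i,j_i}$ lies in $\{0,1\}$, I would identify the selected set $N':=\{x_i : w_{i,j_i}=1\}\subseteq N$ and note that $PD_\Tree(S)=d_\Sigma(N')$ while $\Costs(S)=c_\Sigma(N')$ (the low-probability project costs $0$). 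The constraints $PD_\Tree(S)\ge D$ and $\Costs(S)\le B$ then translate verbatim into $d_\Sigma(N')\ge D$ and $c_\Sigma(N')\le B$; conversely, any candidate $N'\subseteq N$ induces a valid project set under the same correspondence. This gives the \yes/\no equivalence in both directions.

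It remains to observe that the construction runs in polynomial time, producing $|N|+1$ vertices, $|N|$ edges, and $2|N|$ projects, and that $\Tree$ is a star by construction, i.e.\ $\height_\Tree=1$. Together with the \NP-hardness of \KP, this establishes that \NAP[0]{c_i}{1}{2} is \NP-hard even when the input tree is a star. I do not anticipate any genuine obstacle: the only step needing care is confirming that the nonlinear \sprop objective truly linearizes on a star, which is immediate because each leaf's subtree contains exactly one taxon, so the inclusion--exclusion term degenerates to a single factor.
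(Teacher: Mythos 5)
Your proposal is correct and follows exactly the reduction the paper itself sketches (citing Pardi and Goldman) just before the observation: the same star construction with $\w(\rho x_i) := d(x_i)$ and project lists $\{(0,0),(c(x_i),1)\}$, with the same budget and threshold. Your explicit verification that the objective linearizes on a star---each $\off(x_i)$ is a singleton, so $PD_\Tree(S)=\sum_i d(x_i)\,w_{i,j_i}$---is precisely the (implicit) content of the paper's argument, so there is nothing to add.
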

Because~\NAP[0]{c_i}{1}{2} is a special case of~\GNAP in which~$L=2,~\wcode=1$, and~$\var_w=2$, we conclude that~\GNAP is \NP-hard, even if~$\height_{\Tree}=\wcode=1$ and~$L=\var_w=2$.
In this reduction, one could also set~$D':=1$ and set the \sprop of every project with a positive cost to~$1/D$.
\begin{observation}
	\label{obs:GNAP-D=height=1,varw=2}
	\NAP[0]{c_i}{b}{2} is \NP-hard, even if~$b\in (0,1]$ is a constant,~$D=1$, and the given phylogenetic~$X$-tree~$\Tree$ is a star.
\end{observation}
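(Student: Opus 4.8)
The plan is to reuse the \KP-reduction recalled in Observation~\ref{obs:GNAP-KP-to-height=1} but to rescale the survival probabilities so that the diversity requirement becomes trivial while all the difficulty is shifted onto the budget. The structural fact that makes this work is that on a star the phylogenetic diversity is \emph{linear}. Let \Tree have root $\rho$ and leaves $x_1,\dots,x_n$ with $w_i := \w(\rho x_i)$. Since $\off(x_i)=\{x_i\}$ for every leaf, the product in the definition of $PD_\Tree$ collapses, and for any project selection $S$, writing $S':=\{\,i : \text{the positive-cost project of } x_i \text{ is chosen in } S\,\}$ and letting $p_i\in\{0,b\}$ be the survival probability chosen for $x_i$, one obtains
\begin{equation*}
	PD_\Tree(S) \;=\; \sum_i w_i\bigl(1-(1-p_i)\bigr) \;=\; \sum_i w_i\,p_i \;=\; b\sum_{i\in S'} w_i .
\end{equation*}
Thus on a star the diversity threshold translates into a pure weight threshold on the purchased taxa.

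Concretely, I would start from an instance $(N,c,d,B,D)$ of \KP, assuming $D\ge 1$ (instances with $D=0$ are trivially \yes), and build the same star as in Observation~\ref{obs:GNAP-KP-to-height=1}: leaves $N$, edge weights $\w(\rho x_i):=d(x_i)$, and for each taxon the two projects $(0,0)$ and $(c(x_i),b)$, now using the \emph{single uniform} value $b:=1/D\in(0,1]$. I would keep the budget $B':=B$ but set the diversity threshold to $D':=1$. By the displayed identity, a selection $S$ obeys $\Costs(S)=\sum_{i\in S'}c(x_i)\le B'$ exactly when $S'$ respects the \KP budget, while $PD_\Tree(S)=b\sum_{i\in S'}d(x_i)\ge 1$ holds if and only if $\sum_{i\in S'}d(x_i)\ge 1/b=D$. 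Hence the produced instance of \NAP[0]{c_i}{b}{2} is a \yes-instance if and only if $(N,c,d,B,D)$ is a \yes-instance of \KP, and the \NP-hardness of \KP transfers.

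The point worth emphasising---and the reading of ``$b$ is a constant'' that the statement intends---is that \emph{every} positive-cost project in the construction carries the same probability $b=1/D$; the survival probability is uniform over all taxa (which is why the problem is written with $b$ rather than $b_i$), and only the two values $0$ and $b$ occur, so $\var_w=2$. The rational $1/D$ has encoding length $\Oh(\log D)$, so the construction is polynomial. The one step that needs care is precisely this rescaling: choosing $b:=1/D$ is exactly what absorbs the \KP value target $D$ into the trivial threshold $D'=1$, so that the hardness no longer lives in the diversity requirement at all but entirely in the cost-versus-budget trade-off encoded by the $c(x_i)$ and $B$. There is no further obstacle, as both correctness and the polynomial running time follow immediately from the linearity identity above.
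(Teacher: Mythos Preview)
Your proof is correct and takes essentially the same approach as the paper: the paper's sketch simply says to reuse the \KP reduction of Observation~\ref{obs:GNAP-KP-to-height=1}, set the uniform survival probability to $b:=1/D$, and set $D':=1$, which is exactly what you do, and your explicit linearity computation on the star makes the equivalence transparent.
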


\section{The Generalized Noah's Ark Problem}
\label{sec:GNAP-GNAP}
In this section, with \GNAP we consider the most general form of the problem.
\subsection{Algorithms for the Generalized Noah's Ark Problem}
First, we observe that for a constant number of taxa, we can solve \GNAP in polynomial time by considering all the possible project choices for each taxon. 
\begin{proposition}
	\label{prop:GNAP-GNAP-X-XP}
	\GNAP is \XP with respect to~$|X|$.
\end{proposition}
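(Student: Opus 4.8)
\GNAP is \XP with respect to~$|X|$.

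The claim is that for a fixed number of taxa, \GNAP can be solved in time polynomial in the instance size (with the exponent depending on~$|X|$). Let me sketch the proof.

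Let me think about the structure here. We have an $X$-tree $\Tree$, and for each taxon $x_i \in X$ a project list $P_i$ containing $\ell_i = |P_i|$ projects. A solution picks exactly one project from each list. The phylogenetic diversity is

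$$PD_\Tree(S) = \sum_{uv \in E} \w(uv)\left(1 - \prod_{x_i \in \off(v)}(1-w_{i,j_i})\right),$$

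and we need $PD_\Tree(S) \ge D$ while $\Costs(S) = \sum_{p_{i,j_i} \in S} c_{i,j_i} \le B$.

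The brute-force idea is clear: a solution is determined by a choice function $j \colon [|X|] \to \mathbb{N}$ with $j_i \in [\ell_i]$ for each $i$, i.e. by selecting one project per taxon. The total number of such choice functions is $\prod_{i=1}^{|X|} \ell_i \le L^{|X|} \le \numP^{|X|}$, where $L = \max_i |P_i|$ is the maximum number of projects per taxon. For $|X|$ fixed, this is polynomial in the input size (since $\numP \le |\Instance|$).

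**My plan.** The approach is simple enumeration. First I would observe that any solution $S$ is completely specified by choosing, for each taxon $x_i$, one of its $\ell_i = |P_i|$ projects. Thus I iterate over all tuples $(j_1, \dots, j_{|X|})$ with $j_i \in [\ell_i]$; there are $\prod_{i=1}^{|X|} \ell_i \le L^{|X|}$ of these. For each such tuple, let $S = \{p_{1,j_1}, \dots, p_{|X|,j_{|X|}}\}$ and check the two conditions: (a) $\Costs(S) = \sum_{i} c_{i,j_i} \le B$, and (b) $PD_\Tree(S) \ge D$. If some tuple satisfies both, return \yes; otherwise return \no.

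Correctness is immediate: by definition a solution is exactly a project set containing one project per list satisfying (a) and (b), so the enumeration is complete and sound. The main content is the running-time bound. Computing $\Costs(S)$ takes $\Oh(|X|)$ additions. Computing $PD_\Tree(S)$ requires, for each edge $uv \in E$, evaluating the product $\prod_{x_i \in \off(v)}(1 - w_{i,j_i})$ over the offspring of $v$; summing over all edges this costs $\Oh(|E| \cdot |X|) = \Oh(n \cdot |X|)$ arithmetic operations, which is polynomial (recall $n = |V(\Tree)| \in \Oh(|X|)$). Under the unit-cost RAM model adopted in the preliminaries, each arithmetic operation counts as constant time, so evaluating one candidate $S$ takes $\poly(|\Instance|)$ time. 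Multiplying by the number $L^{|X|} \le \numP^{|X|}$ of candidate tuples gives a total running time of $L^{|X|} \cdot \poly(|\Instance|)$, which is $\poly(|\Instance|)^{g(|X|)}$ for a suitable computable $g$. Hence \GNAP is \XP with respect to $|X|$.

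**Expected obstacle.** This is an essentially routine enumeration argument, so there is no serious mathematical difficulty; the statement is in fact basically a strengthening of Observation~\ref{obs:GNAP-FPT-X-2-NAP}, where the per-taxon choice count $L$ was constant ($L = 2$). The only point requiring a little care is making explicit that the \sprops{} are rationals of bounded encoding length (they lie in $\mathbb{Q} \cap \mathbb{R}_{[0,1]}$), so that the products and sums defining $PD_\Tree(S)$ can be evaluated exactly on numbers whose encoding length stays polynomial in $|\Instance|$. Invoking the unit-cost RAM convention stated in the preliminaries cleanly sidesteps any concern about the bit-length of intermediate products.
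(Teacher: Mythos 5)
Your proof is correct and matches the paper's argument: the paper also enumerates all $L^{|X|}$ ways of choosing one project per taxon via $|X|$ nested loops and checks $\Costs(S) \le B$ and $PD_\Tree(S) \ge D$ for each candidate in polynomial time, yielding the bound $\Oh(L^{|X|} \cdot n^2)$. Your additional remark on encoding lengths and the unit-cost RAM model is a reasonable refinement that the paper leaves implicit.
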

\begin{proof}
	For every taxon~$x_i \in X$, iterate over the projects~$p_{i,j_i}$ of~$P_i$ such that there are~$|X|$~nested loops to compute a set~$S:=\{p_{i,j_i} \mid i\in[|X|]\}$.
	Return \yes, if~$\Costs(S) \le B$ and~$PD_{\Tree}(S) \ge D$.
	Otherwise, return \no, after the iteration.
	
	This algorithm is clearly correct.
	We can check whether~$\Costs(S) \le B$\lb and $PD_{\Tree}(S) \ge D$ in~$\Oh(n^2)$~time.
	Therefore, in~$\Oh(L^{|X|} \cdot n^2)$~time a solution is computed.
\end{proof}

In Theorem~\ref{thm:GNAP-varc+varw}, we show that \GNAP can be solved in polynomial time when the number of different project costs and the number of different survival probabilities is constant.
In the following, let~$\Instance=(\Tree,\mathcal P,B,D)$ be an instance of~\GNAP, and let~$\mathcal C:=\{c_1,\dots,c_{\var_c}\}$ and~$\mathcal W:=\{w_1,\dots,w_{\var_w}\}$ denote the sets of different costs and different \sprops in~\Instance, respectively.
Without loss of generality, assume~$c_i<c_{i+1}$ for each~$i\in[\var_c-1]$ and assume~$w_j<w_{j+1}$ for each~$j\in [\var_w-1]$, likewise. In other words,~$c_i$ is the~$i$th cheapest cost in~$\mathcal C$ and~$w_j$ is the~$j$th smallest \sprop in~$\mathcal W$.
Recall that we assume that there is at most one item with cost~$c_p$ and at most one item with \sprop~$w_q$ in every project list~$P_i$, for each~$p\in[\var_c]$ and~$q\in[\var_w]$.
For the rest of the section, by~$\myvec{a}$ and~$\myvec{b}$ we denote vectors~$(a_1,\dots,a_{\var_c-1})$ and~$(b_1,\dots,b_{\var_w-1})$, respectively.

\begin{theorem}
	\label{thm:GNAP-varc+varw}
	\GNAP can be solved in~$\Oh\left(|X|^{2(\var_c+\var_w-1)}\cdot (\var_c+\var_w)\right)$ time.
\end{theorem}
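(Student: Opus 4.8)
The plan is to solve \GNAP by a bottom-up dynamic program over \Tree that generalizes the count-vector tables of Proposition~\ref{prop:Pre-MCKP-XP-varc}. The starting observation is that both of the relevant quantities factor through counts rather than through the actual assignment of projects. For the diversity, given an edge $uv$ and a choice of projects, the factor $\prod_{x_i\in\off(v)}(1-w_{i,j_i})$ equals $\prod_{q=1}^{\var_w}(1-w_q)^{n_q}$, where $n_q$ is the number of offspring of $v$ whose chosen project has survival probability $w_q$; hence $PD_\Tree$ depends only on these survival-probability counts. Symmetrically, $\Costs(S)$ depends only on how many chosen projects carry each of the $\var_c$ costs. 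So it suffices to track, for each partial subtree, one vector of cost-counts and one vector of survival-probability-counts.

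Concretely, I would define a table $\DP$ indexed by a vertex $v$ and by the vectors $\myvec{a}=(a_1,\dots,a_{\var_c-1})$ and $\myvec{b}=(b_1,\dots,b_{\var_w-1})$, where $a_q$ counts the taxa in $\off(v)$ whose project costs $c_q$ and $b_q$ counts those whose project has survival probability $w_q$; since $|\off(v)|$ is fixed, the omitted entries $a_{\var_c}$ and $b_{\var_w}$ are determined. The entry $\DP[v,\myvec a,\myvec b]$ should store the maximum total weight contributed by the edges of $T_v$ over all project assignments realizing these counts, and $-\infty$ if none exists. The base case is a leaf $x_i$, where for each project $(c_{i,j},w_{i,j})\in P_i$ I set the corresponding entry to $0$. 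At an internal vertex I would process the children one at a time along the partial subtrees $T_{v,i}$, and when appending a child $w$ I would combine the accumulated table with $\DP[w,\cdot,\cdot]$ by convolution over the count vectors, adding the contribution $\w(vw)\bigl(1-\prod_{q}(1-w_q)^{b''_q}\bigr)$ of the newly introduced edge $vw$, where $\myvec{b}''$ is the survival-probability vector of the child subtree. I would answer \yes exactly when some root entry satisfies $\sum_q a_q c_q\le B$ and $\DP[\rho,\myvec a,\myvec b]\ge D$.

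Correctness would follow by induction up the tree, along the lines of Proposition~\ref{prop:Pre-MCKP-XP-varc}. In one direction, every finite stored value is witnessed by a concrete assignment obtained by gluing together the witnesses of the appended children, so the edge contributions are accounted for exactly once. In the other direction, any assignment realizing the global counts at a vertex restricts on each child subtree to an assignment whose count vectors sum to the global ones, so its objective value is dominated by the corresponding convolution term; the within-taxon consistency between a chosen cost and its survival probability is handled entirely at the base case, where only genuine projects are enumerated.

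For the running time, the state space has size $\Oh(|X|^{\var_c+\var_w-2})$ per vertex, and each convolution step ranges over a pair of such states, i.e.\ $\Oh(|X|^{2(\var_c+\var_w-2)})$ combinations, each handled with $\Oh(\var_c+\var_w)$ arithmetic operations in the unit-cost RAM model (after precomputing the powers $(1-w_q)^{j}$ for $j\le|X|$); summing over the $\Oh(|X|)$ append steps yields the claimed $\Oh\!\left(|X|^{2(\var_c+\var_w-1)}\cdot(\var_c+\var_w)\right)$ bound. I expect the main conceptual obstacle to be justifying the \emph{decoupling} of the two count vectors: the naive worry is that one must track a joint distribution over $(\text{cost},\text{survival probability})$ pairs, blowing the dimension up to $\var_c\var_w$, whereas across distinct taxa there is no constraint linking one taxon's cost to another's survival probability, so the marginal vectors suffice and the only real coupling is the per-taxon one resolved at the leaves.
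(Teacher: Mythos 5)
Your proposal is correct and takes essentially the same route as the paper's proof: the identical two-table dynamic program over the partial subtrees $T_{v,i}$, indexed by the truncated cost-count vector $\myvec{a}$ and survival-probability-count vector $\myvec{b}$, with the per-taxon coupling of cost and \sprop resolved only at the leaf base case and the edge contribution $\w(vu_{i+1})\cdot\bigl(1-\prod_q(1-w_q)^{b''_q}\bigr)$ added when each child is appended. Your correctness induction and running-time accounting (splitting the vectors over $\Oh\bigl(|X|^{2(\var_c+\var_w-2)}\bigr)$ combinations per append step) mirror the paper's analysis and stay within the stated bound.
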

\begin{proof}
	\proofpara{Table Defintion}
	We describe a dynamic programming algorithm with two tables~$\DP$~and~$\DP'$ that have a dimension for all the~$\var_c$ different costs, except for~$c_{\var_c}$ and all the~$\var_w$ different \sprops, except for~$w_{\var_w}$.
	Recall that~$T_v$ is the subtree rooted at $v$ and the offspring $\off(v)$ of~$v$ are the leaves in $T_v$, and
	that the $i$-partial subtree $T_{v,i}$ rooted at $v$ is the subtree of~$T_v$ containing only the first children~$w_1,\dots,w_i$ of $v$ for some~$i\in [t]$ for a vertex $v$ with children~$w_1,\dots,w_t$ of~$v$.
	For a vertex~$v\in V$ and given vectors~\myvec{a} and~\myvec{b}, we define~$\mathcal S^{(v)}_{\myvec{a},\myvec{b}}$ to be the family of sets of projects~$S$ such that
	\begin{itemize}
		\item $S$ contains exactly one project of~$P_i$ for each~$x_i\in \off(v)$,
		\item $S$ contains exactly~$a_k$ projects of cost~$c_k$ for each~$k \in [\var_c-1]$, and
		\item $S$ contains exactly~$b_\ell$ projects of \sprop~$w_\ell$ for each~$\ell \in [\var_w-1]$.
	\end{itemize}
	For a vertex~$v\in V$ with children~$u_1,\dots,u_t$, given vectors~\myvec{a}, and~\myvec{b} and a given integer~$i\in[t]$ we define~$\mathcal S^{(v,i)}_{\myvec{a},\myvec{b}}$ analogously, just that exactly one project of $P_j$ is chosen for each~$x_j\in \off(u_1)\cup\dots\cup\off(u_i)$.
	
	It follows that we can compute how many projects with cost~$c_{\var_c}$ and \sprop~$w_{\var_w}$ a set~$S\in \mathcal S^{(v)}_{\myvec{a},\myvec{b}}$ contains.
	That are~$a_{\var_c}^{(v)} := |\off(v)|-\sum_{j=1}^{\var_c-1} a_j$ projects with a cost of~$c_{\var_c}$ and~$b_{\var_w}^{(v)} := |\off(v)|-\sum_{j=1}^{\var_w-1} b_j$ projects with a \sprop of~$w_{\var_w}$.
	We want entries~$\DP[v,\myvec{a},\myvec{b}]$ to store the largest expected phylogenetic diversity~$\PDsub{T_v}(S)$ of the tree $T_v$, for a set~$S\in \mathcal S^{(v)}_{\myvec{a},\myvec{b}}$.
	We analogously want~$\DP'[v,i,\myvec{a},\myvec{b}]$ to store the largest expected phylogenetic diversity~$\PDsub{T_{v,i}}(S)$ of the tree $T_{v,i}$, for a set~$S\in\mathcal S^{(v,i)}_{\myvec{a},\myvec{b}}$.
	We further define the total \sprop to~be
	\begin{eqnarray}
		\label{eqn:sprops}
		w(b_{\var_w},\myvec{b}) := 1-
		(1-w_{\var_w})^{b_{\var_w}}
		\cdot
		\prod_{i=1}^{\var_w-1}
		(1-w_i)^{b_i},
	\end{eqnarray}
	when~$b_{\var_w}$ and~$\myvec{b}$ describe the number of chosen single~\sprops.
	
	\proofpara{Algorithm}
	As a base case, fix a taxon~$x_i$ with project list~$P_i$.
	As we want to select exactly one project of $P_i$, the project is clearly defined by $\myvec a$ and $\myvec b$.
	So, we store~$\DP[x_i,\myvec{a},\myvec{b}] = 0$, if~$P_i$ contains a project~$p = (c_k,w_\ell)$ such that
	\begin{itemize}
		\item ($k<\var_c$ and~$\myvec{a}={\myvec{0}}_{(k)+1}$ or~$k=\var_c$ and~$\myvec{a}={\myvec{0}}$), and
		\item ($\ell<\var_w$ and~$\myvec{b}={\myvec{0}}_{(\ell)+1}$ or~$\ell=\var_w$ and~$\myvec{b}={\myvec{0}}$).
	\end{itemize}
	Otherwise, store~$\DP[x_i,\myvec{a},\myvec{b}] = -\infty$.
	
	Let~$v$ be an internal vertex with children~$u_1,\dots,u_t$.
	We define
	\begin{eqnarray}
		\label{eqn:varc+varw-G-1}
		\DP'[v,1,\myvec{a},\myvec{b}]
		&=&
		\DP[u_1,\myvec{a},\myvec{b}]
		+ \w(v u_1) \cdot
		w\left(b_{\var_w}^{(u_{1})},\myvec{b}\right).
	\end{eqnarray}
	To compute further values of~$\DP'$, we use the recurrence
	\begin{eqnarray}
		\label{eqn:varc+varw-G-i+1}
		&&\DP'[v,i+1,\myvec{a},\myvec{b}]\\
		\nonumber
		&=&
		\max_{
			\begin{array}{l}
				\myvec{a'}, \myvec{b'}
			\end{array}}
		\DP'[v,i,\myvec{a}-\myvec{a'},\myvec{b}-\myvec{b'}]
		+ \DP[u_{i+1},\myvec{a'},\myvec{b'}]
		+ \w(v u_{i+1}) \cdot
		w\left(b_{\var_w}^{(u_{i+1})},\myvec{b'}\right)
		.
	\end{eqnarray}
	Herein, $\myvec{a'}$ and~$\myvec{b'}$ are selected to satisfy $\myvec{0}\le \myvec{a'}\le \myvec{a}$ and $\myvec{0}\le \myvec{b'}\le \myvec{b}$.
		
	Finally, we define~$\DP[v,\myvec{a},\myvec{b}] = \DP'[v,t,\myvec{a},\myvec{b}]$.
	
	Return \yes if there are~$\myvec{a}$ and~$\myvec{b}$ such that~$\sum_{i=1}^{\var_c-1} a_i \le |X|$, and~$\sum_{i=1}^{\var_w-1} b_i \le |X|$, and~$a_{\var_c}^{(r)} \cdot c_{\var_c} + \sum_{i=1}^{\var_c-1} a_i \cdot c_i \le B$, and~$\DP[\rho,\myvec{a},\myvec{b}]\ge D$ where~$\rho$ is the root of~$\Tree$.
	Otherwise, if no such~$\myvec{a}$ and~$\myvec{b}$ exist, return \no.
	
	\proofpara{Correctness}
	For any vertex~$v$, vectors~\myvec{a},~\myvec{b}, and an integer~$i$, we prove that~$\DP[v,\myvec{a},\myvec{b}]$ and~$\DP'[v,i,\myvec{a},\myvec{b}]$ store~$\max PD_{\Tree_v}(S)$ for~$S\in \mathcal S^{(v)}_{\myvec{a},\myvec{b}}$ and~$\max PD_{\Tree_{v,i}}(S)$ for~$S\in \mathcal S^{(v,i)}_{\myvec{a},\myvec{b}}$, respectively. This implies that the algorithm is correct.
	For a taxon~$x_i$, the tree~$T_{x_i}$ does not contain edges and so there is no diversity. We can only check if~\myvec{a} and~\myvec{b} correspond to a feasible project. So, the table~$F$ stores the correct value in the base cases.
	For an internal vertex~$v$ with children~$u_1,\dots,u_t$, and~$i\in [t-1]$,
	observe that~$PD_{\Tree_{v,1}}(S) = PD_{\Tree_{u_1}}(S) + \w(v u_1) \cdot w(b_{\var_w}^{(u_{i+1})},\myvec{b})$ for~$S\in \mathcal S^{(v,1)}_{\myvec{a},\myvec{b}}$, where~$w(b_{\var_w}^{(u_{i+1})},\myvec{b})$ is the \sprop at~$u_1$.
	Thus entry~$\DP'[v,1,\myvec{a},\myvec{b}]$ stores the correct value.
	Further, the value in entry~$\DP[v,\myvec{a},\myvec{b}]$ stores the correct value, when~$\DP'[v,t,\myvec{a},\myvec{b}]$ stores the correct value, because~$\mathcal S^{(v)}_{\myvec{a},\myvec{b}}=\mathcal S^{(v,t)}_{\myvec{a},\myvec{b}}$.
	It remains to show that the correct value is stored in~$\DP'[v,i+1,\myvec a,\myvec b]$.
	
	Now, assume as an induction hypothesis that in~$\DP[u_j,\myvec a,\myvec b]$ and~$\DP'[v,i,\myvec a,\myvec b]$ the correct value is stored, for an internal vertex~$v$ with children~$u_1,\dots,u_t$ and~$i\in [t-1]$.
	We first prove that if~$\DP'[v,i+1,\myvec{a},\myvec{b}]=d$, then there exists a set~$S\in \mathcal S^{(v,i+1)}_{\myvec{a},\myvec{b}}$ with~$PD_{\Tree_{v}}(S)=d$.
	Afterward, we prove that~$\DP'[v,i+1,\myvec{a},\myvec{b}]\ge PD_{\Tree_{v,i+1}}(S)$ for every set~$S\in \mathcal S^{(v,i+1)}_{\myvec{a},\myvec{b}}$.
	
	Let~$\DP'[v,i+1,\myvec{a},\myvec{b}]=d$.
	Let~\myvec{a'} and~\myvec{b'} be the vectors that maximize the right side of Equation~(\ref{eqn:varc+varw-G-i+1}) for~$\DP'[v,i+1,\myvec{a},\myvec{b}]$.
	By the induction hypothesis, there is a set~$S_G\in S^{(v,i)}_{\myvec{a}-\myvec{a'},\myvec{b}-\myvec{b'}}$ such that~$\DP'[v,i,\myvec{a}-\myvec{a'},\myvec{b}-\myvec{b'}]=PD_{\Tree_{v,i}}(S_G)$ and there is a set~$S_F\in S^{(u_{i+1})}_{\myvec{a'},\myvec{b'}}$ such that~$\DP[u_{i+1},i,\myvec{a'},\myvec{b'}]=PD_{\Tree_{u_{i+1}}}(S_F)$.
	Define~$S := S_G \cup S_F$.
	Then,
	\begin{eqnarray}
		PD_{\Tree_{v,i+1}}(S) &=& PD_{\Tree_{v,i}}(S_G) + PD_{\Tree_{v}}(S_F)\\
		\label{eqn:varc+varw-RA2}
		&=& PD_{\Tree_{v,i}}(S_G) + PD_{\Tree_{u_{i+1}}}(S_F) + \w(v u_{i+1}) \cdot w(b_{\var_w}^{(u_{i+1})},\myvec{b}).
	\end{eqnarray}
	This equals the right side of \Recc{eqn:varc+varw-G-i+1} and we conclude~$PD_{\Tree_v}(S)=d$.
	
	Conversely,
	let~$S\in \mathcal \mathcal S^{(v,i+1)}_{\myvec{a},\myvec{b}}$.
	Let~$S_F$ be the subset of projects of~$S$ that are from a project list of an offspring of~$u_{i+1}$ and define~$S_G = S\setminus S_F$.
	Let~$a_k$ be the number of projects in~$S_F$ with a cost of~$c_k$ and let~$b_\ell$ be the number of projects in~$S_F$ with a \sprop of~$b_\ell$. Define~$\myvec{a'}=(a_1,\dots,a_{\var_c-1})$ and~$\myvec{b'}=(b_1,\dots,b_{\var_w-1})$.
	Then,
	\begin{eqnarray}
		\label{eqn:varc+varw-LA1}
		&& \DP'[v,i+1,\myvec a,\myvec b]\\
		&\ge & \DP'[v,i,\myvec a-\myvec{a'},\myvec b-\myvec{b'}] + \DP[u_{i+1},\myvec{a'},\myvec{b'}] + \w(v u_{i+1}) \cdot w\left({b'}_{\var_w}^{(u_{i+1})},\myvec{b'}\right)\\
		\label{eqn:varc+varw-LA2}
		&=& PD_{\Tree_{v,i}}(S_G) + PD_{\Tree_{u_{i+1}}}(S_F) + \w(v u_{i+1}) \cdot w\left({b'}_{\var_w}^{(u_{i+1})},\myvec{b'}\right)\\
		\label{eqn:varc+varw-LA3}
		&=& PD_{\Tree_{v,i+1}}(S).
	\end{eqnarray}
	Inequality~(\ref{eqn:varc+varw-LA1}) follows from~\Recc{eqn:varc+varw-G-i+1}.
	By the definition of~$S_F$ and~$S_G$, Equation~(\ref{eqn:varc+varw-LA2}) is correct.
	Finally, Equation~(\ref{eqn:varc+varw-LA3}) follows from Equation~(\ref{eqn:varc+varw-RA2}).

	\proofpara{Running time}
	First, we prove how many options for vectors~$\myvec{a}$ and \myvec{b} there are.
	Because~$\sum_{i=1}^{\var_c-1} a_i \le |X|$, we conclude that if~$a_i=|X|$, then~$a_j=0$ for~$i\ne j$.
	Otherwise, for~$a_i \in [|X|-1]_0$ there are~$\Oh(|X|^{\var_c-1})$ options for \myvec a of not containing~$|X|$, such that altogether there are~$\Oh(|X|^{\var_c-1}+|X|)=\Oh(|X|^{\var_c-1})$ options for a suitable~\myvec a.
	Likewise, there are~$\Oh(|X|^{\var_w-1})$ options for a suitable~\myvec b.
	
	Clearly, the base cases can be computed in~$\Oh(\numP)$~time, each.
	Let~$v$ be an internal vertex and fix~$\myvec{a}$ and~$\myvec{b}$.
	For a vertex~$w\in V$, we can compute in~$\Oh(n)$~time the set~$\off(w)$.
	It follows that~$w(b_{\var_w}^{(u_{i})},\myvec{b})$ can be computed in~$\Oh(n+\var_w)$~time such that~\Recc{eqn:varc+varw-G-1} can be computed in~$\Oh(n+\var_w)$~time for fixed~$v$,~$i$,~$\myvec{a}$, and~$\myvec{b}$.
	For fixed~$\myvec{a}$ and \myvec{b}, there are~$\Oh\left(|X|^{\var_c+\var_w-2}\right)$ options to choose~$\myvec{a'}$ and \myvec{b'}.
	Therefore,~\Recc{eqn:varc+varw-G-i+1} can be evaluated in~$\Oh\left(|X|^{\var_c+\var_w-2}\cdot (n+\var_w)\right)$~time.
	
	\Recc{eqn:varc+varw-G-1} has to be computed once for every internal vertex.
	\Recc{eqn:varc+varw-G-i+1} has to be computed once for every vertex except the root.
	Altogether, in~$\Oh\left(|X|^{2(\var_c+\var_w-2)}\cdot (n+\var_w) + |X| \cdot \numP\right)$~time all entries of the tables~$\DP$ and~$\DP'$ can be computed.
	Additionally,~$\Oh(|X|^{\var_c+\var_w-2}\cdot (\var_c+\var_w))$~time is needed to check whether there are vectors~$\myvec{a}$ and~$\myvec{b}$ such that
	\begin{itemize}
		\item $\DP[r,\myvec{a},\myvec{b}]\ge D$,
		\item $\sum_{i=1}^{\var_c-1} a_i \le |X|$,
		\item $\sum_{i=1}^{\var_w-1} b_i \le |X|$, and
		\item $a_{\var_c}^{(r)} \cdot c_{\var_c} + \sum_{i=1}^{\var_c-1} a_i \cdot c_i \le B$.
	\end{itemize}
	Because~$\Oh(n)=\Oh(|X|)$ and~$\Oh(\numP)\le \Oh(|X| \cdot \var_w)$, the overall running time is~$\Oh\left(|X|^{2(\var_c+\var_w-1))}\cdot (\var_c+\var_w)\right)$ in our RAM-model.
	We want to declare, however, that the table entries may store numbers with an encoding length of up to~$|X| \cdot \wcode + \log(D)$, which is not linear in the input size.
\end{proof}

The algorithm of Theorem~\ref{thm:GNAP-varc+varw} can easily be adjusted for~\NAP[0]{c_i}{1}{2}, in which the only \sprops are 0 and 1.
In this problem, we additionally can compute~$w(b_{\var_w}^{(u_{1})},\myvec{b})$ faster, as it can only be 0 or 1.
\begin{corollary}
	\label{cor:GNAP-wcode=1-varc}
	\NAP[0]{c_i}{1}{2} can be solved in~$\Oh\left(|X|^{2(\var_c+1)}\cdot \var_c\right)$ time.
\end{corollary}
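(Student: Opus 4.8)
The plan is to invoke the dynamic program of Theorem~\ref{thm:GNAP-varc+varw} essentially verbatim, after observing that \NAP[0]{c_i}{1}{2} is precisely the restriction of \GNAP to instances with $\var_w = 2$, where the only \sprops are $w_1 = 0$ and $w_2 = 1$. Under this restriction each project list again contributes exactly one project, the tables $\DP$ and $\DP'$ carry only a single $\myvec{b}$-coordinate $b_1$ (the number of chosen projects of \sprop $0$), and the implicit count $b_{\var_w}^{(v)} = |\off(v)| - b_1$ records how many offspring of $v$ are saved.

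First I would specialize Equation~(\ref{eqn:sprops}). With $w_1 = 0$ and $w_2 = 1$ the total-\sprop factor collapses to
\begin{equation*}
	w(b_{\var_w},\myvec{b}) = 1 - (1-1)^{b_{\var_w}} \cdot (1-0)^{b_1} = 1 - 0^{b_{\var_w}},
\end{equation*}
which, using the convention $0^0 = 1$, equals $1$ exactly when $b_{\var_w} \ge 1$ and $0$ otherwise. Hence this factor is merely the indicator that at least one offspring of the vertex is saved, matching the intuition that in the $0/1$ model an edge contributes its full weight precisely when some descendant taxon survives. This lets the factor be read off in $\Oh(1)$ time once $b_{\var_w}^{(v)}$ is known, rather than through the general product over $\var_w$ terms.

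With the factor so simplified, the base cases, the recurrences~(\ref{eqn:varc+varw-G-1}) and~(\ref{eqn:varc+varw-G-i+1}), and the induction establishing correctness all carry over unchanged, so no new structural argument is required. It then remains only to substitute $\var_w = 2$ into the running-time bound $\Oh\!\left(|X|^{2(\var_c+\var_w-1)}\cdot(\var_c+\var_w)\right)$ of Theorem~\ref{thm:GNAP-varc+varw}, yielding $\Oh\!\left(|X|^{2(\var_c+1)}\cdot(\var_c+2)\right) = \Oh\!\left(|X|^{2(\var_c+1)}\cdot\var_c\right)$ since $\var_c \ge 1$. The only point needing a moment's care—and it is minor rather than a genuine obstacle—is checking that collapsing the $\myvec{b}$-dimension from $\var_w-1$ coordinates to the single coordinate $b_1$ keeps the bookkeeping in the join-style recurrence~(\ref{eqn:varc+varw-G-i+1}) consistent; but the maximization over $\myvec{b}'$ now simply ranges over the $\Oh(|X|)$ values of $b_1'$, and the per-child counts $b_{\var_w}^{(u_{i+1})}$ remain coherent because $\off(\cdot)$ is partitioned across the children, exactly as in the general proof. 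The corollary follows.
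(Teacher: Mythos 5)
Your proposal is correct and follows essentially the same route as the paper, which also obtains the corollary by running the algorithm of Theorem~\ref{thm:GNAP-varc+varw} on instances with \sprops only $0$ and $1$ (so $\var_w=2$) and noting that the total-\sprop factor of Equation~(\ref{eqn:sprops}) degenerates to a $0/1$ indicator computable in constant time. Your substitution of $\var_w=2$ into the bound $\Oh\left(|X|^{2(\var_c+\var_w-1)}\cdot(\var_c+\var_w)\right)$, giving $\Oh\left(|X|^{2(\var_c+1)}\cdot\var_c\right)$, is exactly the paper's calculation.
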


As each project with a cost higher than~$B$ can be deleted, we may assume that there are no such projects.
This implies that~$\var_c\le C+1\le B+1$.
Thus, Theorem~\ref{thm:GNAP-varc+varw} also implies that~\GNAP{} is \XP with respect to~$C+\var_w$ and~$B+\var_w$ with astronomical running times of $\Oh\left(|X|^{2(C+\var_w-1)}\cdot (C+\var_w)\right)$ and $\Oh\left(|X|^{2(B+\var_w-1)}\cdot (B+\var_w)\right)$, respectively.
However, we can adjust the algorithm so that~$B$ is not in the exponent of the running time.
Instead of declaring how many projects of cost $c_i$ for $i\in [\var_c]$ are selected, we declare the budget that can be spent.

\begin{theorem}
	\label{thm:GNAP-B+varw}
	\GNAP can be solved in~$\Oh\left(B^2\cdot |X|^{2(\var_w-1)}\cdot \var_w\right)$ time.
\end{theorem}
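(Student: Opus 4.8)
The plan is to reuse the dynamic program of Theorem~\ref{thm:GNAP-varc+varw} almost verbatim, replacing the cost-count vector~$\myvec{a}$ — the source of the $|X|^{\var_c-1}$ factor — by a single budget coordinate~$\beta\in[B]_0$. The observation that makes this sound is that the feasibility constraint $\Costs(S)\le B$ depends only on the \emph{total} cost of the chosen projects, whereas the expected diversity depends on the chosen survival probabilities through the nonlinear products in Equation~(\ref{eqn:sprops}). Hence the per-cost breakdown of a partial solution is never consulted and can be collapsed to one number~$\beta$, while the per-value breakdown $\myvec{b}=(b_1,\dots,b_{\var_w-1})$ of the survival probabilities must still be carried in full.

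Concretely, for a vertex~$v$, a budget~$\beta\in[B]_0$, and a vector~$\myvec{b}$, I would let $\mathcal{S}^{(v)}_{\beta,\myvec{b}}$ denote the family of project sets~$S$ that pick exactly one project of~$P_i$ for each $x_i\in\off(v)$, have total cost exactly~$\beta$, and contain exactly~$b_\ell$ projects of survival probability~$w_\ell$ for each $\ell\in[\var_w-1]$ (so $b_{\var_w}^{(v)}:=|\off(v)|-\sum_\ell b_\ell$ projects of probability~$w_{\var_w}$). Entry $\DP[v,\beta,\myvec{b}]$ stores $\max\{\PDsub{T_v}(S)\mid S\in\mathcal{S}^{(v)}_{\beta,\myvec{b}}\}$, with an auxiliary table $\DP'[v,i,\beta,\myvec{b}]$ doing the same for the $i$-partial subtree $T_{v,i}$. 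The base case at a taxon~$x_i$ sets $\DP[x_i,\beta,\myvec{b}]=0$ exactly when $P_i$ contains a project $(c,w)$ with $c=\beta$ whose probability~$w$ is consistent with~$\myvec{b}$ (i.e.\ $\myvec{b}$ is the $\ell$-th unit vector if $w=w_\ell$ with $\ell<\var_w$, and $\myvec{b}=\myvec{0}$ if $w=w_{\var_w}$), and $-\infty$ otherwise; uniqueness of the cost~$\beta$ within~$P_i$ (by the standing assumption on project lists) makes this well defined. The merge recurrences are identical to Recurrences~(\ref{eqn:varc+varw-G-1}) and~(\ref{eqn:varc+varw-G-i+1}), except that the split over~$\myvec{a}'$ becomes a convolution over $\beta'\in[\beta]_0$: when attaching child~$u_{i+1}$ I maximise over all ways of writing the accumulated budget as $(\beta-\beta')+\beta'$ and the accumulated vector as $(\myvec{b}-\myvec{b}')+\myvec{b}'$, adding $\w(vu_{i+1})\cdot w(b_{\var_w}^{(u_{i+1})},\myvec{b}')$ for the edge above~$u_{i+1}$. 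Finally I set $\DP[v,\beta,\myvec{b}]=\DP'[v,t,\beta,\myvec{b}]$ and return \yes iff $\DP[\rho,\beta,\myvec{b}]\ge D$ for some $\beta\le B$ and some~$\myvec{b}$ with $\sum_\ell b_\ell\le|X|$.

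Correctness would follow from the same two-direction induction as in Theorem~\ref{thm:GNAP-varc+varw}: from a value stored in a partial entry I exhibit a witness set, and conversely every set in $\mathcal{S}^{(v,i+1)}_{\beta,\myvec{b}}$ is dominated by the recurrence. The only genuinely new ingredient is the additivity of cost across the disjoint offspring sets of distinct children, which guarantees that the split $\beta=(\beta-\beta')+\beta'$ exactly enumerates all feasible cost distributions; the diversity computation and the survival-probability bookkeeping are unchanged. For the running time, there are $\Oh(B)$ choices of~$\beta$ and $\Oh(|X|^{\var_w-1})$ choices of~$\myvec{b}$, so $\Oh(B\cdot|X|^{\var_w-1})$ entries per (vertex, child-index) pair; each merge maximises over $\Oh(B\cdot|X|^{\var_w-1})$ pairs $(\beta',\myvec{b}')$, and evaluating~(\ref{eqn:sprops}) costs $\Oh(\var_w)$ once the offspring counts are precomputed. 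Summing over the $\Oh(n)=\Oh(|X|)$ merges and absorbing polynomial factors exactly as in the earlier proof yields the stated $\Oh(B^2\cdot|X|^{2(\var_w-1)}\cdot\var_w)$ bound, with the caveat (as before) that in the RAM model the stored diversities may have encoding length up to $|X|\cdot\wcode+\log D$.

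I expect the main obstacle to be purely of the ``convince yourself nothing is lost'' variety rather than a computation: one must verify that collapsing the cost coordinates to a single total~$\beta$ preserves every solution, and this rests precisely on the fact that neither the diversity objective nor any constraint other than $\Costs(S)\le B$ refers to the cost breakdown — in contrast to the survival probabilities, whose multiset genuinely affects the products in~(\ref{eqn:sprops}) and therefore cannot be collapsed. Getting the base case and the budget convolution to respect this asymmetry is the one place where the adaptation of the earlier proof needs care.
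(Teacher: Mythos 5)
Your proposal matches the paper's own proof essentially verbatim: the paper also solves the theorem by taking the dynamic program of Theorem~\ref{thm:GNAP-varc+varw} and replacing the cost-count vector~$\myvec{a}$ by a single budget coordinate~$k\in[B]_0$, with the same convolutional split of the budget over children and the same final check at the root (the only cosmetic difference being that the paper constrains $\Costs(S)\le k$ rather than requiring the total cost to equal~$\beta$, which is immaterial since you maximise over $\beta\le B$ at the end). Your correctness argument and running-time accounting coincide with the paper's, so the proposal is correct.
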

\begin{proof}
	\proofpara{Table definition}
	We describe a dynamic programming algorithm with two tables~$\DP$ and~$\DP'$ that have a dimension for all the~$\var_w$ different \sprops, except for~$\var_w-1$.
	For a vertex~$v\in V$, a given vector~\myvec{b} and $k\in [B]_0$, we define~$\mathcal S^{(v)}_{k,\myvec{b}}$ to be the family of sets of projects~$S$ such that
	\begin{itemize}
		\item $S$ contains exactly one project of~$P_i$ for each~$x_i\in \off(v)$,
		\item $\Costs(S)\le k$, and
		\item $S$ contains exactly~$b_\ell$ projects of \sprop~$w_\ell$ for each~$\ell \in [\var_w-1]$.
	\end{itemize}
	For a vertex~$v\in V$ with children~$u_1,\dots,u_t$, a given vector~\myvec{b} and integers~$k\in [B]_0$ and~$i\in[t]$ we define~$\mathcal S^{(v,i)}_{k,\myvec{b}}$ analogously, just that exactly one project of $P_j$ is chosen for each~$x_j\in \off(u_1)\cup\dots\cup\off(u_i)$.

	As in the previous proof, there are~$b_{\var_w}^{(v)} := |\off(v)|-\sum_{j=1}^{\var_w-1} b_j$ projects with \sprop~$w_{\var_w}$.
	We want entries~$\DP[v,k,\myvec{b}]$ to store the largest expected phylogenetic diversity~$\PDsub{T_v}(S)$ of the tree $T_v$, for a set~$S\in \mathcal S^{(v)}_{k,\myvec{b}}$ and,
	respectively,
	$\DP'[v,i,k,\myvec{b}]$ to store the largest expected phylogenetic diversity~$\PDsub{T_{v,i}}(S)$ of the tree~$T_{v,i}$, for a set~$S\in\mathcal S^{(v,i)}_{k,\myvec{b}}$.
	We further define the total \sprop analogously as in Equation~(\ref{eqn:sprops}).

	\proofpara{Algorithm}
	As a base case, fix a taxon~$x_i$ with project list~$P_i$.
	As we want to select exactly one project of $P_i$, the project is clearly defined by $k$ and $\myvec b$.
	So, we store~$\DP[x_i,k,\myvec{b}] = 0$, if~$P_i$ contains a project~$p = (c_t,w_\ell)$ such that $c_t\le k$, and
	\begin{itemize}
		\item ($\ell<\var_w$ and~$\myvec{b}={\myvec{0}}_{(\ell)+1}$ or~$\ell=\var_c$ and~$\myvec{b}={\myvec{0}}$).
	\end{itemize}
	Otherwise, store~$\DP[x_i,k,\myvec{b}] = -\infty$.

	Let~$v$ be an internal vertex with children~$u_1,\dots,u_t$.
	We define
	\begin{eqnarray}
		\label{eqn:B+varw-G-1}
		\DP'[v,1,k,\myvec{b}]
		&=&
		\DP[u_1,k,\myvec{b}]
		+ \w(v u_1) \cdot
		w\left(b_{\var_w}^{(u_{1})},\myvec{b}\right).
	\end{eqnarray}
	To compute further values of~$G$, we can use the recurrence
	\begin{eqnarray}
	\label{eqn:B+varw-G-i+1}
	&&\DP'[v,i+1,k,\myvec{b}]\\
	\nonumber
	&=&
	\max_{
		\begin{array}{l}
			k', \myvec{b'}
	\end{array}}
	\DP'[v,i,k-k',\myvec{b}-\myvec{b'}]
	+ \DP[u_{i+1},k',\myvec{b'}]
	+ \w(v u_{i+1}) \cdot
	w\left(b_{\var_w}^{(u_{i+1})},\myvec{b'}\right)
	.
	\end{eqnarray}
	Herein, $k$ and $\myvec{b'}$ are selected to satisfy $k' \in [k]_0$ and $\myvec{0}\le \myvec{b'}\le \myvec{b}$.

	Finally, we define~$\DP[v,\myvec{a},\myvec{b}] = \DP'[v,t,\myvec{a},\myvec{b}]$.
	
	Return \yes if there is a vector~$\myvec{b}$ such that~$\sum_{i=1}^{\var_w-1} b_i \le |X|$, and~$\DP[\rho,B,\myvec{b}]\ge D$ where~$\rho$ is the root of~$\Tree$.
	Otherwise, if no such vector~$\myvec{b}$ exists, return \no.

	\proofpara{Correctness and running time}
	The correctness and the running time can be proven analogously to the correctness and running time of the algorithm in Theorem~\ref{thm:GNAP-varc+varw}.
\end{proof}

Generally, we have to assume that~$B$ is exponential in the input size. However, there are special cases of~\GNAP, in which this is not the case, such as the case with unit costs for projects.
Since~$\var_w\le 2^{\wcode}$, we conclude the following from Theorem~\ref{thm:GNAP-B+varw}.
\begin{corollary}
	\label{cor:GNAP-C=1-XP-varw}
	\GNAP is \XP with respect to~$\var_w$ and~$\wcode$, if~$B$ is bounded polynomially in the input size.
\end{corollary}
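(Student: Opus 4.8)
The plan is to derive this directly from the running time established in Theorem~\ref{thm:GNAP-B+varw}, namely $\Oh\left(B^2 \cdot |X|^{2(\var_w-1)} \cdot \var_w\right)$. First I would identify which factors in this expression can become super-polynomial in the input size. The factor $\var_w$ is trivially bounded by $|\Instance|$, and $|X| \le |\Instance|$, so the only factors that could cause trouble are $B^2$ and the exponentiated factor $|X|^{2(\var_w-1)}$.

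Under the hypothesis that $B$ is bounded by a polynomial in the input size, the factor $B^2$ is itself in $\poly(|\Instance|)$. Hence the running time collapses to $\poly(|\Instance|) \cdot |X|^{2(\var_w-1)}$. Using $|X| \le |\Instance|$ and absorbing the remaining polynomial factors, this is of the form $\poly(|\Instance|)^{g(\var_w)}$ for the computable function $g(\var_w) = 2(\var_w-1) + \Oh(1)$, which is exactly the shape required by Definition~\ref{def:XP}. This establishes that \GNAP is \XP with respect to~$\var_w$.

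For the parameter~$\wcode$, the key observation is the elementary counting bound $\var_w \le 2^{\wcode}$: the number of distinct survival probabilities cannot exceed the number of distinct binary strings of length at most~$\wcode$. Substituting this bound into the exponent of the running time from the previous paragraph yields $\poly(|\Instance|)^{g(2^{\wcode})}$, where the exponent now depends only on~$\wcode$. This is again an \XP running time, now with respect to~$\wcode$, completing the statement.

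There is no genuine obstacle in this corollary; it follows immediately from Theorem~\ref{thm:GNAP-B+varw} together with the bound $\var_w \le 2^{\wcode}$ and the polynomial-budget assumption. The only point requiring a moment of care is to confirm that~$B$ enters the running time of Theorem~\ref{thm:GNAP-B+varw} \emph{only as a polynomial factor} and not in the exponent—which is precisely the advantage that theorem has over the $\var_c+\var_w$ bound of Theorem~\ref{thm:GNAP-varc+varw}. Had~$B$ appeared in the exponent, the polynomial-budget hypothesis alone would not have sufficed to conclude an \XP running time.
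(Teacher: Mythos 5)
Your proof is correct and follows essentially the same route as the paper, which also derives the corollary directly from Theorem~\ref{thm:GNAP-B+varw} by observing that $B$ occurs only as the polynomial factor $B^2$ (never in the exponent) and then applying the bound $\var_w \le 2^{\wcode}$ to transfer the \XP{} result to the parameter~$\wcode$. Your closing remark about why the $B+\var_w$ bound, rather than the $\var_c+\var_w$ bound of Theorem~\ref{thm:GNAP-varc+varw}, is the right starting point is exactly the point the paper makes when motivating the corollary.
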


\subsection{Generalized Noah's Ark Problem on Stars}
We now consider the special case of \GNAP where the given phylogenetic tree is a star.
We first show that this special case---and therefore~\GNAP in general---is \Wh1-hard with respect to the number of taxa,~$|X|$. This implies that the algorithm in Proposition~\ref{prop:GNAP-GNAP-X-XP} is tight in some sense.
Afterward, we prove that most of the \FPT and \XP algorithms that, in Section~\ref{sec:MCKP}, we presented for \MCKP can also be adopted for this special case of~\GNAP.

\subsubsection{Hardness}
\begin{theorem}
	\label{thm:GNAP-X-W1hard}
	\GNAP is \Wh1-hard with respect to~$|X|+\Delta$, even on ultrametric phylogenetic trees with~$\max_\w=\height_{\Tree}=1$, and~$D=1$, where $\Delta$ is the largest degree of a vertex in the phylogenetic tree.
\end{theorem}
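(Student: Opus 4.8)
The plan is to give a parameterized reduction from \msSubSum, which is \Wh1-hard with respect to~$k$ even when every integer of the multiset~$Z$ has multiplicity at least~$k$ (Proposition~\ref{prop:MultiSubSum}). The guiding observation is that on a star every edge~$\rho x_i$ satisfies~$\off(x_i) = \{x_i\}$, so for a selection~$S = \{p_{1,j_1},\dots,p_{k,j_k}\}$ of one project per taxon the phylogenetic diversity collapses to~$PD_{\Tree}(S) = \sum_i w_{i,j_i}$, the plain sum of the chosen survival probabilities. Thus \GNAP on a star is a knapsack-type selection in which the budget constraint and the diversity constraint can squeeze the selected values from two sides, and the trick is to use the same quantity $\sum z$ once directly (as cost, bounded above by~$B$) and once rescaled (as survival probability, bounded below by~$D$).

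Concretely, given an instance~$(Z,Q,k)$ of \msSubSum, I would first discard every element exceeding~$Q$ (no such element can appear in a size-$k$ solution with positive integers) and let~$z_1,\dots,z_n$ be the remaining distinct values, each at most~$Q$. I build a star~$\Tree$ with root~$\rho$ and~$k$ leaves~$x_1,\dots,x_k$, every edge of weight~$1$; this tree has height~$1$, is trivially ultrametric, and has~$\max_\w = 1$. Every taxon receives the same project list~$P = ((z_1, z_1/Q),\dots,(z_n, z_n/Q))$, ordered by increasing~$z_i$ so that both the costs~$z_i$ and the survival probabilities~$z_i/Q$ increase strictly; note each~$z_i/Q \in (0,1]$. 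Finally set~$B := Q$ and~$D := 1$. Then~$|X| = \Delta = k$, so the new parameter~$|X|+\Delta = 2k$ is bounded in~$k$, and the whole construction is polynomial.

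For correctness, a solution selects one project per list, which corresponds to a size-$k$ multiset~$\{z_{(1)},\dots,z_{(k)}\}$ over the distinct values; this is realisable because the~$k$ identical lists allow a value to be reused in every list, exactly matching the ``arbitrarily often'' selection of \msSubSum. The budget constraint reads~$\sum_{\ell} z_{(\ell)} \le Q$, while~$PD_{\Tree}(S) = \sum_{\ell} z_{(\ell)}/Q \ge 1$ reads~$\sum_{\ell} z_{(\ell)} \ge Q$; together these force~$\sum_{\ell} z_{(\ell)} = Q$, i.e.\ precisely a \msSubSum solution. Conversely any multiset of~$k$ values summing to~$Q$ yields a selection of total cost~$Q = B$ and diversity~$\sum_\ell z_{(\ell)}/Q = 1 = D$, hence a \GNAP solution.

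The step requiring the most care is exactly this two-sided encoding of the equality~$\sum z = Q$: survival probabilities are confined to~$[0,1]$, which forces the rescaling by~$1/Q$ and the preliminary removal of oversized elements so that every~$z_i/Q$ is a legal probability, and the single fixed threshold~$D = 1$ must simultaneously deliver the lower bound. Everything else—that the tree is a star, ultrametric, has unit edge weights, and that~$\Delta = |X|$ for a star—is immediate from the construction. As a by-product, the theorem shows that the \XP-algorithm of Proposition~\ref{prop:GNAP-GNAP-X-XP} for the parameter~$|X|$ is essentially best possible.
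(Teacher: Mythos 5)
Your proof is correct, and the construction you arrive at is, in substance, the paper's own reduction with the intermediate stop removed. The paper proves the theorem by reducing from \MCKP parameterized by the number of classes~$m$ (Theorem~\ref{thm:Pre-MCKP-m-W1}); that hardness result in turn rests on exactly the multiplicity-at-least-$k$ \SubSum variant you start from (Proposition~\ref{prop:MultiSubSum}), and the \MCKP-to-\GNAP step builds the same unit-weight star with one leaf per class, rescaling item values by~$1/D$ to obtain \sprops, just as you rescale by~$1/Q$. Composing the paper's two reductions on a multiplicity-$k$ \SubSum instance yields $k$ identical project lists $((z_1,z_1/Q),\dots,(z_n,z_n/Q))$ with $B=Q$ and target diversity~$1$ --- precisely your instance, including the two-sided squeeze forcing $\sum_{\ell} z_{(\ell)} = Q$. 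Your bookkeeping is in order: discarding values above~$Q$ keeps every \sprop in $[0,1]$, distinct values give the strictly increasing cost/probability order required of project lists, multiplicity at least~$k$ makes a size-$k$ multiset over the distinct values correspond exactly to a \msSubSum solution, and $|X|+\Delta = 2k$ bounds the new parameter. What the inlined route gives up is the generality of the paper's \MCKP detour: because the paper reduces from arbitrary \MCKP instances, the same construction instantly transfers the \NP-hardness of \MCKP with two items per class (Observation~\ref{obs:Pre-MCKP-NP-L=2}) to Corollary~\ref{cor:GNAP-D=height=1,L=2,ultrametric} (two projects per taxon), and it pairs with the converse reduction (Proposition~\ref{prop:GNAP-height=1->mckp}) to make \GNAP on stars and \MCKP algorithmically interchangeable; your direct reduction produces long project lists and yields only the \Wh{1}-hardness statement itself.
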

\begin{proof}
	\proofpara{Reduction}
	We reduce from \MCKP, which by Theorem~\ref{thm:Pre-MCKP-m-W1} is \Wh1-hard with respect to the number of classes~$m$.
	Let~$\mathcal I=(N,\{N_1,\dots,N_m\},c,d,B,D)$ be an instance of \MCKP.
	We define an instance~$\mathcal I'=(\Tree,\mathcal P,B':=B,D':=1)$ of \GNAP in which the phylogenetic~$X$-tree~$\Tree=(V,E,\w)$ is a star with root~$\rho$ and the vertex set is~$V:=\{\rho\} \cup X$, with~$X:=\{x_1,\dots,x_m\}$. Set~$\w(e):=1$ for every~$e\in E$.
	For every class~$N_i=\{ a_{i,1}, \dots, a_{i,\ell_i} \}$, define projects~$p_{i,j} := (c_{i,j} := c(a_{i,j}), w_{i,j} := d(a_{i,j})/D )$ of a project list~$P_i$ for taxon~$x_i$. The~$|X|$-collection of projects~$\mathcal P$ contains all these project lists~$P_i$.
	
	\proofpara{Correctness}
	Because we may assume~$0\le d(a)\le D$ for all~$a\in N$, the \sprops~$w_{i,j}$ are in~$\mathbb{R}_{[0,1]}$ for all~$i\in [m]$ and $j\in [|N_i|]$.
	The tree has $m$ taxa and a maximum degree of~$m$. The reduction is clearly computable in polynomial time, so it only remains to show the equivalence.
	
	Let~$S$ be a solution for instance~\Instance and assume~$S\cap N_i=\{a_{i,j_i}\}$ without loss of generality.
	We show that~$S'=\{p_{i,j_i} \mid i\in [m]\}$ is a solution for~$\mathcal I'$:
	The cost of the set~$S'$ is~$\sum_{i=1}^m c_{i,j_i} = \sum_{i=1}^m c(a_{i,j_i}) \le B$ and further
	
	\begin{eqnarray*}
		PD_\Tree(S')
		&=& \sum_{(v,x_i)\in E} \w(v x_i) \cdot w_{i,j_i}\\
		&=& \sum_{(v,x_i)\in E} 1 \cdot d(a_{i,j})/D\\
		&=& \frac 1D \cdot \sum_{i=1}^m d(a_{i,j}) \ge 1 = D'.
	\end{eqnarray*}
	
	Conversely, let~$S=\{p_{1,i_1},\dots,p_{m,j_{m}}\}$ be a solution for instance~$\Instance'$.
	We show that~$S'=\{a_{1,i_1},\dots,a_{m,j_{m}}\}$ is a solution for \Instance.
	Clearly,~$S'$ contains exactly one item per class.
	The cost of the set~$S'$ is~$c_\Sigma(S') = \sum_{i=1}^m c(a_{i,j_i}) = \sum_{i=1}^m c_{i,j_i} \le B$.
	The value of~$S'$ is~$d_\Sigma(S') = \sum_{i=1}^m d(a_{i,j_i}) = \sum_{i=1}^m w_{i,j_i} \cdot D = PD_\Tree(S)\cdot D \ge D$.
\end{proof}

By Observation~\ref{obs:Pre-MCKP-NP-L=2}, \MCKP is \NP-hard, even if every class contains at most two items (of which one has no cost and no value). Because the above reduction is computed in polynomial time, we conclude the following.
\begin{corollary}
	\label{cor:GNAP-D=height=1,L=2,ultrametric}
	\NAP[0]{c_i}{b_i}{2} is \NP-hard, even on ultrametric phylogenetic trees with~$\height_{\Tree}=\max_\w=1$, and~$D=1$.
\end{corollary}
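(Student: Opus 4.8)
The plan is to obtain the corollary directly by feeding the strengthened hardness of \MCKP from Observation~\ref{obs:Pre-MCKP-NP-L=2} into the polynomial-time reduction constructed in the proof of Theorem~\ref{thm:GNAP-X-W1hard}. Recall that Observation~\ref{obs:Pre-MCKP-NP-L=2} supplies \NP-hard instances of \MCKP in which every class $N_i$ consists of exactly two items, one of which has cost $0$ and value $0$. I would simply apply the reduction of Theorem~\ref{thm:GNAP-X-W1hard} to these restricted instances and read off the resulting \GNAP instance.

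First I would trace what that reduction does to such an instance. Each class $N_i = \{a_{i,1}, a_{i,2}\}$ is mapped to a project list $P_i$ attached to the leaf $x_i$ of a star $\Tree$ with unit edge weights, where item $a_{i,j}$ becomes the project $p_{i,j} = (c(a_{i,j}), d(a_{i,j})/D)$. The dummy item with $c = d = 0$ therefore becomes the project $(0,0)$, while the other item becomes a project $(c_i, b_i)$ with $b_i = d_i/D$. Hence every project list contains exactly two projects, one of which is $(0,0)$ --- which is precisely the shape required by \NAP[0]{c_i}{b_i}{2}. As already verified inside the proof of Theorem~\ref{thm:GNAP-X-W1hard}, the assumption $0 \le d(a) \le D$ guarantees $b_i \in \mathbb{R}_{[0,1]}$, so the output is a legal instance.

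Next I would confirm the structural side conditions named in the statement. The constructed tree is a star, so $\height_{\Tree} = 1$; all edge weights equal $1$, so $\max_\w = 1$; every root-to-leaf path has weight $1$, so $\Tree$ is ultrametric; and the reduction sets $D' = 1$. Since the reduction is computable in polynomial time and preserves \yes/\no-instances (both established in Theorem~\ref{thm:GNAP-X-W1hard}), and since \MCKP with two-item classes is \NP-hard, the corollary follows.

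There is no genuine obstacle here, as the statement is an immediate corollary; the only delicate point is a degenerate case. If the non-dummy item satisfies $c_i = 0$ or $d_i = 0$, then its project coincides with $(0,0)$ in one coordinate and the two projects collapse, violating the strict-ordering convention on project lists. I would dispose of this by assuming without loss of generality that the underlying items have positive cost and positive value (a cost-$0$ item can always be taken for free and a value-$0$ item can be discarded), so that $0 = a_i < b_i$ and $0 < c_i$ hold and each $P_i$ is a genuine two-project list of the required form.
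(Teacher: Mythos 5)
Your proposal is exactly the paper's proof: the corollary is obtained by feeding the two-item-per-class \NP-hard instances of \MCKP from Observation~\ref{obs:Pre-MCKP-NP-L=2} into the reduction of Theorem~\ref{thm:GNAP-X-W1hard} and observing that the output is a legal \NAP[0]{c_i}{b_i}{2} instance on an ultrametric unit-weight star with $D=1$. Your additional handling of the degenerate case where the non-dummy item has cost or value $0$ (which would violate the strict-ordering convention on project lists) is a small but legitimate refinement that the paper leaves implicit.
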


The~$X$-tree that has been constructed in the reduction in the proof of Theorem~\ref{thm:GNAP-X-W1hard}, is a star and therefore has a relatively high degree.
In the following, we show that~\GNAP is also \Wh{1}-hard with respect to~$|X|$ on binary phylogenetic trees.
\begin{corollary}
	\label{cor:GNAP-X+Delta}
	\GNAP is \Wh1-hard with respect to~$|X|+D+\height_\Tree$ even on binary phylogenetic trees with~$\max_\w=1$.
\end{corollary}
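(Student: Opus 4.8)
The plan is to reduce from \MCKP, which by Theorem~\ref{thm:Pre-MCKP-m-W1} is \Wh{1}-hard with respect to the number of classes~$m$, following the proof of Theorem~\ref{thm:GNAP-X-W1hard} but replacing the star by a \emph{binary caterpillar}, so that the tree becomes binary while its height stays linear in~$m$. Given an instance $\mathcal I=(N,\{N_1,\dots,N_m\},c,d,B,D)$ of \MCKP (where, as in the theorem, we may assume $0\le d(a)\le D$ so that the survival probabilities are well defined), I would keep the project lists of the theorem, setting $w_{i,j}:=d(a_{i,j})/D$ on the taxon $x_i$ corresponding to class~$N_i$, and build a spine $\rho=u_0,u_1,\dots,u_{m-1}$ of internal vertices in which $u_j$ has children $x_{j+1}$ and $u_{j+1}$ for $j<m-1$, while $u_{m-1}$ has the two children $x_m$ and one extra \emph{special} leaf~$s$. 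Every edge receives weight~$1$, so $\max_\w=1$, the root has out-degree~$2$, every other internal vertex has degree~$3$, and the height is exactly~$m$.

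The crucial point, and the main obstacle relative to the star case, is that a binary tree necessarily has internal edges, and by the definition of $PD_{\Tree}$ such an edge $uv$ contributes $\w(uv)\cdot\bigl(1-\prod_{x_i\in\off(v)}(1-w_{i,j_i})\bigr)$, a quantity that is \emph{not} additive in the chosen survival probabilities. I would neutralise this by giving $s$ the single project $(0,1)$, i.e.\ a leaf that survives with probability~$1$ at no cost. Since $s$ is a descendant of every spine vertex, the offspring set $\off(v)$ contains~$s$ for the lower endpoint~$v$ of every spine edge as well as for the pendant edge above~$s$; hence $\prod_{x_i\in\off(v)}(1-w_{i,j_i})=0$ there, and each of these $m$ edges contributes exactly its weight~$1$. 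The only remaining contributions come from the pendant edges above the real leaves $x_1,\dots,x_m$, each contributing $w_{i,j_i}$. Consequently, for any choice $S$ of one project per list, $PD_{\Tree}(S)=m+\sum_{i=1}^m w_{i,j_i}=m+\tfrac1D\sum_{i=1}^m d(a_{i,j_i})$.

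Setting the threshold to $D':=m+1$ and the budget to $B':=B$, I obtain $PD_{\Tree}(S)\ge D'$ if and only if $\sum_i d(a_{i,j_i})\ge D$, so the constructed instance is equivalent to~$\mathcal I$ by exactly the same argument as in Theorem~\ref{thm:GNAP-X-W1hard}, the cost constraint being identical to the star case. It then remains to check the parameter: the tree is binary with $\max_\w=1$, it has $|X|=m+1$ taxa, height~$m$, and threshold $D'=m+1$, so $|X|+D'+\height_\Tree=3m+2$ is bounded by a computable function of the source parameter~$m$, and the construction is plainly polynomial-time. This gives a parameterized reduction and establishes the claimed \Wh{1}-hardness. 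Equivalently, one could binarise the concrete star instances produced in the proof of Theorem~\ref{thm:GNAP-X-W1hard} directly, applying the same caterpillar-plus-special-leaf gadget and raising the diversity threshold from~$1$ to~$m+1$.
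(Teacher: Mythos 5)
Your proposal is correct and takes essentially the same route as the paper: the paper's proof of Corollary~\ref{cor:GNAP-X+Delta} also binarizes via a caterpillar spine $v_1,\dots,v_{|X|}$ with a special always-surviving leaf $x^*$ attached at the bottom, unit edge weights, and threshold $|X|+1$, so that every spine edge contributes its full weight and $PD_{\Tree'}(S') = |X| + PD_{\Tree}(S)$. The only cosmetic deviations are that the paper reduces from the star instances of Theorem~\ref{thm:GNAP-X-W1hard} (with $D=1$) rather than directly from \MCKP, and equips $x^*$ with the project list $((0,0),(1,1))$ and budget $B+1$, which forces a short exchange argument that a solution may be assumed to contain $p_{*,1}$ --- your free project $(0,1)$ for the special leaf renders both adjustments unnecessary.
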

\begin{proof}
	\proofpara{Reduction}
	We reduce from~\GNAP, which by Theorem~\ref{thm:GNAP-X-W1hard} is \Wh{1}-hard with respect to~$|X|$, even if~$\max_\w=\height_{\Tree}=D=1$.
	Let~$\Instance=(\Tree,\mathcal P,B,D)$ be an instance of~\GNAP with $D=1$.
	Define a phylogenetic tree~$\Tree':=(V,E)$ as follows.
	Let~$V$ be the union of~$X$ and a set of vertices~$\{v_1, \dots, v_{|X|}, x^*\}$.
	Let the edges be~$E:= \{v_i x_i, v_i v_{i+1} \mid i\in [|X|-1]\} \cup \{ v_n x_n, v_n x^* \}$ and let every edge have a weight of~1.
	Define a project-list~$P_{x^*} = (p_{*,0} := (0,0),p_{*,1} := (1,1))$ for~$x^*$.
	Finally, let~$\Instance' := (\Tree',\mathcal P\cup P_{x^*},B':=B+1,D':=|X|+1)$ be an instance of~\GNAP.
	
	\proofpara{Correctness}
	The reduction can be computed in polynomial time.
	We show that~$S$ is a solution for instance~\Instance if and only if~$S':=S\cup\{ p_{*,1} \}$ is a solution for instance~$\Instance'$.
	Clearly,~$\Costs(S') = \sum_{p_{i,j} \in S'} c_{i,j} = c_{*,1} + \sum_{p_{i,j} \in S} c_{i,j} = \Costs(S)+1$.
	Because~$w_{*,1}=1$, we conclude that the \sprop at each vertex~$v_i$ is exactly~1. Thus, the value of~$PD_{\Tree'}(S')$ is
	\begin{eqnarray*}
	\sum_{i=1}^{|X|-1} \w(v_i v_{i+1}) \cdot 1
	+
	\w(v_{|X|} x^*) \cdot 1
	+
	\sum_{p_{i,j}\in S} \w(v_i x_i) \cdot w_{i,j} =
	|X| + PD_{\Tree}(S).
	\end{eqnarray*}
	Hence, the set~$S$ satisfies~$\Costs(S') \le B+1$ and~$PD_{\Tree'}(S')\ge |X|+1$ if and only if~$\Costs(S) \le B$ and~$PD_{\Tree}(S)\ge 1$.
	Therefore,~$S$ is a solution for~$\Instance$.
	We can assume that a solution of~$\Instance'$ contains~$p_{*,1}$ because otherwise, we can exchange one project with~$p_{*,1}$ to obtain a better solution.
\end{proof}

\subsubsection{Algorithmic results}
In Section~\ref{sec:MCKP}, we presented algorithms solving~\MCKP.
Many of these algorithms can be adopted for instances of \GNAP in which the phylogenetic tree~$\Tree$ is a star.

\newpage
\begin{proposition}
	\label{prop:GNAP-height=1->mckp}
	\GNAP can be solved
	\begin{propEnum}
		\item in~$\Oh(D\cdot 2^{\wcode} \cdot \numP + |\mathcal{I}|)$ time,
		\item in~$\Oh(B \cdot \numP + |\mathcal{I}|)$ time,
		\item in~$\Oh(C \cdot \numP \cdot |X| + |\mathcal{I}|)$ time, or
		\item in~$\Oh(|X|^{\var_c-1} \cdot \numP + |\mathcal{I}|)$ time,
	\end{propEnum}
	if the given phylogenetic tree is a star.
	Herein,~$\numP=\sum_{i=1}^{|X|} |P_i|$ is the number of projects and~$|\mathcal{I}|$ is the size of the input.
\end{proposition}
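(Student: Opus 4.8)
The plan is to exploit the fact that, when $\Tree$ is a star, the \GNAP\ objective degenerates into a linear function with no multiplicative interaction between taxa. This reduces the problem to an instance of \MCKP, after which each of the four running times follows directly from a corresponding \MCKP\ algorithm established in Section~\ref{sec:MCKP}.

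First I would simplify the objective. In a star with root $\rho$, every taxon $x_i$ is a child of $\rho$, so the only edges are the $\rho x_i$ and $\off(v) = \{x_i\}$ for $v = x_i$. Hence the product in the definition of $PD_\Tree$ collapses to a single factor, giving
\begin{equation*}
	PD_\Tree(S) = \sum_{i=1}^{|X|} \w(\rho x_i) \cdot \bigl(1 - (1 - w_{i,j_i})\bigr) = \sum_{i=1}^{|X|} \w(\rho x_i) \cdot w_{i,j_i}.
\end{equation*}
This is a plain sum of per-taxon contributions, so that the diversity of a project set depends only on which single project is chosen from each list, exactly the selection structure of \MCKP.

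Second, I would build the \MCKP\ instance. Each project list $P_i$ becomes a class, and each project $p_{i,j}=(c_{i,j},w_{i,j})$ becomes an item of that class with cost $c(a_{i,j}):=c_{i,j}$ and value $d(a_{i,j}):=\w(\rho x_i)\cdot w_{i,j}$; the budget $B$ is kept and the value threshold is $D$. A selection of one item per class corresponds exactly to a choice of one project per taxon, and by the displayed identity the quantity $d_\Sigma$ of such a selection equals $PD_\Tree$ of the corresponding project set while $c_\Sigma$ equals $\Costs$. Thus the constructed instance is \yes\ if and only if $\Instance$ is. The one wrinkle is that the \sprops, and hence the values $d(a_{i,j})$, are rationals rather than integers as \MCKP\ requires; since each $w_{i,j}$ has binary length at most $\wcode$, multiplying every value and the threshold $D$ by $2^{\wcode}$ clears denominators and yields integral data (as the $\w(\rho x_i)$ are already integers), replacing the threshold by $D\cdot 2^{\wcode}$. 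The whole construction is computable in time linear in the input, which accounts for the additive $|\mathcal{I}|$ term in each bound.

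Finally, I would invoke the four \MCKP\ algorithms on the constructed instance, noting that it has $\numP$ items distributed among $|X|$ classes. The budget-indexed algorithm of~\cite{pisinger} gives $\Oh(B\cdot\numP)$, Observation~\ref{obs:Pre-C-MCKP} gives $\Oh(C\cdot\numP\cdot|X|)$, and Proposition~\ref{prop:Pre-MCKP-XP-varc} gives $\Oh(|X|^{\var_c-1}\cdot\numP)$; none of these depend on the magnitudes of the values, so the scaling step leaves them unchanged. The value-indexed algorithm of~\cite{bansal} runs in time proportional to the threshold, which after scaling is $D\cdot 2^{\wcode}$, yielding $\Oh(D\cdot 2^{\wcode}\cdot\numP)$. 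The main obstacle—and the only place where the reduction is not entirely transparent—is precisely this handling of the rational \sprops: it is what forces the extra $2^{\wcode}$ factor into the value-indexed bound while leaving the three cost-indexed bounds clean, and one should verify carefully that capping the value-DP at the scaled threshold keeps its running time in $\Oh(D\cdot 2^{\wcode}\cdot\numP)$ despite individual scaled values being large.
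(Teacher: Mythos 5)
Your proposal is correct and follows essentially the same route as the paper: reduce the star-tree instance to \MCKP{} by making each project list a class with item values $\w(\rho x_i)\cdot w_{i,j}$, scale values and threshold by $2^{\wcode}$ to obtain integers (so the threshold becomes $D\cdot 2^{\wcode}$), and then invoke the four \MCKP{} algorithms from Section~\ref{subsec:algos-MCKP}. Your accounting of which bounds absorb the scaling factor (only the value-indexed one) matches the paper's running-time analysis exactly.
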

\begin{proof}
	To see the correctness of the statement, we reduce an instance of~\GNAP with a phylogenetic star tree to an instance~\MCKP and then use algorithms presented in Section~\ref{subsec:algos-MCKP}.
	
	\proofpara{Reduction}
	Let~$\mathcal{I}=(\Tree,\w,\mathcal{P},B,D)$ be an instance of~\GNAP with~$\height_{\Tree}=1$.
	We define an instance~$\mathcal{I}'=(N,\{N_1,\dots,N_{|X|}\},c,d,B',D')$ of~\MCKP.
	Without loss of generality, each \sprop is in the form~$w_i=w_i'/2^{\wcode}$ with~$w_i'\in [2^{\wcode}]_0$.\lb
	For every taxon~$x_i$ with project list~$P_i$, we define a class~$N_i$.
	We add an item~$a_{i,j}$\lb with cost~$c(a_{i,j}):=c_{i,j}$ and value~$d(a_{i,j}):=w_{i,j}'\cdot \w(\rho x_i)$ to~$N_i$ for every\lb project~$p_{i,j}=(c_{i,j},w_{i,j})\in P_i$.
	We set~$B':=B$ and~$D':=D\cdot 2^{\wcode}$.
	
	\proofpara{Correctness}
	Let~$S=\{p_{1,j_1},\dots,p_{|X|,j_{|X|}}\}$ be a solution for the instance~\Instance of~\GNAP.
	Define the set~$S'=\{a_{1,j_1},\dots,a_{|X|,j_{|X|}}\}$. Clearly,~$c_\Sigma(S')=\Costs(S)\le B$.
	Further,
	\begin{eqnarray*}
		\sum_{i=1}^{|X|} d(a_{i,j_i}) &=& \sum_{i=1}^{|X|} w_{i,j_i}'\cdot \w(\rho x_i)\\
		&=& 2^{\wcode}\cdot \sum_{i=1}^{|X|} w_{i,j_i}\cdot \w(\rho x_i)\\
		&=& 2^{\wcode}\cdot PD_\Tree(S) \ge 2^{\wcode}\cdot D = D'.
	\end{eqnarray*}
	Thus,~$S'$ is a solution for~$\Instance'$.
	Analogously, one can show that if~$S'$ is a solution for instance~$\Instance'$ of~\MCKP, then~$S$ is a solution for instance~\Instance of \GNAP.
	
	\proofpara{Running time}
	The instance~$\Instance'$ of \MCKP is computed in~$\Oh(|\Instance|)$ time.
	We observe that in~$\Instance'$ the size of~$N$ equals the number of projects~$\numP$, the number of classes~$m$ is the number of taxa~$|X|$, and the budget~$B$ remains unchanged.
	Because all costs are simply copied, the maximal cost~$C$ and the number of different costs~$\var_c$ remain the same.
	Because the \sprops are multiplied with an edge weight, it follows that~$\var_d \in \Oh(\var_w \cdot \max_\w)$.
	By definition,~$D' = D\cdot 2^{\wcode}$.
	
	Thus, after computing instance $\Instance'$, one can use any algorithm for solving \MCKP of which we saw some in Section~\ref{subsec:algos-MCKP} to compute an optimal solution in the stated time.
\end{proof}

By Proposition~\ref{prop:GNAP-height=1->mckp} and Theorem~\ref{thm:Pre-MCKP-ILPF}, we conclude that \GNAP is \FPT with respect to~$\var_c+\var_w+\max_\w$ when restricted to instances in which the phylogenetic tree is a star.
In the following, we present a better algorithm---a reduction from an instance of \GNAP in which the phylogenetic tree is a star to an instance of~\ILPF, in which the number of variables is bound in~$\Oh(2^{\var_c+\var_d} \cdot \var_c)$.
This reduction uses a technique that was used to show that~\KP is \FPT with respect to~$\var_c$~\cite{etscheid}, which was not necessary for proving Theorem~\ref{thm:Pre-MCKP-ILPF}.
\begin{theorem}
	\label{thm:GNAP-height=1:varc+varw}
	There is a reduction from instances of~\GNAP in which the phylogenetic tree is a star to instances of~\ILPF with~$\Oh(2^{\var_c+\var_d}\cdot \var_c)$ variables.
	Thus, \GNAP is \FPT with respect to~$\var_c+\var_w$ when restricting to instances in which the phylogenetic tree is a star.
\end{theorem}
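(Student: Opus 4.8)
The plan is to mirror the ILP construction of Theorem~\ref{thm:Pre-MCKP-ILPF} while paying the extra price caused by the edge weights of the star. For a star with root~$\rho$, every taxon~$x_i$ is a child of~$\rho$, so~$\off(v)=\{x_i\}$ exactly for~$v=x_i$, and the diversity of a project selection~$S$ collapses to $PD_\Tree(S)=\sum_{x_i\in X}\w(\rho x_i)\cdot w_{i,j_i}$, where~$w_{i,j_i}$ is the survival probability of the project chosen for~$x_i$. As in Theorem~\ref{thm:Pre-MCKP-ILPF} I would group the taxa into \emph{types}: taxon~$x_i$ has type~$T=(C,W)$ with~$C=\{c_{i,j}\}_j$ and~$W=\{w_{i,j}\}_j$. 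Because costs and survival probabilities inside a project list are sorted consistently, within a fixed type the chosen cost~$c$ determines the chosen survival probability~$w_T(c)$; moreover there are at most~$2^{\var_c}\cdot 2^{\var_w}$ types. I would then introduce the same count variables~$x_{T,i}$ (the number of type-$T$ taxa choosing cost~$c_i$), the budget constraint~$\sum_{T,i}x_{T,i}c_i\le B$, and the per-type constraints~$\sum_i x_{T,i}=m_T$, which already yields~$\Oh(2^{\var_c+\var_w}\cdot\var_c)$ variables.

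The genuinely new difficulty -- the point where Theorem~\ref{thm:Pre-MCKP-ILPF} does not suffice -- is the diversity constraint. In \MCKP an item's value is fixed, but here a type~$T$ may contain many taxa of pairwise different edge weights, so the naive route through Proposition~\ref{prop:GNAP-height=1->mckp} turns the edge weights into distinct values and reintroduces a dependence on~$\max_\w$, precisely what must be avoided (that route only gives \FPT in~$\var_c+\var_w+\max_\w$). The key observation -- where the technique behind the \FPT-ness of \KP in~$\var_c$ enters -- is a rearrangement argument: for fixed counts~$x_{T,\cdot}$ the diversity of type~$T$ is maximized by assigning the largest edge weights to the largest survival probabilities. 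Writing~$S^T_q$ for the sum of the~$q$ largest edge weights among the type-$T$ taxa and~$q^T_i:=\sum_{\ell\ge i}x_{T,\ell}$ for the number of type-$T$ taxa placed at cost-level~$\ge i$, an Abel summation gives
\[
PD_\Tree(S)=\sum_{T}\Bigl(w^T_1\,S^T_{m_T}+\sum_{i\ge 2}\bigl(w^T_i-w^T_{i-1}\bigr)\,S^T_{q^T_i}\Bigr),
\]
where the first summand is a constant and~$w^T_i$ is the survival probability at the~$i$-th cheapest cost of~$T$.

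Since~$q\mapsto S^T_q$ is concave and piecewise linear with slopes equal to the sorted edge weights of~$T$, I would linearize the remaining nonlinearity by adding, for each type and level, an auxiliary variable~$Y^T_i$ meant to equal~$S^T_{q^T_i}$, replacing each~$S^T_{q^T_i}$ in the displayed expression by~$Y^T_i$, and imposing the diversity constraint together with the polynomially many secant inequalities~$Y^T_i\le S^T_j+\mu^T_{j+1}\,(q^T_i-j)$ for~$0\le j< m_T$, where~$\mu^T_{j+1}$ is the~$(j+1)$-th largest edge weight of~$T$. Concavity ensures that these inequalities cut~$Y^T_i$ down to exactly~$S^T_{q^T_i}$ at integer points, and because the coefficients~$w^T_i-w^T_{i-1}$ are positive and we only test feasibility of~$PD_\Tree(S)\ge D$, any witness may be pushed to~$Y^T_i=S^T_{q^T_i}$, which corresponds to a genuine monotone assignment of projects; conversely every real selection of diversity~$\ge D$ yields a feasible point. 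The auxiliary variables at most double the count, so the constructed instance of \ILPF still has~$\Oh(2^{\var_c+\var_w}\cdot\var_c)$ variables (matching the stated bound, with the number of distinct values~$\var_d$ played here by~$\var_w$), and feasibility is then decidable in \FPT time by the cited \ILPF algorithm, proving \FPT-ness in~$\var_c+\var_w$.

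The main obstacle is exactly this decoupling of edge weights from survival probabilities: one must express ``use the taxa of largest edge weight first'' with a number of variables that depends on~$\var_c+\var_w$ only and is blind to the potentially unbounded number of distinct edge weights. The rearrangement inequality together with the concave secant encoding is the crux (and is unnecessary for \MCKP, where the cost already fixes the value); what remains is the routine-but-careful bookkeeping of checking the consistency of the level indexing across types and verifying that feasibility of the \ILPF instance is equivalent to the existence of a project selection of diversity at least~$D$.
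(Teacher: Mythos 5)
Your proposal is correct and takes essentially the same route as the paper's proof: grouping project lists into types $(C,W)$, the rearrangement observation that the heaviest edges should be assigned to the highest \sprops, prefix sums of the sorted edge weights per type (your $S^T_q$ is the paper's $f_T$), and the linearization of this concave piecewise-linear function by affine pieces via the \KP-technique of~\cite{etscheid}, giving $\Oh(2^{\var_c+\var_w}\cdot \var_c)$ variables in the \ILPF instance. Your Abel-summation form with secant upper bounds on the auxiliary variables $Y^T_i$ is merely an equivalent rewriting of the paper's per-level variables $g_{T,i}=f_T\bigl(\sum_{\ell\ge i}y_{T,\ell}\bigr)-f_T\bigl(\sum_{\ell\ge i+1}y_{T,\ell}\bigr)$ together with $f_T(i)=\min_\ell p_T^{(\ell)}(i)$.
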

\begin{proof}
	\proofpara{Description}
	Let~$\mathcal{I}=(\Tree,\w,\mathcal{P},B,D)$ be an instance of~\GNAP in which~$\Tree$ is a star and has root~$\rho$.
	
	We may assume that a project list~$P_i$ does not contain two projects of the same cost or the same value.
	In the following, we call~$T=(C,W)$ a \textit{type}, for sets~$C\subseteq \mathcal C$ and~$W\subseteq \mathcal W$ with~$|C| = |W|$, where~$\mathcal C$ and~$\mathcal W$ are the sets of different costs and~\sprops, respectively. Let~$\mathcal F$ be the family of all types.
	We say that the \textit{project list~$P_i$ is of type~$T=(C,W)$} if~$C$ and~$W$ are the set of costs and \sprops of~$P_i$.
	For each~$T\in\mathcal F$, we define~$m_T$ to be the number of classes of type~$T$.
	
	Observe, for each type~$T=(C,W)$, project list~$P_i$ of type~$T$, and a project~$p\in P_i$, we can determine the \sprop of~$p$ when we know the cost~$c$ of~$p$.
	More precisely, if~$c$ is the~$\ell$th cheapest cost in~$C$, then the \sprop of~$p$ is the~$\ell$th smallest \sprop in~$W$.
	For a type~$T=(C,W)$ and~$i\in [\var_c]$, we define the constant~$w_{T,i}$ to be~$-n\cdot \max_\w$ if~$c_i \not\in C$. Otherwise, let~$w_{T,i}\in \mathbb{R}_{[0,1]}$ be the~$\ell$th smallest \sprop in~$W$, if~$c_i$ is the~$\ell$th smallest cost in~$C$.
	
	For two taxa~$x_i$ and~$x_j$ with project lists~$P_i$ and~$P_j$ of the same type~$T$, it is possible that~$\w(\rho x_i) \ne \w(\rho x_j)$.
	Hence, it can make a difference if a project is selected for the taxon~$x_i$ instead of~$x_j$.
	For a type~$T$, let~$x_{T,1},\dots,x_{T,m_T}$ be the taxa, such that the project lists~$P_{T,1},\dots,P_{T,m_T}$ are of type~$T$ and~$\w(\rho x_{T,i}) \ge \w(\rho x_{T,i+1})$ for each~$i\in [m_T -1]$.
	For each type $T$, we define a function~$f_T: [m_T]_0 \to \mathbb{N}$ by~$f_T(0):=0$ and~$f_T(\ell)$ stores total value of the first~$\ell$ edges. More precisely, that is~$f_T(\ell) := \sum_{i=1}^{\ell} \w(\rho x_i)$.
	
	The following describes an instance of~\ILPF.
	\begin{align}
		\label{eqn:ILPF-B}
		\sum_{T \in \mathcal{F}} \sum_{i=1}^{\var_c} y_{T,i} \cdot c_i \le & \; B\\
		\label{eqn:ILPF-D}
		\sum_{T \in \mathcal{F}} \sum_{i=1}^{\var_c} w_{T,i} \cdot g_{T,i} \ge & \; D\\
		\label{eqn:ILPF-g}
		f_{T}\left(\sum_{\ell=i}^{\var_c} y_{T,\ell}\right) - f_{T}\left(\sum_{\ell=i+1}^{\var_c} y_{T,\ell}\right) = & \; g_{T,i} & \quad \forall T\in\mathcal{F}, i\in [\var_c]\\
		\label{eqn:ILPF-Nr}
		\sum_{i=1}^{\var_c} y_{T,i} = & \; m_{T} & \quad \forall T\in \mathcal{F}\\
		\label{eqn:ILPF-variables}
		y_{T,i}, g_{T,i} \ge & \; 0 & \quad \forall T\in\mathcal{F}, i\in [\var_c]
	\end{align}
	The variable~$y_{T,i}$ expresses the number of projects with cost~$c_i$ that are chosen in a project list of type~$T$.
	We want to assign the most valuable edges that are incident with taxa that have a project list of type~$T$ to the taxa in which the highest \sprop is chosen.
	To receive an overview, in~$g_{T,j}$ we store the total value of the~$y_{T,j}$ most valuable edges that are incident with a taxon that has a project list of type~$T$ for~$j\in [\var_c]$.
	
	For each type~$T$, the function~$f_T$ is not necessarily linear. However, for each~$f_T$ there are affine linear functions~$p_T^{(1)},\dots,p_T^{(m_T)}$ such that~$f_T(i) = \min_\ell p_T^{(\ell)}(i)$ for each~$i\in [m_T]$~\cite{etscheid}.
	
	\proofpara{Correctness}
	Observe that if~$c_i\not\in C$, then because we defined~$d_{T,i}$ to be~$-n\cdot\max_\w$, Inequality~(\ref{eqn:ILPF-D}) would not be satisfied if~$g_{T,i} > g_{T,i+1}$ and consequently $y_{T,i}=0$ if~$c_i\not\in C$ for each type~$T=(C,W)\in\mathcal F$ and~$i\in[\var_c]$.
	Inequality~(\ref{eqn:ILPF-B}) can only be satisfied if the total cost is at most~$B$.
	The correctness of Inequality~(\ref{eqn:ILPF-D}):
	The variable~$g_{T,i}$ stores is the total weight of the edges towards the~$y_{T,i}$ taxa with projects of project lists of type~$T$, in which a project of cost~$c_i$ is selected.
	All these projects a have \sprop of~$w_{T,i}$, and thus the phylogenetic diversity of these projects is~$w_{T,i} \cdot (g_{T,i} - g_{T,i+1})$.
	Consequently, Inequality~(\ref{eqn:ILPF-D}) can only be satisfied if the total phylogenetic diversity is at least~$D$.
	Equation~(\ref{eqn:ILPF-g}) ensures that the value of~$g_{T,i}$ is chosen correctly.
	Equation~(\ref{eqn:ILPF-Nr}) can only be correct if exactly~$m_T$ projects are picked from the project lists of type~$T$, for each~$T\in \mathcal F$.
	It remains to show that the instance of the~\ILPF has~$\Oh(2^{\var_c+\var_d}\cdot \var_c)$ variables.
	Because~$\mathcal F\subseteq 2^{\mathcal C} \times 2^{\mathcal W}$, the size of~$\mathcal F$ is~$\Oh(2^{\var_c+\var_d})$.
	We follow that there are~$\Oh(2^{\var_c+\var_d}\cdot \var_c)$ different options for the variables~$y_{T,i}$ and~$g_{T,i}$.
\end{proof}

\section{Restriction to Two Projects per Taxon}\label{sec:GNAP-two-projects}
We finally study two special cases of~\NAP[a_i]{c_i}{b_i}{2}---the special case of~\GNAP where every project list contains exactly two projects.
In this section, we write~$c(x_i)$ for the cost of the project with \sprop~1 in~$P_i$.

\subsection{Sure Survival or Extinction For Each Project}
First, we consider~\NAP[0]{c_i}{1}{2}, the special case where each taxon~$x_i$  survives definitely if~$c_i$ is paid and  becomes extinct, otherwise.
This special case was introduced by Pardi and Goldman~\cite{pardi07} under the name~\textsc{Budgeted NAP}.
They also presented a pseudopolynomial-time algorithm that computes a solution for an instance in~$\Oh(B^2 \cdot n)$~time.
Because we may assume that~$B\le C\cdot \bet X$, we conclude the following.
\begin{corollary}
	\label{cor:GNAP-C-01-NAP}
	\NAP[0]{c_i}{1}{2} can be solved in~$\Oh(C^2 \cdot n^3)$ time.
\end{corollary}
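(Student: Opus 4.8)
The plan is to derive the stated running time directly from the pseudopolynomial-time algorithm of Pardi and Goldman~\cite{pardi07}, which solves \NAP[0]{c_i}{1}{2} in~$\Oh(B^2 \cdot n)$ time, by substituting a polynomial bound on the budget~$B$ in terms of the maximum project cost~$C$ and the number of taxa. The entire argument is a parameter substitution, so no new algorithm is developed.

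First I would recall the budget normalization established in the preliminaries of this chapter, namely that we may assume~$B \le C \cdot \bet{X}$ without loss of generality. Any project whose cost exceeds the budget can be discarded, so~$C \le B$; and if~$B > C \cdot \bet{X}$, then selecting the survival-probability-one project for every taxon never exceeds the budget, so the instance is decided in polynomial time by checking whether the phylogenetic diversity of that all-ones choice reaches~$D$. Hence the only nontrivial case satisfies~$B \le C \cdot \bet{X}$.

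Substituting this bound into~$\Oh(B^2 \cdot n)$ yields~$\Oh(C^2 \cdot \bet{X}^2 \cdot n)$. Finally, using the chapter convention~$n = \bet{V(\Tree)}$ together with~$\bet{X} \le n$ (the taxa label the leaves, which form a subset of~$V(\Tree)$), we replace~$\bet{X}^2$ by~$\Oh(n^2)$ and obtain the claimed bound~$\Oh(C^2 \cdot n^3)$.

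There is no genuine obstacle here, since the statement is a direct corollary of a previously cited result. The only step requiring any care is the justification of~$B \le C \cdot \bet{X}$, but this is precisely the standard budget-normalization argument already invoked throughout the chapter, so it can simply be cited rather than reproven.
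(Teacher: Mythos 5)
Your proposal is correct and matches the paper's own argument exactly: the paper also derives the corollary by plugging the preliminaries' assumption~$B \le C \cdot \bet{X}$ into the Pardi--Goldman~$\Oh(B^2 \cdot n)$ algorithm and using~$\bet{X} \le n$. Nothing is missing.
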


We observe that~\NAP[0]{c_i}{1}{2} is \FPT with respect to~$D$, with an adaption of the above-mentioned algorithm of Pardi and Goldman~\cite{pardi07} for the parameter~$B$.
\begin{proposition}
	\label{prop:GNAP-wcode=1-D}
	\NAP[0]{c_i}{1}{2} can be solved in~$\Oh(D^2 \cdot n)$ time.
\end{proposition}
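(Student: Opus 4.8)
Since in \NAP[0]{c_i}{1}{2} every \sprop is either~$0$ or~$1$, choosing a project for a taxon amounts to deciding whether to pay~$c_i$ to save~$x_i$ (survival~$1$) or not (survival~$0$), and then~$\PD$ of the chosen project set is just the ordinary phylogenetic diversity of the saved taxa. The instance is thus a \yes-instance exactly if there is a set~$X'\subseteq X$ with~$\Costs(X')=\sum_{x_i\in X'}c_i\le B$ and~$\PD(X')\ge D$. The plan is to dualize the~$\Oh(B^2\cdot n)$ dynamic program of Pardi and Goldman~\cite{pardi07}: rather than storing, for each subtree and each budget~$b\in[B]_0$, the maximum attainable diversity, I would store, for each subtree and each diversity target~$d\in[D]_0$, the \emph{minimum cost} attaining it. Because only the condition~$\PD\ge D$ matters, every diversity value may be capped at~$D$, so the second coordinate ranges over the~$D+1$ values in~$[D]_0$.

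\textbf{The table.} For a vertex~$v$ and~$d\in[D]_0$, let~$\DP[v,d]$ be the minimum cost of a nonempty set~$A\subseteq\off(v)$ whose diversity restricted to the edges of~$T_v$ is at least~$d$ (any value exceeding~$D$ is identified with~$D$), and~$\infty$ if no such set exists. For a leaf~$x_i$ the subtree has no edges, so~$\DP[x_i,0]=c_i$ and~$\DP[x_i,d]=\infty$ for~$d\ge 1$. For an internal vertex~$v$ with children~$u_1,\dots,u_t$ I would combine the children one at a time. Activating child~$u_j$ (selecting some taxon below~$u_j$) covers the edge~$vu_j$, contributing~$\w(vu_j)$ for free on top of the internal diversity of~$T_{u_j}$; hence the cost of obtaining diversity at least~$d'$ from the part of the tree below and including~$vu_j$ is~$\DP[u_j,\max(0,d'-\w(vu_j))]$. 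Merging a new child into the running table is a knapsack-style convolution that, for every target~$d$, takes the minimum over splitting~$d$ between the children seen so far and the new child; this costs~$\Oh(D^2)$ per child. A child may also be left unactivated, contributing~$0$ diversity at cost~$0$, and the nonemptiness requirement is enforced by reading off~$\DP[v,d]$ for~$d\ge 1$ from the merged table while treating~$\DP[v,0]=\min_j\DP[u_j,0]$ separately. The instance is a \yes-instance iff~$\DP[\rho,D]\le B$, where~$\rho$ is the root of~$\Tree$.

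\textbf{Running time and the main obstacle.} Each table has~$D+1$ entries, and merging one child is an~$\Oh(D^2)$ convolution; since the number of parent--child pairs equals the number of edges, there are~$\Oh(n)$ merges in total, for an overall~$\Oh(D^2\cdot n)$ running time. Correctness follows by a bottom-up induction over~$\Tree$ showing that~$\DP[v,d]$ stores the claimed minimum; the two directions of the inductive step are the standard ``a cheap set yields a cheap table entry'' and ``a cheap table entry yields a cheap set'' arguments, mirroring the proof of Theorem~\ref{thm:GNAP-varc+varw}. I expect the only genuinely delicate point to be the interaction of the diversity cap with potentially large edge weights: when~$\w(vu_j)\ge D$, activating~$u_j$ already secures diversity~$\ge D$, so~$\DP[u_j,\max(0,d'-\w(vu_j))]$ collapses to~$\DP[u_j,0]$ for all~$d'$, and one must verify that the capping is applied consistently so that no spurious diversity is accumulated across several capped edges. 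Correctly tracking the~$0/1$ coverage of each parent edge --- distinguishing an activated from an unactivated subtree --- is what makes the merge sound and is the step that requires the most care.
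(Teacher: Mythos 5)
Your proposal is correct and takes essentially the same route as the paper's proof: both dualize the Pardi--Goldman budget DP into a minimum-cost table indexed by diversity targets in~$[D]_0$, merge the children of each vertex one at a time via an~$\Oh(D^2)$ knapsack-style convolution that credits~$\w(v u_j)$ exactly when the subtree of~$u_j$ is activated, and accept iff~$\DP[\rho,D]\le B$. The only cosmetic difference is that the paper tracks activation of a subtree with an explicit boolean flag~$b\in\{0,1\}$ in the table index, whereas you enforce the same constraint by defining the entries over nonempty sets and handling the~$d=0$ case separately.
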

\begin{proof}
	\proofpara{Table definition}
	For a set~$A$ of vertices, a vertex~$v$, and integers~$b$ and~$d$, we call a set of projects~$S$ an~\textit{$(A,v,d,b)$-respecting set}, if~$PD_{\Tree_v}(S)\ge d$, and~$S$ contains exactly one project of the projects lists of the offspring of~$A$, and~$S$ contains at least~$b$ projects with \sprop~1.
	
	We describe a dynamic programming algorithm with two tables~$\DP$ and~$\DP'$.
	We want entry~$\DP[v,d,b]$ for a vertex~$v\in V$, an integer~$d \in [D]_0$, and a boolean~$b\in \{0,1\}$ to store the minimal cost of a~$(\{v\},v,d,b)$-respecting set.
	If~$v$ is an internal vertex with children~$u_1,\dots,u_t$ and~$i\in [t]$,
	then we want entry~$\DP'[v,i,d,b]$ to store the minimal cost of an~$(\{u_1,\dots,u_i\},v,d,b)$-respecting~set.
	
	We use non-negative subtraction,~$\mo$, which for integers~$a$ and~$b$ is~$a \mo b = a - b$ if~$a \ge b$ and~$a \mo b = 0$, otherwise.
	
	\proofpara{Algorithm}
	As a base case, for a leaf~$x_i$ store~$\DP[x_i,0,1] = c(x_i)$ and for each~$d > 0$ store~$\DP[x_i,d,1] = \infty$.
	For any vertex~$v$ and each~$d\in [D]_0$ store~$\DP[v,d,0] = 0$ and~$\DP[v,i,d,0] = 0$.
	
	For an internal vertex~$v$, we define~$\DP'[v,1,d,1] = \DP[u_1,d \mo \w(v u_1),1]$.
	
	Now, let~$v$ be an internal vertex with children~$u_1,\dots,u_t$
	and we assume that for a fixed~$i\in [t]$ the values of~$\DP'[v,i,d,b]$ and~$\DP[u_{i+1},d,b]$ are known for each~$d\in [D]_0$ and~$b\in \{0,1\}$.
	To compute the value of~$\DP'[v,i+1,d]$, we use the recurrence
	\begin{eqnarray}
		\label{eqn:wcode=1-D}
		&&\DP'[v,i+1,d,1]\\
		& = & \min \{ \DP'[v,i,d,1]; \min_{d'\in [d_{i+1}]} \{ \DP'[v,i,d_{i+1}-d',1] + \DP[u_{i+1},d',1] \} \},
		 \nonumber
	\end{eqnarray}
	where~$d_{i+1} := d \mo \w(v u_{i+1})$.
	We store~$\DP[v,d,1]=\DP'[v,t,d,1]$, eventually.
	
	We return \yes if~$\DP[\rho,D] \le B$ for the root~$\rho$ of~$\Tree$.
	Otherwise, we return \no.
	
	\proofpara{Correctness}
	The base case, as well as the computation of~$\DP'[v,1,d,1]$, and the computation of~$\DP[v,d,1]$ for an internal vertex~$v$ and integers~$d\in [D]_0$ are correct by definition.
	It remains to show that~$\DP'[v,i+1,d,1]$ stores the correct value if~$\DP'[v,i,d,1]$ and~$\DP[u_{i+1},d,1]$ store the correct value.
	We first show that if~$S$ is an~$(\{u_1,\dots,u_{i+1}\},v,d,1)$-respecting set, then~$\DP'[v,i+1,d,1]\le \Costs(S)$.\lb
	Afterwards, we show that if~$c$ is stored in~$\DP'[v,i+1,d,1]$, then there is\lb an~$(\{u_1,\dots,u_{i+1}\},v,d,1)$-respecting set~$S$ with~$\Costs(S)=c$.
	
	Let~$S$ be an~$(\{u_1,\dots,u_{i+1}\},v,d,1)$-respecting set.
	Let~$S_1$ be the set of\lb projects that are in one of the project lists of the offspring of~$u_{i+1}$.
	We\lb define~$S_2 := S\setminus S_1$.
	Then $\DP'[v,i+1,d,1] = \DP'[v,i,d,1]$, if~$S_2$ only contains\lb projects with a \sprop of~0.
	Otherwise, let~$d' := PD_{\Tree_v}(S_1)$.\lb
	Then, $PD_{\Tree_v}(S_2) = PD_{\Tree_v}(S) - PD_{\Tree_v}(S_1) \ge d - d'$.
	We conclude that~$S_2$ is\lb an~$(\{u_1,\dots,u_{i}\},v,d-d',1)$-respecting set, and~$S_1$ is an~$(\{u_{i+1}\},v,d',1)$-respecting set.
	Consequently,
	\begin{eqnarray}
		\label{eqn:wcode=1-LA-1}
		\DP'[v,i+1,d,1] &\le& \DP'[v,i,d-d',1] + \DP[u_{i+1},d' \mo \w(v u_{i+1}),1]\\
		\label{eqn:wcode=1-LA-2}
		&\le& \Costs(S_G) + \Costs(S_F) = \Costs(S).
	\end{eqnarray}
	Here, Inequality~(\ref{eqn:wcode=1-LA-1}) follows from the recurrence in Recurrence~(\ref{eqn:wcode=1-D}) and Inequality~(\ref{eqn:wcode=1-LA-2}) follows from the induction hypothesis.
	
	Conversely,
	we assume that~$\DP'[v,i+1,d,1]$ stores~$c$.
	Unless~$\DP'[v,i+1,d,1]$ takes the value of~$\DP'[v,i,d,1]$ there is an integer~$d'\in [d_{i+1}]_0$, such\lb that~$\DP'[v,i+1,d,1] = \DP'[v,_{i+1},d_{i+1}-d',1] + \DP[u_{i+1},d',1]$.
	By the induction hypothesis, there is an~$(\{u_1,\dots,u_{i}\},v,d-d',1)$-respecting set~$S_2$ and an~$(\{u_{i+1}\},v,d',1)$-respecting set~$S_1$ such that~$\DP[u_{i+1},d',1]=\Costs(S_1)$ and~$\DP'[v,i,d-d',1]$\lb stores~$\Costs(S_2)$.
	We conclude that~$S:=S_1\cup S_2$ is an~$(\{u_1,\dots,u_{i+1}\},v,d,1)$-respecting set and~$c = \Costs(S_1)+\Costs(S_2)=\Costs(S)$.

	\proofpara{Running time}
	Table~$\DP$ has~$\Oh(D \cdot n)$ entries and each entry can be computed in constant time.
	Also,~$\DP'$ has~$\Oh(D \cdot n)$ entries.
	Entry~$\DP'[v,i+1,d]$ is computed by checking at most~$D+1$ options for~$d'$.
	Altogether, a solution can be found in~$\Oh(D^2 \cdot n)$~time.
\end{proof}

We may further use this pseudopolynomial-time algorithm to obtain an algorithm for the maximum edge weight~$\max_\w$:
For an instance~\Instance of~\NAP[0]{c_i}{1}{2} with~$\sum_{e\in E} \w(e) < D$, we can return \no since the desired diversity can never be reached.
Otherwise, we may assume~$D \le \sum_{e\in E} \w(e)\le \max_\w \cdot (n-1)$.
This gives the following running time-bound. 
\begin{corollary}
	\label{cor:GNAP-wcode=1,val-lambda}
	\NAP[0]{c_i}{1}{2} can be solved in~$\Oh((\max_\w)^2 \cdot n^3)$ time.
\end{corollary}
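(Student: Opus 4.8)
The plan is to reduce to the pseudo-polynomial algorithm of Proposition~\ref{prop:GNAP-wcode=1-D}, which solves \NAP[0]{c_i}{1}{2} in $\Oh(D^2 \cdot n)$ time, and then to bound the threshold~$D$ in terms of $\max_\w$ and~$n$. The crucial observation is that the phylogenetic diversity of any project set can never exceed the total weight $\sum_{e\in E}\w(e)$ of all edges of~$\Tree$, since in the most favorable situation every edge contributes its full weight (this happens precisely when at least one offspring of each internal vertex survives with probability~$1$). Thus once $D$ is larger than the total edge weight, the instance is trivially infeasible.

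First I would dispose of this trivial case: if $\sum_{e\in E}\w(e) < D$, then no project set can reach diversity~$D$, so the algorithm returns \no immediately. Otherwise I may assume $D \le \sum_{e\in E}\w(e)$. Since $\Tree$ is a tree on~$n$ vertices it has exactly $n-1$ edges, each of weight at most $\max_\w$, so
\begin{equation*}
	D \;\le\; \sum_{e\in E}\w(e) \;\le\; \max_\w\cdot(n-1).
\end{equation*}

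Substituting this bound into the running time of Proposition~\ref{prop:GNAP-wcode=1-D} yields an algorithm running in $\Oh\bigl((\max_\w\cdot(n-1))^2\cdot n\bigr) = \Oh\bigl((\max_\w)^2\cdot n^3\bigr)$ time, as claimed. There is no real obstacle here; this is a routine bounding argument. The only point requiring (minor) care is verifying that the maximum achievable phylogenetic diversity indeed equals the total edge weight, which is what justifies replacing~$D$ by the upper bound $\max_\w\cdot(n-1)$ in the worst case and thereby removing~$D$ from the stated running time.
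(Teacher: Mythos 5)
Your proposal is correct and is essentially identical to the paper's own argument: the paper likewise returns \no when $\sum_{e\in E}\w(e) < D$, then uses $D \le \sum_{e\in E}\w(e) \le \max_\w\cdot(n-1)$ and plugs this bound into the $\Oh(D^2\cdot n)$ algorithm of Proposition~\ref{prop:GNAP-wcode=1-D}. Your extra remark that the total edge weight is exactly the maximum achievable diversity (attained when every taxon survives with probability~$1$) is a fine, if slightly stronger, justification than the paper's one-line observation.
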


\subsection{Unit Costs For Each Project}
\label{subsec:unitc}
Next, we consider~\ucNAP---the special case of \GNAP in which every project with a positive \sprop has the same cost.
Observe, every instance~$\Instance=(\Tree,\mathcal{P},B,D)$ of~\NAP[0]{c}{b_i}{2} for each~$c\in\mathbb{N}$ can be reduced to an equivalent instance~$\mathcal{I}'=(\Tree,\mathcal{P}',B',D)$ of~\NAP[0]{1}{b_i}{2} by setting the costs of every project with a positive~\sprop to 1, and~$B'=\lfloor B/c \rfloor$.
Thus, the problem~\NAP[0]{1}{b_i}{2} can be considered as~\ucNAP.

In the remainder of this chapter, we use the term \emph{solution} to denote only those projects with a cost of~1 which have been chosen.
Further, with~$w(x_i)$ we denote the \sprop of the project in~$P_i$ which a costs of~1.

We show that even in very restricted instances, \ucNAP is \NP-hard by a reduction from \PS.
Recall that in \PS we are given a set of tuples~$T=\{ t_i=(a_i,b_i) \mid i\in [n], a_i\in \mathcal{Q}_{\ge 0},b_i\in \mathbb{R}_{(0,1)} \}$,
two integers~$k$, and~$Q$, and a number~$D\in \mathbb{Q}_+$.
It is asked whether there is a set of~$k$ tuples~$S\subseteq T$, such\lb that $\sum_{t_i\in S} a_i - Q\cdot \prod_{t_i\in S} b_i \ge D$.
\PS is \NP-hard by Theorem~\ref{thm:Pre-NP-PenSum}.

\begin{theorem}
	\label{thm:GNAP-ucNAP-hardness}
	\ucNAP is \NP-hard even on instances with a phylogenetic tree with a height of~2 and a degree of~1 in the root.
\end{theorem}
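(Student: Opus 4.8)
The plan is to reduce from \PenSum, which is \NP-hard by Theorem~\ref{thm:Pre-NP-PenSum}. The key observation is that the phylogenetic diversity of a height-$2$ tree whose root $\rho$ has a single child $v$, with the taxa $x_1,\dots,x_m$ being the children of $v$, already has exactly the shape of the \PenSum objective: the edge $\rho v$ contributes its weight times $1-\prod_i(1-s_i)$ (a product over \emph{all} taxa), while each edge $vx_i$ contributes its weight times $s_i$ (a single taxon), where $s_i$ is the chosen survival probability of $x_i$. This separates into a constant plus a ``sum'' term and a ``product'' term, matching $\sum_{i\in S}a_i-Q\prod_{i\in S}b_i$. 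Note this is exactly why the root edge of degree $1$ is needed: it is the only edge whose offspring are all taxa, and hence the only place to encode the product term.

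Concretely, given a \PenSum instance $(\{(a_i,b_i)\}_{i\in[m]},k,Q,D)$ I would first preprocess it so that all $a_i>0$: replacing each $a_i$ by $a_i+1$ and $D$ by $D+k$ yields an equivalent instance (since $|S|=k$), and this positivity is used twice below. Then I build the tree $\rho\to v$ and $v\to x_i$ for $i\in[m]$. I let the cost-$0$ project of $x_i$ have survival probability $0$ and its cost-$1$ project have survival probability $1-b_i$ (positive since $b_i<1$), so that ``saving'' $x_i$ means $s_i=1-b_i$ and not saving it means $s_i=0$. Writing $a_i=p_i/p$ and $b_i=q_i/q$ with common denominators and setting $M:=p\prod_{i\in[m]}(q-q_i)$, I assign $\w(\rho v):=MQ$ and $\w(vx_i):=Ma_i/(1-b_i)=p_iq\prod_{j\neq i}(q-q_j)$; all edge weights are then positive integers of polynomial bit-length. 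Finally I set the budget $B:=k$ and the threshold $D':=M(Q+D)$.

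For correctness I would compute, for any set $S$ of saved taxa, that $\prod_i(1-s_i)=\prod_{i\in S}b_i$ (each unsaved taxon contributes a factor $1$) and $\sum_i\w(vx_i)\,s_i=M\sum_{i\in S}a_i$, giving $PD_{\Tree}(S)=MQ+M\bigl(\sum_{i\in S}a_i-Q\prod_{i\in S}b_i\bigr)$; hence $PD_{\Tree}(S)\ge D'$ iff $\sum_{i\in S}a_i-Q\prod_{i\in S}b_i\ge D$. The remaining gap is that \ucNAP only bounds the cost by $B=k$ from above, whereas \PenSum fixes $|S|=k$. Here the preprocessing pays off: since every $a_i>0$, the map $S\mapsto\sum_{i\in S}a_i-Q\prod_{i\in S}b_i$ strictly increases when an element is added (the added $a_j>0$ and $-Q\prod b_i$ also increases because $b_j<1$), so any $S$ with $|S|\le k$ meeting the threshold can be padded to size exactly $k$ without decreasing the objective. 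Thus budget-$k$ feasibility coincides with the size-exactly-$k$ requirement.

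I expect the main obstacle to be the number-theoretic bookkeeping needed to realise the irrational-looking target $a_i/(1-b_i)$ as genuine positive integer edge weights while keeping the reduction polynomial: one must verify that $M$ has polynomially many bits (as $M\le p(q-1)^m$, so $\log M\in\Oh(\log p+m\log q)$) and that $D'$ and all weights are computable in polynomial time. The two roles of the preprocessing step $a_i>0$ — guaranteeing positivity of the edge weights and converting ``$|S|=k$'' into ``$|S|\le k$'' — are the subtle points that make the otherwise direct algebraic matching go through.
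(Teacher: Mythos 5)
Your reduction is correct and is essentially the paper's own proof: the paper also reduces from \PS using the identical gadget---a degree-1 root $\rho$ with single child $v$, leaves $x_1,\dots,x_m$, projects $(0,0)$ and $(1,1-b_i)$, weights $\w(\rho v)$ proportional to $Q$ and $\w(vx_i)$ proportional to $a_i/(1-b_i)$, budget $B=k$, and threshold scaled to $D'$ (the paper scales by $2^t$ where you use $M$), with the same algebraic identity $PD_\Tree(S)=\text{scale}\cdot\bigl(Q+\sum_{i\in S}a_i-Q\prod_{i\in S}b_i\bigr)$. Your additional bookkeeping is in fact slightly more careful than the paper's: your choice of $M$ genuinely guarantees integer edge weights (the paper's $2^t$ does not obviously clear the denominators of $1-b_i$), your shift $a_i\mapsto a_i+1$ ensures the required positivity of edge weights when some $a_i=0$, and your padding argument explicitly bridges the \ucNAP budget constraint $\Costs(S)\le k$ with the exact-size-$k$ requirement of \PS, a point the paper passes over silently (monotonicity of the objective under $a_i\ge 0$, $b_i<1$, $Q\ge 0$ makes it work even without your preprocessing).
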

\begin{proof}
	\proofpara{Reduction}
	Let~$\Instance = (T,k,Q,D)$ be an instance of~\PS.
	Let~$bin(a)$ and~$bin(1-b)$ be the maximum binary encoding length of~$a_i$ and~$1-b_i$, respectively, and define~$t$ to be~$bin(a)+bin(1-b)$.
	We define an instance~$\Instance' = (\Tree,\mathcal{P},B,D')$ of~\ucNAP as follows.
	Let~$\Tree$ contain the vertices~$V := \{\rho,v,x_1,\dots,x_{|T|}\}$ and let the underlying undirected graph of~$\Tree$ be a star with center~$v$.
	Therefore,~$v$ is the only child of the root~$\rho$ and~$x_1,\dots,x_{|T|}$ are the leaves.
	We define~$\w(\rho v)=2^t Q$ and~$\w(v x_i) = 2^t (\nicefrac{a_i}{1-b_i})$ for each~$t_i\in T$.
	For each tuple~$t_i$, we define a project list~$P_i := ((0,0),(1,1-b_i))$.
	Then,~$\mathcal P$ is defined to be the set of these project lists.
	Finally, we set~$B:=k$, and~$D':=2^t (D+Q)$.
	The reduction can clearly be  computed in polynomial time.
	
	\proofpara{Correctness}
	We show that instance~\Instance is a \yes-instance of~\PS if and only if instance~$\Instance'$ is a \yes-instance of~\ucNAP.
	
	First, let~$S$ be a solution of an instance~\Instance of~\PS.
	We define a\lb set~$S' := \{ (1,w(x_i)) \mid t_i \in S \}$.
	Then,
	\begin{eqnarray*}
		&& \sum_{x_i\in S'} \w(v x_i) \cdot w(x_i) + \w(\rho v)\cdot \left(1-\prod_{x_i\in S'} (1- w(x_i))\right)\\
		&=& \sum_{t_i\in S'} 2^t\cdot(\nicefrac{a_i}{1-b_i})\cdot (1-b_i) + 2^t Q\cdot \left(1-\prod_{t_i\in S'} (1-(1-b_i))\right)\\
		&=& \sum_{t_i\in S} 2^t a_i - 2^t Q\cdot \prod_{t_i\in S} b_i + 2^t Q \ge 2^t \cdot (D + Q) = D'.
	\end{eqnarray*}
	Then, $S'$ is a solution for the instance~$\Instance'$, because~$|S'|=|S|\le B$.
	
	Consequently,
	let~$S'$ be a solution for the instance~$\Instance'$ of~\ucNAP.
	We conclude~$\sum_{x_i\in S'} \w(v x_i) \cdot w(x_i) \ge D' - \w(\rho v)\cdot \left(1-\prod_{x_i\in S'} (1- w(x_i))\right)$.
	Define~$S\subseteq T$ to contain a tuple~$t_i$ if and only if~$S'$ contains a project of the taxon~$x_i$.
	
	Further, observe that~$a_i = 2^{-t}\cdot \w(v x_i)\cdot (1-b_i)$.
	Then,
	\begin{eqnarray*}
		&&\sum_{t_i\in S} a_i - Q\cdot \prod_{t_i\in S} b_i\\
		&=& \sum_{x_i\in S'} 2^{-t}\cdot \w(v x_i)\cdot (1-b_i) - 2^{-t}\cdot \w(\rho v)\cdot \prod_{x_i\in S'} (1- w(x_i))\\
		&=& 2^{-t}\cdot \left(\sum_{x_i\in S'} \w(v x_i)\cdot w(x_i) - \w(\rho v)\cdot \prod_{x_i\in S'} (1- w(x_i))\right)\\
		&\ge& 2^{-t}\cdot \left(D' - \w(\rho v)\cdot \left(1-\prod_{x_i\in S'} (1- w(x_i))\right) - \w(\rho v)\cdot \prod_{x_i\in S'} (1- w(x_i))\right)\\
		&=& 2^{-t}\cdot \left(D' - \w(\rho v)\right) = 2^{-t} \cdot (2^t(D+Q) - 2^t Q) = D.
	\end{eqnarray*}
	Because~$|S'|=|S|\le B$, we conclude that~$S$ is a solution of instance \Instance.
\end{proof}

Recall that in an ultrametric tree, the weighted distance from the root to a vertex is the same for all vertices.
Observe that in an instance of \ucNAP in which the phylogenetic tree is ultrametric and has a height of at most~2, and the root has only one child one can select the set of taxa that have the highest \sprop.
Therefore, we can solve general instances of \ucNAP with an ultrametric phylogenetic tree that has a height of at most~2 with a dynamic programming algorithm which assigns each child of the root the number of taxa that are saved in their offspring.
We conclude the following.
\begin{observation}
	\label{obs:GNAP-ultrametric-greedy}
	\ucNAP can be solved in polynomial time on instances in which the phylogenetic tree is ultrametric and has a height of at most~2.
\end{observation}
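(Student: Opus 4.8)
The plan is to first settle the building block where we focus on a single child of the root, and then assemble the general case by a dynamic program that allocates the budget among the children. Throughout, the feature that makes the problem tractable is ultrametricity: it forces all edges leaving a given internal vertex to have the same weight, and this is what aligns the two competing parts of the objective.

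First I would treat one internal child $v$ of the root $\rho$. Because $\Tree$ is ultrametric and has height at most $2$, every leaf below $v$ lies at the same weighted distance $p$ from $\rho$, so every edge $vx$ with $x\in\off(v)$ has the common weight $q:=p-\w(\rho v)$. For a set $S\subseteq\off(v)$ of saved taxa, the contribution of the subtree rooted at $v$ to $PD_\Tree$ is
\[
\w(\rho v)\Big(1-\prod_{x_i\in S}\big(1-w(x_i)\big)\Big)+q\sum_{x_i\in S}w(x_i).
\]
For a fixed size $|S|=s$, the first summand is maximized by minimizing $\prod_{x_i\in S}(1-w(x_i))$, and the second by maximizing $\sum_{x_i\in S}w(x_i)$; since $q$ is the same for all leaves, \emph{both} are achieved by the identical greedy choice, namely the $s$ taxa of $\off(v)$ with the largest survival probabilities $w(x_i)$. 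Hence, after sorting the offspring of $v$ by survival probability, the best contribution of $v$'s subtree using exactly $s$ saved taxa can be read off directly for every $s$. (A child $v=x$ that is itself a leaf is the degenerate instance of this: it contributes $\w(\rho x)\,w(x)$ when saved and $0$ otherwise.)

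For the general height-$2$ ultrametric instance, $\rho$ has children $v_1,\dots,v_m$, each either a leaf or an internal vertex all of whose children are leaves. Since $\rho$ has no incoming edge, the edges of $\Tree$ split into the groups $\{\rho v_j\}\cup\{v_jx:x\in\off(v_j)\}$, one per child, and $PD_\Tree$ of any solution is exactly the sum of the contributions of these groups; crucially, these contributions are independent across children. I would therefore precompute, using the greedy rule above, a value $C_j(s)$ giving the best contribution of child $v_j$ when exactly $s$ of its offspring are saved, for each $s\le\min(|\off(v_j)|,B)$. It then remains only to distribute the budget, via a table $\DP[j,s]$ storing the maximum total contribution of $v_1,\dots,v_j$ while saving exactly $s$ taxa overall, with the update $\DP[j,s]=\max_{0\le s'\le s}\big(\DP[j-1,s-s']+C_j(s')\big)$; one returns \yes iff $\DP[m,s]\ge D$ for some $s\le B$. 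Capping $B$ at $|X|$ (saving more taxa is impossible), the table has $O(|X|^2)$ entries computed in $O(|X|)$ time each, so the whole procedure is polynomial.

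The step I expect to be the main obstacle, and the only genuinely non-routine point, is justifying the simultaneous optimality in the building block: that the greedy top-$s$ selection maximizes the root-edge term and the leaf-edge term at once. This rests entirely on ultrametricity equalizing the weights $\w(v_jx)$, and it cannot be dropped — without equal leaf weights the term $\sum_{x_i\in S}\w(v_jx_i)\,w(x_i)$ need not favor the highest-probability taxa, and indeed Theorem~\ref{thm:GNAP-ucNAP-hardness} shows that \ucNAP is \NP-hard already on (non-ultrametric) trees of height~$2$. So the crux is to verify carefully that equal leaf-edge weights make the two objectives coincide, after which the additive decomposition across children and the allocation dynamic program are straightforward.
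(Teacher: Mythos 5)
Your proposal is correct and follows essentially the same route as the paper, which also argues that within each child's subtree the taxa with the highest survival probabilities form the optimal choice (your simultaneous-optimality observation, which the paper leaves implicit) and then uses a dynamic program assigning to each child of the root the number of taxa saved among its offspring. Your write-up merely fills in the details the paper's brief sketch omits, in particular the justification that ultrametricity makes the root-edge term and the leaf-edge term agree on the same greedy top-$s$ set.
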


In the following theorem, we show that, however,~\ucNAP is \NP-hard even when restricted to ultrametric phylogenetic trees with a height of~3.
\begin{theorem}
	\label{thm:GNAP-C=1,height=3,ultrametric}
	\ucNAP is \NP-hard even if the given phylogenetic tree is ultrametric and has a height of at most~3.
\end{theorem}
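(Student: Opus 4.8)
The plan is to reduce from \ucNAP on instances whose phylogenetic tree is a star of height~2 with a degree-1 root, which is \NP-hard by Theorem~\ref{thm:GNAP-ucNAP-hardness}, but to wrap that construction so that the resulting tree becomes ultrametric of height~3. The obstacle in Observation~\ref{obs:GNAP-ultrametric-greedy} is that height-2 ultrametric instances are greedily solvable because every leaf hangs at the same weighted depth; the hardness of Theorem~\ref{thm:GNAP-ucNAP-hardness} crucially exploits the \emph{non}-ultrametric edge weights $\w(v x_i) = 2^t (a_i/(1-b_i))$ that differ across leaves. So the key idea is to push those individuating weights one level down: instead of placing each taxon as a direct child of the star-center, I would give each taxon its own pendant edge below a new middle layer, so that the decisive weight sits on an internal edge while all root-to-leaf path lengths can be equalized.

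Concretely, starting from an instance $\Instance = (T,k,Q,D)$ of \PS, I would build a tree with root $\rho$, a single child $v$ (so the root still has degree~1, preserving the structure that made height-2 hard), and then for each tuple $t_i$ a child $v_i$ of $v$ carrying the individuating weight $\w(v v_i) := 2^t(a_i/(1-b_i))$, with a single taxon-leaf $x_i$ hanging below $v_i$ via a padding edge. The $\w(\rho v)$ edge again encodes $2^t Q$. The padding edge weights $\w(v_i x_i)$ are chosen per-leaf to force ultrametricity: I set the total root-to-leaf distance to a common value $p$ (large enough to dominate every $\w(\rho v)+\w(v v_i)$), so $\w(v_i x_i) := p - \w(\rho v) - \w(v v_i)$. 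Since each $x_i$ is the unique offspring of $v_i$, saving project $(1,1-b_i)$ for $x_i$ contributes its survival-probability factor $(1-b_i)$ to both the edge $\w(v v_i)$ above $v_i$ and its own pendant edge, so I would re-derive the expected-diversity bookkeeping exactly as in the proof of Theorem~\ref{thm:GNAP-ucNAP-hardness}, now summing the contributions of the two edges on each active branch. The survival-probability computation at $v$ is unchanged because the set of offspring of $v$ coincides with the active taxa.

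The correctness argument then mirrors Theorem~\ref{thm:GNAP-ucNAP-hardness}: for a solution $S$ of \PS I set $S' := \{(1,w(x_i)) \mid t_i\in S\}$ and compute $PD_\Tree(S')$ as a sum over the chosen pendant branches plus the $\w(\rho v)$ term, verifying it equals $2^t(D+Q)$ plus the contribution of the fixed padding edges (which is a constant $P$ depending only on the chosen set size $k$); and conversely. The main subtlety is that the padding edges $\w(v_i x_i)$ also contribute expected diversity whenever $x_i$ is saved, so the threshold $D'$ must absorb a term $\sum_{t_i\in S} w(x_i)\cdot \w(v_i x_i)$. Because the budget $B := k$ forces exactly $k$ saved taxa, and because I can make the padding weights \emph{uniform} across all leaves (by instead fixing $\w(v v_i)$ to be the individuating part and folding $a_i/(1-b_i)$ entirely onto $\w(v v_i)$ while giving every $x_i$ the \emph{same} pendant weight $\w(v_i x_i) := p - \w(\rho v) - \max_j \w(v v_j)$, then adding a compensating gadget), the extra term becomes $k\cdot(\text{const})\cdot(\text{per-leaf factor})$ and can be moved into $D'$.

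The hard part will be reconciling ultrametricity with the need to keep the individuating weights \emph{per edge} while the survival factors $(1-b_i)$ differ across leaves: making all pendant weights equal forces the varying information onto the $\w(v v_i)$ edges, but then the expected-diversity contribution of a saved $x_i$ picks up the term $w(x_i)\cdot\w(v_i x_i) = (1-b_i)\cdot\w(v_i x_i)$, which depends on $b_i$ even when $\w(v_i x_i)$ is constant. I expect to resolve this by a two-child gadget at each $v_i$: give $v_i$ a second, always-saved dummy leaf of \sprop~$1$ so that the edge $\w(v v_i)$ is fully counted independently of whether $x_i$ is saved, isolating the $b_i$-dependent contribution cleanly to a single controllable edge, and choosing the pendant weights to restore equal depths. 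Once the bookkeeping is arranged so that $PD_\Tree(S')$ differs from the \PS objective $\sum a_i - Q\prod b_i$ only by a set-size-dependent constant, the equivalence and the polynomial-time computability of all weights (the $a_i/(1-b_i)$ are rationals of polynomial encoding length by the construction underlying Theorem~\ref{thm:Pre-NP-PenSum}) follow as in Theorem~\ref{thm:GNAP-ucNAP-hardness}, completing the reduction.
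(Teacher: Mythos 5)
Your overall architecture --- push each taxon one level down behind a new internal vertex, attach a dummy taxon there, pad edge weights to equalize all root-to-leaf distances, enlarge the budget by the number of dummies, and use an exchange argument to assume every dummy is saved --- is essentially the architecture of the paper's proof (the paper reduces generically from the height-2, degree-1-root instances of Theorem~\ref{thm:GNAP-ucNAP-hardness} instead of re-deriving the \PS bookkeeping, but that difference is cosmetic). The genuine gap is in your fix for the $b_i$-dependent padding contribution: you propose giving each $v_i$ an ``always-saved'' dummy leaf with \sprop~$1$. Any such dummy is a descendant of $v$, so once it is saved, the survival factor on the root edge becomes $1 - \prod_{x\in\off(v)}(1-w(x)) = 1$, because one factor of the product is $(1-1)=0$. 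The edge $\rho v$ carrying the weight $2^tQ$ is then counted in full regardless of which original taxa are chosen, which erases precisely the penalty term $-Q\cdot\prod_{t_i\in S}b_i$ that the reduction from \PS must encode; the instance degenerates and the equivalence fails.

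The paper resolves exactly this tension by giving the dummies \sprop~$1-2^{-t}$: strictly below $1$, so that saving all $|X_2|$ dummies multiplies the root-edge product by the nonzero constant $2^{-t|X_2|}$ rather than annihilating it (compensated by rescaling $\w'(\rho v) = \w(\rho v)\cdot 2^{t|X_2|}(2^t-1)$), yet, by the choice of $t$, strictly above every original \sprop, which is what powers the exchange argument (the reduction behind Theorem~\ref{thm:GNAP-ucNAP-hardness} guarantees $w(x_j)\ne 1$, so such a $t$ exists). The varying depths are absorbed not by uniform pendant edges but by the split $\w'(v u_i) = 2^t(\w(v x_1)-\w(v x_i))$ and $\w'(u_i x_i) = \w'(u_i x_i^*) = 2^t\w(v x_i) - \w(v x_1)$, after which the contribution of a saved pair $\{x_i,x_i^*\}$ collapses to $(2^t-1)\bigl(\w(v x_1)(1-2^{-t}) + \w(v x_i)\cdot w(x_i)\bigr)$, i.e., a per-leaf constant plus the original term scaled by $(2^t-1)$, so everything folds into a set-size-dependent adjustment of $D'$. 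This algebra --- probabilities just below $1$ together with the $2^t$-scaling that makes the error terms cancel --- is the actual content of the proof, and it is exactly the part your unspecified ``compensating gadget'' would have to supply; as written, your construction does not establish the theorem.
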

\begin{proof}
	By Theorem~\ref{thm:GNAP-ucNAP-hardness}, it suffices to reduce from \ucNAP with the restriction that the root has only one child and the height of the tree is~2.
	
	\proofpara{Reduction}
	Let~$\Instance = (\Tree,\mathcal{P},B,D)$ be an instance of~\ucNAP in which the root~$\rho$ of~$\Tree$ has only one child~$v$ and the height of~\Tree is~2.
	Without loss of generality, assume~$\w(v x_i)\ge \w(v x_{i+1})$ for each~$i\in [|X|-1]$ and there is a fixed~$s\in [|X|]$ with~$w(x_s)\ge w(x_j)$ for each~$j\in [|X|]$.
	Observe, that by the reduction in Theorem~\ref{thm:GNAP-ucNAP-hardness} we may assume~$w(x_j) \ne 1$ for each~$x_j\in X$.
	
	Consider Figure~\ref{fig:reduction} for an illustration of this reduction.
	
	We define an instance~$\Instance' := (\Tree',\mathcal{P},B',D')$ of~\ucNAP.
	Let~$X_1\subseteq X$ be the set of vertices~$x_i$ with~$\w(v x_1)=\w(v x_i)$.
	If~$X_1=X$, then~$\Instance$ is already ultrametric and simply output~$\Instance$.
	Otherwise, define~$X_2 := X\setminus X_1$.
	Fix an integer~$t\in\mathbb{N}$ that is large enough such that~$1-2^{-t}>w_{s,1}$ and~$2^t\cdot \w(v x_{|X|})>\w(v x_1)$.
	Define a tree~$\Tree'=(V',E',\w')$, in which~$V'$ contains the vertices~$V$ and add two vertices~$u_i$ and~$x_i^*$ for every~$x_i\in X_2$.
	Let~$X^*$ be the set of leaves~$x_i^*$.
	The set of edges is defined by~$E'=\{ \rho v \}\cup \{ v x_i \mid x_i\in X_1\} \cup \{ v u_i, u_i x_{i}, u_i x_i^* \mid x_i\in X_2 \}$. Observe that the leaf set of~$\Tree$ is~$X\cup X^*$.
	The weights of the edges are defined as:
	\begin{itemize}
		\item We set $\w'(\rho v) = \w(\rho v) \cdot 2^{t\cdot |X_2|} \cdot (2^t - 1)$.
		\item For~$x_i\in X_1$ we set~$\w'(v x_i) = (2^t - 1) \cdot \w(v x_1)$.
		\item For~$x_i\in X_2$ we set~$\w'(v u_i) = 2^t \cdot (\w(v x_1) - \w(v x_i))$\\
		and~$\w'(u_i x_{i}) = \w'(u_i x_i^*) = (2^t\cdot \w(v x_i)) - \w(v x_1)$.
	\end{itemize}
	
	Define project lists~$P_{i}^*:=((0,0),(1,1-2^{-t}))$ for each taxa~$x_i^*\in X^*$.
	Then, define~$\mathcal P' := \mathcal P \cup \{ P_i^* \mid x_i^*\in X^* \}$.
	Finally, we set~$B':=B+|X^*|$ and
	$$D'=(2^t-1)\cdot \left(D + |X_2|\cdot (2^t-1)\cdot \w(v x_1)+\left(2^{t|X_1|}-1\right)\cdot \w(w v)\right).$$
	
	\proofpara{Correctness}
	Observe that~$t$ can be chosen to be in~$\Oh(\wcode + \max_\w)$. Hence, the reduction can be computed in polynomial time.
	Moreover, the new phylogenetic tree has a height of~3.
	Before showing that \Instance is a \yes-instance if and only if $\Instance'$ is a \yes-instance, we first prove that~$\Tree'$ is ultrametric.
	%
	
	To show that~$\Tree'$ is ultrametric, for each~$x_i\in X$ and~$x_i^*\in X^*$ we show that the paths from~$v$ to~$x_i$ and from~$v$ to~$x_i^*$ have the same length, as the edge from~$v$ to~$x_1$. This is sufficient because every path from the root to a taxon visits~$v$.
	By definition, the claim is correct for every~$x_i\in X_1$.
	For an~$x_i\in X_2$, the path from~$v$ to~$x_i$ is
	\begin{eqnarray*}
		\w'(v u_i) + \w'(u_i x_i)
		&=& 2^t \cdot (\w(v x_1) - \w(v x_i)) + 2^t\cdot \w(v x_i) - \w(v x_1)\\
		&=& (2^t-1) \cdot \w(v x_1) = \w'(v x_1).
	\end{eqnarray*}
	By definition, this is also the length of the path from~$v$ to~$x_i^*\in X_i^*$.
	We conclude that~$\Tree'$ is an ultrametric tree.
	We now show that~\Instance is a \yes-instance of~\ucNAP if and only if~$\Instance'$ is a \yes-instance of~\ucNAP.

	Let~$S\subseteq X$ be a set of taxa.
	Define~$S' := S \cup X_i^*$.
	We show that~$S$ is a solution for~\Instance if and only if~$S'$ is a solution of~$\Instance'$.
	First,~$|S'|=|S|+|X^*|$ and hence~$|S|\le B$ if and only if~$|S'|\le B'=B+|X^*|$.
	We compute the phylogenetic diversity of~$S'$ in~$\Tree'$. Here, we first consider the diversity from the subtree that consists of~$v$, $u_i$, $x_i$, and~$x_i^*$ for both options, whether~$S'$ contains~$x_i$ or not.
	Afterward, we consider the additional value from the edge~$\rho v$.
	
	If~$S'$ contains~$x_i^*$ but not~$x_i$, the contribution is
	\begin{eqnarray*}
		&& \sum_{x_i\in X_2 \setminus S} (\w'(v u_i) + \w'(u_i x_i))\cdot w'(x_i^*)\\
		&=& \sum_{x_i\in X_2 \setminus S} (2^t \cdot (\w(v x_1) - \w(v x_i)) + 2^t\cdot \w(v x_i) - \w(v x_1))\cdot (1-2^{-t})\\
		&=& \sum_{x_i\in X_2 \setminus S} \w(v x_1)\cdot (2^t-1) \cdot (1-2^{-t}).
	\end{eqnarray*}

	For those vertices where~$S'$ contains~$x_i$ and~$x_i^*$, the contributed phylogenetic diversity is
	\begin{eqnarray*}
		&& \sum_{x_i\in X_2 \cap S} \w'(v u_i)\cdot (1-(1-w'(x_i^*))\cdot (1-w'(x_i)))\\
		&& +\sum_{x_i\in X_2 \cap S} \w'(u_i x_i)\cdot w'(x_i) + \sum_{x_i\in X_2 \cap S} \w'(u_i x_i^*)\cdot w'(x_i^*)\\
		&=& \sum_{x_i\in X_2 \cap S} 2^t \cdot (\w(v x_1) - \w(v x_i))\cdot (1-2^{-t}\cdot (1-w(x_i)))\\
		&& +\sum_{x_i\in X_2 \cap S} (2^t\cdot \w(v x_i) - \w(v x_1))\cdot (w(x_i)+1-2^{-t})\\
		&=& \sum_{x_i\in X_2 \cap S} (\w(v x_1) - \w(v x_i))\cdot (2^t - (1-w(x_i)))\\
		&& +\sum_{x_i\in X_2 \cap S} (2^t\cdot \w(v x_i) - \w(v x_1))\cdot (w(x_i)+1-2^{-t})\\
		&=& \sum_{x_i\in X_2 \cap S} \w(v x_1)\cdot (2^t - (1-w(x_i))-w(x_i)-1+2^{-t}))\\
		&& + \w(v x_i)\cdot (-2^t + (1-w(x_i))+2^t (w(x_i)+1-2^{-t}))\\
		&=& \sum_{x_i\in X_2 \cap S} \w(v x_1)\cdot (2^t-2+2^{-t}) + \w(v x_i)\cdot ((2^t-1) \cdot w(x_i))\\
		&=& (2^t-1)\cdot \left(\sum_{x_i\in X_2 \cap S} \w(v x_1)\cdot (1-2^{-t}) + \w(v x_i)\cdot w(x_i)\right).
	\end{eqnarray*}
	
	Finally, for the edge~$\rho v$ the contribution is
	\begin{eqnarray*}
		&& \w'(\rho v)\cdot \left( 1- \prod_{x_i^*\in X^*} (1-w'(x_i^*))\cdot \prod_{x_i\in S} (1-w'(x_i)) \right)\\
		&=& \w(\rho v) \cdot 2^{t\cdot |X_2|} \cdot (2^t - 1)\cdot \left( 1- 2^{-t\cdot|X_2|}\cdot \prod_{x_i\in S} (1-w(x_i))\right)\\
		&=& \w(\rho v) \cdot 2^{t\cdot |X_2|} \cdot (2^t - 1) - \w(\rho v) \cdot (2^t - 1)\cdot \prod_{x_i\in S} (1-w(x_i)).
	\end{eqnarray*}
	
	Altogether, we conclude
	\begin{eqnarray*}
		&& PD_{\Tree'}(S')\\
		&=& \w'(\rho v)\cdot \prod_{x_i^*\in X^*} (1-w'(x_i^*))\cdot \prod_{x_i\in S} (1-w'(x_i))\\
		&& +\sum_{x_i\in X_1 \cap S} \w'(v x_1)\cdot w'(x_i)\\
		&& +\sum_{x_i\in X_2 \cap S} \w'(v u_i)\cdot (1-(1-w'(x_i^*))\cdot (1-w'(x_i)))\\
		&& +\sum_{x_i\in X_2 \cap S} \w'(u_i x_i)\cdot w'(x_i) + \sum_{x_i\in X_2 \cap S} \w'(v x_1)\cdot (1-2^t)\\
		&& +\sum_{x_i\in X_2 \setminus S} (\w'(v u_i) + \w'(u_i x_i))\cdot w'(x_i^*)\\
		&=& \w(\rho v) \cdot 2^{t\cdot |X_2|} \cdot (2^t - 1) - \w(\rho v) \cdot (2^t - 1)\cdot \prod_{x_i\in S} (1-w(x_i))\\
		&& +\sum_{x_i\in X_1 \cap S} (2^t - 1) \cdot \w(v x_1)\cdot w(x_i)\\
		&& +(2^t-1)\cdot \left(\sum_{x_i\in X_2 \cap S} \w(v x_1)\cdot (1-2^{-t}) + \w(v x_i)\cdot w(x_i)\right)\\
		&& +\sum_{x_i\in X_2 \setminus S} \w(v x_1)\cdot (2^t-1) \cdot (1-2^{-t})\\
		&=& (2^t-1)\cdot \left[ \w(\rho v) \cdot 2^{t\cdot |X_2|} - \w(\rho v) \cdot \prod_{x_i\in S} (1-w(x_i))\right.\\
		&& \left.+\sum_{x_i\in S} \w(v x_i)\cdot w(x_i) + \sum_{x_i\in X_2} \w(v x_1)\cdot (1-2^{-t})\right]\\
		&=& (2^t-1)\cdot \left[ \w(\rho v) \cdot \left(1- \prod_{x_i\in S} (1-w(x_i))\right) + \sum_{x_i\in S} \w(v x_i)\cdot w(x_i) \right.\\
		&& \left.+\w(\rho v) \cdot (2^{t\cdot |X_2|}-1) + \sum_{x_i\in X_2} \w(v x_1)\cdot (1-2^{-t})\right]\\
		&=& (2^t-1)\cdot \left[ PD_\Tree(S) + (2^{t\cdot |X_2|}-1)\cdot \w(\rho v) + |X_2|\cdot (1-2^{-t})\cdot \w(v x_1)\right].
	\end{eqnarray*}
	It follows from this equation that~$PD_{\Tree'}(S')\le D'$ if and only if~$PD_{\Tree}(S)\le D$.
	
	\begin{figure}[t]
		\centering
		\begin{tikzpicture}[xscale=1.1,
			sibling distance=5em,
			every node/.style = {
				shape=circle,
				fill=white,
				rounded corners,
				minimum size=9,
				inner sep=0,
				draw,}
			]
			\tikzstyle{txt}=[circle,fill=white,draw=white,inner sep=0pt]
			
			\draw (-2,-1) -- (2,-1);
			
			\node (r) at (0,0) {};
			\node (v) at (0,-1) {};
			\node (x1) at (-2,-7) {};
			\node (x2) at (-1,-7) {};
			\node (x3) at (0,-5) {};
			\node (x4) at (1,-3) {};
			\node (x5) at (2,-2) {};
			
			\node[txt] at (-0.25,0.25) {$r$};
			\node[txt] at (-0.25,-0.75) {$v$};
			\node[txt] at (-2,-7.45) {$x_1$};
			\node[txt] at (-1,-7.45) {$x_2$};
			\node[txt] at (0,-5.45) {$x_3$};
			\node[txt] at (1,-3.45) {$x_4$};
			\node[txt] at (2,-2.45) {$x_5$};
			
			\node[txt] at (0.25,-0.5) {$1$};
			\node[txt] at (-2.25,-4) {$6$};
			\node[txt] at (-1.25,-4) {$6$};
			\node[txt] at (-0.25,-3) {$4$};
			\node[txt] at (0.75,-2) {$2$};
			\node[txt] at (1.75,-1.5) {$1$};
			
			\draw[-stealth] (r) -- (v);
			\draw[-stealth] (-2,-1) -- (x1);
			\draw[-stealth] (-1,-1) -- (x2);
			\draw[-stealth] (v) -- (x3);
			\draw[-stealth] (1,-1) -- (x4);
			\draw[-stealth] (2,-1) -- (x5);
		\end{tikzpicture}
		\hfill
		\begin{tikzpicture}[xscale=1.6,
			sibling distance=5em,
			every node/.style = {
				shape=circle,
				fill=white,
				rounded corners,
				minimum size=9,
				inner sep=0,
				draw}
			]
			\tikzstyle{txt}=[circle,fill=white,draw=white,inner sep=0pt]
			
			\draw (-0.25,-3.286) -- (0.25,-3.286);
			\draw (0.75,-5.571) -- (1.25,-5.571);
			\draw (1.75,-6.714) -- (2.25,-6.714);
			
			\node[txt] at (1.2,-0.5) {$2^9\cdot 7=3584$};
			\draw (-2,-1) -- (2,-1);
			
			\node[txt] at (1.5,-6.8) {$2$};
			
			\node[txt] at (-0.25,0.25) {$r$};
			\node[txt] at (-0.25,-0.75) {$v$};
			\node[txt] at (-2,-7.45) {$x_1$};
			\node[txt] at (-1,-7.45) {$x_2$};
			\node[txt] at (-0.25,-7.45) {$x_3$};
			\node[txt] at (0.75,-7.45) {$x_4$};
			\node[txt] at (1.75,-7.45) {$x_5$};
			\node[txt] at (0.25,-7.45) {$x_3^*$};
			\node[txt] at (1.25,-7.45) {$x_4^*$};
			\node[txt] at (2.25,-7.45) {$x_5^*$};
			\node[txt] at (-0.25,-3.036) {$u_3$};
			\node[txt] at (0.75,-5.321) {$u_4$};
			\node[txt] at (1.75,-6.464) {$u_5$};
			
			\node (r) at (0,0) {};
			\node (v) at (0,-1) {};
			\node (u3) at (0,-3.286) {};
			\node (u4) at (1,-5.571) {};
			\node (u5) at (2,-6.714) {};
			\node (x1) at (-2,-7) {};
			\node (x2) at (-1,-7) {};
			\node (x3) at (-0.25,-7) {};
			\node (x4) at (0.75,-7) {};
			\node (x5) at (1.75,-7) {};
			\node (x3*) at (0.25,-7) {};
			\node (x4*) at (1.25,-7) {};
			\node (x5*) at (2.25,-7) {};
			
			\node[txt] at (-2.25,-4) {$42$};
			\node[txt] at (-1.25,-4) {$42$};
			\node[txt] at (-0.25,-2.1) {$16$};
			\node[txt] at (0.75,-3.3) {$32$};
			\node[txt] at (1.75,-3.9) {$40$};
			
			\draw[dashed,->] (r) -- (v);
			
			\draw[-stealth] (-2,-1) -- (x1);
			\draw[-stealth] (-1,-1) -- (x2);
			
			\draw[-stealth] (v) -- (u3);
			\draw[-stealth] (1,-1) -- (u4);
			\draw[-stealth] (2,-1) -- (u5);
			
			\draw[-stealth] (-0.25,-3.286) -- (x3);
			\draw[-stealth] (0.25,-3.286) -- (x3*);
			\draw[-stealth] (0.75,-5.571) -- (x4);
			\draw[-stealth] (1.25,-5.571) -- (x4*);
			\draw[-stealth] (1.75,-6.714) -- (x5);
			\draw[-stealth] (2.25,-6.714) -- (x5*);
			\node[txt] at (0,-5.1) {$26$};
			\node[txt] at (1,-6.3) {$10$};
		\end{tikzpicture}
		\caption{This figure shows an example of the reduction presented in Theorem~\ref{thm:GNAP-C=1,height=3,ultrametric}, where on the left side the tree of an example-instance~\Instance and on the right side the tree of instance~$\Instance'$ is depicted. Here, the~\sprops are omitted and we used~$t=3$.}
		\label{fig:reduction}
	\end{figure}
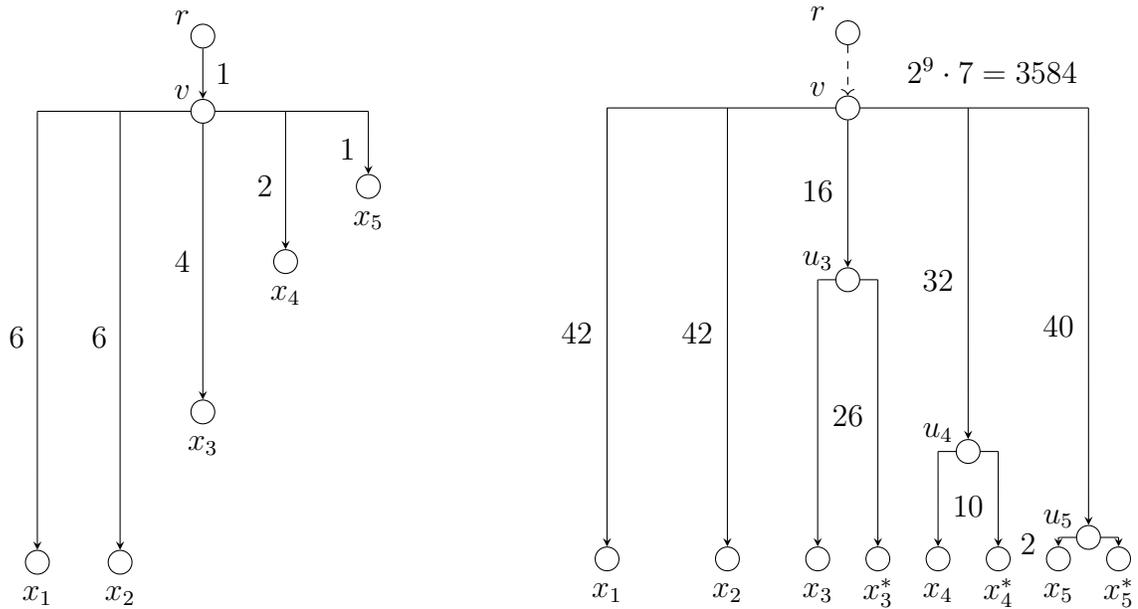
	
	Conversely,
	we show that~$\Instance'$ has a solution~$S'$ with~$X^*\subseteq S'$. This then implies that~$S'\setminus X^*$ is a solution for~\Instance by the argument we saw beforehand. 
	
	Let~$S'$ be a solution for~$\Instance'$ that contains a maximum number of elements of~$X^*$ among all solutions.
	If~$X^*\subseteq S'$, then we are done.
	Assume otherwise and choose some~$x_i^* \not\in S'$. We show that there is a solution containing~$x_i^*$ and all elements of~$S'\cap X^*$, contradicting the choice of~$S'$.
	If~$x_i$ is in~$S'$, then we consider the\lb set~$S_1 := (S'\setminus\{x_i\}) \cup \{x_i^*\}$. Now,~$|S_1|=|S'|\le B'$
	and because we defined that~$w(x_i^*)=1-2^{-t}>w(x_i)$, we conclude that~$PD_{\Tree'}(S_1)>PD_{\Tree'}(S')\ge D'$.\lb
	If~$S'\subsetneq X^*$, then we consider the set~$S_2 := X^*$. Now,~$|S_2|=|X^*|\le B'$\lb and~$PD_{\Tree'}(S_2)\ge PD_{\Tree'}(S')\ge D'$.
	Finally, assume that~$x_i,x_i^*\not\in S'$ and~$x_j\in S'$ for some~$j\ne i$.
	Again,~$w(x_i^*)=1-2^{-t}>w(x_i)$ and we know that the length of the path from~$v$ to~$x_i^*$ and~$x_j$ is the same. Consider the set~$S_3 := (S'\setminus\{x_j\}) \cup \{x_i^*\}$ and observe~$|S_3|=|S'|\le B'$. Moreover, by the above,~$PD_{\Tree'}(S_3)>PD_{\Tree'}(S')\ge D'$.
	This completes the proof.
\end{proof}

\section{Discussion}
In this chapter, we considered \GNAP and three special cases of \GNAP---namely, \NAP[0]{c_i}{1}{2}, \ucNAP, and the case that the phylogenetic tree is a star---and we provided several tractability and intractability results.
Our most important results are as follows.
\GNAP is \Wh{1}-hard with respect to the number of taxa but can be solved with an~\XP-algorithm when parameterized by the number of unique costs plus \sprops,~$\var_c+\var_w$.
We presented some pseudo-polynomial running-time algorithms for~\NAP[a_i]{c_i}{b_i}{2} and \GNAP restricted to trees is a star.
We finally showed that \ucNAP is \NP-hard even in very restricted cases.

Naturally, several open questions remain.
We want to list some of them.

Most notably, we do not known whether~\GNAP is weakly or strongly \NP-hard.
We therefore can not exclude a pseudo-polynomial running-time algorithm for \GNAP yet.
While we showed that \PS can be solved in pseudo-polynomial running-time by Proposition~\ref{prop:Pre-PS-pseudo}, it remains open whether such an algorithm exists even for \ucNAP.

Moreover, it remains open whether the result of Theorem~\ref{thm:GNAP-varc+varw} can be improved so that \GNAP is \FPT with respect to~$\var_c+\var_w$.

In Observation~\ref{obs:GNAP-D=height=1,varw=2}, we presented an easy reduction to show that \GNAP---or more precisely even the special case~\NAP[0]{c_i}{b}{2}---is already \NP-hard when~$D=1$.
In this reduction, we, however, increase the encoding-length of each \sprop.
We therefore wonder if~\NAP[0]{c_i}{b}{2} or even~\GNAP are \FPT when parameterized with~$D$ plus the binary encoding size of the \sprops.
In Proposition~\ref{prop:GNAP-height=1->mckp}, we showed that this holds at least in the case that the phylogenetic tree has a size of~1.

\chapter{Phylogenetic Diversity with Extinction Times}
\label{ctr:TimePD}

\section{Introduction}
In this chapter, we consider an extension of the classic \MPDlong (\MPD) problem, in which species~(taxa) have differing \emph{extinction times}, after which they will die out if they have not been saved.
In~\MPD, we are given a phylogenetic tree $\Tree$ and integers $k$, and~$D$ as input, and we are asked if there exists a subset of $k$ taxa whose phylogenetic diversity is at least $D$.
\MPD is polynomial-time solvable by a greedy algorithm~\cite{steel,Pardi2005}.
However, already the generalization \BNAP of \MPD, in which each taxon has an associated integer cost which would be necessary to be paid to save the taxon, is \NP-hard~\cite{pardi07}.

In this extension of \MPD, the cost of a taxon may, just as well as financial or space capacities, be used to represent the fact that different taxa may take a different amount of time to save from extinction.
However, to the best of our knowledge, previously studied versions of \MPD do not take into account that different taxa may have different amounts of \emph{remaining} time before extinction.
Thus, to ensure that a set of taxa can be saved with the available resources, it is not enough to guarantee that their total cost is below a certain threshold.
One also needs to ensure that there is a schedule under which each taxon is saved before its moment of extinction.

We take the first step in addressing this issue by introducing two extensions of~\BNAP, denoted \tPDslong(\tPDs) and \tPDwslong (\tPDws), in which each taxon has an associated \emph{rescue length} (the amount of time it takes to save the taxon) and also an \emph{extinction time} (the time after which the taxon can not be saved anymore).
In each problem, there is a set of available teams that can work towards saving the taxa; under \tPDs different teams may collaborate on saving a taxon while in \tPDws they may not.

These problems have much in common with machine scheduling problems, insofar as we may think of the taxa as corresponding to jobs with a certain due date and the teams as corresponding to machines. One may think of \tPDs and \tPDws as machine scheduling problems, in which the objective to be maximized is the phylogenetic diversity (as determined by the input tree $\Tree$) of the set of completed tasks (which are the saved taxa).

\paragraph*{Related Work in Scheduling.}
Scheduling problems are denoted by a three-field notation introduced by Graham~et~al.~\cite{graham1979}.
Herein, many problems are written as a triple $\alpha|\beta|\gamma$, where $\alpha$ is the machine environment, $\beta$ are job characteristics and scheduling constraints, and $\gamma$ is the objective function.
For a more detailed view, we refer to~\cite{graham1979,mnich}.

The scheduling problem most closely related to the problems studied in this chapter is $Pt||\sum w_j (1-U_j)$.
In this problem, we are given a set of jobs, each with an integer weight, a processing time, and a due date.
We are also given $t$ identical machines that can each process one job at a time.
The task is to schedule jobs on the available machines in such a way that we maximize the total weight of jobs completed before the due date.
(Here $w_j$ denotes the weight of job $j$, and $U_j = 1$ if job $j$ is not completed on time, and $U_j =0$, otherwise).
This is similar to \tPDws, in the case that the $t$ available teams have identical starting and ending times. We may think of taxa as analogous to jobs, with the extinction times corresponding to due dates.
The key difference is that in \tPDws, rather than maximizing the total weight of the completed jobs, we aim to maximize their phylogenetic diversity, as determined by the input tree $\Tree$. 


Although approximation algorithms for scheduling problems are common, parameterized algorithms for scheduling problems are rare~\cite{mnich}.
The most commonly investigated special case of \tPDws is $1||\sum w_j (1-U_j)$---that\linebreak is~$Pt||\sum w_j (1-U_j)$ with only one machine.
This problem is weakly \NP-hard and is solvable in pseudo-polynomial time~\cite{lawler1969,sahni1976},
while the unweighted version~$1||\sum (1-U_j)$ is solvable in polynomial time~\cite{moore1968,maxwell1970,sidney1973}.
Parameters studied for~$1||\sum w_j (1-U_j)$ include the number of different due dates, the number of different processing times, and the number of different weights.
The problem  is \FPT when parameterized by the sum of any two of these parameters~\cite{hermelin}.
When parameterized by one of the latter two parameters, it is \Wh{1}-hard~\cite{heeger2024} but \XP~\cite{hermelin}.
Also a version has been studied in the light of parameterized algorithms in which there are also few distinct deadlines~\cite{heeger2023}.

\paragraph*{Our Contribution.}
With \tPDs and \tPDws we introduce problems in maximizing phylogenetic diversity in which extinction times of taxa may vary.
We further provide a connection to the well-regarded field of scheduling.

Both problems turn out to be \NP-hard; this result is perhaps unsurprising given their close relation to scheduling problems, and we therefore analyze \tPDs and \tPDws within the framework of parameterized complexity.
Our most important results are $\Oh^*(2^{2.443\cdot D + o(D)})$-time algorithms for \tPDs and \tPDws (Section~\ref{sec:timePD-D}),
and a $\Oh^*(2^{6.056\cdot \Dbar + o(\Dbar)})$-time algorithm for \tPDs (Section~\ref{sec:timePD-Dbar}).
Moreover, both problems are \FPT with respect to the available person-hours $H_{\max_{\ex}}$ (Proposition~\ref{prop:timePD-T+maxr}).
A detailed list of known results for \tPDs and \tPDws is given in Table~\ref{tab:results}.
In the table and the rest of the chapter we use the convention~$n := |X|$.

A key challenge for the design of \FPT algorithms for \tPDs and \tPDws is the combination of the tree structure and extinction time constraints. 
Tree-based problems often suit a dynamic programming (DP) approach, where partial solutions are constructed for subtrees of the input tree, (starting with the individual leaves and proceeding to larger subtrees).
Scheduling problems with due dates (such as our extinction times) may also suit a DP approach, where partial solutions are constructed for subsets of tasks with due dates below some bound.
These two DP approaches have a conflicting structure for the desired processing order, which makes designing a DP algorithm for \tPDs and \tPDws more difficult than it may first appear.
Our solution involves the careful use of color coding to reconcile the two approaches.
We believe that this approach may also be applicable to other extensions of \MPD.

\begin{table}[t]
	\centering
	\footnotesize
	\caption{Parameterized complexity results for \tPDs and \tPDws.}
	\label{tab:results}
	\resizebox{\columnwidth}{!}{%
		\myrowcols
		\begin{tabular}{lllll}
			\hline
			Parameter & \multicolumn{2}{c}{\tPDs} & \multicolumn{2}{c}{\tPDws}\\
			\hline
			Diversity $D$ & \FPT $\Oh^*(2^{2.443\cdot D + o(D)})$ & Thm.~\ref{thm:timePD-D} & \FPT $\Oh^*(2^{2.443\cdot D + o(D)})$ & Thm.~\ref{thm:timePD-D}\\
			Diversity loss $\Dbar$ \; & \FPT $\Oh^*(2^{6.056\cdot \Dbar + o(\Dbar)})$ & Thm.~\ref{thm:timePD-DBar} & \NP-hard even if $\Dbar = 0$ & \cite{GSJ1976}\\
			\# Taxa $n$ & \FPT $\Oh^*(2^{n})$  & Prop.~\ref{prop:timePD-X}\ref{prop:timePD-X-s} & \FPT $\Oh^*(n!)$  & Prop.~\ref{prop:timePD-X}\ref{prop:timePD-X-ws}\\
			\# Teams $|T|$ & \NP-hard even if $|T| = 1$  & Prop.~\ref{prop:timePD-Knapsack}\ref{thm:timePD-KP-NP} \; & \NP-hard even if $|T| = 1$  & Prop.~\ref{prop:timePD-Knapsack}\ref{thm:timePD-KP-NP}\\
			Solution size $k$ & \Wh{1}-hard & Prop.~\ref{prop:timePD-Knapsack}\ref{thm:timePD-KP-W1} & \Wh{1}-hard & Prop.~\ref{prop:timePD-Knapsack}\ref{thm:timePD-KP-W1}\\
			\hline
			Max time $\max_{\ex}$ \; & \textit{open} & & \textit{open} & \\
			Unique times $\var_{\ex}$ \; & \NP-hard even if $\var_{\ex}=1$ & Prop.~\ref{prop:timePD-Knapsack}\ref{thm:timePD-KP-NP} & \NP-hard even if $\var_{\ex}=1$ & Prop.~\ref{prop:timePD-Knapsack}\ref{thm:timePD-KP-NP}\\
			Unique lengths $\var_\ell$ \; & \textit{open} & & \Wh{1}-hard & \cite{heeger2024}\\
			\hline
			Person-hours $H_{\max}$ & \FPT $\Oh^*((|T|+1)^{2\max_{\ex}})$ & Prop.~\ref{prop:timePD-T+maxr}\ref{prop:timePD-s-T^maxr} & \FPT $\Oh^*(3^{|T|+\max_{\ex}})$ & Prop.~\ref{prop:timePD-T+maxr}\ref{prop:timePD-ws-T+maxr}\\
			& \FPT $\Oh^*((H_{\max_{\ex}})^{2\var_{\ex}})$ & Prop.~\ref{prop:timePD-T+maxr}\ref{prop:timePD-s-T*maxr} & & \\
			$\var_\ell+\var_{\ex}$ & \XP $\Oh(n^{2\var_\ell\cdot\var_{\ex}+1})$ & Prop.~\ref{prop:timePD-varl+varr} & \textit{open} & \\
			\hline
		\end{tabular}
	}
	
\end{table}

\paragraph{Structure of the Chapter.}
In Section~\ref{sec:timePD-prelim}, we formally define the two problems \tPDs and \tPDws and prove some simple initial results.
In Section~\ref{sec:timePD-D}, we introduce an \FPT algorithm for \tPDs and \tPDws parameterized by the target $D$, and in Section~\ref{sec:timePD-Dbar}, we give an \FPT algorithm for \tPDs parameterized by the acceptable loss of diversity~$\Dbar$.
In Section~\ref{sec:timePD-other}, we prove a number of other parameterized algorithms.
In Section~\ref{sec:timePD-discussion}, we provide a brief outlook on future research ideas.

\section{Preliminaries}
\label{sec:timePD-prelim}
In this section, we present the formal definition of the problems, and the parameterization.
We further start with some preliminary observations.

\subsection{Definitions}

\paragraph*{Problem Definitions and Parameterizations.}
In the following,
we are given a set of taxa $X$ and a phylogenetic~$X$-tree $\Tree$, which will be used to calculate the phylogenetic diversity of any subset of taxa $A \subseteq X$.
In addition, we are given an integer \emph{extinction time $\ex(x)$} for each taxon~$x \in X$ representing the amount of time remaining to save that taxon before it goes extinct, and an integer \emph{rescue length} $\ell(x)$ representing how much time needs to be spent in order to save that taxon. 
Thus, we need to spend $\ell(x)$ units of time before $\ex(x)$ units of time have elapsed, if we wish to save $x$ from extinction.

In addition, we are given a set of teams  $T = \{t_1,\dots, t_{|T|}\}$ that are available to work on saving taxa, where each team $t_i$ is represented by a pair of integers $(s_i,e_i)$ with $0 \leq s_i < e_i$.
Here $s_i$ and $e_i$ represent the starting time and the ending time of team $t_i$, respectively.
Thus, team $t_i$ is available for $(e_i-s_i)$~units of time, starting after $s_i$ time steps.
For convenience, we refer to the units of time in this paper as \emph{person-hours}, although of course in practice the units of time may be days or weeks.

We define $\mathcal{H}_T : = \{(i,j) \in \mathbb{N}^2 \mid t_i \in T, s_i < j \leq e_i\}$.
That is, $\mathcal{H}_T$ is the set of all pairs $(i,j)$ where team $t_i$ is available to work in \emph{timeslot} $j$.
For $j \in [\var_{\ex}]$, we define~$H_j := |\{(i,j')\in \mathcal{H}_T : j' \le \ex_j\}|$. That is, $H_j$ is the number of person-hours available until time $\ex_j$. 

Then, a \emph{$T$-schedule} is a function $f:\mathcal{H}_T\rightarrow A\cup\{\textsc{none}\}$ for a set of taxa~$A \subseteq X$.
That is, a~$T$-schedule is a mapping from the available timeslots for each team to the taxa in $A$ (or $\textsc{none}$).
Intuitively, $f$ shows which teams will work to save which taxa in $A$, and at which times.
See Figure~\ref{fig:timePD-example-scheduling} for an example.
\begin{figure}[t]
	\centering
	\begin{tikzpicture}[scale=0.6,every node/.style={scale=1.2}]
		\tikzstyle{txt}=[circle,fill=none,draw=none,inner sep=0pt]
		
		\fill[blue!10] (0,0) rectangle (6,1);
		\fill[blue!10] (2,1) rectangle (6,2);
		\node[txt] at (3,0.5) {$x_1$};
		
		\fill[orange!30] (3,2) rectangle (8,3);
		\fill[orange!30] (4,3) rectangle (7,4);
		\fill[orange!30] (6,0) rectangle (8,2);
		\node[txt] at (6,2.5) {$x_2$};
		
		\fill[green!30] (7,4) rectangle (18,3);
		\fill[green!30] (13,2) rectangle (15,3);
		\node[txt] at (13.5,3.5) {$x_3$};
		
		\fill[yellow!30] (7,0) rectangle (9,3);
		\fill[yellow!30] (9,0) rectangle (10,2);
		\node[txt] at (8.5,1.5) {$x_4$};
		
		\fill[cyan!30] (9,2) rectangle (12,3);
		\fill[cyan!30] (10,0) rectangle (12,2);
		\node[txt] at (11,1.5) {$x_5$};
		
		\fill[red!10] (12,0) rectangle (15,1);
		\fill[red!10] (12,1) rectangle (13,3);
		\node[txt] at (12.5,0.5) {$x_6$};

		\draw[gray!50] (0,1) -- (17,1);
		\draw[gray!50] (2,2) -- (15,2);
		\draw[gray!50] (3,3) -- (18,3);
		\draw[gray!50] (4,4) -- (18,4);
		
		\foreach \i in {1,...,3}
		\draw[gray!50] (\i,0) -- (\i,\i);
		
		\foreach \i in {4,...,13}
		\draw[gray!50] (\i,0) -- (\i,4);
		
		\foreach \i in {14,...,17}
		\draw[gray!50] (\i,0) -- (\i,1);
		
		\foreach \i in {14,15}
		\draw[gray!50] (\i,2) -- (\i,3);
		
		\foreach \i in {14,...,18}
		\draw[gray!50] (\i,3) -- (\i,4);
		
		\foreach \i in {1,...,18}
		\node[txt] at (\i-.5,-.4) {\i};
		
		\foreach \i in {1,...,4}
		\node[txt] at (-.4,\i-.5) {$t_\i$};
		
		\draw[->] (0,0) -- (0,4.5);
		\draw[->] (0,0) -- (18.5,0);
		
		\draw[thick] (0,1) -- (2,1) -- (2,2) -- (6,2) -- (6,0) -- (0,0) -- (0,1);
		\draw[thick] (3,2) -- (3,3) -- (4,3) -- (4,4) -- (7,4) -- (7,0) -- (6,0) -- (6,2) -- (3,2);
		\draw[thick] (7,3) -- (7,4) -- (18,4) -- (18,3) -- (15,3) -- (15,2) -- (13,2) -- (13,3) -- (7,3);
		\draw[thick] (7,3) -- (7,0) -- (10,0) -- (10,2) -- (9,2) -- (9,3) -- (7,3);
		\draw[thick] (10,0) -- (10,2) -- (9,2) -- (9,3) -- (12,3) -- (12,0) -- (10,0);
		\draw[thick] (12,3) -- (12,0) -- (15,0) -- (15,1) -- (13,1) -- (13,3) -- (12,3);
	\end{tikzpicture}
	\resizebox{.25\columnwidth}{!}{%
		\myrowcols
		\begin{tabular}{c|cccc}
			$t_i$ & $t_1$ & $t_2$ & $t_3$ & $t_4$\\
			\hline
			$s_i$ & 0 & 2 & 3 & 4\\
			$e_i$ & 17 & 13 & 15 & 18
		\end{tabular}
	}
	\;\;
	\resizebox{.4\columnwidth}{!}{%
		\myrowcols
		\begin{tabular}{c|cccccc}
			$x_i$ & $x_1$ & $x_2$ & $x_3$ & $x_4$ & $x_5$ & $x_6$\\
			\hline
			$\ell(x_i)$ & 10 & 9 & 13 & 8 & 7 & 5\\
			$\ex(x_i)$ & 7 & 7 & 18 & 12 & 12 & 18
		\end{tabular}
	}
	\;\;
	\resizebox{.25\columnwidth}{!}{%
		\myrowcols
		\begin{tabular}{c|ccc}
			$i=$ & $1$ & $2$ & $3$\\
			\hline
			$\ex_i$ & 7 & 12 & 18\\
			$H_i$ & 19 & 39 & 55
		\end{tabular}
	}
	\caption{This is a hypothetical valid schedule saving the set of taxa $\{x_1,\dots,x_6\}$.
		Each square marks a tuple $(i,j) \in \mathcal{H}_T$.
	}
	\label{fig:timePD-example-scheduling}
\end{figure}
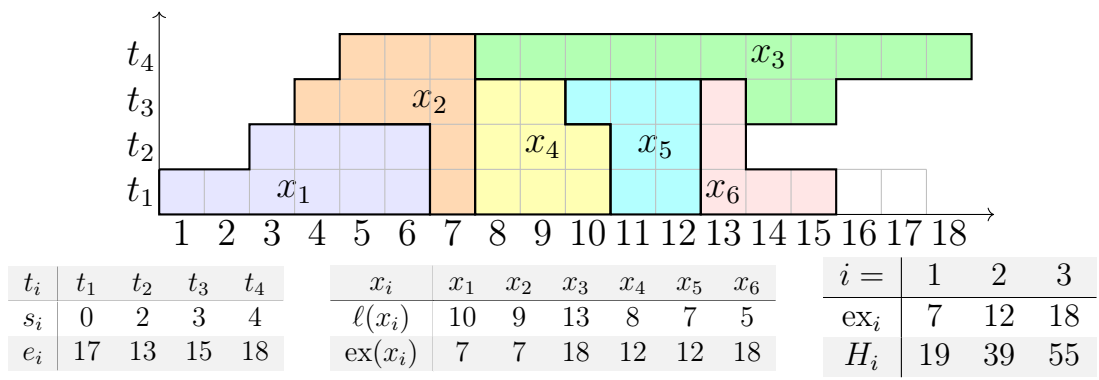%

We say that a~$T$-schedule~$f$ is \emph{valid} if $\ex(x) \geq j$ for each $x \in A$, and $(i,j) \in \mathcal{H}$ with $f((i,j)) = x$.
That is, $f$ does not assign a team to work on taxon $x$ after its extinction time.
We say that $f$ \emph{saves $A$} if $f$ is valid and $|f^{-1}(x)| \geq \ell(x)$ for all~$x \in A$.
That is, a schedule $f$ saves $A \subseteq X$ if every taxon $x \in A$ has at least $\ell(x)$ person-hours assigned to $x$ by its extinction time $\ex(x)$.

The definition of a valid schedule allows for several teams to be assigned to the same taxon at the same time, so that for example the task of preserving a taxon can be done in half the time by using twice the number of teams. Whether this is realistic or not depends on the nature of the tasks involved in preservation. For instance the task of preparing a new enclosure for animals might be completed faster by several teams working in parallel, whilst the task of rearing infants to adulthood cannot be sped up in the same way. 
Due to these concerns, we will consider two variations of the problem, differentiating in whether a schedule must be \emph{strict}.

A $T$-schedule $f$ saving $A$ is \emph{strict} if $|\{ i \in [|T|] : f((i,j)) = x \}| = 1$ for each~$x \in A$.\lb
That is, there is only one team $t_i$ assigned to save each taxon $x$.
We state without a proof that we may assume $f^{-1}(x) = \{(i,j+1), (i,j+2), \dots, (i,j+\ell(x)\}$ for some~$j \ge s_i$.
That is, once started, team $t_i$ continuously works on $x$.
We note that in a non-strict schedule $f$, it is even possible that multiple teams may work on the same taxa~$x$ at once.

\newpage
Formally, the problems we regard in this chapter are as follows.
\problemdef{\tPDslong (\tPDs)}{
	A phylogenetic $X$-tree $\Tree = (V,E,\w)$, integers $\ex(x)$ and $\ell(x)$ for each~$x\in X$, a set of teams $T$, and a target diversity $D\in \mathbb{N}_0$}{
	Is there a valid $T$-schedule saving $S$, for some $S\subseteq X$ such that~$PD_\Tree(S)\ge D$}

The problem \tPDws is the same as \tPDs, except for the restriction that the valid $T$-schedule should be strict.

\problemdef{\tPDwslong\\ (\tPDws)}{
	A phylogenetic $X$-tree $\Tree = (V,E,\w)$, integers $\ex(x)$ and $\ell(x)$ for each~$x\in X$, a set of teams $T$, and a target diversity $D\in \mathbb{N}_0$}{
	Is there a strict valid $T$-schedule saving $S$, for some $S\subseteq X$ such that~$PD_\Tree(S)\ge D$}
A set~$S$ that satisfies these conditions for an instance~\Instance of \tPDs or \tPDws is called a \emph{solution of \Instance}.

These two problem definitions only differ in the fact that valid~$T$-schedule in the latter needs to be strict while in the former it is not necessary.

Observe that if there is only one team, every valid schedule is also strict.
Thus, an instance $\Instance = (\Tree, \ell, \ex, T, D)$ with $|T|=1$ is a \yes-instance of \tPDs if and only if~$\Instance$ is a \yes-instance of \tPDws.
Lemma~\ref{lem:timePD-scheduleCondition} and \ref{lem:timePD-strictScheduleCondition} elaborate on the conditions in these questions in more detail.

\paragraph{Additional Definitions.}
Now, we introduce some additional definitions that will be helpful for the parameterizations and proofs that follow.

\begin{definition}[$\var_{\ex}$ and the Classes~$Y_i$ and~$Z_i$]
	Let $\var_{\ex} := |\{\ex(x) : x \in X\}|$ and~$\max_{\ex} := \max\{\ex(x) : x\in X\}$. That is, $\var_{\ex}$ is the number of different extinction times for taxa in $X$, and $\max_{\ex}$ is the latest extinction time. 
	
	Let $\ex_1 < \ex_2 < \dots < \ex_{\var_{\ex}} = \max_{\ex}$ denote the elements of~$\ex(X)$.
	For each~$j\in[\var_{\ex}]$ the \emph{class $Y_j\subseteq X$} is the set of taxa $x$ with $\ex(x)=\ex_j$ and we define~$Z_j = Y_1 \cup \dots \cup Y_j$.
	Further, we define $\ex^*(x) = j$ for each $x\in Y_j$.
\end{definition}

Along similar lines to $\var_{\ex}$ and $\max_{\ex}$, we define~$\var_\ell := |\{\ell(x) : x\in X\}|$\lb and~$\max_\ell := |\{\ell(x) : x\in X\}|$.
That is, $\var_\ell$ is the number of different rescue lengths of taxa, and $\max_\ell$ is the largest rescue length.
We let $\ell_1 < \ell_2 < \dots < \ell_{\var_\ell} = \max_\ell$ denote the elements of $\{\ell(x) \mid x \in X\}$.

Given an instance $\Instance$ of \tPDs or \tPDws with target diversity $D$, we define $\Dbar : = \PD(X)-D = \sum_{e \in E}\w(e) - D$.
Thus, $\Dbar$ is the acceptable loss of diversity---if we save a set of taxa $A \subseteq X$ with $\PD(A)\geq D$, then the amount of diversity we lose from $\Tree$ as a whole is at most $\Dbar$.

We define $\Hbar j$ to be $\sum_{x\in Z_j} \ell(x) - H_j$ for each $j\in [\var_{\ex}]$.
That is, $\Hbar j$ is the difference between the number of person-hours needed to save all taxa in $Z_j$, and the number of person-hours available to save them.


\subsection{Observations}
In the following, we present some easy observations.

\begin{lemma}
	\label{lem:timePD-scheduleCondition}
	There exists a valid $T$-schedule saving a set of taxa $A \subseteq X$ if and only\linebreak if $\sum_{x\in A\cap Z_j} \ell(x) \le H_j$ for each $j\in [\var_{\ex}]$.
\end{lemma}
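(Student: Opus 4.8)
The plan is to prove necessity by a direct counting argument and sufficiency by an Earliest-Deadline-First (EDF) greedy construction, exploiting the fact that the sets of person-hour slots usable by the various taxa form a chain ordered by extinction time. Throughout I would restrict attention to the taxa of $A$, since taxa outside $A$ can simply be mapped to \textsc{none}.

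For necessity, suppose a valid $T$-schedule $f$ saves $A$, and fix $j \in [\var_{\ex}]$. Every taxon $x \in A \cap Z_j$ has $\ex(x) \le \ex_j$, so by validity each slot in $f^{-1}(x)$ is a pair $(i,j')$ with $j' \le \ex(x) \le \ex_j$; hence $f^{-1}(x)$ is contained in the set of $H_j$ slots of $\mathcal{H}_T$ available up to time $\ex_j$. Since $f$ is a function these preimages are pairwise disjoint, and since $f$ saves $A$ we have $|f^{-1}(x)| \ge \ell(x)$. Summing gives $\sum_{x \in A \cap Z_j} \ell(x) \le \sum_{x \in A \cap Z_j} |f^{-1}(x)| \le H_j$, which is exactly the claimed inequality.

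For sufficiency I would construct a schedule greedily. Process the timeslots $j = 1, 2, \dots, \max_{\ex}$ in increasing order; at timeslot $j$ assign each available team-slot $(i,j)\in\mathcal{H}_T$ to an as-yet-unsaved taxon $x \in A$ with $\ex(x) \ge j$, always preferring one whose extinction time $\ex(x)$ is smallest, and stopping assignment to a taxon once it has received $\ell(x)$ person-hours. The resulting schedule is valid by construction, and the key claim is that it saves all of $A$. I would argue by contradiction: let $x^*$ be a taxon that never receives its full $\ell(x^*)$ hours and set $j^* = \ex^*(x^*)$. Because $x^*$ has unmet demand throughout the timeslots $j \le \ex_{j^*}$, EDF never leaves an available slot idle in this range (otherwise it would have been assigned to $x^*$), and by the earliest-deadline preference every slot used in this range goes to a taxon with deadline at most $\ex_{j^*}$, i.e. to a member of $A \cap Z_{j^*}$. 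Hence all $H_{j^*}$ person-hours available up to time $\ex_{j^*}$ are distributed among $A \cap Z_{j^*}$, each such taxon receiving at most $\ell(x)$ hours and $x^*$ receiving strictly fewer than $\ell(x^*)$; this yields $H_{j^*} < \sum_{x \in A \cap Z_{j^*}} \ell(x)$, contradicting the hypothesis for $j = j^*$.

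The main obstacle is making this EDF exchange argument airtight: precisely, verifying that no available slot in the range $j \le \ex_{j^*}$ is wasted while $x^*$ is unsaved, and that every slot diverted from $x^*$ is taken by a taxon of no later extinction time, so that the full accounting of the $H_{j^*}$ hours stays inside $A \cap Z_{j^*}$. An alternative I would keep in reserve is to phrase feasibility as a bipartite $b$-matching (a system of distinct representatives with multiplicities) between person-hour slots and taxa and invoke the defect version of Hall's theorem; here the chain structure $U_1 \subseteq U_2 \subseteq \dots$ of usable-slot sets (where $U_j$ is the set of slots with $j' \le \ex_j$) collapses the exponentially many Hall inequalities to exactly the prefix inequalities over the sets $A \cap Z_j$, recovering the stated condition. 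I would present the EDF argument as the primary proof, since it is self-contained and directly produces the schedule rather than merely certifying its existence.
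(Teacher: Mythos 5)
Your proposal is correct and takes essentially the same approach as the paper: the necessity direction is the identical disjoint-preimage counting argument, and the sufficiency direction is the same earliest-deadline-first greedy, which the paper implements by ordering the slots of $\mathcal{H}_T$ by time and the taxa of $A$ by extinction time and assigning each taxon its $\ell(x)$ slots in sequence. The only cosmetic difference is that the paper verifies validity directly from the prefix inequalities (the first $\sum_{x\in A\cap Z_j}\ell(x)\le H_j$ slots used for $Z_j$ all have time coordinate at most $\ex_j$, so completion is automatic), whereas you make validity automatic and establish completion by contradiction---two phrasings of the same argument.
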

\begin{proof}
	Recall that $\ex_j$ is the $j$th extinction time and $\ex^*(x) = j$ for each $x\in Y_j$
	
	Suppose first that $f$ is a valid schedule saving $A$.
	Then, for each $x\in A$, the set~$f^{-1}(x)$ contains at least $\ell(x)$ pairs $(i,j')$ with $j' \leq \ex(x)$. It follows that $f^{-1}(Z_j)$ contains at least $\sum_{x \in A\cap Z_j} \ell(x)$ pairs $(i,j')$ with $j' \leq \ex_j$, and so for each $j\in [\var_{\ex}]$, we conclude $\sum_{x\in A\cap Z_j} \ell(x) \le |\{(i,j')\in \mathcal{H}_T \mid j' \le \ex_j\}| = H_j$.
	
	Conversely, suppose $\sum_{x\in A\cap Z_i} \ell(x) \le H_j$ for each $j\in [\var_{\ex}]$. 
	Then order the elements of $\mathcal{H}_T$ such that $(i',j')$ appears before $(i'',j'')$ if $j' < j''$, and order the elements of $A$ such that $x$ appears before $y$ if $\ex^*(x) < \ex^*(y)$ (thus, all elements of~$A\cap Y_j$ appear before all elements of $A\cap Y_{j+1}$).
	Now define $f:\mathcal{H}_T \rightarrow A\cup\{\textsc{none}\}$ by repeatedly choosing the first available taxon $x$ of $A$, and assigning it to the first available $\ell(x)$ elements of $\mathcal{H}_T$.
	Then, the first $\sum_{x\in A\cap Z_j} \ell(x) \le H_j$ elements of $\mathcal{H}_T$ are used to save taxa in $Z_j$, and these elements of $\mathcal{H}_T$ are all of the form $(i,j')$ for~$j' \leq \ex_j$.
	It follows that $f$ is a valid schedule saving $A$.
\end{proof}

By Lemma~\ref{lem:timePD-scheduleCondition}, we have that $\Instance$ is a \yes-instance of \tPDs (or \tPDws with $|T|=1$) if and only if there exists a set $S\subseteq X$ with $\PD(S)\geq D$ such that~$\sum_{x\in A\cap Z_i} \ell(x) \le H_i$ for each $i\in [\var_{\ex}]$.
The following lemma follows from the definitions of valid and strict schedules.

\begin{lemma}
	\label{lem:timePD-strictScheduleCondition}
	There exists a strict valid $T$-schedule saving $A \subseteq X$ if and only if there is a partition of $A$ into sets $A_1\dots, A_{|T|}$ such that for each $i \in [|T|]$, there is a valid~$\{t_i\}$-schedule saving $A_i$. 
\end{lemma}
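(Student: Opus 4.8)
The plan is to prove both implications directly from the definitions of valid, saving, and strict schedules, exploiting the fact that in a strict schedule each taxon is handled by exactly one team; this lets us cut a global schedule into independent per-team pieces and, conversely, glue per-team schedules back together.

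For the forward direction, I would suppose $f$ is a strict valid $T$-schedule saving $A$ and, for each team index $i\in[|T|]$, set $A_i := \{x\in A : f((i,j)) = x \text{ for some } j\}$, the set of taxa that team $t_i$ ever works on. Strictness says that for each $x\in A$ there is exactly one team ever outputting $x$, so the sets $A_1,\dots,A_{|T|}$ are pairwise disjoint and cover $A$, i.e.\ they partition $A$. I would then restrict $f$ to the timeslots $\mathcal{H}_{\{t_i\}} = \{(i,j) : s_i < j \le e_i\}$ of team $t_i$ to obtain a $\{t_i\}$-schedule $f_i$ with codomain $A_i\cup\{\textsc{none}\}$. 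Validity of $f_i$ is inherited slotwise from that of $f$, and to see $f_i$ saves $A_i$ I note that for $x\in A_i$ strictness forces every pair in $f^{-1}(x)$ to use team $i$, so $f_i^{-1}(x) = f^{-1}(x)$ has at least $\ell(x)$ elements.

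For the backward direction, given a partition $A = A_1\cup\dots\cup A_{|T|}$ and valid $\{t_i\}$-schedules $f_i$ saving $A_i$, I would glue them. Since the sets $\mathcal{H}_{\{t_i\}}$ have distinct first coordinates they are disjoint and together partition $\mathcal{H}_T$, so the map $f$ defined by $f|_{\mathcal{H}_{\{t_i\}}} := f_i$ is a well-defined $T$-schedule, and validity transfers slotwise from each $f_i$. For saving and strictness, each $x\in A$ lies in a unique $A_i$, and because $f_{i'}$ has codomain $A_{i'}\cup\{\textsc{none}\}$ with $x\notin A_{i'}$ for $i'\neq i$, the only team that can ever output $x$ is team $i$. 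Hence $f^{-1}(x) = f_i^{-1}(x)$ has at least $\ell(x)$ pairs, so $f$ saves $A$, and the same observation gives $|\{i' : f((i',j)) = x \text{ for some } j\}| = 1$, so $f$ is strict.

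The argument is almost entirely bookkeeping; the only point requiring a little care is checking that the per-team restrictions and their recombination are genuinely well-defined, that is, that each $f_i$ really confines its output to $A_i$ (immediate from the definition of $A_i$ in one direction, and from the definition of a $\{t_i\}$-schedule saving $A_i$ in the other) and that the slot sets $\mathcal{H}_{\{t_i\}}$ tile $\mathcal{H}_T$ without overlap. There is no genuine combinatorial obstacle here, which is precisely why the lemma can be asserted to follow directly from the definitions.
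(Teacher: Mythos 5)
Your proof is correct and matches the paper exactly in spirit: the paper gives no written proof, stating only that the lemma ``follows from the definitions of valid and strict schedules,'' and your decompose-by-team / glue-per-team argument is precisely that routine verification, with the key points (strictness making the $A_i$ disjoint, and $f^{-1}(x)=f_i^{-1}(x)$ in both directions) handled correctly.
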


For the parameterized point of view,
we first show that \tPDs and \tPDws are \NP-hard, even for quite restricted instances. 
Our proof adapts the reduction of Richard Karp used to show that the scheduling problem $1||\sum w_j(1-U_j)$ is \NP-hard~\cite{karp}.
While Karp gives a reduction from \KP, we use a reduction from $k$-\SubSum.
This result was also observed by~\cite{pardi07,hartmann} for \BNAP---\MPD with integer cost on taxa---which is a special case of \tPDs or \tPDws.

\begin{proposition}[\cite{karp,pardi07,hartmann}]
	\label{prop:timePD-Knapsack}
	~
	\begin{propEnum}
		\item\label{thm:timePD-KP-NP}\tPDs and \tPDws are \NP-hard.
		\item\label{thm:timePD-KP-W1}It is \Wh{1}-hard with respect to $k$ to decide whether an instance of \tPDs or \tPDws has a solution in which $k$ taxa are saved.
	\end{propEnum}
	Both statements hold even if the tree in the input is a star and $|T|=\var_{\ex}=1$.
\end{proposition}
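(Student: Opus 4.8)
The plan is to give a single polynomial-time reduction from $k$-\SubSum that yields both statements at once. Recall that $k$-\SubSum---given items $N = \{a_1,\dots,a_n\}$ with costs $c$ and integers $k$ and $G$, decide whether some size-$k$ subset has cost sum exactly $G$---is \NP-hard and \Wh{1}-hard with respect to $k$, as noted after its definition in Section~\ref{sec:probs}. I would map an instance $(N,c,k,G)$ to an instance of \tPDs (equivalently \tPDws) whose tree \Tree is a star with root $\rho$ and one leaf $x_i$ per item $a_i$, using a single team and a single extinction time. Two features of this restricted setting do the heavy lifting. Since $|T|=1$, every valid schedule is automatically strict, so the constructed instance is a \yes-instance of \tPDs if and only if it is one of \tPDws. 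And since $\var_{\ex}=1$ we have $Z_1 = X$, so by Lemma~\ref{lem:timePD-scheduleCondition} a valid schedule saving a set $A$ exists if and only if $\sum_{x\in A}\ell(x)\le H_1$. Thus the scheduling part collapses to a pure knapsack-style budget constraint, while on a star $\PD(A)=\sum_{x\in A}\w(\rho x)$ is simply the sum of the chosen edge weights.

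The key design point is an offset that couples the cardinality and the sum constraint. I would fix $M := \sum_{i=1}^{n} c(a_i) + G + 1$ and set $\ell(x_i) := \w(\rho x_i) := M + c(a_i)$ for each $i$; give the single team $s_1 = 0$ and $e_1 = kM+G$ (so that $H_1 = kM+G$); let every taxon have extinction time $\ex(x_i) = kM+G$; and set the target diversity $D := kM+G$. All rescue lengths and edge weights are positive integers, and the root has out-degree $n\ge 2$, as a phylogenetic star requires. The crucial point is that because $\ell(x_i) = \w(\rho x_i)$, for any candidate set $A$ the budget sum $\sum_{x\in A}\ell(x)$ and the diversity $\PD(A)$ are literally the same number.

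For correctness, a solution $A$ must satisfy $\PD(A)\ge kM+G$ together with $\sum_{x\in A}\ell(x)\le kM+G$; since these two sums coincide, the common value equals exactly $kM+G$, that is, $|A|\cdot M + \sum_{i:\,x_i\in A} c(a_i) = kM+G$. Because $M$ exceeds both $G$ and $\sum_i c(a_i)$, reducing this identity modulo $M$ forces $\sum_{i:\,x_i\in A}c(a_i) = G$ (both sides lie in $[0,M)$) and hence $|A| = k$, so $\{a_i : x_i\in A\}$ is a size-$k$ subset with cost sum $G$; the converse direction is immediate from the same equalities. The construction is computable in polynomial time, giving \NP-hardness. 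Moreover every solution saves exactly $k$ taxa, so the map is a parameterized reduction with the target number of saved taxa equal to $k$, transferring \Wh{1}-hardness with respect to $k$. The only step needing genuine care---and the closest thing to an obstacle---is checking that this single choice of $M$ makes the cardinality constraint $|A|=k$ and the sum constraint $\sum c(a_i) = G$ decouple cleanly; the remainder is routine bookkeeping.
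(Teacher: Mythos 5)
Your proposal is correct and is essentially the paper's own proof: the paper also reduces from $k$-\SubSum{} to a star instance with a single team and a single extinction time, setting $\ell(x_i)=\w(\rho x_i)=z_i+Q$ for a large offset $Q$ (your $M$) and squeezing $D=kQ+G=H_1$ between the diversity and budget constraints so that equality forces $|A|=k$ and $\sum c(a_i)=G$. Your explicit choice $M=\sum_i c(a_i)+G+1$ with the mod-$M$ decoupling is just a slightly more spelled-out version of the paper's argument that the common value $kQ+G$ determines both the cardinality and the sum.
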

\begin{proof}
	We reduce from $k$-\SubSum, 
	which is \NP-hard and \Wh{1}-hard when parameterized by $k$, the size of the solution~\cite{downey}.
	In $k$-\SubSum we are given a multiset of integers $\mathcal{Z} = \{z_1,\dots,z_{|\mathcal{Z}|}\}$ and two integers $k$ and~$G$.
	It is asked whether there is a set $S\subseteq \mathcal{Z}$ of size $k$ such that $\sum_{z\in S} z = G$.
	
	\proofpara{Reduction}
	Let $(\mathcal{Z},G)$ be an instance of~$k$-\SubSum.
	Let $Q$ be an integer which is bigger than the sum of all elements in $\mathcal{Z}$.
	
	We define an instance $\Instance = (\Tree, \ell, \ex, T, D)$ with
	a star tree $\Tree$ that has a root $\rho$ and a set of leaves $X = \{ x_1,\dots,x_{|\mathcal{Z}|} \}$.
	For each $x_i\in X$, we set edge-weights $\w(\rho x_i)$ and\lb
	the rescue length $\ell(x_i)$ to be $z_i + Q$.
	Further, define an extinction time\lb of~$\ex(x_i) := G + kQ$, which is the same for each taxon and
	the only team operates from $s_1 := 0$ to $e_1 := G + kQ$.
	Finally, we set $D := G + kQ$.
	
	\proofpara{Correctness}
	The reduction is computed in polynomial time for a suitable~$Q$.
	Because there is only one team, $\Instance$ is a \yes-instance of \tPDs if and only if \Instance is a \yes-instance of \tPDws.
	It remains to show that any solution for $\Instance$ saves exactly $k$ taxa, and that $\Instance$ is a \yes-instance of \tPDs if and only if $(\mathcal{Z},G)$ is a \yes-instance of~$k$-\SubSum.
	
	So let $(\mathcal{Z},G)$ be a \yes-instance of $k$-\SubSum with solution $S$.
	Define a set of taxa $S' := \{ x_i\in X \mid z_i \in S \}$.
	Clearly, $S'$ and~$S$ are of size~$k$.
	Then,
	$$
	\PD(S') = \sum_{x_i\in S'} \w(\rho x_i) = \sum_{z_i\in S} (z_i + Q) = kQ + \sum_{z_i\in S} z_i = G + kQ = D,
	$$
	and
	analogously $\sum_{x_i\in S'} \ell(x_i) = \sum_{z_i\in S} (z_i + Q) = G + kQ = H_{1}$.
	Consequently, $S'$ is a solution for 
	\Instance.
	
	Let \Instance be a \yes-instance with solution $S'$.
	Because
	$$
	G+kQ = D \le \PD(S') = \sum_{x_i\in S'} \w(\rho x_i) = \sum_{z_i\in S} (z_i + Q) = \sum_{x_i\in S'} \ell(x_i) \le G+kQ,
	$$
	we conclude that $G + kQ  = \sum_{z_i\in S} (z_i + Q)$, from which it follows that $|S'| = k$ and~$\sum_{z_i\in S} z_i = G$. 
	Therefore, $S := \{ z_i \mid x_i \in S' \}$ is a solution of $(\mathcal{Z},G)$.
\end{proof}

The scheduling variant \textsc{P3||C${}_{\max}$} in which it is asked for whether the given jobs can be scheduled on three parallel machines such that the \emph{makespan} (the maximum time taken on any machine) is at most C${}_{\max}$, is strongly \NP-hard \cite{GSJ1976,garey1978}.
\textsc{P3||C${}_{\max}$} can be seen as a special case of \textsc{P3||$\sum w_j(1-U_j)$} and therefore \tPDws in which there are three teams, the tree is a star on which all weights are 1, and we have to save every taxon in the time defined by the makespan.

\begin{proposition}[\cite{GSJ1976}]
	\label{prop:timePD-ws-Dbar}
	\tPDws is strongly \NP-hard even if the tree is a star, every taxon has to be saved ($\Dbar=0$), there are three teams, $\max_\w=1$\lb and~$\var_{\ex}=1$.
\end{proposition}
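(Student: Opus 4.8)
The plan is to reduce from \textsc{P3$||$C${}_{\max}$}, which by~\cite{GSJ1976} is strongly \NP-hard and which the preamble already identifies as the natural source problem. An instance of \textsc{P3$||$C${}_{\max}$} consists of jobs with processing times $p_1,\dots,p_n$ and a makespan bound $C$, and asks whether the jobs can be assigned to three identical machines so that the total processing time on each machine is at most~$C$. I would translate such an instance into an instance $\Instance = (\Tree,\ell,\ex,T,D)$ of \tPDws as follows. Let $\Tree$ be a star with root $\rho$ and leaves $X = \{x_1,\dots,x_n\}$, one per job, with unit edge weights $\w(\rho x_i) = 1$ (so $\max_\w = 1$ and $\PD(X) = n$). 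Set $\ell(x_i) := p_i$ and give every taxon the common extinction time $\ex(x_i) := C$, so that $\var_{\ex} = 1$ and $Z_1 = X$. Let $T = \{t_1,t_2,t_3\}$ with each team available over the whole horizon, i.e.\ $(s_i,e_i) = (0,C)$. Finally set $D := n$; since $\PD(S) \le n$ for every $S \subseteq X$ with equality only when $S = X$, the constraint $\PD(S) \ge D$ forces every taxon to be saved, giving $\Dbar = \PD(X) - D = 0$.

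The correctness argument is where I would invoke the two structural lemmas. By Lemma~\ref{lem:timePD-strictScheduleCondition}, a strict valid $T$-schedule saving $X$ exists if and only if $X$ can be partitioned into sets $A_1,A_2,A_3$ such that for each $i$ there is a valid $\{t_i\}$-schedule saving $A_i$. Because $\var_{\ex} = 1$ and team $t_i$ contributes exactly $C$ person-hours up to time $\ex_1 = C$, Lemma~\ref{lem:timePD-scheduleCondition} applied to the single-team set $\{t_i\}$ says that such a schedule saving $A_i$ exists if and only if $\sum_{x \in A_i} \ell(x) \le C$. Combining the two, a strict valid $T$-schedule saving all of $X$ exists precisely when $X$ partitions into three blocks each of total rescue length at most~$C$ --- which, under the identification $\ell(x_i) = p_i$, is exactly the condition that the jobs schedule onto three machines with makespan at most~$C$. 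Hence $\Instance$ is a \yes-instance of \tPDws if and only if the original \textsc{P3$||$C${}_{\max}$} instance is a \yes-instance.

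To conclude strong \NP-hardness rather than mere \NP-hardness, I would observe that the reduction copies the numeric data verbatim: the rescue lengths equal the processing times and the team horizon equals the makespan bound, so the largest number occurring in $\Instance$ is bounded by a polynomial in the largest number of the source instance. Consequently the reduction restricts to instances with polynomially bounded integers, which is precisely what is needed to transfer strong \NP-hardness from \textsc{P3$||$C${}_{\max}$} to \tPDws. All the promised restrictions hold by construction ($\Tree$ is a star, $\max_\w = 1$, $\var_{\ex} = 1$, exactly three teams, and $\Dbar = 0$).

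I do not expect a genuine obstacle here: the content of the reduction is entirely carried by Lemmas~\ref{lem:timePD-strictScheduleCondition} and~\ref{lem:timePD-scheduleCondition}, and the only point requiring a little care is the phrase ``strongly'' --- making sure to note that the numbers are transferred without blow-up so that the known strong hardness of the makespan problem is inherited. The step most worth double-checking is the single-team person-hour count, namely that $H_1 = C$ for a team available on $(0,C]$, since the whole equivalence hinges on matching this bound to the makespan~$C$.
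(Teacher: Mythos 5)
Your reduction is exactly the one the paper intends: the paper proves this proposition in a single remark, observing that \textsc{P3$||$C${}_{\max}$} (strongly \NP-hard by~\cite{GSJ1976}) is a special case of \tPDws with three teams, a unit-weight star, and the requirement that every taxon be saved within the makespan, and your writeup simply fills in the same construction in detail, with the correctness correctly routed through Lemmas~\ref{lem:timePD-strictScheduleCondition} and~\ref{lem:timePD-scheduleCondition} and the numbers-transferred-verbatim observation handling strongness. No gaps; your single-team person-hour check ($H_1 = C$ for a team available on $(0,C]$) is indeed correct under the paper's definition of $\mathcal{H}_T$.
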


Our final observation in this section is that for an instance of \tPDs and a set of taxa~$S\subseteq X$,
we can compute the diversity $\PD(S)$ and check whether there is a valid $T$-schedule saving $S$ in polynomial time (by Lemma~\ref{lem:timePD-scheduleCondition}).
Therefore, checking each subset of $X$ yields an $\Oh^*(2^{n})$-time-algorithm for \tPDs.

For an instance of \tPDws, in $\Oh^*(n!)$ time we can also guess the order of the taxa and then assign taxa to the first team until they have no more capacity.
Afterward, assign the next taxa to the second team and so on.
The instance is a \yes-instance if and only if for some order this assignment yields a strict valid~$T$-schedule of a set of taxa with diversity at least~$D$.
We conclude the following.

\begin{proposition}
	\label{prop:timePD-X}
	~
	\begin{propEnum}
		\item\label{prop:timePD-X-s}\tPDs can be solved in $2^{n} \cdot \poly(|\Instance|)$ time.
		\item\label{prop:timePD-X-ws}\tPDws can be solved in $n! \cdot \poly(|\Instance|)$ time.
	\end{propEnum}
\end{proposition}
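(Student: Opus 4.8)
The plan is to solve both variants by exhaustive search combined with the feasibility characterisations of Lemmas~\ref{lem:timePD-scheduleCondition} and~\ref{lem:timePD-strictScheduleCondition}.

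For part~\ref{prop:timePD-X-s}, I would iterate over all $2^n$ subsets $S \subseteq X$. For a fixed $S$, the diversity $\PD(S)$ is computable in polynomial time directly from its definition, by summing the weights of the edges $e$ with $\off(e) \cap S \ne \emptyset$. By Lemma~\ref{lem:timePD-scheduleCondition}, a valid $T$-schedule saving $S$ exists if and only if $\sum_{x \in S \cap Z_j} \ell(x) \le H_j$ for every $j \in [\var_{\ex}]$, and this chain of inequalities is verifiable in polynomial time. The algorithm returns \yes exactly when some $S$ simultaneously satisfies $\PD(S) \ge D$ and all $\var_{\ex}$ scheduling inequalities; correctness is immediate from the lemma, and the running time is $2^n \cdot \poly(|\Instance|)$.

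For part~\ref{prop:timePD-X-ws}, strictness forces each saved taxon to be handled by a single team, so by Lemma~\ref{lem:timePD-strictScheduleCondition} a strict schedule saving $S$ amounts to a partition $S = S_1 \cup \dots \cup S_{|T|}$ in which each $S_i$ is savable by the single team $t_i$. Restricting Lemma~\ref{lem:timePD-scheduleCondition} to $T = \{t_i\}$ characterises single-team savability and, crucially, its constructive direction shows that processing taxa in order of non-decreasing extinction time is a canonical feasible order for one machine. I would therefore enumerate all $n!$ orderings of $X$ and, for each, run the greedy next-fit assignment described before the proposition: assign taxa in the given order to the current team as long as they can still be scheduled on it, advancing to the next team once one no longer fits. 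The saved set is the collection of successfully placed taxa, and the instance is accepted iff some ordering yields a saved set of diversity at least $D$.

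The main obstacle is establishing that this $n!$-enumeration is exhaustive, that is, that an optimal strict schedule saving $S$ is reproduced by next-fit under at least one ordering. I would start from the partition $S_1, \dots, S_{|T|}$ supplied by Lemma~\ref{lem:timePD-strictScheduleCondition}, sort each $S_i$ by extinction time, and feed the blocks $S_1, S_2, \dots, S_{|T|}$ (followed by the taxa of $X \setminus S$) to the greedy. The delicate point is that a taxon of $S_{i+1}$ might still fit on team $t_i$ and be absorbed there; one then has to argue via an exchange or shifting argument that such absorption never prevents the remaining taxa of $S$ from being saved, so that the whole of $S$, and hence a set of diversity at least $D$, is still returned. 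Once this correctness claim is in place, the stated $n! \cdot \poly(|\Instance|)$ bound follows directly, since each ordering is processed in polynomial time.
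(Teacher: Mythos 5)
Your proposal is correct and is essentially the paper's own argument: part~(\ref{prop:timePD-X-s}) is the same $2^n$ subset enumeration with $\PD(S)$ and schedulability checked via Lemma~\ref{lem:timePD-scheduleCondition}, and part~(\ref{prop:timePD-X-ws}) is exactly the paper's $n!$-orderings-plus-next-fit procedure, which the paper states as an observation without proving exhaustiveness. The ``delicate point'' you flag does close along the lines you sketch: single-team savability (Lemma~\ref{lem:timePD-scheduleCondition} with $|T|=1$) is monotone under taking subsets, so by induction the team pointer never overtakes the block index---whenever it reaches $t_j$ while block $S_j$ is being processed, the not-yet-placed taxa of $S_j$ form a subset of $S_j$ listed in extinction-time order and hence all fit on $t_j$---and therefore next-fit on the ordering $S_1,\dots,S_{|T|}$ followed by $X\setminus S$ saves all of $S$, with any absorbed taxa of $X\setminus S$ only increasing the diversity.
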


\section{The Diversity $D$}
\label{sec:timePD-D}
In this section, we show that \tPDs and \tPDws are \FPT when parameterized by the threshold of diversity $D$.
When one tries to use the standard approach with a dynamic programming algorithm over the vertices of the tree, one will struggle with the question of how to handle the different extinction times of the taxa.
While it is straightforward how to handle taxa of a specific extinction time, already with a second, it is not trivial how to combine sub-solutions.
In order to overcome these issues, we use the technique of color coding in addition to dynamic programming.

In the following, we consider colored versions of the problems, \ctPDs and \ctPDws.
In these, additionally to the input of the respective uncolored variant, we are given a coloring~$c$ which assigns each edge $e\in E$ \emph{a set of colors $c(e)$}: A subset of $[D]$ of size $\w(e)$.
For a taxon $x\in X$, we define $c(x)$ to be the union of colors of the edges on a path from the root to~$x$.
Further, for a set $S$ of taxa, we define~$c(S)$ to be~$\bigcup_{x\in S} c(x)$.
In \ctPDs (respectively, \ctPDws), we ask whether there is a subset $S$ of taxa and a (strictly) valid $T$-schedule saving $S$ such that~$c(S) = [D]$.

Next, we show that \ctPDs and \ctPDws are \FPT with respect to~$D$.
The difficulty of combining sub-solutions for different extinction times also arises in these colored versions of the problem.
However, the color enables us to consider the tree as a whole.
We select taxa with ordered extinction time such that we are able to check whether there is a (strictly) valid $T$-schedule which can indeed save the selected set of taxa.

\begin{lemma}
	\label{lem:timePD-cs-D}
	\ctPDs can be solved in $\Oh(2^D\cdot D\cdot \var_{\ex}^2 \cdot n)$~time.
\end{lemma}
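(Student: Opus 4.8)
The plan is to use the coloring to discard the tree structure entirely, replacing the problematic tree-based dynamic program by a flat dynamic program over \emph{color sets} that processes the taxa in increasing order of extinction time. First I would precompute, for every taxon $x\in X$, the color set $c(x)\subseteq[D]$ obtained by unioning the colors of the edges on the root-to-$x$ path; a single depth-first traversal that accumulates colors along the current path produces all of these in $O(n\cdot D)$ time, and I would likewise precompute the person-hour bounds $H_1,\dots,H_{\var_{\ex}}$ from the team data. The enabling observation is Lemma~\ref{lem:timePD-scheduleCondition}: a set $S$ can be saved by a valid $T$-schedule if and only if $\sum_{x\in S\cap Z_j}\ell(x)\le H_j$ for every $j\in[\var_{\ex}]$, so schedulability is a purely prefix-sum condition on the cumulative rescue length within each extinction class. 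This is exactly what lets us treat the tree ``as a whole'' once the $c(x)$ are known.

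Concretely, I would maintain a table $\DP[\cdot]$ indexed by color sets $C\subseteq[D]$, where $\DP[C]$ is intended to hold the minimum cumulative rescue length $\sum_{x\in S}\ell(x)$ over all selections $S$ (among the taxa processed so far) with $c(S)=C$ that are feasible for every class processed so far, and $\infty$ if no such $S$ exists. Initialize $\DP[\emptyset]=0$ and $\DP[C]=\infty$ otherwise. Then iterate over the classes $j=1,\dots,\var_{\ex}$ in order; for each taxon $x\in Y_j$ perform
\begin{equation*}
	\DP[C\cup c(x)]\gets\min\bigl(\DP[C\cup c(x)],\,\DP[C]+\ell(x)\bigr)\quad\text{for all }C\subseteq[D],
\end{equation*}
and, once all taxa of $Y_j$ have been handled, enforce the level-$j$ constraint by setting $\DP[C]\gets\infty$ whenever $\DP[C]>H_j$. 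Finally answer \yes iff $\DP[[D]]<\infty$. Two small facts make the in-place single-taxon update sound: reusing a taxon never helps (applying $x$ twice leaves the color set unchanged while increasing the rescue length), and since $C\cup c(x)=C$ whenever $c(x)\subseteq C$, the only ``chained'' write is a no-op, so each taxon is processed in $O(2^D\cdot D)$ time regardless of iteration order.

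For correctness I would prove by induction on $j$ that after the class-$j$ sweep $\DP[C]$ equals the minimum cumulative rescue length of a selection $S\subseteq Z_j$ with $c(S)=C$ satisfying all prefix constraints up to level $j$. The step I expect to require the most care is justifying that it suffices to remember only the color set together with the single scalar ``minimum cumulative rescue length'': this is valid precisely because the constraints at levels $j'>j$ depend on $S\cap Z_j$ only through the scalar $\sum_{x\in S\cap Z_j}\ell(x)$ and are monotone in it, so a prefix of smaller cumulative rescue is never worse, and the earlier-level constraints are inherited since class-$j$ taxa cannot affect them. Completeness then follows by an exchange argument: for an optimal $S^*$ with $c(S^*)=[D]$, its restrictions $S^*\cap Z_j$ trace a path through the table with $\DP[c(S^*\cap Z_j)]\le\sum_{x\in S^*\cap Z_j}\ell(x)\le H_j$, so $\DP[[D]]$ is finite; soundness is immediate, as every finite entry corresponds to an explicit feasible selection whose schedulability is guaranteed by Lemma~\ref{lem:timePD-scheduleCondition}.

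For the running time, the precomputation is polynomial, each of the $n$ taxa contributes an $O(2^D\cdot D)$ sweep, and each of the $\var_{\ex}$ feasibility passes costs $O(2^D)$, giving $O(2^D\cdot D\cdot n+2^D\cdot\var_{\ex})$, comfortably within the claimed bound $O(2^D\cdot D\cdot\var_{\ex}^2\cdot n)$. Thus the main obstacle is conceptual rather than computational: recognizing that color coding permits one to abandon the tree dynamic program (whose difficulty is merging subtrees with incomparable extinction-time profiles) in favor of an extinction-time-ordered color-set dynamic program, and then verifying that the scalar min-rescue invariant is exactly compatible with the prefix scheduling condition of Lemma~\ref{lem:timePD-scheduleCondition}.
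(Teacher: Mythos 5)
Your proof is correct, and it rests on the same two pillars as the paper's: the prefix characterization of schedulability from Lemma~\ref{lem:timePD-scheduleCondition}, and the insight that once taxa are processed in increasing extinction time, a color-set table storing the minimum cumulative rescue length per color set suffices (because later constraints depend on the processed prefix only through that scalar, monotonically). Where you diverge is the DP organization: the paper keeps a two-dimensional table $\DP[C,p]$ and computes it by \emph{peeling off the last-saved taxon}, with the guard $\psi(\cdot,H_{\ex^*(x)})$ enforced inside the recurrence, which is why its bound carries a factor $\var_{\ex}^2\cdot n$ per color set; you instead run a forward, one-dimensional, knapsack-style in-place relaxation, one sweep per taxon, with an explicit feasibility filter ($\DP[C]\gets\infty$ if $\DP[C]>H_j$) after each extinction class. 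Your version needs one extra argument the paper's sidesteps -- that in-place sweeps cannot double-count a taxon, which you correctly dispatch via the observation that a second use of $x$ maps a set containing $c(x)$ to itself while strictly increasing the value -- and in exchange it drops the $p$ dimension entirely, yielding $\Oh(2^D\cdot D\cdot n + 2^D\cdot\var_{\ex})$, which is strictly better than the stated bound (and hence certainly establishes the lemma). One fine point worth making explicit in a write-up: your induction should state the invariant as exact equality with a constrained witness (the min-length witness satisfies \emph{all} filters passed so far, since class-$j$ taxa do not alter $Z_q$-intersections for $q<j$ and intermediate in-sweep cells need not satisfy the level-$j$ bound, only the end-of-class value does); you gesture at this correctly, and it is the only place where the argument could silently go wrong.
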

\begin{proof}
	Let $\Instance = (\Tree, \ell, \ex, T, D, c)$ be an instance of \ctPDs.
	For~$p\in[\var_{\ex}]$, we call a set $S$ of taxa \emph{$p$-compatible} if $\ell(S\cap Z_q) \le H_i$ for each $q\in [p]$.
	For a set of colors~$C\subseteq [D]$ and $p\in[\var_{\ex}]$, we call a set $S$ of taxa \emph{$(C,p)$-feasible} if
	\begin{itemize}
		\item $C$ is a subset of $c(S)$,
		\item $S$ is a subset of $Z_p$, and
		\item $S$ is $p$-compatible.
	\end{itemize}
	Intuitively, the $(C,p)$-feasible sets~$S$ consist of taxa which overall cover the colors~$C$ if saved and for which a valid~$T$-schedule saving~$S$ exists.
	
	\proofpara{Table definition}
	In the following dynamic programming algorithm with table $\DP$, for each $C\subseteq [D]$ and $p\in [\var_{\ex}]$ we want entry $\DP[C,p]$ to store the minimum length~$\ell(S)$ of a $(C,p)$-feasible set of taxa $S\subseteq X$, with $\DP[C,p] = \infty$ if no $(C,p)$-feasible set exists.

	\proofpara{Algorithm}
	As a base case, we store $\DP[\emptyset, p] = 0$ for each $p\in [\var_{\ex}]$.
	
	To compute further values, we use the recurrence
	\begin{equation}
		\label{eqn:split-D}
		\DP[C,p] = 
		\min_{\substack{
				x\in Z_p,
				c(x)\cap C\ne \emptyset}}
		\psi(\DP[C\setminus c(x),\ex^*(x)] + \ell(x), H_{\ex^*(x)}).
	\end{equation}
	Recall $\ex^*(x) = q$ for $x\in Y_q$; and $\psi(a,b) = a$ if $a\le b$ and otherwise $\psi(a,b) = \infty$.
	
	We return \yes if $\DP[[D],\var_{\ex}] \le \max_{\ex}$, and otherwise we return \no.
	
	\proofpara{Correctness}
	Let \Instance be an instance of \ctPDs.
	From the definition of $(C,p)$-feasible sets it follows that there exists a $([D],\var_{\ex})$-feasible set of taxa $S$ if and only if $\Instance$ is a \yes-instance of \ctPDs.
	It remains to show that $\DP[C,p]$ stores the smallest length of a $(C,p)$-feasible set $S$.
	
	As the set $S=\emptyset$ is $(\emptyset,p)$-feasible for each $p\in [\var_{\ex}]$, the base case is correct.
	
	As an induction hypothesis, assume that for a fixed and non-empty set of colors~$C\subseteq [D]$ and a fixed integer $p\in [\var_{\ex}]$, entry $\DP[K,q]$ stores the correct value for each $K\subsetneq C$ and $q \le p$.
	We prove that $\DP[C,p]$ stores the correct value by showing 
	that if a $(C,p)$-feasible set of taxa $S$ exists then $\DP[C,p]\le \ell(S)$ and 
	that if $\DP[C,p] = a < \infty$ then there is a $(C,p)$-feasible set $S$ with $\ell(S) = a$.
	
	Let $S\subseteq X$ be a $(C,p)$-feasible set of taxa.
	Observe that if $c(x)\cap C = \emptyset$ for a taxon~$x\in S$ then also $S\setminus \{x\}$ is $(C,p)$-feasible.
	So, we assume that $c(x)\cap C \ne \emptyset$ for each $x\in S$.
	Let $y\in S$ be a taxon such that $\ex(y) \ge \ex(x)$ for each $x\in S$.
	Observe that the set $S\setminus \{y\}$ is $(C\setminus c(y),\ex^*(y))$-feasible.
	Thus, because $y\in S\subseteq Z_p$ and $c(y)\cap C \ne \emptyset$, $\DP[C\setminus c(y),\ex^*(y)] \le \ell(S\setminus \{y\}) = \ell(S) - \ell(y)$ by the induction hypothesis.
	We conclude $\DP[C\setminus c(y),\ex^*(y)] + \ell(y) \le \ell(S) \le H_{\ex^*(y)}$ because $S$ is~$\ex^*(y)$-compatible.
	Hence, $\DP[C,p] \le \DP[C\setminus c(y),\ex^*(y)] + \ell(y) \le \ell(S)$.
	
	Conversely, suppose  $\DP[C,p] < \infty$.
	Then, by Recurrence~(\ref{eqn:split-D}), a taxon~$x\in Z_p$ exists such that  $c(x)\cap C \ne \emptyset$ and $\DP[C,p] = \DP[C\setminus c(x),\ex^*(x)] + \ell(x)$.
	By the\linebreak induction hypothesis, there is a $(C\setminus c(x), \ex^*(x))$-feasible set $S'$ such\linebreak that~$\DP[C\setminus c(x),\ex^*(x)] = \ell(S')$.
	Because $c(x)\cap(C\setminus c(x))$ is empty, $x$ is not in~$S'$.
	Further, $\DP[C,p] = \DP[C\setminus c(x),\ex^*(x)] + \ell(x) = \ell(S') + \ell(x) = \ell(S'\cup\{x\})$ and $S'\cup\{x\}$ is the desired $(C,p)$-feasible set.

	\proofpara{Running time}
	The table has $\Oh(2^D \cdot \var_{\ex})$ entries.
	For each $x\in X$ and every set of colors $C$, we can compute whether $c(x)$ and $C$ have a non-empty intersection in~$\Oh(D)$ time.
	Then, we can compute the set $C\setminus c(x)$ in $\Oh(D)$ time.
	Each entry stores an integer that is at most $H_{\max_{\ex}}$ (or $\infty$).
	In our RAM model, the addition and the comparison in $\psi$ can be done in constant time, 
	and so the right side of Recurrence~(\ref{eqn:split-D}) can be computed in $\Oh(D \cdot \var_{\ex} \cdot n)$ time.
	Altogether, we can compute a solution in~$\Oh(2^D\cdot D\cdot \var_{\ex}^2 \cdot n)$ time.
\end{proof}

For the algorithm in Lemma~\ref{lem:timePD-cs-D} it is crucial that 
the $(C,j)$-feasibility of a set $S$ depends only on $\ell(S)$ and the values $H_1,\dots, H_j$.
This is not the case for \ctPDws, as the available times of the teams impact a possible schedule.
Our solution for this issue is that we divide the colors (representing the diversity) that are supposed to be saved and delegate the colors to specific teams.
The problem then can be solved individually for the teams.
This division and delegation happens in the Recurrence~(\ref{eqn:ws-D}).

\begin{lemma}
	\label{lem:timePD-cws-D}
	\ctPDws can be solved in $\Oh^*(2^D\cdot D^3)$~time.
\end{lemma}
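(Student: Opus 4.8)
The plan is to build on the dynamic programming table from Lemma~\ref{lem:timePD-cs-D} but to account for the fact that a strict schedule assigns each taxon to exactly one team, so that the set of saved taxa must be partitioned among the teams. By Lemma~\ref{lem:timePD-strictScheduleCondition}, a strict valid $T$-schedule saving $S$ exists if and only if $S$ can be partitioned into sets $S_1,\dots,S_{|T|}$ where each $S_i$ admits a valid $\{t_i\}$-schedule. The crucial point is that a single team $t_i$ has its own availability profile, so the person-hours available before each extinction time differ per team; write $H_j^{(i)}$ for the number of person-hours team $t_i$ has available up to time $\ex_j$. First I would solve, for each individual team $t_i$, the single-team colored problem essentially as in Lemma~\ref{lem:timePD-cs-D}: compute a table $\DP_i[C]$ storing the minimum total rescue length of a set $S_i \subseteq X$ that is coverable by team $t_i$ alone (i.e.\ $\sum_{x \in S_i \cap Z_q}\ell(x) \le H_q^{(i)}$ for all $q$) and whose colors include $C$. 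Since a single team is the case $|T|=1$ where Lemma~\ref{lem:timePD-scheduleCondition} applies directly, each such table can be filled by the same recurrence as in Lemma~\ref{lem:timePD-cs-D}, and feasibility is just a length-versus-capacity comparison.

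The second phase combines the per-team tables by a subset-convolution-style recurrence over the teams: I would define a table $\DP'[C,i]$ which stores \yes (or a minimal certificate) exactly when the color set $C$ can be covered by a strict schedule using only teams $t_1,\dots,t_i$, via
\begin{equation}
	\DP'[C,i] = \bigvee_{C' \subseteq C} \left( \DP'[C \setminus C', i-1] \wedge \big(\DP_i[C'] < \infty\big)\right),
\end{equation}
with base case $\DP'[\emptyset,0] = \textbf{true}$ and $\DP'[C,0]=\textbf{false}$ for $C \ne \emptyset$. Here the inner disjunction delegates the colors in $C'$ to team $t_i$ and the remaining colors $C \setminus C'$ to the earlier teams; correctness follows because the partition $S = S_1 \cup \dots \cup S_{|T|}$ guaranteed by Lemma~\ref{lem:timePD-strictScheduleCondition} induces exactly such a split of the color set $[D]$, and conversely any such split yields disjoint per-team feasible sets whose union is saved by a strict schedule. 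The answer is \yes iff $\DP'[[D],|T|]$ is \textbf{true}. This is precisely the ``division and delegation'' of colors to teams mentioned before the lemma statement.

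For the running time, I would note that the number of teams can be assumed bounded in terms of $D$: a team contributing no saved taxon is irrelevant, and each saved taxon covers at least one new color, so at most $D$ teams can be useful and we may discard the rest, giving $|T| \in \Oh(D)$ after preprocessing. Each single-team table costs $\Oh(2^D \cdot D \cdot \var_{\ex}^2 \cdot n)$ as in Lemma~\ref{lem:timePD-cs-D}. The combination recurrence, naively summing over all subsets $C' \subseteq C$, costs $\Oh(3^D \cdot |T|)$; to reach the claimed $\Oh^*(2^D \cdot D^3)$ bound I would instead evaluate the per-team combination step using fast subset convolution (in the OR/min semiring sense), which replaces the $3^D$ factor with $2^D \cdot \poly(D)$ per team, and multiply by the $\Oh(D)$ bound on the number of teams. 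The main obstacle I anticipate is exactly this running-time improvement: a direct subset-sum over $C'$ gives only $3^D$, so the work lies in casting the delegation step as a convolution over the subset lattice and invoking the $\Oh(2^D \cdot \poly(D))$ covering/convolution machinery so that the exponential base stays at $2$ rather than $3$, while keeping the per-team capacity constraints correctly encoded in the $\DP_i$ tables.
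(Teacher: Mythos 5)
Your proposal follows the paper's proof essentially step for step: the paper likewise first computes a per-team table (its $\DP_0$/$\DP_1$, the single-team analogue of Lemma~\ref{lem:timePD-cs-D} with team-specific capacities $H^{(q)}_i$), then combines teams one at a time by delegating a color subset $C'$ to the new team (its $\DP_2$, your $\DP'$), and evaluates this combination via fast subset convolution, citing \cite[Thm.~10.15]{cygan}, to replace the naive $3^D$ factor by $2^D\cdot D^3$. The only quibble is your preprocessing claim that one may discard all but $\Oh(D)$ teams, which is not valid as stated (at most $D$ teams are \emph{used} by a solution, but you cannot know in advance which ones to keep) --- yet it is also unnecessary, since the $\Oh^*$ notation already absorbs the polynomial factor $|T|$, exactly as in the paper's stated running time $\Oh(2^D\cdot D^3 \cdot |T| \cdot \var_{\ex}^2 \cdot n)$.
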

\begin{proof}
	\proofpara{Table definition}
	We define a dynamic programming algorithm with three tables $\DP_0$, $\DP_1$ and $\DP_2$.
	Entries of table $\DP_0$ are computed with the idea of Lemma~\ref{lem:timePD-cs-D}.
	$\DP_1$ is an auxiliary table.
	A final solution can be found in table $\DP_2$, then.
	For this lemma, similar to $H_i$, we define $H_i^{(q)} := |\{(i,j) \in \mathcal{H}_{T} \mid j\le \ex_q \}|$.
	That is, $H_i^{(q)}$ are the person-hours of team $t_i$ until $\ex_q$.
	
	Let \Instance be an instance of \ctPDws.
	We define terms similar to Lemma~\ref{lem:timePD-cs-D}.
	For a team $t_i\in T$ (and an integer $p\in[\var_{\ex}]$), a set $S\subseteq X$ is \emph{$t_i$-compatible} (respectively, \emph{$(p,t_i)$-compatible}) if $\ell(S\cap Z_q) \le H_i^{(q)}$ for each $q\in [\var_{\ex}]$ ($q\in [p]$).
	Inspired by Lemma~\ref{lem:timePD-strictScheduleCondition}, for a $i\in [|T|]$ we call a set $S$ of taxa \emph{$[i]$-compatible} if there is a partition~$S_1,\dots,S_i$ of $S$ such that $S_j$ is $t_j$-compatible for each $j\in [i]$.
	For a set of colors $C\subseteq [D]$, an integer $p\in[\var_{\ex}]$ and a team $t_i \in T$, we call a set $S$ of taxa~\emph{$(C,p,t_i)$-feasible} if
	\begin{itemize}
		\item $S$ is a subset of $Z_p$,
		\item $S$ is $(p,t_i)$-compatible,
		\item $C$ is a subset of $c(S)$, and
		\item ($S=\emptyset$ or $S\cap Y_p \ne \emptyset$).
	\end{itemize}
	
	Formally, for a set of colors $C\subseteq [D]$, an integer $p\in [\var_{\ex}]$ and a team $t_i\in T$ we want that
	\begin{itemize}
		\item in $\DP_0[C,p,i]$ the minimum length $\ell(S)$ of a $(C,p,t_i)$-feasible set of taxa $S\subseteq X$ is stored,
		\item in $\DP_1[C,i]$ stores~1 if there is a $t_i$-compatible set $S\subseteq X$ with~$c(S) \supseteq C$ and 0 otherwise, and
		\item in $\DP_2[C,i]$ stores~1 if there is a $[i]$-compatible set $S\subseteq X$ with~$c(S) \supseteq C$ and 0 otherwise.
	\end{itemize}
	
	\proofpara{Algorithm}
	As a base case,
	we store $\DP_0[\emptyset, p, i] = 0$ for each $p\in [\var_{\ex}]$ and $i\in [|T|]$.
	
	To compute further values of $\DP_0$, we use the recurrence
	\begin{equation}
		\DP_0[C,p,i] = \min_{x\in Y_p, c(x)\cap C\ne \emptyset, q\le p}
		\psi(\DP[C\setminus c(x),q,i] + \ell(x), H_i^{(q)}).
	\end{equation}
	Recall that $\psi(a,b) = a$ if $a\le b$ and otherwise $\psi(a,b) = \infty$.
	Once all values in $\DP_0$\lb are computed, we compute $z_{C,i} := \min_{q\in [\var_{\ex}]} \DP[C,q,i]$
	and set $\DP_1[C,i] = 1$\linebreak if~$z_{C,i} \le H_i^{(\max_{\ex})}$ and $\DP_1[C,i] = 0$, otherwise.
	
	Finally, we define $\DP_2[C,1] := \DP_1[C,1]$ and compute further values with
	\begin{equation}
		\label{eqn:ws-D}
		\DP_2[C,i+1] = \min_{C' \subseteq C} \DP_2[C',i] \cdot \DP_1[C\setminus C',i+1].
	\end{equation}
	
	We return \yes if $\DP_2[[D],|T|] = 1$, and otherwise, we return \no.
	
	\proofpara{Correctness}
	Analogously as in Lemma~\ref{lem:timePD-cs-D}, one can prove that the entries of $\DP_0$ store the correct value.
	It directly follows by the definition that the entries of $\DP_1$ store the correct value.
	Likewise $\DP_2[C,1]$ stores the correct value for colors $C\subseteq [D]$ be the definition.
	
	Fix a set of colors $C\subseteq [D]$ and an integer $i\in [|T|-1]$.
	We assume as an induction hypothesis that entry $\DP_2[K,j]$ stores the correct value for each $K\subsetneq C$\lb and $j\le i$.
	To conclude that $\DP_2[C,j+1]$ stores the correct value, we show\lb that 
	$\DP_2[C,i+1] \le \ell(S)$ for each $[i+1]$-compatible set $S\subseteq X$ with $c(S) \supseteq C$ and
	if $\DP_2[C,i+1] = a < \infty$ then there is an $[i+1]$-compatible set $S\subseteq X$\lb with $c(S) \supseteq C$ and $\ell(S) = a$.
	
	Let $S\subseteq X$ be an $[i+1]$-compatible set with $c(S) \supseteq C$.
	Let $S_1,\dots,S_{i+1}$ be a partition of $S$ such that $S_j$ is $t_j$-compatible for each $j\in [i+1]$.
	Define $\hat S$ to be the union of the~$S_j$ and $\hat C := C \cap c(\hat S)$.
	Then, $\hat S$ is $[i]$-compatible with~$c(\hat S) \supseteq \hat C$.
	Therefore, entry $\DP_2[\hat C,i]$ stores at least~$\ell(\hat S)$ by the induction hypothesis.
	Because~$c(S_{i+1}) \supseteq C \setminus \hat C$
	we conclude that $\DP_1[C\setminus \hat C,i+1] \le \ell(S_{i+1})$.
	We conclude that $\DP_2[C,i+1] \le \DP_2[\hat C,i] + \DP_1[C\setminus \hat C,i+1] \le \ell(\hat S) + \ell(S_{i+1}) = \ell(S)$.
	
	Let $\DP_2[C,i+1]$ store the value $a<\infty$.
	By Recurrence~(\ref{eqn:ws-D}), there is a set of colors $C'\subseteq C$ such that $a = \DP_2[C',i] + \DP_1[C\setminus C',i+1]$.
	By the induction hypothesis, we conclude then that there is
	an $[i]$-compatible set $S_1$ and
	a $t_{i+1}$-compatible set~$S_2$
	such that $\DP_2[C',i] = \ell(S_1)$ and $\DP_1[C\setminus C',i+1] = \ell(S_2)$.
	Assume first that~$S_1$ and $S_2$ are disjoint.
	We define $S := S_1 \cup S_2$ and conclude with the disjointness that~$\ell(S) = \ell(S_1) \cup \ell(S_2) = a$.
	Further, we conclude that $S$ is $[i+1]$-compatible and~$c(S) = c(S_1) \cup c(S) = C' \cup (C\setminus C') = C$
	such that $S$ is the desired set.
	It remains to argue that non-disjoint $S_1$ and $S_2$ contradict the optimality of $C'$.
	
	Now, let $A$ be the intersection of $S_1$ and $S_2$ and define $\hat C := c(S_1)$ and~$\hat S := S_2 \setminus A$.
	Observe that $\hat S$ is a $t_{i+1}$-compatible set with $c(\hat S) \supseteq (C\setminus C') \setminus c(A) = C \setminus \hat C$.
	Therefore,
	\begin{eqnarray*}
		\DP_2[\hat C,i] + \DP_1[C\setminus \hat C,i+1]
		 &\le& \ell(S_1) + \ell(\hat S)\\
		 &=& \ell(S_1) + \ell(S_2) - \ell(A)\\
		 &=& \DP_2[C',i] + \DP_1[C\setminus C',i+1] - \ell(A),
	\end{eqnarray*}
	which contradicts the optimallity of $C'$, unless~$A$ is empty.

	\proofpara{Running time}
	By Lemma~\ref{lem:timePD-cs-D}, in $\Oh(2^D\cdot D\cdot \var_{\ex}^2 \cdot n)$ time all entries of $\DP_0$ are computed.
	Table $\DP_1$ contains $2^D \cdot |T|$ entries which in~$\Oh(\var_{\ex})$ time can be computed, each.
	
	To compute the values of table $\DP_2$ in \Recc{eqn:ws-D}, we use convolutions.
	Readers unfamiliar with this topic, we refer to \cite[Sec. 10.3]{cygan}.
	We define functions $f_i, g_i: 2^{[D]} \to \{0,1\}$ where
	$f_i(C) := \DP_2[C,i]$ and $g_i(C) := \DP_1[C,i]$.
	Then, we can express Recurrence~(\ref{eqn:ws-D}) as $\DP_2[C,i+1] = f_{i+1}(C) = (f_i * g_{i+1})(C)$.
	Therefore, for each $i\in [|T|]$ we can compute all values of $\DP_2[\cdot,i] = f_i(\cdot)$ in $\Oh(2^D \cdot D^3)$ time~\cite[Thm.~10.15]{cygan}.
	
	Altogether, in $\Oh(2^D\cdot D^3 \cdot |T| \cdot \var_{\ex}^2 \cdot n)$ time, we can compute a solution of an instance of \ctPDws.
\end{proof}

Our following procedure
in a nutshell
is to use standard color coding techniques to show that in $\Oh^*(2^{\Oh(D)})$~time
one can reduce an instance $\Instance$ of \tPDs (respectively, \tPDws) to $z \in \Oh^*(2^{\Oh(D)})$ instances $\Instance_1,\dots,\Instance_z$ of \ctPDs (\ctPDws),
which can be solved using Lemma~\ref{lem:timePD-cs-D} and Lemma~\ref{lem:timePD-cws-D}.
In particular, if $S$ is a solution for $\Instance$ then $c(S) = D$ under the coloring $c$ in at least one $\Instance_i$.
Then, \Instance is a \yes-instance, if and only if any $\Instance_i$ is a \yes-instance.

\begin{theorem}
	\label{thm:timePD-D}
	\tPDs and \tPDws can be solved in 
	$\Oh^*(2^{2.443\cdot D + o(D)})$~time.
\end{theorem}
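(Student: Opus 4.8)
The plan is to reduce an instance of \tPDs (respectively \tPDws) to a collection of instances of the colored variant \ctPDs (respectively \ctPDws), solve each colored instance using Lemma~\ref{lem:timePD-cs-D} (respectively Lemma~\ref{lem:timePD-cws-D}), and return \yes if and only if at least one colored instance is a \yes-instance. The core idea is standard color coding: a solution $S$ for the uncolored instance satisfies $\PD(S)\ge D$, meaning the edges $E_\Tree(S)$ have total weight at least $D$. We want to find a coloring $c$ under which some weight-$D$ subset of $E_\Tree(S)$ receives all $D$ distinct colors, so that $c(S)=[D]$ and the colored algorithm detects the solution.

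First I would set up the coloring precisely. Given the instance, I would restrict attention to a sub-solution of total edge-weight exactly $D$ (if $\PD(S) > D$ we may discard edges from $E_\Tree(S)$ to bring the covered weight down to $D$). I would then need a family $\mathcal{H}$ of colorings $c\colon E \to 2^{[D]}$ such that each edge $e$ receives a color set of size exactly $\w(e)$, the sets are pairwise disjoint along any root-to-leaf path, and for at least one $c\in\mathcal{H}$ the target weight-$D$ edge set gets colored bijectively onto $[D]$. This is exactly the ``splitting'' guarantee supplied by the perfect hash families / universal sets of Theorem~\ref{thm:Net-universalSet}: using an $(|E|\cdot\max_\w,\,D)$-type splitter, or more directly by assigning each of the $D$ relevant weight-units a color via a hash function that is injective on the relevant units, one obtains a family of size $\Oh^*(2^{\Oh(D)})$. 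The constant $2.443\cdot D$ in the running time comes from balancing the splitter size against the $2^D$ factor in the colored algorithms; concretely, one should color the ground set of weight-units so that the combined cost of enumerating colorings times the per-coloring $\Oh^*(2^D)$ cost is minimized. I would invoke the standard derandomized splitter bound (as referenced from \cite{cygan,alon}) giving a family of $e^{D}\cdot D^{\Oh(\log D)}$ colorings, so the total is $e^D\cdot 2^D\cdot D^{\Oh(\log D)} = 2^{(1+\log_2 e)D + o(D)} = 2^{2.443\cdot D + o(D)}$, since $1+\log_2 e \approx 2.4427$.

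The correctness argument then has two directions, both routine once the splitting property is in hand. If $\Instance$ is a \yes-instance with solution $S$, pick a weight-$D$ subset of covered units; by the defining property of the splitter family some coloring $c\in\mathcal{H}$ is injective on these units, hence $c(S)\supseteq$ all $D$ colors, so $c(S)=[D]$ and $\Instance_c$ is a \yes-instance of the colored problem (the schedule saving $S$ is unchanged, so validity/strictness transfers verbatim). Conversely, if some $\Instance_c$ is a \yes-instance of \ctPDs (\ctPDws) with set $S'$ and schedule $f$, then since $c(S')=[D]$ and colors are disjoint along paths, the edges carrying these colors have total weight at least $D$, so $\PD(S')\ge D$; as $f$ is a (strict) valid schedule saving $S'$, this $S'$ witnesses a \yes-instance of the original problem.

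The main obstacle I anticipate is getting the coloring definition to interact correctly with the tree structure: the colors must be \emph{path-disjoint} (no color reused on two edges sharing a root-to-leaf path) so that $c(S)=\bigcup_{x\in S}c(x)$ correctly accumulates distinct weight-units rather than double-counting, yet the splitter must still guarantee injectivity on the target set across the whole edge-weight ground set. Reconciling ``assign $\w(e)$ colors per edge, disjoint along paths'' with ``a $(n,D)$-universal/splitter family on $D$ units'' requires treating each unit of edge-weight as a separate element of the ground set and applying the splitter to that expanded universe of size $\sum_e \w(e)$; I would need to verify the bound remains $\Oh^*(2^{\Oh(D)})$ because only the $D$ units of the target matter for the injectivity requirement, keeping the family size governed by $D$ rather than by $\sum_e\w(e)$. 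Optimizing the constant to exactly $2.443$ is the quantitatively delicate part, but it follows the well-known balancing used for color-coding with universal sets and does not require new ideas.
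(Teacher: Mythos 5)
Your proposal follows essentially the same route as the paper's proof: expand the edge weights into a ground set of $W=\sum_{e}\w(e)$ weight-units, build a $(W,D)$-perfect hash family of size $e^{D}D^{\Oh(\log D)}\log W$, let each hash function $f$ induce a coloring that assigns edge $e_j$ the color set $\{f(W_{j-1}+1),\dots,f(W_j)\}$ of size $\w(e_j)$, solve each colored instance with Lemma~\ref{lem:timePD-cs-D} resp.\ Lemma~\ref{lem:timePD-cws-D}, and balance $e^{D}\cdot 2^{D}$ to obtain $2^{(1+\log_2 e)\cdot D+o(D)}=2^{2.443\cdot D+o(D)}$ --- this is exactly the paper's construction and running-time analysis.

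The one point where your write-up deviates is the insistence on path-disjoint color sets, which you single out as the main obstacle. That condition is neither guaranteed by the construction (a hash function $f\colon[W]\to[D]$ is injective only on the one target $D$-element set, so two edges on a common root-to-leaf path may well share colors) nor needed anywhere. In the converse direction no double-counting can occur: since $|c(e)|=\w(e)$ for every edge, $c(S')=[D]$ yields $D=|c(S')|=\bigl|\bigcup_{e\in E_{\Tree}(S')}c(e)\bigr|\le\sum_{e\in E_{\Tree}(S')}\w(e)=\PD(S')$, so reused colors can only make coverage harder to achieve, never produce a false positive. Replacing your disjointness assumption by this one-line counting argument closes the issue, and the rest of your argument goes through verbatim. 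One small piece of housekeeping the paper does that you should add: for $c(e)\subseteq[D]$ with $|c(e)|=\w(e)$ to be well-defined you must first dispose of edges of weight at least $D$, which is immediate because any savable taxon below such an edge is by itself a solution (taxa that cannot be saved by any valid schedule are deleted beforehand).
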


For an overview of color coding, we refer the reader to~\cite[Sec.~5.2~and~5.6]{cygan}.

The key idea is that we construct a family $\mathcal{C}$ of colorings on the edges of $\Tree$, where each edge $e\in E$ is assigned a subset of $[D]$ of size $\w(e)$.
Using these, we generate~$|\mathcal{C}|$ instances of the colored version of the problem, which we then solve in $\Oh^*(2^D\cdot D)$ time.
Central for this will be the concept of a perfect hash family as defined in Definition~\ref{def:perfectHashFamily}.

\begin{proof}
	We focus on \tPDs and omit the analogous proof for \tPDws.
	
	\proofpara{Reduction}
	Let $\Instance = (\Tree, \ex, \ell, T, D)$ be an instance of \tPDs.
	We assume that for each taxon $x\in X$ some valid $T$-schedule saving $\{x\}$ exists, as we can delete $x$ from \Tree, otherwise.
	Therefore, if there is an edge $e$ with $\w(e) \ge D$, then~$\{x\}$ is a solution for each $x\in \off(e)$ and we have a trivial \yes-instance.
	Hence, we assume that $\max_\w < D$.
	
	Now, order the edges $e_1, \dots, e_m$ of $\Tree$ arbitrarily and
	define integers $W_0 := 0$ and~$W_j := \sum_{i=1}^{j} \w(e_{i})$ for each $j\in [m]$.
	We set $W := W_m$.
	Let $\mathcal{F}$ be a $(W, D)$-perfect hash family.
	Now, we define a family of colorings $\mathcal{C}$ as follows.
	For every~$f \in \mathcal{F}$, let~$c_f$ be the coloring such that $c_f(e_j) := \{f(W_{j-1}+1), \dots, f(W_j)\}$ for each~$e_j \in E(\Tree)$.
	
	For each $c_f \in \mathcal{C}$,
	let $\Instance_{c_f} = (\Tree, \ex, \ell, T, D, c_f)$ be the corresponding instance of \ctPDs.
	Now, solve every instance $\Instance_{c_f}$ using the algorithm described in Lemma~\ref{lem:timePD-cws-D}, and return \yes if and only if $\Instance_{c_f}$ is a \yes-instance for some $c_f \in \mathcal{C}$.
	
	\proofpara{Correctness}
	For any subset of edges $E'$ with $\w(E') \geq D$, there is a corresponding subset of $[W]$ of size at least $D$.
	Since $\mathcal{F}$ is a $(W, D)$-perfect hash family, it follows that $c_f(E') = [D]$, for some $c_f \in \mathcal{C}$.
	Thus, in particular, if $S\subseteq X$ is a solution for instance~$\Instance$, then $c_f(S) = [D]$, for some $c_f \in \mathcal{C}$, where $c_f(S)$ is the set of colors assigned by $c_f$ to the edges between the root and a taxon of $S$.
	It follows that one of the constructed instances of \ctPDs is a \yes-instance.
	
	Conversely, it is easy to see that any solution $S$ for one of the constructed instances of \ctPDs is also a solution for $\Instance$.

	\proofpara{Running Time}
	The construction of $\mathcal{C}$ takes $e^D D^{\Oh(\log D)} \cdot W \log W$ time, and for each coloring $c \in \mathcal{C}$, the construction of instance~$\Instance_c$ of \ctPDs takes time polynomial in $|\Instance|$.
	Solving instances of \ctPDs takes $\Oh^*(2^D\cdot D)$~time, and the number of instances is $|\mathcal{C}| = e^D D^{\Oh(\log D)} \cdot \log W$.
	
	Thus, the total running time is 
	$\Oh^*(e^D D^{\Oh(\log D)} \log W \cdot (W + (2^D\cdot D)))$.
	This sim\-pli\-fies to $\Oh^*((2e)^D \cdot 2^{\Oh(\log^2(D))})$, because $W = \PD(X) < 2n\cdot D$.
\end{proof}

\section{The Acceptable Loss of Diversity \Dbar}
\label{sec:timePD-Dbar}
In this section, we show that \tPDs is \FPT with respect to the acceptable loss of phylogenetic diversity \Dbar.
Recall that \Dbar is defined as $\PD(X)-D$.
For a set of taxa~$A \subseteq X$, define $E_d(A) := \{ e\in E \mid \off(e)\subseteq A \} = E \setminus E_\Tree(X\setminus A)$.
That is, $E_d(A)$ is the set of edges that \emph{do not} have offspring in $X\setminus A$. 
Then, one may think of \tPDs as the problem of finding a set of taxa $S$ such that there is a valid~$T$-schedule saving $S$ and $\w_\Sigma(E_d(X\setminus S))$ is at most $\Dbar$.

In order to show the desired result, we again use the techniques of color coding and dynamic programming.
For an easier understanding of the following definitions consider Figure~\ref{fig:timePD-example-Dbar-definitions}.
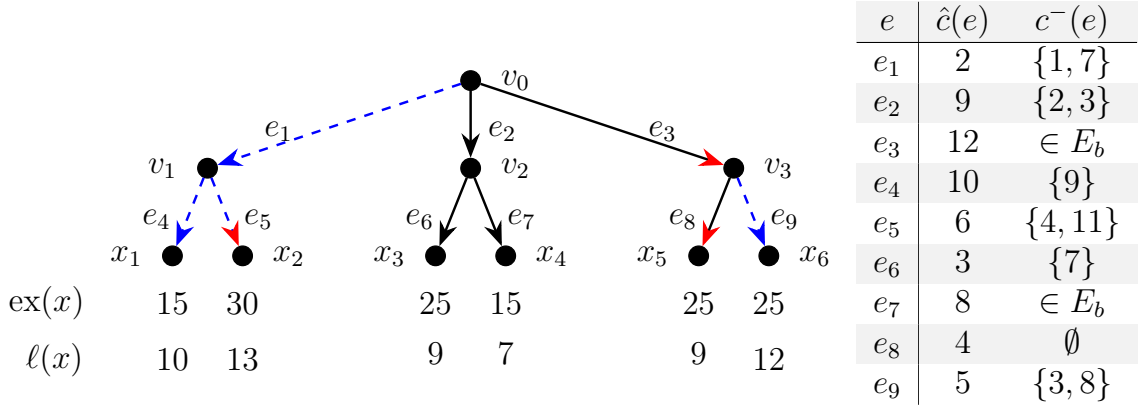
\begin{figure}[t]
	\centering
	\resizebox{1.1\columnwidth}{!}{%
	\begin{tikzpicture}[scale=1,every node/.style={scale=0.9}]
		\tikzstyle{txt}=[circle,fill=white,draw=white,inner sep=0pt]
		\tikzstyle{nde}=[circle,fill=black,draw=black,inner sep=2.5pt]
		
		\node[txt] at (11,8.6) {
			\myrowcols
			\begin{tabular}{c|cc}
				$e$ & $\hat c(e)$ & $c^-(e)$ \\
				\hline
				$e_1$ & 2 & $\{1,7\}$ \\
				$e_2$ & 9 & $\{2,3\}$ \\
				$e_3$ & 12 & $\in E_b$ \\
				$e_4$ & 10 & $\{9\}$ \\
				$e_5$ & 6 & $\{4,11\}$ \\
				$e_6$ & 3 & $\{7\}$ \\
				$e_7$ & 8 & $\in E_b$ \\
				$e_8$ & 4 & $\emptyset$ \\
				$e_9$ & 5 & $\{3,8\}$ \\
			\end{tabular}
		};
		
		\node[nde] (v0) at (5,10) {};
		\node[nde] (v1) at (2,9) {};
		\node[nde] (v2) at (5,9) {};
		\node[nde] (v3) at (8,9) {};
		\node[nde] (v4) at (1.6,8) {};
		\node[nde] (v5) at (2.4,8) {};
		\node[nde] (v6) at (4.6,8) {};
		\node[nde] (v7) at (5.4,8) {};
		\node[nde] (v8) at (7.6,8) {};
		\node[nde] (v9) at (8.4,8) {};
		
		\node[txt,xshift=9mm,yshift=-10mm] (c1) [above=of v1] {$e_1$};
		\node[txt,xshift=4mm,yshift=-10mm] (c2) [above=of v2] {$e_2$};
		\node[txt,xshift=-9mm,yshift=-10mm] (c3) [above=of v3] {$e_3$};
		
		\node[txt,xshift=-2mm,yshift=-10mm] (c4) [above=of v4] {$e_4$};
		\node[txt,xshift=2mm,yshift=-10mm] (c5) [above=of v5] {$e_5$};
		\node[txt,xshift=-2mm,yshift=-10mm] (c6) [above=of v6] {$e_6$};
		\node[txt,xshift=2mm,yshift=-10mm] (c7) [above=of v7] {$e_7$};
		\node[txt,xshift=-2mm,yshift=-10mm] (c8) [above=of v8] {$e_8$};
		\node[txt,xshift=2mm,yshift=-10mm] (c9) [above=of v9] {$e_9$};
		
		\node[txt,yshift=9mm] (r4) [below=of v4] {$15$};
		\node[txt,yshift=9mm] (r5) [below=of v5] {$30$};
		\node[txt,yshift=9mm] (r6) [below=of v6] {$25$};
		\node[txt,yshift=9mm] (r7) [below=of v7] {$15$};
		\node[txt,yshift=9mm] (r8) [below=of v8] {$25$};
		\node[txt,yshift=9mm] (r9) [below=of v9] {$25$};
		
		\node[txt,yshift=9mm] (l4) [below=of r4] {$10$};
		\node[txt,yshift=9mm] (l5) [below=of r5] {$13$};
		\node[txt,yshift=9mm] (l6) [below=of r6] {$9$};
		\node[txt,yshift=9mm] (l7) [below=of r7] {$7$};
		\node[txt,yshift=9mm] (l8) [below=of r8] {$9$};
		\node[txt,yshift=9mm] (l9) [below=of r9] {$12$};
		
		\node[txt,xshift=3mm] (l'4) [left=of l4] {$\ell(x)$};
		\node[txt,xshift=3mm] (r'4) [left=of r4] {$\ex(x)$};
		
		\node[txt,xshift=-9mm] (t0) [right=of v0] {$v_0$};
		\node[txt,xshift=9mm] (t1) [left=of v1] {$v_1$};
		\node[txt,xshift=-9mm] (t2) [right=of v2] {$v_2$};
		\node[txt,xshift=-9mm] (t3) [right=of v3] {$v_3$};
		\node[txt,xshift=9mm] (t4) [left=of v4] {$x_1$};
		\node[txt,xshift=-9mm] (t5) [right=of v5] {$x_2$};
		\node[txt,xshift=9mm] (t6) [left=of v6] {$x_3$};
		\node[txt,xshift=-9mm] (t7) [right=of v7] {$x_4$};
		\node[txt,xshift=9mm] (t8) [left=of v8] {$x_5$};
		\node[txt,xshift=-9mm] (t9) [right=of v9] {$x_6$};
		
		\draw[thick,dashed,blue,arrows = {-Stealth[length=8pt]}] (v0) to (v1);
		\draw[thick,arrows = {-Stealth[length=8pt]}] (v0) to (v2);
		\draw[thick,arrows = {-Stealth[red,length=8pt]}] (v0) to (v3);
		\draw[thick,dashed,blue,arrows = {-Stealth[length=8pt]}] (v1) to (v4);
		\draw[thick,dashed,blue,arrows = {-Stealth[red,length=8pt]}] (v1) to (v5);
		\draw[thick,arrows = {-Stealth[length=8pt]}] (v2) to (v6);
		\draw[thick,arrows = {-Stealth[length=8pt]}] (v2) to (v7);
		\draw[thick,arrows = {-Stealth[red,length=8pt]}] (v3) to (v8);
		\draw[thick,dashed,blue,arrows = {-Stealth[length=8pt]}] (v3) to (v9);
	\end{tikzpicture}
	}
	
	\caption{A hypothetical extinction-6-colored $X$-tree is shown.
	The anchored taxa set~$\mathcal A := \{(x_1,v_1,e_5), (x_2,v_0,e_3), (x_6,v_3,e_8)\}$
	with~$c(E^+(\mathcal A)) = c(\{ e_1, e_4, e_5, e_9 \}) = [11]$ 
	and~$\hat c(E_s(\mathcal A)) = \hat c(\{ e_3, e_5, e_8 \}) = \{ 4, 6, 12 \}$
	is color-respectful.
	The edges in $E^+(\mathcal A)$ are blue and dashed. The edges in $E_s(\mathcal A)$ have a red arrow.
	In contrast to~$\mathcal{A}$,
	the anchored taxa set~$\mathcal A' := \{(x_1,v_0,e_3), (x_2,v_1,e_4), (x_6,v_3,e_8)\}$
	does not have a valid ordering.
	}
	\label{fig:timePD-example-Dbar-definitions}
\end{figure}%

We start by defining a colored version of the problem.
With $E_{\le \Dbar}$ (or $E_{> \Dbar}$) we denote the set of edges $e\in E$ with $\w(e)\le \Dbar$ (or $\w(e) > \Dbar$).
An \emph{extinction-$\Dbar$-colored}~$X$-tree is a phylogenetic $X$-tree $\Tree = (V,E,\w,\hat c,c^-)$,
where $\hat c$ assigns each edge~$e\in E$ a \emph{key-color} $\hat c(e) \in [2\Dbar]$ and
$c^-$ assigns edges $e\in E_{\le \Dbar}$ a set of colors~$c^-(e) \subseteq [2\Dbar]\setminus \{\hat c(e)\}$ of size $\w(e)-1$.
With $c(e)$ we denote the union of the two sets $c^-(e) \cup \{\hat c(e)\}$ for each edge $e\in E_{\le \Dbar}$.

Observe that while in Section~\ref{sec:timePD-D} we wanted the set of edges with an offspring in~$S$ to use every color at least once, here we want that the edges in $E_d(S)$ use \emph{at most} a certain number of colors and therefore each color at most once.


The key idea behind our approach is as follows. 
We seek a solution $S$ in which not only the edges of $E_d(X\setminus S)$ use each color at most once, but also every highest edge in $E_d(X\setminus S)$ has a sibling edge with a key-color not in $c(E_d(X\setminus S))$.
Suppose such a set $S$ exists, and let $x_1,\dots, x_{|X\setminus S|}$ be some ordering of $X \setminus S$.
Observe now that for any $i \in [|X\setminus S|]$, the set of edges in $E_d(\{x_1,\dots, x_{i}\})\setminus E_d(\{x_1,.\dots, x_{i-1}\})$ form a path in $\Tree$ from some vertex $v_i$ to $x_i$.
Furthermore, as the incoming edge of $v_i$ is not in  $E_d(\{x_1,.\dots, x_{i}\})$, there is an outgoing edge $e_i$ of $v_i$ not in  $E_d(\{x_1,.\dots, x_{i}\})$.
We may assume  $\hat c(e_i) \notin E_d(\{x_1,.\dots, x_{i}\})$, either because $e_i \in E_d(X\setminus S)$ (and $E_d(X\setminus S)$ uses each color at most once), or because $e_i \notin E_d(X\setminus S)$ and so $e_i$ is a sibling edge of a higest edge in $E_d(X\setminus S)$.

Taking the tuple $(x_i,v_i,e_i)$ for each $i \in [|X\setminus S|]$ gives us an \emph{anchored taxa set} that gives us all the information we need about $E_d(X\setminus S)$.
Formally speaking,
an \emph{anchored taxa set} $\mathcal A$ is a set of tuples~$(x,v,e)$, where~$x\in X$ is a taxon, $v$ is a strict ancestor of $x$ and $e$ is an outgoing edge of $v$ with $x\not\in \off(e)$.
We plan not to select taxa individually but to define an anchored set of taxa.
If a tuple~$(x,v,e)$ is selected, then we want to let taxon~$x$ go extinct and we also want to arrange that the diversity of the edges on the path from $v$ and $x$ gets lost, but the edge~$e$ outgoing of~$v$ has an offspring which gets saved, or will be selected to go extinct later.
In the rest of the section, whenever we refer to a tuple $(x,v,e)$, we always assume $x\in \off(v)$ and $e$ is an outgoing edge of $v$ with~$x\not\in \off(e)$.
We denote with $X(\mathcal A) := \{ x \mid (x,v,e) \in \mathcal A \}$ the taxa of an anchored taxa set.
For a phylogenetic $X$-tree \Tree, a taxon $x\in X$, and a vertex $v \in \anc(x)$, we denote $\Pvx$ to be the set of edges on the path from~$v$ and~$x$.

For an anchored taxa set $\mcA$,
we define two edge sets $E^+(\mcA) := \bigcup_{(x,v,e) \in \mcA} \Pvx$ and~$E_s(\mcA) := \{ e \mid (x,v,e) \in \mcA \}$.
Informally, $E^+(\mcA)$ is the set of edges that connect the taxa of the anchored taxa set with the anchors, and
the edges in $E_s(\mcA)$ are sibling-edges of the topmost edges of $\Pvx$ for each $(x,v,e) \in \mcA$.
These sibling-edges~$e$ may or may not be in $E_d(X(\mcA))$, depending on whether $e$ is part of $P_{u,y}$ for some other tuple~$(y,u,e') \in \mcA$.

A set of edges $E'$ has \emph{unique colors} (or \emph{unique key-colors}), if the sets~$c(e_1)$ and~$c(e_2)$ (respectively, $\{\hat c(e_1)\}$ and $\{\hat c(e_2)\}$)
are disjoint for~$e_1,e_2\in E'$, with~$e_1 \ne e_2$.\lb
An anchored taxa set $\mcA$ has a \emph{valid ordering $(x_1,v_1,e_1),\dots,(x_{|\mcA|},v_{|\mcA|},e_{|\mcA|})$}\lb if $(x_i,v_i,e_i) \in \mcA$, $\ex(x_i) \le \ex(x_j)$ and $\hat c(e_j) \not\in c(E^+(\mcA_j))$ for each pair $i,j\in [|\mcA|]$, with $i\le j$.
With a valid ordering we want to enforce that these tuples are selected in an order such that when letting taxa~$x_1, \dots, x_{|\mcA|}$ go extinct then at most~$\w(E^+(\mcA))$ diversity is lost.
We define $\mcA_j := \{ t_i \in \mcA \mid i\le j \}$.
An anchored taxa set $\mcA$ is \emph{color-respectful} if 
\begin{enumerate}
	\itemsep-.35em
	\item[CR a)]\label{it:Ca} $E^+(\mcA)$ has unique colors,
	\item[CR b)]\label{it:Cb} $E_s(\mcA)$ has unique key-colors,
	\item[CR c)]\label{it:Cc} $E^+(\mcA)$ and $E_{> \Dbar}$ are disjoint,
	\item[CR d)]\label{it:Cd} $\mcA$ has a valid ordering, and
	\item[CR e)]\label{it:Ce} $\Pvx$ and $P_{u,y}$ are disjoint for any tuples $(x,v,e),(y,u,e') \in \mcA$.
\end{enumerate}

The existence of a color-respectful anchored taxa set will be used to show that an instance of \tPDs is a \yes-instance. To formally show this, we first define a colored version of the problem.

\problemdef{\cBartPDslong\Dbar (\cBartPDs\Dbar)}{
	An extinction-$\Dbar$-colored $X$-tree $\Tree = (V,E,\w,\hat c,c^-)$,
	integers $\ell(x)$ and $\ex(x)$ for every taxon $x\in X$,
	and a set of teams $T$}{
	Is there an anchored taxa set $\mcA$ such that $\mcA$ is color-respectful, $|c(E^+(\mcA))| \le \Dbar$ and there is a valid $T$-schedule saving $X\setminus X(\mcA)$}

The following lemma shows how \cBartPDs\Dbar is relevant to \tPDs.

\begin{lemma}\label{lem:timePD-coloredYes}
	Let $\Instance' = (\Tree', \ex, \ell, T, D)$ with $\Tree' = (V,E,\w,\hat c,c^-)$, be an instance of \cBartPDs\Dbar
	and let $\Instance = (\Tree, \ex, \ell, T, D)$ with $\Tree = (V,E,\w)$ be the instance of \tPDs induced by $\Instance'$ when omitting the coloring.
	If $\Instance'$ is a \yes-instance of \cBartPDs\Dbar, then $\Instance$ is a \yes-instance of \tPDs. 
	\end{lemma}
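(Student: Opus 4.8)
The plan is to show that $S := X \setminus X(\mcA)$ is a solution for $\Instance$, where $\mcA$ is a color-respectful anchored taxa set witnessing that $\Instance'$ is a \yes-instance. A valid $T$-schedule saving $S$ is handed to us directly by the definition of \cBartPDs{\Dbar}, so the only thing left to verify is $\PD(S) \ge D$. Using the identity $E_d(X(\mcA)) = E \setminus E_\Tree(X \setminus X(\mcA))$, this is equivalent to $\w_\Sigma(E_d(X(\mcA))) \le \PD(X) - D = \Dbar$; that is, the total weight of the edges that lose all their offspring once the taxa $X(\mcA)$ go extinct must not exceed the acceptable loss $\Dbar$.

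First I would bound the weight of $E^+(\mcA)$. By CR c) every edge of $E^+(\mcA)$ lies in $E_{\le \Dbar}$, so $c(e)$ is defined and $|c(e)| = |c^-(e)| + 1 = \w(e)$. By CR a) the color sets $c(e)$ for $e \in E^+(\mcA)$ are pairwise disjoint, whence $\w_\Sigma(E^+(\mcA)) = \sum_{e \in E^+(\mcA)} |c(e)| = |c(E^+(\mcA))| \le \Dbar$, the last step being exactly the hypothesis of \cBartPDs{\Dbar}. Since edge weights are nonnegative, it therefore suffices to prove the containment $E_d(X(\mcA)) \subseteq E^+(\mcA)$.

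The heart of the proof is this containment, which I would establish by induction along a valid ordering $(x_1,v_1,e_1),\dots,(x_m,v_m,e_m)$ of $\mcA$ guaranteed by CR d). (The taxa are pairwise distinct, since two tuples sharing a taxon $x$ would have paths both containing the edge entering $x$, contradicting the disjointness CR e); hence $m = |X(\mcA)|$ and the ordering enumerates $X(\mcA)$.) The claim is $E_d(X(\mcA_j)) \subseteq E^+(\mcA_j)$ for all $j$, with $\mcA_j = \{t_i : i \le j\}$. Passing from $X(\mcA_{j-1})$ to $X(\mcA_j) = X(\mcA_{j-1}) \cup \{x_j\}$, the newly doomed edges all lie on the root-to-$x_j$ path and, because $\off(\cdot)$ only shrinks as one descends, form a contiguous subpath $P_{w_j, x_j}$, where $w_j$ is the parent of the topmost vertex all of whose offspring already lie in $X(\mcA_j)$. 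It then remains to show $P_{w_j, x_j} \subseteq P_{v_j, x_j}$, i.e.\ that $w_j$ does not sit strictly above the anchor $v_j$; together with the induction hypothesis this yields $E_d(X(\mcA_j)) \subseteq E^+(\mcA_{j-1}) \cup P_{v_j,x_j} = E^+(\mcA_j)$.

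This last point is the main obstacle, and it is exactly where the sibling edge $e_j$ and the key-color condition of the valid ordering come into play. Suppose for contradiction that $v_j$ is a strict descendant of $w_j$; then $\off(v_j) \subseteq X(\mcA_j)$, and since $e_j$ is an outgoing edge of $v_j$ with $x_j \notin \off(e_j)$, distinctness gives $\off(e_j) \subseteq X(\mcA_j) \setminus \{x_j\} = X(\mcA_{j-1})$, so $e_j \in E_d(X(\mcA_{j-1}))$. By the induction hypothesis $e_j \in E^+(\mcA_{j-1})$, and using CR c) once more we get $\hat c(e_j) \in c(e_j) \subseteq c(E^+(\mcA_j))$, which contradicts the valid-ordering requirement $\hat c(e_j) \notin c(E^+(\mcA_j))$. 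Hence $P_{w_j,x_j} \subseteq P_{v_j,x_j} \subseteq E^+(\mcA_j)$, completing the induction and, via the weight bound above, the proof that $\PD(S) \ge D$. I expect the bookkeeping identifying the topmost ``doomed'' vertex and verifying that the newly dead edges are precisely $P_{w_j,x_j}$ to require the most care in a full write-up, while conceptually the sibling-edge contradiction driven by CR c) and CR d) is the crucial step.
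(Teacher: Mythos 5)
Your proposal is correct and follows essentially the same route as the paper's proof: the inductive claim $E_d(X(\mcA_j)) \subseteq E^+(\mcA_j)$ along a valid ordering, the use of the sibling edge $e_j$ whose key-color is fresh (via CR~c) and CR~d)) to confine the newly dead edges to $P_{v_j,x_j}$, and the final weight bound $\w(E_d(X(\mcA))) \le |c(E^+(\mcA))| \le \Dbar$ via unique colors. The only difference is presentational—you argue by contradiction from a topmost doomed vertex $w_j$, while the paper directly exhibits a surviving offspring $z$ of $e_j$ above $v_j$—and your explicit justification that the taxa in $\mcA$ are pairwise distinct (via CR~e)) is a welcome detail the paper leaves implicit.
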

\begin{proof}
	Let $\mcA$ be a solution for $\Instance'$. As there exists a $T$-schedule saving the set of taxa~$S := X\setminus X(\mcA)$, it is sufficient to show that $\PD(S) \ge D$.
	
	Since every edge in $\Tree$ either has an offspring in $S$ or is in $E_d(X\setminus S)$, it follows that $\w(E_d(X\setminus S)) = \PD(X) - \PD(S)$.
	Thus, in order to show $\PD(S) \ge D$ it remains to show that $\w(E_d(X\setminus S)) \leq \PD(X) - D = \Dbar$.
	
	We prove that $E_d(X(\mcA_j))\subseteq E^+(\mcA_j)$ for each $j \in [|\mcA|]$ by an induction on $j$.
	
	For the base case $j= 1$, we have that $X(\mcA_1) = \{x_1\}$ and $E_d(\{x_1\})$ consists of the single incoming edge of $x$.
	As this edge is in $P_{v_1,x_1}$, the claim is satisfied.
	
	For the induction step, assume $j >1$ and $E_d(X(\mcA_{j-1}))\subseteq E^+(\mcA_{j-1})$.
	It suffices to show $E_d(X(\mcA_j)) \setminus E_d(X(\mcA_{j-1})) \subseteq P_{v_j,x_j}$,
	as $E^+(\mcA_j) =  E^+(\mcA_{j-1}) \cup P_{v_j,x_j}$.
	To see this, consider the edge $e_j$.
	Because $\mcA$ has a valid ordering, $\hat c(e_j) \not\in c(E^+(\mcA_j))$, which implies that $e_j$ is not in~$E^+(\mcA_j)$.
	By the inductive hypothesis, we conclude~$E_d(X(\mcA_{j-1}))\subseteq E^+(\mcA_{j-1}) \subseteq E^+(\mcA_j)$ and so $e_j \notin E_d(X(\mcA_{j-1}))$.
	Thus, $e_j$ has an offspring $z$ which is not in $X(\mcA_{j-1})$.
	By definition, $x_j \notin \off(e_j)$.
	So, we conclude that $z \notin X(\mcA_{j-1}) \cup \{x_j\} = X(\mcA_j)$. 
	Furthermore, $z \in \off(e')$ for any edge $e'$ incoming at an ancestor of $v_j$, and so $e' \notin E_d(X(\mcA_j))$ for any such edge $e'$.
	It follows that the only edges in $E_d(X(\mcA_j)) \setminus E_d(X(\mcA_{j-1}))$ must be between $v_j$ and $x_j$, that is, in $P_{v_j,x_j}$.
	We conclude that $E_d(X(\mcA_j))\subseteq E_d(X(\mcA_{j-1}))\cup P_{v_j,x_j} \subseteq E^+(\mcA_{j-1}) \cup P_{v_j,x_j} = E^+(\mcA_j)$.
	
	Letting $j = |\mcA|$, we have $E_d(X(\mcA)) \subseteq E^+(\mcA)$.
	Thereby, we can conclude that~$\w(E_d(X(\mcA))) \leq \w(E^+(\mcA)) = \sum_{e \in E^+(\mcA)}|c(e)| = |c(E^+(\mcA))|$, where the last equality holds because  $E^+(\mcA)$ has unique colors.
	We have $\w(E_d(X(\mcA))) \leq \Dbar$ since~$|c(E^+(\mcA))|\leq \Dbar$ and so $\PD(S) \ge D$, as required.
\end{proof}

\medskip

In the following,
we continue to define a few more things.
Then, we present Algorithm~$(\Dbar)$ for solving \cBartPDs\Dbar.
Afterward, we analyze the correctness and the running time of Algorithm~$(\Dbar)$.
And finally, we show how to reduce instances of \tPDs to instances of \cBartPDs\Dbar to conclude the desired \FPT-result.
\medskip

Define $\Hbar{p} := \ell(Z_p) - H_p$ to be the number of person-hours one would need additionally to be able to save all taxa in $Z_p$ (when not regarding the time constraints~$\ex_1,\dots,\ex_{p-1}$).
A set of taxa $A\subseteq X$ is \emph{ex-$q$-compatible} for $q\in[\var_{\ex}]$ if~$\ell(A\cap Z_p) \ge \Hbar{p}$ for each $p\in [q-1]$. 
Note that for there to be a valid $T$-schedule saving $S\subseteq X$, it must be satisfied that $\ell(S\cap Z_p) \leq H_p$, and so~$\ell((X\setminus S)\cap Z_p) \geq \Hbar{p}$, for each $p \in[\var_{\ex}]$.
Thus, if $A$ is ex-$q$-compatible then there is a valid $T$-schedule saving~$(X\setminus A)\cap Z_{q-1}$.
Observe that this schedule does not necessarily save~$(X\setminus A)\cap Z_{q}$.

In our dynamic programming algorithm, we need to track the existence of color-respectful anchored taxa sets using particular sets of colors. To this end, we define the notion of \emph{ex-$(C_1,C_2,q)$-feasibility}.
For sets of colors $C_1,C_2\subseteq [2\Dbar]$ and an\lb integer $q\in[\var_{\ex}]$, an anchored taxa set $\mathcal A$ is \emph{ex-$(C_1,C_2,q)$-feasible} if 
\begin{enumerate}
	\itemsep-.35em
	\item[F a)]\label{it:Fa} $\mcA$ is color-respectful,
	\item[F b)]\label{it:Fb} $c(E^+(\mcA))$ is a subset of $C_1$,
	\item[F c)]\label{it:Fc} $\hat c(e)$ is in $C_2 \cup c(E^+(\mcA \setminus \{(x,v,e)\}))$ for each $(x,v,e)\in \mcA$,
	\item[F d)]\label{it:Fd} $X(\mcA)$ is a subset of $Z_q$, and
	\item[F e)]\label{it:Fe} $X(\mcA)$ is ex-$q$-compatible.
\end{enumerate}

Intuitively, we want anchored taxa sets~$\mcA$ to be ex-$(C_1,C_2,q)$-feasible if it is possible to save~$X \setminus X(\mcA)$ and only lose the colors in~$C_1$ and have the colors in~$C_2$ available for later actions.
As an example, observe that the anchored taxa set $\mathcal A$ in Figure~\ref{fig:timePD-example-Dbar-definitions} is ex-$(C_1,C_2,q)$-feasible for $C_1 = [11]$, $C_2 = \{12\}$, and $q=3$ if and only if $\mathcal A$ is $q$-compatible (which we don't know as the teams are not given). This would be the case if and only if~$\Hbar{1} \ge 10$, $\Hbar{2} \ge 22$, and $\Hbar{3} \ge 35$.

\medskip

Our algorithm calculates for each combination $C_1$, $C_2$, and $q$ the maximum rescue length of the taxa of an ex-$(C_1,C_2,q)$-feasible anchored taxa set.
In order to calculate these values recursively, we declare which tuples $(x,v,e)$ are suitable for consideration in the recurrence of the algorithm.

A tuple $(x,v,e)$ is \emph{$(C_1,C_2)$-good} for sets of colors $C_1,C_2\subseteq [2\Dbar]$ if
\begin{enumerate}
	\itemsep-.35em
	\item[G a)]\label{it:Ga} \Pvx has unique colors,
	\item[G b)]\label{it:Gb} $c(\Pvx)$ is a subset of $C_1$, 
	\item[G c)]\label{it:Gc} $\hat{c}(e)$ is in $C_2$, and 
	\item[G d)]\label{it:Gd} $\Pvx$ and $E_{> \Dbar}$ are disjoint.
\end{enumerate}

Let $\mathcal X_{(q)}$ be the set of tuples $(x,v,e)$ such that $x\in Z_q$, $\Pvx$ has unique colors, $\Pvx \subseteq E_{\le \Dbar}$ and $\hat c(e) \not\in c(\Pvx)$.
Disjoint sets of colors $C_1,C_2\subseteq [2\Dbar]$ are \emph{$q$-grounding},
if $c(\Pvx) \not\subseteq C_1$ or $\hat c(e) \not\in C_2$ for every tuple $(x,v,e)\in \mathcal X_{(q)}$.
In Figure~\ref{fig:timePD-example-Dbar-definitions} the set~$\mathcal X_{(1)}$ would be $\{(x_1,v_1,e_5),(x_1,v_0,e_3)\}$---because both paths $P_{v_0,x_4}$ and $P_{v_1,x_4}$ contain the edge~$e_7\in E_{>\Dbar}$, and $\hat c(e_2) = 9 \in c(e_4)$ so $(x_1,v_0,e_2)$ is not contained in~$\mathcal X_{(1)}$.
Therefore, the 1-grounding colors are any disjoint sets of colors $C_1$ and $C_2$ with
($c(e_4) \not\subseteq C_1$ or $\hat c(e_5) = 6 \not\in C_2$)
and
($c(\{e_1,e_4\}) \not\subseteq C_1$ or $\hat c(e_3) = 12 \not\in C_2$).
One non-trivial example are the sets $C_1 = \{6,7,\dots,11\}$ and~$C_2 = [5] \cup \{12\}$.
The empty sets~$C_1$ and~$C_2$ are~$q$-grounding for each~$q$.

\medskip

\medskip
\noindent\textbf{Algorithm~$\mathbf{(\Dbar)}$.}\\
Let \Instance be an instance of \cBartPDs\Dbar.
In the following, we define a dynamic programming algorithm with table~$\DP$.
For all disjoint sets of colors $C_1, C_2 \subseteq [2\Dbar]$\lb with~$|C_1|\leq \Dbar$, and all $q \in [\var_{\ex}]$,
we compute a value $\DP[C_1,C_2,q]$.
The entries\lb of $\DP$ are computed 
in some order such that $\DP[C_1',C_2',q']$ is computed\lb before $\DP[C_1'',C_2'',q'']$ if $C_1' \subsetneq C_1''$.
We want $\DP[C_1,C_2,q]$ to store the \emph{maximum} rescue length $\ell(X(\mcA))$ of an anchored taxa set $\mcA$ that is  \mbox{ex-$(C_1,C_2,q)$-feasible}.
If there exists no ex-$(C_1,C_2,q)$-feasible anchored taxa set, we want $\DP[C_1,C_2,q]$ to store $-\infty$.
\smallskip

As a base case,
given $q\in [\var_{\ex}]$ and $q$-grounding sets of colors $C_1$ and $C_2$.
If~$\Hbar{p} \le 0$ for each $p \in [q-1]$, we store $\DP[C_1,C_2,q] = 0$.
Otherwise, if $C_1$ and $C_2$ are $q$-grounding but $\Hbar{p} > 0$ for any $p \in [q-1]$, we store $\DP[C_1,C_2,q] = -\infty$.

To compute further values, we use the recurrence
\begin{equation}
	\label{eqn:split-Dbar}
	\DP[C_1,C_2,q] =
	\max_{(x,v,e)} \DP[C_1',C_2',\ex^*(x)] + \ell(x).
\end{equation}

Here, $C_1' := C_1\setminus c(\Pvx)$ and $C_2' := (C_2\cup c(\Pvx)) \setminus \{\hat c(e)\}$.
The maximum in \Recc{eqn:split-Dbar} is taken over $(C_1,C_2)$-good tuples $(x,v,e)$ satisfying $x\in Z_q$\linebreak
and~$\DP[C_1',C_2',\ex^*(x)] + \ell(x) \ge \max\{\Hbar{\ex^*(x)},\Hbar{\ex^*(x)+1},\dots,\Hbar{q-1}\}$.
If no such tuple exists, we set $\DP[C_1,C_2,q] = -\infty$.
Recall that $\ex^*(x) = q$ for each $x\in Y_q$.

We return \yes if $\DP[C_1,C_2,\var_{\ex}]$ stores at least~$\Hbar{\var_{\ex}}$ for some disjoint sets of colors~$C_1,C_2\subseteq [2\Dbar]$ with $|C_1| \leq \Dbar$.
Otherwise, if $\DP[C_1,C_2,\var_{\ex}] \ge \Hbar{\var_{\ex}}$ for all disjoint sets of colors~$C_1$ and $C_2$, we return \no.

\smallskip
We continue to analyze the algorithm by showing that $\DP[C_1,C_2,q]$ stores the largest rescue length $\ell(X(\mcA))$ of an ex-$(C_1,C_2,q)$-feasible anchored taxa set~$\mcA$.
We start with the base case.
\begin{lemma}
	\label{lem:timePD-basic-case}
	For each $q\in [\var_{\ex}]$ and all $q$-grounding sets of colors~$C_1$ and $C_2$,
	$\DP[C_1,C_2,q]$ stores
	the largest rescue length $\ell(X(\mcA))$ of an ex-$(C_1,C_2,q)$-feasible anchored taxa set $\mcA$ if such an $\mcA$ exists, and $-\infty$, otherwise.
\end{lemma}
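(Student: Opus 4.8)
The lemma addresses the base case of Algorithm~$(\Dbar)$, namely the situation where $C_1$ and $C_2$ are $q$-grounding. The plan is to show that in this situation the recurrence in \Recc{eqn:split-Dbar} can never fire, so the stored value is entirely determined by the explicitly-assigned base case, and that this assigned value is correct.

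**The approach.** First I would argue that when $C_1$ and $C_2$ are $q$-grounding, the \emph{only} ex-$(C_1,C_2,q)$-feasible anchored taxa set is the empty set $\mcA = \emptyset$. To see this, suppose some ex-$(C_1,C_2,q)$-feasible $\mcA$ is nonempty, and take its last tuple $(x_{|\mcA|},v_{|\mcA|},e_{|\mcA|})$ under a valid ordering. I would check that this tuple lies in $\mathcal X_{(q)}$: indeed $x \in Z_q$ by F~d), and conditions CR~a), CR~c), and CR~d) together with the structure of $E^+(\mcA)$ force $\Pvx$ to have unique colors, $\Pvx \subseteq E_{\le \Dbar}$, and $\hat c(e) \notin c(\Pvx)$. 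Moreover, by F~b) we have $c(\Pvx) \subseteq c(E^+(\mcA)) \subseteq C_1$, and by F~c) applied to this last tuple (for which $E^+(\mcA \setminus \{(x,v,e)\})$ omits the topmost segment) one obtains $\hat c(e) \in C_2$. But then $c(\Pvx) \subseteq C_1$ \emph{and} $\hat c(e) \in C_2$ for a tuple in $\mathcal X_{(q)}$, directly contradicting the definition of $C_1,C_2$ being $q$-grounding. Hence $\mcA = \emptyset$ is the only candidate.

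**Evaluating the empty set and matching the base case.** The empty anchored taxa set satisfies CR~a)--CR~e) vacuously, and F~a)--F~d) hold trivially with $c(E^+(\emptyset)) = \emptyset \subseteq C_1$ and $X(\emptyset) = \emptyset \subseteq Z_q$. The only nontrivial condition is F~e): whether $X(\emptyset) = \emptyset$ is ex-$q$-compatible, which requires $\ell(\emptyset \cap Z_p) = 0 \ge \Hbar{p}$ for each $p \in [q-1]$, i.e.\ $\Hbar{p} \le 0$ for each such $p$. Thus $\emptyset$ is ex-$(C_1,C_2,q)$-feasible precisely when $\Hbar{p} \le 0$ for all $p \in [q-1]$, in which case its rescue length $\ell(\emptyset) = 0$ is the correct stored value; otherwise no feasible set exists and $-\infty$ is correct. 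This is exactly what the base case of Algorithm~$(\Dbar)$ stores, so the two cases align and the lemma follows.

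**Main obstacle.** The delicate point is verifying that the last tuple of a nonempty feasible $\mcA$ genuinely lands in $\mathcal X_{(q)}$ and simultaneously witnesses $c(\Pvx) \subseteq C_1$ and $\hat c(e) \in C_2$; this requires carefully disentangling how condition F~c), which refers to $c(E^+(\mcA \setminus \{(x,v,e)\}))$, specializes for the \emph{last} tuple in a valid ordering. The valid-ordering condition CR~d) guarantees $\hat c(e_{|\mcA|}) \notin c(E^+(\mcA))$, so $\hat c(e_{|\mcA|})$ cannot be supplied by the $c(E^+(\cdot))$ term in F~c) and must therefore come from $C_2$. Making this last implication fully rigorous—rather than the easier direction that $\emptyset$ is always the canonical feasible set—is where the real work lies, though it is a finite case analysis rather than any quantitative estimate.
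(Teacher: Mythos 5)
Your proposal is correct and follows essentially the same route as the paper's proof: you derive the contradiction for a nonempty feasible $\mcA$ by taking the last tuple of a valid ordering, using F~d), CR~a), CR~c), and CR~d) to place it in $\mathcal X_{(q)}$, and using F~b) together with F~c) and $\hat c(e) \notin c(E^+(\mcA))$ to obtain $c(\Pvx) \subseteq C_1$ and $\hat c(e) \in C_2$, contradicting $q$-grounding. Your treatment of the empty set, reducing its feasibility exactly to $\Hbar{p} \le 0$ for all $p \in [q-1]$ and matching the stored base-case value, is also precisely the paper's argument.
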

\begin{proof}
	Let $q\in [\var_{\ex}]$ and 
	let $C_1$ and $C_2$ be $q$-grounding sets of colors,
	and suppose for a contradiction that some non-empty anchored taxa set $\mcA$ is ex-$(C_1,C_2,q)$-feasible.
	Let $(x,v,e)$ be the last tuple in a valid ordering of $\mcA$.
	Such an ordering exists because $\mcA$ is color-respectful.
	Observe that $\hat c(e) \not\in c(E^+(\mcA))$, and in\lb particular~$\hat c(e) \not\in c(\Pvx)$.
	As $\mcA$ satisfies CR a), CR c) and F d), we also have that~$c(\Pvx)$ is uniquely colored, $\Pvx \subseteq E_{\le \Dbar}$ and $x\in Z_q$.
	Thus, $(x,v,e)$ is in~$\mathcal X_{(q)}$.
	Because $\mcA$ satisfies F b) we have $c(\Pvx) \subseteq C_1$,
	and because $\mcA$ satisfies F c) we\lb have~$\hat C_2 \cup c(e) \in c(E^+(\mcA))$, which together with $\hat c(e) \not\in c(E^+(\mcA))$ implies $\hat c(e) \in C_2$.
	But then $(x,v,e)$ is a tuple in $\mathcal X_{(q)}$ with~$c(\Pvx) \subseteq C_1$ and $\hat c(e) \in C_2$, a contradiction to the assumption that $C_1$ and $C_2$ are $q$-grounding.	 
	It follows that no non-empty anchored taxa set is ex-$(C_1,C_2,q)$-feasible.
	
	Observe that the empty set is ex-$(C_1,C_2,q)$-feasible if and only if
	the empty set is ex-$q$-compatible, as the empty set trivially satisfies all other conditions.\linebreak
	Since $\ell(\emptyset\cap Z_p)=\ell(\emptyset)=0$ for each $p\in [\var_{\ex}]$,
	we observe that the empty set is ex-$q$-compatible if and only if $\Hbar{p} \le 0$ for each $p\in [q-1]$.
	Exactly in these cases, entry $\DP[C_1,C_2,q]$ stores at least~$\ell(\emptyset) = 0$.
\end{proof}

In the following, we show an induction over $|C_1|$.
Observe that $C_1=\emptyset$ and any set of colors $C_2\subseteq [2\Dbar]$ are $q$-grounding for each $q \in [\var_{\ex}]$.
Therefore, Lemma~\ref{lem:timePD-basic-case} is the base case of the induction.

As an induction hypothesis, we assume that for a fixed set of colors $C_1\subseteq [2\Dbar]$ and a fixed $q\in [\var_{\ex}]$, for each $K_1\subsetneq C_1$, $K_2 \subseteq [2\Dbar]\setminus K_1$, and $p \in [q]$, entry~$\DP[K_1,K_2,p]$ stores the largest rescue length $\ell(X(\mcA))$ of an ex-$(K_1,K_2,p)$-feasible anchored taxa set $\mcA$.

In Lemma~\ref{lem:timePD-corr-recurr-A} and Lemma~\ref{lem:timePD-corr-recurr-B}, we proceed to show that with this hypothesis we can conclude that $\DP[C_1,C_2,p]$ stores the desired value.

\begin{lemma}
	\label{lem:timePD-corr-recurr-A}
	If an anchored taxa set $\mcA$ is ex-$(C_1,C_2,q)$-feasible for disjoint sets of colors~$C_1$ and $C_2$ that are not $q$-grounding, then $\DP[C_1,C_2,q]\ge \ell(X(\mcA))$.
\end{lemma}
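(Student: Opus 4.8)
The plan is to carry out Lemma~\ref{lem:timePD-corr-recurr-A} inside the induction on $|C_1|$ already set up, distinguishing whether the given ex-$(C_1,C_2,q)$-feasible anchored taxa set $\mcA$ is empty. Two structural facts about a color-respectful $\mcA$ will be used repeatedly: by CR~e) the paths $\Pvx$ are pairwise disjoint, so each taxon lies in exactly one tuple and $\ell(X(\mcA)) = \sum_{(x,v,e)\in\mcA}\ell(x)$; and by CR~a) together with this disjointness one gets $E^+(\mcA\setminus\{(x,v,e)\}) = E^+(\mcA)\setminus\Pvx$ and hence $c(E^+(\mcA\setminus\{(x,v,e)\})) = c(E^+(\mcA))\setminus c(\Pvx)$ whenever $(x,v,e)\in\mcA$.

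For the case $\mcA=\emptyset$ we have $\ell(X(\mcA))=0$, and feasibility forces $\Hbar{p}\le 0$ for every $p\in[q-1]$. Since $C_1$ and $C_2$ are \emph{not} $q$-grounding, there is by definition a tuple $(x,v,e)\in\mathcal X_{(q)}$ with $c(\Pvx)\subseteq C_1$ and $\hat c(e)\in C_2$; such a tuple is $(C_1,C_2)$-good and satisfies $x\in Z_q$. Writing $C_1':=C_1\setminus c(\Pvx)$ and $C_2':=(C_2\cup c(\Pvx))\setminus\{\hat c(e)\}$, the empty set is still ex-$(C_1',C_2',\ex^*(x))$-feasible because $[\ex^*(x)-1]\subseteq[q-1]$. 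As $\Pvx$ is non-empty and $c(\Pvx)\subseteq C_1$ we have $C_1'\subsetneq C_1$, so the induction hypothesis gives $\DP[C_1',C_2',\ex^*(x)]\ge 0$; combined with $\Hbar{p}\le 0$ this shows the length side-condition of \Recc{eqn:split-Dbar} is met, so $\DP[C_1,C_2,q]\ge\DP[C_1',C_2',\ex^*(x)]+\ell(x)\ge 0$, as required.

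For non-empty $\mcA$ I would peel off the last tuple $(x,v,e)$ of a valid ordering (which exists by CR~d)) and set $\mcA':=\mcA\setminus\{(x,v,e)\}$. First one checks that $(x,v,e)$ is $(C_1,C_2)$-good with $x\in Z_q$: G~a) and G~d) follow from CR~a) and CR~c) via $\Pvx\subseteq E^+(\mcA)$, G~b) from F~b), and G~c) by combining F~c) for $(x,v,e)$ with the valid-ordering property $\hat c(e)\notin c(E^+(\mcA))$, which forces $\hat c(e)\in C_2$. Next one shows $\mcA'$ is ex-$(C_1',C_2',\ex^*(x))$-feasible. Color-respectfulness F~a) and the color bound F~b) pass to $\mcA'$ using the path-disjointness and unique colors noted above; the extinction conditions F~d) and F~e) pass to $\mcA'$ because $x$ has the maximal extinction time in $\mcA$, so that $X(\mcA')\subseteq Z_{\ex^*(x)}$ and $X(\mcA)\cap Z_p = X(\mcA')\cap Z_p$ for all $p\le\ex^*(x)-1$. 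The induction hypothesis (applicable since $C_1'\subsetneq C_1$) then yields $\DP[C_1',C_2',\ex^*(x)]\ge\ell(X(\mcA'))$; since $x$ attains the maximal extinction time one checks $\ell(X(\mcA))\ge\max\{\Hbar{\ex^*(x)},\dots,\Hbar{q-1}\}$, so the side-condition holds and \Recc{eqn:split-Dbar} gives $\DP[C_1,C_2,q]\ge\ell(X(\mcA'))+\ell(x)=\ell(X(\mcA))$.

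The main obstacle will be verifying condition F~c) for $\mcA'$ under the key-color threading $C_2\mapsto C_2'=(C_2\cup c(\Pvx))\setminus\{\hat c(e)\}$: every surviving tuple $(x',v',e')\in\mcA'$ must still satisfy $\hat c(e')\in C_2'\cup c(E^+(\mcA'\setminus\{(x',v',e')\}))$ even though $\hat c(e)$ was deleted from the budget. Starting from F~c) for $\mcA$ and rewriting $c(E^+(\mcA\setminus\{(x',v',e')\}))=c(E^+(\mcA'\setminus\{(x',v',e')\}))\cup c(\Pvx)$, the deletion of $\hat c(e)$ is harmless precisely when $\hat c(e')\ne\hat c(e)$, i.e. when distinct tuples commit distinct witnessing key-colors. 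This is exactly where CR~b) (the sibling edges $E_s(\mcA)$ have unique key-colors) must be invoked, after first arguing that distinct tuples have distinct sibling edges; the latter is immediate for binary trees from CR~e) but in general I would record it as a separate preliminary claim (or enforce it by rerouting each sibling edge to an unused sibling key-color) before running the induction. The remaining bookkeeping for F~b), F~d) and F~e) is routine given the disjointness of the paths $\Pvx$.
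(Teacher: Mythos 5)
Your proposal is, in its main line, the paper's own proof: peel the last tuple $(x,v,e)$ of a valid ordering, verify that it is $(C_1,C_2)$-good (G~a)/G~d) from CR~a)/CR~c), G~b) from F~b), and G~c) from F~c) combined with $\hat c(e)\notin c(E^+(\mcA))$), verify that $\mcA'=\mcA\setminus\{(x,v,e)\}$ is ex-$(C_1',C_2',\ex^*(x))$-feasible, and close with \Recc{eqn:split-Dbar}, using ex-$q$-compatibility of $X(\mcA)$ together with $X(\mcA)\subseteq Z_{\ex^*(x)}$ for the side condition. Two of your deviations are genuine streamlinings: since $v$ is a strict ancestor of $x$ and edge weights are positive, $c(\Pvx)\neq\emptyset$, hence $C_1'\subsetneq C_1$, so the outer induction on $|C_1|$ suffices and the paper's inner induction on $|\mcA|$ can be dropped; and your empty case applies the recurrence directly to the witness tuple supplied by non-groundedness, with $\Hbar{p}\le\ell(\emptyset)=0$ giving the side condition, which is cleaner than the paper's detour through a feasible singleton (whose bound implicitly appeals to the $|\mcA|=1$ case of the very induction being run).

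The obstacle you flag at the end is real, and it is worth knowing that the paper's proof has the same soft spot: the step ``since $E_s(\mcA)$ has unique key-colors, $\hat c(e')\neq\hat c(e)$'' is vacuous when $e'=e$, because $E_s(\mcA)$ is a set of edges and nothing in the definitions forbids two tuples from carrying the same sibling edge. Your two remedies, however, are not interchangeable. The ``separate preliminary claim'' is unprovable: feasibility does not force injectivity of the sibling-edge assignment once an anchor has out-degree at least three. Take $v$ with one child towards $x$, one towards $y$, and a third child reached by $e$; the set $\{(x,v,e),(y,v,e)\}$ with $\hat c(e)\in C_2$ and the two disjoint, uniquely colored paths inside $C_1$ satisfies all of CR~a)--CR~e) and F~a)--F~e). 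Your binary-tree observation is correct, but the trees in this paper are not assumed binary, so only the rerouting route remains. It is best performed as you peel: when removing $(x,v,e)$, replace each earlier tuple $(y,v,e)$ sharing the sibling edge (sharing $e$ forces sharing the anchor $v$) by $(y,v,t)$, where $t$ is the topmost edge of $\Pvx$; then $\hat c(t)\in c(\Pvx)\subseteq C_2'$ yields F~c) for the rerouted tuple outright, CR~e) gives $y\notin\off(t)$ so the tuple is well-formed, and CR~a)/CR~e) of $\mcA$ give $\hat c(t)\notin c(E^+(\mcA'))$, preserving the valid ordering. This is precisely the sibling-edge assignment $e_i:=v_iw_{j_i}$ used when $\mcA$ is built from a solution in the proof of Theorem~\ref{thm:timePD-DBar}. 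One residual check remains even then: CR~b) for the rerouted set requires $\hat c(t)$ to differ from the key-colors of the remaining sibling edges; uniqueness of colors on $E^+(\mcA)$ settles this for sibling edges lying in $E^+(\mcA)$, but a sibling edge outside $E^+(\mcA)$ could in principle share the key-color $\hat c(t)$, so a fully rigorous write-up should package the rerouting as a normalization statement (every ex-$(C_1,C_2,q)$-feasible $\mcA$ may be replaced by one with the same $X(\mcA)$ and $E^+(\mcA)$ and pairwise distinct, collision-free sibling edges) and dispose of this collision case there, rather than leaving it implicit.
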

\begin{proof}
	Let $C_1$ and $C_2$ be disjoint sets of colors that are not $q$-grounding.
	Furthermore, let $\mcA$ be an ex-$(C_1,C_2,q)$-feasible anchored taxa set.
	
	We prove the claim by an induction on $|\mcA|$.
	First, consider $\mcA$ being empty.
	Since~$C_1$ and $C_2$ are not $q$-grounding, there is a tuple $(x,v,e)\in \mathcal X_{(q)}$ such\lb that~$\Pvx \subseteq C_1$ and $\hat c(e) \in C_2$.
	Because $(x,v,e) \in \mathcal X_{(q)}$, the anchored taxa\lb set $\mcA' := \{(x,v,e)\}$ is color-respectful, and by construction, $\mcA$ satisfies F~b), F~c), and~F~d).
	It only remains to show that $\{x\}$ is ex-$q$-compatible to prove that $\mcA'$ is ex-$(C_1,C_2,q)$-feasible.
	Since $\mcA = \emptyset$ is ex-$(C_1,C_2,q)$-feasible, $\ell(x) > \ell(\emptyset) \ge \Hbar{p}$ for each $p\in [q-1]$.
	Thus, $\mcA'$ is ex-$(C_1,C_2,q)$-feasible.
	This is sufficient to show that~$\DP[C_1,C_2,q] \ge \ell(x)$.
	So, we may assume that $\mcA$ is not empty.
	
	Now, let $(x,v,e)$ be the last tuple in a valid ordering of $\mcA$.
	Such an ordering exists because $\mcA$ is color-respectful.
	Define an anchored taxa set $\mcA'$ resulting from removing $(x,v,e)$ from \mcA.
	
	\medskip
	We first show that $\mcA'$ is ex-$(C_1',C_2',\ex^*(x))$-feasible.
	\proofpart{\hyperref[it:Fa]{F a) [$\mcA$ is color-respectful]}}
	Because $\mcA$ is color-respectful and $x$ is the taxon of the largest order, we directly conclude that $\mcA'$ satisfies all properties of color-respectfulness.
	\proofpart{\hyperref[it:Fb]{F b) [$c(E^+(\mcA)) \subseteq C_1$]}}
	Because $E^+(\mcA') = E^+(\mcA) \setminus \Pvx$ and $E^+(\mcA)$ has unique colors, we conclude\lb
	that $c(E^+(\mcA')) = c(E^+(\mcA)) \setminus c(\Pvx) \subseteq C_1 \setminus c(\Pvx)= C_1'$.
	\proofpart{\hyperref[it:Fc]{F c) [$\hat c(e) \in C_2 \cup c(E^+(\mcA \setminus \{(x,v,e)\}))$ for each $(x,v,e)\in \mcA$]}}
	Fix a tuple $(y,u,e')\in \mcA'$.
	We want to show that $\hat c(e')$ is in
	\begin{eqnarray*}
		&& C_2' \cup c(E^+(\mcA'\setminus \{(y,u,e')\}))\\
		&=& ((C_2 \cup c(\Pvx)) \setminus \{\hat c(e)\}) \cup (c(E^+(\mcA\setminus \{(y,u,e')\})) \setminus c(\Pvx))\\
		&=& (C_2 \cup c(E^+(\mcA\setminus \{(y,u,e')\})) \setminus \{\hat c(e)\}
	\end{eqnarray*}
	It holds $\hat c(e') \in C_2 \cup c(E^+(\mcA\setminus \{(y,u,e')\}))$, because $\mcA$ satisfies F c).
	Since $E_s(\mcA)$ has unique key-colors, $\hat c(e') \ne \hat c(e)$.
	Therefore, $\hat c(e') \in C_2' \cup c(E^+(\mcA'\setminus \{(y,u,e')\}))$.
	\proofpart{\hyperref[it:Fd]{F d) [$X(\mcA) \subseteq Z_q$]}}
	$X(\mcA')\subseteq X(\mcA)\subseteq Z_{\ex^*(x)}$ by the definition of $x$.
	\proofpart{\hyperref[it:Fe]{F e) [$X(\mcA)$ is ex-$q$-compatible]}}
	Because $X(\mcA)$ is ex-$q$-compatible, $\ell(X(\mcA') \cap Z_p) = \ell(X(\mcA) \cap Z_p) \ge \Hbar{p}$ for\lb each~$p < \ex^*(x)$.
	Consequently, $X(\mcA')$ is ex-$\ex^*(x)$-compatible.

	\medskip
	Next, we show that $(x,v,e)$ is $(C_1,C_2)$-good.
	\proofpart{\hyperref[it:Ga]{G a) [\Pvx has unique colors]}}
	Because $E^+(\mcA)$ has unique colors, $\Pvx \subseteq E^+(\mcA)$ has unique colors, too.
	\proofpart{\hyperref[it:Gb]{G b) [$c(\Pvx) \subseteq C_1$]}}
	$c(\Pvx) \subseteq c(E^+(\mcA)) \subseteq C_1$ because $\mcA$ satisfies F b).
	\proofpart{\hyperref[it:Gc]{G c) [$\hat{c}(e) \in C_2$]}}
	Because $\mcA$ satisfies F c) and $X(\mcA)$ has a valid ordering,
	$\hat c(e) \in C_2 \cup c(E^+(\mcA'))$ and
	$\hat c(e) \not\in c(E^+(\mcA))$.
	Therefore, especially $\hat c(e) \not\in c(E^+(\mcA')) \subseteq c(E^+(\mcA))$ and we conclude that
	$\hat c(e) \in C_2$.
	\proofpart{\hyperref[it:Gd]{G d) [$\Pvx$ and $E_{> \Dbar}$ are disjoint]}}
	Because $E^+(\mcA)$ has an empty intersection with $E_{> \Dbar}$, also $\Pvx \subseteq E^+(\mcA)$ has empty intersection with $E_{> \Dbar}$.
	
	\medskip
	Since $\mcA'$ is ex-$(C_1',C_2',\ex^*(x))$-feasible and $|\mcA'| < |\mcA|$, we have by the inductive hypothesis that $\DP[C_1',C_2',\ex^*(x)] \ge \ell(X(\mcA'))$.
	Furthermore $\DP[C_1',C_2',\ex^*(x)] + \ell(x) \ge \ell(X(\mcA')) + \ell(x) \ge \ell(X(\mcA))$. As $X(\mcA)$ is ex-$(C_1,C_2,q)$-compatible we have  $ \ell(X(\mcA) = \ell(X(\mcA)\cap Z_p)  \geq \Hbar{p}$ for each $p$ with $\ex^*(x) \le p < q$, and so $\DP[C_1',C_2',\ex^*(x)] + \ell(x)  \ge \max\{\Hbar{\ex^*(x)},\Hbar{\ex^*(x)+1},\dots,\Hbar{q-1}\}$.
	This together with the fact that $(x,v,e)$ is $(C_1,C_2)$-good implies that $(x,v)$ satisfies the conditions of Recurrence~(\ref{eqn:split-Dbar}), from which we can conclude that $\DP[C_1,C_2,q] \geq \DP[C_1',C_2',\ex^*(x)] + \ell(x) \geq \ell(X(\mcA))$.
\end{proof}

\begin{lemma}
	\label{lem:timePD-corr-recurr-B}
	If $\DP[C_1,C_2,q] = a \ge 0$ for some disjoint sets of colors $C_1$ and $C_2$, then there is an ex-$(C_1,C_2,q)$-feasible anchored taxa set $\mcA$ with $\ell(X(\mcA)) = a$.
\end{lemma}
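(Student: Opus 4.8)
The plan is to prove Lemma~\ref{lem:timePD-corr-recurr-B} as the converse direction of the induction started in Lemma~\ref{lem:timePD-corr-recurr-A}, again proceeding by induction on $|C_1|$ (with Lemma~\ref{lem:timePD-basic-case} serving as the base case $C_1 = \emptyset$) and relying on the same induction hypothesis already in force. Given $\DP[C_1,C_2,q] = a \ge 0$, I would first dispose of the case where the value was set by the base case: then $a = 0$ and the empty anchored taxa set witnesses the claim. Otherwise the value must have been produced by Recurrence~(\ref{eqn:split-Dbar}), so there exists a $(C_1,C_2)$-good tuple $(x,v,e)$ with $x\in Z_q$ such that
\begin{equation*}
	\DP[C_1,C_2,q] = \DP[C_1',C_2',\ex^*(x)] + \ell(x) = a,
\end{equation*}
where $C_1' = C_1\setminus c(\Pvx)$ and $C_2' = (C_2\cup c(\Pvx))\setminus\{\hat c(e)\}$, and where the recurrence's side condition $a \ge \max\{\Hbar{\ex^*(x)},\dots,\Hbar{q-1}\}$ holds. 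Since $(x,v,e)$ is $(C_1,C_2)$-good, property G~a) gives that $\Pvx$ has unique colors, so $c(\Pvx)$ is nonempty whenever $\Pvx$ is, forcing $C_1' \subsetneq C_1$; this is exactly what lets me apply the induction hypothesis to the entry $\DP[C_1',C_2',\ex^*(x)]$.

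The core of the argument is then to take the ex-$(C_1',C_2',\ex^*(x))$-feasible anchored taxa set $\mcA'$ guaranteed by the induction hypothesis with $\ell(X(\mcA')) = \DP[C_1',C_2',\ex^*(x)]$, and to show that $\mcA := \mcA'\cup\{(x,v,e)\}$ is ex-$(C_1,C_2,q)$-feasible with $\ell(X(\mcA)) = a$. The plan is to verify each of the feasibility conditions F~a)--F~e) for $\mcA$, essentially reversing the bookkeeping of Lemma~\ref{lem:timePD-corr-recurr-A}. For F~a) I must check color-respectfulness CR~a)--CR~e): uniqueness of colors on $E^+(\mcA) = E^+(\mcA')\cup\Pvx$ follows from uniqueness on each part (G~a) and F~a) for $\mcA'$) together with disjointness coming from $c(\Pvx)\subseteq C_1$ while $c(E^+(\mcA'))\subseteq C_1'= C_1\setminus c(\Pvx)$; uniqueness of key-colors on $E_s(\mcA)$ uses that $\hat c(e)\in C_2$ (G~c)) is disjoint from the key-colors appearing in $\mcA'$, which live in $C_2'$; CR~c) follows from G~d) and F~c)[sic, the disjointness with $E_{>\Dbar}$]; and for the valid ordering CR~d) I append $(x,v,e)$ as the last tuple, which requires $\ex(x)$ maximal — guaranteed since $x\in Z_q$ and $X(\mcA')\subseteq Z_{\ex^*(x)}$ — and requires $\hat c(e)\notin c(E^+(\mcA))$, which follows because $\hat c(e)\in C_2$ is disjoint from $C_1 \supseteq c(E^+(\mcA))$. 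Property CR~e) (path-disjointness) follows because $\Pvx$ is disjoint from $E^+(\mcA')$ by the same color-separation. For F~b), F~c) I would invert the set identities from Lemma~\ref{lem:timePD-corr-recurr-A}: $c(E^+(\mcA)) = c(E^+(\mcA'))\cup c(\Pvx)\subseteq C_1'\cup c(\Pvx) = C_1$, and for each tuple I check its key-color lands in $C_2\cup c(E^+(\mcA\setminus\{\cdot\}))$, splitting into the old tuples (using $\mcA'$ satisfies F~c) and translating $C_2'$ back) and the new tuple $(x,v,e)$ (whose key-color lies in $C_2$ by G~c)).

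The remaining conditions F~d) and F~e) are the scheduling-side bookkeeping. F~d) is immediate since $x\in Z_q$ and $X(\mcA')\subseteq Z_{\ex^*(x)}\subseteq Z_q$. For F~e), ex-$q$-compatibility of $X(\mcA)$, I would argue in two ranges: for $p < \ex^*(x)$ we have $X(\mcA)\cap Z_p = X(\mcA')\cap Z_p$ and ex-$\ex^*(x)$-compatibility of $X(\mcA')$ gives $\ell(X(\mcA')\cap Z_p)\ge \Hbar{p}$; for $\ex^*(x)\le p < q$ the whole set $X(\mcA)$ lies in $Z_p$ (since every taxon has extinction index at most $\ex^*(x)\le p$), so $\ell(X(\mcA)\cap Z_p) = \ell(X(\mcA)) = a \ge \Hbar{p}$ by the recurrence side condition. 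Finally $\ell(X(\mcA)) = \ell(X(\mcA')) + \ell(x) = \DP[C_1',C_2',\ex^*(x)] + \ell(x) = a$, completing the step. The main obstacle I anticipate is not any single calculation but the delicate tracking of which color set each key-color and path-color belongs to through the substitution $C_2' = (C_2\cup c(\Pvx))\setminus\{\hat c(e)\}$; in particular verifying CR~b) (unique key-colors) and F~c) for the inherited tuples requires carefully using that $E_s(\mcA)$ has unique key-colors so that $\hat c(e)$ differs from every earlier sibling key-color, mirroring the argument in Lemma~\ref{lem:timePD-corr-recurr-A} but now needing the \emph{existence} rather than the \emph{necessity} of these separations.
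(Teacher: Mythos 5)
Your proposal is correct and follows essentially the same route as the paper's proof: the same dichotomy between the $q$-grounding base case and the recurrence case, the same construction $\mcA := \mcA' \cup \{(x,v,e)\}$ from the induction hypothesis applied to $\DP[C_1',C_2',\ex^*(x)]$ (with your explicit observation that $C_1' \subsetneq C_1$, which the paper leaves implicit), and the same verification of F~a)--F~e) via CR~a)--CR~e) using $c(E^+(\mcA'))\subseteq C_1'$, $\hat c(e)\in C_2$ with $C_1\cap C_2=\emptyset$, and the recurrence's side condition for ex-$q$-compatibility. The one imprecision is in CR~b): the inherited key-colors of $E_s(\mcA')$ lie in $C_2'\cup c(E^+(\mcA'))\subseteq C_1'\cup C_2'$ rather than in $C_2'$ alone, but since $\hat c(e)\notin C_1'\cup C_2'$ the separation you flag in your closing paragraph goes through exactly as in the paper.
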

\begin{proof}
	Assume that $\DP[C_1,C_2,q] = 0$.
	Then, $C_1$ and $C_2$ must be $q$-grounding, as otherwise the algorithm would apply Recurrence (\ref{eqn:split-Dbar}), and then~$\DP[C_1,C_2,q]$ would store either $-\infty$ or a value which is at least $\ell(x) > 0$ for some $x\in X$.
	Lemma~\ref{lem:timePD-basic-case} shows the correctness of this case.
	
	Now, assume that $\DP[C_1,C_2,q] = a > 0$.
	By Recurrence (\ref{eqn:split-Dbar}), there is a $(C_1,C_2)$-good tuple $(x,v,e)$ such that $x\in Z_q$ and $\DP[C_1,C_2,q] = \DP' + \ell(x)$.
	Here, we define $\DP' := \DP[C_1',C_2',q']$ where (as in Recurrence (\ref{eqn:split-Dbar})) $C_1' := C_1\setminus c(\Pvx)$, $C_2' := (C_2\cup c(\Pvx)) \setminus \{\hat c(e)\}$, and $q' := \ex^*(x)$.
	Further, we know that $\DP' + \ell(x) \ge \max\{\Hbar{q'},\Hbar{q'+1},\dots,\Hbar{q-1}\}$.
	We conclude by the induction hypothesis that there is an ex-$(C_1',C_2',q')$-feasible anchored taxa set~$\mcA'$ such that $\ell(X(\mcA')) = \DP'$.
	Define $\mcA := \mcA' \cup \{(x,v,e)\}$.
	Clearly, $a = \DP' + \ell(x) = \ell(X(\mcA))$.
	It remains to show that $\mcA$ is an ex-$(C_1,C_2,q)$-feasible set.
	
	\medskip
	\proofpart{\hyperref[it:Fa]{F a) [$\mcA$ is color-respectful]}}
	We show that $\mcA$ is color-respectful.
	
	\proofpart{\hyperref[it:Ca]{CR a) [$E^+(\mcA)$ has unique colors]}}
	As $\mcA'$ satisfies CR a) and F b), $E^+(\mcA')$ has unique colors and $c(E^+(\mcA')) \subseteq C_1' = C_1\setminus c(\Pvx)$.
	Consequently, the colors of $E^+(\mcA')$ and $\Pvx$ are disjoint.
	We conclude with the $(C_1,C_2)$-goodness of $(x,v,e)$, that \Pvx has unique colors and so also $E^+(\mcA)$.
	\proofpart{\hyperref[it:Cb]{CR b) [$E_s(\mcA)$ has unique key-colors]}}
	As $\mcA'$ satisfies CR b), $E_s(\mcA')$ has unique key-colors.
	Observe $E_s(\mcA) = E_s(\mcA') \cup \{e\}$.
	Because $\mcA'$ satisfies F c), $\hat c(E_s(\mcA')) \subseteq C_2' \cup c(E^+(\mcA')) \subseteq C_1' \cup C_2'$.
	Further, because $\hat c(e) \in C_2$ and $C_1\cap C_2 = \emptyset$ we have  $\hat c(e) \notin C_1$, which implies
	$\hat c(e) \not\in C_1\cup (C_2 \setminus \{\hat c(e)\}) = C_1'\cup C_2'$. 
	Therefore, $E_s(\mcA)$ has unique key-colors.
	\proofpart{\hyperref[it:Cc]{CR c) [$E^+(\mcA)$ and $E_{> \Dbar}$ are disjoint]}}
	$E^+(\mcA')$ and $E_{>\Dbar}$ are disjoint because $\mcA'$ is color-respectful.
	Further, because $(x,v,e)$ is $(C_1,C_2)$-good, $\Pvx$ and $E_{>\Dbar}$ are disjoint and so $E^+(\mcA') \subseteq E_{\le \Dbar}$.
	\proofpart{\hyperref[it:Cd]{CR d) [$\mcA$ has a valid ordering]}}
	$\mcA'$ has a valid ordering $(y_1,u_1,e_1),\dots,(y_{|\mcA'|},u_{|\mcA'|},e_{|\mcA'|})$.
	As $X(\mcA')\subseteq Z_{\ex^*(x)}$ we conclude $\ex(y) \le \ex(x)$ for each $y\in A'$.
	Further, $\hat c(e) \in C_2$ and $c(\Pvx) \subseteq C_1$ because $(x,v,e)$ is $(C_1,C_2)$-good.
	Since $\mcA'$ satisfies F b), $c(E^+(A')) \subseteq C_1' \subseteq C_1$.
	Because $C_1$ and $C_2$ are disjoint, we conclude that $\hat c(e) \not \in c(E^+(\mcA))$.
	We conclude that\lb $(y_1,u_1,e_1),\dots,(y_{|\mcA'|},u_{|\mcA'|},e_{|\mcA'|}),(x,v,e)$ is a valid ordering of $\mcA$.
	\proofpart{\hyperref[it:Ce]{CR e) [$\Pvx$ and $P_{u,y}$ are disjoint for any $(x,v,e),(y,u,e') \in \mcA$]}}
	We know that $\mcA'$ satisfies CR e).
	Further, as already seen, $c(E^+(\mcA'))$ and $c(\Pvx)$ are disjoint and therefore also \Pvx and $P_{u,y}$ are disjoint for each $(y,u,e') \in \mcA'$.\linebreak 
	It follows that $\mcA$ is color-respectful.
	
	\medskip
	\proofpart{\hyperref[it:Fb]{F b) [$c(E^+(\mcA)) \subseteq C_1$]}}
	$c(E^+(\mcA)) = c(E^+(\mcA')) \cup c(\Pvx) \subseteq C_1' \cup c(\Pvx) = C_1$.
	\proofpart{\hyperref[it:Fc]{F c) [$\hat c(e) \in C_2 \cup c(E^+(\mcA \setminus \{(x,v,e)\}))$ for each $(x,v,e)\in \mcA$]}}
	Observe $\hat c(e) \in C_2$ because $(x,v,e)$ is $(C_1,C_2)$-good.
	Each $(y,u,e') \in \mcA'$ satisfies $\hat c(e') \subseteq C_2' \cup c(E^+(\mcA'\setminus \{(y,u,e')\}))$, as $\mcA'$ satisfies F c).
	Observe $x\ne y$\lb and $C_2' \subseteq C_2 \cup c(\Pvx)$ and therefore $\hat c(e') \subseteq C_2 \cup c(E^+(\mcA\setminus \{(y,u,e')\}))$.
	\proofpart{\hyperref[it:Fd]{F d) [$X(\mcA) \subseteq Z_q$]}}
	$X(\mcA)\subseteq Z_q$ because $X(\mcA')\subseteq Z_q$ and $x\in Z_q$.
	\proofpart{\hyperref[it:Fe]{F e) [$X(\mcA)$ is ex-$q$-compatible]}}
	Because $\mcA'$ is ex-$q'$-compatible, we conclude that $\ell(X(\mcA)\cap Z_p) = \ell(X(\mcA')\cap Z_p) \ge \Hbar{p}$ for each $p\in [q'-1]$.
	Then, with $\DP' + \ell(x) \ge \max\{\Hbar{q'},\Hbar{q'+1},\dots,\Hbar{q-1}\}$ we conclude that $\ell(X(\mcA)\cap Z_p) = \ell(X(\mcA)) \ge \Hbar{p}$ for each $p\in \{q',\dots,q-1\}$.
	Therefore, $X(\mcA)$ is ex-$q$-compatible and ex-$(C_1,C_2,q)$-feasible.
\end{proof}

\begin{lemma}
	An instance  \Instance of \cBartPDs\Dbar is a \yes-instance if and only if disjoint sets of colors $C_1,C_2 \subseteq [2\Dbar]$ with $|C_1| \leq \Dbar$ and an  ex-$(C_1,C_2,\var_{\ex})$-feasible anchored taxa set $\mcA$ with $\ell(X(\mcA)) \ge \Hbar{\var_{\ex}}$ exist.
\end{lemma}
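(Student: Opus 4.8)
The plan is to prove an equivalence between \yes-instances of \cBartPDs\Dbar and the existence of a suitable ex-$(C_1,C_2,\var_{\ex})$-feasible anchored taxa set, bridging the formal problem definition with the dynamic-programming invariant that the preceding lemmas establish. I would prove the two directions separately, and in each direction the work is largely a matter of unpacking definitions rather than new ideas.

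For the forward direction, suppose \Instance is a \yes-instance. Then by definition there is a color-respectful anchored taxa set $\mcA$ with $|c(E^+(\mcA))| \le \Dbar$ and a valid $T$-schedule saving $X \setminus X(\mcA)$. I would set $C_1 := c(E^+(\mcA))$ and choose $C_2$ to be some set of colors disjoint from $C_1$ large enough to witness property F~c), for instance $C_2 := \hat c(E_s(\mcA)) \setminus C_1$. The key checks are then: $|C_1| \le \Dbar$ (immediate from the hypothesis); $\mcA$ is ex-$(C_1,C_2,\var_{\ex})$-feasible, where F~a) is given, F~b) holds with equality by construction, F~c) holds since every $\hat c(e)$ for $(x,v,e)\in\mcA$ either lies in $c(E^+(\mcA\setminus\{(x,v,e)\}))$ or, being a key-color of an edge in $E_s(\mcA)$, lies in $C_2$, and F~d) holds trivially as $Z_{\var_{\ex}} = X$; and finally that $X(\mcA)$ is ex-$\var_{\ex}$-compatible with $\ell(X(\mcA)) \ge \Hbar{\var_{\ex}}$. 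This last point is exactly where the existence of the valid $T$-schedule saving $X\setminus X(\mcA)$ is used: by Lemma~\ref{lem:timePD-scheduleCondition} such a schedule forces $\ell((X\setminus X(\mcA))\cap Z_p) \le H_p$, hence $\ell(X(\mcA)\cap Z_p) \ge \ell(Z_p) - H_p = \Hbar{p}$ for every $p\in[\var_{\ex}]$, which gives both ex-$\var_{\ex}$-compatibility and the bound $\ell(X(\mcA)) = \ell(X(\mcA)\cap Z_{\var_{\ex}}) \ge \Hbar{\var_{\ex}}$.

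For the converse, suppose such $C_1$, $C_2$, and $\mcA$ exist. I must produce a solution in the sense of the problem definition, namely a color-respectful $\mcA$ with $|c(E^+(\mcA))|\le\Dbar$ admitting a valid $T$-schedule saving $X\setminus X(\mcA)$. Color-respectfulness is F~a). The bound $|c(E^+(\mcA))| \le |C_1| \le \Dbar$ follows from F~b). The remaining task is to build the schedule, and here I would again invoke Lemma~\ref{lem:timePD-scheduleCondition}: it suffices to verify $\ell((X\setminus X(\mcA))\cap Z_p) \le H_p$ for each $p$, equivalently $\ell(X(\mcA)\cap Z_p) \ge \Hbar{p}$. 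For $p < \var_{\ex}$ this is exactly ex-$\var_{\ex}$-compatibility (F~e)), and for $p = \var_{\ex}$ it is the hypothesis $\ell(X(\mcA)) = \ell(X(\mcA)\cap Z_{\var_{\ex}}) \ge \Hbar{\var_{\ex}}$. Thus all $\var_{\ex}$ inequalities hold and Lemma~\ref{lem:timePD-scheduleCondition} yields the desired valid $T$-schedule, so \Instance is a \yes-instance.

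The main obstacle, and the step warranting the most care, is the translation between the schedule-existence condition of the problem statement and the arithmetic compatibility conditions F~e) together with the boundary bound at $p=\var_{\ex}$. Everything hinges on matching the quantifier ``for each $p\in[q-1]$'' in ex-$q$-compatibility (which excludes $p=q$) against the full range ``for each $p\in[\var_{\ex}]$'' required by Lemma~\ref{lem:timePD-scheduleCondition}; the separate hypothesis $\ell(X(\mcA))\ge\Hbar{\var_{\ex}}$ is precisely what plugs the gap at the top index, so I would be explicit that ex-$\var_{\ex}$-compatibility covers $p\in[\var_{\ex}-1]$ and the extra bound covers $p=\var_{\ex}$. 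A secondary subtlety is confirming in the forward direction that $C_2$ can be chosen disjoint from $C_1$ while still witnessing F~c); this is routine since the key-colors of $E_s(\mcA)$ that are not already in $c(E^+(\mcA)) = C_1$ may be freely placed in $C_2$.
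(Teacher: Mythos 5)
Your proposal is correct and follows essentially the same route as the paper's proof: the same witnesses $C_1 := c(E^+(\mcA))$ and $C_2 := \hat c(E_s(\mcA)) \setminus C_1$ in the forward direction, the same use of Lemma~\ref{lem:timePD-scheduleCondition} to translate between schedule existence and the inequalities $\ell(X(\mcA)\cap Z_p) \ge \Hbar{p}$, and the same splitting of the index range between F~e) for $p < \var_{\ex}$ and the extra hypothesis at $p = \var_{\ex}$ (where you are in fact slightly more explicit than the paper). The only detail to flesh out when writing it up is the first branch of your F~c) dichotomy: that $\hat c(e) \in c(E^+(\mcA))$ implies $\hat c(e) \in c(E^+(\mcA\setminus\{(x,v,e)\}))$ uses the valid-ordering fact $\hat c(e) \notin c(\Pvx)$, exactly as the paper argues.
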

\begin{proof}
	Suppose first that \Instance  is a \yes-instance of \cBartPDs\Dbar. That is, there exists a color-respectful anchored taxa set $\mcA$ with $|c(E^+(\mcA))| \leq \Dbar$ such that there is a valid $T$-schedule saving $S := X\setminus X(\mcA)$.
	Let $C_1 := c(E^+(\mcA))$
	and let $C_2$ be the key-colors of the edges $E_s(\mcA)$ that are not in $C_1$.
	We show that $\mcA$ is ex-$(C_1,C_2,\var_{\ex})$-feasible.
	First, $\mcA$ satisfies F a), F b), and F d) by definition.
	\proofpart{\hyperref[it:Fc]{F c) [$\hat c(e) \in C_2 \cup c(E^+(\mcA \setminus \{(x,v,e)\}))$ for each $(x,v,e)\in \mcA$]}}
	By the definition, $\hat c( E(\mcA)) \subseteq C_2 \cup C_1 = C_2 \cup c(E^+(\mcA))$.
	Fix a tuple $(x,v,e) \in \mcA$.
	Because $\mcA$ has a valid ordering, we conclude that $\hat c(e)\not\in c(\Pvx)$.
	It follows that if~$\hat c(e)$ is in $c(E^+(\mcA))$, then explicitly $\hat c(e) \in c(E^+(\mcA\setminus \{(x,v,e)\}))$.
	Therefore, $\mcA$ satisfies~F~c).
	\proofpart{\hyperref[it:Fe]{F e) [$X(\mcA)$ is ex-$q$-compatible]}}
	Fix some $q\in [\var_{\ex}]$.
	Since $S$ has a valid schedule, we conclude $\ell(S\cap Z_q) \le H_q$ for each $q\in [\var_{\ex}]$.
	Consequently, $\ell(X(\mcA)\cap Z_q) = \ell(Z_q) - \ell(S\cap Z_q) \ge \Hbar{q}$.
	Likewise with $\ell(S) \le H_{\var_{\ex}}$, we conclude that $\ell(X(\mcA)) \ge \Hbar{\var_{\ex}}$.
	
	\medskip
	For the converse, suppose that $\mcA$ is an ex-$(C_1,C_2,\var_{\ex})$-feasible anchored taxa set with $\ell(X(\mcA)) \ge \Hbar{\var_{\ex}}$ for disjoint sets of colors $C_1,C_2 \subseteq [2\Dbar]$ with $|C_1|\leq \Dbar$.\lb
	Observe that the rescue time of all taxa in $S=X\setminus X(\mcA)$ that are in $Z_q$\lb is $\ell(S \cap Z_q) = \ell((X\setminus X(\mcA)) \cap Z_q) = \ell(Z_q) - \ell(X(\mcA)\cap Z_q)$.
	Because $\mcA$ is\lb ex-$\var_{\ex}$-compatible, $\ell(X(\mcA)\cap Z_q) \ge \Hbar{q}$ and so
	$\ell(S \cap Z_q) \le \ell(Z_q) - \Hbar{q} = H_q$. 
	Then, by Lemma~\ref{lem:timePD-scheduleCondition}, there is a valid $T$-schedule saving $S$.
	By definition, $\mcA$ is color-respectful
	and $|c(E^+(\mcA))| \leq |C_1| \leq \Dbar$.
	Therefore, \Instance  is a \yes-instance of \cBartPDs\Dbar
\end{proof}

\begin{lemma}
	\label{lem:timePD-rt-DBar}
	Algorithm~$(\Dbar)$ computes solutions for instances of \cBartPDs\Dbar in time $\Oh^*(9^\Dbar \cdot \Dbar)$.
\end{lemma}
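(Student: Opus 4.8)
The plan is to bound the running time of Algorithm~$(\Dbar)$ by separately counting the number of table entries and the cost of evaluating Recurrence~(\ref{eqn:split-Dbar}) for a single entry. The table~$\DP$ is indexed by a triple $(C_1,C_2,q)$ where $C_1$ and $C_2$ are \emph{disjoint} subsets of $[2\Dbar]$ with $|C_1| \le \Dbar$, and $q \in [\var_{\ex}]$. The key combinatorial observation is that counting pairs of disjoint subsets of a ground set of size $2\Dbar$ is equivalent to assigning each of the $2\Dbar$ colors to one of three states (``in $C_1$'', ``in $C_2$'', or ``in neither''), which gives exactly $3^{2\Dbar} = 9^{\Dbar}$ ordered pairs $(C_1,C_2)$. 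Hence the number of table entries is $\Oh(9^{\Dbar} \cdot \var_{\ex})$, and since $\var_{\ex} \le n \le \poly(|\Instance|)$, the entry count is $\Oh^*(9^{\Dbar})$.

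Next I would analyze the cost of computing a single entry $\DP[C_1,C_2,q]$ via Recurrence~(\ref{eqn:split-Dbar}). The maximum there ranges over $(C_1,C_2)$-good tuples $(x,v,e)$ with $x \in Z_q$. The number of such tuples is polynomial in $|\Instance|$: there are at most $n$ choices for $x$, at most $n$ choices for the strict ancestor $v$, and at most $n$ choices for the outgoing edge $e$ of $v$, so $\Oh(n^3)$ tuples in total. For each candidate tuple I would check the $(C_1,C_2)$-goodness conditions G~a)--G~d), compute the derived color sets $C_1' := C_1 \setminus c(\Pvx)$ and $C_2' := (C_2 \cup c(\Pvx)) \setminus \{\hat c(e)\}$, and verify the side condition $\DP[C_1',C_2',\ex^*(x)] + \ell(x) \ge \max\{\Hbar{\ex^*(x)},\dots,\Hbar{q-1}\}$; each of these operations involves manipulating color sets of size $\Oh(\Dbar)$ and reading one previously computed table entry, so all of this takes $\poly(|\Instance|)$ time per tuple. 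I would also note that the entries can be computed in the required order (with $\DP[C_1',\cdot,\cdot]$ available before $\DP[C_1,\cdot,\cdot]$ whenever $C_1' \subsetneq C_1$) because $C_1' \subseteq C_1$ always holds and the recurrence only references entries with a strictly smaller first coordinate or identical $C_1$ at a smaller time index, matching the stated computation order.

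Finally I would verify that the base cases (testing whether $C_1,C_2$ are $q$-grounding, which amounts to checking that no tuple in $\mathcal X_{(q)}$ satisfies $c(\Pvx) \subseteq C_1$ and $\hat c(e) \in C_2$) and the final acceptance test (scanning all $(C_1,C_2)$ with $|C_1| \le \Dbar$ and comparing $\DP[C_1,C_2,\var_{\ex}]$ against $\Hbar{\var_{\ex}}$) also fit within the same budget, each adding at most a $\poly(|\Instance|)$ factor and no extra exponential factor. Multiplying the $\Oh^*(9^{\Dbar})$ entry count by the $\poly(|\Instance|)$ per-entry cost, and absorbing all polynomial factors into the $\Oh^*$ notation, yields the claimed bound $\Oh^*(9^{\Dbar} \cdot \Dbar)$, where I retain the single explicit $\Dbar$ factor to account for the color-set operations that the paper chooses to expose outside the $\Oh^*$.

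The routine parts here are the polynomial per-entry bookkeeping; these I would state briefly rather than belabor. The one step deserving genuine care is the disjoint-pair count: the main obstacle (though a mild one) is making explicit and rigorous that summing $\sum_{C_1} 2^{2\Dbar - |C_1|}$ over admissible $C_1$ equals $3^{2\Dbar}$, i.e.\ that the three-way assignment bijection is correct and that the constraint $|C_1| \le \Dbar$ only decreases this count so the $9^{\Dbar}$ upper bound is safe. Everything else follows by assembling the entry count with the per-entry work.
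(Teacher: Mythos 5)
Your proposal is correct and follows essentially the same route as the paper's proof: counting the $3^{2\Dbar}\cdot \var_{\ex} = 9^{\Dbar}\cdot\var_{\ex}$ table entries via the three-state assignment of colors to disjoint pairs $(C_1,C_2)$, and bounding the per-entry work (goodness tests over the $\poly(n)$ candidate tuples, the $q$-grounding base-case test, and the final scan) polynomially, with the explicit $\Dbar$ factor arising from color-set operations such as testing $c(\Pvx)\subseteq C_1$. Your version is somewhat more detailed than the paper's (which simply states the entry count, an $\Oh(\Dbar\cdot n^3)$ bound for the grounding test, and an $\Oh(\Dbar\cdot n)$ bound per recurrence tuple), but there is no substantive difference in approach.
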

\begin{proof}
	Since $\DP[C_1,C_2,q]$ is computed for $C_1$ and $C_2$ disjoint subsets of $[2\Dbar]$,
	the table $\DP$ contains $3^{2\Dbar} \cdot \var_{\ex}$ entries.
	
	For given $q\in [\var_{\ex}]$ and color sets~$C_1$ and~$C_2$, we can compute whether~$C_1$ and~$C_2$ are $q$-grounding by testing the conditions for each tuple $(x,v,e)$.
	Therefore, the test can be done in $\Oh(\Dbar \cdot n^3)$ time.
	Here, the $\Dbar$-factor in the running time comes from testing whether $\Pvx \subseteq C_1$.
	
	Analogously, for a given $q\in [\var_{\ex}]$ and sets of colors $C_1$ and $C_2$, we can compute whether a tuple $(x,v,e)$ satisfies the conditions of Recurrence~(\ref{eqn:split-Dbar}) in 
	$\Oh(\Dbar \cdot n)$ time in our RAM model.
	We, however, want to note that we add two numbers of size~$\max\{\Hbar{1}, \Hbar{2}, \dots, \Hbar{\var_{\ex}}\}$.
\end{proof}

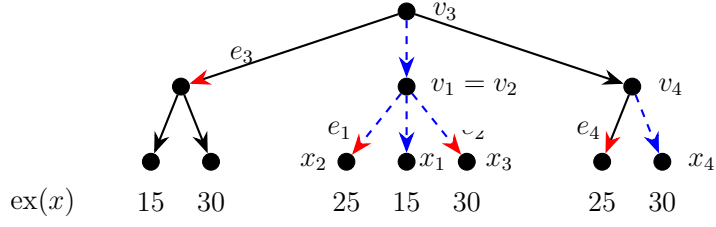
\begin{figure}[t]
	\centering
	\begin{tikzpicture}[scale=1,every node/.style={scale=0.9}]
		\tikzstyle{txt}=[circle,fill=white,draw=white,inner sep=0pt]
		\tikzstyle{nde}=[circle,fill=black,draw=black,inner sep=2.5pt]
		
		\node[nde] (v0) at (5,10) {};
		\node[nde] (v1) at (2,9) {};
		\node[nde] (v2) at (5,9) {};
		\node[nde] (v3) at (8,9) {};
		\node[nde] (v4) at (1.6,8) {};
		\node[nde] (v5) at (2.4,8) {};
		\node[nde] (v6) at (4.2,8) {};
		\node[nde] (v7) at (5,8) {};
		\node[nde] (v7') at (5.8,8) {};
		\node[nde] (v8) at (7.6,8) {};
		\node[nde] (v9) at (8.4,8) {};
		
		\node[txt,xshift=9mm,yshift=-10mm] (c1) [above=of v1] {$e_3$};
		
		\node[txt,xshift=-1mm,yshift=-10mm] (c6) [above=of v6] {$e_1$};
		\node[txt,xshift=1mm,yshift=-10mm] (c8) [above=of v7'] {$e_2$};
		\node[txt,xshift=-2mm,yshift=-10mm] (c8) [above=of v8] {$e_4$};
		
		\node[txt,yshift=9mm] (r4) [below=of v4] {$15$};
		\node[txt,yshift=9mm] (r5) [below=of v5] {$30$};
		\node[txt,yshift=9mm] (r6) [below=of v6] {$25$};
		\node[txt,yshift=9mm] (r7) [below=of v7] {$15$};
		\node[txt,yshift=9mm] (r7) [below=of v7'] {$30$};
		\node[txt,yshift=9mm] (r8) [below=of v8] {$25$};
		\node[txt,yshift=9mm] (r9) [below=of v9] {$30$};
		\node[txt,xshift=3mm] (r'4) [left=of r4] {$\ex(x)$};
		
		\node[txt,xshift=-9mm] (t0) [right=of v0] {$v_3$};
		\node[txt,xshift=9mm] (t1) [left=of v1] {};
		\node[txt,xshift=-9mm] (t2) [right=of v2] {$v_1=v_2$};
		\node[txt,xshift=-9mm] (t3) [right=of v3] {$v_4$};
		
		\node[txt,xshift=10mm] (t6) [left=of v6] {$x_2$};
		\node[txt,xshift=-11mm] (t7) [right=of v7] {$x_1$};
		\node[txt,xshift=-10mm] (t7) [right=of v7'] {$x_3$};
		\node[txt,xshift=-9mm] (t9) [right=of v9] {$x_4$};
		\textbf{}
		\draw[thick,arrows = {-Stealth[red,length=8pt]}] (v0) to (v1);
		\draw[thick,dashed,blue,arrows = {-Stealth[length=8pt]}] (v0) to (v2);
		\draw[thick,arrows = {-Stealth[length=8pt]}] (v0) to (v3);
		\draw[thick,arrows = {-Stealth[length=8pt]}] (v1) to (v4);
		\draw[thick,arrows = {-Stealth[length=8pt]}] (v1) to (v5);
		\draw[thick,dashed,blue,arrows = {-Stealth[red,length=8pt]}] (v2) to (v6);
		\draw[thick,dashed,blue,arrows = {-Stealth[length=8pt]}] (v2) to (v7);
		\draw[thick,dashed,blue,arrows = {-Stealth[red,length=8pt]}] (v2) to (v7');
		\draw[thick,arrows = {-Stealth[red,length=8pt]}] (v3) to (v8);
		\draw[thick,dashed,blue,arrows = {-Stealth[length=8pt]}] (v3) to (v9);
	\end{tikzpicture}
	\caption{
	Example of the construction of an anchored taxa set $\mcA$ (Theorem~\ref{thm:timePD-DBar}).
	Here given $X\setminus S = \{x_1,x_2,x_3,x_4\}$, the constructed $\mathcal A$ is $\{ (x_i,v_i,e_i) \mid i \in [4] \}$.
	Blue dashed edges are in $E_d(X\setminus S) = E^+(\mcA)$ and edges with red arrow are in $E_s(\mcA)$.
	$\mcA_1$ could have been~$\{ (x_1,v_1,e_2) \}$ provoking that $(x_1,v_1,e_2)$ must have been replaced in step $i=2$.
	}
	\label{fig:timePD-anchored-taxa-set}
\end{figure}%

\begin{theorem}
	\label{thm:timePD-DBar}
	\tPDs can be solved in
	$\Oh^*(2^{6.056 \cdot \Dbar + o(\Dbar)})$~time.
\end{theorem}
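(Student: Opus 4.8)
The plan is to mirror the color-coding reduction used for Theorem~\ref{thm:timePD-D}, but now reducing \tPDs to its colored variant \cBartPDs\Dbar, which by Lemma~\ref{lem:timePD-rt-DBar} can be solved in $\Oh^*(9^{\Dbar}\cdot \Dbar)$ time. By Lemma~\ref{lem:timePD-coloredYes}, any coloring for which the colored instance is a \yes-instance certifies that the underlying uncolored instance is a \yes-instance of \tPDs, so it suffices to build a perfect hash family $\mathcal{H}$ and derive from it a family $\mathcal{C}$ of extinction-$\Dbar$-colorings of $\Tree$ such that, whenever a solution $S$ exists, at least one coloring in $\mathcal{C}$ turns $S$ into a solution of the colored problem. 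Since the colored problem needs at most $\Dbar$ colors to mark the lost edges $E_d(X\setminus S)$ (which have total weight at most $\Dbar$) together with at most $\Dbar$ further \emph{key}-colors for the sibling edges that witness the topmost lost edges, the natural palette has size $2\Dbar$, and the right tool is an $(N,2\Dbar)$-perfect hash family into $[2\Dbar]$.

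Concretely, I would introduce a set of $N$ \emph{positions}: for every light edge $e\in E_{\le\Dbar}$ a block of $\w(e)$ positions (one distinguished as the key-position) and for every heavy edge a single key-position, so that $N = \Oh(n\Dbar)$. A function $f\colon[N]\to[2\Dbar]$ from $\mathcal{H}$ then induces a coloring by reading the key-colors and color-sets off the blocks; to obtain a \emph{valid} extinction-$\Dbar$-coloring I would repair any block on which $f$ is not injective by re-assigning its non-key positions to distinct colors while keeping the key-color fixed. Given a solution $S$, I would fix an ordering of $X\setminus S$ by nondecreasing extinction time, which determines $F := E_d(X\setminus S) = E^+(\mcA)$ and, for every topmost edge of $F$ at its top vertex $v_i$, a \emph{coloring-independent} choice of a sibling edge $e_i\notin E_d(\{x_1,\dots,x_i\})$ (preferring one that itself lies in $F$). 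The relevant set $Z$ consists of the at most $\Dbar$ positions of $F$ together with the at most $\Dbar$ key-positions of those chosen $e_i$ that are not in $F$, so $|Z|\le 2\Dbar$ and some $f\in\mathcal{H}$ is injective on $Z$.

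With such an $f$, I would verify that the anchored taxa set $\mcA := \{(x_i,v_i,e_i)\}$ is color-respectful and feasible: $F=E^+(\mcA)$ has unique colors and avoids $E_{>\Dbar}$ (conditions CR\,a, CR\,c), while the ordering by extinction time is valid (CR\,d) because each sibling key-color $\hat c(e_i)$ avoids $c(E_d(\{x_1,\dots,x_i\}))$. This last point splits into two cases: if $e_i\in F$ it holds since $F$ is uniquely colored, and if $e_i\notin F$ it holds since then its key-position lies in $Z$ and is therefore colored differently from every position of $F$. The same injectivity gives unique key-colors on $E_s(\mcA)$ (CR\,b), and disjointness of the paths $\Pvx$ is immediate from the construction (CR\,e). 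Finally $|c(E^+(\mcA))| = \w(F)\le\Dbar$ and the valid $T$-schedule saving $S$ is inherited from $\Instance$, so $\mcA$ solves the colored instance; the converse direction is exactly Lemma~\ref{lem:timePD-coloredYes}.

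For the running time, constructing an $(N,2\Dbar)$-perfect hash family \`a la Naor et al.\ yields $e^{2\Dbar}(2\Dbar)^{\Oh(\log\Dbar)}\log N$ colorings, each solved in $\Oh^*(9^{\Dbar}\cdot\Dbar)$ time, for a total of $e^{2\Dbar}\cdot 9^{\Dbar}\cdot 2^{\Oh(\log^2\Dbar)}\cdot\poly(|\Instance|)$. Since $\log_2(e^2\cdot 9) = 2\log_2 e + \log_2 9 < 6.056$ and $(2\Dbar)^{\Oh(\log\Dbar)} = 2^{o(\Dbar)}$, this simplifies to $\Oh^*(2^{6.056\cdot\Dbar + o(\Dbar)})$, as claimed. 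I expect the main obstacle to be the bookkeeping in the middle step: breaking the apparent circularity (the witnessing sibling edges seem to depend on the coloring, yet the relevant set $Z$ must be fixed before choosing the hash function) and simultaneously ensuring that the block-repair step preserves every color the correctness argument relies on. The numerical choice of $2\Dbar$ colors is what couples the $e^{2\Dbar}$ hashing cost with the $9^{\Dbar}$ solving cost and pins down the constant $6.056$.
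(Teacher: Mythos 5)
Your proposal is correct and follows essentially the same route as the paper's proof: the same reduction to \cBartPDs\Dbar solved by Algorithm~$(\Dbar)$ via Lemma~\ref{lem:timePD-rt-DBar}, the same $(N,2\Dbar)$-perfect hash family over per-edge blocks with distinguished key-positions (the paper's positions $W_0,\dots,W_{m_1}$), the same coloring-independent construction of the anchored taxa set from an extinction-time ordering of $X\setminus S$ with sibling-edge witnesses and a position set $Z$ of size at most $\w(E_1)+|E_2|\le 2\Dbar$, and the same $(9e^2)^{\Dbar}=2^{6.056\Dbar}$ arithmetic. Your only deviation---explicitly repairing blocks on which the hash function fails to be injective so that every produced coloring is a valid extinction-$\Dbar$-coloring---is a harmless and in fact slightly more careful treatment of a detail the paper leaves implicit, since blocks of edges in $E_1$ lie inside $Z$ and are never repaired, so the correctness-relevant colors are unchanged.
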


Just like in the previous section,
the key idea is that we construct a family of colorings $\mathcal{C}$ on the edges of $\Tree$, where each edge $e\in E$ is assigned a key-color and additionally for each $e\in E_{\le \Dbar}$ a subset $c^-(e)$ of $[2\Dbar]$ of size $\w(e)-1$.
Using these, we generate~$|\mathcal{C}|$ instances of \cBartPDs\Dbar, which with Algorithm~$(\Dbar)$ we solve in $\Oh^*(9^\Dbar \cdot \Dbar)$ time.
The colorings are constructed in such a manner that $\Instance$ is a \yes-instance if and only if at least one of the constructed \cBartPDs\Dbar instances is a \yes-instance.
Central for this is the concept of a perfect hash family as defined in Definition~\ref{def:perfectHashFamily}.

\begin{proof}[Proof of Theorem~\ref{thm:timePD-DBar}]
	\proofpara{Reduction}
	Let an instance $\Instance = (\Tree, \ex, \ell, T, D)$ of \tPDs be given.
	Let $|E_{\le \Dbar}| = m_1$ and $|E_{> \Dbar}| = m - m_1$.
	Order the edges $e_1, \dots, e_{m_1}$ of~$E_{\le \Dbar}$
	and the edges $e_{m_1+1}, \dots, e_{m}$ of~$E_{> \Dbar}$, arbitrarily.
	We define integers~$W_0 := m$ and~$W_j := m + \sum_{i=1}^{j} (\w(e_{i}) - 1)$ for each $j\in [m_1]$.
	
	Let $\mathcal{F}$ be a $(W_m, 2\Dbar)$-perfect hash family.
	Now, we define the family~$\mathcal{C}$ of colorings as follows.
	For every $f \in \mathcal{F}$,
	let $\hat c_f$ and $c^-_f$ be colorings such that $\hat c_f(e_j) = f(j)$ for each $e_j \in E(\Tree)$;
	and $c^-_f (e_j) := \{f(W_{j-1}+1), \dots, f(W_j)\}$ for each~$e_j \in E_{\le \Dbar}$.
	
	For each $c_f \in \mathcal{C}$,
	let $\Instance_{f} = (\Tree, \ex, \ell, T, D, \hat c_f, c^-_f)$ be the corresponding instance of \cBartPDs\Dbar.
	Now, solve each instance $\Instance_{f}$ using Algorithm~$(\Dbar)$ and return \yes if $\Instance_{f}$ is a \yes-instance for some $c_f \in \mathcal{C}$.
	Otherwise, return \no.
	
	\proofpara{Correctness}
	If one of the  constructed instances of \cBartPDs\Dbar is a \yes-instance, then $\Instance$ is a \yes-instance of \tPDs, by Lemma~\ref{lem:timePD-coloredYes}.
	
	For the converse, 
	if $S\subseteq X$ is a solution for $\Instance$, then $\w(E') \ge D$ where $E'$ is the set of edges which have at least one offspring in $S$.
	Therefore, $\w(E\setminus E') \le \w(E) - D = \Dbar$.
	Define $E_1 := E\setminus E' = E_d(X\setminus S)$ and let $E_2'$ be the set of edges $e\in E'$ which have a sibling-edge in $E_1$.
	If $e$ and~$e'$ are sibling-edges and both are in $E_2'$, then remove~$e'$ from $E_2'$.
	Continuously repeat the previous step to receive $E_2$ in which any two edges~$e,e' \in E_2$ are not sibling-edges.
	Observe $|E_2| \le |E_1| \le \w(E_1) \le \Dbar$.
	
	Let $Z \subseteq [W_m]$ be the set of colors of $E_1$ and the key-colors of $E_2$. 
	More precisely, $Z:= \{ j\in [m] \mid e_j \in E_1 \cup E_2\} \cup \{W_{j-1} +1, \dots, W_j \mid e_j \in E_1\}$.
	Since $\mathcal{F}$ is a~$(W_m, 2\Dbar)$-perfect hash family and $|Z| = \w(E_1) + |E_2| \leq 2\Dbar$, there exists a function $f \in \mathcal{F}$ such that $f(z) \neq f(z')$ for distinct $z, z' \in Z$.
	It follows that $E_1$ has unique colors, $E_2$ has unique key-colors, and $c(E_1)\cap \hat c(E_2) = \emptyset$ in the constructed instance $\Instance_f$.
	
	Let $C_1  = c(E_1)$ and $C_2 = \hat c(E_2)$.
	We claim that instance $\Instance_f$ has a color-respectful anchored taxa set $\mcA$ with $E^+(\mcA) = E_1$, $E_s(\mcA) \subseteq E_1 \cup E_2$, and $X(\mcA) = X\setminus S$.\linebreak
	Since $|c(E_1)| = |C_1| \leq \Dbar$ and there is a valid $T$-schedule saving $S$, this implies that~$\Instance_f$ is a \yes-instance of \cBartPDs\Dbar.
	
	We define $\mcA := \{(x_i,v_i,e_i) \mid 1 \leq i \leq |X \setminus S|\}$, where $x_i$, $v_i$, and $e_i$ are constructed iteratively as follows.
	Let $x_1,\dots, x_{|X\setminus S|}$ be an ordering of the set of taxa $X\setminus S$ such that $\ex(x_i) \leq \ex(x_j)$ if $i \leq j$.
	Define $E_d^{(i)} := E_d(\{x_1,\dots, x_i\})$ for each  $1 \leq i \leq |X \setminus S|$ and let $E_d^{(0)}$ be empty.
	For each $i \in [|X \setminus S|]$,
	let $e' := v_iw_i$ be the highest edge in~$E_d^{(i)} \setminus E_d^{(i-1)}$.
	This completes the construction of $x_i$ and $v_i$. Construct $e_i$ as follows.
	If $v_j \neq v_i$ for each $j > i$, then let $e_i$ be an arbitrary outgoing edge of $v_i$ not in $E_d^{(i)}$.
	Such an edge exists as $v_iw_i$ is the topmost edge of $E_d^{(i)} \setminus E_d^{(i-1)}$ and therefore not all edges outgoing of $v_i$ are in $E_d^{(i)}$.
	Otherwise, let $j_i$ be the minimum integer such that~$j_i>i$ and $v_{j_i}=v_i$.
	Then, let $e_i :=v_iw_{j_i}$.
	Note that $e_i$ is not in $E_d^{(i)}$ since $e_i$ is the topmost edge of $E_d^{(j_i)} \setminus E_d^{(j_i-1)}$.
	See Figure~\ref{fig:timePD-anchored-taxa-set} for an example.
	
	It remains to show that $\mcA$ is color-respectful.
	$\mcA$ satisfies CR a) as by construction $E^+(\mcA) = \bigcup_{i=1}^{|X \setminus S|} P_{v_i,x_i} = E_d(X\setminus S) = E_1$.
	Similarly, $\mcA$ satisfies CR b) because~$E_s(\mcA) \subseteq E_1 \cup E_2$.
	$\mcA$ satisfies CR c) by the fact that $\w(E_1) \leq \Dbar$.
	$\mcA$ satisfies CR d), as~$(x_1,v_1,e_1),\dots, (x_{|\mcA|},v_{|\mcA|},e_{|\mcA|})$ is a valid ordering of $\mcA$.
	
	To see that $\mcA$ satisfies CR e), consider any $(x_i,v_i,e_i), (x_j,v_j,e_j)$ for $i < j$. 
	As vertex~$v_j$ is an ancestor of $x_j$, every edge in $P_{v_j,x_j}$ has an offspring not in $\{x_1,\dots, x_i\}$ and therefore $P_{v_j,x_j}$ is disjoint from $E_d(x_1,\dots, x_i)$, which contains all edges of $P_{v_i,x_i}$. It follows that $P_{v_i,x_i}$ and $P_{v_j,x_j}$ are disjoint.
	
	\proofpara{Running Time}
	The construction of $\mathcal{C}$ takes $e^{2\Dbar} (2\Dbar)^{\Oh(\log {(2\Dbar)})} \cdot W_m \log W_m$ time, and for each $c \in \mathcal{C}$ the construction of each instance of \cBartPDs\Dbar takes time polynomial in $|\Instance|$. By Lemma~\ref{lem:timePD-rt-DBar}, solving an instance of \cBartPDs\Dbar takes~$\Oh^*(9^{\Dbar}\cdot {2\Dbar})$~time, and $|\mathcal{C}| = e^{2\Dbar} (2\Dbar)^{\Oh(\log {(2\Dbar)})} \cdot \log W_m$ is the number of constructed instances of \cBartPDs\Dbar.
	
	Thus,
	$\Oh^*(e^{2\Dbar} (2\Dbar)^{\Oh(\log {(2\Dbar)})} \log W_m \cdot (W_m + (9^{\Dbar}\cdot {2\Dbar})))$
	is the total running time.
	Because $W_m = \w(E_{\le \Dbar}) + |E_{>\Dbar}| \le 
	|E|\cdot\Dbar \le 2n \cdot \Dbar$,
	the running time further simplifies
	to~$\Oh^*((3e)^{2\Dbar} \cdot 2^{\Oh(\log^2(\Dbar))}) = \Oh^*(2^{6.056 \cdot \Dbar + o(\Dbar)})$.
\end{proof}

\section{Further Parameterized Complexity Results}
\label{sec:timePD-other}

\subsection{Parameterization by the Available Person-hours}
In this subsection, we consider parameterization by the available person-hours~$H_{\max_{\ex}}$.
Observe we may assume that $|T|,\max_{\ex} \le H_{\max_{\ex}} \le |T| \cdot \max_{\ex}$.

\begin{proposition}
	\label{prop:timePD-T+maxr}
	\tPDs and \tPDws are \FPT when parameterized by the available person-hours $H_{\max_{\ex}}$.
	More precisely,
	\begin{propEnum}
		\item\label{prop:timePD-s-T^maxr} \tPDs can be solved in $\Oh((|T| + 1)^{2\cdot\max_{\ex}} \cdot n)$ time,
		\item\label{prop:timePD-s-T*maxr} \tPDs can be solved in $\Oh((H_{\max_{\ex}})^{2\var_{\ex}} \cdot n)$ time, and
		\item\label{prop:timePD-ws-T+maxr} \tPDws can be solved in $\Oh(3^{|T| \cdot \max_{\ex}} \cdot |T| \cdot n)$ time.
	\end{propEnum}
\end{proposition}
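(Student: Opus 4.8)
The plan is to design dynamic programming algorithms over the vertices of the phylogenetic tree, where each table entry records the maximum achievable phylogenetic diversity subject to a bounded amount of ``scheduling state'' that is small whenever $H_{\max_{\ex}}$ is small. The crucial insight is that since $\max_{\ex} \le H_{\max_{\ex}}$ and $|T| \le H_{\max_{\ex}}$, any quantity that can be encoded using $\poly(|T|, \max_{\ex})$ values lies within the parameter regime, so I can afford to track fairly detailed information about how person-hours are consumed.

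For part \ref{prop:timePD-s-T^maxr}, I would use Lemma~\ref{lem:timePD-scheduleCondition}, which reduces the existence of a valid $T$-schedule saving $S$ to the condition $\ell(S \cap Z_j) \le H_j$ for each $j \in [\var_{\ex}]$. The idea is to do a DP over the tree where, for each vertex $v$, I store the best diversity achievable in the subtree $T_v$ as a function of a vector recording, for each of the $\max_{\ex}$ possible extinction times, how many person-hours of the selected offspring have been committed. Since each taxon saved consumes $\ell(x) \le \max_{\ex}$ person-hours before its extinction time, and there are at most $\max_{\ex}$ distinct extinction times, I can encode the relevant commitment at each time step as a value bounded by $|T|$ (the number of teams available in that slot), giving a state space of size $(|T|+1)^{2\max_{\ex}}$. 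The DP combines children via the partial-subtree formulation $T_{v,i}$ as in Theorem~\ref{thm:GNAP-varc+varw}, convolving the commitment vectors of the children while respecting the per-time-slot bounds. For part \ref{prop:timePD-s-T*maxr}, I would instead index the DP directly by the vector $(\ell(S \cap Z_1), \dots, \ell(S \cap Z_{\var_{\ex}}))$ of committed person-hours per class; since $\ell(S\cap Z_j) \le H_j \le H_{\max_{\ex}}$, each coordinate ranges over $H_{\max_{\ex}}+1$ values and there are $\var_{\ex}$ coordinates, yielding $(H_{\max_{\ex}})^{\var_{\ex}}$ states, with the extra factor from combining subtrees.

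For part \ref{prop:timePD-ws-T+maxr}, the strictness constraint means I must invoke Lemma~\ref{lem:timePD-strictScheduleCondition}: a strict valid schedule saving $A$ exists exactly when $A$ partitions into sets $A_1, \dots, A_{|T|}$ each savable by a single team. Here I would track, for each team $t_i$ and each time slot, how many person-hours that team has committed. Since each team is described by its start/end times over $\max_{\ex}$ slots and we record an occupancy state, the natural encoding assigns to each of the $|T| \cdot \max_{\ex}$ team-timeslot pairs one of a constant number of states (e.g.\ free, committed to the current taxon, committed to a previous taxon), giving the $3^{|T| \cdot \max_{\ex}}$ factor. The DP over the tree then selects, for each saved taxon at a leaf, a single team and a contiguous block of that team's timeslots (using the observation that we may assume $f^{-1}(x)$ is a contiguous block $\{(i,j+1),\dots,(i,j+\ell(x))\}$), updating the global occupancy state accordingly.

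The main obstacle I anticipate is correctly arguing the combination step at join/internal vertices while keeping the state space within the stated bounds, particularly for the strict case. When merging two children's partial solutions, I must ensure that the per-team, per-timeslot commitments are consistent and non-overlapping, which is a convolution over the occupancy states rather than a simple addition; I expect to rely on the contiguity normalization and on a careful definition of which commitments are ``local'' to a taxon versus ``inherited'' from below, mirroring the $E^+(\mcA)$ versus $E_s(\mcA)$ bookkeeping used in Section~\ref{sec:timePD-Dbar}. The correctness proofs would then follow the now-standard induction over the tree (base case at leaves, inductive step via the partial-subtree recurrence), and the running-time bounds follow by multiplying the number of states by the per-entry combination cost and the number of vertices.
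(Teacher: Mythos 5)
Your proposal matches the paper's proofs in all three parts: a dynamic program over the phylogenetic tree whose state is, respectively, a $\max_{\ex}$-dimensional vector of per-timeslot team counts in $[|T|]_0$, a $\var_{\ex}$-dimensional vector of per-class person-hour budgets bounded by $H_{\max_{\ex}}$, and a tuple of per-timeslot team subsets $A_j \subseteq T$, combined at internal vertices by splitting these resource vectors disjointly among children (which is sound by Lemma~\ref{lem:timePD-scheduleCondition} for \tPDs and by Lemma~\ref{lem:timePD-strictScheduleCondition} plus the contiguous-block normalization for \tPDws), exactly as in the paper. The only slips are cosmetic --- the state space in part~(a) is $(|T|+1)^{\max_{\ex}}$ with the second factor coming from the child-combination step rather than $(|T|+1)^{2\max_{\ex}}$ states, and the $E^+(\mcA)$/$E_s(\mcA)$-style bookkeeping you hedge with is unnecessary since disjoint splitting already handles the merge --- so there is no genuine gap.
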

\begin{proof}[Proof of Proposition~\ref{prop:timePD-T+maxr}\ref{prop:timePD-s-T^maxr}]
	\proofpara{Table definition}
	Let \Instance be an instance of \tPDs and
	let~$\myvec{a} = (a_1,\dots,a_{\max_{\ex}})$ be a $\max_{\ex}$-dimensional vector with $a_i \in [|T|]_0$.
	
	We define a dynamic programming algorithm with a table $\DP$.
	In entry $\DP[v,\myvec{a},b]$, for vertex~$v$ and integer $b\in \{0,1\}$, we store 0 if $b=0$;
	if $b = 1$, then we store the maximum value 
	of $\PDsub{\Tree_v}(S)$ for a non-empty subset~$S \subseteq \off(v)$
	that can be saved when having $a_j$ available teams in timeslot $j\in [\max_{\ex}]$.
	That is, if for every $j \in [\var_{\ex}]$ we have~$\sum_{x \in S \cap Z_j} \ell(x) \leq \sum_{i=1}^{\ex_j} a_i$.
	If there is no such non-empty $S$, we store $-\infty$.
	We define an auxiliary table $\DP'$ in which in entry $\DP'[v,i,\myvec{a},b]$ we only consider non-empty sets~$S \subseteq \off(u_1) \cup \dots \cup \off(u_i)$ where~$u_1, \dots, u_z$ are the children of~$v$.
	
	\proofpara{Algorithm}
	As a base case, for each taxon $x$
	we store $\DP[x,\myvec{a},b] = 0$ if $\sum_{i=1}^{\ex(x)} a_i \ge \ell(x)$ or if~$b=0$.
	Otherwise, we store $\DP[x,\myvec{a},1] = -\infty$.
	
	Let $v$ be an internal vertex with children $u_1,\dots,u_z$.
	
	We define $\DP'[v,\myvec{a},b,1] := \DP[u_1,\myvec{a},b] + \w(v u_1) \cdot b$
	and we compute further values with the recurrence
	\begin{equation}
		\label{eqn:s-T^maxr}
		\DP'[v,i+1,\myvec{a},b] =
		\max_{\myvec{a'}, b_1,b_2}
		\DP'[v,i,\myvec{a'},b_1] + \DP[u_{i+1},\myvec{a}-\myvec{a'},b_2] + 
		\w(v u_{i+1}) \cdot b_2.
	\end{equation}
	Here, we select $\myvec{a'}$, $b_1$, and~$b_2$ such that $a_j' \in [a_j]_0$ for each $j\in [\max_{\ex}]$ and $b_1,b_2 \in [b]_0$.
	We finally set $\DP[v,\myvec{a},b] := \DP'[v,z,\myvec{a},b]$.

	We return \yes if $\DP[\rho,\myvec{a}^*,1] \ge D$ where $\rho$ is the root of the phylogenetic tree \Tree and $a_i^*$ is the number of teams $t_j = (s_j,e_j)$ with $s_j< i\le e_j$. Otherwise, return \no.

	\proofpara{Correctness}
	In vector $\myvec{a}^*$ the number of available teams per timeslot is stored, such that \Instance is a \yes-instance if and only if $\DP[\rho,\myvec{a}^*,1] \ge D$. It remains to show that $\DP$ and~$\DP'$ stores the intended value.
	
	If $x$ is a leaf then the subtree $\Tree_x$ rooted at $x$ does not contain edges and\linebreak so~$\PDsub{\Tree_x}(\{x\}) = \PDsub{\Tree_x}(\emptyset) = 0$.
	As $S=\{x\}$ is the only non-empty set of taxa with~$S \subseteq \off(x)$, in the case that $b=1$, we need to ensure that in $\myvec{a}$, enough person-hours are declared for $\{x\}$, which is $\sum_{i=1}^{\ex(x)} a_i \ge \ell(x)$. Thus, the base case is correct.
	
	Let $v$ be a vertex with children $u_1,\dots,u_z$.
	To see that $\DP'[v,i,\myvec{a},b]$, for $i\in [q]$, stores the correct value, observe that the diversity of edge $vu_i$ can be added if and only if at least one taxon in $\off(u_i)$ survives, which is happening if and only if $b_2=1$ (or $b=1$ in the case of $i=1$).
	Further, the available person-hours at $v$ can naturally be divided between the children $v$.
	Therefore, $\DP'$ stores the correct value.
	The correctness of $\DP[v,\myvec{a},b]$ directly follows from the correctness of $\DP'[v,z,\myvec{a},b]$.

	\proofpara{Running time}
	The tables contain $\Oh(n \cdot (|T| + 1)^{\max_{\ex}})$ entries.
	For a leaf $x$, the value of $\DP[x,\myvec{a},1]$ can be computed in constant time, in our RAM model.
	For an internal vertex~$v$ with~$z$ children, we need to copy the value of $\DP'[v,z,\myvec{a},b]$ in constant time.
		
	In Recurrence~(\ref{eqn:s-T^maxr}), there are $\Oh((|T| + 1)^{\max_{\ex}})$ options to choose $\myvec{a'}$ and at most~$4$ options to choose $b_1$ and $b_2$,
	and so the values of the entries can be computed in~$\Oh((|T| + 1)^{\max_{\ex}} \cdot n)$ time.
	
	Finally, we need to iterate over the options for $\myvec{a^*}$.
	Altogether, we can compute a solution for an instance of \tPDs in $\Oh((|T| + 1)^{2\cdot\max_{\ex}} \cdot n)$ time.
	
	We note that the table entries store values of~$\Oh(D)$ and therefore the running time is also feasible in more restricted RAM models.
\end{proof}
\begin{proof}[Proof of Proposition~\ref{prop:timePD-T+maxr}\ref{prop:timePD-s-T*maxr}]
	\proofpara{Table definition}
	Let $\myvec{a} = (a_1,\dots,a_{\var_{\ex}})$ be a $\var_{\ex}$-dimensional vector with values $a_i \in [H_i]_0$.
	
	We define a dynamic programming algorithm with a table $\DP$.
	In entry $\DP[v,\myvec{a},b]$, for a vertex~$v$ and an integer $b\in \{0,1\}$, we store 0 if $b=0$;
	if $b = 1$, then we store the maximum value of $\PDsub{\Tree_v}(S)$ for a subset $S$ of offspring of $v$ such that $\sum_{x \in S \cap Z_j} \leq a_j$ for all $j \in [\var_{\ex}]$.
	We define an auxiliary table $\DP'$ in which in entry $\DP[v,i,\myvec{a},b]$ we only consider non-empty sets~$S \subseteq \off(u_1) \cup \dots \cup \off(u_i)$ where~$u_1, \dots, u_z$ are the children of~$v$.
	
	\proofpara{Algorithm}
	As a base case, 
	for each leaf $x$ we store $\DP[x,\myvec{a},b] = 0$ if $b=0$ or~$a_i \ge \ell(x)$, for each $i \ge \ex(x)$.
	Otherwise, we store $\DP[x,\myvec{a},1] = -\infty$.
	
	Let $v$ be an internal vertex with children $u_1,\dots,u_z$.
	
	We define $\DP'[v,1,\myvec{a},b] := \DP[u_1,\myvec{a},b] + \w(v u_1) \cdot b$
	and we compute further values with the recurrence
	\begin{equation}
		\label{eqn:T*maxr}
		\DP'[v,i+1,\myvec{a},b] =
		\max_{\myvec{a'}, b_1,b_2} \DP'[v,i,\myvec{a'},b_1] + \DP[u_{i+1},\myvec{a}-\myvec{a'},b_2] + 
		\w(v u_{i+1}) \cdot b_2.
	\end{equation}
	Here, we select $\myvec{a'}$, $b_1$, and~$b_2$ such that $a_j' \in [a_j]_0$ for each $j\in [\max_{\ex}]$ and $b_1,b_2 \in [b]_0$.
	We finally set $\DP[v,\myvec{a},b] := \DP'[v,z,\myvec{a},b]$.

	We return \yes if $\DP[\rho,\myvec{a}^*,1] \ge D$ where $\rho$ is the root of the given phylogenetic tree \Tree and $a_i^* = H_i$. Otherwise, we return \no.

	\proofpara{Correctness}
	In the base case, it is enough to check whether $a_i \ge \ell(x)$ for $i\ge \ex(x)$.
	A visualization is given in Figure~\ref{fig:timePD-example-4b}.
	\begin{figure}[t]
		\centering
		\begin{tikzpicture}[scale=0.75,every node/.style={scale=1.2}]
			\tikzstyle{txt}=[circle,fill=none,draw=none,inner sep=0pt]
			
			\fill[red!10] (0,1) rectangle (3,3);
			\fill[red!10] (2,0) rectangle (4,1);
			\node[txt] at (2,1.5) {$x_1$};
			
			\fill[blue!10] (3,1) rectangle (4,3);
			\fill[blue!10] (4,0) rectangle (5,2);
			\node[txt] at (4,1.5) {$x_2$};
			
			\fill[green!30] (4,2) rectangle (5,3);
			\fill[green!30] (5,0) rectangle (7,3);
			\node[txt] at (5.5,1.5) {$x_3$};
			
			\fill[yellow!30] (7,0) rectangle (9,3);
			\fill[yellow!30] (9,0) rectangle (10,2);
			\node[txt] at (8.5,1.5) {$x_4$};
			
			\fill[cyan!30] (9,2) rectangle (11,3);
			\fill[cyan!30] (10,0) rectangle (12,2);
			\node[txt] at (10.5,1.5) {$x_5$};
			
			\fill[orange!30] (12,0) rectangle (15,2);
			\node[txt] at (13.5,1.5) {$x_6$};

			\draw[gray!50] (0,1) -- (15,1);
			\draw[gray!50] (0,2) -- (15,2);
			\draw[gray!50] (0,3) -- (11,3);
			
			\draw[gray!50] (1,1) -- (1,3);
			
			\foreach \i in {2,...,11}
			\draw[gray!50] (\i,0) -- (\i,3);
			
			\foreach \i in {12,...,15}
			\draw[gray!50] (\i,0) -- (\i,2);
			
			\foreach \i in {1,...,15}
			\node[txt] at (\i-.5,-.4) {\i};
			
			\foreach \i in {1,...,3}
			\node[txt] at (-.4,\i-.5) {$t_\i$};
			
			\draw[->] (0,0) -- (0,3.5);
			\draw[->] (0,0) -- (15.5,0);
			
			\foreach \i in {4,7,15}
			\draw[dashed,red] (\i,-.5) -- (\i,3.5);
			
			\node[txt,red] at (3.5,3.25) {$\ex_1$};
			\node[txt,red] at (6.5,3.25) {$\ex_2$};
			\node[txt,red] at (14.5,3.25) {$\ex_3$};
			
			\draw[thick] (0,1) -- (0,3) -- (3,3) -- (3,1) -- (4,1) -- (4,0) -- (2,0) -- (2,1) -- (0,1);
			\draw[thick] (3,3) -- (3,1) -- (4,1) -- (4,0) -- (5,0) -- (5,2) -- (4,2) -- (4,3) -- (3,3);
			\draw[thick] (5,0) -- (5,2) -- (4,2) -- (4,3) -- (7,3) -- (7,0) -- (5,0);
			\draw[thick] (7,3) -- (7,0) -- (10,0) -- (10,2) -- (9,2) -- (9,3) -- (7,3);
			\draw[thick] (10,0) -- (10,2) -- (9,2) -- (9,3) -- (11,3) -- (11,2) -- (12,2) -- (12,0) -- (10,0);
			\draw[thick] (12,2) -- (12,0) -- (15,0) -- (15,2) -- (12,2);
		\end{tikzpicture}
		\resizebox{.2\columnwidth}{!}{%
			\myrowcols
			\begin{tabular}{c|ccc}
				$t_i$ & $t_1$ & $t_2$ & $t_3$\\
				\hline
				$s_i$ & 2 & 0 & 0\\
				$e_i$ & 15 & 15 & 11
			\end{tabular}
		}
		\;\;
		\resizebox{.45\columnwidth}{!}{%
			\myrowcols
			\begin{tabular}{c|cccccc}
				$x_i$ & $x_1$ & $x_2$ & $x_3$ & $x_4$ & $x_5$ & $x_6$\\
				\hline
				$\ell(x_i)$ & 8 & 4 & 7 & 8 & 6 & 6\\
				$\ex(x_i)$ & 4 & 7 & 7 & 15 & 15 & 15
			\end{tabular}
		}
		\;\;
		\resizebox{.2\columnwidth}{!}{%
			\myrowcols
			\begin{tabular}{c|ccc}
				$i=$ & $1$ & $2$ & $3$\\
				\hline
				$\ex_i$ & 4 & 7 & 15\\
				$H_i$ & 10 & 19 & 49
			\end{tabular}
		}
		\caption{This is a hypothetical schedule of taxa $x_1,\dots,x_6$.
			This schedule is only possible, as $x_2$ is already started before $\ex_1$.
			Thus, it is not sufficient to assign $x_2,x_3\in Y_2$ the person-hours between $\ex_1$ and $\ex_2$, which are $H_2-H_1=9$.}
		\label{fig:timePD-example-4b}
	\end{figure}
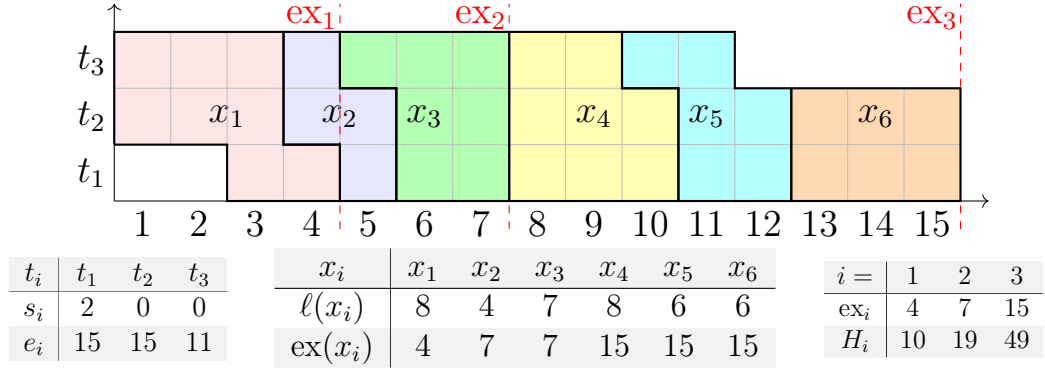%
	
	This algorithm is similar to the algorithm behind Proposition~\ref{prop:timePD-T+maxr}\ref{prop:timePD-s-T^maxr}.
	Instead of remembering the available teams each timeslot, in $\myvec{a}$,
	we store the available person-hours per unique remaining time.
	Therefore, the correctness proof is analogous to the correctness proof of Proposition~\ref{prop:timePD-T+maxr}\ref{prop:timePD-s-T^maxr}.
	
	\proofpara{Running time}
	The tables contain $\Oh(n \cdot (H_{\max_{\ex}})^{\var_{\ex}})$ entries. Each entry in $\DP$ can be computed in constant time.
	In Recurrence~(\ref{eqn:T*maxr}), there are $\Oh((H_{\max_{\ex}})^{\var_{\ex}})$ options to choose $\myvec{a'}$ and at most~$4$ to choose $b_1$ and $b_2$,
	and so the value of each entry can be computed in $\Oh((H_{\max_{\ex}})^{\var_{\ex}} \cdot n)$ time.
	Altogether, we can compute a solution for \tPDs in $\Oh((H_{\max_{\ex}})^{2\var_{\ex}} \cdot n)$~time.
\end{proof}
\begin{proof}[Proof of Proposition~\ref{prop:timePD-T+maxr}\ref{prop:timePD-ws-T+maxr}]
	\proofpara{Table definition}
	Let $\myvec{A} = (A_1,\dots,A_{\max_{\ex}})$ be a $\max_{\ex}$-tuple in which $A_i$ are subsets of $T$.
	
	We define a dynamic programming algorithm with a table $\DP$.
	In entry $\DP[v,\myvec{A},b]$, for a vertex~$v$ and an integer $b\in \{0,1\}$, we store 0 if $b=0$;
	if $b = 1$, then we store the maximum value of $\PDsub{\Tree_v}(S)$ for a subset of taxa $S \subseteq \off(v)$ that can be saved using only teams from $A_j$ 
	at each timeslot $j\in [\max_{\ex}]$.
	We define an auxiliary table $\DP'$, in which, in entry $\DP'[v,i,\myvec{a},b]$ we only consider non-empty sets~$S \subseteq \off(u_1) \cup \dots \cup \off(u_i)$ where~$u_1, \dots, u_z$ are the children of~$v$.
	
	\proofpara{Algorithm}
	As a base case,
	for each leaf $x$ we store $\DP[x,\myvec{A},b] = 0$ if $b=0$ or
	if there is a team $t_j\in T$ and an integer $i\in [\max_{\ex} - \ell(x)]_0$ such that $t_j \in A_{i+i'}$ for each~$i' \in [\ell(x)]$.
	Otherwise, we store $\DP[x,\myvec{A},1] = -\infty$.
	
	Let $v$ be an internal vertex with children $u_1,\dots,u_z$.
	We compute the entry with the equation $\DP'[v,1,\myvec{A},b] := \DP[u_1,\myvec{A},b] + \w(v u_1) \cdot b$
	and we compute further values with the recurrence
	\begin{equation}
		\label{eqn:T+maxr}
		\DP'[v,i+1,\myvec{A},b] =
		\max_{\myvec{A'}, b_1,b_2} \DP'[v,i,\myvec{A'},b_1] + \DP[u_{i+1},\myvec{A}-\myvec{A'},b_2] + 
		\w(v u_{i+1}) \cdot b_2.
	\end{equation}
	Here, we select $\myvec{A'}$, $b_1$, and~$b_2$ such that $A_j'\subseteq A_j$ for each $j\in [\max_{\ex}]$ and $b_1,b_2 \in [b]_0$.
	We finally set $\DP[v,\myvec{A},b] := \DP'[v,z,\myvec{A},b]$.

	We return \yes if $\DP[\rho,\myvec{A^*},1] \ge D$ where $\rho$ is the root of the given phylogenetic tree \Tree and $A_i^*$ for $i\in[\max_{\ex}]$ contains team $t_j =(s_j,e_j) \in T$ if and only if~$s_j<i\le e_j$. Otherwise, we return \no.
	
	\newpage
	\proofpara{Correctness}
	If $\DP$ stores the intended value, $\Instance$ is a \yes-instance if and only if~$\DP[\rho,\myvec{A^*},1] \ge D$, where $\rho$ is the root of the phylogenetic tree \Tree.
	It remains to prove that $\DP$ stores the intended value.
	
	For a leaf $x$, if $\DP[x,\myvec{A},b]=0$ then either $b=0$ or there are $\ell(x)$ consecutive $A_i$s in which a team $t_j$ occurs.
	Thus, $x$ can be saved and $\PDsub{\Tree_x}(\{x\}) = 0$.
	Likewise, we show it the other way round and the base case is correct.
	
	The correctness of the recurrence can be shown analogously to the correctness of Proposition~\ref{prop:timePD-T+maxr}\ref{prop:timePD-s-T^maxr}.

	\proofpara{Running time}
	Observe that for each $i\in [\max_{\ex}]$ there are $2^{|T|}$ options to select~$A_i$. Thus, there are $(2^{|T|})^{\max_{\ex}} = 2^{|T| \cdot \max_{\ex}}$ options for~$\myvec{A}$.
	Therefore, the tables contain~$\Oh(n \cdot 2^{|T| \cdot \max_{\ex}})$ entries.
	For a leaf $x$, we need to iterate over the teams and the timeslots to check whether there is a team contained in $\ell(x)$ consecutive $A_i$s. Thus, we can compute all entries of $\DP$ in $\Oh(2^{|T| \cdot \max_{\ex}} \cdot |T|^{1+\max_{\ex}} \cdot n)$ time.
	
	Observe that in Recurrence~(\ref{eqn:T+maxr}), $A_j'$ is a subset of $A_j$, so there are $\Oh(3^{|T| \cdot \max_{\ex}})$ viable options for $\myvec{A}$ and $\myvec{A'}$. Thus, we conclude that all entries of $\DP'$ can be computed in $\Oh(3^{|T| \cdot \max_{\ex}} \cdot n)$.
	
	Consequently, because $|T|^{1+\max_{\ex}} \le 1.5^{|T|\cdot\max_{\ex}} \cdot |T|$ we can compute a solution for an instance of \tPDws in $\Oh(3^{|T| \cdot \max_{\ex}} \cdot |T| \cdot n)$ time.
\end{proof}

\subsection{Few Rescue Lengths and Remaining Times}
The scheduling problem $1||\sum w_j(1-U_j)$ is \FPT with respect to $\var_\ell + \var_{\ex}$~\cite{hermelin}.
In this subsection, we describe a dynamic programming algorithm, similar to the approach in Theorem~\ref{thm:GNAP-varc+varw}, to prove that \tPDs is at least \XP when parameterized by $\var_\ell + \var_{\ex}$.
Here, $\var_\ell$ and $\var_{\ex}$ are the number of unique needed rescue lengths and unique remaining times, respectively.

Let \Instance be an instance of \tPDs,
let $\ell(X) = \{\ell_1,\dots,\ell_{\var_\ell}\}$ for each $\ell_i < \ell_{i+1}$ with $i\in [\var_\ell-1]$ be the unique rescue lengths, and let $\ex(X) = \{\ex_1,\dots,\ex_{\var_{\ex}}\}$ with~$\ex_i < \ex_{i+1}$ for each $i\in [\var_{\ex}-1]$ be the unique remaining times.

\begin{proposition}
	\label{prop:timePD-varl+varr}
	\tPDs can be solved in $\Oh(n^{2\var_\ell \cdot \var_{\ex}} \cdot (n + \var_\ell\cdot \var_{\ex}^2))$ time.
\end{proposition}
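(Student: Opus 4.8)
The plan is to adapt the tree dynamic programming of Theorem~\ref{thm:GNAP-varc+varw}, exploiting the characterization of schedulability from Lemma~\ref{lem:timePD-scheduleCondition}. The crucial observation is that, by that lemma, the existence of a valid $T$-schedule saving a set $A \subseteq X$ depends only on the quantities $\sum_{x \in A \cap Z_j} \ell(x)$ for $j \in [\var_{\ex}]$, and these in turn depend only on how many taxa of each \emph{type} the set $A$ contains, where the type of a taxon $x$ is the pair $(\ell(x), \ex(x))$. Since there are only $\var_\ell \cdot \var_{\ex}$ types, I would encode a candidate selection by a count matrix $\myvec{m} = (m_{i,j})_{i \in [\var_\ell], j \in [\var_{\ex}]}$, where $m_{i,j}$ records the number of saved taxa with rescue length $\ell_i$ and extinction time $\ex_j$. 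This decouples the problem into a tree part (maximize phylogenetic diversity among sets realizing a given count matrix) and a scheduling part (a linear feasibility test on the count matrix), which can be handled separately.

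For the tree part, for a vertex $v$ and a count matrix $\myvec{m}$ let $\mathcal{S}^{(v)}_{\myvec{m}}$ be the family of sets $S \subseteq \off(v)$ containing exactly $m_{i,j}$ taxa of each type $(i,j)$. I would define a table entry $\DP[v, \myvec{m}]$ to store $\max_{S \in \mathcal{S}^{(v)}_{\myvec{m}}} \PDsub{\Tree_v}(S)$ (and $-\infty$ if $\mathcal{S}^{(v)}_{\myvec{m}} = \emptyset$), together with an auxiliary table $\DP'[v, s, \myvec{m}]$ restricted to the partial subtree $\Tree_{v,s}$, exactly as in Theorem~\ref{thm:GNAP-varc+varw}. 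Writing $b(\myvec{m}) := 1$ if $\myvec{m} \ne \myvec{0}$ and $b(\myvec{m}) := 0$ otherwise, the base case stores $0$ at a leaf $x$ of type $(i,j)$ for $\myvec{m} = \myvec{0}$ and for the count matrix with a single $1$ in position $(i,j)$, and $-\infty$ elsewhere. For an internal vertex $v$ with children $u_1, \dots, u_t$, I would set $\DP'[v,1,\myvec{m}] = \DP[u_1, \myvec{m}] + \w(v u_1)\cdot b(\myvec{m})$ and use the recurrence
\[
\DP'[v,s+1,\myvec{m}] = \max_{\myvec{0} \le \myvec{m'} \le \myvec{m}} \DP'[v,s,\myvec{m}-\myvec{m'}] + \DP[u_{s+1}, \myvec{m'}] + \w(v u_{s+1}) \cdot b(\myvec{m'}),
\]
finally setting $\DP[v,\myvec{m}] = \DP'[v,t,\myvec{m}]$. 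The algorithm returns \yes if and only if some count matrix $\myvec{m}$ satisfies both $\DP[\rho,\myvec{m}] \ge D$ and the scheduling condition $\sum_{j' \le j} \sum_{i} m_{i,j'} \cdot \ell_i \le H_j$ for every $j \in [\var_{\ex}]$, where $\rho$ is the root of $\Tree$.

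The correctness argument follows the lines of Theorem~\ref{thm:GNAP-varc+varw}: for $S \in \mathcal{S}^{(v)}_{\myvec{m}}$ the count matrix and the diversity decompose additively over the children, since the subtrees $\Tree_{u_s}$ are edge-disjoint and the edge $v u_s$ contributes $\w(v u_s)$ to $\PDsub{\Tree_v}(S)$ exactly when $\off(u_s) \cap S \ne \emptyset$; this justifies the recurrence by the usual two-directional induction (any set in the family yields at most the tabulated value, and the tabulated value is realized by some set). The decoupling is sound because, by Lemma~\ref{lem:timePD-scheduleCondition}, the scheduling condition on $\myvec{m}$ holds if and only if \emph{every} set with count matrix $\myvec{m}$---in particular the diversity-maximizing one witnessed by $\DP[\rho,\myvec{m}]$---can be saved by a valid $T$-schedule.

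For the running time, a count matrix has $\var_\ell \cdot \var_{\ex}$ entries each bounded by $n$, so there are $\Oh(n^{\var_\ell \cdot \var_{\ex}})$ of them; summing $\DP$ over vertices and $\DP'$ over all (vertex, child)-pairs gives $\Oh(n^{\var_\ell \cdot \var_{\ex} + 1})$ entries, and evaluating the recurrence for one entry ranges over $\Oh(n^{\var_\ell \cdot \var_{\ex}})$ submatrices $\myvec{m'}$, yielding $\Oh(n^{2\var_\ell \cdot \var_{\ex} + 1})$ time for the tables. The final feasibility scan costs $\Oh(\var_\ell \cdot \var_{\ex}^2)$ per count matrix, giving the claimed $\Oh(n^{2\var_\ell \cdot \var_{\ex}} \cdot (n + \var_\ell \cdot \var_{\ex}^2))$ bound. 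I expect the main step---rather than an obstacle---to be the reduction to count matrices enabled by Lemma~\ref{lem:timePD-scheduleCondition}; once schedulability is phrased as a condition on type-counts, the tree DP and its analysis transfer directly, if somewhat bookkeeping-heavily, from Theorem~\ref{thm:GNAP-varc+varw}, with the two most error-prone points being the correct additive contribution of each edge (governed by $b(\cdot)$) and the verification that the count matrix indeed suffices to evaluate the scheduling condition.
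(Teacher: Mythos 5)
Your proposal is correct and matches the paper's own proof in all essentials: a dynamic program over $\Tree$ indexed by $\var_\ell \times \var_{\ex}$ count matrices of taxon types, an auxiliary table over partial subtrees for combining children, and a final scheduling feasibility check $\sum_{j' \le j}\sum_i m_{i,j'}\cdot \ell_i \le H_j$ justified by Lemma~\ref{lem:timePD-scheduleCondition}. The only (cosmetic) difference is that the paper tracks an explicit boolean $b$ in the table to decide whether an edge weight $\w(vu_s)$ is added, whereas your exact-count semantics lets you read this off as $b(\myvec{m'}) = [\myvec{m'} \ne \myvec{0}]$, which is an equivalent and slightly cleaner bookkeeping choice.
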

\begin{proof}
	\proofpara{Table definition}
	By~$\myvec{A}$ we denote an integer-matrix of size $\var_\ell \times \var_{\ex}$.
	Further, we denote $A_{i,j}$ to be the entry in row $i\in [\var_\ell]$ and column $j\in [\var_{\ex}]$ of~$\myvec{A}$.
	With~$\myvec{A}_{(i,j) +z}$ we denote the matrix resulting from~$\myvec{A}$ in which in row $i$ and column~$j$, the value~$z$ is added.
	
	We define a dynamic programming algorithm with table $\DP$.
	In entry~$\DP[v,\myvec A,b]$, we want to store $0$ if $b=0$.
	If~$b=1$, store in~$\DP[v,\myvec A,b]$ the maximum diversity that can be achieved in the subtree rooted at $v$ in which at most $A_{i,j}$ taxa $x$ are chosen with $\ell(x) = \ell_i$ and~$\ex(x) = \ex_j$.
	We define an auxiliary table $\DP'$ in which in entry~$\DP[v,\myvec{A},b,i]$ we only consider the first $i$ children of $v$.
	
	\proofpara{Algorithm}
	For each leaf $x$ with $\ell(x) = \ell_i$ and $\ex(x) = \ex_j$, we store $\DP[x,\myvec A,b] = 0$ if $b=0$ or $A_{i,j} > 0$.
	Otherwise, we store $\DP[x,\myvec A,b] = -\infty$.
	
	Let $v$ be an internal vertex with children $u_1,\dots,u_z$.
	
	We define $\DP'[v,1,\myvec{A},b] := \DP[u_1,\myvec{A},b] + \w(v u_1) \cdot b$
	and we compute further values with the recurrence
	\begin{equation}
		\label{eqn:varl+valr}
		\DP'[v,i+1,\myvec A,b] = \max_{\myvec{B},b_1,b_2} \DP'[v,i,\myvec{B},b_1] + \DP[u_{i+1},\myvec{A}-\myvec{B},b_2] + \w(v u_{i+1})\cdot b_2.
	\end{equation}
	Here, we select $\myvec{B}$ such that $B_{i,j}\le A_{i,j}$ for each $i\le [\var_\ell]$, $j\le [\var_{\ex}]$
	and $b_1$, and~$b_2$ are selected to be in $[b]_0$.
	We finally set $\DP[v,\myvec{A},b] := \DP'[v,z,\myvec{A},b]$.
	
	We return \yes if $\DP[\rho,\myvec{A}] \ge D$ for the root $\rho$ of \Tree and a matrix $\myvec{A}$ such that~$(\ell_1,\dots,\ell_{\var_\ell}) \cdot \myvec A \cdot \myvec 1_i \le H_i$ for each $i\in [\var_{\ex}]$.
	Here, $\myvec 1_i$ is a $\var_{\ex}$-dimensional vector in which the first $i$ positions are 1 and the remaining are 0 for each $i\in [\var_i]$.
	
	\proofpara{Correctness}
	By the definition of $\DP$, we see that an instance is a \yes-instance of \tPDs if and only if $\DP[\rho,\myvec{A}] \ge D$ for the root $\rho$ of \Tree and a matrix $\myvec{A}$ such that~$(\ell_1,\dots,\ell_{\var_\ell}) \cdot \myvec A \cdot \myvec 1_i \le H_i$ for each $i\in [\var_{\ex}]$.
	It only remains to show that $\DP$ stores the correct value.
	
	Observe that a leaf $x$ can be saved if and only if we are allowed to save a taxon with $\ell(x) = \ell_i$ and $\ex(x) = \ex_j$.
	Thus $\DP[x,\myvec{A},b]$ should store 0 if $b=0$ or if $A_{i,j}\ge 1$.
	The other direction follows as well.
	The correctness of the recurrence can be shown analogously to the correctness of Proposition~\ref{prop:timePD-T+maxr}\ref{prop:timePD-s-T^maxr}.
	
	\proofpara{Running time}
	The matrix $\myvec{A}$ contains $\var_\ell\cdot \var_{\ex}$ entries with integers in $[n]_0$.
	We can assume that if $A_{i,j} = n$ for some $i\le [\var_\ell]$, $j\le [\var_{\ex}]$,
	then $A_{p,q} = 0$ for each~$p\ne i$ and $q\ne j$.
	Such that there are $n^{\var_{\ex}\cdot \var_{\ex}} + n$ options for a matrix $\myvec{A}$.
	Therefore, the tables $\DP$ and $\DP'$ contain $\Oh(n\cdot n^{\var_{\ex}\cdot \var_{\ex}})$ entries.
	
	Each entry in $\DP$ can be computed in linear time.
	To compute entries of~$\DP'$, in Recurrence~(\ref{eqn:varl+valr}) we iterate over possible matrices $\myvec{B}$ and booleans $b_1,b_2$.
	Therefore, we can compute each entry in $\DP'$ in $\Oh(n^{2\var_{\ex}\cdot \var_{\ex}+1})$ time.
	
	For the initialization, we need to iterate over possible matrices $\myvec{A}$ and compute whether~$(\ell_1,\dots,\ell_{\var_\ell}) \cdot \myvec A \cdot \myvec 1_i \le H_i$ in $\Oh(n^{\var_{\ex}\cdot \var_{\ex}} \cdot \var_\ell\cdot \var_{\ex}^2)$ time.
	That proves the overall running time.
\end{proof}

\subsection{Pseudo-polynomial Running Time on Stars}
In this subsection, we show that \tPDs can be solved in pseudo-polynomial time if the input tree is a star.
In the light of Proposition~\ref{prop:timePD-ws-Dbar}, such a result is unlikely to hold for \tPDws. 

\begin{proposition}
	\label{prop:timePD-stars}
	If the given input tree is a star, \tPDs can be solved
	\begin{propEnum}
		\item in $\Oh((\max_{\ex})^2 \cdot n)$ time,
		\item in $\Oh(D^2 \cdot n)$ time,
		\item in $\Oh(\Dbar^2 \cdot n)$ time, or
		\item in $\Oh((\max_\w)^2 \cdot n^3)$ time, where $\max_\w$ is the maximum edge weight.
	\end{propEnum}
\end{proposition}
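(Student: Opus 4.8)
The plan is to exploit the fact that on a star the phylogenetic diversity becomes additive and the scheduling feasibility decouples into the cumulative conditions of Lemma~\ref{lem:timePD-scheduleCondition}, reducing \tPDs to a knapsack-type selection problem with nested capacity constraints; this can then be solved by a dynamic program along the lines of the star-shaped instance of Proposition~\ref{prop:GNAP-wcode=1-D}. First I would record the structural simplification. If $\Tree$ is a star with root $\rho$, then $\off(\rho x_i) = \{x_i\}$ for every leaf, so $\PD(S) = \sum_{x_i \in S}\w(\rho x_i)$ for each $S \subseteq X$. By Lemma~\ref{lem:timePD-scheduleCondition}, a valid $T$-schedule saving $S$ exists if and only if $\ell(S \cap Z_j) \le H_j$ for every $j \in [\var_{\ex}]$, where the $H_j$ are precomputable in polynomial time. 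Hence $\Instance$ is a \yes-instance if and only if some $S$ satisfies $\sum_{x_i\in S}\w(\rho x_i) \ge D$ together with $\ell(S\cap Z_j)\le H_j$ for all $j$; since $Z_1 \subseteq Z_2 \subseteq \dots \subseteq Z_{\var_{\ex}}$, these are nested knapsack capacities.

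Then I would set up a single dynamic program and read off three of the four bounds by choosing its dimension, deriving the fourth from another. Process the taxa class by class, $Y_1, Y_2, \dots, Y_{\var_{\ex}}$ (equivalently in non-decreasing order of extinction time), adding one taxon at a time in the usual knapsack manner and combining partial solutions exactly as in Proposition~\ref{prop:GNAP-wcode=1-D} applied to the star. For part~(b) an entry $\DP[i,d]$ stores the minimum cumulative rescue length $\ell(S)$ over selections $S$ among the first $i$ taxa with $\PD(S)\ge d$; whenever the index $i$ reaches the last taxon of a class $Y_j$, I enforce $\ell(S\cap Z_j)\le H_j$ by setting to $\infty$ every entry whose stored value exceeds $H_j$. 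This capping is where the nested constraints are absorbed, and its correctness rests on the additivity of $\PD$ together with the fact that, after $Z_j$ has been fully processed, the stored value equals $\ell(S\cap Z_j)$. The table has $\Oh(D\cdot n)$ entries and the analysis of Proposition~\ref{prop:GNAP-wcode=1-D} yields the bound $\Oh(D^2\cdot n)$; the extinction-time capping adds only polynomial overhead. Part~(d) then follows immediately, since we may assume $D \le \PD(X) = \sum_{e\in E}\w(e) \le \max_\w\cdot n$ (returning \no otherwise), so that $D^2 n \le (\max_\w)^2 n^3$, mirroring Corollary~\ref{cor:GNAP-wcode=1,val-lambda}.

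For parts~(a) and~(c) I would run the same dynamic program but change the dimension of the table. For~(c) I track the acceptable loss rather than the achieved diversity: an entry stores the minimum rescue length over selections of the first $i$ taxa whose \emph{discarded} offspring carry diversity at most $d'$, so the new dimension ranges over $[0,\Dbar]$ and the identical analysis gives $\Oh(\Dbar^2\cdot n)$. For~(a) I instead make the tracked quantity the consumed rescue length (the used person-hours), storing the achieved diversity and capping at each class boundary exactly as above; bounding this dimension in terms of $\max_{\ex}$ then yields $\Oh((\max_{\ex})^2\cdot n)$. In each case the base cases (single taxa) and the boundary capping are the only modifications to the template of Proposition~\ref{prop:GNAP-wcode=1-D}.

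The main obstacle I expect is not the knapsack recursion itself but verifying that the class-boundary capping faithfully encodes all nested feasibility conditions simultaneously: that it suffices to enforce $\ell(S\cap Z_j)\le H_j$ only at the instant the last taxon of $Y_j$ is processed, and that discarding the corresponding states never removes a partial selection that could still be completed to an optimal feasible solution. This relies on the monotonicity $Z_j\subseteq Z_{j+1}$ and on the additivity of $\PD$ on a star, and is the step I would write out in full; the remaining exponent bookkeeping and the reduction of~(d) to~(b) are routine.
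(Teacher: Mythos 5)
Your proposal is correct and takes essentially the same route as the paper: both reduce the star case via Lemma~\ref{lem:timePD-scheduleCondition} to a knapsack problem with nested capacity constraints over the classes $Y_1,\dots,Y_{\var_{\ex}}$, solve it by a pseudo-polynomial dynamic program whose table dimension is chosen as the person-hours, $D$, or $\Dbar$ respectively, and derive part~(d) from part~(b) via the assumption $D \le n\cdot\max_\w$. The only (immaterial) bookkeeping difference is that the paper solves a separate \KP instance per class and merges the class tables with a capacity-splitting recurrence, whereas you run a single incremental knapsack pass and enforce $\ell(S\cap Z_j)\le H_j$ by capping at class boundaries; the dominance argument you flag as the step to write out in full is exactly what underlies both variants.
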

\begin{proof}
	Let $\Instance$ be an instance of \tPDs in which the given phylogenetic tree is a star.
	With the help of \KP, we solve the problem separately on the different classes $Y_i$ and use a dynamic programming algorithm to combine the solutions.
	In \KP, we are given a set of items $N$, each with a weight $\w_i$ and a profit $p_i$, a capacity $C$, and a desired profit $P$.
	The question is whether we can select a set of the items $S \in N$ such that the sum of profits $p_i$ in $S$ is at least $P$ while the sum of weights is not exceeding $C$.

	A solution for an instance of \KP can be found in~$\Oh(C \cdot |N|)$ time~\cite{weingartner,rehs} with a dynamic programming algorithm, indexing solutions by the\lb capacity~$C' \in [C]$ and the number of items $i \in [|N|]$, and storing the maximum profit for a subset of the first $i$ items that has total weight of at most~$C'$.
	We note that such an algorithm also computes the maximum profit for every capacity~$C'\in [C]$, and so we may assume that in $\Oh(C \cdot |N|)$ time it is possible to construct a table $DP_a$ such that $DP_a[i,C']$ stores the maximum $P$ such that $(Y_i,\w_j',p_j,C,P)$ is a \yes-instance of \KP. 
	
	Along similar lines, there exists a dynamic programming algorithm with a running time of~$\Oh(P \cdot |N|^2)$~\cite{weingartner,rehs}, where entries are indexed by~$P'\in [P]$\lb and~$i \in [|N|]$, and we store the minimum capacity $C'$ for which there is a subset of the first $i$ items with a total profit of at least $P'$.
	Adapting this algorithm, we receive a running time of~$\Oh(\Pbar \cdot |N|^2)$, where $\overline P = \sum_{a_i\in N} p_i - P$ and $\Cbar = \sum_{a_i\in N} \w_i - C$.
	In such an algorithm, we index solutions by $\Pbar' \leq \Pbar$ and $i \leq N$, and store the \emph{maximum} total weight of a subset of the first $i$ items whose total profit is \emph{at most} $\Pbar'$.
	If such a set exists with a weight of at least $\Cbar$ for $\Pbar' = \Pbar$, then the complement set is a solution for the \KP instance.

	\proofpara{Algorithms}
	We describe the algorithm with a running time of $\Oh((\max_{\ex})^2 \cdot n)$ and omit the similar cases for the other variants of the algorithm.
	
	Recall that~$Y_i$ is the set of taxa with~$\ex(x) = i$.
	In the following, for each set~$Y_i$ in~\Instance, we consider an instance of \KP with item set~$Y_i$ in which $x_j\in Y_i$ with incoming edge $e_j$ has a weight of $\w_j':=\ell(x_j)$ and a profit of~$p_j:=\w(e_j)$.
	We define a table $\DP$, in which for any $i \in [var_r]$ and $C \in [\ex_i]$, entry $\DP[i,C]$ stores the maximum desired profit $P$ such that $(Y_i,\w_j',p_j,C,P)$ is a \yes-instance of \KP. 
	
	We define a table $\DP'$ in which we combine the sub-solutions on $Z_i$, next.
	As a base case, we store $\DP'[1,C] := \DP[1,C]$ for each $C\in [\ex_1]$, $P\in [D]$, and~$\Pbar\in [\Dbar]$.
	To compute further values, we can use the recurrence
	\begin{eqnarray}
		\DP'[i+1,C] = 
		\max_{C' \in [C]_0} \DP'[i,C'] + \DP[i+1,C-C']
	\end{eqnarray}
	Finally, we return \yes if $\DP'[\var_{\ex},\max_{\ex}] \ge D$.

	\proofpara{Correctness}
	For convenience, we will write $\w(S)$ to denote $\sum_{x\in S} \w(\rho x)$ for a set~$S$ of taxa.
	Observe that $\w(S) = \PD(S)$, as $\Tree$ is a star.
	The correctness of the values in $\DP$ follows from the correctness of the \KP-algorithms.
	Assume that for some $i\in [\var_{\ex}]$ the correct value is stored in $\DP'[i,C]$ for each $C\in [D]$.
	
	Let $\DP'[i+1,C]$ store $a\ge 0$.
	Then,
	by the construction 
	there is an integer $C'$ such that~$a = a_1 + a_2 = \DP'[i,C'] + \DP[i+1,C-C']$.
	Consequently, there are sets of taxa $S_1\subseteq Z_i$ and~$S_2\subseteq Y_{i+1}$ such that $\ell(S_1) + \ell(S_2) = C' + (C-C') = C$ and~$\w(S_1) + \w(S_2) = a_1 + a_2 = a$.
	Therefore, $S := S_1 \cup S_2$ is a set with $\ell(S) = C$ and~$\DP'[i+1,C] = \w(S)$.
	
	Conversely, let $S\subseteq Z_{i+1}$ be a set of taxa with $\ell(S) = C$.
	Define $S_1 := S\cap Z_{i}$ and~$S_2 := S\cap Y_{i+1}$ and let $C'$ be $\ell(S_1)$.
	We conclude that
	\begin{eqnarray*}
		\w(S) = \w(S_1) + \w(S_2) \ge \DP'[i,C'] + \DP[i+1,C-C'].
	\end{eqnarray*}

	\proofpara{Running time}
	Note that the algorithm solving the \KP-instance is also a dynamic programming algorithm, such that all the values of the table $\DP$ are computed in $\Oh(\max_{\ex} \cdot n)$ time.
	
	The table $\DP'$ has $\var_{\ex}\cdot \max_{\ex}$ entries.
	To compute a value with the recurrence, we need to check the $\Oh(\max_{\ex})$ options to select $C$ and add two numbers of size at most $D$, and so we can compute all values of $\DP'$ in $\Oh((\max_{\ex})^2 \cdot \var_{\ex})$ time.
	
	For the other running times:
	We observe that \KP can be computed in~$\Oh(D^2 \cdot n)$, and in $\Oh(\Dbar^2 \cdot n)$ time.
	The rest follows analogously.
	
	We can assume that $D \le n \cdot \max_\w$, as we are dealing with a trivial \no-instance otherwise.
	Therefore, in $\Oh((n \cdot \max_\w)^2 \cdot n) = \Oh((\max_\w)^2 \cdot n^3)$~time we can solve this special case of \tPDs.
\end{proof}

\section{Discussion}
\label{sec:timePD-discussion}
With \tPDs and \tPDws, we introduced two \NP-hard generalizations of \MPD in which taxa may have distinct extinction times.
Most relevantly, we have proven that both problems are \FPT when parameterized by the target diversity $D$ and that \tPDs is also \FPT when parameterized by the acceptable loss of phylogenetic diversity $\Dbar$.
We have further proven that both problems are \FPT when parameterized by the available person-hours.

\tPDs is \NP-hard, but it remains an open question whether \tPDs is solvable in pseudo-polynomial time.
Indeed, we do not know if \tPDs or \tPDws can be solved in polynomial time even when the maximum rescue length needed to save a taxon is 2.
We note that the scheduling problem $1||\sum w_j (1-U_j)$ is \Wh 1-hard when parameterized by the unique number of processing times~\cite{heeger2024}; this implies that even on stars \tPDws is \Wh 1-hard when parameterized by the unique number of rescue lengths.

We further ask whether the  $\Oh(2^{n})$ running time for \tPDs (Proposition~\ref{prop:timePD-X}) can be improved to $\Oh(2^{o(n)})$, or whether this bound can be shown to be tight under \SETH or \ETH.
It also remains an open question whether  \tPDs or \tPDws are \FPT with respect to the largest extinction time $\max_{\ex}$.

We have not regarded kernelization algorithms. An interesting open question therefore is whether \tPDs or \tPDs admit a kernelization of polynomial size with respect to $D$ or $\Dbar$.

\chapter{Phylogenetic Diversity with Ecological Dependencies}
\label{ctr:FoodWebs}

\section{Introduction}
As we have seen in the previous chapters, the inherently limited amount of resources that one may devote to the task of saving taxa, necessitates decisions on which conservation strategies to pursue.
To support such decisions, one needs to incorporate quantitative information on the possible impact and the success likelihood of conservation strategies.
In this context, one task is to compute an optimal conservation strategy in the light of this information.

To find a conservation strategy with the best positive impact, one would ideally aim to maximize the functional diversity of the surviving taxa (species).
However, measuring this diversity even is impossible in many scenarios~\cite{MPC+18}.

Luckily, \MPD can be computed in polynomial time~\cite{steel,Pardi2005} and therefore became the standard in measuring biodiversity~\cite{crozier}.
Computing an optimal conservation strategy becomes much more difficult, however, when the success likelihood of a strategy is included in the model.
One way to achieve this is to add concrete survival probabilities for protected taxa, leading in its most general form to the \NP-hard \GNAPLong~\cite{hartmann,GNAP}, which has been observed in Chapter~\ref{ctr:GNAP}.
This problem formulation, however, still has a central drawback: It ignores that the survival of some taxa may also depend on the survival of other taxa.
This aspect was first considered by Moulton et~al.~\cite{moulton} in the \PDDlong~(\PDD)~problem.

Dependencies of taxa can take any thinkable form.
One model of dependencies are so called \emph{food-webs} in which the relationship between predators and prey are presented.
Food-webs are especially relevant in the observation of an ecosystem because, in them, one easily sees the role of a taxon within the outside world and the overall flow of biomass~\cite{cirtwill,lindeman1942}.

Moulton et al.~\cite{moulton} showed that \PDD~can be solved by the greedy algorithm if the objective of maximizing phylogenetic diversity agrees with the viability constraint in a precise technical sense.
Later, \PDD was conjectured to be \NP-hard in~\cite{spillner}.
This conjecture was confirmed by Faller et al.~\cite{faller}, who showed that \PDD is \NP-hard even if the food-web~\Food is a tree.
Further, Faller~et~al.~\cite{faller} considered \sPDD, the special case where the phylogenetic tree is restricted to be a star, and showed that \sPDD is \NP-hard even for food-webs which have a bipartite graph as underlying graph.
Finally, polynomial-time algorithms were provided for very restricted special cases, for example for \PDD when the food-web is a \emph{directed} tree~\cite{faller}.

\paragraph{Our contribution.}
Because \PDD has been shown to be \NP-hard already on very restricted instances~\cite{faller}, we turn to parameterized complexity in order to overcome this intractability.
Here, we consider the most natural parameters related to the solution, such as the solution size~$k$ and the threshold of diversity~$D$, and parameters that describe the structure of the input food-web~\Food.
We formally consider the decision problem, where we ask for the existence of a viable solution with diversity at least~$D$, but our algorithms actually solve the optimization problem as well.

Our most important results are as follows.
In Theorem~\ref{thm:PDD-k+height}, we prove that \PDD is \FPT when parameterized with the solution size~$k$ plus the height of the phylogenetic tree~$\Tree$.
This also implies that \PDD is \FPT with respect to~$D$, the diversity threshold.
However, both problems, \PDD and \sPDD, are unlikely to admit a kernel of polynomial size when parameterized by~$D$.
We also consider the dual parameter~$\Dbar$, that is, the amount of diversity that is lost from~$\Tree$, and show that \PDD is \Wh{1}-hard with respect to~$\Dbar$.

We then consider the structure of the food-web.
In particular, we consider the special case that each connected component of the food-web~$\Food$ is a complete digraph---so called cluster graphs.
As we will show, this case is structurally equivalent to the case that each connected component of~$\Food$ is a star with one source vertex.
Thus, this case describes a particularly simple dependency structure, where taxa are either completely independent or have a common source.
We further show that \PDD is \NP-hard in this special case while~\sPDD admits an \FPT-algorithm when parameterized by the vertex deletion distance to cluster graphs.
Our results thus yield structured classes of food-webs where the complexity of \sPDD and~\PDD strongly differ.
Finally, we show that \sPDD is \FPT with respect to the treewidth of the food-web and therefore can be solved in polynomial time if the food-web is a tree (Theorem~\ref{thm:PDD-tw}).
Our result disproves a conjecture of Faller et al.~\cite[Conjecture~4.2]{faller} stating that \sPDD is NP-hard even when the food-web is a tree.
Again, this result shows that~\sPDD~can be substantially easier than \PDD{} on some structured classes of food-webs.

Table~\ref{tab:PDD-results} gives an overview over the results in this chapter.
Here, Dom.-Source stands for Dominant-Source; a graph class for DAGs \Food in which there is a single source~$\sigma$ and an edge~$\sigma x$ in \Food for each~$x\in V(\Food) \setminus \{\sigma\}$.
Figure~\ref{fig:PDD-PDD-results} gives results for structural parameters and sets these parameters into relation.
\begin{table}[t]\centering
	\caption{An overview over the parameterized complexity results for \PDD and \sPDD.
		``D.t. $\tau$'' stands for ``Distance to~$\tau$''---the number of vertices that need to be removed to obtain graph class~$\tau$.}
	
	\resizebox{\columnwidth}{!}{%
		\myrowcols
		\begin{tabular}{l ll ll}
			\hline
			Parameter & \multicolumn{2}{c}{\sPDD} & \multicolumn{2}{c}{\PDD} \\
			\hline
			Budget $k$ & \FPT & Thm.~\ref{thm:PDD-k-stars} & \XP & Obs.~\ref{obs:PDD-k-XP}\\
			Diversity $D$ & \FPT & Thm.~\ref{thm:PDD-D} & \FPT & Thm.~\ref{thm:PDD-D}\\
			& no poly kernel & Thm.~\ref{thm:PDD-D-kernel-PDss} & no poly kernel & Thm.~\ref{thm:PDD-D-kernel-PDts}\\
			Species-loss $\kbar$ & \Wh{1}-hard, \XP & Prop.~\ref{prop:PDD-Dbar},~Obs.~\ref{obs:PDD-k-XP}~~ & \Wh{1}-hard, \XP & Prop.~\ref{prop:PDD-Dbar},~Obs.~\ref{obs:PDD-k-XP}\\
			Diversity-loss $\Dbar$ & \Wh{1}-hard, \XP & Prop.~\ref{prop:PDD-Dbar},~Obs.~\ref{obs:PDD-k-XP} & \Wh{1}-hard, \XP & Prop.~\ref{prop:PDD-Dbar},~Obs.~\ref{obs:PDD-k-XP}\\
			\hline
			D.t. Cluster & \FPT & Thm.~\ref{thm:PDD-dist-cluster-FPT} & \NP-h for 0 & Thm.~\ref{thm:PDD-dist-cluster-hardness}\\
			D.t. Co-Cluster & \FPT & Thm.~\ref{thm:PDD-co-cluster} & \FPT & Thm.~\ref{thm:PDD-co-cluster}\\
			Treewidth & \FPT & Thm.~\ref{thm:PDD-tw} & \NP-h for 1 & \cite{faller}\\
			Max Leaf \# & \FPT & Thm.~\ref{thm:PDD-tw} & \NP-h for 2 & Cor.~\ref{cor:PDD-maxleaf}\\
			D.t. Dom.-Source & \NP-h for 1 & Prop.~\ref{prop:PDD-DS} & \NP-h for 1 & Prop.~\ref{prop:PDD-DS}\\
			D.t. Bipartite & \NP-h for 0 & \cite{faller} & \NP-h for 0 & \cite{faller}\\
			Max Degree & \NP-h for 3 & \cite{faller} & \NP-h for 3 & \cite{faller}\\
			\hline
		\end{tabular}
	}
	\label{tab:PDD-results}
\end{table}

\paragraph*{Structure of the Chapter.}
In Section~\ref{sec:PDD-prelims}, we formally define \PDDlong and prove some simple initial results.
In Section~\ref{sec:PDD-k} and Section~\ref{sec:PDD-D}, we consider parameterization by the budget~$k$ and the threshold of diversity~$D$, the two integers of the input.
In Section~\ref{sec:PDD-kbar}, we consider \PDD with respect to the number of taxa that go extinct and the acceptable loss of diversity.
In Section~\ref{sec:PDD-structural}, we consider parameterization by structural parameters of the food-web. 
Finally, in Section~\ref{sec:PDD-discussion}, we discuss future research ideas.

\section{Preliminaries}
\label{sec:PDD-prelims}
In this section, we present the formal definition of the problems, and the parameterization.
We further start with some preliminary observations.

\subsection{Definitions}

\subparagraph*{Food-Webs.}
For a given set of taxa $X$, a \emph{food-web~$\Food=(X,E)$ on $X$} is a directed acyclic graph.
If $xy$ is an edge of $E$ then $x$ is \emph{prey} of $y$ and $y$ is a \emph{predator} of~$x$.
The set of prey and predators of $x$ is denoted with $\prey{x}$ and $\predators{x}$, respectively.
A taxon $x$ with an empty set of prey is a \emph{source} and $\sources$ denotes the set of sources in the food-web~\Food.

For a given taxon $x\in X$, we define $X_{\le x}$ to be the set of taxa $X$ which can reach~$x$ in \Food.
Analogously, $X_{\ge x}$ is the set of taxa that can be reached from $x$ in \Food.

For a given food-web~\Food and a set~$Z \subseteq X$ of taxa,
a set of taxa~$A\subseteq Z$ is \emph{$Z$-viable} if~$\sourcespersonal{\Food[A]} \subseteq \sourcespersonal{\Food[Z]}$.
A set of taxa~$A\subseteq X$ is \emph{viable} if $A$ is $X$-viable.
In other words, a set~$A \subseteq Z \subseteq X$ is~$Z$-viable or viable if each vertex with an in-degree of~0 in~$\Food[A]$ also has in-degree~0 in~$\Food[Z]$ or in~\Food, respectively.

\subparagraph*{Problem Definitions and Parameterizations.}
Formally, the main problem we regard in this chapter is defined as follows.

\problemdef{\PDDlong (\PDD)}{
	A phylogenetic~$X$-tree~$\Tree$, a food-web~$\Food$ on $X$, and integers~$k$ and~$D$}{
	Is there a viable set~$S\subseteq X$ such that~$|S|\le k$, and~$\PD(S)\ge D$}

Additionally, in \sPDDlong~(\sPDD) we consider the special case of \PDD in which the phylogenetic~$X$-tree~$\Tree$ is a star.

Throughout this chapter, we adopt the common convention that $n$ is the number of taxa~$|X|$
and we let $m$ denote the number of edges in the food-web $|E(\Food)|$.
Observe that \Tree has $\Oh(n)$~edges.
Such a relation does not necessarily hold for~$m$.

For an instance $\Instance = (\Tree,\Food,k,D)$ of \PDD, we define the parameter~$\Dbar$ to\lb be~$\PD(X)-D = \sum_{e \in E}\w(e) - D$.
Informally, $\Dbar$ is the acceptable loss of diversity: If we save a set of taxa $A \subseteq X$ with $\PD(A)\geq D$, then the total amount of diversity we lose from $\Tree$ is at most $\Dbar$.
Similarly, we define~$\kbar := |X| - k$.
That is, \kbar~is the minimum number of species that need to become extinct.

\subsection{Preliminary Observations}
We present some observations and reduction rules which we use throughout this chapter.

\begin{observation}
	\label{obs:PDD-viable}
	Let $\Food$ be a food-web.
	A set~$A \subseteq X$ is viable if and only if there are edges~$E_A \subseteq E(\Food)$ such that every connected component in the graph~$(A,E_A)$ is a tree with the root in $\sources$.
\end{observation}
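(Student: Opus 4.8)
The plan is to unwind the definition of viability into a purely local condition and then prove each implication separately, one by an explicit edge selection and the other by a short counting argument. Recall that $A \subseteq X$ is viable precisely when $\sourcespersonal{\Food[A]} \subseteq \sources$; equivalently, every vertex $v \in A$ with no prey inside $A$ (that is, $\prey{v} \cap A = \emptyset$) must already be a global source, $v \in \sources$. Contrapositively, viability of $A$ states that every $v \in A \setminus \sources$ has at least one prey in $A$, i.e.\ in-degree at least $1$ in $\Food[A]$. I would use this reformulation throughout.

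For the forward direction, I would assume $A$ is viable and build $E_A$ greedily: for each $v \in A \setminus \sources$ choose one edge $uv \in E(\Food)$ with $u \in A$ (such an edge exists by the reformulation above) and add it to $E_A$, while sources receive no incoming edge. By construction every vertex of $A$ has in-degree at most $1$ in $(A,E_A)$, and the in-degree-$0$ vertices are exactly the elements of $A \cap \sources$. Since $E_A \subseteq E(\Food)$ and $\Food$ is a DAG, $(A,E_A)$ is acyclic. The key step is then a counting argument on a weakly connected component $C$: acyclicity forces at least one in-degree-$0$ vertex, so $|E(C)| = |C| - \#\{\text{in-degree-}0\text{ vertices in }C\} \le |C| - 1$, while connectedness gives $|E(C)| \ge |C| - 1$. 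Hence each component has exactly one in-degree-$0$ vertex, its underlying undirected graph is a tree, and with in-degree at most $1$ it is in fact an out-tree (arborescence) rooted at that vertex; by the construction this root lies in $\sources$.

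For the backward direction, I would assume such an edge set $E_A$ exists and verify the local condition. Take any $v \in A \setminus \sources$. Since each component of $(A,E_A)$ is a tree rooted at a source and $v$ is not a source, $v$ is not the root of its component and therefore has a parent edge in $E_A$. Because the root has no incoming edge in $\Food$ and every edge points from prey to predator, each component is oriented as an out-tree, so this parent edge has the form $uv$ with $u \in A$ being prey of $v$. Thus $\prey{v} \cap A \neq \emptyset$, so $v$ is not a source of $\Food[A]$. This establishes $\sourcespersonal{\Food[A]} \subseteq \sources$, i.e.\ $A$ is viable.

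The main obstacle to watch for is making the phrase ``tree with the root in $\sources$'' precise enough that the two directions genuinely match: one must confirm that the counting argument yields an \emph{arborescence} (a unique source per component, with edges directed away from the root) rather than merely an undirected tree, since the backward direction relies on the parent edge pointing from prey to predator. Once this orientation is pinned down using the DAG property of $\Food$ together with the fact that sources have no incoming edges, both implications close routinely.
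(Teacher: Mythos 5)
Your proposal is correct and takes essentially the same route as the paper: both rest on the reformulation that viability of $A$ means every $v \in A \setminus \sources$ has some prey in $A$, with the forward direction realized by picking one prey-edge per non-source taxon and the backward direction reading off $\sourcespersonal{\Food[A]} \subseteq \sources$ from the parent edges. The paper's own proof is only a two-sentence sketch, and your counting argument (in-degree at most~$1$ plus weak connectivity forces exactly one in-degree-$0$ root per component, hence an arborescence) supplies precisely the verification it leaves implicit; note only that the out-tree orientation in the backward direction is guaranteed by the paper's definition of a rooted tree as a directed arborescence, rather than needing to be re-derived from the DAG and source properties.
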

\begin{proof}
	If~$A$ is viable then $\sourcespersonal{\Food[A]}$ is a subset of \sources.
	It follows that for each taxon~$x \in A$, either~$x$ is a source in~\Food or~$A$ contains a prey~$y$ of~$x$.
	
	Conversely, if a graph~$(A,E_A)$ exists in which all connected components are trees, then explicitly the sources of~$\Food[A]$ are a subset of~\sources.
\end{proof}

\begin{observation}
	\label{obs:PDD-solution-size}
	Let $\Instance = (\Tree,\Food,k,D)$ be a \yes-instance of \PDD.
	Then, unless~$k > n$, a viable set~$S \subseteq X$ with~$\PD(S) \ge D$ exists which has a size of exactly~$k$.
\end{observation}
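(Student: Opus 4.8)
The plan is to start from an arbitrary witness solution and to \emph{pad} it up to size exactly $k$ without ever destroying viability, relying on the fact that phylogenetic diversity can only grow when taxa are added. Concretely, since $\Instance$ is a \yes-instance, there is a viable set $S' \subseteq X$ with $|S'| \le k$ and $\PD(S') \ge D$. The first routine observation I would record is that $\PD$ is monotone under inclusion: if $A \subseteq B$, then $E_{\Tree}(A) \subseteq E_{\Tree}(B)$ because $\off(e)\cap A \ne \emptyset$ implies $\off(e)\cap B \ne \emptyset$, and as all edge weights in $\Tree$ are positive we get $\PD(A) \le \PD(B)$ from Equation~(\ref{eqn:PDdef}). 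Hence any viable superset $S \supseteq S'$ of size exactly $k$ automatically satisfies $\PD(S) \ge \PD(S') \ge D$, so it suffices to produce a viable $S$ with $|S| = k$ and $S \supseteq S'$.

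The heart of the argument is an extension step: whenever a viable set $A \subsetneq X$ is given, there is a taxon $x \in X \setminus A$ such that $A \cup \{x\}$ is again viable. To find such an $x$, I would fix a topological order of the DAG $\Food$ and let $x$ be the taxon of $X \setminus A$ that appears earliest in this order. If $x$ is a source of $\Food$, then $A \cup \{x\}$ is viable since adding a source can only enlarge $\sourcespersonal{\Food[A]}$ by a vertex that is already a source of $\Food$. Otherwise $x$ has at least one prey, and every prey of $x$ precedes $x$ in the topological order; by the minimality in the choice of $x$, no element of $X \setminus A$ precedes $x$, so all prey of $x$ lie in $A$. Thus $x$ has a prey in $A \cup \{x\}$, and adding $x$ introduces no new source of $\Food[A \cup \{x\}]$ outside $\sourcespersonal{\Food}$; invoking Observation~\ref{obs:PDD-viable}, the set $A \cup \{x\}$ is viable.

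Having the extension step, I would finish by iterating it: starting from $S'$, repeatedly add one taxon as above. Because we assume $k \le n$, whenever the current set has size strictly less than $k$ it is a proper subset of $X$, so the extension step applies and we can grow the set by one element while preserving viability. After exactly $k - |S'|$ steps we obtain a viable set $S$ with $|S| = k$, $S \supseteq S'$, and therefore $\PD(S) \ge D$, as claimed. The only point requiring genuine care is the extension step, and the main obstacle there is arguing that adding $x$ cannot break the viability of \emph{other} taxa already in $A$; this is handled by noting that adding a vertex never removes an incoming edge, so no taxon of $A$ can newly become a source of the induced subgraph, which is exactly what Observation~\ref{obs:PDD-viable} needs.
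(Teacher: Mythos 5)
Your proposal is correct and follows essentially the same greedy padding argument as the paper: add one taxon at a time that is either a source or has prey already in the current set, using monotonicity of $\PD$. The paper simply picks any element of $(\predators{S} \cup \sources) \setminus S$ and asserts this set is non-empty whenever $S \ne X$, whereas your choice of the topologically earliest taxon of $X \setminus S$ is a constructive instance of (and in effect a proof of) that same non-emptiness claim.
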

\begin{proof}
	Let~$S$ be a solution for \Instance.
	If~$S$ has a size of~$k$, nothing remains to show.
	Otherwise, observe that $S\cup \{x\}$ is viable and $\PD(S\cup \{x\}) \ge \PD(S)$ for each taxon~$x \in (\predators{S} \cup \sources) \setminus S$.
	Because $(\predators{S} \cup \sources) \setminus S$ is non-empty, unless $S = X$,
	we conclude that~$S\cup \{x\}$ is a solution and iteratively, there is a solution of size~$k$. 
\end{proof}

\begin{observation}
	\label{obs:PDD-one-source}
	Let $\Instance = (\Tree,\Food,k,D)$ be an instance of \PDD.
	In $\Oh(|\Instance|^2)$~time, an equivalent instance $\Instance' = (\Tree',\Food',k',D')$ of \PDD with~$D' \in \Oh(D)$ and only one source in $\Food'$, can be computed.
\end{observation}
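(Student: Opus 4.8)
The plan is to introduce a single fresh \emph{super-source} taxon~$\sigma$ that sits below every current source of~$\Food$ and is forced into every non-empty viable set; merging the sources this way costs only an additive~$1$ in the budget and the diversity threshold. I may assume $D\ge 1$, since if $D=0$ the empty set is viable with $\PD(\emptyset)=0\ge D$, so $\Instance$ is a trivial \yes-instance and can be replaced by a fixed canonical single-source \yes-instance.

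First I would build~$\Food'$. Add a new taxon $\sigma\notin X$ and insert an edge $\sigma x$ for every $x\in\sources$. Since $\Food$ is acyclic and $\sigma$ has no incoming edge, $\Food'$ is again a DAG, and $\sigma$ is its unique source: every old source now has $\sigma$ as prey, while every non-source already had prey in~$\Food$. Computing $\sources$ and inserting the at most~$n$ edges takes $\Oh(n+m)$ time. The tree is built independently: attach $\sigma$ as an additional leaf-child of the old root~$\rho$ with $\w'(\rho\sigma):=1$, leaving all other edges and weights untouched. Then $\rho$ still has out-degree at least~$2$, so $\Tree'$ is a valid phylogenetic $(X\cup\{\sigma\})$-tree. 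Because $\sigma$ hangs directly off~$\rho$, every original edge $e=uv$ still satisfies $\sigma\notin\off(v)$, so by~(\ref{eqn:PDdef}) the original edges contribute to the diversity in~$\Tree'$ exactly as in~$\Tree$, while the new edge $\rho\sigma$ contributes exactly~$1$ whenever $\sigma$ is selected. Hence $\PDsub{\Tree'}(B\cup\{\sigma\})=\PD(B)+1$ for every $B\subseteq X$. I would therefore set $k':=k+1$ and $D':=D+1$, which is in $\Oh(D)$.

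The heart of the argument---and the step I expect to take the most care---is the viability correspondence $B\leftrightarrow B\cup\{\sigma\}$ between~$\Food$ and~$\Food'$. I would establish three claims. (i)~Every non-empty viable set $S'$ of~$\Food'$ contains~$\sigma$: the DAG $\Food'[S']$ has a source, and viability forces this source to lie in $\sourcespersonal{\Food'}=\{\sigma\}$. (ii)~If $S'$ is viable in~$\Food'$ and $\sigma\in S'$, then $B:=S'\setminus\{\sigma\}$ is viable in~$\Food$: any vertex $y\neq\sigma$ that were a source of $\Food[B]$ would have the same (empty) prey set in $\Food'[S']$, hence be a second source of $\Food'[S']$, contradicting viability. (iii)~Conversely, if $B$ is viable in~$\Food$ then $B\cup\{\sigma\}$ is viable in~$\Food'$, since $\sigma$ becomes the sole source once the old sources in~$B$ acquire $\sigma$ as prey (cf.\ Observation~\ref{obs:PDD-viable}). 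The main obstacle here is purely that of arguing the induced-subgraph source structure cleanly; once (i)--(iii) are in place the rest is bookkeeping.

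Finally I would combine everything. Forward: a solution $B$ of~$\Instance$ is non-empty as $D\ge 1$, so $B\cup\{\sigma\}$ is viable in~$\Food'$ with $|B\cup\{\sigma\}|\le k'$ and $\PDsub{\Tree'}(B\cup\{\sigma\})=\PD(B)+1\ge D'$, giving a \yes-instance $\Instance'$. Backward: a solution $S'$ of~$\Instance'$ satisfies $\PDsub{\Tree'}(S')\ge D'\ge 2>0$, so $S'$ is non-empty and contains~$\sigma$ by~(i); then $B:=S'\setminus\{\sigma\}$ is viable in~$\Food$ by~(ii), has size $\le k$, and $\PD(B)=\PDsub{\Tree'}(S')-1\ge D$. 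Thus the instances are equivalent, $\Food'$ has a single source, $D'\in\Oh(D)$, and the construction is linear, well within the claimed $\Oh(|\Instance|^2)$ bound.
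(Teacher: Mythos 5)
Your proof is correct and takes essentially the same approach as the paper: introduce one fresh super-source taxon with edges to all old sources of~$\Food$, attach it as a child of the root of~$\Tree$, and set $k' := k+1$, relying on viability to force the new taxon into every non-empty solution. The only cosmetic difference is that the paper assigns the new edge weight $D+1$ and sets $D' := 2D+1$ where you use weight~$1$ and $D' := D+1$; both choices work, and your separate treatment of $D=0$ is in fact unnecessary, since $D' \ge 1$ already excludes the empty set as a solution of~$\Instance'$.
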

\begin{proof}
	\proofpara{Construction}
	Let $\Instance = (\Tree,\Food,k,D)$ be an instance of \PDD.
	Add a new taxon~$\star$ to \Food and add edges from~$\star$ to each taxon~$x$ of~$\sources$ to obtain $\Food'$.
	To obtain $\Tree'$, add $\star$ as a child to the root $\rho$ of \Tree and set $\w'(\rho \star) = D+1$ and~$\w'(e) = \w(e)$ for each $e\in E(\Tree)$.
	Finally, set $k' := k+1$ and $D' := 2\cdot D + 1$.
	
	\proofpara{Correctness}
	All steps can be performed in $\Oh(|\Instance|^2)$~time.
	Because $S\subseteq X$ is a solution for~\Instance if and only if $S\cup \{\star\}$ is a solution for $\Instance'$,
	the instance $\Instance' = (\Tree',\Food',k+1,2\cdot D+1)$ is a \yes-instance of \PDD if and only if \Instance is a \yes-instance of~\PDD.
\end{proof}

\begin{rr}
	\label{rr:each-taxon-savable}
	Let~$R \subseteq X$ be the set of taxa which have a distance of at least~$k$ to every source.
	Then, set~$\Food' := \Food - R$ and~$\Tree' := \Tree - R$.
\end{rr}
\begin{lemma}
	\label{lem:PDD-each-taxon-savable}
	Reduction Rule~\ref{rr:each-taxon-savable} is correct and in $\Oh(n+m)$~time can be applied exhaustively.
\end{lemma}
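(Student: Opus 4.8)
The plan is to prove two things about Reduction Rule~\ref{rr:each-taxon-savable}: its \emph{correctness} (the reduced instance is equivalent to the original) and the claimed \emph{running time} (exhaustive application in $\Oh(n+m)$ time). The central observation driving correctness is that a taxon $x$ which lies at distance at least~$k$ from every source can never appear in a viable solution of size at most~$k$. Indeed, by Observation~\ref{obs:PDD-viable}, any viable set $A$ containing $x$ must contain an entire directed path in $\Food$ from some source to $x$; since the shortest such path has at least $k$ edges, it contains at least $k+1$ vertices, so $|A| \ge k+1 > k$. Hence no taxon of $R$ can belong to any viable set of size at most~$k$.

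First I would formalize this into the equivalence. Let $\Instance = (\Tree,\Food,k,D)$ be the original instance and $\Instance' = (\Tree',\Food',k,D)$ the reduced one, where $\Food' = \Food - R$ and $\Tree' = \Tree - R$ with $R$ the set of taxa at distance at least $k$ from every source. For the forward direction, suppose $S$ is a solution for $\Instance$. By the observation above, $S \cap R = \emptyset$, so $S \subseteq X \setminus R = V(\Food')$. I would then argue that $S$ remains viable in $\Food'$: since no vertex of $S$ depends on a vertex of $R$ (a prey of $s \in S$ in $S$ is itself in $S$, hence not in $R$), the sources of $\Food'[S] = \Food[S]$ are unchanged, and they remain sources of $\Food'$ because deleting $R$ cannot create new in-edges. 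The phylogenetic diversity $\PD(S)$ is computed from the paths in $\Tree$ to taxa in $S$; since $S \subseteq X \setminus R$, these paths are preserved in $\Tree'$ (one must check that $\Tree - R$ still contains all ancestors of vertices in $S$, which holds because $\Tree'$ retains every vertex on a root-to-$S$ path). Thus $\PD_{\Tree'}(S) = \PD_{\Tree}(S) \ge D$. The converse direction is immediate: any solution for $\Instance'$ is a set $S \subseteq X \setminus R \subseteq X$ that is viable in $\Food$ (again sources are unaffected) with $\PD_\Tree(S) = \PD_{\Tree'}(S) \ge D$ and $|S| \le k$.

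For the running time, the key point is that $R$ can be identified in a single linear-time pass. The plan is to compute, for every taxon, its distance to the nearest source via a breadth-first search that starts simultaneously from all source vertices, traversing edges in the forward (prey-to-predator) direction. Since $\Food$ is a DAG, this multi-source BFS on the $\Oh(n)$ vertices and $\Oh(m)$ edges runs in $\Oh(n+m)$ time and assigns each taxon the length of its shortest path from any source; the set $R$ consists of exactly those taxa whose assigned distance is at least~$k$ (including unreachable taxa, but in a food-web every taxon is reachable from some source, so this case is vacuous). Deleting $R$ from $\Food$ and $\Tree$ is also linear. Finally I would note that the rule needs to be applied only \emph{once}: deleting $R$ cannot decrease any remaining taxon's distance to a source (distances in the induced subgraph are at least as large), so after one application no further taxon falls into $R$, and the rule is exhaustively applied. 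This gives the overall $\Oh(n+m)$ bound.

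The step I expect to require the most care is verifying that tree-distances and viability are genuinely preserved under deletion of $R$ --- specifically that $\Tree - R$ retains all internal ancestors needed to evaluate $\PD(S)$ correctly, and that removing $R$ does not accidentally turn a non-source vertex of a surviving solution into a new source that was not a source before (which could spuriously validate an otherwise non-viable set in the reverse direction). Both hinge on the simple facts that $R$ consists only of leaves of the phylogenetic tree (taxa) and that deleting vertices from a DAG only removes in-edges, never adds them; making these two structural observations explicit should close the argument cleanly.
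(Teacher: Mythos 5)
Your proposal is correct and follows essentially the same route as the paper's (much terser) proof: viability forces a source-to-$x$ path of at least $k+1$ vertices inside any viable set containing $x\in R$, the set $R$ is found by a (multi-source) breadth-first search in $\Oh(n+m)$ time, and a single application is already exhaustive. Your additional checks that $\PD$-values and viability are preserved under deleting $R$ are sound elaborations of what the paper leaves implicit.
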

\begin{proof}
	By definition, each viable set of taxa which has a size of~$k$ is disjoint from~$R$.
	Therefore, the set $R$ is disjoint from every solution.
	With a breadth-first search, the set $R$ can be found in~$\Oh(n+m)$ time .
	This is also the total running time, since one application of the rule is exhaustive.
\end{proof}

\begin{rr}
	\label{rr:maxw<D}
	Apply Reduction Rule~\ref{rr:each-taxon-savable} exhaustively.
	If~$\max_\w \ge D$ return~\yes.
\end{rr}
After Reduction Rule~\ref{rr:each-taxon-savable} has been applied exhaustively,
for any taxon~$x\in X$ there is a viable set~$S_x$ of size at most~$k$ with~$x \in S_x$.
If edge~$e$ has a weight of at least~$D$, then for each taxon~$x$ which is an offspring of~$e$,
the set~$S_x$ is viable, has a size of at most~$k$, and~$\PD(S_x) \ge \PD(\{x\}) \ge D$.
So,~$S_x$ is a solution.

\begin{rr}
	\label{rr:redundant-edges}
	Given an instance~$\Instance = (\Tree,\Food,k,D)$ of \PDD with\lb edges~$vw, uw \in E(\Food)$ for taxa~$v$,~$w$ and each~$u\in \prey{v}$.
	If~$v$ is not a source, then remove $vw$ from $E(\Food)$.
\end{rr}
\begin{lemma}
	\label{lem:redundant-edges}
	\Cref{rr:redundant-edges} is correct and can be applied exhaustively in $\Oh(n^3)$~time.
\end{lemma}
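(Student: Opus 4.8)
The plan is to prove correctness by showing that $\Food$ and $\Food' := \Food - \{vw\}$ have \emph{exactly} the same viable subsets of $X$; since $\Tree$, $k$, and $D$ are left unchanged, every set retains its phylogenetic diversity, so $\Instance = (\Tree,\Food,k,D)$ and $(\Tree,\Food',k,D)$ are equivalent. The structural fact I would establish first is that the deletion changes no source status. As $v$ is not a source, there is some $u \in \prey{v}$, and by hypothesis $uw \in E(\Food)$; since $\Food$ is acyclic we have $u \neq v$, so $uw$ survives in $\Food'$. The deletion only lowers the in-degree of $w$, and this in-degree stays positive, so $\sourcespersonal{\Food'} = \sourcespersonal{\Food}$.

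For the forward direction, if $A$ is viable in $\Food'$ then, because $\Food'[A]$ arises from $\Food[A]$ by deleting at most the edge $vw$, every source of $\Food[A]$ is also a source of $\Food'[A]$; hence $\sourcespersonal{\Food[A]} \subseteq \sourcespersonal{\Food'[A]} \subseteq \sourcespersonal{\Food'} = \sourcespersonal{\Food}$, so $A$ is viable in $\Food$. The crux is the converse. Suppose $A$ is viable in $\Food$. The source sets of $\Food[A]$ and $\Food'[A]$ can differ only in the vertex $w$, and only if $v,w \in A$ and $v$ is the \emph{unique} prey of $w$ inside $A$ (so that deleting $vw$ would turn $w$ into a source of $\Food'[A]$). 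I would rule this out: since $A$ is viable in $\Food$ and $v \in A$ is not a source, some $u \in \prey{v} \cap A$ exists; then $uw \in E(\Food)$ exhibits a second prey $u \neq v$ of $w$ inside $A$, contradicting uniqueness. Therefore $\sourcespersonal{\Food'[A]} = \sourcespersonal{\Food[A]} \subseteq \sourcespersonal{\Food} = \sourcespersonal{\Food'}$ and $A$ is viable in $\Food'$. This equality of viable sets is the whole of correctness.

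For the time bound I would process the vertices $w$ in a fixed topological order of $\Food$, maintaining an adjacency matrix (so membership tests $uw \in E$ and deletions cost $\Oh(1)$) together with the in-neighbour lists $\prey{w}$. When handling $w$, I would scan its in-neighbours $v$ in the \emph{reverse} of the topological order and delete $vw$ whenever the current graph satisfies the precondition, i.e.\ $v$ is not a source and $\prey{v} \subseteq \prey{w}$; testing this costs $\Oh(\deg^{-}(v))$ by scanning $\prey{v}$. The key observation making a single pass suffice is that $\prey{w}$ changes only while $w$ is processed, and each $\prey{v}$ with $vw \in E$ changes only while $v$ is processed, which precedes $w$; hence after $w$ is handled the deletability of each of its in-edges is frozen forever. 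Moreover, the in-neighbours deleted before $v$ in the reverse order are topologically later than $v$ and therefore cannot lie in $\prey{v}$, so each test against the current $\prey{w}$ has the same outcome as against the original $\prey{w}$; this guarantees both that every deletion is a valid single application and that no deletable in-edge of $w$ remains, so the final graph is exhaustively reduced. The total work is $\sum_{vw \in E(\Food)} \deg^{-}(v) \le m \cdot n \le n^3$, on top of $\Oh(n^2)$ for building the matrix and the topological sort, giving the claimed $\Oh(n^3)$ bound.

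The main obstacle I anticipate is exactly the interaction between exhaustiveness and the fact that deleting $vw$ shrinks $\prey{w}$ and can destroy the precondition for other in-edges $v'w$; the reverse-topological processing order, combined with acyclicity, is what reconciles ``delete greedily in one pass'' with ``reach a state in which the rule no longer applies.''
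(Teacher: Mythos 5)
Your proof is correct. The correctness half is essentially the paper's argument: the paper likewise observes that viability of a solution $S$ containing $v$ yields some $u \in \prey{v} \cap S$, and since $u \in \prey{w}$ by the precondition of Reduction Rule~\ref{rr:redundant-edges}, the taxon $w$ can still be fed after $vw$ is removed; your reformulation as an exact equality of viable sets (via the invariance of $\sourcespersonal{\Food}$, using acyclicity to get $u \neq v$) is a mild strengthening of the same idea. Where you genuinely go beyond the paper is the running-time analysis. The paper only notes that a single test $\prey{v} \subseteq \prey{w}$ costs $\Oh(n)$ and concludes $\Oh(n^3)$ for the exhaustive application, implicitly assuming that one sweep over the $\Oh(n^2)$ pairs suffices; it does not address the interaction you identify, namely that deleting $vw$ shrinks $\prey{w}$, which can invalidate the precondition for other in-edges of $w$ and newly enable deletions of out-edges of $w$ (a naive restart after each of the up to $m$ deletions would only give $\Oh(n^5)$). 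Your reverse-topological single pass, combined with the observations that source status never changes (the last in-edge of a vertex is never deletable, since $\prey{v} \subseteq \{v\}$ would force $v$ to be a source), that $\prey{v}$ is frozen before $w$ is processed, and that the in-neighbours already deleted during $w$'s processing are topologically later than $v$ and hence outside $\prey{v}$, is exactly the missing justification that a single pass both performs only valid applications and leaves no applicable pair. This makes your time analysis more rigorous than the paper's while matching its $\Oh(n^3)$ bound.
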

\begin{proof}
	\proofpara{Correctness}
	If $\Instance'$ is a \yes-instance, then so is $\Instance$.
	
	Conversely, let $\Instance$ be a \yes-instance of \PDD with solution $S$.
	If $v\not\in S$, then $S$ is also a solution for instance $\Instance'$.
	If $v\in S$ then because $S$ is viable in $\Food$, some vertex~$u$ of $\prey{v}$ is in~$S$.
	Consequently, $S$ is also viable in $\Food - \{vw\}$, as~$w$ still could be fed by~$u$ (if $w\in S$).
	
	\proofpara{Running time}
	For two taxa~$v$ and~$w$, we can check~$\prey{v} \subseteq \prey{w}$ in~$\Oh(n)$~time.
	Consequently, an exhaustive application of Reduction Rule~\ref{rr:redundant-edges} takes $\Oh(n^3)$~time.
\end{proof}

\section{The Solution Size $k$}
\label{sec:PDD-k}
In this section, we consider parameterization by the size of the solution $k$.
First, we observe that \PDD is \XP when parameterized by $k$ and \kbar.
Recall that~$\kbar := n - k$ is the minimum number of taxa which need to go extinct.
In Section~\ref{sec:PDD-k-stars}, we show that \sPDD is \FPT with respect to $k$.
We generalize this result in Section~\ref{sec:PDD-k+height} by showing that \PDD is \FPT when parameterized by $k+\height_{\Tree}$.

\begin{observation}
	\label{obs:PDD-k-XP}
	\PDD can be solved in $\Oh(n^{k + 2})$ and $\Oh(n^{\kbar + 2})$~time.
\end{observation}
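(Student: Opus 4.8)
The statement to prove is Observation~\ref{obs:PDD-k-XP}: \PDD can be solved in $\Oh(n^{k+2})$ and $\Oh(n^{\kbar+2})$ time. This is an \XP-brute-force result, so the plan is to enumerate candidate solutions of the appropriate size and verify each in polynomial time.

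The plan is to prove the two running time bounds separately, each by exhaustive enumeration over subsets of taxa. For the $\Oh(n^{k+2})$ bound, I would invoke Observation~\ref{obs:PDD-solution-size}, which guarantees that whenever the instance is a \yes-instance (and $k \le n$), there exists a solution of size \emph{exactly} $k$. Hence it suffices to iterate over all $\binom{n}{k} \in \Oh(n^k)$ subsets $S \subseteq X$ with $|S| = k$, and for each such $S$ check (i)~whether $S$ is viable and (ii)~whether $\PD(S) \ge D$. If some $S$ passes both tests we return \yes; otherwise we return \no. The case $k > n$ is handled trivially, since then every subset of $X$ is admissible and we may simply test $S = X$ (or equivalently check that the full viable set $X$ achieves diversity at least $D$).

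For the $\Oh(n^{\kbar+2})$ bound, I would use the complementary view: rather than choosing the $k$ surviving taxa, we choose the $\kbar = n-k$ taxa that go extinct. By Observation~\ref{obs:PDD-solution-size}, we may again restrict attention to solutions of size exactly $k$, so the extinct set $R := X \setminus S$ has size exactly $\kbar$. Thus we iterate over all $\binom{n}{\kbar} \in \Oh(n^{\kbar})$ subsets $R \subseteq X$ with $|R| = \kbar$, set $S := X \setminus R$, and test viability of $S$ and whether $\PD(S) \ge D$. The correctness in both directions is immediate: any solution of size $k$ arises as $X \setminus R$ for exactly one $R$ of size $\kbar$, and conversely every $S$ we construct has the correct size.

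The only remaining point is the per-subset verification cost, which I claim is polynomial, contributing the extra $n^2$ factor. Checking viability of a fixed set $S$ reduces (by Observation~\ref{obs:PDD-viable}, or directly from the definition) to confirming that every vertex of in-degree~$0$ in $\Food[S]$ is a source of $\Food$; this can be done in $\Oh(n + m) \subseteq \Oh(n^2)$ time by inspecting, for each $x \in S$, whether $x \in \sources$ or $S$ contains some prey of $x$. Computing $\PD(S)$ is done by marking the edges $E_\Tree(S)$ of the tree that have an offspring in $S$ and summing their weights, which also takes $\Oh(n)$ time since $\Tree$ has $\Oh(n)$ edges. Multiplying the $\Oh(n^k)$ (respectively $\Oh(n^{\kbar})$) subsets by the $\Oh(n^2)$ verification cost yields the claimed $\Oh(n^{k+2})$ and $\Oh(n^{\kbar+2})$ bounds. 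I do not anticipate a genuine obstacle here; the main thing to be careful about is the appeal to Observation~\ref{obs:PDD-solution-size} to justify restricting to exact size $k$ (so that the $k$- and $\kbar$-enumerations are truly complementary), together with the trivial handling of the degenerate case $k > n$.
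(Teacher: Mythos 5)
Your proposal is correct and matches the paper's proof: the paper likewise enumerates all size-$k$ subsets (justified by Observation~\ref{obs:PDD-solution-size}), checks viability and diversity in $\Oh(n^2)$ time per subset, and obtains the $\kbar$ bound from the identity $\binom{n}{k} = \binom{n}{\kbar}$, which is just your complement-enumeration phrased differently.
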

\begin{proof}
	\proofpara{Algorithm}
	Iterate over the sets $S\subseteq X$ of size $k$.
	Return \yes if there is a viable set~$S$ with $\PD(S) \ge D$.
	Return \no if there is no such set.
	
	\proofpara{Correctness and Running time}
	The correctness of the algorithm follows from Observation~\ref{obs:PDD-solution-size}.
	Checking whether a set $S$ is viable and has diversity of at least $D$ can be done $\Oh(n^2)$~time.
	The claim follows because there are $\binom{n}{k} = \binom{n}{n-k} = \binom{n}{\kbar}$ subsets of $X$ of size~$k$.
\end{proof}

\subsection{s-PDD With $k$}
\label{sec:PDD-k-stars}
We now show that \sPDD is \FPT when parameterized by the size of the solution~$k$.

\begin{theorem} \label{thm:PDD-k-stars}
	\sPDD can be solved in $\Oh(2^{3.03 k + o(k)} \cdot nm \cdot \log n)$~time.
\end{theorem}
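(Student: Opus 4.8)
The plan is to recast \sPDD as the problem of finding a maximum-weight rooted arborescence of prescribed size in a DAG, and then to attack it with color coding. Since $\Tree$ is a star we have $\PD(S)=\sum_{x\in S}\w(\rho x)$, so we only need to maximize the total leaf-weight of a viable set. Using Observation~\ref{obs:PDD-one-source} I would add a super-source $\star$ feeding all original sources, so that $\Food'$ has a single source $\star$; by Observation~\ref{obs:PDD-viable} a set $S\subseteq X$ is viable in $\Food$ exactly when $S\cup\{\star\}$ is viable in $\Food'$, and viability with a unique source forces $S\cup\{\star\}$ to be spanned by a single tree rooted at $\star$ whose edges point from prey to predator. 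Hence $S\cup\{\star\}$ is a connected out-arborescence rooted at $\star$ in $\Food'$. By Observation~\ref{obs:PDD-solution-size} we may assume $|S|=k$ (with $k>n$ handled trivially), so the task becomes: find an arborescence of exactly $k+1$ vertices rooted at $\star$ that maximizes $\sum_{x\in S}\w(\rho x)$, and test whether this is at least $D$.

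Second, I would apply color coding. Using the deterministic construction following Definition~\ref{def:perfectHashFamily}, build an $(n,k+1)$-perfect hash family $\mathcal{H}$ of functions $f\colon[n]\to[k+1]$ of size $e^{k}k^{\Oh(\log k)}\log n$. For each coloring $f\in\mathcal{H}$ I would solve the \emph{colorful} variant by a dynamic program over $\Food'$ processed in reverse topological order. The entry $\DP[v,C]$ stores the maximum weight of a colorful arborescence rooted at $v$, living among the descendants of $v$ and using exactly the color set $C\ni f(v)$, with base case $\DP[v,\{f(v)\}]=\w(\rho v)$ (and $\w(\rho\star):=0$). Child-subtrees of $v$ are peeled off one at a time through the recurrence
\begin{equation*}
	\DP[v,C]=\max_{\substack{v\to u\,\in\, E(\Food')\\ \emptyset\neq C'\subseteq C\setminus\{f(v)\},\; f(u)\in C'}}\DP\bigl[v,C\setminus C'\bigr]+\DP[u,C'],
\end{equation*}
evaluating entries by increasing $|C|$. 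The answer for $f$ is $\max_{|C|=k+1}\DP[\star,C]$, and I return \yes iff this is $\ge D$ for some $f\in\mathcal{H}$. For correctness I would argue both directions: any colorful arborescence produced by the DP corresponds, after forgetting colors, to a genuine viable $S\cup\{\star\}$ of size $k+1$ with weight $\PD(S)$; conversely, if a viable solution $S$ of size $k$ exists then, since $\mathcal{H}$ is an $(n,k+1)$-perfect hash family, some $f$ is injective on $S\cup\{\star\}$, whence the DP recovers a colorful tree of weight $\ge\PD(S)\ge D$.

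For the running time, the perfect hash family is built in $e^{k}k^{\Oh(\log k)}n\log n$ time, and for a fixed $f$ the DP work is dominated by the convolution in the recurrence: summing over all $v$, all $C$, and all $C'\subseteq C$ this is $\Oh(3^{k}\cdot nm)$ after absorbing polynomial factors. Multiplying by $|\mathcal{H}|$ yields $(3e)^{k}k^{\Oh(\log k)}\cdot nm\log n=2^{3.03k+o(k)}\cdot nm\log n$, since $\log_2(3e)<3.03$ and $k^{\Oh(\log k)}=2^{o(k)}$. The construction is deterministic, so this is a genuine \FPT-algorithm.

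The main obstacle is precisely the factor separating this bound from the $2^{2.443k}$ typical of pure detection problems: because we must maximize a weight rather than merely decide reachability of a colorful solution, the merge step is a max-plus (tropical) convolution over color subsets. Unlike the Boolean/ring setting exploited for \ctPDws in Lemma~\ref{lem:timePD-cws-D}, max-plus admits no M\"obius-based fast subset convolution, so the best available option is the naive enumeration of all disjoint color pairs. This is exactly what produces the $3^{k}$ factor, and hence the $(3e)^{k}=2^{3.03k}$ running time.
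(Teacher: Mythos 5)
Your proposal is correct and takes essentially the same route as the paper: reduce to a single source (Observation~\ref{obs:PDD-one-source}), solve a colorful variant by a $\Oh(3^k\cdot nm)$ subset dynamic program over the food-web (the paper's Lemma~\ref{lem:PDD-k-stars}, which uses an auxiliary predator-indexed table $\DP'[x,p,C]$ where you instead fold child-subtrees directly into $\DP[v,\cdot]$ by increasing $|C|$ --- both are valid), and derandomize via a perfect hash family to get $(3e)^k = 2^{3.03k+o(k)}$ times $nm\log n$. The only blemish is cosmetic: your hash family should color $\star$ as well (e.g.\ an $(n+1,k+1)$-family, or reserve a fixed color for $\star$ and use an $(n,k)$-family on $X$, which is in effect what the paper does since Observation~\ref{obs:PDD-one-source} makes $\star$ an ordinary taxon).
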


In order to prove this theorem, we color the taxa and require that a solution should contain at most one taxon of each color.
Formally, the auxiliary problem which we consider is defined as follows.
In \cksPDDlong~(\cksPDD), alongside the usual input~$(\Tree,\Food,k,D)$ of \sPDD, we are given a vertex-coloring~$c: X\to [k]$ which assigns each taxon a \emph{color}~$c(x) \in [k]$.
We ask for whether there is a viable set $S \subseteq X$ of taxa such that~$\PD(S) \ge D$ and $c(S)$ is \emph{colorful}.
A set~$c(S)$ is colorful if~$c$ is injective on~$S$.
Observe that each colorful set~$S$ satisfies $|S| \le k$.
We continue to show how to\lb solve \cksPDD before we apply tools of the color coding toolbox to extend this result to the uncolored version.

\begin{lemma}
	\label{lem:PDD-k-stars}
	\cksPDD can be solved in $\Oh(3^k \cdot n \cdot m)$~time.
\end{lemma}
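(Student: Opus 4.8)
The plan is to design a dynamic programming algorithm over the food-web that, for each taxon, tracks the set of colors used by a viable partial solution within a bounded ``frontier''. Since the phylogenetic tree is a star, the diversity $\PD(S)$ decomposes additively: each chosen taxon $x$ contributes exactly $\w(\rho x)$, where $\rho$ is the root. Thus we no longer need to reason about the tree structure at all, and the only two constraints are (i)~that $S$ is viable in $\Food$ and (ii)~that $c(S)$ is colorful. The additivity of diversity on a star is the feature that makes a color-indexed DP feasible; the main work is encoding viability correctly.

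First I would fix a topological order $x_1, \dots, x_n$ of $\Food$ (possible since $\Food$ is a DAG), and after exhaustively applying Reduction Rule~\ref{rr:each-taxon-savable} assume every taxon lies within distance $k$ of a source. By Observation~\ref{obs:PDD-viable}, a set $A$ is viable if and only if its edges can be chosen so that every connected component is a tree rooted in $\sources$; equivalently, every non-source taxon $x\in A$ must have some prey $y\in\prey{x}$ already in $A$. Processing taxa in topological order lets us guarantee this incrementally: when we decide to include $x_i$, either $x_i\in\sources$, or some previously-decided prey of $x_i$ is already selected. The DP table I would use is indexed by a taxon $x_i$ and a color set $C\subseteq[k]$, storing the maximum diversity $\PD(S)$ achievable by a viable colorful set $S$ with $c(S)=C$ such that $x_i$ is the ``latest'' selected taxon (or a similar invariant that ties $C$ to feasibility). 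The recurrence extends a partial solution by adding a new taxon whose color is not in $C$ and whose inclusion preserves viability, taking $\DP[x_i, C \cup \{c(x_i)\}] = \max(\cdots) + \w(\rho x_i)$.

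The subset-sum-over-colors structure is what produces the $3^k$ factor: when combining a partial solution using colors $C'$ with a new contribution using colors $C\setminus C'$, we iterate over all pairs $(C', C\setminus C')$ of disjoint subsets, and $\sum_{C\subseteq[k]} 2^{|C|} = 3^k$. To make viability checkable within this framework, I would track, for each color set $C$, whether the selected taxa of color-set $C$ form a viable set; a clean way is to let the DP build $S$ by repeatedly adding a taxon $x$ together with a ``supporting'' prey already present, so that viability is an invariant maintained by construction rather than checked at the end. Each of the $\Oh(2^k \cdot n)$ table entries is computed by examining $\Oh(m)$ edges (or by a convolution/sum over disjoint color subsets), giving the claimed $\Oh(3^k\cdot n\cdot m)$ total. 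Finally, I return \yes iff some entry $\DP[\cdot, C]$ with $\PD\ge D$ exists.

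The main obstacle will be encoding viability inside a color-indexed DP without blowing up the state space. The naive viability check wants to know \emph{which} prey are present, but the DP state only records a color set, not the actual identities. I expect the resolution to rest on the topological-order invariant: because we add taxa in an order consistent with $\Food$, and because we only ever add $x$ when a witness prey $y\in\prey{x}\cap S$ is already committed (or $x$ is a source), viability is preserved as a loop invariant, so the color set $C$ together with the ``current'' taxon suffices. Verifying that this invariant is both necessary and sufficient---i.e. that every viable colorful solution is reachable by some order of additions respecting the topological order, and conversely that every reachable set is viable---is the delicate correctness argument I would need to write out carefully, appealing to Observation~\ref{obs:PDD-viable} for the structural characterization.
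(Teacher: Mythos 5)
There is a genuine gap, and it is exactly the one you name yourself: the state $(x_i, C)$ of your linear DP does not carry enough information to evaluate the viability condition at a transition, and the topological-order invariant does not repair this. When you extend a state by a new taxon $x_j$, soundness requires checking $\prey{x_j} \cap S \ne \emptyset$, but the state records only a color set and a latest taxon, and two partial solutions with identical state can differ in membership. Concretely: let $\star$ be the unique source, let $a_1, a_2$ have $\prey{a_1} = \prey{a_2} = \{\star\}$ with $c(a_1) = c(a_2)$ (remember the coloring in \cksPDD is arbitrary input --- the outer hashing in Theorem~\ref{thm:PDD-k-stars} only needs the colored solver to be sound for \emph{every} coloring, and injectivity is not guaranteed), let $u$ come later in the order with $\prey{u} = \{\star\}$, and let $t$ have $\prey{t} = \{a_1\}$. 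The sets $\{\star, a_1, u\}$ and $\{\star, a_2, u\}$ reach the same state $(u, \{c(\star), c(a_1), c(u)\})$; if $\w(\rho a_2) > \w(\rho a_1)$, the stored maximum corresponds to the second set, yet any extension rule that then admits $t$ (which is legitimate for the first set) produces a non-viable set and hence a false \yes. Your proposed loop invariant --- ``we only add $x$ when a witness prey is already committed'' --- is circular: enforcing it requires knowing which taxa are committed, which is precisely what the state discards. The completeness direction you sketch (every viable set is buildable in topological order) is fine; it is soundness under state-merging that breaks.

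The paper's proof avoids membership tracking by a structurally different decomposition: after reducing to a single source $\star$ (Observation~\ref{obs:PDD-one-source}), it anchors each subproblem at a taxon $x$ and lets $\DP[x,C]$ range over sets $S \subseteq X_{\ge x}$ that are \emph{$X_{\ge x}$-viable} --- so any nonempty feasible $S$ must contain $x$ itself, and every member of $S$ is fed through a chain terminating at $x$. The recurrence then combines, via an auxiliary table, over the predators $y_1,\dots,y_q$ of $x$, splitting the color set into disjoint parts (which is where the $3^k$ arises, as in your proposal). Viability of a combined set is now automatic from the anchoring: the part drawn from $X_{\ge y_{p+1}}$ has $y_{p+1}$ as its only source, and $y_{p+1}$ is a predator of $x \in S$; colorfulness (disjointness of the color parts) is what guarantees the combined taxon sets are disjoint even though the regions $X_{\ge y_i}$ overlap. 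In short, the correct fix for the problem you identified is to make the subproblem's viability \emph{relative to an anchor vertex}, recursing along the reachability structure of \Food rather than scanning a topological order --- your DP skeleton and the $3^k$ accounting are right, but the state design must change in this way before the correctness argument can go through.
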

\begin{proof}
	\proofpara{Table definition}
	Let $\Instance = (\Tree,\Food,k,D,c)$ be an instance of \cksPDD, and by Observation~\ref{obs:PDD-one-source} we assume that~$\star \in X$ is the only source in \Food.
	
	Given~$x\in X$, a set of colors $C\subseteq [k]$, and a set of taxa $X'\subseteq X$:
	A set~$S \subseteq X'$ is~\emph{$(C,X')$-feasible} if
	\begin{itemize}
		\itemsep-.35em
		\item\label{it:ka}$c(S) = C$,
		\item\label{it:kb}$c(S)$ is colorful, and
		\item\label{it:kc}$S$ is $X'$-viable.
	\end{itemize}
	
	We define a dynamic programming algorithm with tables $\DP$ and $\DP'$.
	For a taxon~$x\in X$ and a set of colors~$C\subseteq [k]$, we want entry~$\DP[x,C]$ to store the maximum~$\PD(S)$ of~$(C,X_{\ge x})$-feasible sets~$S$.
	Recall that $X_{\ge x}$ is the set of taxa which $x$ can reach in $\Food$.
	If no~$(C,X_{\ge x})$-feasible set $S \subseteq X'$ exists, we want $\DP[x,C]$ to store~$-\infty$.
	In other words, in $\DP[x,C]$ we store the biggest phylogenetic diversity of a set~$S$ which is $X_{\ge x}$-viable and $c$ bijectively maps~$S$ to~$C$.
	
	For a taxon $x$, let $y_1,\dots,y_q$ be an arbitrary but fixed order of $\predators{x}$.
	In the auxiliary table~$\DP'$, we want entry $\DP'[x,p,C]$ for~$p\in [q]$, and $C \subseteq [k]$ to store the maximum~$\PD(S)$ of~$(C,X')$-feasible sets $S \subseteq X'$, where $X' = \{x\} \cup X_{\ge y_1} \cup \dots \cup X_{\ge y_p}$.
	If no~$(C,X')$-feasible set $S \subseteq X'$ exists, we want $\DP'[x,p,C]$ to store~$-\infty$.
	
	\proofpara{Algorithm}
	As a base case, for each~$x\in X$ and~$p\in [|\predators{x}|]$
	let entries~$\DP[x,\emptyset]$ and~$\DP[x,p,\emptyset]$ store~0
	and
	let entry~$\DP[x,C]$ store~$-\infty$ if~$C$ is non-empty and~$c(x)$ does not occur in~$C$.
	For each~$x\in X$ with~$\predators{x} = \emptyset$, we store~$\w(\rho x)$ in $\DP[x,\{c(x)\}]$. Recall that~$\rho x$ is an edge because~\Tree is a star.
	
	Fix $x \in X$.
	For every~$Z \subseteq C \setminus \{c(x)\}$, we set~$\DP'[x,1,\{c(x)\} \cup Z] := \DP[y_1,Z]$.
	To compute further values, once $\DP'[x,q,Z]$ for each $q\in [p]$, and every $Z \subseteq C$ is computed, for~$Z \subseteq C \setminus \{c(x)\}$ we use the recurrence
	\begin{equation}
		\label{eqn:recurrence-k}
		\DP'[x,p+1,\{c(x)\} \cup Z] :=
		\max_{Z'\subseteq Z}
		\DP'[x,p,\{c(x)\} \cup Z\setminus Z']
		+
		\DP[y_{p+1},Z'].
	\end{equation}
	
	Finally, we set $\DP[x,C] := \DP'[x,q,C]$ for every~$C \subseteq [k]$.
	
	We return \yes if~$\DP[\star,C]$ stores~1 for some $C \subseteq [k]$.
	Otherwise, we return \no.

	\proofpara{Correctness}
	The base cases are correct.

	The tables are computed first for taxa further away from the source and with increasing size of $C$.
	Assume that for a fixed taxon $x$ with predators~$y_1, \dots, y_q$ and a fixed $p\in [q]$, the entries~$\DP[x',Z]$ and~$\DP'[x,p',Z]$, for each~$x' \in \predators{x}$, for each $p'\in [p]$, and every $Z \subseteq [k]$, store the desired value.
	Fix a set~$C \subseteq [k]$ with $c(x) \in C$.
	We show that if~$\DP'[x,p+1,C]$ stores~$d$ then there is a~$(C,X')$-feasible set $S \subseteq X' \cup X_{\ge y_{p+1}}$ for $X' := \{x\} \cup X_{\ge y_1} \cup \dots \cup X_{\ge y_{p}}$ with $\PD(S) = d$.
	Afterward, we show that if $S \subseteq X' \cup X_{\ge y_{p+1}}$ with $\PD(S) = d$
	is a~$(C,X')$-feasible set then~$\DP'[x,p+1,C]$ stores at least~$d$. 
	
	If~$\DP'[x,p+1,C] = d > 0$, then, by \Recc{eqn:recurrence-k}, a set $Z \subseteq C \setminus \{c(x)\}$ exists such that $\DP'[x,p,C\setminus Z] = d_x$ and $\DP[y_{p+1},Z] = d_y$ with $d = d_x + d_y$.
	Therefore, there is a~$(C\setminus Z,X')$-feasible set~$S_x \subseteq X'$ with $\PD(S_x) = d_x$ and a~$(Z,X_{\ge y_{p+1}})$-feasible set~$S_y \subseteq X_{\ge y_{p+1}}$ with~$\PD(S_y) = d_y$.
	Define $S := S_x \cup S_y$ and observe that~$\PD(S) = d$, because~$\Tree$ is a star, and~$c(S_x)$ and~$c(S_y)$ are disjoint and therefore~$S_x$ and~$S_y$.
	It remains to show that $S$ is a~$(C,X' \cup X_{\ge y_{p+1}})$-feasible set.
	First, observe that
	because $C\setminus Z$ and $Z$ are disjoint, we conclude that $c(S)$ is colorful.
	Then, $c(S) = c(S_x) \cup c(S_y) = C\setminus Z \cup Z = C$ where the first equation is satisfied because~$c(S)$ is colorful.
	The taxa $x$ and $y_{p+1}$ are the only sources in~$\Food[X_{\ge x}]$ and $\Food[X_{\ge y_{p+1}}]$, respectively.
	Therefore, $x$ is in~$S_x$ and~$y_{p+1}$ is in~$S_y$, unless $S_y$ is empty.
	If~$S_y = \emptyset$ then $S = S_x$ and $S$ is $X' \cup X_{\ge y_{p+1}}$-viable because $S$ is $X'$-viable.
	Otherwise, if~$S_y$ is non-empty then because $S_y$ is~$X_{\ge y_{p+1}}$-viable, we conclude $\sourcespersonal{\Food[S_y]} = \{ y_{p+1} \}$.
	As~$x\in S$ and~$y_{p+1} \in \predators{x}$, we conclude $\sourcespersonal{\Food[S]} = \{x\}$ and so $S$ is~$X' \cup X_{\ge y_{p+1}}$-viable.
	Therefore, $S$ is a~$(C,X' \cup X_{\ge y_{p+1}})$-feasible set.
	
	Conversely, let $S \subseteq X' \cup X_{\ge y_{p+1}}$ be a non-empty~$(C,X' \cup X_{\ge y_{p+1}})$-feasible set with~$\PD(S) = d$.
	Observe that $X'$ and $X_{\ge y_{p+1}}$ are not necessarily disjoint.
	We define $S_y$ to be the set of taxa of $X_{\ge y_{p+1}}$ which are connected to $y_{p+1}$ in $\Food[X_{\ge y_{p+1}}]$.
	Further, define~$Z := c(S_y)$ and define $S_x := S \setminus S_y$.
	As $c(S)$ is colorful, especially~$c(S_x)$ and $c(S_y)$ are colorful.
	Thus, $S_y$ is a~$(Z,X_{\ge y_{p+1}})$-feasible time.
	Further, we conclude that~$c(S_x) = C \setminus c(S_y) = C \setminus Z$.
	As $\sourcespersonal{\Food[S]} = \sourcespersonal{\Food[X' \cup X_{\ge y_{p+1}}]} = \{x\}$, we conclude $x\in S$.
	Because \Food is acyclic and~$y_{i+1}$ is a predator of~$x$, we conclude $x$ is not in~$X_{\ge y_{p+1}}$ and so~$x$ is in~$S_x$.
	Each vertex of $S$ which can reach $y_{p+1}$ in~$\Food[S]$ is in $F_{\ge y_{p+1}}$ and therefore in~$S_y$.
	Consequently, because $S$ is $X' \cup X_{\ge y_{p+1}}$-viable we conclude $\sourcespersonal{\Food[S_x]} = \{x\}$.
	Thus, $S_x$ is~$(C \setminus Z, X')$-feasible.
	Therefore, $\DP[y_{p+1},Z] = \PD(S_x)$ and $\DP'[x,p,C\setminus Z] = \PD(S_y)$.
	Hence, $\DP'[x,p+1,C]$ stores at least $\PD(S)$.

	\proofpara{Running time}
	The base cases can be checked in $\Oh(k)$ time.
	As each $c\in [k]$ in \Recc{eqn:recurrence-k} can either be in $Z'$, in $\{c(x)\} \cup Z \setminus Z'$ or in $[k] \setminus (\{c(x)\} \cup Z)$, all entries of the tables can be computed in $\Oh(3^k \cdot n \cdot m)$~time.
	
	We note that the table entries store values of~$\Oh(D)$ and therefore the running time is also feasible in more restricted RAM models.
\end{proof}

For the following proof we construct a perfect hash family $\mathcal{H}$ which is defined in \Cref{def:perfectHashFamily},
Central for proving Theorem~\ref{thm:PDD-k-stars} is to define and solve an instance of~\cksPDD for each function in $\mathcal{H}$.

\begin{proof}[Proof of Theorem~\ref{thm:PDD-k-stars}]
	\proofpara{Reduction}
	Let $\Instance = (\Tree, \Food, k, D)$ be an instance of \PDD.
	We assume that \Food only has one source, by~Observation~\ref{obs:PDD-one-source}.
	
	Let $x_1, \dots, x_{n}$ be an order of the taxa.
	Compute an $(n, k)$-perfect hash family $\mathcal{H}$.
	For every $f \in \mathcal{H}$,
	let $c_f$ be a coloring such that
	$c_f(x_j) = f(j)$ for each~$x_j \in X$.
	
	For every $f \in \mathcal{H}$,
	construct an instance $\Instance_{f} = (\Tree, \Food, k, D, c_f)$ of \cksPDD and solve~$\Instance_{f}$ using~Lemma~\ref{lem:PDD-k-stars} and
	return \yes if and only if $\Instance_{f}$ is a \yes-instance for some~$f \in \mathcal{H}$.

	\proofpara{Correctness}
	We show that if \Instance has a solution $S$ then there is an~$f \in \mathcal{H}$ such that $\Instance_f$ is a \yes-instance of \cksPDD.
	Let $S$ be a solution for \Instance.
	Thus, $S$ is viable, $\PD(S) \ge D$, and~$S$ has a size of at most~$k$.
	We may assume $|S| = k$ by Observation~\ref{obs:PDD-solution-size}.
	By the definition of~$(n, k)$-perfect hash families, there exists a function~$f\in \mathcal{H}$
	such that~$c_f(S)$ is colorful.
	So, $S$ is a solution for $\Instance_f$.
	
	Conversely,
	a solution of $\Instance_f$ for any~$f\in \mathcal{H}$ is a solution for~\Instance.

	\proofpara{Running Time}
	The instances $\Instance_f$ can be constructed in $e^k k^{\Oh(\log k)} \cdot n \log n$ time.
	An instance of \cksPDD can be solved in $\Oh(3^k \cdot n \cdot m)$~time, and the number of instances is~$|\mathcal{C}| = e^k k^{\Oh(\log k)} \cdot \log n$.
	Thus, the total running time is 
	$\Oh^*(e^k k^{\Oh(\log k)} \log n \cdot (3^k\cdot nm))$ which simplifies to $\Oh((3e)^k \cdot 2^{\Oh(\log^2(k))} \cdot nm \cdot \log n)$.
\end{proof}

\subsection{PDD With $k+\height_{\Tree}$}
\label{sec:PDD-k+height}
In this subsection, we generalize the result of the previous subsection by showing that \PDD is \FPT when parameterized with the size of the solution~$k$ plus~$\height_{\Tree}$, the height of the phylogenetic tree.
This algorithm uses the techniques of color coding, data reduction by reduction rules, and the enumeration of trees.

\begin{theorem} \label{thm:PDD-k+height}
	\PDD can be solved in~$\Oh^*(K^K \cdot 2^{3.03K + o(K)})$~time. Herein, we\linebreak write~$K := k\cdot \height_{\Tree}$.
\end{theorem}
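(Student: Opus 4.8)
The plan is to exploit the following structural bound on any solution. If $S$ is a viable set with $|S|\le k$, then $E_\Tree(S)$ — the set of tree edges contributing to $\PD(S)$ — is exactly the edge set of the minimal subtree $T' := \spannbaumsub{\Tree}{S \cup \{\rho\}}$ rooted at $\rho$. Since each taxon of $S$ contributes a root-to-leaf path of at most $\height_{\Tree}$ edges, we have $|E(T')| \le k\cdot \height_{\Tree} = K$ and hence $|V(T')| \le K+1$. Thus $\PD(S)$ is governed by a rooted subtree of $\Tree$ with at most $K$ edges, and the task reduces to: find a rooted subtree $T'$ of $\Tree$ with $\w(E(T')) \ge D$ that is realizable by a viable set $S$ with $|S| \le k$, in the sense that $E_\Tree(S) \supseteq E(T')$. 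First I would apply Reduction Rules~\ref{rr:each-taxon-savable}--\ref{rr:redundant-edges} to dispose of the trivial \yes-cases (e.g.\ $\max_\w \ge D$) and of taxa too far from any source, and invoke Observation~\ref{obs:PDD-one-source} to assume a single source in $\Food$.

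Next I would guess the shape of $T'$ by combining color coding with an enumeration of trees. I would compute a perfect hash family (Definition~\ref{def:perfectHashFamily}) coloring $V(\Tree)$ with $\Oh(K)$ colors so that, for an optimal solution, $V(T')$ is colorful; the number of such colorings is $2^{\Oh(K)}\log n$. For each coloring I would enumerate the possible rooted topologies of $T'$ on the color set, i.e.\ rooted forests on at most $K$ colored nodes, of which there are at most $(K+1)^{K-1}\le K^K$ by Cayley's formula. Given a coloring together with a topology, colorfulness lets me read off, for each colored node, the unique vertex of $\Tree$ realizing it; I would verify that the prescribed parent--child pairs are genuine edges of $\Tree$ (rejecting the guess otherwise), compute $\w(E(T'))$, and retain only guesses with $\w(E(T')) \ge D$.

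With the subtree fixed, the remaining decision is whether a viable set $S$ of size at most $k$ exists whose members hang below the designated leaves of $T'$, so that $E_\Tree(S) \supseteq E(T')$ (which can only increase diversity). This is a pure selection-and-viability problem on the food-web, and I would solve it with the colored dynamic program behind \cksPDD: treating the designated attachment points as color requirements and running the algorithm of Lemma~\ref{lem:PDD-k-stars} and Theorem~\ref{thm:PDD-k-stars} with $\Oh(K)$ colors yields a factor of $2^{3.03K + o(K)}$. Multiplying the number of colorings, the number of topologies, and the cost of the food-web step gives the claimed $\Oh^*(K^K\cdot 2^{3.03K + o(K)})$ bound.

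The main obstacle I expect is the interface between the tree side and the food-web side: the guessed topology fixes which internal vertices of $\Tree$ must be present and at which of them taxa must attach, whereas viability is a property of $\Food$ that is entirely independent of $\Tree$. The delicate part is encoding the attachment requirements as colour constraints that the food-web dynamic program can enforce, and proving that an optimal $S$ realizes exactly one colourful topology, so that over all guesses the algorithm accepts if and only if a solution exists. Once this correspondence is established, bounding the enumeration of colored rooted forests by $K^K$ and checking that the product of all factors simplifies to the stated running time are routine.
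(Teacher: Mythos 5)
Your high-level skeleton coincides with the paper's: the bound $|E(\spannbaum{S\cup\{\rho\}})| \le k\cdot\height_{\Tree} = K$, a perfect hash family coloring $V(\Tree)$ with $\Oh(K)$ colors so that the spanning tree of an optimal solution becomes colorful, a Cayley-type enumeration of $\Oh(K^K)$ colored rooted topologies, and a final reduction to the colored star problem \cksPDD. However, there is a genuine gap at the decisive step. You claim that ``colorfulness lets me read off, for each colored node, the unique vertex of $\Tree$ realizing it.'' This is false: a $(|V(\Tree)|,i)$-perfect hash family (Definition~\ref{def:perfectHashFamily}) only guarantees that \emph{some} coloring is injective on the $\le K$ vertices of the (unknown) solution's spanning tree; each color class in $V(\Tree)$ may still contain $\Theta(n)$ vertices. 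Consequently you cannot determine $T'$ concretely from a coloring plus a topology, cannot compute $\w(E(T'))$ and pre-check $\w(E(T'))\ge D$, and the subsequent decoupling into a ``pure selection-and-viability problem'' collapses --- different solutions instantiate the same colored pattern node by different vertices with different incident edge weights, and enumerating instantiations explicitly costs $n^{\Theta(K)}$, which is only \XP.

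The second unresolved point is the one you yourself flag as ``the delicate part'': \cksPDD (Lemma~\ref{lem:PDD-k-stars}, Theorem~\ref{thm:PDD-k-stars}) enforces colorfulness (\emph{at most} one taxon per color) together with a diversity threshold on a \emph{star}, whereas your residual problem requires coverage-type attachment constraints (at least one saved taxon below each designated leaf of $T'$) on a general tree, and you give no mechanism for either the coverage constraints or the transition from $\Tree$ to a star. The paper's proof resolves both difficulties at once: it never fixes an instantiation, instead keeping all candidate vertices per color and applying three reduction rules --- Rules~\ref{rr:edge-original} and~\ref{rr:edge-pattern} prune subtrees whose local color structure is incompatible with the pattern-tree, and Rule~\ref{rr:internal-vertex} repeatedly contracts the level below the root, lifting children to the root and \emph{folding the contracted edge's weight into the lifted edge exactly when the child's color matches the pattern}. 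After exhaustive application both the pattern-tree and $\Tree$ are stars, and a single run of the \cksPDD dynamic program then implicitly chooses the instantiation and the taxa simultaneously, checks viability in $\Food$, and verifies $\PD(S)\ge D$; no coverage constraint is ever needed, since completeness is argued via the guess matching the optimum and soundness only needs the returned set to be viable, small, and diverse. Until you supply a substitute for this contraction machinery, your proof is incomplete as written.
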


We define \emph{pattern-trees}~$\Tree_P = (V_P,E_P,c_P)$ to be a tree~$(V_P,E_P)$ with a vertex-coloring $c_P: V_P \to [k\cdot \height_{\Tree}]$.
Recall that~$\spannbaum{Y}$ is the spanning tree of the vertices in~$Y$.
To show the result of Theorem~\ref{sec:PDD-k+height}, we use a subroutine for solving the following problem.

In \PDDplong~(\PDDp), we are given alongside the usual input~$(\Tree,\Food,k,D)$ of \PDD a pattern-tree~$\Tree_P = (V_P,E_P,c_P)$, and a vertex-coloring~$c: V(\Tree) \to [k\cdot \height_{\Tree}]$.
We ask whether there is a viable set $S \subseteq X$ of taxa such that~$S$ has a size of at most~$k$, $c(\spannbaum{ S\cup\{\rho\} })$ is colorful, and $\spannbaum{ S\cup\{\rho\} }$ and $\Tree_P$ are \emph{color-equal}.
That is, there is an edge~$uv$ of $\spannbaum{ S\cup\{\rho\} }$ with~$c(u) = c_u$ and $c(v) = c_v$ if and only if there is an edge $u'v'$ of $\Tree_P$ with~$c(u') = c_u$ and~$c(v') = c_v$.
Informally, given a pattern-tree, we want that it matches the colors of the spanning tree induced by the root and a solution.

Next, we present reduction rules with which we can reduce the phylogenetic tree in an instance of \PDDp to be a star which subsequently can be solved with~Theorem~\ref{thm:PDD-k-stars}.
Afterward, we show how to apply this knowledge to compute a solution for \PDD.

\begin{rr}
	\label{rr:edge-original}
	Let~$uv$ be an edge of~$\Tree$.
	If there is no edge $u'v'\in E_P$ with~$c_P(u') = c(u)$ and $c_P(v') = c(v)$, then set~$\Tree' := \Tree - \desc(v)$ and~$\Food' := \Food - \off(v)$.
\end{rr}
\begin{lemma}
	\label{lem:PDD-edge-original}
	Reduction Rule~\ref{rr:edge-original} is correct and can be applied exhaustively\linebreak in~$\Oh(n^3)$~time.
\end{lemma}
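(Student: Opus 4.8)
The heart of the argument is that \emph{no} solution of either instance can contain a taxon of $\off(v)$, and I would prove this first. Suppose $S$ is a solution (of $\Instance$ or of $\Instance'$) with some $x\in S\cap\off(v)$. Since $x$ is a leaf of $\Tree$ lying below $v$, the unique path from the root $\rho$ to $x$ traverses the edge $uv$, so $uv$ is an edge of $\spannbaum{S\cup\{\rho\}}$. As $\spannbaum{S\cup\{\rho\}}$ must be color-equal to $\Tree_P$, the color pair $(c(u),c(v))$ carried by $uv$ has to occur on some edge of $\Tree_P$ as well; this contradicts the precondition of the rule, namely that no edge $u'v'\in E_P$ satisfies $c_P(u')=c(u)$ and $c_P(v')=c(v)$. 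Hence $S\cap\off(v)=\emptyset$, and consequently $\spannbaum{S\cup\{\rho\}}$ uses no vertex of $\desc(v)$, so it is literally the same tree in $\Tree$ and in $\Tree'=\Tree-\desc(v)$. This immediately makes the size, colorfulness, and color-equality conditions transfer verbatim in both directions of the desired equivalence $\Instance\text{ is \yes}\iff\Instance'\text{ is \yes}$; only viability in the food-web still needs care.

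For the forward direction I would argue: if $S$ solves $\Instance$, then $S\subseteq X\setminus\off(v)=V(\Food')$ by the claim above, and $\Food'[S]=\Food[S]$. Any source of $\Food[S]$ is a global source of $\Food$ by viability of $S$, hence has no prey at all and remains a source in $\Food'\subseteq\Food$; thus $S$ is viable in $\Food'$ and solves $\Instance'$. The hard part will be the converse, i.e. lifting a solution $S'$ of $\Instance'$ back to $\Instance$ while preserving viability in $\Food$. The delicate point is that deleting $\off(v)$ from the food-web may promote a taxon $w\notin\off(v)$, \emph{all} of whose prey lie in $\off(v)$, to a source of $\Food'$; such a $w$ could legitimately root a viable component of $S'$ in $\Food'$ yet fail to be viable in $\Food$. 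The plan is to use Observation~\ref{obs:PDD-viable} to decompose $S'$ into prey-rooted trees and then show that every root actually used is a global source of $\Food$; the crux is ruling out (or neutralizing) the taxa whose prey set is contained in $\off(v)$. I expect this to be the main obstacle, and I would resolve it by exploiting the invariant that every taxon retained in the (single-source–normalized, already reduced) food-web can still be sourced from a global source within $\Food'$, so that a would-be ``new source'' $w$ either cannot appear in a color-equal, budget-respecting spanning tree or can be excluded from $S'$ without destroying the solution.

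Finally, for the running time I would precompute the set of color pairs appearing on edges of $\Tree_P$, which has only $\Oh(k\cdot\height_{\Tree})$ edges. A single top-down scan of the $\Oh(n)$ edges of $\Tree$ then identifies each edge $uv$ whose pair $(c(u),c(v))$ is absent from $\Tree_P$; whenever one is found, I delete the subtree $\desc(v)$ from $\Tree$ and the corresponding offspring $\off(v)$ from $\Food$. Since deletions only remove edges and never create new color pairs, the rule stabilizes after processing each edge once, and each food-web deletion costs $\Oh(m)$. With $\Oh(n)$ deletions this yields $\Oh(nm)\subseteq\Oh(n^3)$ total time, matching the claimed bound.
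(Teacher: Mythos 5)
Your first paragraph and your running-time analysis coincide with the paper's proof: the paper's entire correctness argument is the single observation that any solution $S$ satisfies $S\cap\desc(v)=\emptyset$ (a taxon of $\off(v)$ in $S$ would put the edge $uv$, with color pair $(c(u),c(v))$, into $\spannbaum{S\cup\{\rho\}}$, contradicting color-equality with $\Tree_P$), followed by essentially the same $\Oh(n^3)$ accounting; your scan-based bound of $\Oh(nm)\subseteq\Oh(n^3)$ is in fact slightly sharper than the paper's $\Oh(n^2)$ per application.

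The gap is exactly where you flag it, and the repair you sketch does not exist. The ``invariant'' that every retained taxon can still be sourced from a global source inside the reduced food-web is established nowhere (Reduction Rule~\ref{rr:each-taxon-savable} concerns distances to sources in the \emph{original} $\Food$ and is not claimed to be maintained by this rule), and your dichotomy for a promoted source $w$ is false: $w$ can be forced into every solution of the reduced instance. Concretely, let $\Tree$ have root $\rho$ with children $u,a,b$ and a leaf child $z$ of $u$, with $c(\rho)=1$, $c(u)=2$, $c(z)=3$, $c(a)=4$, $c(b)=5$; let $\Tree_P$ be a star whose root is colored $1$ and whose two leaves are colored $4$ and $5$; let $\Food$ consist of the single edge $zb$; take unit weights, $k=3$, $D=2$. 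The rule fires on $\rho u$ and deletes $z$ from $\Food$, so $b$ becomes a source of the reduced food-web; color-equality forces $S=\{a,b\}$ in both instances, and this set is viable after the deletion but not before it, so the reduced instance is a \yes-instance while the original is a \no-instance. Thus the backward implication is not merely ``the main obstacle'': for the rule as literally stated it fails, and no argument of the kind you gesture at can rescue it --- note that the paper's own one-line proof addresses only the forward direction, so your suspicion was well placed. A complete proof needs a modified rule: either delete $\off(v)$ only from the tree, or accompany the food-web deletion with the closure step that also removes every taxon $r$ such that every path in $\Food$ from a source to $r$ meets $\off(v)$ (the same computation as the set $R$ in the proof of Theorem~\ref{thm:PDD-dist-cluster-FPT}). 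Such taxa lie in no solution of the original instance anyway, since their viability would require prey ultimately fed through $\off(v)$, which no solution contains; and after this closure every source of the reduced food-web that can occur in a solution is a source of $\Food$, so viability --- and with it the equivalence --- transfers in both directions, still within the claimed $\Oh(n^3)$ time.
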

\begin{proof}
	\proofpara{Correctness}
	Assume that $S\subseteq X$ is a solution of the instance of \PDDp.
	As there is no edge $u'v'\in E_P$ with $c_P(u') = c(u)$ and $c_P(v') = c(v)$ we conclude that~$S\cap \desc(v) = \emptyset$ and so the reduction rule is safe.
	
	\proofpara{Running Time}
	To check whether Reduction Rule~\ref{rr:edge-original} can be applied, we need to iterate over both $E(\Tree)$ and $E_P$.
	Therefore, a single application can be executed in $\Oh(n^2)$~time.
	In each application of Reduction Rule~\ref{rr:edge-original} we remove at least one vertex so that an exhaustive application can be computed in $\Oh(n^3)$~time.
\end{proof}

\begin{rr}
	\label{rr:edge-pattern}
	Let~$u'v'$ be an edge of~$\Tree_P$.
	For each vertex $u\in V(\Tree)$ with~$c(u) = c_P(u')$ such that~$u$ has no child~$v$ with $c(v) = c_P(v')$,
	set~$\Tree' := \Tree - \desc(v)$ and~$\Food' := \Food - \off(v)$.
\end{rr}
\begin{lemma}
	\label{lem:PDD-edge-pattern}
	Reduction Rule~\ref{rr:edge-pattern} is correct and can be applied exhaustively\linebreak in $\Oh(n^3)$~time.
\end{lemma}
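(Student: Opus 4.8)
Reduction Rule~\ref{rr:edge-pattern} is correct and can be applied exhaustively in $\Oh(n^3)$ time.

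The plan is to mirror the structure of the proof of Lemma~\ref{lem:PDD-edge-original}, splitting into a correctness argument and a running-time analysis. For correctness, I would argue that any solution $S$ of the \PDDp{} instance must avoid $\off(v)$ for every vertex $v$ removed by the rule, so that deleting $\desc(v)$ from $\Tree$ and $\off(v)$ from $\Food$ preserves the set of solutions. The key observation is this: the rule fires on an edge $u'v' \in E_P$ and a vertex $u \in V(\Tree)$ with $c(u) = c_P(u')$ that has \emph{no} child $v$ with $c(v) = c_P(v')$. Suppose for contradiction that some solution $S$ contains a taxon in $\off(v)$ for such a $v$ (here $v$ ranges over the children of $u$). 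Then the spanning tree $\spannbaum{S \cup \{\rho\}}$ passes through $u$ and through one of its children, say $v$, on the path to that taxon. Since $\spannbaum{S \cup \{\rho\}}$ must be color-equal to $\Tree_P$ and $c(\spannbaum{S\cup\{\rho\}})$ is colorful, the edge $uv$ of the spanning tree—bearing colors $c(u) = c_P(u')$ and $c(v)$—must correspond to the pattern edge $u'v'$, forcing $c(v) = c_P(v')$. But no child of $u$ carries color $c_P(v')$ by the firing condition, a contradiction. Hence every child $v$ of $u$ meeting the condition has $\off(v) \cap S = \emptyset$, and the rule is safe.

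For the running time, I would bound the cost of a single pass followed by the cost of iterating to exhaustion. To detect an application, one iterates over the edges $u'v' \in E_P$ (there are at most $|V_P| = \Oh(k \cdot \height_{\Tree}) \subseteq \Oh(n)$ of them) and, for each, over the vertices $u \in V(\Tree)$ with $c(u) = c_P(u')$, checking each child $v$ of $u$ for the color $c_P(v')$; this matching test costs $\Oh(n)$ per vertex $u$ and $\Oh(n^2)$ per pattern edge, so a single application is found in $\Oh(n^2)$ time, exactly as in Lemma~\ref{lem:PDD-edge-original}. Each application deletes at least one vertex (since $\desc(v)$ contains $v$ itself), so at most $n$ applications occur before the rule stabilizes, giving a total of $\Oh(n^3)$ time.

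The main subtlety—rather than a genuine obstacle—is phrasing the correctness argument cleanly given that the rule fires on a vertex $u$ and implicitly removes the descendants of its \emph{color-deficient children} $v$: the quantifier structure in the rule statement is mildly awkward (the removed set is $\desc(v)$ for children $v$ of $u$ lacking color $c_P(v')$), and I would want to make precise that it is exactly these children's subtrees that get pruned. The color-equality plus colorfulness conditions of \PDDp{} are what make the argument work: colorfulness guarantees each color appears at most once in the spanning tree, so a spanning-tree edge at $u$ is unambiguously matched to the pattern edge $u'v'$, which is the crux of ruling out taxa below color-deficient children.
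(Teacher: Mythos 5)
Your running-time analysis is fine and matches the paper, but the correctness argument has a genuine gap. Your key step asserts that the spanning-tree edge $uv$, whose parent endpoint carries color $c(u) = c_P(u')$, ``must correspond to the pattern edge $u'v'$, forcing $c(v) = c_P(v')$.'' Color-equality does not license this: it only guarantees that \emph{some} pattern edge has the color pair $(c(u), c(v))$, and nothing prevents that edge from being a different edge leaving a vertex of color $c_P(u')$. Concretely, if $u'$ has two children $v'$ and $v''$ with distinct colors in $\Tree_P$ --- which routinely happens for the pattern trees enumerated in Theorem~\ref{thm:PDD-k+height} --- and the child $v$ of $u$ satisfies $c(v) = c_P(v'')$, then the edge $uv$ matches the pattern edge $u'v''$ and your contradiction never materializes, even though the rule fires on $u$ (it has no child colored $c_P(v')$). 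Colorfulness of the spanning tree does not rescue the step, since uniqueness of the color $c_P(u')$ among spanning-tree \emph{vertices} says nothing about which of several pattern edges out of $u'$ a given spanning-tree edge corresponds to.

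The repair is to use the \emph{reverse} direction of color-equality, which is how the paper argues: since $\Tree_P$ contains the edge $u'v'$, the spanning tree $\spannbaum{S\cup\{\rho\}}$ must contain an edge whose endpoints carry colors $c_P(u')$ and $c_P(v')$; by colorfulness its parent endpoint $w$ is the \emph{unique} spanning-tree vertex of color $c_P(u')$, and $w$ has a child colored $c_P(v')$. If $u$ belonged to the spanning tree, uniqueness would force $u = w$, contradicting that no child of $u$ has color $c_P(v')$. Hence $u$ is not in the spanning tree at all, so $S\cap \desc(u) = \emptyset$ --- note this also settles the quantifier awkwardness you flagged: the rule safely prunes the entire subtree below $u$, not merely the subtrees of individual color-deficient children.
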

\begin{proof}
	\proofpara{Correctness}
	Let $S$ be a solution for an instance of \PDDp.
	The spanning tree $\spannbaum{ S\cup\{\rho\} }$ contains exactly one vertex $w$ of color $c(u)$.
	As~$c(w) = c_P(u')$ we conclude that $w$ has a child $w'$ and $c(w') = c(v')$.
	Consequently,~$S\cap \desc(u) = \emptyset$ and~$w \ne u$.
	
	\proofpara{Running Time}
	Like in Reduction Rule~\ref{rr:edge-original}, iterate over the edges of \Tree and $\Tree_P$.
	Each application either removes at least one vertex or concludes that the reduction rule is applied exhaustively.
\end{proof}

%

Observe that the application of the previous two reduction rules may create leaves in the phylogenetic tree which are not taxa.
We can safely remove these from the tree.
Now, let us come to our final reduction rule.

\begin{rr}
	\label{rr:internal-vertex}
	Apply Reduction Rules~\ref{rr:edge-original} and~\ref{rr:edge-pattern} exhaustively.
	
	Let~$\rho$ be the root of~\Tree and let~$\rho_P$ be the root of~$\Tree_P$.
	Let $v'$ be a grand-child of $\rho_P$ and let~$u'$ be the parent of~$v'$.
	\begin{propEnum}
		\item For each vertex $u$ of $\Tree$ with~$c(u) = c_P(u')$,
		add edges $\rho v$ to $\Tree$ for every child $v$ of $u$.
		\item Set the weight of $\rho v$ to $\w(uv)$ if~$c(v) \ne c_P(v')$,
		or~$\w(uv)+\w(\rho u)$ if~$c(v) = c_P(v')$.
		\item Add edges~$\rho_P w'$ to $\Tree_P$ for every child~$w'$ of~$u'$.
		\item Set~$\Tree_P' := \Tree_P - u'$ and~$\Tree' := \Tree - u$.
	\end{propEnum}
\end{rr}
\begin{figure}[t]
\centering
\begin{tikzpicture}[scale=0.8,every node/.style={scale=0.7}]
	\node[draw,fill=red,inner sep=3pt,circle] (root) at (10,10) {};
	
	\node[draw,fill=blue,inner sep=3pt,circle,yshift=5mm,xshift=-15mm] (c1) [below= of root] {};
	\node[draw,fill=green,inner sep=3pt,circle,yshift=5mm,xshift=0mm] (c2) [below= of root] {};
	\node[draw,fill=orange,inner sep=3pt,circle,yshift=5mm,xshift=15mm] (c3) [below= of root] {};
	
	\node[draw,fill=green!40!black!50!,inner sep=3pt,circle,yshift=5mm,xshift=0mm] (c11) [below= of c1] {};
	
	\node[draw,fill=cyan,inner sep=3pt,circle,yshift=5mm,xshift=-10mm] (c31) [below= of c3] {};
	\node[draw,fill=yellow,inner sep=3pt,circle,yshift=5mm,xshift=0mm] (c32) [below= of c3] {};
	
	\node[draw,fill=gray,inner sep=3pt,circle,yshift=8mm,xshift=0mm] (c321) [below= of c32] {};

	\draw[-{Stealth[length=6pt]}] (root) -> (c1);
	\draw[-{Stealth[length=6pt]}] (root) -> (c2);
	\draw[-{Stealth[length=6pt]}] (root) -> (c3);
	
	\draw[-{Stealth[length=6pt]}] (c1) -> (c11);
	\draw[-{Stealth[length=6pt]}] (c3) -> (c31);
	\draw[-{Stealth[length=6pt]}] (c3) -> (c32);
	
	\draw[-{Stealth[length=6pt]}] (c32) -> (c321);
	
	\node[xshift=14mm] [left= of c3] {$u'$};
	\node[xshift=14mm] [left= of c32] {$v'$};
	\node[xshift=5mm] [left= of root] {(1)};
\end{tikzpicture}
\begin{tikzpicture}[scale=0.8,every node/.style={scale=0.7}]
	\node[draw,fill=red,inner sep=3pt,circle] (root) at (10,10) {};
	
	\node[draw,fill=blue,inner sep=3pt,circle,yshift=5mm,xshift=-25mm] (c1) [below= of root] {};
	\node[draw,fill=blue,inner sep=3pt,circle,yshift=5mm,xshift=-15mm] (c4) [below= of root] {};
	\node[draw,fill=green,inner sep=3pt,circle,yshift=5mm,xshift=-5mm] (c2) [below= of root] {};
	\node[draw,fill=orange,inner sep=3pt,circle,yshift=5mm,xshift=5mm] (c3) [below= of root] {};
	\node[draw,fill=orange,inner sep=3pt,circle,yshift=5mm,xshift=30mm] (c5) [below= of root] {};
	
	\node[draw,fill=green!40!black!50!,inner sep=3pt,circle,yshift=5mm,xshift=0mm] (c11) [below= of c1] {};
	
	\node[draw,fill=green!40!black!50!,inner sep=3pt,circle,yshift=5mm,xshift=-3mm] (c41) [below= of c4] {};
	\node[draw,fill=green!40!black!50!,inner sep=3pt,circle,yshift=5mm,xshift=3mm] (c42) [below= of c4] {};
	
	\node[draw,fill=cyan,inner sep=3pt,circle,yshift=5mm,xshift=-8mm] (c30) [below= of c3] {};
	\node[draw,fill=cyan,inner sep=3pt,circle,yshift=5mm,xshift=0mm] (c31) [below= of c3] {};
	\node[draw,fill=yellow,inner sep=3pt,circle,yshift=5mm,xshift=8mm] (c32) [below= of c3] {};
	
	\node[draw,fill=cyan,inner sep=3pt,circle,yshift=5mm,xshift=-8mm] (c51) [below= of c5] {};
	\node[draw,fill=yellow,inner sep=3pt,circle,yshift=5mm,xshift=0mm] (c52) [below= of c5] {};
	
	\node[draw,fill=gray,inner sep=3pt,circle,yshift=8mm,xshift=-5mm] (c321) [below= of c32] {};
	\node[draw,fill=gray,inner sep=3pt,circle,yshift=8mm,xshift=5mm] (c322) [below= of c32] {};
	
	\node[draw,fill=gray,inner sep=3pt,circle,yshift=8mm,xshift=0mm] (c521) [below= of c52] {};
	
	\draw[-{Stealth[length=6pt]}] (root) -> node[above] {6} (c1);
	\draw[-{Stealth[length=6pt]}] (root) -> node[right] {1} (c4);
	\draw[-{Stealth[length=6pt]}] (root) -> node[right] {3} (c2);
	\draw[-{Stealth[length=6pt]}] (root) -> node[right] {1} (c3);
	\draw[-{Stealth[length=6pt]}] (root) -> node[left,xshift=-3mm] {2} (c5);
	
	\draw[-{Stealth[length=6pt]}] (c1) -> node[left] {4} (c11);
	\draw[-{Stealth[length=6pt]}] (c4) -> node[left] {5} (c41);
	\draw[-{Stealth[length=6pt]}] (c4) -> node[right] {2} (c42);
	
	\draw[-{Stealth[length=6pt]}] (c3) -> node[left] {4} (c30);
	\draw[-{Stealth[length=6pt]}] (c3) -> node[left] {2} (c31);
	\draw[-{Stealth[length=6pt]}] (c3) -> node[right] {3} (c32);
	
	\draw[-{Stealth[length=6pt]}] (c32) -> node[left] {1} (c321);
	\draw[-{Stealth[length=6pt]}] (c32) -> node[left] {1} (c322);
	
	\draw[-{Stealth[length=6pt]}] (c5) -> node[left] {2} (c51);
	\draw[-{Stealth[length=6pt]}] (c5) -> node[right] {2} (c52);
	
	\draw[-{Stealth[length=6pt]}] (c52) -> node[left] {2} (c521);
\end{tikzpicture}
\begin{tikzpicture}[scale=0.8,every node/.style={scale=0.7}]
	\node[draw,fill=red,inner sep=3pt,circle] (root) at (10,10) {};
	
	\node[draw,fill=blue,inner sep=3pt,circle,yshift=5mm,xshift=-15mm] (c1) [below= of root] {};
	\node[draw,fill=green,inner sep=3pt,circle,yshift=5mm,xshift=-5mm] (c2) [below= of root] {};
	
	\node[draw,fill=green!40!black!50!,inner sep=3pt,circle,yshift=5mm,xshift=0mm] (c11) [below= of c1] {};
	
	\node[draw,fill=cyan,inner sep=3pt,circle,yshift=-7mm,xshift=3mm] (c31) [below= of root] {};
	\node[draw,fill=yellow,inner sep=3pt,circle,yshift=-7mm,xshift=10mm] (c32) [below= of root] {};
	
	\node[draw,fill=gray,inner sep=3pt,circle,yshift=8mm,xshift=0mm] (c321) [below= of c32] {};
	
	\draw[-{Stealth[length=6pt]}] (root) -> (c1);
	\draw[-{Stealth[length=6pt]}] (root) -> (c2);
	
	\draw[-{Stealth[length=6pt]}] (c1) -> (c11);
	\draw[-{Stealth[length=6pt]}] (root) to[bend left=8] (c31);
	\draw[-{Stealth[length=6pt]}] (root) to[bend left=15] (c32);
	
	\draw[-{Stealth[length=6pt]}] (c32) -> (c321);
	
	\node[xshift=5mm] [left= of root] {(2)};
	\draw (8.25,10.2) -> (8.25,6.9);
\end{tikzpicture}
\begin{tikzpicture}[scale=0.8,every node/.style={scale=0.7}]
	\node[draw,fill=red,inner sep=3pt,circle] (root) at (10,10) {};
	
	\node[draw,fill=blue,inner sep=3pt,circle,yshift=5mm,xshift=-25mm] (c1) [below= of root] {};
	\node[draw,fill=blue,inner sep=3pt,circle,yshift=5mm,xshift=-15mm] (c4) [below= of root] {};
	\node[draw,fill=green,inner sep=3pt,circle,yshift=5mm,xshift=-5mm] (c2) [below= of root] {};
	
	\node[draw,fill=green!40!black!50!,inner sep=3pt,circle,yshift=5mm,xshift=0mm] (c11) [below= of c1] {};
	
	\node[draw,fill=green!40!black!50!,inner sep=3pt,circle,yshift=5mm,xshift=-3mm] (c41) [below= of c4] {};
	\node[draw,fill=green!40!black!50!,inner sep=3pt,circle,yshift=5mm,xshift=3mm] (c42) [below= of c4] {};
	
	\node[draw,fill=cyan,inner sep=3pt,circle,yshift=5mm,xshift=-8mm] (c30) [below= of c3] {};
	\node[draw,fill=cyan,inner sep=3pt,circle,yshift=5mm,xshift=0mm] (c31) [below= of c3] {};
	\node[draw,fill=yellow,inner sep=3pt,circle,yshift=5mm,xshift=8mm] (c32) [below= of c3] {};
	
	\node[draw,fill=cyan,inner sep=3pt,circle,yshift=5mm,xshift=-8mm] (c51) [below= of c5] {};
	\node[draw,fill=yellow,inner sep=3pt,circle,yshift=5mm,xshift=0mm] (c52) [below= of c5] {};
	
	\node[draw,fill=gray,inner sep=3pt,circle,yshift=8mm,xshift=-5mm] (c321) [below= of c32] {};
	\node[draw,fill=gray,inner sep=3pt,circle,yshift=8mm,xshift=5mm] (c322) [below= of c32] {};
	
	\node[draw,fill=gray,inner sep=3pt,circle,yshift=8mm,xshift=0mm] (c521) [below= of c52] {};
	
	\draw[-{Stealth[length=6pt]}] (root) -> node[above] {6} (c1);
	\draw[-{Stealth[length=6pt]}] (root) -> node[left] {1} (c4);
	\draw[-{Stealth[length=6pt]}] (root) -> node[left] {3} (c2);
	
	\draw[-{Stealth[length=6pt]}] (c1) -> node[left] {4} (c11);
	\draw[-{Stealth[length=6pt]}] (c4) -> node[left] {5} (c41);
	\draw[-{Stealth[length=6pt]}] (c4) -> node[right] {2} (c42);
	
	\draw[-{Stealth[length=6pt]}] (c32) -> node[left] {1} (c321);
	\draw[-{Stealth[length=6pt]}] (c32) -> node[left] {1} (c322);
	
	\draw[-{Stealth[length=6pt]}] (root) to[bend left=5] node[right] {4} (c30);
	\draw[-{Stealth[length=6pt]}] (root) to[bend left=10] node[right] {2} (c31);
	\draw[-{Stealth[length=6pt]}] (root) to[bend left=15] node[right] {4} (c32);
	
	\draw[-{Stealth[length=6pt]}] (root) to[bend left=20] node[right] {2} (c51);
	\draw[-{Stealth[length=6pt]}] (root) to[bend left=25] node[right] {4} (c52);
	
	\draw[-{Stealth[length=6pt]}] (c52) -> node[left] {2} (c521);
\end{tikzpicture}
\caption{An example for Reduction Rule~\ref{rr:internal-vertex}.
	(1) An instance of \PDDp.
	(2)~The instance of (1) after an application of Reduction Rule~\ref{rr:internal-vertex} to the marked vertices. 
	In both instances, the pattern-tree is on the left and the phylogenetic tree is on the right.
	}
\label{fig:PDD-rr-internal-vertex}
\end{figure}
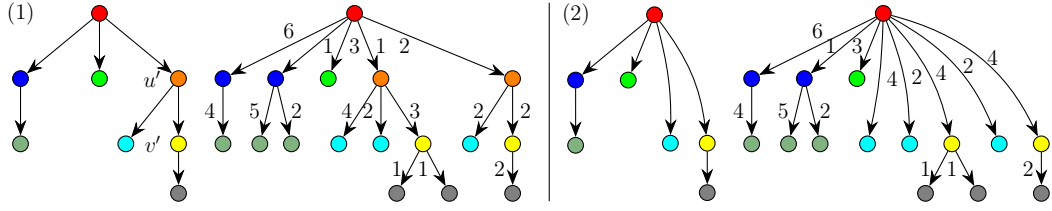%
Figure~\ref{fig:PDD-rr-internal-vertex} depicts an application of Reduction Rule~\ref{rr:internal-vertex}.

\begin{lemma}
	\label{lem:PDD-internal-vertex}
	Reduction Rule~\ref{rr:internal-vertex} is correct and can be applied exhaustively\linebreak in $\Oh(n^3)$~time.
\end{lemma}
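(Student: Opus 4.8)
The plan is to prove correctness by exhibiting an exact correspondence between solutions of the original instance $\Instance=(\Tree,\Food,k,D,\Tree_P,c)$ and solutions of the reduced instance $\Instance'=(\Tree',\Food',k,D,\Tree_P',c)$. Since Reduction Rule~\ref{rr:internal-vertex} leaves the food-web untouched (only $\Tree$ and $\Tree_P$ are modified, and the leaf set $X$ is unchanged because the suppressed vertices $u$ are internal), viability is preserved verbatim, and colorfulness of $\spannbaum{S\cup\{\rho\}}$ is unaffected by where the children of $u$ are attached. Thus for every viable set $S$ it suffices to show two things: (i) $\spannbaum{S\cup\{\rho\}}$ is color-equal to $\Tree_P$ in $\Instance$ if and only if it is color-equal to $\Tree_P'$ in $\Instance'$, and (ii) $\PD_\Tree(S)=\PD_{\Tree'}(S)$. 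Granting (i) and (ii), a set is a solution of $\Instance$ exactly when it is a solution of $\Instance'$, which is the required equivalence. First I would fix the standing assumption (an invariant of the construction) that $c(\rho)=c_P(\rho_P)$, and recall that before the suppression the rule has run Reduction Rules~\ref{rr:edge-original} and~\ref{rr:edge-pattern} to exhaustion.

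The heart of the argument, and the step I expect to be the main obstacle, is the diversity preservation~(ii). The only edges whose contribution to $\PD$ can change are the removed edge $\rho u$ (weight $\w(\rho u)$) together with the edges $uv$ for the children $v$ of $u$, versus the new edges $\rho v$ in $\Tree'$. Collecting terms, equality $\PD_\Tree(S)=\PD_{\Tree'}(S)$ reduces to the single equivalence
\begin{equation*}
\off(u)\cap S\neq\emptyset \iff \off(v_0)\cap S\neq\emptyset,
\end{equation*}
where $v_0$ is the child of $u$ with $c(v_0)=c_P(v')$, i.e. the child onto whose edge the weight $\w(\rho u)$ is pushed. The direction $\off(v_0)\cap S\neq\emptyset\Rightarrow\off(u)\cap S\neq\emptyset$ is immediate since $\off(v_0)\subseteq\off(u)$. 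The converse is where color-equality does the work: if $u$ lies in $\spannbaum{S\cup\{\rho\}}$ and this spanning tree is color-equal to $\Tree_P$, then the edge $u'v'$ of the pattern forces $u$ to have a child of color $c_P(v')$ in the spanning tree; by colorfulness that child is unique and must be $v_0$, so $\off(v_0)\cap S\neq\emptyset$. Here I would invoke Reduction Rule~\ref{rr:edge-pattern} to guarantee that every surviving vertex $u$ colored $c_P(u')$ actually has a child colored $c_P(v')$ (so $v_0$ exists and the target of the weight transfer is well defined), and colorfulness of any solution's spanning tree to rule out two such children simultaneously carrying the transferred weight.

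For the color-equality preservation~(i), I would simply track how the edge color pairs transform. Deleting $u$ and $u'$ removes the pair $(c(\rho),c_P(u'))$ from $\Tree$ and the pair $(c_P(\rho_P),c_P(u'))=(c(\rho),c_P(u'))$ from $\Tree_P$; for every child $w'$ of $u'$ the pattern pair $(c_P(u'),c_P(w'))$ becomes $(c_P(\rho_P),c_P(w'))=(c(\rho),c_P(w'))$, and correspondingly each phylogenetic-tree pair $(c(u),c(v))=(c_P(u'),c(v))$ becomes $(c(\rho),c(v))$. Using $c(\rho)=c_P(\rho_P)$ these two transformed families of color pairs match, and since the weight redistribution in step~(b) does not touch colors, the set of color pairs realised by $\spannbaum{S\cup\{\rho\}}$ in $\Tree'$ equals that realised in $\Tree$ under this relabelling; hence the spanning tree is color-equal to $\Tree_P'$ exactly when it was color-equal to $\Tree_P$.

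Finally, for termination and the running-time bound, I would argue via a potential. A single application can be carried out by iterating over $E(\Tree)$ and $E(\Tree_P)$ and rewiring, which costs $\Oh(n^2)$ on top of the exhaustive calls to Reduction Rules~\ref{rr:edge-original} and~\ref{rr:edge-pattern}, each $\Oh(n^3)$ by Lemmas~\ref{lem:PDD-edge-original} and~\ref{lem:PDD-edge-pattern}. Each application suppresses a depth-one internal pattern vertex $u'$ and lifts its children one level towards $\rho_P$, strictly decreasing the sum of vertex depths in $\Tree_P$; since this potential is non-negative and bounded polynomially in the instance size, only polynomially many applications occur, and I would account the whole exhaustive process within the claimed $\Oh(n^3)$. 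The delicate points to double-check in the write-up are that rewiring the children of $u$ to $\rho$ does not create internal vertices of out-degree one that would corrupt the phylogenetic-tree structure (handled by the remark preceding the rule that non-taxon leaves created by the reduction may be removed), and that the ``for each $u$'' scope in step~(d) removes all vertices colored $c_P(u')$ simultaneously so that the color-pair bookkeeping above is exact.
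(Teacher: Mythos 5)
Your proposal is correct and takes essentially the same route as the paper's proof: the paper likewise establishes $\PDsub{\Tree'}(S)=\PD(S)$ by direct weight accounting, using color-equality to $\Tree_P$ together with colorfulness of $\spannbaum{S\cup\{\rho\}}$ to identify the unique vertices $u^*$ and $w_1$ (your $u$ and $v_0$) and thereby force the child receiving the weight $\w(\rho u)$ to be active exactly when $u$ is, handles the converse ``analogously,'' and attributes the dominant $\Oh(n^3)$ term to the exhaustive applications of Reduction Rules~\ref{rr:edge-original} and~\ref{rr:edge-pattern}. The only simplification worth noting: instead of your depth-sum potential (which by itself only bounds the number of applications by $\Oh(n^2)$), the paper uses the immediate observation that each application deletes the pattern vertex $u'$, so the rule fires at most $|\Tree_P|\in\Oh(K)\subseteq\Oh(n)$ times, giving $\Oh(n^2)$ for the rewiring work and leaving the sub-rules as the $\Oh(n^3)$ bottleneck.
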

\begin{proof}
	\proofpara{Correctness}
	Assume that \Instance is a \yes-instance of \PDDp with solution~$S$.
	Because $\spannbaum{ S\cup\{\rho\} }$ and $\Tree_P$ are color-equal also $\spannbaumsub{\Tree'}{ S\cup\{\rho\} }$ and $\Tree_P'$ are color-equal.
	Let~$u^*$ and~$w_1$ be the unique vertices in $\spannbaum{ S\cup\{\rho\} }$ with $c(u^*) = c_P(u')$ and $c(w_1) = c_P(v')$.
	Let~$w_2,\dots,w_\ell$ be the other children of $u^*$.
	Because $\PDsub{\Tree'}(S)$ is the sum of the weights of the edges of $\spannbaumsub{\Tree'}{ S\cup\{\rho\} }$, we conclude
	$$
	\PDsub{\Tree'}(S) = \PD(S) - (\w(\rho u^*) + \sum_{i=1}^\ell \w(u^* w_i)) + \sum_{i=1}^\ell \w'(\rho w_i).
	$$
	Since $\w'(\rho w_1) = \w(\rho u^*) + \w(u^* w_1)$ and $\w'(\rho w_i) = \w(u^* w_i)$ for $i\in [\ell] \setminus \{1\}$,
	we conclude that~$\PDsub{\Tree'}(S) = \PD(S) \ge D$.
	Therefore, $S$ is a solution for~$\Instance'$.
	
	The other direction is shown analogously.
	
	\proofpara{Running Time}
	For a given grand-child~$v'$ of~$\rho_P$, one needs to perform~$\Oh(n)$ color-checks and add~$\Oh(n)$ edges to the tree.
	As the reduction rule can be applied at most~$|\Tree_P| \in \Oh(K) \in \Oh(n)$~times,
	an exhaustive application takes~$\Oh(n^2)$~time.
	So, the predominant factor in the running time is the exhaustive application of the other reduction rules.
\end{proof}

With these reduction rules, we can reduce the phylogenetic tree of a given instance of \PDDp to only be a star and then solve \PDDp by applying~Theorem~\ref{thm:PDD-k-stars}.

\begin{lemma}
	\label{lem:PDD-k+height}
	\PDDp can be solved in~$\Oh(3^k \cdot n \cdot m + n^3)$~time.
\end{lemma}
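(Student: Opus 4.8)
The plan is to use the three reduction rules developed above to transform any instance of \PDDp into an equivalent instance in which the phylogenetic tree $\Tree$ is a star, and then to solve the resulting star instance with the dynamic program behind Lemma~\ref{lem:PDD-k-stars} for \cksPDD. Since \PDDp already supplies a coloring, there is no need for the outer color-coding of Theorem~\ref{thm:PDD-k-stars}, which is exactly why the target running time $\Oh(3^{k}\cdot n\cdot m + n^3)$ decomposes into the cost of the rules plus a single \cksPDD call.

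First I would apply Reduction Rules~\ref{rr:edge-original}, \ref{rr:edge-pattern}, and~\ref{rr:internal-vertex} exhaustively, interleaving them so that \ref{rr:internal-vertex} is invoked only after \ref{rr:edge-original} and \ref{rr:edge-pattern} are exhaustive, as its statement requires. The key structural observation is that each application of Reduction Rule~\ref{rr:internal-vertex} deletes $u'$ from $\Tree_P$ (and the corresponding vertices $u$ from $\Tree$) while lifting their children to the respective root, so $|V(\Tree_P)|$ strictly decreases; hence the process terminates, and it can only stop once $\rho_P$ has no grandchildren, i.e. $\Tree_P$ is a star. One then argues that $\Tree$ is a star too: the flattening of \ref{rr:internal-vertex} acts on $\Tree$ in lockstep with $\Tree_P$, and once \ref{rr:edge-original} is exhaustive every edge of $\Tree$ must realize a color-pair of some edge of $\Tree_P$, so if $\Tree_P$ is a star then $\Tree$ cannot contain a directed path of length two and is itself a star. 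Correctness of each individual rule is already established in Lemmas~\ref{lem:PDD-edge-original}, \ref{lem:PDD-edge-pattern}, and~\ref{lem:PDD-internal-vertex}, so the reduced instance is a \yes-instance if and only if the original one is; the weight bookkeeping in Reduction Rule~\ref{rr:internal-vertex}(2) is precisely what preserves the phylogenetic diversity of a candidate solution across the flattening.

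On the reduced star instance I would read off the set $\Gamma := \{c_P(w) : w \text{ a leaf of } \Tree_P\}$ of leaf-colors of the pattern-star. Because \ref{rr:edge-original} is exhaustive, every surviving taxon $x$ is a child of $\rho$ with $(c(\rho),c(x)) = (c_P(\rho_P),\gamma)$ for some $\gamma\in\Gamma$, so $c(\rho)=c_P(\rho_P)$ and $c(X)\subseteq\Gamma$. For a star, $\spannbaum{S \cup \{\rho\}}$ is the star on $S$, whose color-edge set is $\{(c(\rho),c(x)) : x\in S\}$; requiring color-equality with $\Tree_P$ together with colorfulness of $c$ on $S$ is then exactly the requirement that $S$ be a viable set containing precisely one taxon of each color in $\Gamma$. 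If $|\Gamma|>k$ this is a trivial \no-instance; otherwise I relabel $\Gamma$ to $[k']$ with $k':=|\Gamma|\le k$ and feed the resulting colored star into the dynamic program of Lemma~\ref{lem:PDD-k-stars}, whose table entry for the full color set $[k']$ stores the maximum diversity of a viable set using exactly the colors of $\Gamma$; the instance is a \yes-instance exactly when this entry is at least $D$.

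Putting the pieces together, the exhaustive application of the reduction rules costs $\Oh(n^3)$ by Lemmas~\ref{lem:PDD-edge-original}, \ref{lem:PDD-edge-pattern}, and~\ref{lem:PDD-internal-vertex}, and the single run of the \cksPDD dynamic program costs $\Oh(3^{k'}\cdot n\cdot m)\subseteq\Oh(3^{k}\cdot n\cdot m)$, yielding the claimed bound $\Oh(3^{k}\cdot n\cdot m + n^3)$. I expect the main obstacle to be the structural argument of the second paragraph: verifying that exhaustively flattening via Reduction Rule~\ref{rr:internal-vertex} genuinely forces both $\Tree_P$ and $\Tree$ down to stars and terminates, and that the exact-color-set semantics imposed by color-equality on a star matches the ``exactly $C$'' quantity computed by the dynamic program of Lemma~\ref{lem:PDD-k-stars} rather than its weaker ``some colorful set'' phrasing. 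The diversity-preserving weight updates of the flattening step are the delicate bookkeeping underlying this equivalence.
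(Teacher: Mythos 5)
Your proposal takes essentially the same route as the paper's proof: exhaustively apply Reduction Rules~\ref{rr:edge-original}--\ref{rr:internal-vertex} to collapse both $\Tree_P$ and $\Tree$ to stars in $\Oh(n^3)$ time, then solve the resulting star instance with the \cksPDD dynamic program of Lemma~\ref{lem:PDD-k-stars}, giving $\Oh(3^k \cdot n \cdot m + n^3)$ in total. If anything you are more careful than the paper, which merely asserts that both trees become stars and returns the plain \cksPDD answer; your explicit termination argument and your reading of the DP entry for the exact leaf-color set $\Gamma$ (rather than the weaker ``some colorful set'' query) close precisely the looseness you flagged, which is harmless in the paper only because inside Theorem~\ref{thm:PDD-k+height} any colorful viable set of diversity at least $D$ already yields a valid \PDD solution.
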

\begin{proof}
	\proofpara{Algorithm}
	Let an instance~$\Instance = (\Tree, \Food, k, D, \Tree_P = (V_P,E_P,c_P), c)$ of \PDDp be given.
	If there is a vertex $v\in V_P$ and $c_P(v) \not\in c(V(\Tree))$, then return \no.
	If $c(\rho) \ne c_P(\rho_P)$ where $\rho$ and $\rho_P$ are the roots of \Tree and $\Tree_P$ respectively, return \no.
	
	Apply~Reduction Rule~\ref{rr:internal-vertex} exhaustively.
	Then, both $\Tree_P$ and \Tree~are stars.
	Return \yes if and only if
	$(\Tree',\Food',k,D,c)$ is a \yes-instance of \cksPDD.

	\proofpara{Correctness}
	If $\Tree_P$ contains a vertex~$v$ with $c_P(v) \not\in c(V(\Tree))$,
	or if $c(\rho) \ne c_P(\rho_P)$,
	then~\Instance is a \no-instance.
	Then, the correctness follows by Lemma~\ref{lem:PDD-internal-vertex} and Lemma~\ref{lem:PDD-k-stars}.

	\proofpara{Running time}
	Reduction Rule~\ref{rr:internal-vertex} can be applied exhaustively in $\Oh(n^3)$~time.
	With the application of Lemma~\ref{lem:PDD-k-stars}, the overall running time is $\Oh(3^k \cdot n \cdot m + n^3)$.
\end{proof}

Now, we have everything to prove Theorem~\ref{thm:PDD-k+height}.
We reduce from \PDD to \PDDp and apply Lemma~\ref{lem:PDD-k+height}.
For this, we use the fact that there are~$n^{n-2}$ labeled directed trees on~$n$ vertices~\cite{Shor95} which can be enumerated in $\Oh(n^{n-2})$~time~\cite{beyer}.
To solve an instance \Instance of \PDD, we check each of these trees as a pattern-tree for a given coloring of the phylogenetic tree.
These colorings are defined with a perfect hash family as defined in \Cref{def:perfectHashFamily}.
Recall that $K=k\cdot \height_{\Tree}$.
\begin{proof}[Proof of Theorem~\ref{thm:PDD-k+height}]
	\proofpara{Algorithm}
	Let $\Instance = (\Tree, \Food, k, D)$ be an instance of \PDD.
	Let the vertices of~\Tree be ordered as~$v_1, \dots, v_{|V(\Tree)|}$.
	Iterate over $i \in [\min\{K,|V(\Tree)|\}]$.
	Compute a~$(|V(\Tree)|,i)$-perfect hash family~$\mathcal{H}_i$.
	Compute the set~$\mathcal{P}_i$ of labeled directed trees with~$i$ vertices.
	
	For every~$\Tree_P = (V_P, E_P, c_P) \in \mathcal{P}_i$, proceed as follows.
	Assume that the labels of~$\Tree_P$ are in $[i]$.
	For every $f \in \mathcal{H}_i$,
	let $c_f$ be a coloring such that $c_f(v_j) = f(j)$ for each $v_j \in V(\Tree)$.
	
	For every $f \in \mathcal{H}_i$,
	solve instance $\Instance_{\Tree_P,f} := (\Tree, \Food, k, D, \Tree_P, c_f)$ of \PDDp using~Lemma~\ref{lem:PDD-k+height}.
	Return \yes if and only if~$\Instance_{\Tree_P,f}$ is a \yes-instance for some~$f \in \mathcal{H}_i$ and some~$\Tree_P \in \mathcal{P}_i$.

	\proofpara{Correctness}
	Any solution of an instance $\Instance_{\Tree_P,f}$ of \PDDp is a solution for $\Instance$.

	Conversely, we show that if $S$ is a solution for \Instance, then there are~$\Tree_P$ and~$f$ such that~$\Instance_{\Tree_P,f}$ is a \yes-instance of \PDDp.
	So let~$S$ be a viable set of taxa with $|S| \le k$ and $\PD(S) \ge D$.
	Let $V^* \subseteq V(\Tree)$ be the set of vertices~$v$ that have offspring in $S$.
	It follows that~$|V^*| \le \height_{\Tree} \cdot |S| \le K$.
	Then, there is a hash function~$f \in \mathcal{H}_{V^*}$ mapping $V^*$ bijectively to $[|V^*|]$.
	Consequently, $\mathcal{P}_{|V^*|}$ contains a tree~$\Tree_P$ which is isomorphic to~$\Tree[V^*]$ with labels~$c_f$.
	Hence, $\Instance_{\Tree_P,f}$ is a \yes-instance of \PDDp.

	\proofpara{Running Time}
	For a fixed $i\in [K]$, the set $\mathcal{H}_i$ contains $e^i i^{\Oh(\log i)} \cdot \log n$ hash functions and the set~$\mathcal{P}_i$ contains $\Oh(i^{i-2})$ labeled trees.
	In $\Oh(i^{i-2} \cdot n\log n)$~time, both of the sets can be computed.
	Given a hash function and a tree,
	each instance~$\Instance_{\Tree_P,f}$ of \PDDp is constructed in $\Oh(n)$~time and can be solved in~$\Oh(3^k \cdot n^3)$~time.
	Thus, the overall running time is~$\Oh(K \cdot e^{K} K^{K - 2 + \Oh(\log K)} \cdot 3^k \cdot n^3 \log n)$,
	which summarizes to~$\Oh(K^K \cdot 2^{1.44K + 1.58k + o(K)} \cdot n^3 \log n)$.
\end{proof}

\section{The Diversity $D$}
\label{sec:PDD-D}
In this section, we consider parameterization with the required threshold of diversity~$D$.
As the edge-weights are integers, we conclude that we can return \yes if~$k\ge D$ or if the height of the phylogenetic tree~\Tree is at least $D$, after Reduction Rule~\ref{rr:each-taxon-savable} has been applied exhaustively. Otherwise,~$k+\height_{\Tree}\in \Oh(D)$ and thus the \FPT-algorithm for $k+\height_{\Tree}$ (Theorem~\ref{thm:PDD-k+height}) directly gives an \FPT-algorithm for \PDD in that case.

In Section~\ref{sec:PDD-FPT-D}, we present a faster \FPT-algorithm for the parameter~$D$.
Afterward, we show that it is unlikely that \PDD admits a polynomial kernel for~$D$, even in very restricted cases.

\subsection{FPT-Algorithm For $D$}
\label{sec:PDD-FPT-D}
By Theorem~\ref{thm:PDD-k+height}, \PDD is \FPT with respect to the desired diversity~$D$.
Here, we present another algorithm with a faster running time.
To obtain this algorithm, we implicitly subdivide edges of the phylogenetic tree according to their edge weights.
We then use color coding on the vertices of the subdivided tree.
\begin{theorem}
	\label{thm:PDD-D}
	\PDD can be solved in $\Oh(2^{3.03(2D + k) + o(D)} \cdot nm + n^2)$~time.
\end{theorem}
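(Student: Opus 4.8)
The plan is to combine the color-coding toolbox (perfect hash families, \Cref{def:perfectHashFamily}) with the idea of \emph{implicitly subdividing} each edge of the phylogenetic tree $\Tree$ according to its weight, thereby reducing the problem to a colored variant that I can solve with a dynamic program resembling the one in \Cref{lem:PDD-k-stars}. Concretely, after applying \Cref{rr:each-taxon-savable} exhaustively, I may assume $\max_\w < D$ and that every taxon lies on a root-to-leaf path of total weight less than $D$. I then think of $\Tree$ as a tree in which each edge $e$ of weight $\w(e)$ is conceptually replaced by a path of $\w(e)$ unit-weight edges; the total number of unit-edges on any root-to-leaf path to an offspring of a solution is at most $D$, so the spanning tree $\spannbaum{S\cup\{\rho\}}$ of a solution $S$ of size at most $k$ has at most $2D+k$ vertices in the subdivided tree (roughly: each of the $k$ selected leaves contributes a path of total weight at most $D$, but the shared structure is bounded by $2D$ total unit-edges plus the $k$ branching vertices). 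This is where the mysterious exponent $2D+k$ in the running time comes from.

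First I would define a colored version of \PDD in which each \emph{unit-edge} (equivalently, each position along a subdivided edge) receives a color from $[2D+k]$, and require that a solution's subdivided spanning tree be \emph{colorful}, i.e.\ uses each color at most once. Using a $(W, 2D+k)$-perfect hash family, where $W = \sum_{e\in E}\w(e) = \PD(X) \le 2nD$ bounds the number of unit-edge positions, I would generate $e^{2D+k}(2D+k)^{\Oh(\log(2D+k))}\cdot \log W$ colorings, exactly as in the proofs of \Cref{thm:timePD-D} and \Cref{thm:PDD-k-stars}. The correctness argument is the standard one: if $S$ is a solution, then the set of unit-edge positions appearing in $\spannbaum{S\cup\{\rho\}}$ has size at most $2D+k$, so some hash function in the family is injective on it, making the corresponding colored instance a \yes-instance; conversely any colored solution is a genuine solution. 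The running time then multiplies the number of colorings by the cost $\Oh^*(2^{2D+k}\cdot nm)$ of solving one colored instance, and the $(3e)$-style base simplifies to the claimed $2^{3.03(2D+k)+o(D)}$.

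The main dynamic program would proceed bottom-up over the vertices of $\Tree$, but tracking along each original edge $e=uv$ how the $\w(e)$ unit-colors of its subdivision are consumed, together with which colors are used in the subtree below. For a taxon $x$ and a set of colors $C$ I would store the maximum achievable diversity of a viable set $S \subseteq X_{\ge x}$ whose subdivided spanning tree uses exactly the colors $C$ (and uses them colorfully), analogously to the entries $\DP[x,C]$ of \Cref{lem:PDD-k-stars}. Combining children of a tree vertex is again a subset-convolution-style join over the color set, contributing the factor $3^{\Oh(D+k)} = 2^{\Oh(D+k)}$; handling an edge of weight $\w(e)$ amounts to appending a path of $\w(e)$ fresh unit-colors, which I can encode by requiring that the chosen color set contain a designated ``vertical'' segment of the right length. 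Viability is enforced exactly as before via the food-web: a color set is only realizable at $x$ if $x$ together with its selected predators forms an $X_{\ge x}$-viable set rooted at a source.

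The hard part will be making the edge-subdivision bookkeeping precise while keeping the exponent at $2D+k$ rather than something larger: I must argue carefully that the number of distinct colors appearing in a solution's subdivided spanning tree is genuinely bounded by $2D+k$ (the $2D$ accounting for the total weight of the spanning tree together with one extra color per edge to mark key positions, and the $+k$ for the at most $k$ leaves or branch points), and I must ensure the DP never needs to remember more than the current color set rather than the full subdivision structure. A secondary subtlety is that $\w(e)$ may be large, so the subdivision is only conceptual; I would handle an edge of weight $\w(e)$ in a single DP step that ``charges'' $\w(e)$ colors at once using the colorfulness constraint, so that the running time depends on $m$ and $n$ polynomially and on $D+k$ only through the $2^{\Oh(D+k)}$ factor, avoiding any pseudo-polynomial dependence on the weights. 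Once these two points are nailed down, the correctness and running-time analysis follow the template already established in \Cref{thm:PDD-k-stars} and \Cref{thm:timePD-D}.
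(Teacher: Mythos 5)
There is a genuine gap, and it sits at the heart of your scheme: you require the solution's \emph{entire} subdivided spanning tree to be colorful with colors from $[2D+k]$, and you justify this by claiming the spanning tree of a solution has at most $2D+k$ unit-edge positions. That claim is false. The number of unit edges in $\spannbaum{S\cup\{\rho\}}$ is exactly $\PD(S)$, which is bounded \emph{below} by $D$ but not above by anything in terms of $D$ and $k$: after Reduction Rule~\ref{rr:maxw<D} one only has $\max_\w < D$ per edge, and a root-to-leaf path may consist of many unit-weight edges, so every solution containing such a leaf can be forced to have $\PD(S) \gg 2D+k$. For such instances no coloring in your family makes the spanning tree colorful, and your reduction answers \no{} on a \yes-instance; the forward direction of your correctness argument ("the set of unit-edge positions appearing in $\spannbaum{S\cup\{\rho\}}$ has size at most $2D+k$") cannot be repaired. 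The paper's Theorem~\ref{thm:PDD-D} avoids exactly this trap by making the edge-coloring a \emph{coverage} condition rather than an injectivity condition: each edge $e$ receives a color \emph{set} $c(e)\subseteq[D]$ of size $\w(e)$ (the same implicit subdivision you use), and a solution must satisfy $[D]\subseteq c(S)$. Excess weight is then harmless, and the perfect-hash-family argument only needs a function injective on \emph{some} size-$D$ subset of the covered weight-units, which always exists when $\w(E')\ge D$.

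Your derivation of the exponent $2D+k$ is likewise a story that does not correspond to a provable bound. In the paper, the factor $2$ on $D$ comes from the single-source transformation (Observation~\ref{obs:PDD-one-source}), which sets $D' = 2D+1$ and $k' = k+1$ before color coding; two \emph{separate} palettes and hash families are then used: a $(W,D')$-family coloring the weight-units with $[D']$ (coverage, as above) and an $(n,k')$-family coloring the \emph{taxa} with $[k']$, where colorfulness is demanded only of $\hat c(S)$ -- a legitimate injectivity requirement since $|S|\le k$, exactly as in Theorem~\ref{thm:PDD-k-stars}. Your single palette conflates the two roles and provides no working mechanism for enforcing $|S|\le k$ (coloring "branch points" of the subdivided tree does not bound the number of selected taxa). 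Finally, the colored instances are solved by a dynamic program over the \emph{food-web} in $\Oh(3^{D'+k'}\cdot nm)$ time (Lemma~\ref{lem:PDD-colored-D}, with $(C_1,C_2,X_{\ge x})$-feasible sets indexed by predators, mirroring Lemma~\ref{lem:PDD-k-stars}), not bottom-up over $\Tree$ with "vertical segments"; combining $3^{2D+k}$ with the hash-family size $e^{2D+k+o(D)}$ gives $(3e)^{2D+k+o(D)}$ and $\log_2(3e)\approx 3.03$, which is where the stated bound comes from -- your own accounting ($e^{2D+k}$ colorings times $2^{2D+k}$ per instance, yet also a $3^{\Oh(D+k)}$ convolution) is internally inconsistent on this point.
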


We define \cDPDDlong (\cDPDD), a colored version of \PDD, as follows.
In addition to the usual input of \PDD, we receive two colorings~$c$ and~$\hat c$ which assign each edge $e\in E(\Tree)$ a subset $c(e)$ of $[D]$, called \emph{a set of colors}, which is of size $\w(e)$ and each taxon $x\in X$ a \emph{color} $\hat c(x) \in [k]$.
We extend the function~$c$ to also assign color sets to taxa $x\in X$ by defining $c(x) := \bigcup_{e\in E'} c(e)$ where $E'$ is the set of edges with $x\in \off(e)$.
In~\cDPDD, we ask whether there is a viable set~$S\subseteq X$ of taxa such that~$c(S) = [D]$, and~$\hat c$ is colorful.

In the following we show how to solve \cDPDD and then we show how to apply standard color coding techniques to reduce from \PDD to \cDPDD.

Finding a solution for \cDPDD can be done with techniques similar to the ones used to prove~Lemma~\ref{lem:PDD-k-stars}.
\begin{lemma}
	\label{lem:PDD-colored-D}
	\cDPDD can be solved in $\Oh(3^{D+k} \cdot n\cdot m)$~time.
\end{lemma}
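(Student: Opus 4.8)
The plan is to adapt the food-web dynamic program of Lemma~\ref{lem:PDD-k-stars} so that it tracks two color budgets simultaneously: the $D$-colors carried by the edge-coloring $c$ (which encode the diversity) and the $k$-colors carried by the taxon-coloring $\hat c$ (which encode the cardinality bound). First I would invoke Observation~\ref{obs:PDD-one-source} to assume that $\Food$ has a unique source $\star$, so that $X_{\ge \star} = X$ and a single table entry holds the final answer. Then I would process the food-web from its sinks toward $\star$, exactly as in Lemma~\ref{lem:PDD-k-stars}, using a main table $\DP$ indexed by a taxon $x$, a set $C \subseteq [D]$, and a set $\hat C \subseteq [k]$, together with an auxiliary table $\DP'[x,p,C,\hat C]$ that merges the subsolutions rooted at the first $p$ predators of $x$. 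Since both objectives are now encoded in colors, each entry only needs to store a boolean rather than a maximum diversity.

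The crucial design choice --- and the point where the general tree differs from the star case --- concerns the semantics stored in $\DP$. In the star case the color sets $c(x)$ of distinct taxa are pairwise disjoint, so one may store ``$c(S) = C$ exactly'' and split $C$ as a partition. In a general phylogenetic tree, however, two selected taxa on different food-web branches may share ancestral edges, so $c(x)$ and $c(y)$ can overlap arbitrarily, and an exact-coverage table would force a four-state-per-color split and a $4^{D}$ factor. To avoid this, I would let $\DP[x,C,\hat C]$ be the boolean ``there is an $X_{\ge x}$-viable set $S$ with $\hat c$ injective on $S$, $\hat c(S)=\hat C$, and $c(S) \supseteq C$'' (covers \emph{at least} $C$). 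As $c(S) \subseteq [D]$ always holds, the final check $\DP[\star,[D],\hat C]$ for some $\hat C$ is then equivalent to $c(S)=[D]$, as demanded by \cDPDD. The recurrence merges two subsolutions over \emph{disjoint} splits,
\[
	\DP'[x,p+1,C,\hat C] \;=\; \bigvee_{\substack{C_1 \sqcup C_2 = C \\ \hat C_1 \sqcup \hat C_2 = \hat C}} \DP'[x,p,C_1,\hat C_1] \,\wedge\, \DP[y_{p+1},C_2,\hat C_2],
\]
with base cases and the viability gluing (the source argument using $y_{p+1} \in \predators{x}$) handled as in Lemma~\ref{lem:PDD-k-stars}.

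The correctness argument mirrors that of Lemma~\ref{lem:PDD-k-stars}, with one extra observation justifying the disjoint split of the $D$-colors. In the forward direction, from a feasible $S$ I would split $S=S_1\cup S_2$ according to whether a taxon is connected to $y_{p+1}$ in $\Food[X_{\ge y_{p+1}}]$, and then set $C_2 := C \cap (c(S_2)\setminus c(S_1))$ and $C_1 := C\setminus C_2$; every color of $C$ lies in $c(S_1)$ or, failing that, in $c(S_2)\setminus c(S_1)$, so $c(S_1)\supseteq C_1$ and $c(S_2)\supseteq C_2$ with $C_1\sqcup C_2 = C$. The $\hat C$-split is automatically disjoint because $\hat c$ is injective on $S$, which also forces $S_1\cap S_2=\emptyset$, so the backward direction simply glues the two witnesses. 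For the running time, each element of $[D]$ (respectively $[k]$) contributes three states across the disjoint-split enumeration --- in $C_1$, in $C_2$, or in neither --- so evaluating the recurrence costs $\Oh(3^{D+k})$ per predator, and summing over the food-web in the same way as in Lemma~\ref{lem:PDD-k-stars} yields the claimed $\Oh(3^{D+k}\cdot n\cdot m)$ bound. The main obstacle is precisely the overlap of ancestral colors; recognizing that the ``covers at least'' relaxation is sound (it neither over- nor under-counts the target condition $c(S)=[D]$) is what keeps the base $3$ rather than $4$.
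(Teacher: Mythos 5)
Your design coincides with the paper's proof of Lemma~\ref{lem:PDD-colored-D} in every structural respect: the paper likewise invokes Observation~\ref{obs:PDD-one-source} to get a unique source, processes the food-web exactly as in Lemma~\ref{lem:PDD-k-stars} with a main table and a predator-merging auxiliary table, stores booleans, uses exact colorful semantics for the $[k]$-colors, and --- this is also the paper's ``crucial design choice'' --- uses the relaxed semantics ``$C_1 \subseteq c(S)$'' for the $[D]$-colors (its $(C_1,C_2,X')$-feasibility), which is precisely what keeps the base $3$ despite overlapping ancestral colors; the running-time accounting is identical. However, your recurrence as displayed has a genuine soundness gap. The paper writes
\begin{equation*}
	\DP'[x,p+1,c(x)\cup C_1,\{\hat c(x)\}\cup C_2] \;=\; \max_{C_1'\subseteq C_1,\; C_2'\subseteq C_2} \DP'[x,p,c(x)\cup (C_1\setminus C_1'),\{\hat c(x)\}\cup (C_2\setminus C_2')]\cdot \DP[y_{p+1},C_1',C_2'],
\end{equation*}
anchoring $c(x)$ and, crucially, $\hat c(x)$ in the left block, whereas you range over \emph{all} disjoint splits $C_1\sqcup C_2=C$, $\hat C_1\sqcup \hat C_2=\hat C$.

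The unanchored split admits $C_1=\hat C_1=\emptyset$ with the empty set as left witness (the entry $\DP'[x,p,\emptyset,\emptyset]=1$ exists in your base cases, as in Lemma~\ref{lem:PDD-k-stars}). The recurrence then copies $\DP[y_{p+1},C,\hat C]=1$ into $\DP'[x,p+1,C,\hat C]$, asserting that a set $S_2\subseteq X_{\ge y_{p+1}}$ that is viable below $y_{p+1}$ is viable in the ground set $\{x\}\cup X_{\ge y_1}\cup\dots\cup X_{\ge y_{p+1}}$. This is false: $x$ is the unique source of that ground set, so $y_{p+1}$, which has prey $x$ there, is an illegal source of $\Food[S_2]$. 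Indeed, every \emph{nonempty} viable subset of the ground set must contain $x$, so the true value of any entry with $\hat c(x)\notin \hat C\neq \emptyset$ is $0$, while your recurrence can set it to $1$; the error propagates upward and can produce \yes on a \no-instance (for example, when $\hat c(x)$ collides with the $\hat c$-color of a prey of $x$ that every genuine solution through $x$ is forced to contain, so that the only ``solutions'' the DP finds are the non-viable glued sets). Your appeal to the gluing of Lemma~\ref{lem:PDD-k-stars} does not repair this, because that gluing argument relies exactly on the anchoring you dropped: keeping $x$'s colors in the left index forces the left witness to be nonempty, hence to contain the unique source $x$, which is what feeds $y_{p+1}$. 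The fix is one line --- require $\hat c(x)\in\hat C_1$ and place $c(x)$ in the left block (equivalently, demand a nonempty left witness) --- after which your proposal is the paper's proof.
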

\begin{proof}
	\proofpara{Table definition}
	Let $\Instance = (\Tree,\Food,k,D,c,\hat c)$ be an instance of \cDPDD and we assume that~$\star \in X$ is the only source in \Food by Observation~\ref{obs:PDD-one-source}.
	
	Given~$x\in X$, sets of colors $C_1\subseteq [D]$, $C_2\subseteq [k]$, and a set of taxa $X'\subseteq X$,
	a set~$S \subseteq X' \subseteq X$ is \emph{$(C_1,C_2,X')$-feasible} if
	\begin{itemize}
		\itemsep-.35em
		\item\label{it:Da}$C_1$ is a subset of $c(S)$,
		\item\label{it:Db}$\hat c(S) = C_2$,
		\item\label{it:Dc}$\hat c(S)$ is colorful, and
		\item\label{it:Dd}$S$ is $X'$-viable.
	\end{itemize}
	
	We define a dynamic programming algorithm with tables $\DP$ and $\DP'$.
	We want entry $\DP[x,C_1,C_2]$, for~$x\in X$ and sets of colors $C_1\subseteq [D]$, $C_2\subseteq [k]$, to store~1 if a~$(C_1,C_2,X_{\ge x})$-feasible set~$S$ exists.
	Otherwise, we want $\DP[x,C_1,C_2]$ to store~0.
	
	For a taxon $x \in X$, let $y_1,\dots,y_q$ be an arbitrary but fixed order of $\predators{x}$.
	In the auxiliary table we want entry $\DP'[x,p,C_1,C_2]$, for~$p\in [q]$, $C_1\subseteq [D]$, and $C_2\subseteq [k]$, to store~1 if a~$(C_1,C_2,X')$-feasible set $S \subseteq X'$ exists, where $X' := \{x\} \cup X_{\le y_1} \cup \dots \cup X_{\le y_p}$.
	If no~$(C_1,C_2,X')$-feasible set $S \subseteq X'$ exists, we want $\DP'[x,p,C_1,C_2]$ to store~0.
	
	\proofpara{Algorithm}
	As a base case, for each~$x\in X$ and each~$p\in [|\predators{x}|]$ let~$\DP[x,\emptyset,\emptyset]$ and~$\DP'[x,p,\emptyset,\emptyset]$ store~1.
	Further, let~$\DP[x,C_1,C_2]$ and~$\DP'[x,p,C_1,C_2]$ store~0 if~$C_1 \not\subseteq c(x)$, or~$\hat c(x) \not\in C_2$, or~$|C_2| > |X_{\ge x}|$ for each~$x\in X$, every~$C_1\subseteq [D]$, and every $C_2\subseteq [k]$.
	For each taxon~$x\in X$ with no predators, we store~1 in $\DP[x,C_1,C_2]$ if $C_1=C_2=\emptyset$ or if $C_1 \subseteq c(x)$ and $C_2 = \{\hat c(x)\}$.
	Otherwise, we store~0.
	
	Fix a taxon $x\in X$.
	Assume that~$\DP[y,C_1,C_2]$ is computed for each~$y\in \predators{x}$, every~$C_1\subseteq [D]$, and every $C_2\subseteq [k]$.
	For~$C_1 \subseteq [D] \setminus c(x)$ and~$C_2 \subseteq [k] \setminus \{\hat c(x)\}$, we set
	\begin{equation}
		\label{eqn:recurrence-D-pre}
		\DP'[x,1,c(x) \cup C_1,\{\hat c(x)\} \cup C_2] := \DP[y_1,C_1,C_2].
	\end{equation}
	
	Fix an integer $p\in [q]$.
	We assume that $\DP'[x,p,C_1,C_2]$ is computed for every~$C_1\subseteq [D]$, and every $C_2\subseteq [k]$.
	Then, for $C_1 \subseteq [D] \setminus c(x)$ and~$C_2 \subseteq [k] \setminus \{\hat c(x)\}$ we use the recurrence
	\begin{eqnarray}
		\label{eqn:recurrence-D}
		& & \DP'[x,p+1,c(x) \cup C_1,\{\hat c(x)\} \cup C_2]\\
		&:=&	\max_{C_1'\subseteq C_1, C_2'\subseteq C_2}
				\DP'[x,p,c(x) \cup C_1\setminus C_1',\{\hat c(x)\} \cup C_2\setminus C_2']
				\cdot \DP[y_{p+1},C_1',C_2'].
		\nonumber
	\end{eqnarray}
	Finally, set $\DP[x,C_1,C_2] := \DP'[x,q,C_1,C_2]$ for every~$C_1\subseteq [D]$, and every~$C_2\subseteq [k]$.
	
	Return \yes if~$\DP[\star,[D],C_2]$ stores~1 for some~$C_2 \subseteq [k]$.
	Otherwise, return \no.
	
	\proofpara{Correctness}
	The correctness can be shown analogously to the correctness of\linebreak Lemma~\ref{lem:PDD-k-stars}.
	
	\proofpara{Running time}
	The tables $\DP$ and $\DP'$ contain $\Oh(2^{D+k} \cdot n\cdot m)$ entries.
	Each entry in the base case can be computed in $\Oh(D^2 \cdot n)$~time.
	In Recurrence~(\ref{eqn:recurrence-D}), each color can occur either in $C_1'$, in $c(x) \cup C_1 \setminus C_1'$ or not in $c(x) \cup C_1$.
	Likewise, with $\{\hat c(x)\} \cup Z$ and $Z'$.
	Therefore, all values in Recurrence~(\ref{eqn:recurrence-D}) can be computed in $\Oh(3^{D+k} \cdot n\cdot m)$ time
	which is thus also the overall running time.
\end{proof}

We can now prove~Theorem~\ref{thm:PDD-D} by showing how the standard \PDD can be reduced to \cDPDD via color coding.
This proof is analogous to the proof of Theorem~\ref{thm:PDD-k-stars}.

\begin{proof}[Proof of Theorem~\ref{thm:PDD-D}]
	\proofpara{Reduction}
	Let $\Instance = (\Tree, \Food, k, D)$ be an instance of \PDD.
	We assume that \Food only has one source by~Observation~\ref{obs:PDD-one-source}
	and that $\max_\w < D$ by~Reduction Rule~\ref{rr:maxw<D}.
	
	Let~$e_1, \dots, e_{|E(\Tree)|}$ and~$x_1, \dots, x_{n}$ be an order of the edges and taxa of~\Tree, respectively.
	We define integers~$W_0 := 0$ and $W_j := \sum_{i=1}^{j} \w(e_{i})$ for each $j\in [|E(\Tree)|]$.
	Set~$W := W_{|E(\Tree)|}$.
	
	Compute a $(W, D)$-perfect hash family $\mathcal{H}_D$ and an~$(n,k)$-perfect hash family $\mathcal{H}_k$.  
	For every $g \in \mathcal{H}_k$,
	let $c_{g,2}$ be a coloring such that $c_{g,2}(x_j) = g(j)$ for each~$x_j \in X$.
	For every $f \in \mathcal{H}_D$,
	let $c_{f,1}$ be a coloring such that $c_{f,1}(e_j) = \{f(W_{j-1}+1), \dots, f(W_j)\}$ for each $e_j \in E(\Tree)$.
	
	For hash functions $f \in \mathcal{H}_D$ and~$g\in \mathcal{H}_k$,
	construct an instance $\Instance_{f,g}$ of \cDPDD with $\Instance_{f,g} := (\Tree, \Food, k, D, c_{f,1}, c_{g,2})$.
	Solve every instance~$\Instance_{f,g}$ using~Lemma~\ref{lem:PDD-colored-D} and return \yes if and only if $\Instance_{f,g}$ is a \yes-instance for some~$f \in \mathcal{H}_D$, and some $g\in \mathcal{H}_k$.

	\proofpara{Correctness}
	We first show that if \Instance is a \yes instance then $\Instance_{f,g}$ is a \yes-instance for some~$f \in \mathcal{H}_D$, $g\in \mathcal{H}_k$.
	For any set of edges $E'$ with $\w(E') \ge D$, there is a corresponding subset of~$[W]$ of size at least $D$.
	Since $\mathcal{H}_D$ is a $(W, D)$-perfect hash family,~$c_{f,1}(E') = [D]$, for some $f \in \mathcal{H}_D$.
	Analogously, for each set $X'$ of taxa of size~$k$ there is a hash function~$g\in \mathcal{H}_k$ such that $c_{g,2}(X') = [k]$.
	Thus in particular, if~$S\subseteq X$ is a solution of size~$k$ for instance $\Instance$, then $c_{f,1}(S) = [D]$, for some $f \in \mathcal{H}_D$ and $c_{g,2}(S) = [k]$, for some $g \in \mathcal{H}_k$.
	It follows that one of the constructed instances of \cDPDD is a \yes-instance.
	
	Conversely, a solution for $\Instance_{f,g}$ for some~$f \in \mathcal{H}_D$, and some $g\in \mathcal{H}_k$ is also a solution for $\Instance$.

	\proofpara{Running Time}
	We apply Reduction Rule~\ref{rr:maxw<D} and Observation~\ref{obs:PDD-one-source} in $\Oh(n^2)$~time.\lb
	By Observation~\ref{obs:PDD-one-source},~$k' \le k+1$ and~$D' \le 2D+1$.
	We can construct~$\mathcal{H}_D$\lb and~$\mathcal{H}_k$ in~$e^{D'} {D'}^{\Oh(\log {D'})} \cdot W \log W$ time.
	By Lemma~\ref{lem:PDD-colored-D}, instances of \cDPDD\lb can be solved in~$\Oh(3^{{D'}+{k'}}\cdot nm)$~time each, and the number of instances\lb
	is $|\mathcal{H}_{D'}| \cdot |\mathcal{H}_{k'}| = e^{D'} {D'}^{\Oh(\log {D'})} \cdot \log W \cdot e^{k'} {k'}^{\Oh(\log {k'})} \cdot \log n \in e^{D'+k' + o(D)} \cdot \log W$.
	
	Thus, the total running time is 
	$\Oh(e^{2D+k + o(D)} \cdot \log W \cdot (W + 3^{2D+k}\cdot nm))$.
	This simplifies to $\Oh((3e)^{2D + k + o(D)} \cdot nm + n^2)$,
	as~$W = \PD(X) < 2n\cdot D$.
\end{proof}

\subsection{No Poly Kernels For $D$}
In this subsection, first, we give a reduction from \SC to \sPDD to show that \sPDD does not admit a polynomial kernelization algorithm when parameterized by~$D$,
assuming~\NPcoNPpoly~(Theorem~\ref{thm:PDD-D-kernel-PDss}).
This result also holds for \PDD.
However, we afterward provide a compression from \textsc{Graph Motif} to \PDD to show the following stronger result.
Even if $\Food$, is a directed forest, \PDD does not admit a polynomial kernelization algorithm when parameterized by~$D$,
assuming~\NPcoNPpoly~(Theorem~\ref{thm:PDD-D-kernel-PDts}).

In the following, we reduce from \SC to \sPDD.
In \SC, an input consists of a universe $\mathcal{U}$, a family $\mathcal{Q}$ of subsets of $\mathcal{U}$, and an integer $k$.
It is asked whether there exists a sub-family of sets $\mathcal{Q'} \subseteq \mathcal{Q}$ such that $\mathcal{Q'}$ has a carnality of at most $k$ and the union of~$\mathcal{Q'}$ covers the entire universe.
Assuming \NPcoNPpoly, \SC does not admit a polynomial kernel when parameterized by the size of the universe $|\mathcal{U}|$~\cite{dom,cygan}.

Our reduction from \SC to \sPDD is similar to the reduction from \VC to \sPDD presented in~\cite{faller}.

\begin{theorem} \label{thm:PDD-D-kernel-PDss}
	\sPDD does not admit a polynomial kernelization algorithm with respect to $D$,
	assuming \NPcoNPpoly.
\end{theorem}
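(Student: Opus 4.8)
The plan is to give a polynomial parameter transformation (PPT) from \SC, parameterized by the universe size~$|\mathcal{U}|$, to \sPDD, parameterized by the threshold of diversity~$D$. Since \SC does not admit a polynomial kernelization with respect to~$|\mathcal{U}|$ (assuming \NPcoNPpoly), and since \SC lies in \NP while \sPDD is \NP-hard, the transfer theorem for PPTs stated in the preliminaries then immediately yields that \sPDD admits no polynomial kernelization with respect to~$D$, assuming \NPcoNPpoly. The only real work is to design the transformation so that the new threshold~$D$ is bounded by a polynomial in~$|\mathcal{U}|$ alone.

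Given an instance $(\mathcal{U},\mathcal{Q},k)$ of \SC with $\mathcal{U} = \{u_1,\dots,u_{|\mathcal{U}|}\}$ and $\mathcal{Q} = \{Q_1,\dots,Q_{|\mathcal{Q}|}\}$, I would first discard elements contained in no set (these give a trivial \no-instance) and assume without loss of generality that $k \le |\mathcal{U}|$, as no minimal cover uses more than $|\mathcal{U}|$ sets. I then build an instance of \sPDD as follows. The taxon set~$X$ contains an \emph{element taxon} for each~$u_i$ and a \emph{set taxon} for each~$Q_j$; the phylogenetic tree~$\Tree$ is the star whose leaves are these taxa. In the food-web~$\Food$ I add an arc from the set taxon~$Q_j$ to the element taxon~$u_i$ whenever $u_i \in Q_j$, so that each set taxon is a source and each element taxon has the set taxa containing it as its prey. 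Writing $B := |\mathcal{U}| + k + 1$, I set $\w(\rho u_i) := B$ for every element taxon, $\w(\rho Q_j) := 1$ for every set taxon, and finally $k' := |\mathcal{U}| + k$ and $D := |\mathcal{U}| \cdot B$. This is computable in polynomial time, and $D = |\mathcal{U}|(|\mathcal{U}| + k + 1) \le |\mathcal{U}|(2|\mathcal{U}| + 1) \in \Oh(|\mathcal{U}|^2)$, so the target parameter is polynomial in~$|\mathcal{U}|$, as required of a PPT.

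For correctness, recall that in a star $\PD(S) = \sum_{x \in S} \w(\rho x)$. If $(\mathcal{U},\mathcal{Q},k)$ is a \yes-instance with a cover~$\mathcal{Q}'$ of size at most~$k$, then the set~$S$ consisting of all element taxa together with the set taxa corresponding to~$\mathcal{Q}'$ is viable (each element taxon is fed by a chosen set), has size at most $|\mathcal{U}| + k = k'$, and satisfies $\PD(S) \ge |\mathcal{U}| \cdot B = D$. Conversely, let~$S$ be viable with $|S| \le k'$ and $\PD(S) \ge D$, and let $a$ and $b$ be the numbers of element and set taxa in~$S$, so $\PD(S) = aB + b$. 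If some element taxon were missing, that is $a \le |\mathcal{U}| - 1$, then $b \le |S| \le |\mathcal{U}| + k < B$ would force $\PD(S) \le (|\mathcal{U}|-1)B + b < |\mathcal{U}| B = D$, a contradiction; hence $a = |\mathcal{U}|$ and $S$ contains every element taxon. By viability every element taxon is fed by a set taxon in~$S$, so the $b \le |S| - |\mathcal{U}| \le k$ chosen set taxa form a cover of~$\mathcal{U}$ of size at most~$k$.

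The main obstacle is precisely the parameter bound: a naive encoding that lets set taxa carry significant weight, or that forces~$D$ to grow with~$|\mathcal{Q}|$, would fail, because $|\mathcal{Q}|$ can be exponential in~$|\mathcal{U}|$ and the transformation would then no longer be parameter-bounding. The construction circumvents this by giving each set taxon only unit weight and by using the budget~$k'$ — rather than the diversity threshold — to limit the number of usable sets; together with the harmless assumption $k \le |\mathcal{U}|$, this keeps $B$ and hence~$D$ quadratic in~$|\mathcal{U}|$. Verifying that the weight gap $B > |\mathcal{U}| + k$ indeed forces all element taxa into any sufficiently diverse solution is the one quantitative step that must be checked with care, but it is exactly the short computation indicated above.
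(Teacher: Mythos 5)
Your proposal is correct and follows essentially the same route as the paper's proof: a polynomial parameter transformation from \SC parameterized by the universe size $|\mathcal{U}|$, using a star tree whose leaves are the set taxa (sources in the food-web) and the element taxa (fed by the sets containing them), the budget $k' = |\mathcal{U}| + k$, and the harmless assumption $k \le |\mathcal{U}|$ to bound $D$ polynomially in $|\mathcal{U}|$. The only difference is cosmetic: the paper assigns element taxa weight $2$ and sets $D = k + 2|\mathcal{U}| \in \Oh(|\mathcal{U}|)$, arguing via the identity $\PD(S) = |S| + b$, whereas your large weight $B = |\mathcal{U}| + k + 1$ gives the slightly larger bound $D \in \Oh(|\mathcal{U}|^2)$ via a gap argument --- both work equally well.
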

\begin{proof}
	\proofpara{Reduction}
	Let an instance $\Instance = (\mathcal{Q},\mathcal{U},k)$ of \SC be given.
	We define an instance $\Instance' = (\Tree,\Food,k',D)$ of \sPDD as follows.
	Let \Tree be a star with root $\rho$ and leaves~$X := \mathcal{Q} \, \cup \, \mathcal{U}$.
	We set $\w(\rho Q) = 1$ for each $Q\in \mathcal{Q}$ and $\w(\rho u) = 2$ for each~$u\in \mathcal{U}$.
	Further, the food-web \Food is a graph with vertices $X$ and we add an edge~$Qu$ to \Food if and only if~$u\in Q$ for~$u\in \mathcal{U}$ and~$Q\in \mathcal{Q}$.
	Finally, we set $k' := k + |\mathcal{U}|$ and $D := k + 2|\mathcal{U}|$.
	
	\proofpara{Correctness}
	We may assume that $k\le |\mathcal{U}|$.
	Therefore, $D$ is bounded in $|\mathcal{U}|$.
	
	Now, assume that $\mathcal{Q'}$ is a solution for \Instance.
	If necessary, add further sets to $\mathcal{Q'}$ until $|\mathcal{Q'}| = k$.
	Because $\mathcal{Q'}$ is a solution for \Instance, for each $u\in\mathcal{U}$ there is a $Q\in\mathcal{Q'}$ such that $u\in Q$.
	Hence, $S := \mathcal{Q'} \cup \mathcal{U}$ is viable, has a size of $|S| = k + |\mathcal{U}| = k'$ and~$\PD(S) = |\mathcal{Q'}| + 2|\mathcal{U}| = k + 2|\mathcal{U}|$.
	
	Conversely,
	let $S$ be a solution for instance $\Instance'$ and assume $|S| = k'$.
	Let $S_{\mathcal{Q}}$ be the intersection of $S$ with $\mathcal{Q}$.
	Define $a := |S_{\mathcal{Q}}|$ and $b := |S|-a$.
	Then, we have~$\PD(S) = a + 2b = |S| + b$ and~$\PD(S) \ge k + 2|\mathcal{U}| = k' + |\mathcal{U}|$.
	We conclude that $b \ge |{\mathcal{U}}|$ and so $a = |S|-b \le k'-|{\mathcal{U}}| = k$.
	Since $S$ is viable, for each $u\in \mathcal{U}$ there is a $Q\in S_{\mathcal{Q}}$ with $u\in Q$.
	Therefore, $S_{\mathcal{Q}}$ is a solution for \Instance.
\end{proof}

For \PDD we want to show the non-existence of a polynomial kernel even in the case that the food-web is restricted to a forest.
Here, we define a cross-composition from \textsc{Graph Motif}, a problem in which one is given a graph $G$ with vertex-coloring~$\chi$ and a multiset of colors $M$.
It is asked whether $G$ has a connected set of vertices whose multiset of colors equals $M$.
\textsc{Graph Motif} was shown to be \NP-hard even on trees~\cite{lacroix}.
It remains \NP-hard to compute a solution for an instance of \textsc{Graph Motif} on trees of maximum vertex-degree three, even if each item appears at most once in~$M$, and there is a color $c^*\in M$ that only one vertex of $G$ takes~\cite{fellowsmotif}.

\begin{theorem} \label{thm:PDD-D-kernel-PDts}
	Even if the given food-web~$\Food$ is a directed forest,
	\PDD does not admit a polynomial kernelization algorithm with respect to $D$,
	assuming \NPcoNPpoly.
\end{theorem}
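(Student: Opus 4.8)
The plan is to prove this via a \emph{cross-composition} from \textsc{Graph Motif} into \PDD parameterized by $D$, arranged so that the constructed food-web is a directed forest. Since \textsc{Graph Motif} is \NP-hard already on trees of maximum degree three in which each color occurs at most once in $M$ and some color $c^*$ is realized by a unique vertex, and since \PDD is clearly in \NP, such a cross-composition rules out a polynomial kernelization for $D$ unless \NPcoNPpoly. As the polynomial equivalence relation $R$ I would take: two strings are equivalent if both are malformed, or both encode well-formed restricted \textsc{Graph Motif} instances with the same number of vertices and the same motif size $\mu:=|M|$; this yields $\Oh((\max_i|x_i|)^2)$ classes and is checkable in polynomial time. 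Within a class the composition first relabels colors canonically, sending the unique $c^*$-vertex's color to $1$, the motif colors to $[\mu]$ (with $c^* = 1$), and every color not in $M$ to a single junk color $\bot$; this is sound because a motif, being colorful with palette $M$, never uses a $\bot$-colored vertex.

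Given $2^t$ instances $G_1,\dots,G_{2^t}$ of one class, I would build a single \PDD instance as follows. For the food-web $\Food$, orient each tree $G_i$ away from its unique color-$1$ vertex $c^*_i$ (so the parent of each non-root vertex is its unique prey) and take the disjoint union of these oriented trees; this is a directed forest. For the phylogenetic tree $\Tree$, attach below the root $\rho$, for each color $j\in[\mu]$, an edge of weight $W:=2\mu+1$ leading to a vertex $w_j$, hang all color-$j$ taxa (across all instances) below $w_j$ with unit-weight edges, and collect all $\bot$-colored taxa below a single unit-weight edge from $\rho$; degree-$1$ internal vertices created this way are contracted. Finally set $k:=\mu$ and $D:=\mu W$. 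Note $D=\mu(2\mu+1)$ is polynomial in $\mu\le\max_i|x_i|$ and independent of $t$, so the parameter bound of a cross-composition is satisfied comfortably.

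The correctness argument would run as follows. Because in the oriented forest each non-root vertex has its parent as its only prey, a non-empty viable set inside a tree $T_i$ is exactly a connected subtree containing $c^*_i$; conversely any motif of $G_i$ (which must contain $c^*_i$ since $c^*\in M$ and $c^*_i$ is the unique color-$1$ vertex) is such a subtree and hence viable. For the diversity, a selected set $S$ contributes $W$ for each distinct color of $[\mu]$ it hits, plus ``noise'' from unit edges below the $w_j$ and from $\bot$; since $|S|\le k=\mu$ this noise is strictly less than $W$. Hence $\PD(S)\ge D=\mu W$ forces $S$ to hit all $\mu$ colors, and with only $\mu=k$ taxa this is possible only if $S$ is colorful, using exactly one taxon per color of $[\mu]$ and none of color $\bot$. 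In particular $S$ contains exactly one color-$1$ taxon, so it cannot meet two trees $T_i,T_j$ (which would force two color-$1$ vertices $c^*_i,c^*_j$). Thus any solution lies in a single $T_i$ and is precisely a motif of $G_i$; conversely a motif in some $G_i$ yields such a solution. This gives the required \textsc{or}-behaviour, completing the cross-composition.

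The main obstacle I anticipate is not the diversity bookkeeping but showing that the numeric parameters $k$ and $D$ \emph{alone} enforce the selection of a single input instance. Since $\Food$ must be a disconnected forest, there is no room for an explicit selection gadget linking the $2^t$ instances, so the entire ``choose one instance'' mechanism has to be carried by viability together with the counting argument above. Making this airtight requires carefully pinning down the noise bound (so that $(\mu-1)W+\text{noise}<\mu W$) and exploiting both special features of the hardness source, namely the uniqueness of the $c^*$-colored vertex (to forbid spanning two trees) and the colorfulness of $M$ (to turn the threshold into an exact ``hit every color once'' condition). The remaining technicalities, such as suppressing degree-$1$ internal vertices of $\Tree$ and handling malformed instances by emitting a trivial \no-instance, are routine.
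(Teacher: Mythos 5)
Your proposal is correct and follows essentially the same route as the paper's proof: a cross-composition from the restricted \textsc{Graph Motif} on trees (unique $c^*$-vertex, colorful $M$), orienting each tree away from its $c^*$-vertex to obtain a directed-forest food-web, grouping taxa by color under a height-2 phylogenetic tree, and choosing $k$ and $D$ so that the diversity threshold forces a colorful set hitting every motif color, whence the unique $c^*$-colored taxon pins the solution to a single input instance. The only deviations are cosmetic: the paper uses unit weights with $D=2|M|$ (each taxon contributes $+2$ exactly when its color is new) where you use heavy color edges $W=2\mu+1$ with $D=\mu W$, and your explicit $\bot$-branch and canonical-relabeling equivalence relation merely make precise details the paper leaves implicit.
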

\begin{proof}
	We describe a cross-composition from \textsc{Graph Motif} to \PDD.
	We refer readers unfamiliar to cross-compositions to~\cite[Chapter~15.1]{cygan}.
	
	\proofpara{Cross-Composition}
	Fix a set of colors $M$.
	Let $\Instance_1,\dots,\Instance_q$ be instances of \textsc{Graph Motif} with $\Instance_i = (G_i = (V_i,E_i),M)$ such that all $G_i$ are trees with maximum vertex-degree of three.
	Let $v_i$ be the only vertex of $V_i$ with $\chi(v_i) = c^* \in M$.
	
	We define an instance $\Instance = (\Tree,\Food,k,D)$ of \PDD as follows.
	Let \Tree be a tree with vertex set $\{\rho\} \cup M \cup X$ and leaves $X := \bigcup_{i=1}^q V_i$.
	Add an edge $\rho c$ to \Tree for each~$c\in M$ and add an edge $cv$ if and only if $\chi(v) = c$.
	Each edge has a weight of~1.
	Orient each edge in $G_i$ away from~$v_i$ to obtain $H_i$ for each $i\in [q]$.
	Let \Food be the food-web which is the union of $H_1,\dots,H_q$.
	We define $k := |M|$ and $D := 2\cdot|M|$.

	\proofpara{Correctness}
	We show that some instance of $\Instance_1,\dots,\Instance_q$ is a \yes-instance of \textsc{Graph Motif} if and only if \Instance is a \yes-instance of \PDD.
	
	Let $\Instance_i$ be a \yes-instance of \textsc{Graph Motif} for an $i\in [q]$.
	Consequently, there is a set of vertices $S\subseteq V_i$ such that $|S| = |M|$, $\chi(S) = M$ and $G_i[S]$ is connected.
	We conclude that~$v_i\in S$ because~$v_i$ is the only vertex with $\chi(v_i) = c^*$.
	Thus, $H_i[S]$ is a connected subtree of $H_i$ which contains $v_i$, the only source in~$H_i$.
	We conclude that~$S$ is viable in \Instance.
	By definition,~$|S| = |M|$ and because each vertex in~$S$ has another color we conclude that~$\PD(S) = 2 \cdot |M|$.
	Hence, $S$ is a solution for \Instance.
	
	Conversely, let \Instance be a \yes-instance of \PDD.
	Consequently, there is a viable set~$S$ of taxa such that $|S| \le |M|$ and $\PD(S) \ge 2 \cdot |M|$.
	We say that a taxon $x\in X$ has color~$c\in M$ if~$cx$ is an edge in \Tree.
	Observe that~$\PD(A\cup\{x\}) \le \PD(A) + 2$ for any set of taxa $A$.
	Further,~$\PD(A\cup\{x\}) = \PD(A) + 2$ if and only if~$x$ has a color that none of the taxa of~$A$ has.
	We conclude that the taxa in $S$ have unique colors.
	Fix~$j\in [q]$ such that $v_j\in S$.
	The index~$j$ is uniquely defined because there is a unique taxon in $S$ with the color $c^*$.
	Because the vertices~$v_1,\dots,v_q$ are the only sources in \Food and $S$ is viable, we conclude $u\in V_i$ can not occur in $S$ for~$i\ne j$.
	Therefore, we conclude $S\subseteq V_j$.
	Because $S$ is viable, $H_i[S]$ is connected and therefore also~$G_i[S]$.
	Thus, $S$ is a solution for instance $\Instance_j$ of \textsc{Graph Motif}.
\end{proof}

\section{The Loss of Species $\kbar$ and Diversity $\Dbar$}
\label{sec:PDD-kbar}
\subsection{Hardness For the Acceptable Diversity Loss $\Dbar$}
In some instances, the diversity threshold~$D$ may be very large. Then, however, the acceptable loss of diversity~\Dbar would be relatively small.
Recall, $\Dbar$ is defined as $\PD(X) - D$.
Encouraged by this observation, recently, several problems in maximizing phylogenetic diversity have been studied with respect to the acceptable diversity loss~\cite{MAPPD,TimePD}.
In this section, we show that, unfortunately, \sPDD is~$\Wh 1$-hard with respect to $\Dbar$ even if edge-weights are at most two.

To show this result, we reduce from \rbnb.
In \rbnb, the input is an undirected bipartite graph $G$ with vertex bipartition~$V=V_r \cup V_b$ and an integer $k$.
The question is whether there is a set $S\subseteq V_r$ of size at least $k$ such that the neighborhood of $V_r \setminus S$ is $V_b$.
\rbnb is \Wh 1-hard when parameterized by the size of the solution $k$~\cite{downey}.
\begin{proposition}
	\label{prop:PDD-Dbar}
	\sPDD is $\Wh 1$-hard with respect to $\Dbar$, even if $\max_\w=2$.
\end{proposition}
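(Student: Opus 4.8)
The plan is to give a parameterized reduction from \rbnb, which by assumption is \Wh1-hard with respect to $k$. Let $(G,k)$ with bipartition $V(G) = V_r \cup V_b$ be an instance of \rbnb. First I would record two harmless simplifications: we may assume $k \le |V_r|$, and that every vertex of $V_b$ has a neighbor in $V_r$ (otherwise $(G,k)$ is a trivial \no-instance). Since putting a vertex back into $V_r\setminus S$ only enlarges the dominated part of $V_b$, the instance is equivalent to asking whether some $S\subseteq V_r$ of size \emph{exactly} $k$ has the property that $V_r\setminus S$ dominates $V_b$. I would then build an instance $\Instance = (\Tree,\Food,k',D)$ of \sPDD by setting $X := V_r\cup V_b$, letting $\Tree$ be the star with root $\rho$ and weights $\w(\rho x):=1$ for $x\in V_r$ and $\w(\rho x):=2$ for $x\in V_b$, and letting $\Food$ have vertex set $X$ with an edge $uv$ for each edge $\{u,v\}$ of $G$ with $u\in V_r$, $v\in V_b$. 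In particular the vertices of $V_r$ are exactly the sources of $\Food$. Finally I would set $k' := |V_r|+|V_b|-k$ and $D := |V_r| + 2|V_b| - k$.

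The key correctness step is a weighting argument that pins down the shape of every solution. Suppose a viable $A$ with $|A|\le k'$ and $\PD(A)\ge D$ omits $b$ taxa of $V_b$ and contains $r$ taxa of $V_r$. The budget gives $r \le |V_r|-k+b$, hence $\PD(A) = 2(|V_b|-b) + r \le |V_r| + 2|V_b| - k - b = D - b$, so $\PD(A)\ge D$ forces $b=0$ and $r = |V_r|-k$. Thus every solution has the form $A = V_b\cup R$ with $R\subseteq V_r$, $|R| = |V_r|-k$. For such an $A$, viability means $\sourcespersonal{\Food[A]} \subseteq \sources = V_r$; since no vertex of $V_b$ is a source of $\Food$, this is exactly the condition that every $v\in V_b$ has a neighbor in $R$, i.e.\ that $R$ dominates $V_b$. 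Conversely any $R\subseteq V_r$ of size $|V_r|-k$ dominating $V_b$ gives the viable set $V_b\cup R$ of size $k'$ with diversity exactly $D$. Writing $S := V_r\setminus R$ (of size $k$), this shows $\Instance$ is a \yes-instance of \sPDD if and only if $(G,k)$ is a \yes-instance of \rbnb.

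It remains to check the parameter. The construction is clearly polynomial-time, $\max_\w = 2$ by design, and $\Dbar = \PD(X) - D = (|V_r|+2|V_b|) - (|V_r|+2|V_b|-k) = k$, so $\Dbar$ is bounded by a function of the \rbnb parameter and the reduction is a valid parameterized reduction. I expect the only delicate point to be the weighting argument above that forces every solution into the form $V_b\cup R$ with $|R| = |V_r|-k$; once that is in place, the equivalence between viability and the domination requirement of \rbnb is immediate. I would also note that the same construction gives $\kbar = |X|-k' = k$, so with no extra work it simultaneously establishes \Wh1-hardness with respect to $\kbar$, matching the combined entry in Table~\ref{tab:PDD-results}.
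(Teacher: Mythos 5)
Your proposal is correct and follows essentially the same route as the paper's proof: the identical reduction from \rbnb with a star whose $V_r$-leaves get weight~$1$ and $V_b$-leaves weight~$2$, edges of the food-web directed from $V_r$ to $V_b$, the same values $k' = |V|-k$ and $D = |V_r|+2|V_b|-k$, and the same counting argument forcing every solution to have the form $V_b \cup R$ with $|R| = |V_r|-k$. Your explicit preprocessing step that every vertex of $V_b$ has a neighbor in $V_r$ is a small tidying of an assumption the paper leaves implicit (otherwise such a vertex would itself be a source of $\Food$, breaking the viability-equals-domination equivalence), and your closing remark that the construction simultaneously yields $\kbar = \Dbar = k$ matches the paper as well.
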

\begin{proof}
	\proofpara{Reduction}
	Let $\Instance := (G=(V=V_r \cup V_b,E),k)$ be an instance of \rbnb.
	We construct an instance $\Instance' = (\Tree, \Food, k', D)$ of \sPDD as follows.
	Let \Tree be a star with root $\rho\not\in V$ and leaves $V$.
	In \Tree, an edge $e = \rho u$ has a weight of~1 if $u\in V_r$ and
	otherwise~$\w(e) = 2$, if $u\in V_b$.
	Define a food-web \Food with vertices~$V$ and
	for each edge~$\{u,v\}\in E$, and every pair of vertices $u\in V_b$, $v\in V_r$, add an edge~$uv$ to~\Food.
	Finally, set $k' := |V| - k$ and~$D := 2\cdot |V_b| + |V_r| - k$, or equivalently~$\kbar = \Dbar = k$.
	
	\proofpara{Correctness}
	The reduction can be computed in polynomial time.
	We show that if \Instance is a \yes-instance of \rbnb then~$\Instance'$ is a \yes-instance of \PDD. Afterward, we show the converse.
	
	Assume that $\Instance$ is a \yes-instance of \rbnb.
	Therefore, there is a set~$S\subseteq V_r$ of size at least $k$ such that $N_G(V_r\setminus S) = V_b$.
	We assume $|S| = k$ as $N_G(V_r\setminus S) = V_b$ still holds if we shrink $S$.
	We define $S' := V\setminus S$ and show that $S'$ is a solution for $\Instance'$.
	The size of~$S'$ is $|V\setminus S| = |V| - |S| = k'$.
	Further, $\PD(S) = 2\cdot |V_b| + |V_r\setminus S| = 2\cdot |V_b| + |V_r| - k = D$.
	By definition, the vertices in $V_r$ are sources.
	Further, because $S$ is a solution for \Instance, each vertex of~$V_b$ has a neighbor in $V_r\setminus S$.
	So, $S'$ is viable and $\Instance'$ is a \yes-instance of \sPDD.
	
	Conversely, let $S'\subseteq V$ be a solution for instance $\Instance'$ of \sPDD.
	Without loss of generality, $S'$ contains $r$ vertices from $V_r$ and $b$ vertices of $V_b$.
	Consequently, $|V| - k \ge |S'| = b + r$ and $2\cdot |V_b| + |V_r| - k = D \le \PD(S') = 2b + r$.
	Therefore, $r \le |V| - k - b$ and so~$2b \ge 2\cdot |V_b| + |V_r| - k - r \ge 2\cdot |V_b| + |V_r| - k - (|V| - k - b) = |V_b| + b$.
	We conclude $b = |V_b|$ and $V_b \subseteq S'$.
	Further, $r = |V_r| - k$.
	We define $S := V_r \setminus S'$ and conclude $|S| = |V_r| - r = k$.
	Because $S'$ is viable, each vertex in $V_b$ has a neighbor in $S'\setminus V_b$.
	Therefore, $S$ is a solution for the \yes-instance \Instance of \rbnb.
\end{proof}

\subsection{Parameter $\kbar$ When Each Taxon Has At Most One Prey}
It is known that \PDD remains \NP-hard if the food-web is acyclic and every vertex has at most one prey~\cite{faller}.
By Proposition~\ref{prop:PDD-Dbar} we know that \PDD is \Wh{1}-hard when parameterized by the acceptable loss of diversity $\Dbar$ or the minimum number of extincting taxa $\kbar$.
In the following, we show that in the special case that each taxon has at most one prey, \PDD is \FPT when parameterized with~$\kbar$.
Observe that each taxon has at most one prey if and only if the food-web is an out-forest.

\begin{proposition} \label{prop:PDD-kbar+indeg}
	\PDD can be solved in time~$\Oh(2^{3\kbar + o(\kbar)}n\log n + n^5)$, if each taxon has at most one prey.
\end{proposition}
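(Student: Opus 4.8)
The plan is to exploit the fact that when every taxon has at most one prey the food-web $\Food$ is an out-forest, so viability becomes a purely hereditary condition, and then to attack the dual parameter $\kbar$ with color coding. First I would preprocess. By Observation~\ref{obs:PDD-solution-size} I may assume a solution leaves exactly $\kbar$ taxa extinct, so writing $R := X \setminus S$ the task is to choose the \emph{extinct set} $R$ with $|R| = \kbar$. Since each taxon has at most one prey, $\Food$ is an out-forest and, by Observation~\ref{obs:PDD-viable}, a set $S$ is viable if and only if $S$ is closed under taking prey; equivalently $R$ is closed under taking predators, i.e.\ $R$ is a union of full $\Food$-subtrees (an up-set). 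Using Observation~\ref{obs:PDD-one-source} I add a single source $\star$ that must survive, and I apply Reduction Rules~\ref{rr:each-taxon-savable} and~\ref{rr:maxw<D} (within the $\Oh(n^5)$ budget). Finally, note that $\PD(S) = \PD(X) - \w(E_d(R))$, where $E_d(R) = \{e \in E(\Tree) : \off(e) \subseteq R\}$ is the set of tree-edges whose offspring are entirely extinct; hence it suffices to find an up-set $R$ of size $\kbar$ minimizing the lost diversity $\w(E_d(R))$ and to test this minimum against $\Dbar$.

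Next I would color-code the extinct taxa. Using an $(n,\kbar)$-perfect hash family $\mathcal{H}$ (Definition~\ref{def:perfectHashFamily}), for each $f \in \mathcal{H}$ I color the taxa by $[\kbar]$ so that for the sought extinct set $R$ some $f$ makes $c(R)$ colorful. For a fixed coloring I solve the colored auxiliary problem: find an up-set $R$ with $c(R) = [\kbar]$ colorful (hence $|R| = \kbar$) minimizing $\w(E_d(R))$. I would solve this with a dynamic program over the phylogenetic tree $\Tree$, processed bottom-up, so that the lost-diversity objective decomposes edge by edge: an entry $\DP[u,C]$ stores, over all choices of a colorful extinct set using exactly the color set $C \subseteq [\kbar]$ among the offspring of $u$, the minimum weight of tree-edges below $u$ that become lost, together with a flag recording whether $\off(u)$ is entirely extinct (which determines whether the parent edge of $u$ is added to $E_d$). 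Children of a node are combined by distributing the color set, which I would implement as in Lemma~\ref{lem:PDD-k-stars} to keep the per-node cost at $\Oh^*(2^{\Oh(\kbar)})$.

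The crucial point, and the main obstacle, is enforcing viability inside a dynamic program that runs over $\Tree$ rather than over $\Food$: the unique prey $p(x)$ of an extinct taxon $x$ may lie in a completely different part of $\Tree$, so extinction cannot be certified locally. Here I would use the out-forest structure together with the coloring exactly as the key-color bookkeeping in the proof of Theorem~\ref{thm:timePD-DBar}: because each extinct taxon has a \emph{unique} prey, the up-set condition is equivalent to the chain condition ``every extinct non-source taxon demands the extinction of its single prey''. Since the coloring is colorful on $R$, each color is carried by at most one extinct taxon, so I can track in the DP two color sets --- the colors already \emph{chosen} extinct and the colors \emph{demanded} by already-chosen extinct taxa (namely $c(p(x))$ for each chosen $x$) --- and accept only states in which, at the root $\star$, every demanded color is realized by the taxon that actually provides it. Verifying that a demanded color is discharged by the correct prey (and not by an unrelated taxon of the same color) is what requires the unique-prey property: it guarantees that the demand structure is itself a forest mirroring the food-web, so the demands propagate consistently.

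Finally I would assemble the running time. The perfect hash family contributes $e^{\kbar}\kbar^{\Oh(\log \kbar)}\log n$ colorings, each constructed and processed in time polynomial in $n$, and the dynamic program runs in $\Oh^*(2^{\Oh(\kbar)})$ per coloring; multiplying these and absorbing the $\kbar^{\Oh(\log\kbar)}$ factor into $2^{o(\kbar)}$ yields the claimed $\Oh(2^{3\kbar + o(\kbar)} n \log n)$ bound, while the exhaustive application of the reduction rules and the remaining polynomial overhead account for the additive $\Oh(n^5)$. I expect the delicate part of the write-up to be the correctness proof of the demand-tracking, i.e.\ showing that the colored DP accepts exactly those colorful up-sets; this is where a careful, Theorem~\ref{thm:timePD-DBar}-style case analysis will be needed, and where the assumption that each taxon has at most one prey is indispensable.
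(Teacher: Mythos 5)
There is a genuine gap in your colored dynamic program, and it appears at exactly the step you flag as delicate. First, your ``chain condition'' is inverted: with unique prey, viability of $S$ means every non-source $x \in S$ has $\prey{x} \subseteq S$, so the extinct set $R$ must be \emph{predator}-closed --- an extinct taxon $z$ forces all of $\predators{z}$ extinct, not its prey. Your DP records the demand $c(p(x))$ for each chosen extinct $x$, which enforces ancestor-closure of $R$ in the out-forest and thus certifies the wrong condition. Second, and more fundamentally, even with the direction repaired, discharging demands by \emph{color equality} is unsound. The perfect hash family only guarantees that the coloring is injective on the one unknown solution set; it says nothing about taxa outside it. So the DP can accept a set $R$ in which the demand raised by an extinct taxon is ``covered'' by an unrelated taxon of the same color elsewhere in $\Tree$, while the actual predator survives --- a false positive, since $X\setminus R$ is then not viable. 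The unique-prey property does not rescue this: it makes the demand \emph{structure} a forest, but colors still do not identify taxa, and unlike the key-color argument in Theorem~\ref{thm:timePD-DBar} --- where any accepted anchored set can be shown to certify the (monotone, aggregate) bound $\w(E_d(X(\mcA))) \le \w(E^+(\mcA))$ regardless of which edges carry which colors --- viability is an exact structural property with no analogous rescue lemma. Verifying that the correct taxon discharges each demand would require tracking identities, not colors, which destroys the $2^{\Oh(\kbar)}$ state space. As a secondary issue, your time accounting does not reach the claimed bound either: a perfect hash family of size $e^{\kbar}\kbar^{\Oh(\log\kbar)}\log n$ combined with a $3^{\kbar}$-type subset DP gives $\Oh^*((3e)^{\kbar}) = \Oh^*(2^{3.03\kbar})$, which exceeds $2^{3\kbar + o(\kbar)}$.

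The paper avoids both problems by changing what the colors assert. It uses an $(n,3\kbar)$-\emph{universal set} to produce $2$-colorings $c: X \to \{0,1\}$, where color $1$ means ``must survive'' and the $3\kbar$ budget covers the extinct set $Y$ together with two kinds of boundary witnesses: the prey-neighbors of $Y$ in \Food (at most $\kbar$ of them, precisely because each taxon has at most one prey) and, for each maximal tree-edge $uv$ with $\off(v) \subseteq Y$, one surviving offspring of $u$. In the colored problem \ckPDD the conditions are then checked directly against the actual adjacencies of \Food and \Tree (every \Food-neighbor of an extinct taxon has color $1$, etc.), so soundness is immediate --- no color ever stands in for an identity. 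This rigidity makes the colored problem collapse combinatorially: the extinct set must be a union of connected components of $\Food_{c,0}$ glued with entirely-$0$-colored subtree-offspring blocks, and choosing which blocks to extinguish is an instance of \KP with budget $\Dbar$ and target $\kbar$, solvable in $\Oh(n^4)$ time per coloring. Multiplying by the universal-set size $2^{3\kbar + o(\kbar)}\log n$ yields the stated bound. If you want to salvage your route, you would need to replace color-matching of demands with conditions verifiable against the actual food-web, which is essentially the paper's move.
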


In order to show the claimed result we again resort to the technique of color coding.
We define a colored version of the problem, show how to solve it,
and at the end of the section, we show how to reduce instances of the uncolored problem to the colored version.

We define \ckPDDlong (\ckPDD), a colored version of the problem, as follows.
Additionally to the usual input $\Tree$, $\Food$, $k$, and $D$ of \PDD, we receive a coloring $c$ which assigns each taxon $x\in X$ either~0 or~1.
In \ckPDD, we ask whether there is a set~$S\subseteq X$ that holds each of the following
\begin{propEnum}
	\item the size of $S$ is at most $k$,
	\item the phylogenetic diversity of $S$ is at least $D$,
	\item $S$ is viable,
	\item each taxon $x\in X\setminus S$ satisfies $c(x)=0$,
	\item each neighbor $y$ of $X\setminus S$ in the food-web \Food satisfies $c(y) = 1$, and
	\item If~$\off(v) \subseteq X\setminus S$ then $\off(u) \subseteq X\setminus S$ or there is a taxon $y\in \off(u)$ with~$c(y) = 1$ for each edge $uv\in E(\Tree)$.
\end{propEnum}

Observe that if the color $c(x)$ of a taxon $x\in X$ is~1, then $x$ has to be saved,
while~$x$ may be saved or may go extinct if the color $c(x)$ is~0.

\begin{observation} \label{obs:PDD-equivalent-extinction-1}
	Given a \yes-instance $\Instance = (\Tree,\Food,k,D,c)$ of \ckPDD with\lb solution~$S$ and
	a (possibly internal) vertex $w\in V(\Tree)$ with $c(x)=0$ for\lb each $x\in \off(w)$.
	Then either $\off(w) \subseteq S$ or $\off(w) \subseteq X\setminus S$.
\end{observation}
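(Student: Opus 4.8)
The plan is to argue by contradiction, showing that a ``split'' of $\off(w)$ between $S$ and $X\setminus S$ is exactly what condition~(f) in the definition of \ckPDD forbids. Concretely, I would suppose that $\off(w)$ is neither contained in $S$ nor in $X\setminus S$, so that there is a saved taxon in $\off(w)\cap S$ and an extinct taxon $x\in \off(w)\setminus S$.

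Next I would walk up the tree from the extinct taxon $x$ toward $w$. Writing the path as $x = v_0, v_1, \dots, v_t = w$, where $v_{i+1}$ is the parent of $v_i$, I have $\off(v_0) = \{x\} \subseteq X\setminus S$ at the bottom, while $\off(v_t) = \off(w) \not\subseteq X\setminus S$ at the top, since $\off(w)$ contains a saved taxon. Hence there is a smallest index $j$ with $\off(v_j) \not\subseteq X\setminus S$; because $\off(v_0)\subseteq X\setminus S$ we have $j\ge 1$, and by minimality $\off(v_{j-1}) \subseteq X\setminus S$. Moreover $v_j$ lies on the path below or at $w$, so it is a descendant of $w$.

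The key step is to apply condition~(f) to the edge $v_j v_{j-1}$, with parent $v_j$ and child $v_{j-1}$. Since $\off(v_{j-1}) \subseteq X\setminus S$ triggers the hypothesis of~(f), the condition forces either $\off(v_j) \subseteq X\setminus S$ --- which is excluded by the choice of $j$ --- or the existence of a taxon $y \in \off(v_j)$ with $c(y)=1$. But $\off(v_j) \subseteq \off(w)$, as $v_j$ is a descendant of $w$, and every taxon of $\off(w)$ has color $0$ by hypothesis; thus no such $y$ exists. This contradiction shows that $\off(w)$ cannot be split, which is precisely the claim $\off(w)\subseteq S$ or $\off(w)\subseteq X\setminus S$.

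I do not expect a serious obstacle here, as the argument is purely structural and uses no counting or weight estimates. The only points requiring care are fixing the orientation of condition~(f) (following the paper's convention that in an edge $uv$ the vertex $u$ is the parent, so that the ``boundary'' edge is written $v_j v_{j-1}$), and confirming that the boundary vertex $v_j$ remains inside the subtree rooted at $w$, so that it inherits the all-color-$0$ property of $\off(w)$. Both are immediate once the upward path from $x$ to $w$ is set up correctly.
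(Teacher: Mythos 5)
Your proof is correct and takes essentially the same route as the paper's: walk upward from an extinct taxon to the boundary edge where all-extinct offspring stop, then apply condition~(f) to force a taxon $y$ with $c(y)=1$ inside $\off(w)$, contradicting the all-color-$0$ hypothesis. The only cosmetic difference is that you argue by contradiction with the boundary vertex confined to the subtree of $w$, while the paper walks up to the maximal all-extinct ancestor $v$ (possibly above $w$) and then concludes that $w$ must be a descendant of $v$; both hinge on the identical application of condition~(f).
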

\begin{proof}
	If $\off(w) \subseteq S$, we are done.
	
	Let there be a taxon $x\in \off(w)$ which is not in $S$.
	Let $v$ be the descendant of~$x$ (possibly~$v=x$) such that $\off(v) \subseteq X\setminus S$
	and $\off(u) \cap S \ne \emptyset$ for the parent~$u$ of~$v$.
	Because~$S$ is a solution, $\off(v) \subseteq X\setminus S$, and $\off(u) \cap S \ne \emptyset$, we conclude that there is a taxon~$y\in \off(u)$ with~$c(y) = 1$.
	Therefore, $w$ is a descendant of $v$ and we conclude that~$\off(w) \subseteq \off(v) \subseteq X\setminus S$.
\end{proof}

\begin{definition}
	~
	\begin{propEnum}
		\item[a)] Given a set of taxa and a coloring $c: X\to \{0,1\}$, we define~$c^{-1}(0)$ and $c^{-1}(1)$ to be the subsets of $X$ that $c$ maps to~0 or~1, respectively.
		\item[b)] Given a food-web \Food and a coloring $c$, we define $\Food_{c,0}$ to be the underlying undirected graph of \Food induced by $c^{-1}(0)$.
	\end{propEnum}
\end{definition}

\begin{observation}
	\label{obs:PDD-equivalent-extinction-2}
	Given a \yes-instance $\Instance = (\Tree,\Food,k,D,c)$ of \ckPDD\linebreak with solution~$S$ and
	let $C$ be a connected component of $\Food_{c,0}$.
	Then either~$V(C) \subseteq S$\linebreak or~$V(C) \subseteq X\setminus S$.
\end{observation}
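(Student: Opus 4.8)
The plan is to argue by contradiction, using the connectivity of $C$ together with the defining conditions of a \ckPDD solution, in particular condition (e), which stipulates that every neighbor of the extinct set $X \setminus S$ in $\Food$ must receive color~$1$. The uncolored viability is not what drives the argument here; the colors~$0$ and~$1$ do the work.

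First I would suppose, for contradiction, that $V(C)$ meets both $S$ and $X \setminus S$; that is, there exist taxa $a \in V(C) \cap S$ and $b \in V(C) \cap (X \setminus S)$. Since $C$ is a connected component of $\Food_{c,0}$, there is a path from $a$ to $b$ lying entirely in $V(C)$. Traversing this path from $a$ to $b$, there must be two consecutive vertices $u$ and $v$, adjacent in $\Food_{c,0}$, with $u \in S$ and $v \in X \setminus S$ (the last vertex still in $S$ and the first one already in $X\setminus S$).

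Next I would extract the contradiction. By construction $u, v \in V(C) \subseteq c^{-1}(0)$, so $c(u) = c(v) = 0$. Because $u$ and $v$ are adjacent in $\Food_{c,0}$, which is the underlying undirected graph of $\Food$ restricted to the color-$0$ taxa, they are adjacent in $\Food$; hence $u$ is a neighbor of the extinct taxon $v \in X \setminus S$. Condition (e) then forces $c(u) = 1$, contradicting $c(u) = 0$. Therefore no such boundary pair $u,v$ can exist, and $V(C)$ lies entirely in $S$ or entirely in $X \setminus S$, as claimed.

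The step needing the most care is the middle one: I must confirm that an edge of $\Food_{c,0}$ genuinely corresponds to a neighbor relation in $\Food$ governed by condition (e), and I should be explicit that it is condition (e), not condition (d), that yields the contradiction, since condition (d) alone only certifies $c(v)=0$. Choosing the boundary edge along the path and reading off the colors are routine, so the sole genuine obstacle is this adjacency-and-condition bookkeeping.
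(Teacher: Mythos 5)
Your proof is correct and in essence the same as the paper's: both arguments rest on the observation that condition~(e) forbids any edge of $\Food_{c,0}$ from joining a vertex of $S$ to a vertex of $X\setminus S$ (a color-$0$ taxon in $S$ adjacent to an extinct taxon would have to be colored~$1$), combined with the connectivity of $C$. The paper phrases this as forward propagation---every color-$0$ neighbor of an extinct taxon must itself be extinct, so extinction spreads through all of $C$---while you take the contrapositive via a boundary pair on a path; the two formulations are interchangeable.
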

\begin{proof}
	If $V(C) \subseteq S$, we are done.
	
	Since $S$ is a solution, each neighbor $y$ of $X\setminus S$ in the food-web satisfies $c(y) = 1$.
	Therefore, if there is a taxon $x\in V(C)$ which is not in $S$,
	then each neighbor of $x$ is in $X\setminus S$ or in~$c^{-1}(1)$.
	By definition, $C$ is a connected component and $V(C) \subseteq c^{-1}(0)$.
	We conclude that~$V(C) \subseteq X\setminus S$.
\end{proof}

\begin{lemma} \label{lem:PDD-c-kbar+outdeg}
	\ckPDD can be solved in $\Oh(n^4)$ time if each taxon has at most one~prey.
\end{lemma}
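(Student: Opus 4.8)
The plan is to exploit the two atomicity results already established, \Cref{obs:PDD-equivalent-extinction-1} and \Cref{obs:PDD-equivalent-extinction-2}, together with the out-forest structure of the food-web, in order to turn \ckPDD into a tree dynamic program. First I would invoke \Cref{obs:PDD-one-source} so that $\Food$ has a single source $\star$; since in addition each taxon has at most one prey, the food-web is then a single out-tree rooted at $\star$, with edges oriented from prey to predator. In such a food-web a set $S$ is viable (\Cref{obs:PDD-viable}) if and only if it is closed under taking prey, that is, $S$ is a subtree of $\Food$ containing $\star$; equivalently, the extinct set $X \setminus S$ is a union of pendant subtrees of $\Food$. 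This already reduces viability, which is the troublesome global constraint, to a purely tree-local condition on $\Food$.

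Next I would use the colouring to fix the atomic units. All colour-$1$ taxa must be saved by condition (d). By \Cref{obs:PDD-equivalent-extinction-2} every connected component of $\Food_{c,0}$ is either saved entirely or lost entirely, and by \Cref{obs:PDD-equivalent-extinction-1} the same holds for every maximal subtree of $\Tree$ whose leaves are all colour-$0$. A colour-$0$ taxon can therefore lie in the extinct set only if all of its food-web descendants, that is, all of its transitive predators, are colour-$0$ as well, because extinction must propagate along predators. I would precompute, in a single bottom-up pass over the out-tree $\Food$, which taxa are \emph{removable} in this sense; a taxon is removable exactly when its entire food-web subtree is colour-$0$. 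This step discharges the predator-closure requirement once and for all, so that the only remaining freedom is which removable blocks to drop, while any block with a colour-$1$ predator is forced to survive.

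The core is then a dynamic program over the phylogenetic tree $\Tree$. Processing $\Tree$ bottom-up, for each vertex $v$ and each possible number $j$ of saved offspring in $\off(v)$ I would store the maximum phylogenetic diversity achievable inside the subtree $T_v$ subject to the colouring being respected. Phylogenetic diversity is additive over the edges of $\Tree$, and an edge $uv$ is lost exactly when $\off(v)$ is entirely extinct; while merging the children of a vertex I would carry two Boolean flags, namely whether the current subtree is entirely extinct and whether it still contains a saved (hence colour-$1$) offspring. This is precisely the information needed to test condition (f) locally at each tree edge and condition (e) at each food-web boundary. After the root is processed I return \yes if and only if some entry realises $\PD(S)\ge D$ using at most $k$ saved taxa, equivalently at least $\kbar$ extinctions. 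A standard tree-knapsack analysis contributes $\Oh(n^2)$ for the count dimension, and the viability and boundary checks add the remaining polynomial factors, giving the claimed $\Oh(n^4)$ bound.

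The main obstacle I anticipate is reconciling the two tree structures: phylogenetic diversity lives on $\Tree$, whereas viability lives on $\Food$, and a single colour-$0$ component may scatter its taxa across many subtrees of $\Tree$, so that a naive tree dynamic program cannot keep distant atoms consistent. The resolution is exactly the combination of \Cref{obs:PDD-equivalent-extinction-1} and \Cref{obs:PDD-equivalent-extinction-2}, which guarantee that the extinct region is a union of atoms whose boundary, in both $\Tree$ and $\Food$, is witnessed by colour-$1$ taxa; this collapses the long-range coupling into the local flags tracked by the dynamic program. The most delicate part of the write-up will therefore be proving that these local consistency checks are jointly equivalent to conditions (d)--(f) and to viability, and confirming the precise polynomial factors behind the $\Oh(n^4)$ running time.
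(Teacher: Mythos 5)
There is a genuine gap, and it is exactly at the point you flag as delicate: the claim that \Cref{obs:PDD-equivalent-extinction-1} and \Cref{obs:PDD-equivalent-extinction-2} ``collapse the long-range coupling into the local flags'' of a bottom-up dynamic program over $\Tree$ is false. Those observations establish atomicity of the extinction decision at the level of (i) maximal color-$0$ regions of $\Tree$ and (ii) connected components of $\Food_{c,0}$, but a single $\Food_{c,0}$-component can intersect several such tree regions that are far apart in $\Tree$. Concretely, take a component $C=\{x_1,x_2\}$ of $\Food_{c,0}$ where $x_1$ and $x_2$ hang under different children of the root of $\Tree$, each with a color-$1$ sibling. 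Your DP processes the two subtrees independently and its state per subtree is only a count of saved offspring plus two Booleans, so it can let $x_1$ go extinct while saving $x_2$; by \Cref{obs:PDD-equivalent-extinction-2} (which encodes condition (e)) no such solution exists, so the DP would accept \no-instances. No constant-size local state can synchronize these decisions, since the coupling is mediated by $\Food$, not by $\Tree$; tracking component membership in the state destroys the polynomial bound. (A smaller slip: your flag ``contains a saved (hence colour-$1$) offspring'' is backwards --- saved taxa may be color-$0$; what condition (f) needs is the presence of a \emph{color-$1$} offspring, which is fixed by the coloring, not by the DP's choices.)

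The paper's proof resolves precisely this difficulty by abandoning the tree DP altogether. It computes the set $Z$ of edges $uv$ of $\Tree$ with $\off(v)\subseteq c^{-1}(0)$ and a color-$1$ witness in $\off(u)$, assigns each head $v$ the taxa count $p_v=|\off(v)|$ and diversity $q_v$ of its subtree, and then builds an auxiliary graph $G$ on these heads in which, for every connected component of $\Food_{c,0}$, all heads whose offspring meet that component are made a clique. The connected components of $G$ are the true atoms: each merges all tree regions coupled through shared food-web components, however scattered they are in $\Tree$. After discarding atoms from which a color-$1$ taxon is reachable in $\Food$ (your removability precomputation is the same idea, and is correct), the atoms are mutually independent, so the problem becomes an ordinary \KP instance --- choose atoms losing at least $\kbar$ taxa while losing at most $\Dbar$ diversity --- solvable in $\Oh(\kbar\cdot n^2)$ time, giving $\Oh(n^4)$ overall. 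Your out-tree/viability analysis and the use of the two atomicity observations are sound ingredients, but without the merging step (the clique construction and taking components of $G$) the argument does not go through.
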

\begin{proof}
	\proofpara{Intuition}
	We reduce an instance $\Instance = (\Tree, \Food, k,D,c)$ of \ckPDD to an instance of \KP, in which the threshold of profit is limited in $\kbar$.
	In \KP we are given a set of items~$\mathcal{A}$, a cost-function $c: \mathcal{A}\to \mathbb{N}$, a value-function $d: \mathcal{A}\to \mathbb{N}$, and two integers $B$---the budget---and~$D$---the required value.
	It is asked whether there is a set $A \subseteq \mathcal{A}$ such that~$c_\Sigma(A) \le B$ and~$d_\Sigma(A) \ge D$.
	\KP is \NP-hard~\cite{karp} and can be solved in $\Oh(D\cdot |\mathcal A|^2)$ time~\cite{rehs}.
	
	\proofpara{Algorithm}
	Compute the set $Z$ of edges $uv$ of \Tree which hold that $\off(v) \subseteq c^{-1}(0)$ and there is a taxon $y\in \off(u)$ with $c(y) = 1$.
	For each $e=uv\in Z$, define integers~$p_v := |\off(v)|$ and~$q_v := \w(e) + \sum_{e'\in E(\Tree_v)} \w(e')$, where~$\Tree_v$ is the subtree of \Tree rooted at $v$.
	Intuitively, $p_v$ is the number of taxa and $q_v$ is the diversity that would be lost if all taxa in $\off(v)$ would go extinct.
	
	Define a graph $G$ on the vertex set $V_G := \{ v \mid uv \in Z \}$.
	Compute the connected components $C_1,\dots,C_\ell$ of $\Food_{c,0}$.
	Iterate over $C_i$.
	Compute the edges $u_1v_1, \dots, u_qv_q \in Z$ with $\off(v_j) \cap V(C_i) \ne \emptyset$ for $j\in [q]$.
	Make $v_1,\dots,v_q$ a clique in $G$.
	
	Compute the set $\mathcal A$ of connected components of $G$.
	
	Iterate over $v\in V_G$.
	If there is a taxon $y\in c^{-1}(1)$ and in \Food there is a path from some taxon $x\in \off(v)$ to $y$,
	then remove the connected component $A$ from $\mathcal A$.
	
	We define a \KP instance $\Instance' := (\mathcal A,c',d,B,D')$ as follows.
	The set of items is~$\mathcal A$,
	we define the budget~$B$ as $\Dbar = \PD(X) - D$ and the desired profit~$D'$ as $\kbar$.
	The cost- and value-function are defined as follows.
	For an item $A\in \mathcal A$, we define
	$c'(A)$ to be $\sum_{v\in V(A)} q_v$, and
	$d(A)$ to be $\sum_{v\in V(A)} p_v$.

	If $\Instance'$ is a \yes-instance of \KP, we return \yes.
	Otherwise, we return \no.

	\proofpara{Correctness}
	Observe that $\{\off(v) \mid uv\in Z \}$ and $\{C_1,\dots,C_\ell\} =: \mathcal{C}_\Food$ are both partitions of~$c^{-1}(0)$.
	We define the set~$X(A) := \{ x\in X \mid x\in \off(v), v\in V(A) \}$, for connected components $A \in \mathcal{A}$ of $G$.
	We show that the instance \Instance of \ckPDD is a \yes-instance if and only if the instance $\Instance'$  of \KP is a \yes-instance.
	
	Let $\Instance'$ be a \yes-instance of \KP.
	So, there is a set $S\subseteq \mathcal{A}$ with $d_\Sigma(S) \ge \kbar$ and $c'_\Sigma(S) \le \PD(X) - D$.
	Let $P\subseteq V_G$ be the union of the vertices $V(A)$ for~$A$ in~$S$ and
	let $Q \subseteq X$ to be the set union of taxa $X(A)$ for~$A$ in~$S$.
	We want to show that $R := X \setminus Q$ is a solution of the instance \Instance of \ckPDD.
	Because $d_\Sigma(S) \le \kbar$ we conclude $\kbar \ge \sum_{v\in P} p_v = \sum_{v\in P} |\off(v)| \ge |Q|$.
	Consequently, the size of $R$ is at most~$|X| - \kbar = k$.
	By the definition of $Z$, we know that each $v\in P$ has a parent $u$ which satisfys that there is a taxon $y\in \off(u)$ with $c(y) = 1$.
	Consequently,
	\begin{eqnarray}
		\PD(R) = \PD(X) - \sum_{v\in P} q_v = \PD(X) - d_\Sigma(S) \ge \PD(X) - \Dbar = D.
	\end{eqnarray}
	Let $x\in Q$ and $y \in X$ be taxa which satisfy that $y$ can be reached from $x$ in \Food.
	Because~$x$ is in~$Q$, there is a connected component $A\in S$ of $G$ and a vertex $v\in V(A)$ that satisfies~$x \in \off(v)$.
	By the construction, we know that each taxon that can be reached from~$x$, including~$y$, are colored with~0 because~$A$ would have been removed from~$\mathcal A$, otherwise.
	Consequently, $y$ is in the same connected component as~$x$ in~$\Food_{c,0}$ and thus there is a vertex~$v' \in V(A)$ that satisfies~$y\in \off(v')$.
	Therefore, $y$ is also in~$Q$ and we conclude that $R$ is viable.
	By the definition of~$\mathcal{A}$, we conclude $Q \subseteq c^{-1}(0)$.
	By the construction the taxa of a connected component of $\Food_{c,0}$ are a subset of the union $\bigcup_{v\in V(A)} \off(v)$ for some unique~$A\in \mathcal{A}$.
	Therefore, the neighbors of~$Q$ in~\Food are colored with~1.
	So,~$R$ is a solution of the \yes-instance~\Instance of~\ckPDD.

	Conversely, let \Instance be a \yes-instance of \ckPDD with solution $S$.
	Let~$Z$ be the set of edges as defined in the algorithm
	and let $C_1,\dots,C_\ell$ be the connected components of $\Food_{c,0}$.
	Let $\mathcal{A}$ be the set of connected components of $G$.
	By Observations~\ref{obs:PDD-equivalent-extinction-1} and~\ref{obs:PDD-equivalent-extinction-2}, we conclude that for each connected component $A$ in $\mathcal{A}$ either $X(A) \subseteq S$ or~$X(A) \subseteq X\setminus S$.
	Now, let $\mathcal{S} \subseteq \mathcal{A}$ be the set of connected components $A$ of $G$ that satisfy $X(A) \subseteq X\setminus S$.
	We observe that $X\setminus S$ has a size of $\kbar$ and $\PD(S) \ge D$, as~$S$ is a solution of \Instance
	and so $d_\Sigma(\mathcal{S}) = \sum_{A\in \mathcal{S}, v\in A} p_v = \sum_{A\in \mathcal{S}, v\in A} |\off(v)| = |X\setminus S| \ge \kbar = D'$.
	Further, $c'_\Sigma(\mathcal{S}) = \sum_{A\in \mathcal{S}, v\in A} q_v = \PD(X) - \PD(S) \ge \PD(X) - D = B$.
	Hence, $\mathcal{S}$ is a solution to the \yes-instance $\Instance'$ of \KP.

	\proofpara{Running time}
	Observe that $|E(\Food)| \in \Oh(n)$ because each taxon has at most one prey.
	The size of $Z$ is at most $n$ and therefore also the size of $|\mathcal{A}|$.
	All steps in the reduction can be computed in $\Oh(n^4)$ time.
	Defining the instance $\Instance'$ is done in~$\Oh(n)$~time.
	Computing whether~$\Instance'$ is a \yes-instance of \KP can be done in~$\Oh(\kbar \cdot n^2)$~time.
	Therefore, the overall running time is $\Oh(n^4)$.
\end{proof}

It remains to show how to utilize the result of Lemma~\ref{lem:PDD-c-kbar+outdeg} to compute a solution for an instance of \PDD.
Other than in the proofs of Theorems~\ref{thm:PDD-k-stars},~\ref{thm:PDD-k+height}, and~\ref{thm:PDD-D}, we resort on the concept of $(n,k)$-universal sets instead of $(n,k)$-perfect hash functions.

\begin{proof}[Proof (of Proposition~\ref{prop:PDD-kbar+indeg})]
	\proofpara{Algorithm}
	Let $\Instance = (\Tree,\Food,k,D)$ be an instance of \PDD.
	Let $x_1,\dots,x_n \in X$ be an arbitrary order of the taxa.
	
	Compute an $(n,3\kbar)$-universal set~$\mathcal{U}$.
	Iterate over $A \in \mathcal{U}$ and define $2$-color\-ings~$c_A: X\to \{0,1\}$ by setting $c_A(x_i) := 1$ if and only if $i\in A$.
	Then, solve the instances $\Instance_A := (\Tree, \Food, k, D, c_A)$ of \ckPDD with Lemma~\ref{lem:PDD-c-kbar+outdeg}.
	Return \yes if there is an $A \in \mathcal{U}$, such that $\Instance_A$ is a \yes-instance.
	Otherwise, return \no.
	
	\proofpara{Correctness}
	If $\Instance_A$ for some $A\in \mathcal{U}$ is a \yes-instance of \ckPDD and let~$S\subseteq X$ be a solution.
	Then, by the definition, $S$ is viable, $|S| \le k$, and $\PD(S) \ge D$.
	Therefore, instance \Instance is a \yes-instance of \PDD with solution $S$.
	
	Assume that \Instance is a \yes-instance of \PDD with solution $S$.
	Define~$Y := X\setminus S$---the species that die out.
	Observe that $|Y| = |X| - |S| \ge |X| - k = \kbar$.
	Let $u_1v_1,\dots,u_\ell v_\ell$ be the edges in~$E(\Tree)$ that satisfy $\off(v_i) \subseteq Y$ and there is a taxon $\bar x_i \in \off(u_i) \setminus Y$.
	Define $Z_1 := \{\bar x_1, \dots, \bar x_\ell\}$ and observe $|Z_1| \le \ell \le |Y| \le \kbar$.
	Now, let $Z_2$ be the set of neighbors of~$Y$ in~$\Food$.
	Because each taxon has at most one prey, we conclude that if $x\in Y$ then also $\predators{x} \subseteq Y$.
	Therefore, each taxon in $Z_2$ is the prey of at least one taxon of $Y$.
	Consequently, $|Z_2| \le |Y| \le \kbar$.
	Define $Z := Y \cup Z_1 \cup Z_2$ and by the previous, we know $|Z| \le 3\kbar$.
	If necessary, add taxa to $Z$, such that contains $3\kbar$ taxa.
	Define sets~$Y' := \{i \mid x_i \in Y\}$ and $Z' := \{i \mid x_i \in Z\}$.
	
	Because $\mathcal{U}$ is an $(n,3\kbar)$-universal set, $\{A \cap S \mid A \in \mathcal U\}$ contains all $2^{3\kbar}$ subsets of~$S$ for any $S\subseteq [n]$ of size $3\kbar$.
	Thus, there is an $A^* \in \mathcal{U}$ such that $A \cap Z' = Z'\setminus Y'$.
	Therefore, $c_{A^*}$ maps each $y\in Y$ to 0, each $z\in Z_1 \cup Z_2$ to 1 and each other taxon to any value of $\{0,1\}$.
	We conclude that the instance $\Instance_{A^*} := (\Tree, \Food, k, D, c_{A^*})$ is a \yes-instance of \ckPDD.

	\proofpara{Running Time}
	$(n,3\kbar)$-universal sets of size $2^{3\kbar}(3\kbar)^{\Oh(\log (3\kbar))}\log n = 2^{3\kbar}2^{\Oh(\log^2 (\kbar))}\log n$ are constructed in time $2^{3\kbar}(3\kbar)^{\Oh(\log (3\kbar))}n\log n = 2^{3\kbar} 2^{\Oh(\log^2 (\kbar))}n\log n$~\cite{Naor1995SplittersAN}.
	
	For a given $A\in \mathcal{U}$, we can construct $\Instance_A$ in $\Oh(|A|) = \Oh(n)$ time.
	By Lemma~\ref{lem:PDD-c-kbar+outdeg}, we can compute whether $\Instance_A$ is a \yes-instance in $\Oh(n^4)$ time.
	Therefore, the overall running time is~$\Oh(2^{3\kbar} 2^{\Oh(\log^2 (\kbar))}n\log n + n^5) = \Oh(2^{3\kbar + o(\kbar)}n\log n + n^5)$.
\end{proof}

\section{Structural Parameters}
\label{sec:PDD-structural}
In this section, we study how the structure of the food-web affects the complexity of \sPDD and \PDD.
Figure~\ref{fig:PDD-PDD-results} and Table~\ref{tab:PDD-results} depict a summary of these complexity~results.
\begin{figure}
	\tikzstyle{para}=[rectangle,draw=black,minimum height=.8cm,fill=gray!10,rounded corners=1mm, on grid]
	
	\newcommand{\tworows}[2]{\begin{tabular}{c}{#1}\\{#2}\end{tabular}}
	\newcommand{\threerows}[3]{\begin{tabular}{c}{#1}\\{#2}\\{#3}\end{tabular}}
	\newcommand{\distto}[1]{\tworows{Distance to}{#1}}
	\newcommand{\disttoc}[2]{\threerows{Distance to}{#1}{#2}}
	
	\DeclareRobustCommand{\tikzdot}[1]{\tikz[baseline=-0.6ex]{\node[draw,fill=#1,inner sep=2pt,circle] at (0,0) {};}}
	\DeclareRobustCommand{\tikzdottc}[2]{\tikz[baseline=-0.6ex]{\node[draw,diagonal fill={#1}{#2},inner sep=2pt,circle] at (0,0) {};}}
	
	\definecolor{r}{rgb}{1, 0.6, 0.6}
	\definecolor{g}{rgb}{0.8, 1, 0.7}
	\definecolor{grey}{rgb}{0.9453, 0.9453, 0.9453}
	\definecolor{grey2}{rgb}{0.85, 0.85, 0.85}
	
	\tikzset{
		diagonal fill/.style 2 args={fill=#2, path picture={
				\fill[#1, sharp corners] (path picture bounding box.south west) -|
				(path picture bounding box.north east) -- cycle;}},
		reversed diagonal fill/.style 2 args={fill=#2, path picture={
				\fill[#1, sharp corners] (path picture bounding box.north west) |- 
				(path picture bounding box.south east) -- cycle;}}
	}
	\centering
	\begin{tikzpicture}[node distance=3*0.45cm and 4.8*0.38cm, every node/.style={scale=0.7}]
		\linespread{1}
		\node[para,fill=g] (vc) {Minimum Vertex Cover};
		\node[para, diagonal fill=gr, xshift=38mm] (ml) [right=of vc] {Max Leaf \#};
		\node[para, xshift=-40mm,fill=g] (dc) [left=of vc] {\distto{Clique}};
		
		\node[para, fill=r, xshift=-25mm] (dsd) [below= of dc] {\distto{Source-Dominant}}
		edge[bend left=10] (dc);
		\node[para, diagonal fill=gr, xshift=11mm] (dcc) [below= of dc] {\distto{Cluster}}
		edge (dc)
		edge[bend left=10] (vc);
		\node[para, fill=g,xshift=41mm] (dcl) [below= of dc] {\distto{Co-Cluster}}
		edge[bend right=10] (dc)
		edge (vc);
		\node[para, diagonal fill=gr, xshift=8mm] (ddp) [below=of vc] {\distto{disjoint Paths}}
		edge (vc)
		edge (ml);
		\node[para,diagonal fill=gr] (fes) [below =of ml] {\tworows{Feedback}{Edge Set}}
		edge (ml);
		\node[para, xshift=2mm, diagonal fill=gr] (bw) [below right=of ml] {Bandwidth}
		edge (ml);
		\node[para, xshift=5mm, yshift=0mm,diagonal fill=gr] (td) [right=of ddp] {Treedepth}
		edge[bend right=28] (vc);
		
		\node[para, diagonal fill=gr] (fvs) [below= of ddp] {\tworows{Feedback}{Vertex Set}}
		edge (ddp)
		edge[bend right=5] (fes);
		\node[para,fill=r] (mxd) [below= of bw] {\tworows{Maximum}{Degree}}
		edge (bw);
		\node[para, xshift=8mm, diagonal fill=gr] (pw) [below= of td] {Pathwidth}
		edge (ddp)
		edge (td)
		edge[bend right=8] (bw);
		
		\node[para, yshift=0mm, diagonal fill=gr] (tw) [below=of pw] {Treewidth}
		edge (fvs)
		edge (pw);
		\node[para, xshift=5mm, fill=r] (dbp) [below left= of fvs] {\distto{Bipartite}}
		edge (fvs);

		\node[para, yshift=-20mm, diagonal fill={white}{grey2}] [below= of dc] {\tworows{{\PDD}\;\;\;\;\;\;\;\;\;\;\;}{\;\;\;\;\;\;\;\;\;\;\sPDD}};
	\end{tikzpicture}
	\caption{
		This figure depicts the relationship between a structural parameter of the food-web and the complexity of solving \PDD and \sPDD.
		A parameter~$\pi$ is marked
		green~(\tikzdot{g}) if \PDD admits an \FPT-algorithm with respect to~$\pi$,
		red~(\tikzdot{r}) if \sPDD is \NP-hard for constant values of~$\pi$,
		or red and green~(\tikzdottc gr) if \PDD is \NP-hard for constant values of~$\pi$ while \sPDD admits an \FPT-algorithm with respect to~$\pi$.
		Two parameters~$\pi_1$ and~$\pi_2$ are connected with an edge if in every graph the parameter~$\pi_1$ further up van be bounded by a function in $\pi_2$.
		A more in-depth look into the hierarchy of graph-parameters can be found in~\cite{graphparameters}.
	}
	\label{fig:PDD-PDD-results}
\end{figure}

Some of these results are already shown by Faller et al.~\cite{faller}.
These include that \PDD remains \NP-hard on instances in which the food-web is a bipartite graph~\cite{faller} and \sPDD is already \NP-hard if the food-web is a tree of height three~\cite{faller}.
Further, Faller et al.~\cite{faller} also gave a reduction from \VC to \PDD.
Because \VC is \NP-hard for graphs of maximum degree three~\cite{mohar}, also the constructed food-web has a maximum degree of three.

In the next subsection, we have an in-depth look into the parameterization by the distance to cluster, also called cluster vertex deletion number~($\distclust$) of the food-web.
There, we show that \PDD is \NP-hard even if the underlying undirected graph of the food-web is a cluster graph but \sPDD is \FPT when parameterized by~$\distclust$.
Afterward, we show that \PDD is \FPT when parameterized by the distance to co-cluster~$(\distcoclust)$
and that \sPDD is \FPT with respect to the treewith~$(\tw)$ of the food-web.
Consequently, \sPDD can be solved in polynomial time if the food-web is a tree, disproving Conjecture~4.2. in~\cite{faller}.

\subsection{Distance to Cluster}
In this subsection, we consider the special case that given an instance of \PDD or \sPDD, we need to remove only a few vertices from the food-web \Food to obtain a cluster graph in the underlying undirected graph.
Recall that a graph is a cluster graph if the existence of edges~$\{u,v\}$ and~$\{v,w\}$ imply the existence of the edge~$\{u,w\}$.

Because \Food is acyclic, we have the following: Every clique in~\Food has exactly one vertex $v_0 \in V(C)$ such that $v_0 \in \prey{v}$ for each~$v\in V(C)\setminus \{v_0\}$.
In fact, after applying Reduction Rule~\ref{rr:redundant-edges} exhaustively to any cluster graph, every connected component is a star in which all edges are directed away from the center.

In this subsection, we further use the following classification of instances.
Recall that~$\spannbaum{Y}$ is the spanning tree of the vertices in~$Y$.
\begin{definition}
An instance $\Instance = (\Tree, \Food, k, D)$ of \PDD or \sPDD is \emph{source-separating} if $\spannbaum{\{\rho\} \cup \sources}$ and $\spannbaum{\{\rho\} \cup (X\setminus \sources)}$ only have $\rho$ as common vertex.
Here, $\rho$ is the root of \Tree.
\end{definition}
Figure~\ref{fig:PDD-dist-cluster-flow} depicts in (1) an example of a source-separating instance.

In Theorem~\ref{thm:PDD-dist-cluster-FPT}, we show that \sPDD is \FPT with respect to the distance to cluster of the food-web.
Afterward, we show that \PDD, however, is \NP-hard on source-separating instances in which the food-web has a cluster graph as underlying graph.
Then, we show that \PDD can be solved in polynomial time in a further special case.

\begin{theorem}
	\label{thm:PDD-dist-cluster-FPT}
	\sPDD can be solved in $\Oh(6^d \cdot n^2 \cdot m \cdot k^2)$~time, when we are given a set~$Y\subseteq X$ of size $d$ such that $\Food-Y$ is a cluster graph.
\end{theorem}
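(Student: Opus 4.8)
The plan is to branch on the behaviour of the modulator $Y$ and then solve the remaining, cluster-structured instance by a knapsack-style dynamic program. First I would preprocess by applying Reduction Rule~\ref{rr:redundant-edges} exhaustively, so that every connected component of $\Food-Y$ becomes an out-star: a clique of the cluster graph in which all edges point away from its single source (the center). I would also invoke Observation~\ref{obs:PDD-one-source} where convenient. The decisive structural feature I intend to exploit is that, since $\Tree$ is a star, the objective is separable, $\PD(S)=\sum_{x\in S}\w(\rho x)$; hence once the set of \emph{admissible} taxa is fixed we merely sum individual weights, and within a single out-star we always prefer the heaviest admissible leaves. This separability is what turns the combination step into a clean knapsack and is, intuitively, why \sPDD behaves so differently from \PDD here.

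The branching step ranges over role assignments to $Y$. For each $y\in Y$ I would guess whether $y\in S$, and if so how its viability is witnessed: $y$ is a source of $\Food$, $y$ is fed by another saved vertex of $Y$, or $y$ is fed by some cluster vertex. The point is that fixing $S\cap Y$ already determines, for every cluster vertex, whether it has a prey in $S\cap Y$; thus the direction ``$Y$ feeds the clusters'' is fully resolved, and for each cluster I can read off which leaves may be saved on their own and whether the center is selectable (either a $\Food$-source or with a prey in $S\cap Y$). The only remaining coupling is the opposite direction, where a saved cluster vertex must feed those $y$ guessed to be ``fed by a cluster''; I would carry this obligation set $N\subseteq S\cap Y$ into the cluster phase.

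For the cluster phase I would tabulate, for each out-star $C$ \emph{independently}, the maximum weight as a function of how many taxa of $C$ are used and of which of the constantly many interaction modes is realised (no vertex, center only, center plus leaves, or leaves fed directly from $Y$), recording in each case which elements of $N$ the chosen vertices of $C$ can discharge. These per-cluster tables are then merged by a global dynamic program indexed by the budget spent (up to $k$) and by the subset of $N$ not yet fed, answering \yes once all of $N$ is fed within budget $k$ and the accumulated weight reaches $D$. Viability is thereby enforced everywhere: each saved leaf through its center or a $Y$-prey, each saved center through source status or a $Y$-prey, and each $y\in N$ through a selected cluster vertex with an edge into $y$. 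The counting is what keeps the running time in check: grouping role assignments by the designated ``fed-by-cluster'' set gives $\sum_j\binom{d}{j}3^{\,d-j}2^{\,j}=5^d\le 6^d$ combinations of role guess together with the subset of $N$ tracked in the merge DP, and the remaining $n^2 m k^2$ factor comes from building and merging the budget-indexed tables.

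The main obstacle is precisely this two-way feeding between $Y$ and the clusters, and arranging the guess so the clusters genuinely decouple without an extra exponential-in-$d$ blow-up. The delicate part of the correctness proof will be to show that the role guess on $Y$ (binary ``is $y\in S$'' times ternary ``how is $y$ fed'', hence the factor $6^d$) is simultaneously \emph{sufficient}, so that every viable solution induces a consistent role assignment and per-cluster choice reproduced by the DP, and economical enough that the coverage of $N$ can be folded into the budget DP within the claimed bound. I expect the proof to run in the standard two directions: a solution yields a role assignment plus cluster choices matching some DP state, and conversely any \yes-answer of the DP can be assembled into a viable $S$ with $|S|\le k$ and $\PD(S)\ge D$.
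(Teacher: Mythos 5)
Your proposal is essentially the paper's own argument: the paper likewise branches over $S\cap Y$ (its outer $2^d$ loop over subsets $Z\subseteq Y$, with unreachable taxa pruned exactly as in your ``read off which leaves/centers are selectable'' step, and $Y$-internal feeding resolved by turning $Y$-fed vertices into sources), and then runs a per-component, budget-indexed dynamic program over the out-stars of the cluster graph that tracks the subset of modulator vertices fed by the chosen cluster vertices (the paper's $(\ell,Z)$-feasible tables in Lemma~\ref{lem:PDD-dist-cluster-FPT}, with a bit playing the role of your ``interaction modes''), merged by a subset-partition recurrence just like your $N$-coverage knapsack. One small counting caveat: merging subset-indexed per-cluster tables by taking $\max$ over sub-subsets costs $3^{|N|}$ per role guess, so the total is $\sum_j \binom{d}{j}3^{d-j}3^{j}=6^d$ rather than your intermediate $5^d$ --- which still matches the stated $\Oh(6^d\cdot n^2\cdot m\cdot k^2)$ bound.
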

To prove this theorem, we first consider a special case in the following auxiliary lemma.
\begin{lemma}
	\label{lem:PDD-dist-cluster-FPT}
	Given an instance $\Instance = (\Tree,\Food,k,D)$ of \sPDD and a set $Y\subseteq X$ of size~$d$ such that $\Food-Y$ is a source-dominant graph, we can compute whether there is a viable set $S\cup Y$ with~$|S \cup Y|\le k$ and $\PD(S\cup Y) \ge D$ in~$\Oh(3^d \cdot n \cdot k^2)$~time.
\end{lemma}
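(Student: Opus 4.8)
Because the phylogenetic tree in an instance of \sPDD is a star with root $\rho$, every set $A\subseteq X$ satisfies $\PD(A)=\sum_{x\in A}\w(\rho x)$, so the task is purely one of weight maximization: among the sets $A=S\cup Y$ with $S\subseteq X\setminus Y$, $|A|\le k$, and $A$ viable, I want to decide whether one attains weight at least $D$, equivalently to maximize $\sum_{x\in S}\w(\rho x)$ subject to $|S|\le k-d$ and viability of $S\cup Y$. The plan is to first exploit the source-dominant structure of $\Food-Y$. Writing $\sigma$ for its unique source, every $x\in(X\setminus Y)\setminus\{\sigma\}$ carries the edge $\sigma x$ and hence has in-degree at least $1$ in $\Food$, so no such $x$ is a source of $\Food$; and the only prey of $\sigma$ lie in $Y$. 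Two consequences drive everything: if $\sigma\in A$, then $\sigma$ is viable (its prey are in $Y\subseteq A$) and every other selected taxon of $X\setminus Y$ is automatically viable through the edge from $\sigma$; and since $Y\subseteq A$ in every candidate set, a vertex $y\in Y$ is viable exactly when it is a source of $\Food$, or $\prey{y}\cap Y\neq\emptyset$, or some prey of $y$ from $X\setminus Y$ is selected into $S$. I would also apply Reduction Rule~\ref{rr:redundant-edges} exhaustively at the outset, which removes edges inside $X\setminus Y$ that become redundant once $\sigma$ is available and thereby simplifies the feeding relations there.

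Next I would branch on the role of each $y\in Y$ in the solution, assigning it one of three types: \emph{source} ($y$ has no prey in $A$ and is a source of $\Food$), \emph{internally fed} ($\prey{y}\cap Y\neq\emptyset$, which is automatic as $Y\subseteq A$), or \emph{externally fed} ($y$ must receive a prey from $S$). This contributes the factor $3^{d}$, and each of the $3^{d}$ guesses is either discarded as inconsistent with $\Food$ or it fixes a subset $Y_{\mathrm{ext}}\subseteq Y$ of vertices that still need a prey from $X\setminus Y$; after the guess, viability of $Y$ reduces to the single requirement that every $y\in Y_{\mathrm{ext}}$ has a prey in $S$. What remains, for the fixed guess, is to choose $S\subseteq X\setminus Y$ of maximum weight with $|S|\le k-d$ such that (a) each $y\in Y_{\mathrm{ext}}$ has a prey in $S$ and (b) $S$ is viable. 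I would split on whether $\sigma\in S$. If $\sigma\in S$, requirement (b) is vacuous by the observation above. If $\sigma\notin S$, note that whenever $|A|<k$ one may simply add $\sigma$ to obtain a viable set of no smaller weight, so the only genuinely new situation is $\sigma\notin A$ with $|A|=k$ exactly; there the viability of each selected taxon of $X\setminus Y$ must be certified without $\sigma$, that is, through a prey in $Y$ or through another selected taxon, a condition that the fixed interface to $Y$ lets me test. In either sub-case the surviving problem is a weight-maximizing selection with a cardinality bound together with the covering requirement (a) over at most $d$ targets.

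I would solve this constrained selection by a dynamic program over the taxa of $X\setminus Y$, whose state records the number of taxa chosen so far (bounded by $k$) together with how far the obligations in $Y_{\mathrm{ext}}$ have been met; this is what produces the $n\cdot k^{2}$ factor, and combined with the $3^{d}$ branchings it yields the claimed bound $\Oh(3^{d}\cdot n\cdot k^{2})$. Correctness then follows by checking, for each direction, that an optimal viable $S\cup Y$ induces a consistent role assignment and a feasible selection, and conversely that every feasible selection returned by the dynamic program yields a viable set of the right size and weight. The main obstacle I anticipate is requirement (a): a single prey may witness several vertices of $Y_{\mathrm{ext}}$ at once, while low-weight forced witnesses compete for the scarce $k-d$ slots against high-weight fillers, so the selection is a set-cover-flavoured knapsack rather than a plain one. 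The point that keeps it polynomial outside the parameter is that the branch has already fixed which, and how many, vertices of $Y$ must be fed from outside; because $|Y_{\mathrm{ext}}|\le d$ is bounded, the covering obligations can be threaded through the cardinality-indexed dynamic program instead of being solved as an independent cover, and source-dominance together with Reduction Rule~\ref{rr:redundant-edges} keeps the feeding relations inside $X\setminus Y$ shallow enough that feasibility of $S$ can be tested within the same dynamic program.
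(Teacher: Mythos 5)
Your reduction to the two cases $\sigma\in S$ and $\sigma\notin S$ starts well (and the observations that $\sigma\in A$ is always viable and that $\sigma$ can be added whenever $|A|<k$ are correct), but the case $\sigma\notin S$ with $|S\cup Y|=k$ is where the argument genuinely breaks. The paper's definition of source-dominance only guarantees the edges $\sigma x$; it does not forbid edges among the non-source vertices of $X\setminus Y$, so when $\sigma\notin S$ a selected taxon may be viable only through a chain of other selected taxa. A dynamic program whose state records only the number of chosen taxa and the progress on $Y_{\mathrm{ext}}$ cannot decide this: whether $x$ has a prey in $S$ depends on exactly which taxa were chosen, which your state does not remember, and your appeal to Reduction Rule~\ref{rr:redundant-edges} does not remove these edges -- that rule deletes $vw$ only when \emph{every} prey of $v$ in $\Food$, including prey in $Y$, is also prey of $w$, so prey in $Y$ block its application. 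The paper handles viability inside $X\setminus Y$ instead with an explicit Boolean flag per component (either the newly added vertex has prey in $Y$, or the component's source is forced into $S$; see \Recc{eqn:recurrence-dist-cluster-1}), which is sound in the intended setting where each component is a star. Relatedly, the paper's proof -- and its only application, Theorem~\ref{thm:PDD-dist-cluster-FPT}, where $\Food-Y$ is a cluster graph -- treats $\Food-Y$ as a \emph{disjoint union} of source-dominant components, and the heart of that proof is the combination step \Recc{eqn:recurrence-dist-cluster-2}: a budget split (this is where the $k^2$ comes from) together with a subset convolution over the obligation sets $Z'\subseteq Z\subseteq Y$ (this is where the $3^d$ comes from). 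Your single-$\sigma$ reading bypasses exactly this mechanism; a literally source-dominant (hence connected) $\Food-Y$ would make the lemma useless for the theorem it supports.

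Even within your setting the running-time accounting does not support the claimed bound. Covering $Y_{\mathrm{ext}}$ is, as you yourself note, set-cover-flavoured -- one prey can feed several $y\in Y_{\mathrm{ext}}$ at once -- so a correct DP must track the covered \emph{subset} of $Y_{\mathrm{ext}}$, i.e.\ $2^{|Y_{\mathrm{ext}}|}$ states; this cannot be ``threaded through'' a table of size $n\cdot k^2$, and tracking only a count is incorrect. Combined with your $3^d$ role branching this gives $\Oh(6^d\cdot n\cdot k)$-type costs, not $\Oh(3^d\cdot n\cdot k^2)$. The three-way branching is moreover partly spurious: for $y\notin Y_{\mathrm{ext}}$, whether $y$ is a source of $\Food$ or has prey in $Y$ is determined by the instance, not guessed, so the genuine choice is binary, and the correct bookkeeping is $\sum_{Y_{\mathrm{ext}}\subseteq Y}2^{|Y_{\mathrm{ext}}|}=3^d$ -- equivalently a single DP in which each $y\in Y$ is in one of three states (not externally fed / externally fed and already covered / externally fed and still uncovered). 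That repair is available, but you did not make it, and you give no derivation of the $k^2$ factor you assert; in the paper it arises from splitting the budget $\ell$ between components, a step your approach has no counterpart for.
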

\begin{proof}
	We provide a dynamic programming algorithm. 
	Let $C_1,\dots,C_c$ be the connected components of $\Food-Y$ and
	let $x_i$ be the only source in $C_i$ for each~$i\in [c]$.
	
	\proofpara{Table definition}
	A set $S \subseteq X\setminus Y$ of taxa is \emph{$(\ell,Z)$-feasible}, if $|S| \le \ell$ and $S\cup Z$ is viable, for an integer~$\ell$ and a set~$Z$ of taxa.
	The dynamic programming algorithm has tables $\DP$ and $\DP_i$ for each~$i\in [c]$.
	We want entry $\DP[i,\ell,Z]$, for~$i\in [c]$, $\ell\in [k]_0$, and~$Z \subseteq Y$, to store the largest phylogenetic diversity~$\PD(S)$ of an $(\ell,Z)$-feasible set $S \subseteq C_1 \cup \dots \cup C_{i}$, and $-\infty$ if no such set $S$ exists.

	The table entries~$\DP_i[j,b,\ell,Z]$ additionally have a dimension~$b$ with $b\in \{0,1\}$.
	For~$b=0$, an entry $\DP_i[j,b,\ell,Z]$ stores the largest phylogenetic diversity~$\PD(S)$ of an $(\ell,Z)$-feasible set $S \subseteq \{ x_1^{(i)}, \dots, x_{j}^{(i)} \}$.
	For $b=1$, additionally some vertex~$v_{j'}^{(i)}$ with~$j' < j$ needs to be contained in $S$.
	
	\proofpara{Algorithm}
	Iterate over the edges of $\Food$.
	For each edge $uv\in E(\Food)$ with $u,v\in Y$, remove all edges incoming at $v$, including $uv$, from $E(\Food)$. After this removal,~$v$ is a new source.
	
	We initialize the base cases of $\DP_i$ by setting $\DP_i[j,0,0,Z]:=0$ for each $i\in [c]$, each~$j\in [|C_i|]$, and every~$Z \subseteq \sources$.
	Moreover,~$\DP_i[1,b,\ell,Z]:=\w(\rho v_1^{(i)})$ if~$\ell \ge 1$ and~$Z \subseteq \predators{v_1^{(i)}} \cup \sources$; and $\DP_i[1,b,\ell,Z]:=-\infty$, otherwise.
	
	To compute further values for $j\in [|C_i|-1]$, $b\in \{0,1\}$, and $\ell \in [k]$ we use the recurrences
	\begin{equation}
		\label{eqn:recurrence-dist-cluster-1}
		\DP_i[j+1, b, \ell, Z] = \max\{ \DP_i[j, b, \ell, Z], \DP_i[j, b', \ell-1, Z \setminus \predators{v_{j+1}^{{(i)}}}] + \w(\rho v_{j+1}^{{(i)}})\},
	\end{equation}
	where~$b' = 0$ if there is an edge from a vertex in~$Y$ to~$x_{j+1}^{(i)}$ and otherwise~$b' = 1$.
	
	Finally, we set $\DP[1,\ell,Z] := \DP_1[|C_1|,0,\ell,Z]$ and compute further values with
	\begin{equation}
		\label{eqn:recurrence-dist-cluster-2}
		\DP[i+1, \ell, Z] = \max_{Z' \subseteq Z, \ell' \in [\ell]_0} \DP[i, \ell', Z'] + \DP_{i+1}[|C_{i+1}|, 0, \ell-\ell', Z\setminus Z'].
	\end{equation}
	
	There is a viable set $S\cup Y$ with $|S \cup Y|\le k$ and $\PD(S\cup Y) \ge D$ if and only if~$\DP[c, k-|Y|, Z] \ge D - \PD(Y)$.
	
	\proofpara{Correctness}
	Assume that $\DP$ stores the intended values.
	Then, there is an $(\ell,Z)$-feasible set $S \subseteq X\setminus Y$, if~$\DP[c, k-|Y|, Z] \ge D - \PD(Y)$.
	First, this implies that~$S\cup Y$ is viable.
	Moreover, since $S$ has size at most $k-|Y|$, we obtain $|S \cup Y| \le k$.
	Finally, because~\Tree is a star and~$S$ and~$Y$ are disjoint, $\PD(S) \ge D - \PD(Y)$ implies~$\PD(S \cup Y) \ge D$.
	The converse direction can be shown analogously.
	
	It remains to show that $\DP$ and $\DP_i$ store the right values.
	The base cases are correct.
	Towards the correctness of \Recc{eqn:recurrence-dist-cluster-1},
	as an induction hypothesis, assume that $\DP_i[j,b,\ell,Z]$ stores the desired value 
	for a fixed $j\in [|C_i|-1]$, each $i\in [c]$, $b\in \{0,1\}$, $\ell\in [k]_0$ and every~$Z\subseteq Y$.
	Let $\DP_i[j+1,b,\ell,Z]$ store $d$.
	We show that there is an $(\ell,Z)$-feasible set $S \subseteq \{ x_1^{(i)}, \dots, x_{j+1}^{(i)} \}$ with~$\PD(S) = d$.
	We conclude with \Recc{eqn:recurrence-dist-cluster-1} that 
	$\DP_i[j, b, \ell, Z]$ stores~$d$ or 
	$\DP_i[j, 1, \ell-1, Z \setminus \predators{v_{j+1}^{{(i)}}}]$ stores~$d - \w(\rho v_{j+1}^{{(i)}})$.
	If $\DP_i[j, b, \ell, Z]$ stores~$d$, then because there is an $(\ell,Z)$-feasible set $S \subseteq \{ x_1^{(i)}, \dots, x_{j}^{(i)} \} \subseteq \{ x_1^{(i)}, \dots, x_{j+1}^{(i)} \}$ we are done.
	If~$\DP_i[j, 1, \ell-1, Z \setminus \predators{v_{j+1}^{{(i)}}}]$ stores $d - \w(\rho v_{j+1}^{{(i)}})$ then there is an 
	$(\ell-1,Z \setminus \predators{v_{j+1}^{{(i)}}})$-feasible set $S' \subseteq \{ x_1^{(i)}, \dots, x_{j}^{(i)} \}$ containing $x_1^{(i)}$ or some $x_{j'}^{(i)} \in \predators{Y}$.
	Consequently, also $S' \cup \{x_{j+1}^{(i)}\}$ is $(\ell,Z)$-feasible and~$\PD(S' \cup \{x_{j+1}^{(i)}\}) = d$.
	
	Now conversely, let $S \subseteq \{ x_1^{(i)}, \dots, x_{j+1}^{(i)} \}$ be an $(\ell,Z)$-feasible set.
	We show that~$\DP_i[j+1,b,\ell,Z]$ stores at least $\PD(S)$.
	If $S \subseteq \{ x_1^{(i)}, \dots, x_{j}^{(i)} \}$ then we know from the induction hypothesis that $\DP_i[j,b,\ell,Z]$ stores $\PD(S)$ and\lb therefore also~$\DP_i[j+1,b,\ell,Z]$ stores $\PD(S)$.
	If $x_{j+1}^{(i)} \in S$, then $S$\lb contains $x_{1}^{(i)}$ or some~$x_{j'}^{(i)} \in \predators{Y}$.
	Define $S' := S \setminus \{x_{j+1}^{(i)}\}$.
	Then, $|S'| = \ell-1$,\lb and $S' \cup (Z \setminus \predators{x_{j+1}^{(i)}})$ is viable because $S$ is $(\ell,Z)$-feasible.
	Consequently,\lb $\DP_i[j,1,\ell-1,Z \setminus \predators{x_{j+1}^{(i)}}] \ge \PD(S') = \PD(S) - \w(\rho x_{j+1}^{(i)})$ and
	therefore,\lb $\DP_i[j+1,b,\ell,Z] \ge \PD(S)$.

	Now, we focus on the correctness of \Recc{eqn:recurrence-dist-cluster-2}.
	Let $\DP[i+1,\ell,Z]$ store $d$.
	We show that there is an $(\ell,Z)$-feasible set $S\subseteq C_1 \cup \dots \cup C_{i+1}$ with $\PD(S) = d$.
	Because $\DP[i+1,\ell,Z]$ stores $d$, by \Recc{eqn:recurrence-dist-cluster-2}, there are $Z'\subseteq Z$ and $\ell' \in [\ell]_0$ such that $\DP[i, \ell', Z'] = d_1$, $\DP_{i+1}[|C_{i+1}|, 0, \ell-\ell', Z\setminus Z'] = d_2$, and $d_1 + d_2 = d$.
	By the induction hypothesis, there is an~$(\ell',Z')$-feasible set $S_1 \subseteq C_1 \cup \dots \cup C_{i}$
	and\lb an~$(\ell-\ell',Z\setminus Z')$-feasible set $S_2 \subseteq C_{i+1}$
	such that~$\PD(S_1) = d_1$ and~$\PD(S_2) = d_2$.
	Then, $S := S_1 \cup S_2$ satisfies $|S| \le |S_1| + |S_2| \le \ell' + (\ell - \ell') = \ell$.
	Further, because $Y$ has no outgoing edges, $Z' \subseteq \predators{S_1} \cup \sources$ and~$Z\setminus Z' \subseteq \predators{S_2} \cup \sources$.
	Therefore, $Z \subseteq \predators{S} \cup \sources$ and $S\cup Z$ is viable.
	We conclude that~$S$ is the desired set.
	
	Let $S\subseteq C_1 \cup \dots \cup C_{i+1}$ be an $(\ell,Z)$-feasible set with $\PD(S) = d$.
	We show that~$\DP[i+1,\ell,Z] \ge d$.
	Define $S_1 := S \cap (C_1 \cup \dots \cup C_{i})$ and~$Z' := \predators{S_1} \cap Z$.
	We conclude that $S_1\cap Z'$ is viable.
	Then, $S_1$ is $(\ell',Z')$-feasible, where~$\ell' := |S_1|$.
	Define~$S_2 := S \cap C_{i+1} = S \setminus S_1$.
	Because $S \cup Z$ is viable and $Z$ does not have outgoing edges, we know that $Z \subseteq \predators{S} \cup \sources$.
	So, $Z \setminus Z' \subseteq \predators{S_2} \cup \sources$ and because $|S_2| = |S| - |S_1| = \ell - \ell'$,
	we conclude that $S_2$ is $(\ell - \ell',Z \setminus Z')$-feasible.
	Consequently, $\DP[i,\ell',Z'] \ge \PD(S_1)$ and $\DP_{i+1}[|C_{i+1}|,\ell-\ell',Z\setminus Z'] \ge \PD(S_2)$.
	Hence, $\DP[i+1,\ell,Z] \ge \PD(S_1) + \PD(S_2) = \PD(S)$ because \Tree is a star.
	
	\proofpara{Running time}
	The tables $\DP$ and $\DP_i$ for $i\in [c]$ have $\Oh(2^d \cdot n \cdot k)$ entries in total.
	Whether one of the base cases applies can be checked in linear time.
	We can compute the set $Z \setminus \predators{x}$ for any given $Z\subseteq Y$ and $x\in X$ in $\Oh(d^2)$ time.
	Therefore, the~$\Oh(2^d \cdot n \cdot k)$ times we need to apply \Recc{eqn:recurrence-D-pre} consume $\Oh(2^d d^2 \cdot n \cdot k)$ time in total.
	In \Recc{eqn:recurrence-D}, each $x\in Y$ can be in $Z'$, in $Z\setminus Z'$ or in $Y\setminus Z$ so that we can compute all the table entries of $\DP$ in~$\Oh(3^d \cdot n \cdot k^2)$ which is also the overall running time.
\end{proof}

\begin{proof}[Proof (of Theorem~\ref{thm:PDD-dist-cluster-FPT})]
	\proofpara{Algorithm}
	We iterate over all the subsets $Z$ of $Y$ and
	define~$R_0 := Y \setminus Z$.
	We want that $Z$ is the set of taxa that are surviving while the taxa in $R_0$ are going extinct.
	Compute the set $R \subseteq X$ of taxa $r$ for which in $\Food$ each path from $r$ to $s$ contains a taxon of~$R_0$, for each $s \in \sources$.
	Compute~$\Tree_Z := \Tree - R$ and $\Food_Z := \Food - R$.
	Continue with the next~$Z$ if~$Z \cap R \ne \emptyset$.
	
	With Lemma~\ref{lem:PDD-dist-cluster-FPT}, compute whether there is a viable set~$S' = S\cup Z$ in instance~$\Instance = (\Tree_Z, \Food_Z, k, D)$ such that $|S'| \le k$ and $\PD(S') \ge D$.
	Return \yes, if such a set exists.
	Otherwise, continue with the next subset~$Z$ of $Y$.
	After iterating over all subsets $Z$ of $Y$, return \no.
	
	\proofpara{Correctness}
	Let $Z$ be a given subset of $Y$.
	Assume that there is a solution~$S$ with~$Y\cap S = Z$.
	Let $x$ be a vertex for which each path from $x$ to a source~$s$ of~\Food contains a vertex of $Y \setminus Z$.
	Because $S$ is viable, $x$ is not in $S$.
	
	The rest of the correctness follows by Lemma~\ref{lem:PDD-dist-cluster-FPT}.
	
	\proofpara{Running time}
	For each subset $Z$ of $Y$, we can compute $\Tree_Z$ and $\Food_Z$ in $\Oh(m\cdot n)$ time.
	By Lemma~\ref{lem:PDD-dist-cluster-FPT}, we can compute whether there is a solution $S$ with $Y\cap S = Z$ in~$\Oh(3^d \cdot n \cdot k^2)$~time.
	Therefore, the overall running time is $\Oh(6^d \cdot n^2 \cdot m \cdot k^2)$.
\end{proof}

Next, we show that \PDD, in contrast to \sPDD, is \NP-hard even when the food-web is restricted to have a cluster graph as an underlying undirected graph.
We obtain this hardness
by a reduction from \VC on cubic graphs.
In cubic graphs, the degree of each vertex is \emph{exactly} three.
Recall, in \VC we are given an undirected graph~$G = (V,E)$ and an integer~$k$ and we ask whether a set~$C\subseteq V$ of size at most~$k$ exists such that~$u\in C$ or~$v\in C$ for each edge~$\{u,v\} \in E$.
The set~$C$ is called a \emph{vertex cover}.
\VC is \NP-hard on cubic graphs~\cite{mohar}.

\begin{theorem}
	\label{thm:PDD-dist-cluster-hardness}
	\PDD is \NP-hard
	on source-separating instances in which the food-web is a cluster graph
	and each connected component has at most four vertices.
\end{theorem}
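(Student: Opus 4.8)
The plan is to reduce from \VC on cubic graphs, which is \NP-hard~\cite{mohar}, and to engineer the instance so that all three restrictions---source-separating, cluster-graph food-web, and components of size at most four---hold simultaneously. Let $(G=(V,E),k)$ be an instance of \VC with $G$ cubic, so that every vertex lies in exactly three edges; write $n:=|V|$ and $m:=|E|$. The key idea is to turn each vertex into a size-four clique in the food-web and to let the phylogenetic tree award a large ``coverage bonus'' to every edge that is hit, so that reaching the diversity threshold forces all edges to be covered while the budget limits the number of committed vertices.

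\textbf{Construction.} For each vertex $v\in V$ with incident edges $e_1,e_2,e_3$, I would create a food-web component on the four taxa $\sigma_v,\ell_v^{e_1},\ell_v^{e_2},\ell_v^{e_3}$ and make it a complete digraph by fixing the linear order $\sigma_v<\ell_v^{e_1}<\ell_v^{e_2}<\ell_v^{e_3}$ and inserting all forward arcs; this is an acyclic transitive tournament whose underlying graph is $K_4$ and whose unique source is $\sigma_v$. Taking the disjoint union over all $v$ yields a food-web \Food that is a cluster graph with components of size exactly four. For the phylogenetic tree \Tree, I would give the root $\rho$ two children $\rho_S$ and $\rho_N$, hang all sources $\sigma_v$ as leaves below $\rho_S$, and for each edge $e=\{u,v\}$ create a child $t_e$ of $\rho_N$ carrying the two leaves $\ell_u^e,\ell_v^e$. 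Set $R:=n+m+1$, give every edge $\rho_N t_e$ weight $R$, and give all remaining tree edges weight $1$ (so weights stay positive). Finally set the budget to $k+m$ and the threshold to $D:=mR+m+2$.

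\textbf{Key structural observations.} First I would verify that viability inside a component behaves like an out-star: since each component is a transitive tournament with unique source $\sigma_v$, the lowest-indexed leaf of the component that lies in a set $A$ would be a source of $\Food[A]$ unless $\sigma_v\in A$; hence $A$ is viable if and only if within each component it is empty or contains the source, so saving any $\ell_v^e$ forces $\sigma_v\in A$. This is exactly the behaviour Reduction Rule~\ref{rr:redundant-edges} would produce, so presenting the gadget as a complete digraph is harmless. Second, the instance is source-separating because $\spannbaum{\{\rho\}\cup\sources}$ stays in the $\rho_S$-subtree while $\spannbaum{\{\rho\}\cup(X\setminus\sources)}$ stays in the $\rho_N$-subtree, and the two share only $\rho$.

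\textbf{Correctness.} For the forward direction, from a vertex cover $C$ with $|C|\le k$ I would save $\{\sigma_v:v\in C\}$ together with, for each edge $e=\{u,v\}$, one leaf $\ell_w^e$ with $w\in C\cap e$; this set is viable, has size $|C|+m\le k+m$, and collects the weight $R$ of every $t_e$, so its diversity is at least $mR+m+2=D$. For the converse, any solution $S$ with $|S|\le k+m$ and $\PD(S)\ge D$ must collect every edge bonus, since otherwise the available diversity is at most $(m-1)R+(n+2m+2)$, which is strictly below $mR+m+2$ because $R>n+m$; thus each edge has a saved leaf, whose unique source lies in $S$ by viability, so $C:=\{v:\sigma_v\in S\}$ is a vertex cover, and as at least $m$ leaves are saved we get $|C|\le|S|-m\le k$. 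The main obstacle I anticipate is reconciling the complete-digraph (cluster-graph) requirement with the desired out-star viability behaviour and calibrating $R$ and $D$ so that the counting argument survives the constraint that all tree-edge weights must be strictly positive.
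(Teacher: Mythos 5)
Your proof is correct and follows essentially the same approach as the paper: a reduction from \VC on cubic graphs in which each vertex becomes a four-taxon food-web component whose source must be saved to unlock its per-edge leaves, large weights on the edge-gadget tree edges force every edge to be covered, and the budget $k+m$ bounds the vertex cover. The only differences are cosmetic---you build the components as full transitive tournaments where the paper uses out-stars plus a footnote that the missing arcs can be added, you fix the constant $R=n+m+1$ explicitly where the paper takes an unspecified ``big integer'' $N$, and your tree has an extra layer $\rho_S,\rho_N$ that the paper's height-2 tree does not need for source-separation.
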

\begin{proof}
	\proofpara{Reduction}
	Let $(G,k)$ be an instance of \VC, where $G = (V,E)$ is cubic.
	We define an instance $\Instance = (\Tree,\Food,k',D)$ of \PDD as follows.
	Let $\Tree$ have a root $\rho$.
	For each vertex $v\in V$, we add a child $v$ of $\rho$.
	For each edge $e=\{u,v\}\in E$, we add a child~$e$ of~$\rho$ and two children~$[u,e]$ and~$[v,e]$ of $e$.
	Let $N$ be a big integer.
	We set the weight of~$\rho e$ to~$N-1$ for each edge~$e$ in $E$.
	All other edges of~\Tree have a weight of~1.
	Additionally, for each edge~$e=\{u,v\}\in E$ we add edges $(u,[u,e])$ and~$(v,[v,e])$ to \Food.
	Finally, we set $k' := |E| + k$ and $D := N\cdot |E| + k$.
	
	\proofpara{Correctness}
	The instance \Instance of \PDD is constructed in polynomial time for a suitable~$N$.
	The leaves in~\Tree are~$V \cup \{ [u,e], [v,e] \mid e=\{u,v\} \in E \}$.
	The sources of~\Food are $V$.
	Therefore, \Instance is source-separating.
	Let~$e_1$, $e_2$, and~$e_3$ be the edges incident with~$v\in V(G)$.
	Each connected component in \Food contains four vertices, $v$, and~$[v,e_i]$ for~$i\in\{1,2,3\}$.\footnote{Observe that we only constructed the instance so that the connected components are stars, but one could add edges~$([v,e_1],[v,e_2])$, $([v,e_1],[v,e_3])$, and~$([v,e_2],[v,e_3])$ for each vertex~$v$, to meet the formal requirement of a cluster graph.}
	
	We show that $(G,k)$ is a \yes-instance of \VC if and only if \Instance is a \yes-instance of \PDD.
	Let $C\subseteq V$ be a vertex cover of $G$ of size at most $k$.
	If necessary, add vertices to $C$ until $|C|=k$.
	For each edge $e\in E$, let~$v_e$ be an endpoint of $e$ that is contained in $C$. Note that~$v_e$ exists since~$C$ is a vertex cover.
	We show that $S := C \cup \{ [v_e,e] \mid e\in E \}$ is a solution for \Instance:
	The size of~$S$ is~$|C| + |E| = k + |E| = k'$.
	By definition, for each taxon $[v_e,e]$ we have $v_e\in C\subseteq S$, so $S$ is viable.
	Further, as~$S$ contains a taxon~$[v_e,e]$ for each edge~$e\in E$, we conclude that $\PD(S) \ge N\cdot |E| + \PD(C) = N\cdot |E| + k = D$.
	Therefore, $S$ is a solution of~\Instance.
	
	Conversely,
	let $S$ be a solution of instance \Instance of \PDD.
	Define $C := S \cap V(G)$ and define~$S' := S \setminus C$.
	Because $\PD(S) \ge D$, we conclude that for each $e\in E$ at least one taxon~$[u,e]$ with $u\in e$ is contained in $S'$.
	Thus, $|S'| \ge |E|$ and we conclude that the size of $C$ is at most~$k$.
	Because $S$ is viable we conclude that $u$ is in~$C$ for each~$[u,e] \in S'$.
	Therefore, $C$ is a vertex cover of size at most~$k$ of~$G$.
\end{proof}

\PDD is \NP-hard even if each connected component in the food-web is a path of length three~\cite{faller}.
So, we wonder if the hardness still holds if we restrict the food-web even more such that each connected component of the food-web contains at most two vertices.
We were not able to show that \PDD is \NP-hard for this special case.
However, in the next proposition, we show that \PDD can be solved in polynomial time if we restrict the instance further to be source-separating as well.

\begin{proposition}
	\label{prop:PDD-dist-cluster-flow}
	\PDD can be solved in $\Oh(k\cdot n^2 \cdot \log^2 n)$ time on source-separating instances, if the food-web only has isolated edges.
\end{proposition}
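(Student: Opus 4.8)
The plan is to exploit the very restricted structure to decompose the diversity and then, for each guessed number of saved predators, reduce the problem to a single instance of \MCNF. First I would unpack the structure. Since $\Food$ consists only of isolated edges, each non-source taxon $p$ is the predator of exactly one prey $s$, and this prey is a source; thus the non-sources are in bijection with a subset of the sources via a matching $s \mapsto p$. By Observation~\ref{obs:PDD-viable}, viability then simplifies to: a set $S$ is viable if and only if $s \in S$ for every saved predator $p \in S$. Moreover, as the instance is source-separating, the subtrees $\spannbaum{\{\rho\}\cup\sources}$ and $\spannbaum{\{\rho\}\cup(X\setminus\sources)}$ share only the root $\rho$ and are hence edge-disjoint; call them $T_1$ (with all sources as leaves) and $T_2$ (with all predators as leaves). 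Every edge of $\Tree$ lies in exactly one of $T_1,T_2$, so for any $S$ the covered edges split accordingly and $\PD(S) = \PDsub{T_1}(S\cap \sources) + \PDsub{T_2}(S\setminus \sources)$. A solution is therefore determined by a set $P$ of saved predators together with a set of saved sources containing the matched prey $\{s : p\in P\}$, and we want to maximize $\PDsub{T_1}(\text{sources}) + \PDsub{T_2}(P)$ subject to $|\text{sources}| + |P| \le k$.

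Second, I would guess $j := |P|$, the number of saved predators; since each predator forces its prey, a viable solution of size at most $k$ satisfies $2j \le k$, so only $\Oh(k)$ values of $j$ occur. For a fixed $j$ the task is to pick exactly $j$ predators and at most $k-2j$ additional sources maximizing the edge-disjoint coverage $\PDsub{T_1}+\PDsub{T_2}$; the subtlety is that the $j$ chosen predators force $j$ prey that are themselves sources and thus already contribute to $\PDsub{T_1}$, overlapping on shared $T_1$-edges with the freely chosen sources. I would capture this joint optimum by one \MCNF instance. The network has a super-source $\Sigma$ and super-sink $\tau$; I orient the edges of $T_1$ away from $\rho$ and the edges of $T_2$ towards $\rho$. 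Each tree edge $e$ is modeled by two parallel arcs in its orientation, one of capacity $1$ and cost $-\w(e)$ and one of infinite capacity and cost $0$, so that the first unit crossing $e$ earns reward $\w(e)$ and further units are free; this exactly encodes that $e$ is covered (and counted once) iff some chosen leaf lies below it. A standalone source $s$ is selected by a unit along $\Sigma \to \rho \to \dots \to s \to S^\ast \to \tau$, and a pair $(s,p)$ by a unit along $\Sigma \to \rho \to \dots \to s \to p \to \dots \to \rho' \to \tau$, where $s\to p$ is a jump arc realizing the matching and the $T_2$-portion is traversed root-ward. To enforce exactly $j$ pairs and at most $k-2j$ standalone sources, I route all pair-paths through one gateway arc $\rho'\to\tau$ with lower bound and capacity $j$, and all standalone paths through a gateway arc $S^\ast\to\tau$ of capacity $k-2j$; a node capacity of $1$ on each source leaf prevents using a source both as a pair-prey and as a standalone.

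Integral flows in this network are in bijection with admissible choices: conservation forces each unit to select either one standalone source or one matched pair, the gateways enforce the counts, and the minimum total cost equals $-(\PDsub{T_1}+\PDsub{T_2})$ because each tree edge's reward is collected exactly once. Taking the best instance over all $j\in[\lfloor k/2\rfloor]_0$ and comparing against $D$ decides the instance (and recovers an optimal set). For the running time, after the standard elimination of lower bounds each network has $\Oh(n)$ vertices and $\Oh(n)$ arcs and carries at most $k\le n$ units, so one \MCNF computation runs in $\Oh(n^2\log^2 n)$ time, and the $\Oh(k)$ guesses yield the claimed $\Oh(k\cdot n^2\cdot \log^2 n)$ bound; I would first apply \Cref{rr:each-taxon-savable} to discard unsavable taxa.

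The main obstacle I expect lies in verifying correctness of the \MCNF model rather than its running time. One must check that the convex ``first-unit-reward'' arcs make the minimum-cost integral flow collect precisely the coverage $\PDsub{T_1}+\PDsub{T_2}$ (no edge double-counted, none counted without a leaf below it), that the two opposite edge orientations together with the jump arcs admit no spurious circulations exploiting the negative-cost arcs, and that the gateway bounds faithfully translate the count constraints. The decomposition $\PD = \PDsub{T_1}+\PDsub{T_2}$ relies crucially on source-separation, and the clean matching $s\mapsto p$ relies on $\Food$ having only isolated edges; both hypotheses must be invoked with care, whereupon the optimality of minimum-cost flow plays the role that the greedy exchange argument plays for ordinary \MPD.
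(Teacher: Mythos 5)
Your proposal is correct and follows essentially the same route as the paper's proof: both split $\Tree$ into the edge-disjoint source- and predator-spanning subtrees (which is exactly what source-separation buys), guess the number $k'=j$ of saved predators over $\Oh(k)$ values, and reduce each guess to one \MCNFlong instance with two parallel arcs per tree edge (capacity~$1$, cost $-\w(e)$ plus a free high-capacity copy) so that each covered edge's weight is collected exactly once, with unit-capacity jump arcs realizing the prey--predator matching and the pair/standalone counts enforced by capacitated arcs out of the super-source, solved in $\Oh(n^2\log^2 n)$ per instance. The remaining differences are cosmetic -- you traverse the source tree downward and the predator tree root-ward (the paper does the mirror image), and you enforce the counts via gateways with a lower bound and source-leaf node capacities where the paper uses the total flow value $F=k-k'$ to saturate its two arcs $s\rho_1$ and $s\nu$; your explicit splitting of the shared root into $\rho$ and $\rho'$ correctly averts the negative-cost circulation you flag, just as the paper's distinct roots $\rho_1,\rho_2$ do.
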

\begin{figure}[t]
	\centering
	\begin{tikzpicture}[scale=0.8,every node/.style={scale=0.7}]
		\tikzstyle{txt}=[circle,fill=white,draw=white,inner sep=0pt]
		\tikzstyle{nde}=[circle,fill=black,draw=black,inner sep=2.5pt]
		\tikzstyle{ndeg}=[circle,fill=lime,draw=black,inner sep=2.5pt]
		\tikzstyle{ndeo}=[circle,fill=orange,draw=black,inner sep=2.5pt]
		
		\node[nde] (root) at (0.5,0) {};
		
		\node[nde, xshift= -25mm] (u1) [below=of root] {};
		\node[nde] (u2) [below=of root] {};
		\node[nde, xshift= 25mm] (u3) [below=of root] {};
		
		\node[ndeo, xshift= -5mm] (v11) [below=of u1] {};
		\node[nde, xshift= 5mm] (v12) [below=of u1] {};
		
		\node[ndeg, xshift= -10mm] (v21) [below=of u2] {};
		\node[ndeg] (v22) [below=of u2] {};
		\node[ndeg, xshift= 10mm] (v23) [below=of u2] {};
		
		\node[nde, xshift= -5mm] (v31) [below=of u3] {};
		\node[ndeg, xshift= 5mm] (v32) [below=of u3] {};
		
		\node[ndeo, xshift= -10mm] (w121) [below=of v12] {};
		\node[ndeo] (w122) [below=of v12] {};
		\node[ndeo, xshift= 10mm] (w123) [below=of v12] {};
		
		\node[ndeg, xshift= -5mm] (w311) [below=of v31] {};
		\node[ndeg, xshift= 5mm] (w312) [below=of v31] {};
		
		\node[txt, xshift= -13mm] [right=of root] {$\rho$};
		
		\node[txt, yshift= 13mm] [below=of v11] {$x_0$};
		\node[txt, yshift= 13mm] [below=of v21] {$x_4$};
		\node[txt, yshift= 13mm] [below=of v22] {$x_5$};
		\node[txt, yshift= 13mm] [below=of v23] {$x_6$};
		\node[txt, yshift= 13mm] [below=of v32] {$x_9$};
		
		\node[txt, yshift= 13mm] [below=of w121] {$x_1$};
		\node[txt, yshift= 13mm] [below=of w122] {$x_2$};
		\node[txt, yshift= 13mm] [below=of w123] {$x_3$};
		\node[txt, yshift= 13mm] [below=of w311] {$x_7$};
		\node[txt, yshift= 13mm] [below=of w312] {$x_8$};
		
		\node[txt] at (-1,0) {$\Tree=$};
		
		\node[txt] at (-2.1,0) {(1)};
		
		\draw[blue, arrows = {-Stealth[length=8pt]}] (root) to (u1);
		\draw[blue, arrows = {-Stealth[length=8pt]}] (root) to (u2);
		\draw[blue, arrows = {-Stealth[length=8pt]}] (root) to (u3);
		
		\draw[blue, arrows = {-Stealth[length=8pt]}] (u1) to (v11);
		\draw[dotted, arrows = {-Stealth[length=8pt]}] (u1) to (v12);
		
		\draw[dotted, arrows = {-Stealth[length=8pt]}] (u2) to (v21);
		\draw[blue, arrows = {-Stealth[length=8pt]}] (u2) to (v22);
		\draw[blue, arrows = {-Stealth[length=8pt]}] (u2) to (v23);
		
		\draw[dotted, arrows = {-Stealth[length=8pt]}] (u3) to (v31);
		\draw[blue, arrows = {-Stealth[length=8pt]}] (u3) to (v32);
		
		\draw[dotted, arrows = {-Stealth[length=8pt]}] (v12) to (w121);
		\draw[dotted, arrows = {-Stealth[length=8pt]}] (v12) to (w122);
		\draw[dotted, arrows = {-Stealth[length=8pt]}] (v12) to (w123);
		
		\draw[dotted, arrows = {-Stealth[length=8pt]}] (v31) to (w311);
		\draw[dotted, arrows = {-Stealth[length=8pt]}] (v31) to (w312);
		
		\draw[dashed] (3.5,0.5) to (3.5,-5);
		
		\node[txt] at (4,0.25) {$\Food=$};
		
		\node[ndeo] (a1) at (4,-0.5) {};
		\node[ndeo] (b1) at (4.5,-0.5) {};
		\node[ndeg] (a2) at (4,-1.5) {};
		\node[ndeg] (b2) at (4.5,-1.5) {};
		\node[ndeo] (a3) at (4,-2.75) {};
		\node[ndeo] (b3) at (4.5,-2.75) {};
		\node[ndeg] (a4) at (4,-3.75) {};
		\node[ndeg] (b4) at (4.5,-3.75) {};
		\node[ndeg] (a5) at (4,-4.5) {};
		\node[ndeg] (b5) at (4.5,-4.5) {};
		
		\draw[arrows = {-Stealth[length=8pt]}] (a2) to (a1);
		\draw[arrows = {-Stealth[length=8pt]}] (b2) to (b1);
		\draw[arrows = {-Stealth[length=8pt]}] (a4) to (a3);
		\draw[arrows = {-Stealth[length=8pt]}] (b4) to (b3);
		
		\node[txt, yshift= 14mm] [below=of a2] {$x_7$};
		\node[txt, yshift= 14mm] [below=of a4] {$x_6$};
		\node[txt, yshift= 14mm] [below=of a5] {$x_8$};
		\node[txt, yshift= 14mm] [below=of b2] {$x_5$};
		\node[txt, yshift= 14mm] [below=of b4] {$x_9$};
		\node[txt, yshift= 14mm] [below=of b5] {$x_4$};
		
		\node[txt, yshift= -14mm] [above=of a1] {$x_2$};
		\node[txt, yshift= -14mm] [above=of b1] {$x_0$};
		\node[txt, yshift= -14mm] [above=of a3] {$x_3$};
		\node[txt, yshift= -14mm] [above=of b3] {$x_1$};
		
		\draw (5,0.5) to (5,-5);
	\end{tikzpicture}
	\begin{tikzpicture}[scale=0.8,every node/.style={scale=0.7}]
		\tikzstyle{txt}=[circle,fill=white,draw=white,inner sep=0pt]
		\tikzstyle{nde}=[circle,fill=black,draw=black,inner sep=2.5pt]
		\tikzstyle{ndeg}=[circle,fill=lime,draw=black,inner sep=2.5pt]
		\tikzstyle{ndeo}=[circle,fill=orange,draw=black,inner sep=2.5pt]
		
		\node[nde] (root1) at (-1,-.7) {};
		\node[nde, yshift= 7mm] (u1) [below=of root1] {};
		\node[ndeo, xshift= -5mm, yshift= 7mm] (v11) [below=of u1] {};
		\node[nde, xshift= 5mm, yshift= 7mm] (v12) [below=of u1] {};
		
		\node[ndeo, xshift= -10mm, yshift= 5mm] (w121) [below=of v12] {};
		\node[ndeo, yshift= 5mm] (w122) [below=of v12] {};
		\node[ndeo, xshift= 10mm, yshift= 5mm] (w123) [below=of v12] {};
		
		\node[nde] (root2) at (1.5,0) {};
		\node[nde, yshift= 5mm] (u2) [below=of root2] {};
		\node[nde, xshift= 25mm, yshift= 5mm] (u3) [below=of root2] {};
		
		\node[ndeg, xshift= -10mm, yshift= 5mm] (v21) [below=of u2] {};
		\node[ndeg, yshift= 5mm] (v22) [below=of u2] {};
		\node[ndeg, xshift= 10mm, yshift= 5mm] (v23) [below=of u2] {};
		
		\node[nde, xshift= -5mm, yshift= 5mm] (v31) [below=of u3] {};
		\node[ndeg, xshift= 5mm, yshift= 5mm] (v32) [below=of u3] {};
		
		\node[ndeg, xshift= -5mm, yshift= 8mm] (w311) [below=of v31] {};
		\node[ndeg, xshift= 5mm, yshift= 8mm] (w312) [below=of v31] {};
		
		\node[nde] (s) at (-2,-4.5) {};
		\node[nde] (v) at (2,-4.5) {};
		
		\node[txt, xshift= 13mm] [left=of root1] {$\rho_1$};
		\node[txt, xshift= 13mm] [left=of root2] {$t=\rho_2$};
		
		\node[txt, yshift= 13mm] [below=of s] {$s$};
		\node[txt, yshift= 13mm] [below=of v] {$\nu$};
		
		\node[txt, yshift= 14mm] [below=of v11] {$x_0$};
		\node[txt, xshift= 14mm] [left=of v21] {$x_4$};
		\node[txt, xshift= 14mm] [left=of v22] {$x_5$};
		\node[txt, xshift= 14mm] [left=of v23] {$x_6$};
		\node[txt, xshift= 14mm] [left=of v32] {$x_9$};
		
		\node[txt, yshift= 13mm] [below=of w121] {$x_1$};
		\node[txt, yshift= 13mm] [below=of w122] {$x_2$};
		\node[txt, yshift= 13mm] [below=of w123] {$x_3$};
		\node[txt, xshift= 14mm] [left=of w311] {$x_7$};
		\node[txt, yshift= 14mm] [below=of w312] {$x_8$};
		
		\node[txt] at (-2,-.5) {$G=$};
		
		\node[txt] at (-2.1,0) {(2)};
		
		\draw[dotted, arrows = {-Stealth[length=4pt]}, bend left=-20, dotted] (root1) to (u1);
		\draw[blue, arrows = {-Stealth[length=4pt]}, bend left=20, dotted] (u2) to (root2);
		\draw[dotted, arrows = {-Stealth[length=4pt]}, bend left=20, dotted] (u3) to (root2);
		
		\draw[dotted, arrows = {-Stealth[length=4pt]}, bend left=-20, dotted] (u1) to (v11);
		\draw[dotted, arrows = {-Stealth[length=4pt]}, bend left=-20, dotted] (u1) to (v12);
		
		\draw[dotted, arrows = {-Stealth[length=4pt]}, bend left=-20, dotted] (v21) to (u2);
		\draw[dotted, arrows = {-Stealth[length=4pt]}, bend left=-20, dotted] (v22) to (u2);
		\draw[dotted, arrows = {-Stealth[length=4pt]}, bend left=-20, dotted] (v23) to (u2);
		
		\draw[dotted, arrows = {-Stealth[length=4pt]}, bend left=-21, dotted] (v31) to (u3);
		\draw[dotted, arrows = {-Stealth[length=4pt]}, bend left=-20, dotted] (v32) to (u3);
		
		\draw[dotted, arrows = {-Stealth[length=4pt]}, bend left=-20, dotted] (v12) to (w121);
		\draw[dotted, arrows = {-Stealth[length=4pt]}, bend left=-20, dotted] (v12) to (w122);
		\draw[dotted, arrows = {-Stealth[length=4pt]}, bend left=-20, dotted] (v12) to (w123);
		
		\draw[dotted, arrows = {-Stealth[length=4pt]}, bend left=-28, dotted] (w311) to (v31);
		\draw[dotted, arrows = {-Stealth[length=4pt]}, bend left=-20, dotted] (w312) to (v31);
		
		\draw[blue, arrows = {-Stealth[length=4pt]}, bend left=20] (root1) to (u1);
		\draw[blue, arrows = {-Stealth[length=4pt]}, bend left=-20] (u2) to (root2);
		\draw[blue, arrows = {-Stealth[length=4pt]}, bend left=-20] (u3) to (root2);
		
		\draw[blue!80!, arrows = {-Stealth[length=4pt]}, bend left=20] (u1) to (v11);
		\draw[dotted, arrows = {-Stealth[length=4pt]}, bend left=20] (u1) to (v12);
		
		\draw[dotted, arrows = {-Stealth[length=4pt]}, bend left=20] (v21) to (u2);
		\draw[blue, arrows = {-Stealth[length=4pt]}, bend left=20] (v22) to (u2);
		\draw[blue, arrows = {-Stealth[length=4pt]}, bend left=20] (v23) to (u2);
		
		\draw[dotted, arrows = {-Stealth[length=4pt]}, bend left=20] (v31) to (u3);
		\draw[blue, arrows = {-Stealth[length=4pt]}, bend left=22] (v32) to (u3);
		
		\draw[dotted, arrows = {-Stealth[length=4pt]}, bend left=20] (v12) to (w121);
		\draw[dotted, arrows = {-Stealth[length=4pt]}, bend left=20] (v12) to (w122);
		\draw[dotted, arrows = {-Stealth[length=4pt]}, bend left=20] (v12) to (w123);
		
		\draw[dotted, arrows = {-Stealth[length=4pt]}, bend left=20] (w311) to (v31);
		\draw[dotted, arrows = {-Stealth[length=4pt]}, bend left=28] (w312) to (v31);

		\draw[blue, arrows = {-Stealth[length=4pt]}, bend left=-40] (v11) to (v22);
		\draw[dotted, arrows = {-Stealth[length=4pt]}, bend left=-29] (w121) to (v32);
		\draw[dotted, arrows = {-Stealth[length=4pt]}, bend left=-24] (w122) to (w311);
		\draw[dotted, arrows = {-Stealth[length=4pt]}, bend left=-15] (w123) to (v23);

		\draw[blue, arrows = {-Stealth[length=4pt]}] (s) to (v);
		\draw[blue, arrows = {-Stealth[length=4pt]}, bend left=20] (s) to (root1);
		
		\draw[dotted, arrows = {-Stealth[length=4pt]}, bend left=7, dotted] (v) to (v21);
		\draw[dotted, arrows = {-Stealth[length=4pt]}, bend left=7, dotted] (v) to (v22);
		\draw[blue, arrows = {-Stealth[length=4pt]}, bend left=5] (v) to (v23);
		\draw[dotted, arrows = {-Stealth[length=4pt]}, bend left=5, dotted] (v) to (w311);
		\draw[dotted, arrows = {-Stealth[length=4pt]}, bend left=6, dotted] (v) to (w312);
		\draw[blue, arrows = {-Stealth[length=4pt]}, bend left=-42] (v) to (v32);
	\end{tikzpicture}
	\caption{This figure shows in (1) a hypothetical source-separating instance of \PDD.
		The sources are drawn in green and the predators in orange.
		In (2), the instance of \MCNF is shown, as it would have been constructed in Proposition~\ref{prop:PDD-dist-cluster-flow}.
		For the sake of readability, capacities and costs are omitted.
		The blue and solid edges mark how a solution~$\{x_0,x_5,x_6,x_9\}$ transfer to a flow.
	}
	\label{fig:PDD-dist-cluster-flow}
\end{figure}
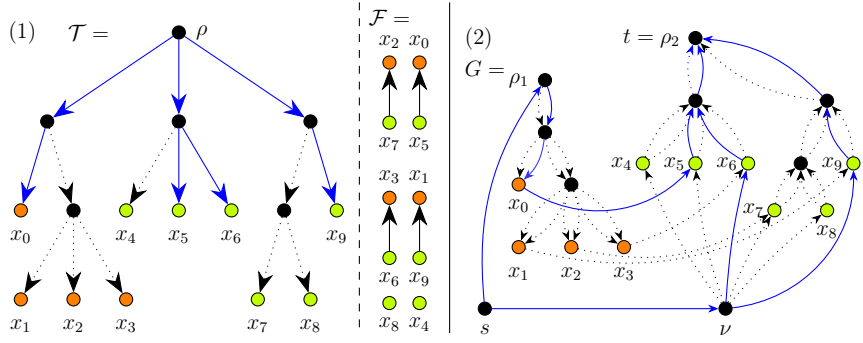%
\begin{proof}
	In this algorithm, we use techniques similar to the way Bordewich~et~al. showed that phylogeny across two trees can be computed efficiently~\cite{bordewich2trees}.
	We reduce a source-separating instance of \PDD to $\Oh(k)$ instances of \MCNFlong (\MCNF), which can be solved in~$\Oh(n^2 \log^2 n)$ time, each~\cite{minoux,networkflows}.
	
	In \MCNF one is given a directed graph $G = (V,E)$ with two vertices~$s,t \in V$---the \emph{source $s$} and the \emph{sink $t$}---and two integers $F$ and $C$, where each edge $e \in E$ has a positive capacity $c(e)$ and a cost $a(e)$.
	The cost may be negative and $G$ may have parallel edges.
	
	A \emph{flow} $f$ assigns each edge $e$ a non-negative integer value~$f(e)$.
	The \emph{cost of a flow~$f$} is $\sum_{e\in E} f(e) \cdot a(e)$.
	A flow $f$ is \emph{$q$-proper} if
	$q = \sum_{w\in V} f(sw) = \sum_{u\in V} f(ut)$,
	$\sum_{u\in V} f(uv) = \sum_{w\in V} f(vw)$ for each vertex $v\in V\setminus \{s,t\}$, and
	$f(e) \le c(e)$ for each edge~$e$.
	For an instance $(G,s,t,F,C,c,a)$ of \MCNF,
	we ask whether~$G$ contains an~$F$-proper flow of cost at most $C$.
	
	\proofpara{Algorithm}
	Let $\Instance = (\Tree, \Food, k, D)$ be a source-separating instance of \PDD.
	Define~$\Tree_1$ and~$\Tree_2$ to be the subtree of $\Tree$ spanning over the vertices $\{\rho\} \cup \sources$, and, respectively, $\{\rho\} \cup (X \setminus \sources)$.
	To avoid confusion, for~$i\in\{1,2\}$ the root of $\Tree_i$ is called $\rho_i$.
	For an example of the following reduction, consider Figure~\ref{fig:PDD-dist-cluster-flow}.
	
	Iterate over~$k' \in [ \lfloor k/2 \rfloor ]_0$.
	We define an instance $\Instance'_{k'} = (G,s,t,F,C,c,a)$ of \MCNF as follows.
	The set of vertices of~$G$ are the vertices of $\Tree_1$ and $\Tree_2$ and two new vertices $s$ and $\nu$.
	The vertex~$s$ is the source and $t:=\rho_2$ is the sink of $G$.
	We add edges $s\rho_1$ and $s\nu$, where $c(s\rho_1) = k'$ and~$c(s\nu) = k-2k'$.
	For each leaf $x$ of~$\Tree_2$, add an edge $\nu x$ of capacity $k$.
	For each edge~$xy$ of \Food, add an edge~$yx$ with a capacity of~1.
	Each edge so far has a cost of 0.
	For each edge $e=uv$ in $\Tree_1$, add two parallel edges $e$ and $e'$
	where $e$ has a capacity of 1 and a cost of $-\w(e)$ and $e'$ has a capacity of~$k-1$ and a cost of~0.
	For each edge $uv$ in $\Tree_2$, add two parallel edges $vu$ and $(vu)'$ (note that the edges are reversed)
	where $vu$ has a capacity of 1 and a cost of~$-\w(uv)$ and $(vu)'$ has a capacity of~$k-1$ and a cost of~0.
	To complete the instance, we set~$F := k-k'$ and $C := -D$.
	
	We compute a solution for $\Instance'_{k'}$ and return \yes if $\Instance'_{k'}$ is a \yes-instance of \MCNF.
	If $\Instance'_{k'}$ is a \no-instance of \MCNF for each $k'$, then we return \no.

	\proofpara{Correctness}
	The correctness is shown similarly as the correctness of the algorithm in~\cite{bordewich2trees}.
	We show first that if \Instance is a \yes-instance of \PDD then there is a $k'$ such that $\Instance'_{k'}$ is a \yes-instance of~\MCNF.
	Afterward, we show that if $\Instance'_{k'}$ is a \yes-instance of \MCNF for specific $k'$ then \Instance is a \yes-instance of \PDD.
	
	Assume that \Instance is a \yes-instance of \PDD with solution $S\subseteq X$.
	If necessary, add vertices to $S$ until $|S| = k$.
	Let $S_1$ be the subset of vertices of $S$ which are not sources.
	Let $k'$ be the size of $S_1$.
	Further, denote the set~$\{ x\in X \mid xy\in E(\Food), y\in S_1 \}$ with~$S_2$.
	Because~$S$ is viable and every connected component in~\Food contains at most two vertices, we conclude $S_2\subseteq S$.
	Then, we define $S_3 := S \setminus (S_1 \cup S_2)$.
	We define a flow~$f$ in~$\Instance'_{k'}$ as follows.
	Let $E_i$ be the set of edges on a path between~$\rho$ and~$S_i$ in~\Tree, for $i \in [3]$.
	We set $f(e) = 1$ for each~$e\in E_1$ and~$f(vu)=1$ for each~$uv \in E_2 \cup E_3$.
	So far we have defined the flow that ensures that the cost is $-D$.
	Now, we ensure that the flow $f$ is $k-k'$-proper.
	For each edge $xy\in E(\Food)$ with~$y\in S_1$ we set~$f(yx) = 1$. 
	Further, for each $x\in S_3$ we set~$f(\nu x) = 1$.
	We then set~$f(s\rho_1) = k'$ and~$f(sv) = k-2k'$.
	For each edge $e$ of $\Tree_1$, we set $f(e') = | \off(e) \cap S_1 | -1$.
	That is, the~$f(e')$ is the number of offspring $e'$ has in $S_1$ minus 1.
	For each edge $e=uv$ of $\Tree_2$ we set~$f((vu)') = | \off(e) \cap (S_2 \cup S_3) | -1$.
	It remains to show that $f$ is $k-k'$-proper.
	
	\proofpara{Claim: The flow $f$ is $k-k'$-proper}\\
	\begin{claimproof}
		One can easily verify that $f(e) \le c(e)$ for each $e\in E(G)$.
		
		Observe that the size of $S_3$ is $|S| - |S_1 \cup S_2| = k - 2k'$.
		The flow leaving $s$ is~$\sum_{u\in V} f(su) = f(s\rho_1) + f(sv) = k' + k - 2k' = k - k'$.
		The flow entering $t=\rho_2$ is~$\sum_{w\in V} f(wt) = |S_2| + |S_3| = k' + (k - 2k') = k - k'$.
		By definition, the flow entering and leaving $\rho_1$ has size $k'$.
		The flow leaving $\nu$ is $|S_3|$ and so equals to the flow entering~$\nu$, $f(s\nu) = k-2k'$.
		Each leaf $x$ has, if~$x\in S$, a flow of 1 incoming and leaving, and 0, otherwise.
		For an internal vertex $v\ne \rho$ of $\Tree$ with children $w_1,\dots,w_z$ and parent $u$, we observe
		$\off(uv) = \bigcup_{i=1}^z \off(vw_i)$.
		With this observation we conclude that $f$ is $k-k'$-proper.
	\end{claimproof}
	
	Conversely, assume now that $f$ is a $k-k'$-proper flow of $\Instance'_{k'}$.
	Let $S$ be the set of vertices that are corresponding to leaves in $\Tree$ and that have a positive entering flow.
	We show that $S$ is a solution for instance \Instance of \PDD.
	Since $f(s\rho_1) \le k'$, we conclude that $|S \cap \off(\rho_1)| \le k'$.
	We conclude~$|S \cap \off(\rho_2)| \le k-2k' + |S \cap \off(\rho_1)| \le k-k'$, because $f(s\nu) \le k-2k'$.
	Thus, the size of $S$ is at most $k$.
	If $x\in \off(\rho_1)$ is in $S$ then $x$ has a positive flow leaving $x$ and we conclude that $f(xy) > 0$ for $yx \in E(\Food)$.
	Therefore, $y$ is in $S$ and $S$ is viable.
	Let $E_1$ be the set of edges $e$ with~$f(e)>0$ and~$a(e) < 0$.
	Recall that $a(e)$ is the cost of $e$.
	Only edges corresponding to an edge in $\Tree$ are in $E_1$.
	Let $E_1'$ be the corresponding edges of $E_1$, especially, edges in $\Tree_2$ are turned around.
	Then, we observe
	\begin{equation}
		\PD(S) \ge \sum_{e\in E_1'} \w(e)
		 =  -\sum_{e\in E_1} f(e)\cdot a(e) \ge D.
	\end{equation}
	Therefore, $S$ is a solution for instance \Instance of \PDD.
	
	\proofpara{Running time}
	For a given $k'$, we can construct the instance $\Instance'_{k'}$ in linear time.
	A solution for an instance of \MCNF can be computed in $\Oh(n^2 \log^2 n)$~time~\cite{minoux,networkflows}.
	An instance $\Instance'_{k'}$ contains at most $2n$ vertices and at most $6n$ edges.
	So, the overall running time is~$\Oh(k\cdot n^2 \cdot \log^2 n)$.
\end{proof}

\subsection{Distance to Co-Cluster}
In this section, we show that \PDD is \FPT with respect to the distance to co-cluster of the food-web.
A  graph is a co-cluster graph if its complement graph is a cluster graph.

In our algorithm, we solve as a subroutine an auxiliary problem, called \HStw, in which we are given a universe $\mathcal{U}$, a family of subsets~$\mathcal{W}$ of $\mathcal{U}$, a $\mathcal{U}$-tree \Tree, and integers $k$ and $D$.
We ask whether there is a set $S \subseteq \mathcal{U}$ of size at most $k$ such that~$\PD(S) \ge D$ and $S\cap W \ne \emptyset$ for each $W\in \mathcal{W}$.

\begin{lemma}
	\label{lem:PDD-HStw}
	\HStw can be solved in $\Oh(3^{|\mathcal{W}|} \cdot n)$ time.
\end{lemma}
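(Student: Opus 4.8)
The plan is to reduce the decision question to a maximization and then solve the latter by a dynamic program over the tree whose second coordinate ranges over subsets of $\mathcal{W}$. First I would note that $\PD$ is monotone: adding a leaf to $S$ can only enlarge $E_\Tree(S)$ and hence never decreases $\PD(S)$. Consequently the threshold $D$ need not become a dimension of the table; it suffices to compute, for the whole tree, the maximum of $\PDsub{\Tree}(S)$ over all $S$ with $|S|\le k$ that hit every $W\in\mathcal{W}$, and to compare this value with $D$. This already explains why $D$ does not appear in the claimed running time.

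For the dynamic program I would root $\Tree$ and, for each vertex $v$ and each $C\subseteq\mathcal{W}$, let $\DP[v,C]$ store the maximum $\PDsub{\Tree_v}(S)$ over leaf sets $S\subseteq\off(v)$ whose hit-profile $\{W\in\mathcal{W}:S\cap W\neq\emptyset\}$ equals $C$ (and $-\infty$ if no such $S$ exists). Since a selected leaf hitting no set of $\mathcal{W}$ still contributes weight and must still be allowed, I would also track a nonemptiness flag — as in the base-case treatment of Lemma~\ref{lem:PDD-k-stars} and Lemma~\ref{lem:PDD-colored-D} — so that at an internal vertex the edge weight $\w(v u_i)$ is added exactly when a leaf below $u_i$ is selected. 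As in those lemmas, I would process the children $u_1,\dots,u_t$ of $v$ one at a time through an auxiliary table $\DP'$, combining a partial profile with the profile of the next child by the subset recurrence
\[
  \DP'[v,p+1,C] \;=\; \max_{C'\subseteq C}\; \DP'[v,p,C\setminus C'] \,+\, \bigl(\DP[u_{p+1},C'] + \w(v u_{p+1})\bigr),
\]
where the weight term is present only when the child genuinely contributes a leaf. Summing the cost $\sum_{C\subseteq\mathcal{W}}2^{|C|}=3^{|\mathcal{W}|}$ of this disjoint-union convolution over the $\Oh(n)$ combination steps of the tree gives the factor $3^{|\mathcal{W}|}\cdot n$. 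Correctness follows the by-now-familiar induction: one shows a set realizing each stored value exists, and conversely that every feasible $S$ is accounted for, using that hit-profiles of disjoint subtrees combine by union.

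The main obstacle is the cardinality bound $|S|\le k$, which, unlike the profile $C$, is not recorded by the two coordinates above. The naive remedy of adding a third coordinate $j\in[k]_0$ counting selected leaves reintroduces a factor of $k$ and a knapsack-style convolution at the combination steps, giving only $\Oh(3^{|\mathcal{W}|}\,n\,k)$. To stay within the stated bound I would exploit monotonicity of $\PD$ together with the fact that only $\Oh(|\mathcal{W}|)$ leaves are ever needed to realize a hit-profile: rather than storing the maximum diversity, one can store for each $(v,C)$ the minimum number of leaves needed to hit $C$, and then, because $\PD$ is monotone, fill any remaining budget by greedily adding the leaves of largest marginal diversity (the exact greedy for budgeted $\PD$-maximization of Steel and Pardi--Goldman, applied to the leaves not already committed to hitting). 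Reconciling this size bookkeeping with the diversity maximization — guaranteeing that minimizing the committed leaves does not conflict with maximizing diversity inside the recurrence — is the delicate point I expect to require the most care. Once it is settled, verifying $\PDsub{\Tree}(S)\ge D$ and $|S|\le k$ at the root, and translating this maximization back to the decision problem \HStw, is routine.
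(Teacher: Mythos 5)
Your first two paragraphs are, modulo notation, the paper's proof. The paper likewise adapts the Pardi--Goldman dynamic program: its tables $\DP[v,b,\mathcal{M}]$ are indexed by a vertex, a nonemptiness bit $b$, and a subset $\mathcal{M}\subseteq\mathcal{W}$ still to be hit; the children of $v$ are folded in one at a time through an auxiliary table via the same disjoint-union subset recurrence, with side conditions (if $\mathcal{M}'\neq\emptyset$ then $b_1=1$; if $\mathcal{M}'\neq\mathcal{M}$ then $b_2=1$; $b_1+b_2\ge 1$) playing the role of your ``the weight term is present only when the child genuinely contributes a leaf''; and monotonicity of $\PD$ is exploited exactly as you say, by maximizing diversity and only comparing against $D$ at the very end, via the check $\DP[\rho,1,\mathcal{W}]\ge D$. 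The $3^{|\mathcal{W}|}\cdot n$ accounting (three states per element of $\mathcal{W}$ per combination step, $\Oh(n)$ combination steps) is also identical.

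The divergence is your third paragraph, and there the proposal has a genuine gap --- though, interestingly, it is one the paper's written proof does not confront either. First, the paper's tables carry no size coordinate at all: $k$ never enters the algorithm of Lemma~\ref{lem:PDD-HStw}, which is exactly how the stated bound avoids any $k$-factor; if the budget is to be enforced --- and the downstream use in Theorem~\ref{thm:PDD-co-cluster}, which sets $k':=k-|Z|-2$, does rely on it --- the natural repair is precisely the third coordinate $j\in[k]_0$ that you dismissed, at the cost of extra polynomial factors in $k$ (harmless for the applications, since $k\le n$, but not matching the advertised bound). Second, your substitute --- store per $(v,C)$ only the minimum number of leaves hitting $C$, then greedy-fill the residual budget --- does not work. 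Greedy extension of a \emph{fixed} set is indeed exact (contract the edges already covered to weight zero and run the Steel/Pardi--Goldman greedy on what remains), but the value of the completion depends on the \emph{identity} of the committed hitting set, not on its cardinality: two minimum-cardinality hitting sets realizing the same profile can differ both in their own diversity and in which edges they leave for the greedy leaves to pay for, and the optimum may even use a hitting part of non-minimum size. Recording only the minimum count therefore discards information that the combination steps need; to retain it you would have to keep, for each $(v,C)$, the Pareto frontier of (cardinality, diversity) pairs --- which is exactly the size dimension you were trying to avoid. So the ``delicate point'' you flagged is not a deferred detail: as proposed it cannot be settled within the claimed running time, and the paper's own route is simply to leave the budget out of the dynamic program altogether.
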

\begin{proof}
	We adopt the dynamic programming algorithm with which one can solve \MPD with weighted costs for saving taxa in pseudo-polynomial time~\cite{pardi07}.
	
	\proofpara{Table definition}
	We define two tables $\DP$ and an auxiliary table $\DP'$.
	We want that entry $\DP[v,0,\mathcal{M}]$, for~$v\in V(\Tree)$, and~$\mathcal{M} \subseteq \mathcal{W}$, stores 0 if $\mathcal{M}$ is empty and~$\DP[v,1,\mathcal{M}]$ stores the biggest phylogenetic diversity $\PDsub{\Tree_v}(S)$ of a set $S\subseteq \off(v)$ in the subtree~$\Tree_v$ rooted at $v$ where~$S\cap M \ne \emptyset$ for each $M\in \mathcal{M}$.
	Otherwise, we store $-\infty$.
	For a vertex~$v \in V(\Tree)$ with children~$w_1,\dots,w_j$,
	in $\DP'[v,i,b,\mathcal{M}]$ we only consider sets~$S \subseteq \off(w_1) \cup \dots \cup \off(w_i)$.
	
	\proofpara{Algorithm}
	For a leaf $u \in \mathcal{U}$ of \Tree, in $\DP[u,1,\mathcal{M}]$ store 0 if~$u\in M$ for each $M\in\mathcal{M}$.
	Otherwise, store $-\infty$.
	For a given vertex~$v$ of~\Tree, in $\DP[v,0,\mathcal{M}]$ and $\DP'[v,i,0,\mathcal{M}]$ store 0 if $\mathcal{M} = \emptyset$.
	Otherwise, store $-\infty$.
	
	Let $v$ be an internal vertex of \Tree with children $w_1,\dots,w_z$.
	For each~$b\in \{0,1\}$ and every family~$\mathcal{M}$ of subsets of~$\mathcal{W}$,
	we set $\DP'[v,1,b,\mathcal{M}] := \DP[w_1,b,\mathcal{M}] + b\cdot \w(vw_1)$.
	To compute further values, we use the recurrence
	\begin{equation}
		\label{eqn:recurrence-HStw}
		\DP'[v,i+1,1,\mathcal{M}] =
			\max_{\mathcal{M}', b_1, b_2}
			\DP'[v,i,b_1,\mathcal{M}'] + \DP[w_{i+1},b_2,\mathcal{M}\setminus \mathcal{M}'] + b_2 \cdot \w(vw_{i+1}).
	\end{equation}
	Here, the maximum is taken over $\mathcal{M}' \subseteq \mathcal{M}$ and $b_1,b_2 \in \{0,1\}$ where we additionally require~$b_1+b_2 \ge 1$; and if~$\mathcal{M}' \ne \emptyset$ then~$b_1=1$; and if~$\mathcal{M}' \ne \mathcal{M}$ then~$b_2=1$.
	Finally, we set $\DP[v,b,\mathcal{M}] := \DP'[v,z,b,\mathcal{M}]$.
	
	If $\DP[\rho, 1, \mathcal{W}] \ge D$, return \yes.
	Otherwise, return \no. 
	
	\proofpara{Correctness}
	The base cases are correct.
	Overall, the correctness of Recurrence~(\ref{eqn:recurrence-HStw}) can be shown analogously to~\cite{pardi07}.
	
	\proofpara{Running time}
	As a tree has at most $2n$ vertices, the overall size of $\DP$ and $\DP'$ is~$\Oh(2^{|\mathcal{W}|} \cdot n)$.
	In Recurrence~(\ref{eqn:recurrence-HStw}), there are three options for each $M\in \mathcal{W}$.
	Hence, the the total number of terms considered in the entire table can be computed in~$\Oh(3^{|\mathcal{W}|} \cdot n)$~time.
\end{proof}

In the following, we reduce from \PDD to \HStw.
Herein, we select a subset of the modulator~$Y$ to survive.
Additionally, we select the first taxon~$x_i$ which survives in~$X \setminus Y$.
Because~$\Food - Y$ is a co-cluster graph,~$x_i$ is in a specific independent set~$I \subseteq X$ and any taxon~$X \setminus (I \cup Y)$ feed on~$x_i$.
Then, by saving a taxon~$x_j \in X \setminus (I \cup Y)$, any other taxon in~$X \setminus Y$ has some prey.
Subsequently, a solution is found by Lemma~\ref{lem:PDD-HStw}.

\begin{theorem}
	\label{thm:PDD-co-cluster}
	\PDD can be solved in $\Oh(6^d \cdot n^3)$ time, when we are given a set~$Y\subseteq X$ of size $d$ such that $\Food-Y$ is a co-cluster graph.
\end{theorem}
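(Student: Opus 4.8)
The plan is to branch over which modulator taxa are saved and then exploit the co-cluster structure to reduce the residual problem to \HStw, which is solvable via Lemma~\ref{lem:PDD-HStw}. First I would fix a topological order of \Food and iterate over all $2^d$ subsets $Z\subseteq Y$, reading $Z$ as the taxa of $Y$ that survive. Writing $X\setminus Y = I_1 \cup \dots \cup I_r$ for the independent sets of the co-cluster $\Food - Y$ (so two taxa of $X\setminus Y$ are adjacent in \Food exactly when they lie in different $I_a$), I would additionally guess the topologically first surviving taxon $x_i \in X\setminus Y$, let $I$ be its independent set, and guess the topologically first surviving taxon $x_j$ in $R := (X\setminus Y)\setminus I$ (when $R$ contributes to the solution). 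This is where the $2^d\cdot n^2$ branches come from.

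The heart of the argument is a structural claim that, for fixed $Z$, $x_i$, and $x_j$, collapses viability to only $\Oh(d)$ hitting constraints. Since $x_i$ is topologically first among the surviving taxa of $X\setminus Y$, every surviving $v\in R$ is adjacent to $x_i$ and lies after it, so the arc is $x_i v$ and $v$ feeds on $x_i$; hence all of $R$ is supported as soon as $x_i$ is saved. For a surviving taxon $z\in I\setminus\{x_i\}$ whose support comes from a prey $w\in R$, the arc $wz$ forces $w$ before $z$, and as $x_j$ precedes every surviving taxon of $R$ we obtain that $x_j$ precedes $z$; because $x_j\in R$ and $z\in I$ are adjacent, the arc must be $x_j z$, so $z$ feeds on $x_j$. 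Thus a single taxon $x_j$ supports every surviving taxon of $I$ that is fed from $R$, while the remaining surviving taxa of $I$ must be fed through $Z$, having no prey inside the independent set $I$. The only genuine viability requirements left are therefore that $x_i$ is a source or has a prey in $Z$ (a check on the guess), and that each $z\in Z$ which is neither a source nor fed by $Z$ has a prey among the saved taxa of $X\setminus Y$.

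I would then encode the residual instance as \HStw. The universe $\mathcal{U}$ collects the taxa of $X\setminus Y$ that are topologically at least $x_i$ and are \emph{includable}: every taxon of $R$ after $x_j$ is includable since it feeds on $x_i$, and a taxon $z\in I$ is includable iff it has a prey in $Z$ or $x_j z$ is an arc. The tree is \Tree restricted to $\mathcal{U}$ (suppressing degree-two vertices), with the edges already covered by $Z$ discounted so that the tree-profit of a chosen set $S'\subseteq\mathcal{U}$ equals $\PD(Z\cup S') - \PD(Z)$; the budget is $k' := k-|Z|$ and the target is $D' := D-\PD(Z)$. The family $\mathcal{W}$ consists of the singletons $\{x_i\}$ and $\{x_j\}$ (forcing these taxa into the solution) together with one set $\prey{z}\cap\mathcal{U}$ for every $z\in Z$ that still needs an external prey, so $|\mathcal{W}| \le d+2$. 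By the structural claim, a set $S'\subseteq\mathcal{U}$ hitting all of $\mathcal{W}$ with tree-profit at least $D'$ corresponds exactly to a viable set $Z\cup S'$ of size at most $k$ and diversity at least $D$, and conversely. Solving each \HStw instance with Lemma~\ref{lem:PDD-HStw} in $\Oh(3^{|\mathcal{W}|}\cdot n)=\Oh(3^{d}\cdot n)$ time across the $2^d\cdot n^2$ branches yields the claimed $\Oh(6^d\cdot n^3)$ bound.

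The main obstacle I expect is making the structural claim fully rigorous while getting the \HStw encoding correct in both directions. The forward direction must show that an arbitrary solution can be normalized---by taking $x_i$ and $x_j$ as the topologically first saved taxa of $X\setminus Y$ and of $R$---so that all its within-$I$ support is routed through the single taxon $x_j$; this hinges delicately on the interplay between the fixed topological order, acyclicity of \Food, and the complete-bipartite orientation between distinct independent sets. The remaining effort is bookkeeping: discounting the \Tree-edges already paid for by $Z$ so that the tree-profit objective of \HStw matches $\PD(Z\cup S')$, and separately treating the case $S'\cap R=\emptyset$, where no $x_j$ is guessed and every surviving taxon of $I$ must be fed by $Z$.
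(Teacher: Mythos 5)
Your proposal follows essentially the same route as the paper's proof: iterate over the $2^d$ subsets $Z\subseteq Y$, guess the topologically first surviving taxon $x_i\in X\setminus Y$ (checking it is a source or fed by $Z$) and the first survivor $x_j$ outside its independent set $I$, use the complete-multipartite orientation of the co-cluster to route all cross-class feeding through $x_i$ and $x_j$, and solve the residual problem as \HStw with $\Oh(d)$ hitting constraints (one per $z\in Z$ still needing prey) via Lemma~\ref{lem:PDD-HStw}, for $\Oh(6^d\cdot n^3)$ total; the paper differs only cosmetically in that it pre-saves $x_i,x_j$ (contracting the corresponding tree edges, with budget $k-|Z|-2$) instead of forcing them through singleton sets in $\mathcal{W}$. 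One small fix: your includability condition for taxa $z\in I$ must also admit sources of \Food (a surviving source $z\in I$ topologically between $x_i$ and $x_j$ has neither prey in $Z$ nor the arc $x_jz$, yet is perfectly viable, so your universe would wrongly exclude it); with that one-word amendment the argument is correct and matches the paper's.
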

\begin{proof}
	\proofpara{Algorithm}
	Given an instance $\Instance = (\Tree,\Food,k,D)$ of \PDD.
	Let $x_1,\dots,x_n$ be a topological ordering of $X$ which is induced by \Food.
	Iterate over the subsets $Z$ of $Y$.
	Let $P_Z$ be the sources of~\Food in~$X\setminus Y$ and let $Q_Z$ be $\predators{Z} \setminus Y$, the taxa in $X\setminus Y$ which are being fed by~$Z$.
	Further, define~$R_Z := P_Z \cup Q_Z \subseteq X\setminus Y$.
	Iterate over the vertices $x_i \in R_Z$.
	Let $x_i$ be from the independent set $I$ of the co-cluster graph~$\Food-Y$.
	Iterate over the vertices~$x_j \in X \setminus (Y \cup I)$.
	
	For each set~$Z$, and taxa~$x_i$ and~$x_j$, with Lemma~\ref{lem:PDD-HStw} we compute the optimal solution for the case that~$Z$ is the set of taxa of $Y$ that survive while all taxa of~$Y\setminus Z$ go extinct, $x_i$ is the first taxon in $X\setminus Y$, and $x_j$ the first taxon in $X \setminus (Y \cup I)$ to survive.
	(The special cases that only taxa from $I \cup Y$ or only from $Y$ survive are omitted here.)

	More formally, we define instances $\Instance_{Z,i,j}$ of \HStw for~$Z \subseteq Y$, and~$i,j\in [n]$ with~$i<j$ as follows.
	Let the universe $\mathcal{U}_{i,j}$ be the union of~$\{ x_{i+1}, \dots, x_{j-1} \} \cap I$ and~$\{ x_{j+1}, \dots, x_n \} \setminus Y$.
	In other words, we let the taxa in~$Y$ and in $\{x_1, \dots, x_{i}\}$ and in $\{x_1, \dots, x_{j}\} \setminus I$ go extinct.
	For each taxon~$x\in Z$ compute the set~$\prey{x}$.
	If~$x\not\in \sources$ and~$\prey{x} \cap (Z \cup \{x_i,x_j\}) = \emptyset$, then add $\prey{x} \setminus Y$ to the family of sets $\mathcal{W}_{Z,i,j}$.
	For each edge~$uv\in E(\Tree)$ for which in~$\off(uv)$ a vertex of~$Z \cup \{x_i,x_j\}$ is contained, add an edge~$uw$ for each child~$w$ of~$v$ and remove~$v$ and its incident edges to obtain~$\Tree_{Z,i,j}$.
	Finally, we set $k' := k - |Z| - 2$ and $D' := D - \PD(Z \cup \{x_i,x_j\})$.
	
	Solve~$\Instance_{Z,i,j}$.
	If $\Instance_{Z,i,j}$ is a \yes-instance then return \yes.
	Otherwise, continue with the iteration.
	If $\Instance_{Z,i,j}$ is a \no-instance for every $Z\subseteq Y$, and each $i,j\in [n]$, then return \no.
	
	\proofpara{Correctness}
	We show that if the algorithm returns \yes, then $\Instance$ is a \yes-instance of~\PDD.
	Afterward, we show the converse.
	
	Let the algorithm return \yes.
	Consequently, there is a set $Z \subseteq Y$, and there are taxa~$x_i \in X \setminus Y$ and~$x_j \in X \setminus (Y \cup V(I))$ such that $\Instance_{Z,i,j}$ is a \yes-instance of \HStw.
	As before, $I$ is the independent set such that~$x_i\in V(I)$.
	Consequently, there is a set $S \subseteq \mathcal{U}_{i,j}$
	of size at most $k - |Z| - 2$
	such that~$\PDsub{\Tree_{Z,i,j}}(S) \ge D' = D - \PD(Z \cup \{x_i,x_j\})$
	and $S \cap W \ne \emptyset$ for each~$W\in \mathcal{W}_{Z,i,j}$.
	We show that $S^* := S \cup Z \cup \{x_i,x_j\}$ is a solution for instance \Instance of \PDD.
	Clearly, $|S^*| = |S| + |Z| + 2 \le k$
	and $\PD(S^*) = \PDsub{\Tree_{Z,i,j}}(S) + \PD(Z \cup \{x_i,x_j\}) \ge D$ be-\linebreak cause~$\Tree_{Z,i,j}$ is the tree which  results from $\Tree$ after saving~$Z \cup \{x_i,x_j\}$.
	Further, by definition $x_i$ is a source, or is fed by~$Z$.
	Because $\Food-Y$ is a co-cluster graph and~$x_j$ is not in $I$, the independent set in which $x_i$ is, we conclude that $x_j\in \predators{x_i}$.
	As~$S \cap W \ne \emptyset$ for each $W\in \mathcal{W}_{Z,i,j}$,
	each taxon $x\in Z$ has a prey in $Z \cup \{x_i,x_j\}$ or in~$S$ so that $\prey{x} \cap S^* \ne \emptyset$.
	Therefore, $S^*$ is viable and indeed a solution for \Instance.

	Assume now that $S$ is a solution for instance \Instance of \PDD.
	We define $Z := S \cap Y$ and let $x_i$ and $x_j$ be the taxa in $S\setminus Y$, respectively $S\setminus (Y \cup I)$, with the smallest index.
	As before, $I$ is the independent set of $x_i$.
	We show that instance $\Instance_{Z,i,j}$ of \HStw has $S^* := S \setminus (Z \cup \{x_i,x_j\})$ as a solution.
	Clearly, the size of~$S^*$ is~$|S| - |Z| - 2 \le k'$
	and by the definition of $\Tree_{Z,i,j}$, we also conclude that~$\PDsub{\Tree_{Z,i,j}}(S^*) \ge D'$.
	Let $M \in \mathcal{W}_{Z,i,j}$.
	There is a taxon $z\in Z$ that, by definition, satisfies~$M = \prey{z} \setminus Y$,
	and $z \not\in \sources$,
	and $\prey{z} \cap (Z \cup \{x_i,x_j\}) = \emptyset$.
	Consequently, as $S$ is viable, there is a taxon $x\in S \cap \prey{z}$ so that $S\cap M \ne \emptyset$.
	Hence,~$S^*$ is a solution of instance $\Instance_{Z,i,j}$ of \HStw.

	\proofpara{Running time}
	For a given $Z\subseteq Y$, we can compute
	the topological order $x_1,\dots,x_n$ and the set $R_Y$ in $\Oh(n^2)$ time.
	The iterations over $x_i$ and $x_j$ take $\Oh(n^2)$ time.
	Observe, $|\mathcal{W}_{Z,i,j}| \le |Z|$.
	By Lemma~\ref{lem:PDD-HStw}, checking whether $\Instance_{Z,i,j}$ is a \yes-instance takes $\Oh(3^d n)$~time each.
	So, the overall running time is $\Oh(6^d \cdot n^3)$ time.
\end{proof}

\subsection{Treewidth}
Conjecture~4.2., formulated by Faller et~al.~\cite{faller} supposes that \sPDD remains \NP-hard on instances where the underlying graph of the food-web is a tree.
In this subsection, assuming \PneqNP, we disprove this conjecture by showing that \sPDD can be solved in polynomial time on food-webs which are trees.
We even show a stronger result: \sPDD is \FPT with respect to the treewidth of the food-web.

\begin{theorem}
	\label{thm:PDD-tw}
	\sPDD can be solved in $\Oh(9^\tw \cdot nk)$~time.
\end{theorem}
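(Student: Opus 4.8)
The plan is to exploit the star structure to reduce \sPDD to a purely additive vertex-selection problem on the food-web and then solve it by dynamic programming over a nice tree decomposition. Since $\Tree$ is a star, $\PD(S) = \sum_{x \in S} \w(\rho x)$; writing $w(x) := \w(\rho x)$, we seek a set $S \subseteq X$ with $|S| \le k$ maximizing $\sum_{x\in S} w(x)$ subject to viability. By Observation~\ref{obs:PDD-viable}, $S$ is viable precisely when every $x \in S$ that is not a source of $\Food$ has at least one prey in $S$. This is the only structural constraint, and it is local to the food-web, which is exactly what makes a treewidth dynamic program promising.

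First I would compute a nice tree decomposition of the underlying undirected graph of $\Food$ of width $\tw$, using introduce-edge nodes as in Definition~\ref{def:tw}, so that each arc $uv$ (with $u \in \prey{v}$) is introduced exactly once and, by the convention there, immediately before one of its endpoints is forgotten. The table will assign to every vertex of the current bag one of three labels: \emph{out} ($x \notin S$); \emph{satisfied} ($x \in S$ and its prey-requirement is already met within the processed subgraph, which includes the case $\prey{x} = \emptyset$, i.e.\ $x \in \sources$); and \emph{pending} ($x \in S$ but no selected prey has yet been connected to $x$ by an introduced arc). In addition the table carries a budget dimension $j \in [k]_0$ recording the number of selected vertices, and each entry $\DP[t,\cdot,j]$ stores the maximum attainable weight.

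The transitions follow the node types. At an introduce-vertex node, a source may be placed in the satisfied state and a non-source only in the pending state (or either may be left out), updating $j$ and the weight. At an introduce-edge node for arc $uv$, if both endpoints are selected then $v$ may be relabeled from pending to satisfied; this is where the viability requirement is discharged. At a forget node we forbid forgetting a pending vertex (since all its prey-arcs have already been introduced, it can never become satisfied later), and otherwise maximize over the forgotten vertex's label. The crucial node is the join: two children share a bag, the out/selected labels must agree, and a shared selected vertex is satisfied in the result iff it is satisfied in at least one child (pending iff pending in both), because the arcs introduced in the two subgraphs are disjoint. Combining the two tables therefore ranges over all compatible label-pairs per bag vertex, together with the usual splitting of the budget and subtraction of the shared bag's contribution; this is the step responsible for the $9^{\tw}$ factor. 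Correctness then follows by the standard induction showing that $\DP[t,\cdot,\cdot]$ stores the optimum over partial solutions of $G_t$, with the invariant that ``satisfied'' faithfully records whether a vertex's prey-requirement is already discharged inside $G_t$; since the root bag is empty, every selected vertex has been forgotten and hence was satisfied, so the admissible assignments are exactly the viable sets. We return \yes iff $\DP[\rho,\emptyset,j] \ge D$ for some $j \le k$.

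The main obstacle I anticipate is getting the viability bookkeeping exactly right across the decomposition: the supporting prey of a selected vertex may lie in a different bag or branch, so I must ensure both that (i) a vertex is never forgotten while still pending, which forces the convention that every incident arc is introduced before an endpoint is forgotten, and (ii) at a join the satisfaction status is combined disjunctively rather than conjunctively, reflecting that support from either side suffices. Once these two points are handled, the three states per bag vertex give $3^{\tw+1}$ states, the join over compatible label-pairs costs $9^{\tw+1}$ per node, and threading the budget through as a dimension of size $k+1$ over the $\Oh(\tw \cdot n)$ nodes of the decomposition yields the running time $\Oh(9^{\tw} \cdot nk)$. In particular, taking $\tw = 1$ gives a polynomial-time algorithm on tree food-webs, disproving Conjecture~4.2 of Faller et al.
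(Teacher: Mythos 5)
Your proposal is correct and takes essentially the same approach as the paper's proof: a dynamic program over a nice tree decomposition of the food-web with the same three states per bag vertex (not selected / selected but still needing prey / selected and satisfied), forbidding forgetting a pending vertex, combining satisfaction disjunctively at join nodes over compatible label pairs (the source of the $9^{\tw}$ factor), and threading the budget $k$ as an extra table dimension. The only cosmetic difference is that you discharge the prey-requirement at explicit introduce-edge nodes (as the paper does in its \MAPPD treewidth algorithm), whereas the paper's \sPDD proof handles edges implicitly at vertex introduction by re-coloring the introduced vertex's predators in the bag; your explicit subtraction of the shared bag's weight and budget at joins is the standard bookkeeping for that variant.
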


To show~Theorem~\ref{thm:PDD-tw}, we define a dynamic programming algorithm over a tree-de\-com\-po\-si\-tion of~\Food.
In each bag, we divide the taxa into three sets indicating that they
	a)~``are supposed to go extinct'',
	b)~``will be saved'' but ``still need prey'',
	c)~or ``will be saved'' without restrictions.
The algorithm is similar to the standard treewidth algorithm for \mbox{\DS~\cite{cygan}}.

\begin{proof}
	Let $\Instance = (\Tree, \Food, k, D)$ be an instance of \sPDD.
	We define a dynamic programming algorithm over a nice tree-decomposition~$T$ of~$\Food=(V_\Food,E_\Food)$.
	
	For a node~$t\in T$, let $Q_t$ be the bag associated with~$t$ and let $V_t$ be the union of bags in the subtree of~$T$ rooted at~$t$.
	
	\proofpara{Definition of the Table}
	We index solutions by a partition $R\cup G\cup B$ of $Q_t$, and a non-negative integer $s$.
	For a set of taxa~$Y \subseteq V_t$, we call a vertex $u \in V_t$ \emph{red with respect to $Y$}
	if $u$ is in $Y$ and $u$ has a predator but no prey in $Y$.
	We call $u$ \emph{green with respect to $Y$}
	if $u$ is in $Y$ and
		a)~$u$ is a source in \Food, or
		b)~has prey in $Y$.
	Finally, we call $u$ \emph{black with respect to $Y$}
	if $u$ is not in $Y$.
	
	For a node~$t$ of the tree-decomposition, a partition $R\cup G\cup B$ of $Q_t$, and an integer~$s$, a set of taxa~$Y \subseteq V_t$ is called \emph{$(t,R,G,B,s)$-feasible}, if all the following conditions hold.
	\begin{enumerate}
		\item[(T1)] Each vertex in $Y\setminus Q_t$ is green with respect to $Y$.
		\item[(T2)] The vertices $R\subseteq Q_t$ are red with respect to $Y$.
		\item[(T3)] The vertices $G\subseteq Q_t$ are green with respect to $Y$. 
		\item[(T4)] The vertices $B\subseteq Q_t$ are black with respect to $Y$.
		\item[(T5)] The size of $Y$ is $s$.
	\end{enumerate}

	In entry~$\DP[t,A,R,G,B,s]$, we want to store the largest diversity~$\PD(Y)$\lb of~$(t,R,G,B,s)$-feasible sets~$Y$.
	If there is no $(t,R,G,B,s)$-feasible sets~$Y$, we want~$-\infty$ to be stored.
	Let $r$ be the root of the nice tree-decomposition $T$.
	Then, $\DP[r,\emptyset,\emptyset,\emptyset,k]$ stores an optimal diversity.
	So, we return~\yes if $\DP[r,\emptyset,\emptyset,\emptyset,k]\ge D$ and \no otherwise.
	
	In the following, any time a non-defined entry $\DP[t,R,G,B,s]$ is called (in particular, if $s < 0$), we take $\DP[t,R,G,B,s]$ to be $-\infty$.
	
	We regard the different types of nodes of a tree-decomposition separately and discuss their correctness.

	\proofpara{Leaf Node}
	For a leaf~$t$ of~$T$, the bags~$Q_t$ and $V_t$ are empty. We store
	\begin{eqnarray} \label{tw:leaf}
		\DP[t,\emptyset,\emptyset,\emptyset,0] &=& 0.
	\end{eqnarray}
	For all other values, we store $\DP[t,R,G,B,s] = -\infty$.
	
	\Recc{tw:leaf} is correct by definition.

	\proofpara{Introduce Node}
	Suppose that~$t$ is an \emph{introduce node}, that is,~$t$ has a single child~$t'$ with~$Q_t = Q_{t'} \cup \{v\}$.
	We store $\DP[t,R,G,B,s] = \DP[t',R,G,B\setminus \{v\},s]$ if $v\in B$. If~$v\in G$ but~$v$ is not a source in~\Food and~$v$ does not have prey in~$R\cup G$, then we store~$\DP[t,R,G,B,s] = -\infty$. Otherwise, we store
	\begin{eqnarray} \label{tw:insertvertex}
		\DP[t,R,G,B,s] &=& \max_{A\subseteq \predators{v} \cap (R\cup G)} \DP[t',R',G',B,s - 1] + \w(\rho v).
	\end{eqnarray}
	
	Herein, we define $R'$ and $G'$ depending on~$A$ to be~$R' := (R \setminus (\{v\} \cup \predators{v})) \cup A$ and~$G' := (G \cup (\predators{v} \cap (R \cup G))) \setminus (A \cup \{v\})$.

	If $v\in B$, then $v\not\in Y$ for any $(t,R,G,B,s)$-feasible set~$Y$.
	If $v \in G$ but is not a source and has no prey in $R\cup G$, then $v$ is not green with respect to~$Y$ for any~$Y$.
	So, these two cases store the desired value.
	Towards the correctness of \Recc{tw:insertvertex}:
	If $v\in G\cup R$, then we want to select $v$ and therefore we need to add $\w(\rho v)$ to the value of~$\DP[t,R,G,B,s]$.
	Further, by the selection of $v$ we know that the predators of $v$ could be green (but maybe are still stored as red) in $t$, but in $t'$ could be red and therefore we reallocate $\predators{v} \cap (R\cup G)$ and let $A$ be red beforehand.

	\proofpara{Forget Node}
	Suppose that~$t$ is a \emph{forget node}, that is,~$t$ has a single child~$t'$\linebreak with~$Q_t = Q_{t'} \setminus \{v\}$.
	We store
	\begin{eqnarray} \label{tw:forget}
		\DP[t,R,G,B,s] =  \max \{
		\DP[t',R,G\cup\{v\},B,s];
		\DP[t',R,G,B\cup\{v\},s]
		\}.
	\end{eqnarray}

	\Recc{tw:forget} follows from the definition that vertices in $v\in V_t \setminus Q_t$, depending on whether they are chosen or not, are either black or green with respect to~$(t,R,G,B,s)$-feasible sets.

	\proofpara{Join Node}
	Suppose that~$t\in T$ is a \emph{join node}, that is,~$t$ has two children~$t_1$ and~$t_2$ with~$Q_t = Q_{t_1} = Q_{t_2}$.
	We call two partitions $R_1\cup G_1\cup B_1$ and $R_2\cup G_2 \cup B_2$ of~$Q_t$ \emph{qualified} for $R\cup G\cup B$ if $R=(R_1\cup R_2) \setminus (G_1 \cup G_2)$ and $G = G_1\cup G_2$ (and consequently~$B = B_1 \cap B_2$). See~Figure~\ref{fig:PDD-joinColorings}.
	We store
	\begin{equation} \label{tw:join}
		\DP[t,R,G,B,s] = \max_{(\pi_1, \pi_2) \in{\cal Q},s'} ~ \DP[t_1,R_1,G_1,B_1,s'] + \DP[t_2,R_2,G_2,B_2,s-s'],
	\end{equation}
	where ${\cal Q}$ is the set of pairs of partitions $\pi_1 = R_1\cup G_1\cup B_1$ and $\pi_2 = R_2\cup G_2 \cup B_2$ that are qualified for $R \cup G \cup B$, and $s' \in [s]_0$.

	By~Figure~\ref{fig:PDD-joinColorings}, we observe that in \Recc{tw:join} we consider the correct combinations of~$R$,~$G$, and~$B$.

	\newcommand{\darkgreen}{green!50!black}
	\begin{figure}
		\begin{center}
			\begin{tabular}{c|c|c|c|}\cline{2-4}
				 & $B_1$ & \color{red}{$R_1$} & \color{\darkgreen}{$G_1$}\\\hline
				\multicolumn{1}{|l|}{$B_2$} & $B$ & \color{red}{$R$} & \color{\darkgreen}{$G$} \\\hline
				\multicolumn{1}{|l|}{\color{red}{$R_2$}} & \color{red}{$R$} & \color{red}{$R$} & \color{\darkgreen}{$G$} \\\hline
				\multicolumn{1}{|l|}{\color{\darkgreen}{$G_2$}} & \color{\darkgreen}{$G$} & \color{\darkgreen}{$G$} & \color{\darkgreen}{$G$} \\\hline
			\end{tabular} 
			
			\caption{This table shows the relationship between the three partitions $R_1\cup G_1\cup B_1$, $R_2\cup G_2 \cup B_2$ and $R\cup G\cup B$ in the case of a join node, when $R_1\cup G_1\cup B_1$ and $R_2\cup G_2 \cup B_2$ are qualified for $R\cup G\cup B$.
				The table shows which of the sets $R,G$, or~$B$ an element~$v \in Q_t$ will be in, depending on its membership in $R_1,G_1,B_1,R_2,G_2$, and $B_2$.
				For example if $v \in R_1$ and $v \in B_2$, then $v \in R$.}
			\label{fig:PDD-joinColorings}
		\end{center}
	\end{figure}
	
	\proofpara{Running Time}
	The table contains $\Oh(3^\tw \cdot nk)$ entries, as a tree decomposition contains $\Oh(n)$ nodes.
	Each leaf and forget node can be computed in linear time.
	An introduce node can be computed in $\Oh(2^\tw\cdot n)$~time.
	In a join node, considering only the qualified sets, $(\pi_1,\pi_2)$ already define $R$, $G$, and $B$.
	Thus, all join nodes can be computed in $\Oh(9^\tw \cdot nk)$~time,
	which is also the overall running time.
\end{proof}

\subsection{Hardness Results}
\paragraph{Max Leaf Number.}
Based on results of Faller~et~al.~\cite{faller}, in Corollary~\ref{cor:PDD-maxleaf} we show that \PDD is \NP-hard on instances in which the underlying undirected graph of the food-web has a max leaf number of~2.

\begin{corollary}
	\label{cor:PDD-maxleaf}
	\PDD is \NP-hard even if the food-web has max leaf number two.
\end{corollary}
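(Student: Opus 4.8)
The plan is to start from the \NP-hardness result of Faller et al.~\cite{faller}, which states that \PDD remains \NP-hard when every connected component of the food-web $\Food$ is a directed path (of length three), and to \emph{merge} these disjoint component-paths into a single undirected path without disturbing the reduction. Since the max leaf number of a connected path is exactly two, this would immediately yield the claim; the only non-routine point is to argue that the merging preserves yes/no answers.

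For the construction, let $\Instance = (\Tree,\Food,k,D)$ be such an instance with components $P_1,\dots,P_t$. Each undirected component-path has two endpoints (degree-one vertices). I would introduce a fresh \emph{junction taxon} $z_i$ for each $i\in[t-1]$ and add the two arcs $u_i z_i$ and $v_{i+1} z_i$, where $u_i$ is a chosen endpoint of $P_i$ and $v_{i+1}$ an endpoint of $P_{i+1}$, chaining the components as $\cdots u_i - z_i - v_{i+1} \cdots$. Because each $z_i$ has out-degree $0$, it is a sink, so adding these arcs leaves the in-neighbourhood of every original taxon unchanged; hence every source of $\Food$ remains a source and the viability requirement of each original taxon is unaffected, while the underlying undirected graph $\Food'$ becomes the single path obtained by concatenating the $P_i$ through the $z_i$, which has max leaf number two. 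For the tree I would scale every edge weight of $\Tree$ by a factor $M := t$, attach each $z_i$ as a new child of the root through a pendant edge of weight $1$ to obtain $\Tree'$, set the new threshold to $MD$, and keep the budget $k$.

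The correctness proof would split as usual. For the forward direction, a viable solution $S_0 \subseteq X$ of $\Instance$ (which contains no $z_i$) stays viable in $\Food'$ and satisfies $\PDsub{\Tree'}(S_0) = M\cdot \PD(S_0) \ge MD$, since the $z_i$-pendants lie on no root-to-$S_0$ path. For the converse, given a solution $S'$ of the new instance I would set $S := S' \setminus \{z_1,\dots,z_{t-1}\}$; removing a sink from a viable set keeps it viable (a sink is never the prey of any vertex, so no remaining taxon loses its support), and on the path vertices $\Food'$ and $\Food$ coincide, so $S$ is viable in $\Food$ with $|S| \le k$. A short check confirms that $\Tree'$ is still a valid phylogenetic tree (the root keeps out-degree at least two and every leaf is still a taxon) and that $\Food'$ is acyclic (the new arcs enter only sinks).

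The main obstacle, and the reason for scaling the weights, is that a junction $z_i$ can be saved for a small diversity bonus whenever one of its two in-neighbours is already selected, so a naive merge could be cheated by harvesting up to $t-1$ extra units of diversity. Writing $b := |S' \cap \{z_1,\dots,z_{t-1}\}|$, the key identity is $\PDsub{\Tree'}(S') = M\cdot\PD(S) + b$ with $0 \le b \le t-1 < M$ (the pendant edges of the saved $z_i$ contribute exactly $b$ and overlap nothing). From $\PDsub{\Tree'}(S') \ge MD$ I would then deduce $M\cdot \PD(S) \ge MD - b > M(D-1)$, hence $\PD(S) \ge D$ because $\PD(S)$ is an integer, so $S$ solves $\Instance$. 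With this granularity argument the equivalence goes through, completing the reduction and thereby the corollary.
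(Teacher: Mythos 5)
Your proof is correct, and it takes a genuinely different route from the paper's at the level of the junction gadget. Both arguments start from Faller et al.'s hardness for food-webs whose connected components are directed paths, concatenate the components into one undirected path via fresh taxa attached as weight-$1$ children of the root, and scale the original tree weights by a factor exceeding the total diversity the fresh taxa can contribute. The paper, however, joins consecutive components by a \emph{directed} chain of $N > |X|$ new taxa $y_{i,1},\dots,y_{i,N}$ running from the source endpoint $x_{i-1,0}$ of one component \emph{into} the sink endpoint $x_{i,2}$ of the next. This gives $x_{i,2}$ a second in-neighbour and thus a new route to viability, which the paper must neutralize by an exchange argument (if $x_{i,2}\in S$ but $x_{i,1}\notin S$, viability forces all $N$ chain taxa into $S$, and swapping them for $\{x_{i,0},x_{i,1}\}$ only increases diversity since $M > Nq$) before finishing with essentially the same divisibility/granularity computation you use. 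Your gadget --- a single junction taxon $z_i$ with \emph{both} arcs directed into it --- keeps every original in-neighbourhood intact, so viability in $\Food'$ restricts verbatim to viability in $\Food$, the exchange step disappears entirely, and the identity $\PDsub{\Tree'}(S') = M\cdot\PD(S) + b$ with $0 \le b \le t-1 < M$ plus integrality of $\PD(S)$ closes the converse. This buys a visibly simpler construction (one new vertex per junction rather than $N > |X|+1$, one scaling constant rather than two constants with $Nq < M$) and a shorter correctness proof. The only point you should make explicit is that for each intermediate component the entry endpoint $v_i$ and exit endpoint $u_i$ must be chosen as the two \emph{distinct} endpoints of that component, so that the underlying undirected graph of $\Food'$ is a single simple path and hence has max leaf number two; your phrasing leaves this implicit.
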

\begin{proof}
	When regarding the reduction of Faller et al.~\cite[Theorem 5.1.]{faller} from \VC to \PDD we observe three things.
	The vertex $x$ has been added to ensue that $y$ functions as a root in their unrooted definition of the problem.
	Therefore, we do not need $x$ for our definition of \PDD, leaving paths with a length of~3.
	Further, some edges in the reduction have a weight of 0 but it would have not caused problems setting them to 1 and multiplying each other edge with a big constant.
	
	\proofpara{Reduction}
	Let $\Instance = (\Tree, \Food, k, D)$ be an instance of \PDD in which the set of taxa is~$X$ and each connected component of \Food is a path with a length of~3.
	We construct an instance~$\Instance' = (\Tree', \Food', k, D')$ of \PDD as follows.
	Let $P^{(0)},P^{(1)},\dots,P^{(q)}$ be the connected components of \Food where $P^{(i)}$ contains the taxa~$\{x_{i,0},x_{i,1},x_{i,2}\}$ and edges~$x_{i,0}x_{i,1}$ and $x_{i,1}x_{i,2}$.
	Let~$N$ and~$M$ be constants bigger than $|X|+1$ and such that $N \cdot q < M$.
	
	Let $Y$ be a set of new taxa~$y_{i,j}$ for~$i\in [q]$ and~$j\in [N]$.
	Our new set of taxa is~$X \cup Y$.
	Multiply each edge-weight in \Tree with $M$ and add the taxa $Y$ as children of the root~$\rho$ to obtain $\Tree'$.
	Set the weight of the edges $\rho y_{i,j}$ to 1 for each $i\in [q]$, and~$j\in [N]$.
	To obtain $\Food'$, we add~$Y$ as vertices to \Food and add edges~$y_{i,j}y_{i,j+1}$,~$x_{i-1,0}y_{i,1}$ and~$y_{i,N}x_{i,2}$ for each~$i\in [q]$, and each~$j\in [N]$.
	Figure~\ref{fig:PDD-maxleaf} depicts an example of how to create~$\Food'$.
	Finally, we set~$k' = k$ and set~$D' := D\cdot M$.
	
	\proofpara{Correctness}
	The reduction can be computed in polynomial time.
	The underlying graph of $\Food'$ is a path and therefore has a max leaf number of two.
	Any solution for~\Instance is also a solution for~$\Instance'$.
	
	Conversely, let $\Instance'$ be a \yes-instance of \PDD with solution $S \subseteq X \cup Y$.
	Define~$Y^{(i)}$ to be the set of the taxa $y_{i,j}$ for $j\in {[N]}$.
	If $x_{i,2}$ is in $S$ but $x_{i,1}$ is not in~$S$ then, because~$S$ is viable, we know that $Y^{(i)} \subseteq S$.
	Observe $\PD(Y^{(i)}) = N$ and~$\PD(x_{i,1}) \ge M$.
	Thus, also~$S' := (S \setminus Y^{(i)}) \cup \{x_{i,0},x_{i,1}\}$ is a solution for~$\Instance'$.
	Therefore, we assume that if $x_{i,j}$ is in~$S$ then also $x_{i,j-1}$ for $j\in \{1,2\}$.
	Define sets $S_X = S\cap X$ and $S_{Y} = S\cap Y$.
	Then, $\PD(S_X)$ is dividable by $M$ and~$\PD(S_{Y}) \le q\cdot N < M$.
	We conclude $\PD(S_X) \ge M\cdot D$ and $S_X$ is a solution for instance \Instance of \PDD.
\end{proof}
\begin{figure}[t]
	\centering
	\begin{tikzpicture}[scale=0.7,every node/.style={scale=0.6}]
		\tikzstyle{txt}=[circle,fill=white,draw=white,inner sep=0pt]
		\tikzstyle{ndeg}=[circle,fill=blue,draw=black,inner sep=2.5pt]
		\tikzstyle{ndeo}=[circle,fill=orange,draw=black,inner sep=2.5pt]
		\tikzstyle{dot}=[circle,fill=white,draw=black,inner sep=1.5pt]
		
		\foreach \i in {0,...,2}
		\node[ndeg] (a\i) at (0,\i) {};
		\foreach \i in {0,...,2}
		\node[ndeg] (b\i) at (5,\i) {};
		\foreach \i in {0,...,2}
		\node[ndeg] (c\i) at (10,\i) {};
		\foreach \i in {0,...,2}
		\node[ndeg] (d\i) at (16,\i) {};
		
		\node[txt] at (0.5,0.5) {$P^{(0)}$};
		\node[txt] at (5.5,0.5) {$P^{(1)}$};
		\node[txt] at (10.5,0.5) {$P^{(2)}$};
		\node[txt] at (16.5,0.5) {$P^{(q)}$};
		
		\draw[thick,arrows = {-Stealth[length=5pt]}] (a0) to (a1);
		\draw[thick,arrows = {-Stealth[length=5pt]}] (a1) to (a2);
		\draw[thick,arrows = {-Stealth[length=5pt]}] (b0) to (b1);
		\draw[thick,arrows = {-Stealth[length=5pt]}] (b1) to (b2);
		\draw[thick,arrows = {-Stealth[length=5pt]}] (c0) to (c1);
		\draw[thick,arrows = {-Stealth[length=5pt]}] (c1) to (c2);
		\draw[thick,arrows = {-Stealth[length=5pt]}] (d0) to (d1);
		\draw[thick,arrows = {-Stealth[length=5pt]}] (d1) to (d2);
		
		\foreach \i in {0,...,6}
		\node[ndeo] (p\i) at (1+\i/2,1) {};
		\foreach \i in {0,...,6}
		\node[ndeo] (q\i) at (6+\i/2,1) {};
		
		\foreach \i in {0,...,5}
		\draw[arrows = {-Stealth[length=3pt]}] (1.4+\i/2,1) to (p\i);
		\foreach \i in {0,...,5}
		\draw[arrows = {-Stealth[length=3pt]}] (6.4+\i/2,1) to (q\i);
		
		\node[ndeo] (r0) at (11,1) {};
		\node[ndeo] (r6) at (15,1) {};
		
		\foreach \i in {0,...,2}
		\node[dot] at (12.5+\i/2,1) {};
		
		\draw[arrows = {-Stealth[length=3pt]}] (p0) .. controls +(left:1cm) and +(right:2cm) .. (a2);
		\draw[arrows = {-Stealth[length=3pt]}] (q0) .. controls +(left:1cm) and +(right:2cm) .. (b2);
		\draw[arrows = {-Stealth[length=3pt]}] (r0) .. controls +(left:1cm) and +(right:2cm) .. (c2);
		
		\draw[arrows = {-Stealth[length=3pt]}] (b0) .. controls +(left:2cm) and +(right:1cm) .. (p6);
		\draw[arrows = {-Stealth[length=3pt]}] (c0) .. controls +(left:2cm) and +(right:1cm) .. (q6);
		\draw[arrows = {-Stealth[length=3pt]}] (d0) .. controls +(left:2cm) and +(right:1cm) .. (r6);
	\end{tikzpicture}
	\caption{This figure shows an example of the food-web $\Food'$ we reduce to in Corollary~\ref{cor:PDD-maxleaf}.
		The vertices of $X$ are blue and the vertices in $Y$ are orange.
		Here, (despite~$7<|Y|+1$) we use $N=7$.
	}
	\label{fig:PDD-maxleaf}
\end{figure}
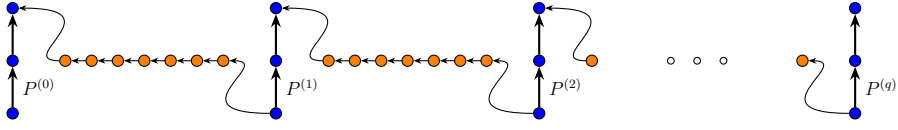%

We observe that the construction in the previous corollary creates a food-web with an anti-chain of size $2q+1$.
We, therefore, ask whether \PDD is still \NP-hard if the DAG-width of the food-web is a constant.

\paragraph{Distance to Source-Dominant.}
In this paragraph, we consider \emph{source-dominant} food-webs.
A food-web is source-dominant, if there is a vertex~$v$, which is a source and every other vertex is a predator of~$v$.
Observe that if a directed acyclic graph~$G$ has a clique graph as underlying undirected graph, then~$G$ is source-dominant.

In Theorem~\ref{thm:PDD-dist-cluster-FPT}, we showed that \sPDD is \FPT when parameterized with distance to cluster.
The algorithm can easily be adopted to the distance to stars.
This raises the question of whether \sPDD is also \FPT with respect to the distance to source-dominant.
Here, we shortly want to discuss that such a result is unlikely.

First, if the food-web is a source-dominant, then it is necessary to save the only source, after which all other taxa can be saved without further restrictions.
Therefore, after saving the source, we can run the greedy-algorithm for \MPD.

Now, let~$\Instance = (\Tree, \Food, k, D)$ be an instance of \sPDD.
By Observation~\ref{obs:PDD-one-source}, we can assume that~$\Food$ has a single source~$s$.
For a big integer~$N$, we add~$N$ to~$\w(\rho x)$ for each taxon~$x\in X$.
Then, we add a new taxon~$\star$ and add an edge from~$\star$ to each taxon~$x\in X \setminus \{s\}$.
We finally set~$\w(\rho \star) := 1$ and~$D' := D + N\cdot k$ and obtain an instance~$\Instance'$.
By the construction, it is not possible to save~$\star$ and so~$S\subseteq X$ is a solution for~$\Instance$ if and only if~$S$ is a solution for~$\Instance'$.
Also, observe that~$\Food' - \{s\}$ is source-dominant.

We conclude the following.

\begin{proposition}
	\label{prop:PDD-DS}
	~
	\begin{propEnum}
		\item \PDD can be solved in polynomial time if the food-web is source-dominant.
		\item \sPDD is \NP-hard on instances in which the food-web has a distance to source-dominant of~$1$.
	\end{propEnum}
\end{proposition}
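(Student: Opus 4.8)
For part (a), the plan is to show that on a source-dominant food-web the viability constraint collapses to a single membership requirement, and then to invoke the polynomial-time solvability of \MPD. First I would observe that a source-dominant food-web $\Food$ has a \emph{unique} source, namely the dominating vertex $v$: every other taxon has $v$ as prey and hence positive in-degree. I would then prove that a nonempty set $S \subseteq X$ is viable if and only if $v \in S$. Indeed, if $v \in S$ then every other taxon of $S$ has the prey $v \in S$ and $v$ itself is a source of $\Food$, so $\sourcespersonal{\Food[S]} = \{v\} \subseteq \sources$; conversely, any nonempty induced subDAG $\Food[S]$ has a source, which by viability must lie in $\sources = \{v\}$, forcing $v \in S$. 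Consequently, \PDD on a source-dominant food-web asks for a set of size at most $k$ containing $v$ with $\PD(S) \ge D$. Maximizing $\PD$ over such sets reduces to a plain \MPD instance: once $v$ is forced in, the contribution of the edges on the path from the root to $v$ is a fixed additive constant $\PD(\{v\})$, so setting these weights to zero and solving \MPD for $k-1$ further taxa on the resulting tree (with $v$ suppressed) yields the answer, and this is polynomial-time solvable by the greedy algorithm~\cite{steel,Pardi2005}.

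For part (b), the plan is a polynomial-time reduction from \sPDD, which is \NP-hard~\cite{faller}, to \sPDD on instances whose food-web has distance exactly one to source-dominant. Given an instance $\Instance = (\Tree,\Food,k,D)$ of \sPDD, I would first apply Observation~\ref{obs:PDD-one-source} (which preserves the star shape of $\Tree$) to assume that $\Food$ has a single source $s$. Fixing a large integer $N$ to be chosen below, I would add $N$ to $\w(\rho x)$ for every $x \in X$, introduce a new taxon $\star$ as a further leaf with $\w(\rho \star) := 1$, add to the food-web the edges $\star x$ for all $x \in X \setminus \{s\}$, and set $D' := D + N k$ and $k' := k$. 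Let $\Instance'$ denote the resulting instance and $\Food'$ its food-web.

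The correctness hinges on two points. First, $\Food' - \{s\}$ is source-dominant with dominating source $\star$, whereas $\Food'$ itself has the two distinct sources $s$ and $\star$ (no edge towards $s$ is added) and is therefore not source-dominant; hence the distance to source-dominant of $\Food'$ is exactly one. Second, since $\Tree'$ is a star, the phylogenetic diversity in $\Instance'$ of any $S \subseteq X$ equals $\PD(S) + N\,|S|$, whereas $\star$ contributes only $1$. Choosing $N$ larger than $\PD(X)$, any solution $S'$ of $\Instance'$ must avoid $\star$, since including $\star$ leaves at most $k-1$ taxa of $X$ and would force $\PD(S'\cap X) \ge D + N - 1 > \PD(X)$, a contradiction. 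Using Observation~\ref{obs:PDD-solution-size} to assume solutions of size exactly $k$, and noting that the viability of a set $S \subseteq X$ is identical in $\Food$ and $\Food'$ (as $\star \notin S$ provides no prey inside $S$), I would conclude that $S$ is a solution for $\Instance$ if and only if $S$ is a solution for $\Instance'$.

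The individual steps are routine; the points demanding care are (i) choosing $N$ so that it simultaneously forbids $\star$ in every solution yet has encoding length polynomial in the input, for which $N := \PD(X)+1 \le n\cdot \max_\w + 1$ works, and (ii) checking that the distance to source-dominant is \emph{exactly} one rather than zero, which is precisely why the original source $s$ is deliberately left undominated by $\star$ (otherwise part~(a) would make the instance polynomial-time solvable). For part (a), the only subtlety is justifying that forcing $v$ into the solution preserves the exact optimality of the greedy \MPD algorithm, which is handled cleanly by the weight-zeroing reformulation above.
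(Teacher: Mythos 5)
Your proposal is correct and takes essentially the same route as the paper: part (a) forces the unique source into the solution and then runs the greedy algorithm for \MPD, and part (b) is exactly the paper's reduction (assume a single source $s$ via Observation~\ref{obs:PDD-one-source}, add $N$ to every weight $\w(\rho x)$, introduce $\star$ with $\w(\rho\star)=1$ and edges to $X\setminus\{s\}$, and set $D' = D + N\cdot k$). You merely make explicit two details the paper leaves implicit, namely the concrete choice $N = \PD(X)+1$ guaranteeing that no solution contains $\star$, and the verification that the distance to source-dominant is exactly one because $s$ remains an undominated second source.
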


\section{Discussion}
\label{sec:PDD-discussion}
In this chapter, we studied the algorithmic complexity of \PDD and \sPDD with respect to various parameterizations.
\PDD is \FPT when parameterized with the solution size plus the height of the phylogenetic tree.
Consequently, \PDD is \FPT with respect to~$D$, the threshold of diversity.
However, both problems, \PDD and \sPDD, are unlikely to admit a kernel of polynomial size.
Further, unlike some other problems on maximizing phylogenetic diversity~\cite{MAPPD,TimePD}, \PDD probably does not admit an \FPT-algorithm with respect to~\Dbar, the acceptable loss of phylogenetic diversity.

We also considered the structure of the food-web.
Among other results, we showed that \PDD remains \NP-hard even if the food-web is a cluster graph.
On the positive side, we proved that \PDD is \FPT with respect to the number of vertices that need to be removed from the food-web to obtain a co-cluster.
We showed further that \sPDD is \FPT with respect to the treewidth of the food-web and therefore can be solved in polynomial time if the food-web is a tree.

Unsurprisingly,
several interesting questions remain open after our examination of \PDD and \sPDD.
Arguably the most relevant one is whether \PDD is \FPT with respect to~$k$, the size of the solution.
Also, it remains open whether \PDD can be solved in polynomial time if each connected component in the food-web contains at most two vertices.

Clearly, further structural parameterizations can be considered.
We only considered structural parameters which consider the underlying graph.
But parameters which also consider the orientation of edges, such as the longest anti-chain, could give a better view on the structure of the food-web than parameters which only consider the underlying graph.

Liebermann et al.~\cite{lieberman} introduced and analyzed weighted food-webs.
Such a weighted model may provide a more realistic view of a species’ effect on and interaction with other species~\cite{cirtwill}.
Maximizing phylogenetic diversity with respect to a weighted food-web in which one potentially needs to save several prey per predator would be an interesting generalization for our work and has the special case in which one needs to save all prey for each predator.

\chapter{Phylogenetic Diversity in Networks}
\label{ctr:Networks}

\section{Introduction}
Phylogenetic Diversity (PD) as originally proposed by Faith is defined for phylogenetic trees.
Assuming it is not possible to preserve all threatened species, e.g. due to limited resources, we would like to find a subset of species that can be preserved, for which the overall diversity is maximized.

However, when the evolution of the species under interest is also shaped by reticulation events
such as hybrid speciation, lateral gene transfer or recombination,
then the picture is no longer as rosy as for \MPD.
In reticulation events, a single species may inherit genetic material and, thus, features from multiple direct ancestors and its evolution should be represented by a phylogenetic network~\cite{phylogeneticNetworks} rather than a tree.
Several ways of extending the notion of PD for phylogenetic networks have been proposed~\cite{WickeFischer2018,bordewichNetworks}, of which two are called \apPD and \NetPD.

Unfortunately, unlike in phylogenetic trees, the optimization of phylogenetic diversity on phylogenetic trees is \NP-hard for both cases~\cite{bordewichNetworks}.
For this reason, we study the problem from the perspective of parameterized complexity.

\paragraph*{Related Work.}
All-paths phylogenetic diversity as a measure on networks was first introduced in~\cite{WickeFischer2018} under the name `phylogenetic subnet diversity'.
\NetPD has been defined in~\cite{bordewichNetworks}.

Phylogenetic diversity forms the basis of the Shapley Value, a measure that describes how much a \emph{single} species contributes to overall biodiversity. The definition of the Shapley Value involves the phylogenetic diversity of every possible subset of species, and so is difficult to calculate directly.
However, the Shapley Value is equivalent to the Fair Proportion Index on phylogenetic trees~\cite{redding2006incorporating,fuchs2015equality}, and it can be calculated in polynomial time.
In the case of phylogenetic networks, it was shown that this result also extends to Shapley Value based on the all-paths phylogenetic diversity measure.
This is in contrast to the \NP-hardness result---while it is easy to determine the individual species that contributes the most phylogenetic diversity across all sets of species, it is \NP-hard to find a \emph{set} of species for which \apPD or~\NetPD is maximal~\cite{bordewichNetworks}.

The computational complexity of \MAPPD was first studied in~\cite{bordewichNetworks}, where the authors showed that the problem is \NP-hard and cannot be approximated in polynomial time with approximation ratio better than $1-\frac{1}{e}$ unless \PeqNP, but is polynomial-time solvable on level-1-networks.
As \MaxNPD is a generalization of \MAPPD, the hardness result naturally extend.
Moreover, \MaxNPD remains \NP-hard even for the restricted class of phylogenetic networks called \emph{normal} networks~\cite{bordewichNetworks}.
\MaxNPD, on a positive side, is \FPT when parameterized by the number of reticulations~\cite{MaxNPD}, that is the number of vertices with at least two incoming edges.

Extensions of phylogenetic diversity to phylogenetic networks have also been considered, both for splits systems~\cite{spillner,PDSplitSystems,CircularSplitSystems}.

\paragraph*{Our Contribution and Structure of the Chapter.}
We study several parameterizations of the problems \MAPPDlong (\MAPPD) and \MaxNPD.
We formally define both problems in the next section.
In Section~\ref{sec:Net-wpSC} we establish an equivalence between the solution size para\-meter\-i\-za\-tions of \MAPPD and a generalization of \SC that we call \wpSClong.
Consequently, \MAPPD is \Wh{2}-hard.
We also show that \MAPPD is $\Wh 1$-hard with respect to the minimum number of taxa that need to go extinct.
On the positive side, we show in Section~\ref{sec:Net-FPTdiversity} that \MAPPD is fixed-parameter tractable with respect to $D$, the threshold of phylogenetic diversity and also with respect to the acceptable loss in phylogenetic diversity.
Afterward, we turn to structural parameters.
In Section~\ref{sec:Net-FPTtreelike} we give single-exponential fixed-parameter algorithms for \MAPPD with respect to the number of reticulations in the network, and with respect to the treewidth of the underlying graph of the network.
In the case of reticulations, this algorithm is asymptotically tight under \SETH.

Finally, in Section~\ref{sec:Net-reduction-MaxNPD}, we study \MaxNPD and show that \MaxNPD is \NP-hard even on networks with a level of~1.
This result answers an open question from~\cite{bordewichNetworks}.

\begin{table}[t]\centering
	\caption{An overview over the parameterized complexity results for \MAPPD and \MaxNPD.}
	
		\myrowcols
		\begin{tabular}{l ll ll}
			\hline
			Parameter & \multicolumn{2}{c}{\MAPPDlong} & \multicolumn{2}{c}{\MaxNPD} \\
			\hline
			Budget $k$ & \Wh{2}-hard & Thm.~\ref{thm:Net-PSC->MAPPD} & \Wh{2}-hard & Thm.~\ref{thm:Net-PSC->MAPPD}\\
			Diversity $D$ & \FPT & Cor.~\ref{cor:Net-D} & \textit{open} & \\
			Species-loss $\kbar$ & \Wh{1}-hard & Thm.~\ref{thm:Net-kbar} & \Wh{1}-hard & Thm.~\ref{thm:Net-kbar}\\
			Diversity-loss $\Dbar$ & \FPT & Thm.~\ref{thm:Net-Dbar} & \textit{open} & \\
			\hline
			Number of & \FPT & Thm.~\ref{thm:Net-ret} & \FPT & \cite{MaxNPD}\\
			\rowcolor{white}
			reticulations~\ret & poly-kernel \textit{open} & & poly-kernel \textit{open} & \\
			\rowcolor{gray!10!white}
			Number of & \FPT & Thm.~\ref{thm:Net-ret} & \FPT & \cite{MaxNPD}\\
			ret.-edges~\eret & poly-kernel & Thm.~\ref{thm:Net-kernel} & poly-kernel \textit{open} & \\
			Level & \FPT & Thm.~\ref{thm:Net-tw} & \NP-hard for 1 & Thm.~\ref{thm:Net-level-1}\\
			Treewidth & \FPT & Thm.~\ref{thm:Net-tw} & \NP-hard for 2 & Thm.~\ref{thm:Net-level-1}\\
			\hline
		\end{tabular}
	\label{tab:Net-results}
\end{table}

\section{Preliminaries}
\label{sec:Net-prelim}
In this section, we present the formal definition of the problems, and the parameterization.
We further start with some preliminary observations.

\subsection{Phylogenetic Diversity in Phylogenetic Networks}
\label{sec:Net-PD-def}
We assume that every edge $e$ in a network $\Net = (V,E)$ has an associated \emph{weight} $\w(e)$, which is a positive integer.

\paragraph*{All-Paths Phylogenetic Diversity.}
For a set of taxa $Y$, an edge $e$ is \textit{affected by~$Y$} if $\off(e)\cap Y \ne \emptyset$ and \textit{strictly affected by $Y$} if $\off(e)\subseteq Y$.
The sets $T_Y$ and $E_Y$ are the strictly affected and affected edges by $Y$, respectively.
For a set of taxa $Y$, the \textit{all-paths phylogenetic diversity $\apPD(Y)$ of $Y$} is

\begin{equation}
	\label{eqn:apPDdef}
	\apPD(Y) := \sum_{e\in E_Y} \w(e).
\end{equation}
That is, $\apPD(Y)$ is the total weight of all edges $uv$ in $\Tree$ so that there is a path from~$v$ to a vertex in~$Y$. 
If \Net is a phylogenetic tree, then this definition coincides with the definition of phylogenetic diversity in Equation~(\ref{eqn:PDdef}).

\paragraph*{Network Diversity.}
The $\NetPD$-measure of phylogenetic diversity in networks allows the case that reticulations may not inherit all of the features from every parent.
This is modeled via an \emph{\iprop} $p(e) \in [0,1]$ on each reticulation edge~$e = uv$.
Here, $p(e)$ represents the expected proportion of features present in $u$ that are also present in $v$; or equivalently, $p(e)$ is the probability that a feature in $u$ is inherited by $v$.
We assume these probabilities are distributed independently and identically at random.
Non-reticulation edges can be considered as having an \iprop of~$1$.

For a subset of taxa $Z \subseteq X$,
the measure $\NetPD(Z)$ represents the expected number of distinct features appearing in taxa in~$Z$~\cite{bordewichNetworks}.
For each edge~$uv$, this measure is obtained by multiplying the number~$\w(uv)$ of features developed on the branch~$uv$ (which is assumed to be proportional to the length of the branch)
with the \emph{probability}~$\gam{Z}(uv)$ that a random feature appearing in $v$ or developed on $uv$ will survive when preserving~$Z$.

Formally, we define~$\gam{Z}(uv)$ as follows:

\begin{definition}
	\label{def:gamma}    
	Given a phylogenetic~$X$-network $\Net = (V,E)$ with an edge weight function~$\w: E\to\N$,
	\iprops $p: E \to [0,1]$
	and a set of taxa $Z\subseteq X$.
	We define~$\gam{Z}: E \to [0,1]$ for each edge~$uv\in E$ as follows:
	\begin{itemize}
		\item If $v$ is a leaf, then $\gam{Z}(uv) := 1$ if $v \in Z$, and $\gam{Z}(uv) := 0$, otherwise.\\
		\textbf{Intuition}: The features of $v$ survive if and only if $v$ is preserved by $Z$.
		\item If $v$ is a reticulation with outgoing edge~$vw$, then set~$\gam{Z}(uv) := p(uv)\cdot\gam{Z}(vw)$.\\
		\textbf{Intuition}: The features of $v$ are a mixture of features of its parents and
		the features of $u$ have a certain \iprop $p(uv)$ of being included in this mix and, thereby, survive in preserved descendants of $x$.
		\item If $v$ is a tree vertex with children $x_1,\dots,x_\ell$.\\
		We set~$\gam{Z}(uv) := 1 - \prod_{i=1}^\ell (1-\gam{Z}(vx_i))$.
		In the special case that $v$ has two children, $x_1$ and~$x_2$, this is equivalent to $\gam{Z}(vx_1) + \gam{Z}(vx_2) - \gam{Z}(vx_1)\cdot\gam{Z}(vx_2)$.\\
		\textbf{Intuition}: To lose a feature of $v$, it has to be lost in all children of $v$, which are assumed to be independent events, since all copies of the feature developed independently.
	\end{itemize}
\end{definition}
Further, we only consider values of $p$ on edges incoming to leaves or reticulations, so we may restrict the domain of $p$
to those edges.
%
%
We now define the measure~$\NetPD^{(p)}(Z)$ for a subset of taxa $Z$ as follows:
\begin{equation}
	\label{eqn:NetPDdef}
	\NetPD^{(p)}(Z) := \sum_{e \in E}\w(e)\cdot \gam{Z}(e).
\end{equation}

We will omit the superscript~$(p)$ usually, as we do not operate with several \iprops~$p$s simultaneously.

Observe that $\gam{Z}(e)$ and $\NetPD(Z)$ are monotone,
that is, $\gam{Z'}(e) \leq \gam{Z}(e)$ and $\NetPD^p(Z') \leq \NetPD^p(Z)$
for all $Z' \subseteq Z \subseteq X$.

Observe that if a vertex has no reticulation descendants,
then $\gam[p]{Z}(e) = 1$\lb
if~$\off(e) \cap Z \neq \emptyset$, and otherwise $\gam[p]{Z}(e) = 0$.
Therefore, $\NetPD$ coincides\lb
 with~$\apPD$ if $\Net=\Tree$ is a tree.

\subsection{Problem Definitions and Parameterizations.}
This chapter's main object of study are the following two problems, introduced in~\cite{WickeFischer2018,bordewichNetworks}:

\problemdef{\MAPPDlong (\MAPPD)}
{A phylogenetic $X$-network $\Net = (V,E,\lambda)$ and two integers $k$ and $D$}
{Is there a subset $Z\subseteq X$ with a size of at most $k$ and~$\apPD(Z) \ge D$}

\problemdef{\MaxNPD}
{A phylogenetic $X$-network $\Net = (V,E,\lambda)$ with
	\iprops~$p:E\to[0,1]$,
	and two integers $k$ and $D$}
{Is there a subset $Z\subseteq X$ with a size of at most $k$ and $\NetPD(Z) \geq D$}

In Section \ref{sec:Net-wpSC}, we show that there is a strong connection between \MAPPD and the problem \wpSClong, which we define as follows.
\problemdef{\wpSClong (\wpSC)}
{A universe $\mathcal{U}$, a family $\mathcal{F}$ of subsets over $\mathcal{U}$,
	an integer weight $\w(u)$ for each item $u\in\mathcal U$,
	and two integers $k$ and $D$}
{Are there sets $F_1,\dots,F_k\in \mathcal F$ such that
	the sum of the weights of the elements in $L := \bigcup_{i=1}^k F_i$ is at least $D$}
The condition in \wpSC can be formalized as follows: $\sum_{u\in L} \w(u) \ge D$.
Observe, \textsc{Set Cover} is a special case of \wpSC with $D = \w_\Sigma(\mathcal{U})$.

\paragraph{Parameters.}
We examine \MAPPD within the framework of parameterized complexity.
In addition to the parameters $k$ and $D$, that are the number of saved taxa and the threshold of preserved phylogenetic diversity,
we also study the dual parameters which are the minimum number of species that will go extinct $\kbar := |X|-k$ and the acceptable loss of phylogenetic diversity $\Dbar := \apPD(X) - D$.
By \ret, we denote the number of reticulations in \Net, and by \twN we denote the treewidth of the underlying undirected graph of \Net.
By \eret, we denote the number of reticulation-edges that need to be removed such that~\Net is a tree.
More formally, $\eret := \sum_{r \in R} (\deg^-(r) -1) = |E| - |V| + 1$, where~$R$ is the set of reticulations of~\Net and~$\deg^-(v)$ is the in-degree of a vertex~$v$.
By $\max_\w$, we denote the biggest weight of an edge.

Further, we show that \MaxNPD is \NP-hard on level-1-networks.
Recall that the level of a network \Net is the maximum reticulation number of a subgraph~$\Net[V']$ for some~$V' \subseteq V(\Net)$ where we require that the underlying undirected graph of~$\Net[V']$ is biconnected.

In this section, we use the convention that~$n$ is the number of vertices in the network and~$m$ is the number of edges in the network.
Unlike in trees, we can not assume~$n\in \Oh(|X|)$.

\paragraph*{Binary Networks.}
A phylogenetic $X$-network is called \textit{binary} if the root has a degree of~2, each leaf has a degree of~1, and each other vertex has a degree of~3.
In this chapter, we do not assume networks to be binary; in particular, we allow tree vertices to have an in-degree and an out-degree of~$1$. 
Bordewich et~al.~\cite{bordewichNetworks} required that the given network \Net is binary.
In the following, we show that algorithmically, there is hardly any difference for \MAPPD.
\begin{lemma}
	\label{lem:Net-degree}
	When given an instance $(\Net,k,D)$ of \MAPPD, in $\Oh(|E|)$ time an equivalent instance~$(\Net',k',D')$ of \MAPPD with a binary network $\Net'$, $\twwithoutN_{\Net'}=\twN$, and $|E'| \le 2|E|$ can be computed.
\end{lemma}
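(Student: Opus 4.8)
The goal is to transform an arbitrary network into a binary one while preserving the answer to \MAPPD and controlling the key parameters. A phylogenetic network can deviate from being binary in three ways: the root may have out-degree other than $2$, a tree vertex may have out-degree larger than $2$ (or out-degree $1$), and a reticulation may have in-degree larger than $2$. The plan is to handle each of these locally by subdividing and expanding vertices, inserting weight-$0$ edges where new structural edges are forced.

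First I would treat high-degree tree vertices. If a tree vertex $v$ has out-degree $t > 2$ with children $w_1,\dots,w_t$, I replace the ``star'' below $v$ by a caterpillar of new tree vertices $v = v_1, v_2, \dots, v_{t-1}$, where $v_i$ has children $w_i$ and $v_{i+1}$ (and $v_{t-1}$ has children $w_{t-1}$ and $w_t$). Crucially, the original edges $v w_i$ keep their weights $\w(v w_i)$, while the new internal edges $v_i v_{i+1}$ receive weight $0$. Since $\apPD$ only sums weights of affected edges (Equation~(\ref{eqn:apPDdef})), the weight-$0$ internal edges contribute nothing, so for every set $Z$ the affected edges carrying positive weight are exactly the original ones, and $\apPD$ is unchanged. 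The analogous expansion handles a reticulation of in-degree $s > 2$ by a chain of in-degree-$2$ reticulations, again inserting weight-$0$ edges; here one must check that the ``$\off$'' relation is preserved, i.e.\ that the offspring reachable below each original edge are the same. A tree vertex of out-degree $1$ can simply be suppressed or given a weight-$0$ pendant adjustment, and the root is handled by adding at most one subdivision so that it has out-degree exactly $2$.

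The parameter bookkeeping is the next step. For the new instance I set $k' := k$ and $D' := D$, since the taxon set $X$ is untouched and $\apPD$ values are preserved; thus $(\Net,k,D)$ is a \yes-instance if and only if $(\Net',k',D')$ is. For the edge count, each expansion of a vertex of degree $d$ introduces $O(d)$ new vertices and edges, and summing $\sum_v \deg(v) = 2|E|$ over all vertices gives a total blow-up bounded by a constant times $|E|$; a careful accounting yields $|E'| \le 2|E|$. The running time is linear because each vertex is processed once and the work per vertex is proportional to its degree. The treewidth claim $\twwithoutN_{\Net'} = \twN$ I would argue by noting that subdividing edges never changes treewidth, and replacing a vertex by a path (caterpillar spine) of copies is a standard operation that does not increase treewidth: one can extend any tree decomposition of $\Net$ by routing the new path vertices through the bags already containing the expanded vertex, and conversely contracting the path recovers a decomposition of $\Net$ of the same width (as long as $\twN \ge 1$, which holds since $\Net$ has at least one edge).

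I expect the main obstacle to be the treewidth equality rather than the correctness of $\apPD$-preservation, which is essentially immediate once weight-$0$ edges are used. Showing $\twwithoutN_{\Net'} = \twN$ requires verifying both inequalities: that expansion does not raise the treewidth (the constructive direction, extending bags along the new caterpillar spines and reticulation chains) and that it does not lower it (the original network is a topological minor, or can be recovered by edge contractions, and treewidth is minor-monotone). The subtlety is that my vertex-expansion is not a pure subdivision but introduces branching caterpillars, so I would need to confirm that the inserted spine can always be accommodated in existing bags without enlarging the largest bag; this follows because every new vertex lies ``between'' an original vertex and its children, all of which can be kept together in a bag of the original decomposition of width $\twN \ge 1$. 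Handling the degenerate case $\twN = 0$ (a network with no edges) separately avoids an off-by-one issue, but such instances are trivial.
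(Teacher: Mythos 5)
Your construction is essentially the paper's (expand high-degree vertices into chains/caterpillars, suppress degree-2 vertices, keep $k'=k$), but it has one concrete flaw: the weight-$0$ edges you insert are not allowed. The paper defines a phylogenetic network with an edge-weight function $\w: E \to \mathbb{N}_{>0}$, so your $\Net'$ is not a valid instance of \MAPPD, and your claim ``$\apPD$ is unchanged, hence set $D' := D$'' is proved for an object outside the problem's domain. This is not a pedantic point — the paper itself runs into the same issue later (after Reduction Rule~\ref{rr:Net-strings}) and explicitly repairs it by rescaling. In the proof of Lemma~\ref{lem:Net-degree} the paper avoids it from the start: every original weight is multiplied by $|E|+1$, the threshold is set to $D' := D\cdot(|E|+1)$, and each new spine edge gets weight~$1$. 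Since at most $\Oh(|E|)$ unit-weight dummy edges exist, their total contribution is strictly less than one scaled unit $|E|+1$, so a set $Y$ satisfies $\apPDsub{\Net'}(Y)\ge D'$ if and only if $\apPD(Y)\ge D$ in the original network; the paper verifies both directions of this threshold argument. Your proof needs this normalization (or an equivalent one) to go through.

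A secondary caution concerns the treewidth claim. Your argument that the new spine vertices can be ``routed through the bags already containing the expanded vertex'' is too quick: a single bag of width $\twN$ may contain $v$ together with many children that your caterpillar attaches at widely separated spine positions, and then no single replacement $v_j$ works without enlarging that bag. Indeed, with an \emph{arbitrary} attachment order, caterpillar expansion can strictly increase treewidth (a hub of a wheel expanded with an adversarial child order yields a path plus cycle plus matching, which can realize cubic graphs of linear treewidth), so the order of children along the spine must be chosen compatibly with a tree decomposition (e.g., following the order in which the edges $vw_i$ are covered along the subtree of bags containing $v$). The minor-monotonicity direction ($\twN \le \twwithoutN_{\Net'}$, by contracting the spine) is fine as you state it. To be fair, the paper's own proof merely asserts ``the treewidth is not effected by the operations we did,'' so you are no less rigorous than the source on this point — but as written your sketch does not close that step either.
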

\begin{proof}
	\proofpara{Algorithm}
	Let $\Instance := (\Net:=(V,E,\w),k,D)$ be an instance of \MAPPD.
	We set $k':=k$ and $D':=D\cdot (|E|+1)$.
	Iterate over $E$ and set $\w'(e) = \w(e) \cdot (|E|+1)$.
	Iterate over the vertices $v$ and compute the degree $\deg(v)$ of $v$.
	If $\deg(v)\in \{1,3\}$ then continue with the next vertex.
	If $\deg(v) = 2$ and the edges $e_1 = uv$ and $e_2 = vw$ are incident with $v$,
	then delete $e_1$ and~$e_2$ and $v$ from $\Net$ and insert the edge $uw$ with weight $\w'(uv)+\w'(vw)$.

	If $v$ is a reticulation with $\deg(v) > 3$ and the edges $e_i = u_i v$ for $i\in [\deg(v)-1]$ and $\hat e = vw$ are incident with $v$.
	Then, replace $v$ with vertices $a_1,\dots,a_{\deg(v)-2}$ and add edges $u_1a_1$, $a_{\deg(v)-2}w$, $a_{i-1} a_i$, and $u_i a_{i-1}$ for $i\in \{2,3,\dots,\deg(v)-1\}$, where the weight of the outgoing edge of $u_i$ is $\w'(u_i v)$ and all other new edges have a weight of~1.
	Otherwise, if $\deg(v) > 3$ and $v$ is a tree vertex or the root, proceed analogously with inverted edges.
	Figure~\ref{fig:Net-degree} depicts this procedure for a reticulation.
	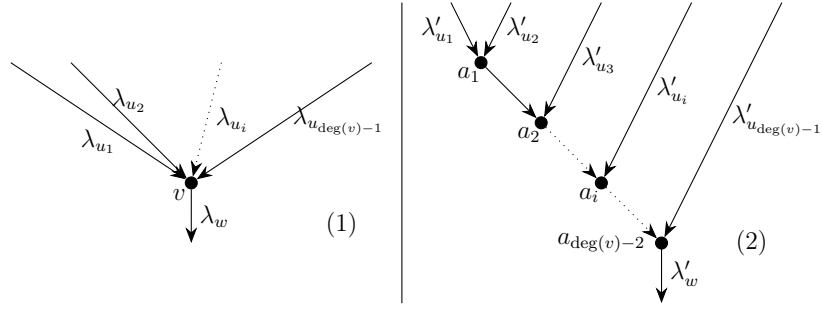
\begin{figure}[t]
		\centering
		\begin{tikzpicture}[scale=0.8,every node/.style={scale=0.8}]
			\node[draw,fill=black,inner sep=2pt,circle] (a) at (4,1) {};
			
			\path[-{Stealth[length=6pt]},draw] (1,3) -> node[below] {$\w_{u_1}$} (a);
			\path[-{Stealth[length=6pt]},draw] (2,3) -> node[above] {$\w_{u_2}$} (a);
			\path[dotted,-{Stealth[length=6pt]},draw] (4.5,3) -> node[right] {$\w_{u_i}$} (a);
			\path[-{Stealth[length=6pt]},draw] (7,3) -> node[right] {$\w_{u_{\deg(v)-1}}$} (a);
			
			\path[-{Stealth[length=6pt]},draw] (a) -> node[right] {$\w_w$} (4,0);
			
			\node at (6.5,0.3) {(1)};
			\node at (3.8,0.8) {$v$};
			
			\path[draw] (7.5,-1) -> (7.5,4);
		\end{tikzpicture}
		\begin{tikzpicture}[scale=0.8,every node/.style={scale=0.8}]
			\node[draw,fill=black,inner sep=2pt,circle] (a) at (1,5) {};
			\node[draw,fill=black,inner sep=2pt,circle] (b) at (2,4) {};
			\node[draw,fill=black,inner sep=2pt,circle] (c) at (3,3) {};
			\node[draw,fill=black,inner sep=2pt,circle] (d) at (4,2) {};
			
			\path[-{Stealth[length=6pt]},draw] (0.5,6) -> node[left] {$\w'_{u_1}$} (a);
			\path[-{Stealth[length=6pt]},draw] (1.5,6) -> node[right] {$\w'_{u_2}$} (a);
			\path[-{Stealth[length=6pt]},draw] (3,6) -> node[right] {$\w'_{u_3}$} (b);
			\path[-{Stealth[length=6pt]},draw] (4.5,6) -> node[right] {$\w'_{u_i}$} (c);
			\path[-{Stealth[length=6pt]},draw] (6,6) -> node[right] {$\w'_{u_{\deg(v)-1}}$} (d);
			
			\path[-{Stealth[length=6pt]},draw] (4,2) -> node[right] {$\w'_w$} (4,1);
			
			\path[-{Stealth[length=6pt]},draw] (a) -> (b);
			\path[dotted,-{Stealth[length=6pt]},draw] (b) -> (c);
			\path[dotted,-{Stealth[length=6pt]},draw] (c) -> (d);
			
			\node at (5.5,2) {(2)};
			
			\node at (0.8,4.8) {$a_1$};
			\node at (1.8,3.8) {$a_2$};
			\node at (2.8,2.8) {$a_i$};
			\node at (3,2) {$a_{\deg(v)-2}$};
		\end{tikzpicture}
		\caption{An example of the construction of $\Net'$ in the case of a reticulation with multiple incoming edges.
			We use the weights $\w'_p:=\w_p \cdot (|E|+1)$.
			Unlabeled edges have a weight of~1.}
		\label{fig:Net-degree}
	\end{figure}%

	\proofpara{Correctness}
	The network $\Net'$ is clearly binary.
	The treewidth is not effected by the operations we did.
	We add $\Oh(\deg(v))$ edges for each vertex $v$ to $\Net'$ such that there are at most $|E| + \sum_{v\in V} \deg(v) = 2|E|$ edges in $\Net'$.

	It remains to show the correctness of the algorithm.
	Let $Y$ be a solution for $\Instance$ and let $e_1,\dots,e_\ell\in E$ be the edges that are affected by $Y$, of which $e_{t+1},\dots,e_\ell$ are incident with a vertex with a degree of~2.
	Thus, $e_1,\dots,e_t$ are also edges in $\Net'$ and there are edges $\hat e_1,\dots,\hat e_{p}$ that are affected by $Y$ with $(|E|+1)\cdot \sum_{i=t+1}^{\ell} \w(e_i) = \sum_{i=1}^{p} \w(\hat e_i)$.
	Consequently,
	\begin{eqnarray*}
		D' &\le& (|E| + 1) \cdot \apPD(Y)\\
		& = & (|E| + 1) \cdot \sum_{i=1}^{\ell} \w(e_i)\\
		& = & \sum_{i=1}^{t} \w'(e_i) + \sum_{i=1}^{p} \w(\hat e_i)\\
		& \le & \apPDsub{\Net'}(Y).
	\end{eqnarray*}

	Analogously, let $Y$ be a solution for $\Instance'$ and let $e_1,\dots,e_t$ be the edges with a weight of higher than~1.
	We can find edges $\hat e_1,\dots,\hat e_p\in E$ that are affected by $Y$ with $\sum_{i=1}^{t} \w'(e_i) = (|E| + 1) \cdot \sum_{i=1}^{p} \w(e_i)$.
	Observe that the edges with a weight of~1 are between the vertices $a_i$ and $a_{i+1}$, such that there are $\Oh(|E|)$ of these edges.
	Thus, $D' \le \sum_{i=1}^{t} \w'(e_i) = (|E| + 1) \cdot \sum_{i=1}^{p} \w(e_i) = \apPD(Y)$.
	Hence, $Y$ is also a solution for instance~$\Instance$.

	\proofpara{Running Time}
	For each vertex $v$, we perform at most $\Oh(\deg(v))$ operations. Therefore, as $|E| = \sum_{v\in V} \deg(v)$, the overall running time is $\Oh(|E|)$.
\end{proof}

\section{MapPD and Item-Weighted Partial Set Cover}
\label{sec:Net-wpSC}
In this section, we showcase a relationship between \MAPPD and \wpSC by presenting reductions in both directions.
Bordewich et al. already presented a similar reduction from \SC to \MAPPD \cite{bordewichNetworks}.
\begin{theorem}
	\label{thm:Net-PSC->MAPPD}
	For every instance $\Instance = (\mathcal{U},\mathcal{F},\w,k,D)$ of \wpSC, 
	\begin{propEnum}
		\item\label{it:P->Mweighted} an equivalent instance~$\Instance' =(\Net,k',D')$ of \MAPPD with $|X|=\ret=|\mathcal F|$ and~$k'=k$ can be computed in time polynomial in $|\mathcal U|+|\mathcal F|$;
		\item\label{it:P->Munweighted} an equivalent instance~$\Instance_2' = (\Net = (V,E,\w'),k',D')$ of \MAPPD in which $k'=k$ and each edge has a weights of~1 can be computed in time polynomial\lb in~$|\mathcal U|+|\mathcal F|+\max_\w$.
	\end{propEnum}
	
\end{theorem}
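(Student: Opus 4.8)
The plan is to build a single phylogenetic network gadget whose all-paths diversity mirrors the total covered weight of a selection of sets, use it for part~\ref{it:P->Mweighted}, and then obtain part~\ref{it:P->Munweighted} by replacing weighted edges with unit-weight paths. For part~\ref{it:P->Mweighted} I would construct $\Net$ as follows. Take a root $\rho$; for each item $u\in\mathcal U$ a tree vertex $c_u$ with an edge $\rho c_u$; for each set $F\in\mathcal F$ a reticulation $r_F$ together with a leaf $x_F$ (so $X=\{x_F\mid F\in\mathcal F\}$ and $|X|=|\mathcal F|$) joined by an edge $r_F x_F$; for every incidence $u\in F$ an edge $c_u r_F$; and a single dummy tree vertex $d$ with edges $\rho d$ and $d r_F$ for all $F$. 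The only reticulations are the $r_F$, each of out-degree $1$ and in-degree $|F|+1\ge 2$ (the dummy $d$ guarantees in-degree at least two even for singleton sets), so $\ret=|\mathcal F|=|X|$. Call every edge that is not of the form $\rho c_u$ an \emph{auxiliary} edge, let $A$ be their number, and set $M:=A+1$. Put $\w'(\rho c_u):=M\cdot\w(u)$, let every auxiliary edge have weight $1$, and set $k':=k$ and $D':=M\cdot D$.

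For correctness, a taxon set $Z\subseteq X$ corresponds to the family $\mathcal F_Z:=\{F\mid x_F\in Z\}$ with $|\mathcal F_Z|\le|Z|$, and vice versa. The edge $\rho c_u$ is affected by $Z$ exactly when $c_u$ reaches some $x_F\in Z$, i.e. when $u$ lies in some member of $\mathcal F_Z$, that is $u\in L:=\bigcup_{F\in\mathcal F_Z}F$; hence the weighted edges contribute precisely $M\sum_{u\in L}\w(u)$. All remaining edges are auxiliary, carry weight $1$, and number $A=M-1$, so their total contribution $a(Z)$ satisfies $0\le a(Z)<M$. Therefore $\apPD(Z)=M\sum_{u\in L}\w(u)+a(Z)$, which is at least $D'=MD$ if and only if $\sum_{u\in L}\w(u)\ge D$: if $\sum_{u\in L}\w(u)\ge D$ then $\apPD(Z)\ge MD$, while if $\sum_{u\in L}\w(u)\le D-1$ then $\apPD(Z)\le M(D-1)+(M-1)=MD-1<D'$. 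Monotonicity of $\apPD$ lets me pass between ``size $\le k$'' and ``size $k$'', giving equivalence with $k'=k$. The construction uses $\Oh(|\mathcal U|+|\mathcal F|)$ vertices and $\Oh(|\mathcal U|\cdot|\mathcal F|)$ edges, and the weights $M\cdot\w(u)$ have encoding length $\Oh(\log M+\log\max_\w)$, so it is computable in time polynomial in $|\mathcal U|+|\mathcal F|$.

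For part~\ref{it:P->Munweighted} I would start from the network of part~\ref{it:P->Mweighted} and replace each edge $e$ of weight $w=\w'(e)$ by a directed path of $w$ unit-weight edges. Every new edge lies on the path above the original head of $e$, so it has the same offspring set and is affected by $Z$ exactly when $e$ was; consequently $\apPD$ is unchanged and $(k',D')$ may be kept. The subdivision vertices are tree vertices of degree two, so no new reticulations appear. The number of added vertices is $\sum_e(\w'(e)-1)\le M\cdot|\mathcal U|\cdot\max_\w+A$, which is polynomial in $|\mathcal U|+|\mathcal F|+\max_\w$ since $M=\Oh(|\mathcal U|\cdot|\mathcal F|)$.

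The main obstacle is that selecting a taxon $x_F$ forces a number of auxiliary edges proportional to $|F|$ to become affected, so the auxiliary contribution $a(Z)$ genuinely depends on $Z$ and cannot be turned into a fixed additive constant by padding. The resolution is the scaling by $M>A$, which confines $a(Z)$ to a range strictly below one ``unit'' of covered weight and thus prevents it from ever bridging the gap between a \yes- and a \no-instance; the bookkeeping that $0\le a(Z)<M$ is where care is needed. The only further subtlety, for part~\ref{it:P->Munweighted}, is that this scaling multiplies edge weights by $M$ before subdivision, but because $M$ is polynomial in $|\mathcal U|+|\mathcal F|$ the subdivided paths still have total length polynomial in $|\mathcal U|+|\mathcal F|+\max_\w$, keeping the reduction efficient.
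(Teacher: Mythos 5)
Your proposal is correct, and for part~(a) it is essentially the paper's own reduction: a two-layer gadget in which item vertices hang below the root with scaled weights, set gadgets are reticulations feeding the leaves, and correctness rests on the same integrality argument that the unit-weight auxiliary edges can contribute strictly less than one scaled unit (the paper takes $Q=m(n+1)$, weights $Q\cdot\w(u_i)$ and threshold $D'=DQ+1$; you take $M=A+1$ and $D'=MD$ --- interchangeable choices).

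Two differences deserve a remark. First, your dummy vertex $d$ with edges $d\,r_F$ forces every set vertex to have in-degree at least two, so $\ret=|\mathcal F|$ holds even when some set is a singleton; in the paper's construction $w_j$ has in-degree $|F_j|$, so the stated equality $\ret=|\mathcal F|$ tacitly requires $|F_j|\ge 2$. Your variant is thus slightly more robust (though, like the paper, you implicitly discard items contained in no set, and your in-degree claim assumes every set is nonempty). Second, and more substantively, your part~(b) diverges from the paper: you replace each heavy edge by a plain unit-weight path, observing that every sub-edge has the same offspring set, so $\apPD$ is literally unchanged and no further argument is needed; this is legitimate here because the chapter's definition explicitly permits tree vertices with in- and out-degree~$1$. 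The paper instead subdivides \emph{and} attaches a ``false leaf'' to each subdivision vertex, which enlarges $X$ and then requires an extra correctness step showing that any solution containing false leaves can be rewired into one avoiding them. Your route is shorter and stays within the paper's own definitions; the paper's variant buys compatibility with stricter conventions (e.g., binary networks, where unary chains are forbidden) at the cost of that additional exchange argument.
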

This theorem has several applications for the complexity of \MAPPD.
Because \SC is \Wh 2-hard with respect to the size of the solution $k$~\cite{downeybook}, \MAPPD is as well. 
This is in contrast to the fact that
\MAPPD can be solved in polynomial time when the network does not have reticulations and, therefore, is a phylogenetic tree~\cite{steel}.
\begin{corollary} \label{cor:Net-k+maxw}
	\MAPPD is $\Wh 2$-hard when parameterized with $k$, even if $\max_\w=1$.
\end{corollary}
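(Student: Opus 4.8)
The plan is to obtain \Cref{cor:Net-k+maxw} as an immediate consequence of \Cref{thm:Net-PSC->MAPPD}\ref{it:P->Munweighted} together with the known \Wh 2-hardness of \SC with respect to the solution size~$k$. The key observation is that \SC is exactly the special case of \wpSC in which every item has unit weight and $D = \w_\Sigma(\mathcal{U}) = |\mathcal{U}|$, so that all items must be covered. Hence, starting from an instance of \SC, which is \Wh 2-hard parameterized by $k$~\cite{downeybook}, I would first reinterpret it as an instance of \wpSC with $\w(u) = 1$ for every $u \in \mathcal{U}$ and $D := |\mathcal{U}|$. This reinterpretation is trivially a parameterized reduction, as the parameter $k$ is preserved verbatim.

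Next I would feed this \wpSC-instance into the reduction of \Cref{thm:Net-PSC->MAPPD}\ref{it:P->Munweighted}. That reduction produces an equivalent instance $\Instance_2' = (\Net = (V,E,\w'),k',D')$ of \MAPPD with $k' = k$ and in which every edge of $\Net$ has weight~$1$, that is, $\max_\w = 1$. Since the reduction runs in time polynomial in $|\mathcal{U}| + |\mathcal{F}| + \max_\w$ and here $\max_\w = 1$ (the original weights are all~$1$), the whole construction is polynomial in the input size and in particular is a valid parameterized reduction. Chaining these two steps gives a parameterized reduction from \SC parameterized by $k$ to \MAPPD parameterized by $k$ whose output instances additionally satisfy $\max_\w = 1$.

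Because \Wh 2-hardness is preserved under parameterized reductions, and because the composed reduction keeps the parameter equal to~$k$, it follows that \MAPPD is \Wh 2-hard with respect to~$k$, even when restricted to instances with $\max_\w = 1$. I do not expect any genuine obstacle here: all the substantive work has already been carried out in \Cref{thm:Net-PSC->MAPPD}, and the only things to verify are that \SC really is the unit-weight, full-coverage special case of \wpSC (so that the hardness transfers), and that invoking part~\ref{it:P->Munweighted} rather than part~\ref{it:P->Mweighted} is what yields the edge-weight restriction $\max_\w = 1$. The single point worth stating carefully is that one must use part~\ref{it:P->Munweighted} of the theorem, since part~\ref{it:P->Mweighted} would only give bounded reticulation number and not the unit edge weights needed for the ``even if $\max_\w = 1$'' clause.
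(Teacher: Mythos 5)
Your proposal is correct and matches the paper's argument exactly: the paper likewise observes that \SC is the special case of \wpSC with $D = \w_\Sigma(\mathcal{U})$ (unit weights giving $D = |\mathcal{U}|$) and obtains the corollary by composing the known \Wh{2}-hardness of \SC with respect to $k$ with the parameter-preserving reduction of \Cref{thm:Net-PSC->MAPPD}, using part~(\ref{it:P->Munweighted}) for the $\max_\w = 1$ restriction. Nothing is missing.
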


Recall that in \rbnb an undirected bipartite graph $G$ with vertex bipartition $V(G)=V_r \cup V_b$ and an integer $k$ are given.
The question is whether there is a set $S\subseteq V_r$ of size at least $k$ such that each vertex $v$ of $V_b$ has a neighbor in $V_r \setminus S$.
%
There is a standard reduction from \rbnb to \SC:
Let $V_b$ be the universe.
For each vertex $v\in V_r$ add a set $F_v:=N(v)$ to $\mathcal F$ and finally set $k':=|V_r|-k$.
\rbnb is \Wh 1-hard when parameterized by the size of the solution~\cite{downey}.
Hence, \SC is \Wh 1-hard with respect to $|\mathcal F|-k$, and with Theorem~\ref{thm:Net-PSC->MAPPD}, we conclude as follows.
\begin{theorem} \label{thm:Net-kbar}
	\MAPPD is $\Wh 1$-hard when parameterized with $\kbar=|X|-k$.
\end{theorem}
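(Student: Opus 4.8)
The plan is to compose three reductions, \rbnb $\to$ \SC $\to$ \wpSC $\to$ \MAPPD, chosen so that the \Wh{1}-hard parameter of \rbnb (its solution size) ends up \emph{equal} to the dual parameter $\kbar = |X|-k$ of the final \MAPPD instance. Since \rbnb is \Wh{1}-hard with respect to the size of the solution, carrying the parameter through unchanged immediately yields the claim. The whole argument is essentially parameter bookkeeping layered on top of reductions that are either standard or already established (Theorem~\ref{thm:Net-PSC->MAPPD}).

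First I would invoke the standard reduction from \rbnb to \SC recalled above: given a bipartite graph $G$ with bipartition $V_r \cup V_b$ and integer $\kappa$, set $\mathcal{U} := V_b$, introduce one set $F_v := N(v)$ for each $v\in V_r$ (so $|\mathcal{F}| = |V_r|$), and set the set-cover budget to $b := |V_r|-\kappa$. The equivalence follows by complementation: a non-blocking set $S\subseteq V_r$ of size at least $\kappa$ exists if and only if $T := V_r\setminus S$ has size at most $b$ and the sets $\{N(v) : v\in T\}$ cover $V_b$ (each blue vertex must retain a neighbor outside $S$). The point to record is that $|\mathcal{F}|-b = |V_r|-(|V_r|-\kappa)=\kappa$, so \SC is \Wh{1}-hard with respect to $|\mathcal{F}|-b$, the number of unused sets. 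I would then observe that \SC is the special case of \wpSC with all item weights $1$ and threshold $D := \w_\Sigma(\mathcal{U})$, so this is verbatim a \wpSC instance with $|\mathcal{F}|$ sets and budget $b$, still parameterized by $|\mathcal{F}|-b=\kappa$.

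Finally I would apply Theorem~\ref{thm:Net-PSC->MAPPD}\ref{it:P->Mweighted} to turn this \wpSC instance into an equivalent \MAPPD instance $(\Net,k',D')$ in polynomial time with $|X| = |\mathcal{F}|$ and $k' = b$. Computing the dual parameter of the result gives
\[
	\kbar = |X| - k' = |\mathcal{F}| - b = \kappa ,
\]
so the final parameter coincides with the original \Wh{1}-hard parameter of \rbnb. Because every stage runs in polynomial time and the parameter is at each step bounded by (indeed equal to) the previous one, the composition is a valid parameterized reduction, and the hardness transfers. I expect the only genuine subtlety to be exactly this bookkeeping: the set-cover budget $b$ plays the role of the number of taxa one is \emph{allowed to save}, and it must be matched against $\kbar$ through the complementation $\kbar = |X| - k'$; one must confirm that $\kappa$ survives unchanged rather than blowing up. Everything else is routine, since the individual equivalences are standard (\rbnb $\to$ \SC) or already proved (Theorem~\ref{thm:Net-PSC->MAPPD}).
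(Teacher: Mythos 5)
Your proposal is correct and follows essentially the same route as the paper: the paper likewise chains the standard \rbnb-to-\SC reduction (noting \SC is \Wh{1}-hard with respect to $|\mathcal F|-k$, the number of unused sets, and is the special case of \wpSC with unit weights and $D=\w_\Sigma(\mathcal U)$) and then applies Theorem~\ref{thm:Net-PSC->MAPPD}, where $|X|=|\mathcal F|$ and $k'=k$ give exactly your bookkeeping $\kbar=|X|-k'=|\mathcal F|-b=\kappa$.
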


\MAPPD can be solved in $\Oh^*(2^{|X|})$ with a brute force algorithm that tries every possible subset of species as a solution.
In Theorem~\ref{thm:Net-ret}, we prove that \MAPPD can be solved in $\Oh^*(2^{\ret})$ time.
In order to prove that these algorithms can not be improved significantly, we apply the well-established \texttt{Strong Exponential Time Hypothesis} (\SETH).

Unless \SETH fails, \SC can not be solved in $O^*(2^{\epsilon \cdot |F|})$ time for any $\epsilon<1$ \cite{cyganSETH,lin}.
Thus, Theorem~\ref{thm:Net-PSC->MAPPD} shows that under \SETH, not a lot of hope remains to find faster algorithms for \MAPPD than these two algorithms.
Thus, these two algorithms, with respect to the number of taxa $|X|$ and reticulations $\ret$, for \MAPPD are tight with the lower bounds.
\begin{corollary} \label{cor:Net-X}
	Unless \SETH fails, \MAPPD can not be solved in $2^{\epsilon\cdot |X|} \cdot \poly(|\Instance|)$ time or in $2^{\epsilon\cdot \ret} \cdot \poly(|\Instance|)$ time for any $\epsilon < 1$.
\end{corollary}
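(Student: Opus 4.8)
The plan is to derive both lower bounds from the corresponding \SETH-based lower bounds for \SC, piggybacking on the reductions already established in Theorem~\ref{thm:Net-PSC->MAPPD}. Recall the quoted fact: unless \SETH fails, \SC cannot be solved in $\Oh^*(2^{\epsilon \cdot |\mathcal{F}|})$ time for any $\epsilon < 1$. The key observation is that the reductions of Theorem~\ref{thm:Net-PSC->MAPPD} are parameter-preserving in exactly the way we need: part~\ref{it:P->Mweighted} produces an equivalent \MAPPD instance with $|X| = \ret = |\mathcal{F}|$, so the number of taxa and the number of reticulations are both equal to the number of sets in the original \SC instance.

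First I would argue the contrapositive. Suppose, for some fixed $\epsilon < 1$, that \MAPPD admits an algorithm running in $2^{\epsilon \cdot |X|} \cdot \poly(|\Instance|)$ time. Given an arbitrary instance $\Instance = (\mathcal{U},\mathcal{F},\w,k,D)$ of \SC (viewed as the special case $D = \w_\Sigma(\mathcal{U})$ of \wpSC), I would apply Theorem~\ref{thm:Net-PSC->MAPPD}\ref{it:P->Mweighted} to obtain in polynomial time an equivalent \MAPPD instance $\Instance'$ with $|X| = |\mathcal{F}|$. Running the assumed \MAPPD algorithm on $\Instance'$ then decides the original \SC instance in time $2^{\epsilon \cdot |X|} \cdot \poly(|\Instance'|) = 2^{\epsilon \cdot |\mathcal{F}|} \cdot \poly(|\Instance|)$, since the reduction is polynomial and hence $|\Instance'|$ is polynomially bounded in $|\Instance|$. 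This contradicts the \SETH lower bound for \SC, establishing the $|X|$ bound. Because the same instance $\Instance'$ satisfies $\ret = |\mathcal{F}|$ as well, the identical chain of reasoning, using an assumed $2^{\epsilon\cdot\ret}\cdot\poly(|\Instance|)$-time algorithm, yields the $\ret$ bound simultaneously; no separate construction is needed.

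The only point requiring a little care is the weight blow-up. Part~\ref{it:P->Mweighted} of Theorem~\ref{thm:Net-PSC->MAPPD} gives a polynomial-time reduction measured in $|\mathcal{U}| + |\mathcal{F}|$, and for a plain \SC instance (all weights equal to $1$) no $\max_\w$ factor intrudes, so the produced instance $\Instance'$ genuinely has size polynomial in $|\Instance|$. I would make explicit that we invoke the reduction only on unit-weight \SC inputs, so that the running time of the reduction, and thus $|\Instance'|$, stays polynomial in $|\Instance|$ and the exponential factor $2^{\epsilon|X|}$ is the dominant term. I do not expect a genuine obstacle here: the whole argument is a transfer of a known lower bound through a parameter-preserving reduction that has already been constructed. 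The main thing to state cleanly is that the $\poly$ overhead absorbed into $\poly(|\Instance|)$ does not interfere with the exponent, so the contradiction with \SETH is exact, yielding Corollary~\ref{cor:Net-X}.
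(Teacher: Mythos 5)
Your proposal is correct and matches the paper's own argument: the corollary is obtained there exactly by combining the quoted \SETH{} lower bound for \SC{} (in terms of $|\mathcal{F}|$) with Theorem~\ref{thm:Net-PSC->MAPPD}\ref{it:P->Mweighted}, whose output instance satisfies $|X|=\ret=|\mathcal{F}|$ and is produced in time polynomial in $|\mathcal{U}|+|\mathcal{F}|$. Your extra remarks (the contrapositive phrasing and the observation that unit-weight \SC{} inputs avoid any $\max_\w$ blow-up) are sound clarifications of the same transfer argument, not a different route.
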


\noindent So now, without further ado, we prove Theorem~\ref{thm:Net-PSC->MAPPD}.\\[-.85cm]

\begin{proof}[Proof of Theorem~\ref{thm:Net-PSC->MAPPD}]
	\proofpara{Reduction}
	Let $\Instance = (\mathcal{U},\mathcal{F},k,D)$ be an instance of \wpSC. Let $\mathcal{U}$ consist of the items $u_1,\dots,u_n$ and let $\mathcal{F}$ contain the sets $F_1,\dots,F_m$.
	We may assume that for each $u_i$ there is a set $F_j$ which contains $u_i$.
	We define an\lb instance~$\Instance' = (\Net,k,D')$ of \MAPPD as follows.
	Let $k$ stay unchanged and define~$D' := D\cdot Q + 1$ for $Q := m(n+1)$.
	We define a network~\Net with leaves~$x_1,\dots,x_m$, and further vertices~$\rho,v_1,\dots,v_n,w_1,\dots,w_m$.

	Let the set of edges consist of the edges $\rho v_{i}$ for $i\in [n]$, $w_j x_j$ for $j\in [m]$, and let~$v_i w_j$ be an edge if and only if $u_i \in F_j$.
	We define the weight of $\rho v_i$ to be $\w(u_i) \cdot Q$ for each $i\in [n]$ and 1 for each other edge.
	Figure~\ref{fig:Net-PSC->MAPPD} depicts an example of this reduction.
	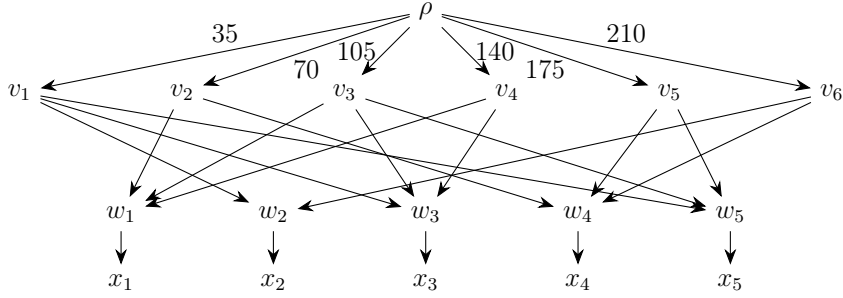
\begin{figure}[t]
		\centering
		\begin{tikzpicture}[scale=0.9,every node/.style={scale=0.85}]
			\node (root) at (0,3) {$\rho$};
			\begin{scope}[name prefix = u]
				\node (a) at (-6,1.8) {$v_1$};
				\node (b) at (-3.6,1.8) {$v_2$};
				\node (c) at (-1.2,1.8) {$v_3$};
				\node (d) at (1.2,1.8) {$v_4$};
				\node (e) at (3.6,1.8) {$v_5$};
				\node (f) at (6,1.8) {$v_6$};
			\end{scope}
			\begin{scope}[name prefix = F]
				\node (a) at (-4.5,0) {$w_1$};
				\node (b) at (-2.25,0) {$w_2$};
				\node (c) at (0,0) {$w_3$};
				\node (d) at (2.25,0) {$w_4$};
				\node (e) at (4.5,0) {$w_5$};
			\end{scope}
			\begin{scope}[name prefix = x]
				\node (a) at (-4.5,-1) {$x_1$};
				\node (b) at (-2.25,-1) {$x_2$};
				\node (c) at (0,-1) {$x_3$};
				\node (d) at (2.25,-1) {$x_4$};
				\node (e) at (4.5,-1) {$x_5$};
			\end{scope}
			
			\draw[-{Stealth[length=6pt]}] (root) to node[above]{35} (ua);
			\draw[-{Stealth[length=6pt]}] (root) to node[below]{70} (ub);
			\draw[-{Stealth[length=6pt]}] (root) to node[left]{105} (uc);
			\draw[-{Stealth[length=6pt]}] (root) to node[right]{140} (ud);
			\draw[-{Stealth[length=6pt]}] (root) to node[below]{175} (ue);
			\draw[-{Stealth[length=6pt]}] (root) to node[above]{210} (uf);

			\draw[-{Stealth[length=6pt]}] (ub) to (Fa);
			\draw[-{Stealth[length=6pt]}] (uc) to (Fa);
			\draw[-{Stealth[length=6pt]}] (ud) to (Fa);
			\draw[-{Stealth[length=6pt]}] (ua) to (Fb);
			\draw[-{Stealth[length=6pt]}] (uf) to (Fb);
			\draw[-{Stealth[length=6pt]}] (ua) to (Fc);
			\draw[-{Stealth[length=6pt]}] (uc) to (Fc);
			\draw[-{Stealth[length=6pt]}] (ud) to (Fc);
			\draw[-{Stealth[length=6pt]}] (ub) to (Fd);
			\draw[-{Stealth[length=6pt]}] (ue) to (Fd);
			\draw[-{Stealth[length=6pt]}] (uf) to (Fd);
			\draw[-{Stealth[length=6pt]}] (ua) to (Fe);
			\draw[-{Stealth[length=6pt]}] (uc) to (Fe);
			\draw[-{Stealth[length=6pt]}] (ue) to (Fe);

			\draw[-{Stealth[length=6pt]}] (Fa) to (xa);
			\draw[-{Stealth[length=6pt]}] (Fb) to (xb);
			\draw[-{Stealth[length=6pt]}] (Fc) to (xc);
			\draw[-{Stealth[length=6pt]}] (Fd) to (xd);
			\draw[-{Stealth[length=6pt]}] (Fe) to (xe);
		\end{tikzpicture}
		\caption{This figure depicts the network \Net that we reduce to from the instance\lb $(\mathcal{U}:=\{u_1,\dots,u_6\},\mathcal{F}:=\{F_1,\dots,F_5\},\w,k,D)$ of \wpSC with $\w(u_i)=i$, $F_1:=\{u_2,u_3,u_4\}$, $F_2:=\{u_1,u_6\}$, $F_3:=\{u_1,u_3,u_4\}$, $F_4:=\{u_2,u_5,u_6\}$, $F_5:=\{u_1,u_3,u_5\}$.
			Unlabeled edges have a weight of 1.
			Here $n=6,m=5$ and $Q = 35$.
			The value of $k'$ would be $k$ and $D'$ would be $35D + 1$.}
		\label{fig:Net-PSC->MAPPD}
	\end{figure}%
	
	This completes the construction of instance $\Instance'$ in case~(\ref{it:P->Mweighted}) of the theorem.
	We now describe how to construct an instance $\Instance_2'$ from $\Instance'$  in which the maximum weight of an edge is 1, completing the construction for case~(\ref{it:P->Munweighted}).
	For each edge~$e = \rho v_i$ with~$w(e) > 1$, make $\w(e)-1$ subdivisions and attach a new leaf as the child of each subdividing vertex.
	We call these newly added leaves \emph{false leaves}, in contrast to the other leaves of~\Net, that we call \emph{true leaves}.

	\proofpara{Correctness}
	The instance $\Instance'$ is computed in time polynomial in $|\mathcal U|+|\mathcal F|$ and
	the instance $\Instance_2'$ is computed in time polynomial in $|\mathcal U|+|\mathcal F|+\max_\w$.
	Clearly, in $\Instance'$ we observe $k'=k$ and $|X|=\ret=|\mathcal F|$;
	and in $\Instance_2'$ we observe $k'=k$ and $\max_{\omega'}=1$.
	It remains to show the equivalence of the instances.

	Without loss of generality, let $S:=\{F_{1},\dots,F_{\ell}\}$ with $\ell\le k$ be a solution for the instance \Instance of \wpSC that covers the items $u_1,\dots,u_p$.
	We show that $Y:=\{x_{1},\dots,x_{\ell}\}$ is a solution for the instances $\Instance'$ and $\Instance_2'$ of \MAPPD.
	Clearly, the size of $Y$ is at most~$k$.
	Now, consider the phylogenetic diversity of $Y$ in the network of $\Instance'$.
	Let $\hat E$ be the edges in \Net between two vertices of $v_1,\dots,v_p,w_1,\dots,w_\ell$.
	Then,
	\begin{eqnarray*}
		\apPD(Y) &=&  \sum_{i=1}^\ell \w'(w_i x_i) + \sum_{e\in \hat E} \w'(e) + \sum_{i=1}^p \w'(\rho v_i)
		\ge \w'(w_1 x_1) + \sum_{i=1}^p \w'(\rho v_i)\\
		&=& 1 + \sum_{i=1}^p \w(u_i)\cdot Q
		= 1 + Q \cdot \sum_{i=1}^p \w(u_i)
		\ge 1 + QD = D'.
	\end{eqnarray*}
	Here, the first inequality holds because the sets in $S$ cover the items $u_1,\dots,u_p$ and the last inequality holds because $S$ is a solution.
	It is easy to see that the phylogenetic diversity of the set $Y$ in the network of $\Instance_2'$ is identical.
	Hence, $Y$ is a solution of $\Instance'$ and $\Instance_2'$.

	Without loss of generality, let $Y:=\{x_1,\dots,x_\ell\}$ with $\ell\le k$ be a solution for the instance $\Instance'$ of \MAPPD.
	We show that $S:=\{F_1,\dots,F_\ell\}$ is a solution of \wpSC.
	Clearly, the size of $S$ is at most $k$.
	Without loss of generality, let $v_1,\dots,v_p$ be the ancestors of $Y$ that are children of $\rho$.
	Thus, there is a path from $v_i$ to a taxon $x_j\in Y$, for each $i\in [p]$.
	By the construction of $\Instance'$, we conclude $u_i \in F_j$.
	Again, letting $\hat E$ be the edges between two vertices of $v_1,\dots,v_p,w_1,\dots,w_\ell$, we have 
	$$
	D' \le \apPD(Y) = \sum_{i=1}^\ell \w'(w_i x_i) + \sum_{e\in \hat E} \w'(e) + \sum_{i=1}^p \w'(\rho v_i) \le m + nm + \sum_{i=1}^p \w'(\rho v_i).
	$$
	Consequently, $\sum_{i=1}^p Q\cdot \w(u_i) = \sum_{i=1}^p \w'(\rho v_i) \ge D' - Q = Q(D-1) + 1$.
	We conclude that $\sum_{i=1}^p \w(u_i) \ge D-1 + 1/Q$.
	Since the weights of $u_i$ are integers, it follows that~$\sum_{i=1}^p \w(u_i) \ge D$.

	It remains to show that for each solution $Y$ of instance $\Instance_2'$, an equivalent solution of 
	$\Instance'$ exists.
	If $Y$ does not contain false leaves, then 
	as previously observed, the phylogenetic diversity of $Y$ is the same in $\Instance'$ as in $\Instance_2'$.
	%
	%
	Assume now otherwise and let~$z$ be a false leaf in $Y$ and let $p_z$ be the parent of $z$.
	We consider two different cases;
	That~$p_z$ has an offspring in $Y\setminus\{z\}$, or not.
	In the former case, $p_z$ has an offspring in $Y\setminus\{z\}$, we observe $\apPD(Y) = \apPD(Y\setminus\{z\}) + \w(p_z z) = \apPD(Y\setminus\{z\}) + 1$.\lb
	Consequently, we can replace $z$ with any true leaf that is not yet in $Y$ to obtain another solution of $\Instance_2'$.
	In the second case, let the true leaf $x_i$ be an offspring of $p_z$.
	Because of the assumption, $x_i\not\in Y$.
	
	Then, $\apPD(Y) - \w(p_z z) \le \apPD((Y\setminus \{z\}) \cup \{x_i\}) - \w(w_i x_i)$.
	Consequently, $Y\setminus \{z\} \cup \{x_i\}$ is also a solution.
	Therefore, we can remove all false leaves and then we are done.
\end{proof}

In the proof of Theorem~\ref{thm:Net-PSC->MAPPD}, we can see that in the root $\rho$, we model an operation that ensures that at least $D$ of the children of $\rho$ are selected and further, these tree vertices ensure that at least one of the reticulations below them are selected.
It might appear that by adding more layers of reticulations and tree vertices to the construction of $\Net$, one could reduce from problems even more complex than \wpSC, and thereby show that \MAPPD has an an even higher position in the W-hierarchy.
%
This, however, is unlikely, because of the following reduction to \wpSC.

\begin{theorem} \label{thm:Net-MAPPD->PSC}
	For every instance $\Instance=(\Net,k,D)$ of \MAPPD, we can compute an equivalent instance~$(\mathcal U,\mathcal F,\w,k',D')$ of \wpSC with $k'=k$, $D'=D$ and $\max_{\w'}=\max_{\w}$ in time polynomial in $|\Instance|$.
\end{theorem}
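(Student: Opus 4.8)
The plan is to invert the construction of Theorem~\ref{thm:Net-PSC->MAPPD}: the quantity $\apPD(Y)$ is by definition already a sum of edge weights over the set of affected edges, which is exactly a weighted coverage value. Concretely, given an instance $\Instance = (\Net,k,D)$ of \MAPPD with $\Net = (V,E,\w)$, I would build a \wpSC-instance by taking the universe to be the edge set, $\mathcal{U} := E$, assigning each item $e\in\mathcal{U}$ the weight $\w'(e) := \w(e)$, and introducing, for every taxon $x\in X$, a single set $F_x := \{\, e \in E \mid x \in \off(e)\,\}$ into the family $\mathcal{F}$. Finally I set $k' := k$ and $D' := D$. The three promised properties are then immediate: the item weights are literally the edge weights, so $\max_{\w'} = \max_{\w}$, while $k'=k$ and $D'=D$ hold by definition.

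The heart of the argument is a single coverage identity, namely $\bigcup_{x\in Y}F_x = E_Y$ for every $Y\subseteq X$. To see this I would observe that an edge $e=uv$ belongs to $F_x$ precisely when $x\in\off(e)=\off(v)$, i.e.\ when $v$ can reach $x$ in $\Net$; hence $e$ lies in $\bigcup_{x\in Y}F_x$ if and only if $\off(e)\cap Y\neq\emptyset$, which is exactly the condition for $e$ to be affected by $Y$. Writing $L := \bigcup_{x\in Y}F_x$, this gives $\sum_{e\in L}\w'(e) = \sum_{e\in E_Y}\w(e) = \apPD(Y)$. From here both directions of the equivalence follow with the same bookkeeping as in Theorem~\ref{thm:Net-PSC->MAPPD}. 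If $Z\subseteq X$ solves $\Instance$, then $\{F_x \mid x\in Z\}$ is a subfamily of size at most $|Z|\le k$ whose union has weight $\apPD(Z)\ge D$, so it solves the \wpSC-instance. Conversely, any \wpSC-solution $\{F_{x_1},\dots,F_{x_\ell}\}$ with $\ell\le k$ yields the taxon set $Z := \{x_1,\dots,x_\ell\}$ of size at most $k$ with $\apPD(Z) = \sum_{e\in L}\w'(e) \ge D$.

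For the running time I would compute each $F_x$ by a reverse-reachability search from the leaf $x$: a single traversal of $\Net$ along reversed arcs marks every vertex $v$ from which $x$ is reachable, after which $F_x$ is the set of all edges $uv$ with marked head $v$. This costs $\Oh(m)$ per taxon and $\Oh(|X|\cdot m)$ in total, which is polynomial in $|\Instance|$. I do not expect a genuine obstacle here; the only point requiring care is the coverage identity in the presence of reticulations, where a vertex may reach a taxon along several distinct paths. Since affectedness depends only on reachability and not on the number of realizing paths, each $F_x$ is well defined and the identity $\bigcup_{x\in Y}F_x = E_Y$ is unaffected. Finally, as in the proof of Theorem~\ref{thm:Net-PSC->MAPPD}, I would read a \wpSC-solution as a subfamily of size at most $k$, so no padding to exactly $k$ sets is required and the ``at most $k$'' budgets of the two problems line up directly.
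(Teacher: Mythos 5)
Your proposal is correct and matches the paper's own proof essentially verbatim: the paper likewise takes one item per edge with weight $\w(e)$, one set $F_x$ per taxon containing exactly the edges affected by $\{x\}$, keeps $k$ and $D$ unchanged, and argues both directions via the identity that the union of the chosen sets corresponds to the affected edge set $E_Y$. Your explicit statement of the coverage identity $\bigcup_{x\in Y}F_x = E_Y$ and the reverse-reachability running-time bound are just slightly more detailed phrasings of the same argument.
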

\begin{proof}
	\proofpara{Reduction}
	Let $\Instance = (\Net,k,D)$ be an instance of \MAPPD.
	We define an instance $\Instance' = (\mathcal{U},\mathcal{F},\w',k,D)$ of \wpSC as follows.
	Let $k$ and $D$ stay unchanged.
	For each edge $e$ of \Net, define an item $u_e$ with a weight of $\w'(u_e)=\w(e)$ and let $\mathcal{U}$ be the set of these $u_e$.
	For each taxon $x$, define a set $F_x$ which contains item $u_e$ if and only if $e$ is affected by $\{x\}$.
	Let $\mathcal{F}$ be the family of these sets.

	\proofpara{Correctness}
	Clearly, the reduction is computed in polynomial time.
	We show the equivalence of the two instances.
	Let $Y$ be a solution for the instance \Instance of \MAPPD.
	Without loss of generality, assume $Y=\{x_1,\dots,x_\ell\}$ with $\ell\le k$.
	We show that~$F_1,\dots,F_\ell$ is a solution for instance~$\Instance'$ of \wpSC.
	We know by definition that~$\ell\le k$.
	Let $E_Y$ be the edges affected by $Y$.
	Observe that $e$ is in $E_Y$ if and only if $u_e$ is in $F^+ := \bigcup_{i=1}^\ell F_i$.
	Then, $D \le \apPD(Y) = \sum_{e\in E_Y} \w(e) = \sum_{u_e\in F^+} \w'(u_e)$.
	Hence, $F_1,\dots,F_\ell$ is a solution for instance~$\Instance'$ of \wpSC.

	Now, without loss of generality, let $F_1,\dots,F_\ell$ be a solution for instance $\Instance'$ of \wpSC.
	Let~$u_{e_1},\dots,u_{e_p}$ be the items in the union of $F_1,\dots,F_\ell$.
	By the construction, the edges $e_1,\dots,e_p$ are affected by $Y=\{x_1,\dots,x_\ell\}$.
	Then,
	$$
	\apPD(Y)\ge \sum_{i=1}^p \w(e_i) = \sum_{i=1}^p \w'(u_{e_i}) \ge D.
	$$
	The size of $Y$ is at most $k$.
	Hence, $Y$ is a solution for $\Instance$ of \MAPPD.
\end{proof}

To the best of our knowledge, it is unknown if \wpSC is $\Wh 2$-complete, like \textsc{Set Cover}.
Nevertheless, we obtain the following connection between \wpSC and \MAPPD.

\begin{corollary}
	\label{cor:Net-k}
	\MAPPD is $\Wh i$-complete with respect to $k$ if and only if \wpSC is~$\Wh i$-complete with respect to $k$.
\end{corollary}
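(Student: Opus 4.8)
The plan is to prove Corollary~\ref{cor:Net-k} directly from the two reductions established in Theorems~\ref{thm:Net-PSC->MAPPD} and~\ref{thm:Net-MAPPD->PSC}. The key observation is that both theorems already provide \emph{parameterized reductions} with respect to the parameter~$k$: each preserves the solution-size parameter exactly (in both directions we have $k'=k$), runs in polynomial time, and maps \yes-instances to \yes-instances and vice versa. Recall that by Definition~\ref{def:paraReductions}, a parameterized reduction requires that the new parameter is bounded by a function of the old one; here we have the strongest possible bound, namely $k'=k$, so this condition is trivially satisfied. Thus we already have parameterized reductions from \wpSC (parameterized by~$k$) to \MAPPD (parameterized by~$k$) via Theorem~\ref{thm:Net-PSC->MAPPD}, and from \MAPPD back to \wpSC via Theorem~\ref{thm:Net-MAPPD->PSC}.

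First I would argue the forward direction of the biconditional. Suppose \wpSC is \Wh{$i$}-complete with respect to~$k$. To show \MAPPD is \Wh{$i$}-complete, I need both membership and hardness. For \Wh{$i$}-hardness, I take any problem $\Psi \in \Wh{i}$; by \Wh{$i$}-hardness of \wpSC there is a parameterized reduction from~$\Psi$ to \wpSC, and composing this with the parameterized reduction from \wpSC to \MAPPD given by Theorem~\ref{thm:Net-PSC->MAPPD} yields a parameterized reduction from~$\Psi$ to \MAPPD. Since parameterized reductions compose (the composition of two polynomial-time algorithms is polynomial-time, and a bound $k'' \le g_2(g_1(k))$ is still a computable function of~$k$), this establishes \Wh{$i$}-hardness of \MAPPD. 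For membership in \Wh{$i$}, I use the reduction of Theorem~\ref{thm:Net-MAPPD->PSC} from \MAPPD to \wpSC; since \wpSC is in \Wh{$i$}, there is a parameterized reduction from \wpSC to \WCS{i}{d} for some~$d$, and composing again gives a parameterized reduction from \MAPPD to \WCS{i}{d}, placing \MAPPD in \Wh{$i$} by Definition~\ref{def:WHierarchy}.

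The reverse direction is entirely symmetric: assuming \MAPPD is \Wh{$i$}-complete with respect to~$k$, I obtain \Wh{$i$}-hardness of \wpSC by composing an arbitrary reduction from $\Psi \in \Wh{i}$ into \MAPPD with the reduction of Theorem~\ref{thm:Net-MAPPD->PSC}, and I obtain membership of \wpSC in \Wh{$i$} by composing the reduction of Theorem~\ref{thm:Net-PSC->MAPPD} with a witnessing reduction from \MAPPD into \WCS{i}{d}. In both directions the roles of the two theorems simply swap.

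Since the two reductions are already in hand and both preserve the parameter exactly, there is essentially no computational obstacle here; the corollary is a clean consequence of the inter-reducibility. The only point requiring care—which I would state explicitly—is the \emph{compositionality of parameterized reductions}: I should verify that chaining a parameterized reduction with one of the two parameter-preserving reductions again yields a valid parameterized reduction, in particular that the time bound remains of the form $f(k)\cdot\poly(|x|)$ and the parameter bound remains a computable function of~$k$. This follows routinely because the reductions of Theorems~\ref{thm:Net-PSC->MAPPD} and~\ref{thm:Net-MAPPD->PSC} run in polynomial time and satisfy $k'=k$, so they do not inflate either the running-time dependence or the parameter beyond what the outer reduction already contributes. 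This is the one spot where I would be explicit rather than wave hands, but it is standard and poses no real difficulty.
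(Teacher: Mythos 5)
Your proposal is correct and matches exactly the reasoning the paper intends: the corollary is stated without an explicit proof precisely because it follows immediately from the two parameter-preserving ($k'=k$) polynomial-time reductions of Theorems~\ref{thm:Net-PSC->MAPPD} and~\ref{thm:Net-MAPPD->PSC}, with hardness and membership in $\Wh{i}$ transferring in each direction by composing parameterized reductions, just as you argue. Your explicit remark on the compositionality of parameterized reductions is the only detail the paper leaves tacit, and it is handled correctly.
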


\section{Fixed-Parameter Tractability of MapPD}
\label{sec:Net-FPT}

\subsection{Preserved and Lost Diversity}
\label{sec:Net-FPTdiversity}
In this subsection, we show that \MAPPD is \FPT with respect to $D$, the threshold of phylogenetic diversity, and $\Dbar := \apPD(X)-D$, the acceptable loss of phylogenetic diversity.

Let \Instance be an instance of \MAPPD. If there is an edge $e$ with $\w(e)\ge D$ and $k\ge 1$,  then for each offspring $x$ of $e$ we have $\apPD(\{x\})\ge\w(e)\ge D$, and so $\{x\}$ is a solution for \Instance.
So, we may assume that $\max_\w < D$.
Therefore, each edge $e$ can be subdivided~$\w(e)-1$ times in $\Oh(D\cdot m)$ time such that $\w'(e)=1$ for each edge $e$ of the new  network $\Net'$.
Bläser showed that \wpSC can be solved in $\Oh^*(2^{\Oh(D)})$ time when $\w(u)=1$ for each item $u\in\mathcal{U}$~\cite{blaeser}.
Consequently, with Theorem~\ref{thm:Net-MAPPD->PSC} and the result from Bläser we conclude the following.
\begin{corollary}
	\label{cor:Net-D}
	\MAPPD can be solved in $\Oh^*(2^{\Oh(D)})$ time.
\end{corollary}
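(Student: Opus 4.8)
The plan is to turn \MAPPD into an instance of \wpSC in which every item has weight one, and then to invoke Bläser's $\Oh^*(2^{\Oh(D)})$-time algorithm for that special case~\cite{blaeser}. First I would dispose of the trivial situation where $\max_\w \ge D$: if $k \ge 1$ and some edge $e$ carries weight at least $D$, then every taxon $x$ that is an offspring of $e$ already satisfies $\apPD(\{x\}) \ge \w(e) \ge D$, and such an $x$ exists because every vertex of a phylogenetic network reaches a leaf. Hence in this case we can immediately answer \yes, and from now on we may assume $\max_\w < D$.

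Next I would make all edge weights equal to one without changing the diversity of any taxon set. Concretely, each edge $e$ is replaced by a path of $\w(e)$ unit-weight edges (that is, subdivided $\w(e)-1$ times). Subdivision does not alter which leaves a vertex can reach, so an edge of the subdivided network $\Net'$ is affected by a taxon set $Z$ exactly when the original edge was, and the sum of the weights along the new path equals $\w(e)$; therefore $\apPDsub{\Net'}(Z) = \apPD(Z)$ for every $Z \subseteq X$. In particular the same threshold $D$ and budget $k$ describe an equivalent instance, and since $\max_\w < D$ the number of edges grows only by a factor $\Oh(D)$, keeping the instance size polynomial in the original input.

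Finally I would apply Theorem~\ref{thm:Net-MAPPD->PSC} to the unit-weight instance $(\Net', k, D)$. That reduction produces an equivalent instance $(\mathcal U, \mathcal F, \w', k', D')$ of \wpSC with $k' = k$, $D' = D$, and $\max_{\w'} = \max_{\w'(\Net')} = 1$, i.e. every item has weight one; it runs in time polynomial in $|\Net'|$ and hence polynomial in $D$ times the original size. Bläser's algorithm then solves this unit-weight \wpSC instance in $\Oh^*(2^{\Oh(D')}) = \Oh^*(2^{\Oh(D)})$ time, and the chain of equivalences yields the answer for the original \MAPPD instance. Since every ingredient is already available, there is no genuine obstacle here; the only points that need care are verifying that subdivision leaves $\apPD$ and the parameter $D$ untouched while blowing up the instance by at most a factor $\Oh(D)$, and checking that Theorem~\ref{thm:Net-MAPPD->PSC} indeed transfers the unit weights from edges to items so that Bläser's hypothesis is met.
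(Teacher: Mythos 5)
Your proposal is correct and follows the paper's own proof essentially verbatim: handle the trivial case $\max_\w \ge D$ via a single taxon, subdivide each edge $\w(e)-1$ times to obtain unit weights (which the paper's network definition permits, since in-degree-$1$/out-degree-$1$ vertices are allowed), then apply Theorem~\ref{thm:Net-MAPPD->PSC} to obtain a unit-weight \wpSC instance with $D'=D$ and solve it with Bläser's algorithm. The extra care you take in verifying that subdivision preserves $\apPD$ and only blows the instance up by a factor $\Oh(D)$ is exactly the (implicit) content of the paper's argument.
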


As \textsc{Set Cover} is a special case of \wpSC with $D = \sum_{u\in \mathcal{U}}\w(u)$, \wpSC is para-\NP-hard with respect to the dual $\sum_{u\in \mathcal{U}}\w(u) - D$.
Observe~$\Dbar > \kbar := |X| - k$.
By contrast, we show in the following that \MAPPD is \FPT with respect to \Dbar.
More precisely, we show the following.

\begin{theorem}
	\label{thm:Net-Dbar}
	\MAPPD can be solved in $\Oh(2^{\Dbar + \kbar + o(\Dbar)} \cdot n\log n)$~time.
\end{theorem}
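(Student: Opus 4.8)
The plan is to pass to the complementary \emph{extinction} formulation. Write $Y := X \setminus Z$ for the set of taxa that go extinct. Since every vertex of a phylogenetic network reaches at least one leaf, every edge is affected by $X$ and hence $\apPD(X) = \w(E)$; the diversity lost by saving $Z$ is exactly $\apPD(X) - \apPD(Z) = \w(T_Y)$, where $T_Y$ is the set of strictly affected edges of $Y$, i.e.\ the edges $e$ with $\off(e) \subseteq Y$. Thus $(\Net,k,D)$ is a \yes-instance if and only if there is a set $Y \subseteq X$ with $|Y| \ge \kbar$ and $\w(T_Y) \le \Dbar$. Because $T_{Y'} \subseteq T_Y$ whenever $Y' \subseteq Y$, the lost weight is monotone in $Y$, so it suffices to search for such a $Y$ of size exactly $\kbar$.

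The enabling structural observation is that the lost part of the network is small. Let $V_Y := \{ v \in V : \off(v) \subseteq Y \}$ be the set of lost vertices. Then $V_Y$ is closed under taking descendants, $Y \subseteq V_Y$, and $T_Y$ is precisely the set of edges whose head lies in $V_Y$. Assuming $Y \neq X$ (the case $Y = X$ forces $D \le 0$ and is trivial), the root is not lost, so every $v \in V_Y$ has at least one incoming edge and all of its incoming edges lie in $T_Y$. Mapping each lost vertex to one of its incoming edges is injective, whence $|V_Y| \le |T_Y| \le \w(T_Y) \le \Dbar$; in particular $\kbar = |Y| \le \Dbar$. Hence a solution is described entirely by a descendant-closed vertex set of size at most $\Dbar$ whose $\kbar$ leaves are the extinct taxa, and both its vertices and its entering edges number at most $\Dbar$.

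To turn this into an algorithm I would color-code the lost structure, following the template of Proposition~\ref{prop:PDD-kbar+indeg}. Using the universal sets of Theorem~\ref{thm:Net-universalSet}, I would build a family of Boolean colorings of $V(\Net)$ of order bounded by $|V_Y| + \kbar \le \Dbar + \kbar$, so that for some member of the family the vertices of a hypothetical $V_Y$ receive one colour and its relevant boundary receives the other; this produces $2^{\Dbar + \kbar + o(\Dbar)} \cdot \log n$ colorings, constructible within the claimed time. For each colouring I would solve a two-coloured auxiliary version of \MAPPD in polynomial time: with the tentative role of each vertex fixed, lostness propagates bottom-up along a reverse topological order (a tree vertex is lost iff all its children are lost, a reticulation iff its unique child is lost), so $V_Y$, $T_Y$, the extinct set $Y$ and $\w(T_Y)$ can be read off directly and tested against $\Dbar$ and $\kbar$. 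Returning \yes precisely when some colouring admits a feasible $Y$ is correct, since the universal-set property guarantees a matching colouring for any genuine solution, and the auxiliary problem is designed to tolerate the arbitrary colour values outside the relevant $\Oh(\Dbar + \kbar)$ vertices. Multiplying the number of colourings by the polynomial per-colouring cost yields the target running time $\Oh(2^{\Dbar + \kbar + o(\Dbar)} \cdot n \log n)$.

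The hard part will be the design and polynomial-time solution of the two-coloured auxiliary problem in the presence of reticulations. Unlike the tree or star settings, shared descendants mean that the offspring sets below different subnetworks overlap, so one must ensure that $\w(T_Y)$ counts each lost edge exactly once and, crucially, that the reconstructed lost set is genuinely descendant-closed and equal to $\{ v : \off(v) \subseteq Y \}$ — otherwise an extra, uncharged lost vertex would make the true loss exceed the estimate and break soundness. Arranging this while keeping the universal-set order at $\Dbar + \kbar$, rather than separately color-coding the up-to-$\Dbar$ lost edges (which would inflate the exponent), is the delicate point that pins the running time to $2^{\Dbar + \kbar + o(\Dbar)}$; I expect the verification that the bottom-up propagation over a general DAG respects the color budget to require the same care as the color-respectfulness conditions in the proof of Theorem~\ref{thm:timePD-DBar}.
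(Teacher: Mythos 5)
Your reduction to the extinction view, the bound $|V_Y| \le |T_Y| \le \w_\Sigma(T_Y) \le \Dbar$ (hence $\kbar \le \Dbar$), and the decision to use an $(n,\kbar+\Dbar)$-universal set are all exactly the paper's strategy. The genuine gap is the auxiliary two-colored problem, which you explicitly defer but which is the actual content of the proof. Your sketched per-coloring routine --- take the ``lost''-colored leaves as $Y$ and propagate lostness bottom-up --- cannot be completed as stated. A universal set only controls the coloring on a designated set $Z$ of at most $\kbar+\Dbar$ indices; in the coloring that matches a solution $Y^*$ on $Z$, arbitrarily many leaves \emph{outside} $Z$ may also carry the ``lost'' color. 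Reading $Y$ off deterministically then yields a strict superset of $Y^*$ whose loss can exceed $\Dbar$, so completeness fails; if instead you let the algorithm \emph{choose} a feasible subset of the ``lost''-colored leaves, you are back to (essentially) the original problem restricted to those leaves, and its polynomial-time solvability is precisely what you have not established. So your stated requirement that the subroutine ``tolerate the arbitrary colour values outside the relevant $\Oh(\Dbar+\kbar)$ vertices'' is the theorem to be proved, not an implementation detail; coloring all of $V(\Net)$ rather than only $X$ does not help with this and is anyway unnecessary.

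The paper closes exactly this gap with a specific design. It colors only the taxa red/green and calls $X\setminus S$ \emph{color-fitting} if it is all red and every vertex either has a green offspring, or has all offspring in $X\setminus S$, or none. Green taxa act as saved-offspring witnesses for the at most $\Dbar$ boundary vertices (one per strictly affected sibling edge), which is why order $\kbar+\Dbar$ suffices. The colored problem (\cMAPPD, Lemma~\ref{lem:Net-Dbar}) is then tractable for a structural reason your proposal lacks: after deleting every edge whose head has a green offspring and splitting in-degree-$0$ vertices, any color-fitting extinct set must take each connected component of the residual graph entirely or not at all, so the choice reduces to a \KP instance (weight $=$ total edge weight of the component, value $=$ number of its taxa, budget $\Dbar$, target $\kbar$), solvable in $\Oh(\Dbar\cdot m)$ time. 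This all-or-nothing component decomposition is the missing idea that simultaneously yields polynomial per-coloring time and robustness to the uncontrolled colors; your comparison to the valid-ordering machinery of Theorem~\ref{thm:timePD-DBar} correctly signals that the difficulty is real, but it does not resolve it.
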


To this end, we use the technique of color coding.
Recall that $\off(e)=\off(w)$ for each edge $e = vw$.
%
We define the following auxiliary problem,
in which we assign a \emph{color}, red or green, to each taxon.
A set~$Y\subseteq X$ is \emph{color-fitting} if each taxon $x\in Y$ is red and for each vertex~$v \in V(\Net)$, at least one of the following conditions is satisfied:
\begin{itemize}
	\item $v$ has a green offspring,
	\item all offspring of $v$ are in $Y$, or
	\item all offspring of $v$ are in $X \setminus Y$.
\end{itemize}
We define the colored version of \MAPPD as follows.
\problemdef{\cMAPPDlong (\cMAPPD)}
{A phylogenetic $X$-network \Net, integers $k$ and $D$, and a coloring on the taxa $c: X\to \{\text{red},\text{green}\}$}
{Is there a subset $S\subseteq X$ of taxa such that $|S|\le k$, $\apPD(S)\ge D$, and~$X\setminus S$ is color-fitting}

\begin{lemma} \label{lem:Net-Dbar}
	\cMAPPD can be solved in~$\Oh(\Dbar \cdot m)$~time.
\end{lemma}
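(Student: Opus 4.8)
The plan is to reduce \cMAPPD to a single instance of \KP. Throughout, write $Y := X \setminus S$ for the set of taxa allowed to go extinct, so that a solution is a \emph{color-fitting} set $Y$ with $|X \setminus Y| \le k$ and $\apPD(X \setminus Y) \ge D$. Since every edge of \Net is affected by $X$, we have $\apPD(X) = \sum_{e} \w(e)$, and an edge $e = uv$ fails to be affected by $S = X \setminus Y$ exactly when $\off(v) \subseteq Y$. Calling a vertex $v$ \emph{fully lost} when $\off(v) \subseteq Y$ and writing $W$ for the set of fully lost vertices, the lost diversity equals $\apPD(X) - \apPD(S) = \sum_{v \in W} \sum_{uv \in E} \w(uv)$, so that $\apPD(S) \ge D$ is equivalent to this sum being at most $\Dbar$. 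Hence the task becomes: find a color-fitting $Y$ whose set of extinct taxa (the leaves in $W$) has size at least $\kbar$ while the total weight of all edges entering $W$ is at most $\Dbar$.

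First I would establish the structural backbone. Let $V_G$ be the set of vertices with at least one green offspring; since $\off(v) \subseteq \off(u)$ whenever $u$ is an ancestor of $v$, the set $V_G$ is ancestor-closed and $V \setminus V_G$ is descendant-closed. A green taxon is always saved, so no vertex of $V_G$ is fully lost, giving $W \subseteq V \setminus V_G$; moreover every leaf of $V \setminus V_G$ is red, so the extinct taxa are exactly the leaves contained in $W$. The set $W$ is descendant-closed by definition of $\off$. The key point is upward closure inside $V \setminus V_G$: if $v \in V \setminus V_G$ has a descendant in $W$ but $v \notin W$, then $v$ has both an extinct offspring and a saved offspring yet no green offspring, contradicting color-fittingness at $v$. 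Combining descendant- and ancestor-closure inside $V \setminus V_G$ shows that $W$ is a union of weakly connected components of the induced subgraph $\Net[V \setminus V_G]$: following any undirected path inside a component and applying closure in whichever direction each edge points forces the whole component into $W$ once one of its vertices lies in $W$.

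Next I would argue that the components are independent, turning the problem into \KP. For a weakly connected component $C$ of $\Net[V \setminus V_G]$ define its cost $c_C := \sum_{v \in C} \sum_{uv \in E} \w(uv)$ and its gain $g_C$ as the number of leaves in $C$ (all red). Distinct components have disjoint vertex sets, so the edge sets $\{uv : v \in C\}$ have pairwise disjoint heads; the set of lost edges for any union of chosen components is therefore the disjoint union of these sets, and lost weight equals $\sum_C c_C$ over the chosen $C$. One also checks feasibility of losing any union of components: no $V_G$-vertex ever becomes fully lost, since it retains a child reaching its green offspring, and a non-chosen component $C'$ stays fully saved because all offspring of its vertices again lie in $C'$ (descendants remain in $V \setminus V_G$ and are weakly connected to their ancestor). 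Thus picking a set of components to lose is an unconstrained choice, and maximizing the number of extinct taxa subject to keeping lost weight within $\Dbar$ is exactly \KP with item set the components, cost $c_C$, value $g_C$, budget $B := \Dbar$, and target $\kbar$; the \cMAPPD instance is a \yes-instance iff this \KP instance is. Computing $V_G$ by reverse-topological propagation, the components, and all $c_C, g_C$ takes $\Oh(m)$ time; there are at most $n \le m+1$ components, and \KP with budget $\Dbar$ is solvable in $\Oh(\Dbar \cdot n) \subseteq \Oh(\Dbar \cdot m)$ time, yielding the claimed bound.

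The main obstacle is the structural characterization in the second paragraph: proving that color-fittingness forces $W$ to be a union of entire weakly connected components of $\Net[V \setminus V_G]$, and that these components behave independently with additive cost. Once this is in place, the reduction to \KP and the running-time analysis are routine. A secondary point requiring care is the bookkeeping direction — one must verify that $c_C$ captures \emph{all} lost edges (every edge whose head lies in $W$) rather than only the topmost boundary edges, and that $g_C$ counts each extinct leaf exactly once, which holds because each leaf belongs to a unique component.
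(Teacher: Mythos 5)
Your proof is correct and takes essentially the same route as the paper: both remove the edges $uv$ whose head $v$ has a green offspring, show that color-fittingness forces the set of fully lost vertices to be a union of weakly connected components of what remains, and then solve a \KP instance whose items are these components, with cost the total weight of edges entering the component, value the number of its (red) taxa, budget $\Dbar$, and target $\kbar$. Your induced-subgraph formulation on $V \setminus V_G$, with boundary-edge weights charged to the head's component, implicitly achieves what the paper does explicitly by splitting in-degree-$0$ vertices of its auxiliary graph $G'$ into one copy per child, so the two decompositions coincide component for component.
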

\begin{proof}
	\proofpara{Algorithm}
	Let $\Instance := (\Net:=(V,E,\w),k,D,c)$ be an instance of \cMAPPD.
	Delete all edges~$uv$ for which~$v$ has a green offspring, and then delete all isolated vertices.
	For any vertex~$u$ with an in-degree of~$0$ with children~$v_1, \dots, v_q$, replace~$u$ with~$q$ vertices~$u_1, \dots, u_q$ and add an edge~$u_iv_i$ with a weight of~$\w(uv_i)$, for each~$i \in [q]$.
	Each in-degree-$0$ vertex has one child, now.
	Let $G'$ be the resulting graph.
	An example of this transformation is depicted in Figure~\ref{fig:Net-transformation}.
	
	\begin{figure}[t]
		\centering
		\begin{tikzpicture}[scale=0.8,every node/.style={scale=0.7}]
			\draw[-{Stealth[length=6pt]}] (5,8) -> (7,7);
			\draw[-{Stealth[length=6pt]}] (5,8) -> (5,7.4);
			\draw[-{Stealth[length=6pt]}] (5,8) -> (2.5,7);
			
			\draw[-{Stealth[length=6pt]}] (2.5,7) -> (2,6);
			\draw[-{Stealth[length=6pt]}] (2.5,7) -> (3.45,6);
			\draw[-{Stealth[length=6pt]}] (2.5,7) -> node[above] {5} (4.25,6.5);
			
			\draw[-{Stealth[length=6pt]}] (5,7.4) -> node[left] {4} (4.25,6.5);
			\draw[-{Stealth[length=6pt]}] (5,7.4) -> (6,6.5);
			
			\draw[-{Stealth[length=6pt]}] (7,7) -> (6,6.5);
			\draw[-{Stealth[length=6pt]}] (7,7) -> node[left] {6} (7,6);
			
			\draw[-{Stealth[length=6pt]}] (6,6.5) -> (6,5.8);
			\draw[-{Stealth[length=6pt]}] (4.25,6.5) -> (4.25,5.3);

			\draw[-{Stealth[length=6pt]}] (3.45,6) -> node[left] {2} (4.25,5.3);
			\draw[-{Stealth[length=6pt]}] (3.45,6) -> node[left] {3} (3.45,4.8);
			\draw[-{Stealth[length=6pt]}] (3.45,6) -> (2.6,5.3);
			\draw[-{Stealth[length=6pt]}] (2,6) -> (2.6,5.3);
			\draw[-{Stealth[length=6pt]}] (2,6) -> (2,4.8);
			\draw[-{Stealth[length=6pt]}] (4.25,5.3) -> (4.25,4.5);
			\draw[-{Stealth[length=6pt]}] (2.6,5.3) -> (2.6,4.5);

			\draw[-{Stealth[length=6pt]}] (6,5.8) -> (5.3,5.3);
			\draw[-{Stealth[length=6pt]}] (6,5.8) -> (6,5);
			\draw[-{Stealth[length=6pt]}] (6,5.8) -> (6.5,5.3);
			\draw[-{Stealth[length=6pt]}] (7,6) -> (6.5,5.3);
			\draw[-{Stealth[length=6pt]}] (7,6) -> (7,5);
			
			\draw[-{Stealth[length=6pt]}] (6.5,5.3) -> (6.5,4.5);
			\draw[-{Stealth[length=6pt]}] (5.3,5.3) -> (5,4.5);
			\draw[-{Stealth[length=6pt]}] (5.3,5.3) -> (5.5,4.5);
			
			\node[draw,fill=red,inner sep=2pt,circle] at (2,4.7) {};
			\node[draw,fill=green,inner sep=2pt,circle] at (2.6,4.4) {};
			\node[draw,fill=red,inner sep=2pt,circle] at (3.45,4.7) {};
			\node[draw,fill=red,inner sep=2pt,circle] at (4.25,4.4) {};
			
			\node[draw,fill=red,inner sep=2pt,circle] at (5,4.4) {};
			\node[draw,fill=red,inner sep=2pt,circle] at (5.5,4.4) {};
			\node[draw,fill=green,inner sep=2pt,circle] at (6,4.9) {};
			\node[draw,fill=red,inner sep=2pt,circle] at (6.5,4.4) {};
			\node[draw,fill=red,inner sep=2pt,circle] at (7,4.9) {};
			
			\node at (6.5,8) {(1)};
			
			\path[draw] (7.5,8.2) -> (7.5,4.1);
		\end{tikzpicture}
		\begin{tikzpicture}[scale=0.8,every node/.style={scale=0.7}]
			\draw[-{Stealth[length=6pt]}] (2.5,7) -> node[above] {5} (4.25,6.5);
			
			\draw[-{Stealth[length=6pt]}] (5,7.4) -> node[left] {4} (4.25,6.5);
			
			\draw[-{Stealth[length=6pt]}] (7,7) -> node[left] {6} (7,6);
			
			\draw[-{Stealth[length=6pt]}] (4.25,6.5) -> (4.25,5.3);

			\draw[-{Stealth[length=6pt]}] (3.45,6) -> node[left] {2} (4.25,5.3);
			\draw[-{Stealth[length=6pt]}] (3.45,6) -> node[left] {3} (3.45,4.8);
			\draw[-{Stealth[length=6pt]}] (2,6) -> (2,4.8);
			\draw[-{Stealth[length=6pt]}] (4.25,5.3) -> (4.25,4.5);

			\draw[-{Stealth[length=6pt]}] (6,5.8) -> (5.3,5.3);
			\draw[-{Stealth[length=6pt]}] (6,5.8) -> (6.5,5.3);
			\draw[-{Stealth[length=6pt]}] (7,6) -> (6.5,5.3);
			\draw[-{Stealth[length=6pt]}] (7,6) -> (7,5);
			
			\draw[-{Stealth[length=6pt]}] (6.5,5.3) -> (6.5,4.5);
			\draw[-{Stealth[length=6pt]}] (5.3,5.3) -> (5,4.5);
			\draw[-{Stealth[length=6pt]}] (5.3,5.3) -> (5.5,4.5);
			
			\node[draw,fill=red,inner sep=2pt,circle] at (2,4.7) {};
			\node[draw,fill=red,inner sep=2pt,circle] at (3.45,4.7) {};
			\node[draw,fill=red,inner sep=2pt,circle] at (4.25,4.4) {};
			
			\node[draw,fill=red,inner sep=2pt,circle] at (5,4.4) {};
			\node[draw,fill=red,inner sep=2pt,circle] at (5.5,4.4) {};
			\node[draw,fill=red,inner sep=2pt,circle] at (6.5,4.4) {};
			\node[draw,fill=red,inner sep=2pt,circle] at (7,4.9) {};
			
			\node at (6.5,8) {(2)};
			
			\path[draw] (7.5,8.2) -> (7.5,4.1);
		\end{tikzpicture}
		\begin{tikzpicture}[scale=0.8,every node/.style={scale=0.7}]
			\draw[-{Stealth[length=6pt]}] (2,5.8) to (2,4.8);
			\draw[-{Stealth[length=6pt]}] (2.9,5.8) to node[left] {3} (3.45,4.8);
			\draw[-{Stealth[length=6pt]}] (3.5,6.3) to node[left] {2} (4.25,5.3);
			\draw[-{Stealth[length=6pt]}] (3.75,7.5) to node[left] {5} (4.25,6.5);
			\draw[-{Stealth[length=6pt]}] (4.75,7.5) to node[right] {4} (4.25,6.5);
			\draw[-{Stealth[length=6pt]}] (5.3,6.3) to (5.3,5.3);
			\draw[-{Stealth[length=6pt]}] (6.5,6.3) to (6.5,5.3);
			\draw[-{Stealth[length=6pt]}] (7,7) to node[left] {6} (7,6);
			
			\draw[-{Stealth[length=6pt]}] (4.25,6.5) -> (4.25,5.3);
			\draw[-{Stealth[length=6pt]}] (4.25,5.3) -> (4.25,4.5);
			
			\draw[-{Stealth[length=6pt]}] (7,6) -> (6.5,5.3);
			\draw[-{Stealth[length=6pt]}] (7,6) -> (7,5);
			
			\draw[-{Stealth[length=6pt]}] (6.5,5.3) -> (6.5,4.5);
			\draw[-{Stealth[length=6pt]}] (5.3,5.3) -> (5,4.5);
			\draw[-{Stealth[length=6pt]}] (5.3,5.3) -> (5.5,4.5);
			
			\node[draw,fill=red,inner sep=2pt,circle] at (2,4.7) {};
			\node[draw,fill=red,inner sep=2pt,circle] at (3.45,4.7) {};
			\node[draw,fill=red,inner sep=2pt,circle] at (4.25,4.4) {};
			
			\node[draw,fill=red,inner sep=2pt,circle] at (5,4.4) {};
			\node[draw,fill=red,inner sep=2pt,circle] at (5.5,4.4) {};
			\node[draw,fill=red,inner sep=2pt,circle] at (6.5,4.4) {};
			\node[draw,fill=red,inner sep=2pt,circle] at (7,4.9) {};
			
			\node at (4.5,6) {$C_1$};
			\node at (5.55,6) {$C_2$};
			
			\node at (6.5,8) {(3)};
		\end{tikzpicture}
		\caption{In this figure, an example for the transformation in the proof of Lemma~\ref{lem:Net-Dbar} is given.
			A hypothetical network \Net with a coloring is given in (1).
			In (2), the graphs $G$, and in (3) the graph $G'$ is depicted.
			Some edges are labeled (to increase readability).
			Unlabeled edges have weight~1.
			The connected components $C_1$ and $C_2$ have weight~$13$ and~$3$ and value~$1$ and~$2$, respectively.}
		\label{fig:Net-transformation}
	\end{figure}
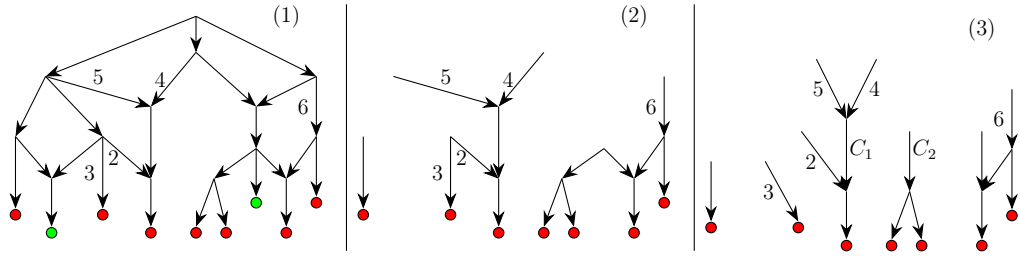%

	Compute the set of connected components of the underlying graph of $G'$.
	For connected components $C = (V_C,E_C)$, proceed as follows.
	Define an item $I_C$ with a \emph{weight} of~$\w(E_C)$ and a \emph{value} of~$|Y_C|$, where $Y_C$ is the set of taxa in $V_C$.

	Let~$M$ be the set of these items.
	Now, return \yes if there is a subset of items in~$M$ whose total weight is at most $\Dbar$ and whose total value is at least $\kbar=|X|-k$, and \no, otherwise.
	Here, \kbar and \Dbar are taken from the original network~$\Net$.	
	Observe that this can be determined by solving an instance of \KP with a set of items~$M$, budget~$\Dbar$, and target value $\kbar$.
	This can be done in~$\Oh(\Dbar \cdot |M|) \in \Oh(\Dbar \cdot |X|)$ time~\cite{weingartner,rehs}.

	\proofpara{Correctness}
	We show first that if $\Instance$ is a \yes-instance of \cMAPPD, then the algorithm return~\yes and secondly we show the converse.
	
	Assume that \Instance is a \yes-instance of \cMAPPD.
	Thus, there is a set~$S \subseteq X$ of size at most~$k$ with $\apPD(S) \ge D$ and $X\setminus S$ is color-fitting.
	We claim that~$X\setminus S$ is also color-fitting in $G'$.
	Indeed, suppose for a contradiction that this is not the case.
	Then, there exists some vertex $v$ in $G'$ that $v$ has an offspring in $X\setminus S$ and an offspring in $S$.
	Let~$v$ be a lowest vertex satisfying this condition.
	Then, $v$ does not have an in-degree of~$0$ in~$G'$, as in this case $v$ has a unique child with the same offspring.
	Furthermore, $v$ does not have any green offspring in~$\Net$, as the incoming edges of $v$ were not deleted in the construction of $G'$.
	Then, in fact, $v$ has the same offspring in $\Net$ as in $G'$, implying that $X\setminus S$ is not color-fitting.

	Since $X \setminus S$ is color-fitting in $G'$, and $G'$ has no green taxa, every vertex $v$ in $G$ satisfies $\off(v) \subseteq X\setminus S$ or $\off(v) \subseteq S$.
	Each edge $e$ in $\Net$ that is not affected by $S$ is strictly affected by $X\setminus S$.
	Because $S$ is a solution we conclude each taxon in $X\setminus S$ is red and so
	$e$ is still an edge in $G'$.
	Let $C_e$ be the connected component of the underlying undirected graph of $G'$ that contains $e$.
	Observe that every edge in $C_e$ is also strictly affected by $X \setminus S$---indeed, for any edge $uv$ in $C$, $u$ has all its offspring in $X\setminus S$ if and only if $v$ has all its offspring in $X\setminus S$.
	Thus, $C_e$ satisfies the conditions to be in~$M$ for each edge $e$ that is not affected by $S$.
	Let $C_1,\dots,C_t$ be the unique connected components that contain the edges that are not affected by $S$.
	We now conclude\lb that~$\w(C_1 \cup \dots \cup C_t) \le \Dbar$ and~$C_1 \cup \dots \cup C_t$ contain the leaves $X\setminus S$, which are at least $\kbar$.
	Hence, $I_{C_1},\dots,I_{C_t}$ is a solution for the \KP-instance and the algorithm returns \yes.
	
	For the converse, assume that the algorithm returns \yes and let $I_{C_1},\dots,I_{C_t}$ be a solution for the \KP-instance.
	Let $Y_i$ be the set of taxa of $C_i$ and define the set~$Y := \bigcup_{i=1}^t Y_i$.
	We prove that $S := X \setminus Y$ is a solution for the instance~\Instance of \cMAPPD. 
	By the construction, we conclude that~$Y_i$ are colored red and $Y_i$ and $Y_j$ are disjoint for any $i\neq j$.
	Each vertex $v \in \Net$ that has some but not all offspring in $Y_i$ has at least one green offspring.
	Consequently, $Y$ is color-fitting.
	Further, for any edge $e=uv$ that is strictly affected by $Y$ in $\Net$, we have that $v$  has no green offspring, and therefore $e$ is not deleted in the construction of~$G'$.
	Moreover, as $v$ has offspring in $Y_i$ for some $i \in [t]$, we conclude that~$e$ is in~$C_i$.
	Because $\sum_{i=1}^t \w(E_{C_i}) \le \Dbar$, we conclude that the phylogenetic diversity of~$S$ is~$\apPD(S) = \apPD(X) - \sum_{i=1}^t \w(E_{C_i}) \ge \apPD(X) - \Dbar = D$.
	Likewise,\lb because~$\sum_{i=1}^t |Y_i| \ge \kbar$, we conclude~$|S| = |X| - \sum_{i=1}^t |Y_i| \le |X| - \kbar = k$.

	\proofpara{Running Time}
	The graph $G'$ can be computed from $\Net$ in~$\Oh(m)$~time.
	The connected components of the underlying graph of $G'$ can be computed in~$\Oh(m)$~time as well.
	For each connected component~$C = (V_C,E_C)$ the item~$I_C$ of~$N$ is computed in~$\Oh(|E_C|)$~time.
	Consequently, we can compute~$N$ in~$\Oh(m)$~time.
	As the instance of \KP can be solved in~$\Oh(\Dbar \cdot |X|)$~time~\cite{weingartner,rehs}, we have an overall running time of $\Oh(m + \Dbar \cdot |X|) \in \Oh(\Dbar \cdot m)$.
\end{proof}

To show that \MAPPD is \FPT with respect to \Dbar, we present an reduction from \MAPPD to \cMAPPD using standard color coding techniques.
In particular, we show that there exists a family ${\cal F}$ of $2$-colorings $c:E\to \{\text{red},\text{green}\}$, with~$|{\cal F}|$ bounded by a function of $\Dbar$ times a polynomial in~$n$, such that $(\Net, k, D)$ is a \yes-instance of \MAPPD if and only if $(\Net,k,D,c)$ is a \yes-instance of \cMAPPD for some $c \in {\cal F}$.

Recall that an~$(n,k)$-universal set is a family ${\cal U}$ of subsets of $[n]$ such that for any~$S\subseteq [n]$ of size $k$, $\{A \cap S \mid A \in {\cal U}\}$ contains all $2^k$ subsets of $S$.
For any $n,k \geq 1$, one can construct an $(n,k)$-universal set of size~$2^k k^{\Oh(\log k)}\log n$ in time~$2^k k^{\Oh(\log k)}n\log n$~\cite{Naor1995SplittersAN}.

\begin{proof}[Proof of Theorem~\ref{thm:Net-Dbar}]
	\proofpara{Algorithm}
	Let $\Instance := (\Net:=(V,E,\w),k,D)$ be an instance of \MAPPD.
	Arbitrarily order the taxa $x_1,\dots, x_n$. 
	Construct an~$(n,\kbar+\Dbar)$-universal set $\mathcal{U}$.
	
	Now, for each $A \in \mathcal{U}$, construct a $2$-coloring $c_A: X\to \{\text{red},\text{green}\}$ where $x_i$ is colored green if and only if $i \in A$, and solve \cMAPPD on $(\Net, k, D, c_A)$.
	Return \yes, if~$(\Net, k, D, c_A)$ is a \yes-instance for some~$A \in \mathcal{U}$.
	Otherwise, return \no.
	
	\proofpara{Correctness}
	First observe that if $(\Net, k, D, c)$ is a \yes-instance of \cMAPPD for any coloring $c: X\to \{\text{red},\text{green}\}$, then $(\Net,k,D)$ is also a \yes-instance of \MAPPD.
	
	Now, suppose $(\Net,k,D)$ is a \yes-instance.
	Let $S\subseteq X$ be a subset of taxa of size at most~$k$ and with~$\apPD(S)\ge D$.
	If necessary, add taxa to~$S$ until~$|S| = k$.
	Consequently, $X\setminus S$ has a size of $\kbar$.
	Let $V_Y$ be the set of vertices~$u$ of \Net which have an offspring~$x_u$ in~$S$ and have a child~$v$ with~$\off(v) \subseteq X\setminus S$.
	Define~$Y := \{ x_u \mid u\in V_Y \}$.
	Observe that if we can define a coloring which colors the taxa in~$X\setminus S$ in red and the taxa in~$Y$ in green, then~$X\setminus S$ would be color-fitting.
	
	Define an operation $\Index : 2^X \to 2^{[n]}$ by~$\Index(X') := \{ i \mid x_i \in X' \}$ for sets~$X' \subseteq X$.
	
	For each~$u\in V_Y$ and each child~$v$ of~$u$ with~$\off(v) \subseteq X\setminus S$,
	the edge~$uv$ is strictly affected by~$X\setminus S$.
	We conclude $|Y| \le |V_Y| \le \sum_{u\in V_Y} \w(uv) \le \Dbar$.
	Therefore, $Z := Y \cup (X\setminus S) \subseteq X$ is a set of size at most~$\kbar + \Dbar$.
	If necessary, add taxa until~$Z$ has a size of~$\Dbar+\kbar$.
	Consequently, there is a set~$A\in \mathcal{U}$ with~$A \cap \Index(Z) = \Index(Y)$.
	So, $S$ is a solution for the instance~$(\Net, k, D, c_A)$ of \cMAPPD.
	
	\proofpara{Running Time}
	The construction of $\mathcal{U}$ takes $2^{\Dbar + \kbar + \Oh(\log^2 (\Dbar))} n\log n$ time, and for each of the $2^{\Dbar + \kbar + \Oh(\log^2 (\Dbar))}\log n$ sets in $\mathcal{U}$ we solve an instance of \cMAPPD.
	This can be done in $\Oh(\Dbar \cdot m)$~time by Lemma~\ref{lem:Net-Dbar}.
	
	The overall running time is~$\Oh(2^{\Dbar + \kbar + o(\Dbar)} \cdot n\log n)$.
\end{proof}

\subsection{Proximity to Trees}
\label{sec:Net-FPTtreelike}
\MAPPD can be solved in polynomial time with Faith's Greedy-Algorithm, if the given network is a tree \cite{FAITH1992,steel}.
Therefore, in this subsection, we examine \MAPPD with respect to two parameters that classify the network's proximity to a tree, the number of reticulations $\ret$ and the smaller parameter treewidth $\twN$.

\begin{theorem}
	\label{thm:Net-ret}
	\MAPPD can be solved in $\Oh(2^{\ret} \cdot k \cdot m)$ time.
\end{theorem}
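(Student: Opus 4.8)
The plan is to exploit that the network is close to a tree: after guessing, for each of the $\ret$ reticulations, whether the sought set contains a taxon below it, the remaining optimization collapses to a tree problem solvable by dynamic programming. Formally, I would first rewrite the objective. For a candidate set $Y\subseteq X$ and a vertex $v$, write $a(v)=1$ if $\off(v)\cap Y\neq\emptyset$ and $a(v)=0$ otherwise; then an edge $uv$ is affected by $Y$ exactly when $a(v)=1$, so $\apPD(Y)=\sum_{v} a(v)\cdot W^-(v)$, where $W^-(v)$ is the total weight of the edges entering $v$. Since every vertex satisfies $a(v)=\bigvee_{c} a(c)$ over its children $c$, the only places where this recurrence is not tree-like are the reticulations, where several paths merge.

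I would therefore branch over all $2^{\ret}$ maps $g\colon R\to\{0,1\}$, with $R$ the set of reticulations and $g(r)$ the guessed value of $a(r)$. Because every non-reticulation vertex has in-degree at most one, deleting all edges entering reticulations turns $\Net$ into a \emph{forest} whose tree roots are $\rho$ together with the reticulations; moreover every taxon, having in-degree one, lies in exactly one of these trees, so the guess partitions the taxa. For a fixed $g$ the contribution of the deleted edges is determined: each reticulation $r$ with $g(r)=1$ contributes $W^-(r)$. Inside the forest the guess acts as a set of boundary conditions: a forest vertex $w$ with a reticulation child $r'$ such that $g(r')=1$ is \emph{forced} to have $a(w)=1$, and this forcing propagates upward in its tree. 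Consistency then demands, for a reticulation root $r$, that the value of $a(r)$ computed inside its tree $T_r$ (from selected taxa and forced vertices) equals $g(r)$: a branch with $g(r)=0$ but some forced vertex in $T_r$ is discarded at once, a branch with $g(r)=1$ merely requires $T_r$ to contain a selected taxon or a forced vertex, and $g(r)=0$ forbids selecting any taxon of $T_r$.

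It then remains, for each guess, to choose at most $k$ taxa --- at least one inside each tree whose root must be ``on'' and none inside each tree whose root must be ``off'' --- maximizing $\sum_v a(v)W^-(v)$ over the forest edges plus the fixed reticulation bonus. I would solve this with the standard bottom-up phylogenetic-diversity dynamic program, augmented by an on/off bit at each vertex and processing children through partial subtrees (as in the tree algorithms elsewhere in the thesis). An entry records, for a vertex $v$ and a budget $j\le k$, the best forest-edge weight under each of the cases $a(v)=0$ and $a(v)=1$, with the forcing bits and root constraints enforced as the recurrence passes each vertex; the per-tree optima are then combined under the global budget $k$ by a knapsack-type merge. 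Summed over the $2^{\ret}$ guesses this yields the claimed bound, provided each guess is processed in $\Oh(k\cdot m)$ time.

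Correctness is essentially the reformulation together with the observation that cutting reticulation in-edges yields a forest with partitioned taxa in which $g$ decouples the trees, so that the $a$-values forced by a consistent selection are exactly those the dynamic program computes and each affected edge is charged once, to its head. The main obstacle I anticipate is the running-time bookkeeping: a naive combination of children and of trees introduces a second factor of $k$ from the budget convolution, so to reach the single-$k$ bound $\Oh(2^{\ret}\cdot k\cdot m)$ one must exploit that the diversity-versus-budget profiles are concave (diminishing returns), making each max-plus merge linear, and verify that the on/off constraints and the forcing preserve this concavity. A secondary point is handling reticulations of in-degree larger than two so the exponent stays $\ret$ rather than $\eret$ --- which is precisely why the branching is over reticulations and not over reticulation edges.
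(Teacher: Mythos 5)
Your proposal is correct in substance but takes a genuinely different route from the paper. The paper's proof is a recursive branching algorithm: at a reticulation $v$ it branches into $\leave(\Instance,v)$, which deletes the edges strictly affected by $\off(v)$, and $\take(\Instance,v)$, which performs surgery on the network---merging all ancestors of $v$ into the root and attaching the child $u$ of $v$ by a single edge of weight $\w(E^{(\uparrow vu)})+\Dbar$ while raising the target to $D+\Dbar$, so that any solution avoiding $\off(v)$ automatically misses the new target. Since each branch removes at least one reticulation and adds none, the search tree has $\Oh(2^{\ret})$ nodes, and every leaf instance is a single phylogenetic tree solved by Faith's greedy algorithm in $\Oh(k\cdot m)$ time; no cross-component budget allocation ever arises. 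You instead guess all reticulation states at once, cut the reticulation in-edges to get a forest, and solve an explicitly constrained allocation problem. Your consistency conditions are right (circular support is impossible: in a DAG any chain of ``forced'' justifications descends in topological order and ends at a tree with a selected taxon), but the approach shifts the difficulty into the budget merge---exactly the obstacle you flag. That obstacle is closable: restricted to a tree, $\apPD$ is a coverage function, hence submodular, so greedy gains are non-increasing and greedy is exact at every budget, making each per-tree profile concave; your forcing merely zeroes the weights of already-covered edges, which preserves the tree-coverage structure, and mandatory trees contribute their first greedy increment compulsorily, so merging sorted increment lists costs $\Oh(k\cdot m)$ per guess. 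The trade-off: the paper's $+\Dbar$ weight trick encodes the take-constraint inside the instance so plain greedy applies verbatim (at the price of a more delicate surgery-correctness argument), whereas your version leaves the instance untouched and transparent but requires the concavity/merging lemma the paper never needs.
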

Observe that by Corollary~\ref{cor:Net-X}, \MAPPD can not be solved in $O^*(2^{\epsilon\cdot \ret})$ time for any $\epsilon<1$, unless \SETH fails.
Therefore, the running time of this theorem is tight, to some extent.
\begin{proof}
	\proofpara{Algorithm}
	For a reticulation $v$ in a network \Net with child $u$, let $E^{(\uparrow vu)}$ be the set of edges of \Net that are between two vertices of $\anc(v) \cup \{u\}$.
	Recall that~$\off(e)\subseteq X$ is the set of offspring of $w$ for an edge $e = vw$ and
	the strictly affected edges~$T_Y$ for a set of taxa $Y\subseteq X$ is the set of edges $e$ with $\off(e)\subseteq Y$.
	Define two operations, called \take and \leave, that for an instance $\Instance = (\Net,k,D)$ and a reticulation~$v$ of~\Net return another instance of \MAPPD.
	Every subset of taxa $Y$ that \emph{does} contain an offspring of $v$ should be a solution for \Instance if and only if $Y$ is a solution for $\take(\Instance,v)$.
	Similarly, every subset of taxa $Y$ that \emph{does not} contain an offspring of $v$ should be a solution for \Instance if and only if $Y$ is a solution for $\leave(\Instance,v)$.

	We define $\leave(\Instance,v)$ to be the instance $\Instance'=(\Net',k,D)$ of \MAPPD, in which~$k$ and~$D$ are unchanged and $\Net'$ is the  network that results from deleting the edges~$T_{\off(v)}$ and the resulting isolated vertices from \Net.
	Recall that $\Dbar := \sum_{e\in E} \w(e)-D$.
	We define $\take(\Instance,v)$ to be the instance $\Instance'=(\Net',k,D')$ of \MAPPD with an unchanged~$k$ and~$D':=D+\Dbar$.
	Here, $\Net'$ is the network that results from \Net by deleting the edges~$E^{(\uparrow vu)}$, merging all the ancestors of $v$ to a single vertex $\rho$, adding an edge $\rho u$, and setting the weight of $\rho u$ to $\w(E^{(\uparrow vu)})+\Dbar$.
	For each vertex $w\ne u$ with~$t\ge 1$\lb parents~$u_1,\dots,u_t$ in $\anc(v)$, we add an edge $\rho w$ that has a weight of $\sum_{i=1}^t \w(u_i w)$.
	Observe that $\apPDsub{\Net'}(X) = \apPD(X) + \Dbar$.
	Figure~\ref{fig:Net-leave+take} depicts an example of the operations \take and \leave.
	\begin{figure}[t]
		\centering
		\begin{tikzpicture}[scale=0.8,every node/.style={scale=0.7}]
			\draw[-{Stealth[length=6pt]}] (5,8) -> (7,7);
			\draw[-{Stealth[length=6pt]}] (5,8) -> node[left] {3} (4.5,7);
			\draw[-{Stealth[length=6pt]}] (5,8) -> (3,7);
			\draw[-{Stealth[length=6pt]}] (5,8) -> node[right] {2} (6,6);
			
			\draw[-{Stealth[length=6pt]}] (3,7) -> node[left] {2} (2,6);
			\draw[-{Stealth[length=6pt]}] (3,7) -> node[left] {3} (3.75,6);
			\draw[-{Stealth[length=6pt]}] (4.5,7) -> node[right] {4} (4.5,6);
			\draw[-{Stealth[length=6pt]}] (4.5,7) -> node[above] {2} (6,6);
			\draw[-{Stealth[length=6pt]}] (4.5,7) -> (3.75,6);
			\draw[-{Stealth[length=6pt]}] (7,7) -> (6,6);
			\draw[-{Stealth[length=6pt]}] (7,7) -> (7,6);
			
			\draw[-{Stealth[length=6pt]}] (2,6) -> (2,5);
			\draw[-{Stealth[length=6pt]}] (2,6) -> (3,5);
			\draw[-{Stealth[length=6pt]}] (2,6) -> node[above] {2} (3.75,5.4);
			\draw[-{Stealth[length=6pt]}] (3.75,6) -> (3.75,5.4);
			\draw[-{Stealth[length=6pt]}] (4.5,6) -> (4.5,5);
			\draw[-{Stealth[length=6pt]}] (4.5,6) -> (6,5.4);
			\draw[-{Stealth[length=6pt]}] (6,6) -> (6,5.4);
			
			\draw[-{Stealth[length=6pt]}] (3.75,5.4) -> (3.75,4.8);
			\draw[-{Stealth[length=6pt]}] (6,5.4) -> (6,4.8);
			
			\node at (6.5,8) {(1)};
			\node at (4,6) {$v$};
			\node at (4,5.4) {$u$};
			\node at (2.5,8) {$k=3$};
			\node at (2.5,7.5) {$D=28$};
			
			\path[draw] (7.5,8) -> (7.5,4.8);
		\end{tikzpicture}
		\begin{tikzpicture}[scale=0.8,every node/.style={scale=0.7}]
			\draw[-{Stealth[length=6pt]}] (5,8) -> (7,7);
			\draw[-{Stealth[length=6pt]}] (5,8) -> node[left] {3} (4.5,7);
			\draw[-{Stealth[length=6pt]}] (5,8) -> (3,7);
			\draw[-{Stealth[length=6pt]}] (5,8) -> node[right] {2} (6,6);
			
			\draw[-{Stealth[length=6pt]}] (3,7) -> node[left] {2} (2,6);
			\draw[-{Stealth[length=6pt]}] (4.5,7) -> node[left] {4} (4.5,6);
			\draw[-{Stealth[length=6pt]}] (4.5,7) -> node[above] {2} (6,6);
			\draw[-{Stealth[length=6pt]}] (7,7) -> (6,6);
			\draw[-{Stealth[length=6pt]}] (7,7) -> (7,6);
			
			\draw[-{Stealth[length=6pt]}] (2,6) -> (2,5);
			\draw[-{Stealth[length=6pt]}] (2,6) -> (3,5);
			\draw[-{Stealth[length=6pt]}] (4.5,6) -> (4.5,5);
			\draw[-{Stealth[length=6pt]}] (4.5,6) -> (6,5.4);
			\draw[-{Stealth[length=6pt]}] (6,6) -> (6,5.4);
			
			\draw[-{Stealth[length=6pt]}] (6,5.4) -> (6,4.8);
			
			\node at (6.5,8) {(2)};
			\node at (2.5,8) {$k=3$};
			\node at (2.5,7.5) {$D=28$};
			
			\path[draw] (7.5,8) -> (7.5,4.8);
		\end{tikzpicture}
		\begin{tikzpicture}[scale=0.8,every node/.style={scale=0.7}]
			\draw[-{Stealth[length=6pt]}] (5,8) -> (7,7);
			\draw[-{Stealth[length=6pt]}] (5,8) -> node[right] {4} (6,6);
			\draw[-{Stealth[length=6pt]}] (5,8) -> node[right] {2} (2,6);
			\draw[-{Stealth[length=6pt]}] (5,8) -> node[right] {4} (4.5,6);
			\draw[-{Stealth[length=6pt]}] (5,8) -> node[left] {12} (3.75,5.4);
			
			\draw[-{Stealth[length=6pt]}] (7,7) -> (6,6);
			\draw[-{Stealth[length=6pt]}] (7,7) -> (7,6);
			
			\draw[-{Stealth[length=6pt]}] (2,6) -> (2,5);
			\draw[-{Stealth[length=6pt]}] (2,6) -> (3,5);
			\draw[-{Stealth[length=6pt]}] (2,6) -> node[above] {2} (3.75,5.4);
			\draw[-{Stealth[length=6pt]}] (4.5,6) -> (4.5,5);
			\draw[-{Stealth[length=6pt]}] (4.5,6) -> (6,5.4);
			\draw[-{Stealth[length=6pt]}] (6,6) -> (6,5.4);
			
			\draw[-{Stealth[length=6pt]}] (3.75,5.4) -> (3.75,4.8);
			\draw[-{Stealth[length=6pt]}] (6,5.4) -> (6,4.8);
			
			\node at (6.5,8) {(3)};
			\node at (5.8,6) {$w$};
			\node at (2.5,8) {$k=3$};
			\node at (2.5,7.5) {$D=31$};
		\end{tikzpicture}
		\caption{In this figure, an example for the usage of $\leave$ and $\take$ is given.
			A hypothetical instance \Instance is given in (1).
			Here, the value of $\Dbar$ is $3$.
			In (2), the instance~$\leave(\Instance,v)$, and in (3) the instance $\take(\Instance,v)$ is depicted.
			Unlabeled edges have a weight of~1.
			Observe in (3), the weight of the edge $\rho w$ is 4, as $w$ has two edges from ancestors of~$v$ in \Instance which have a weight of 2 each.
			The weight of $\rho u$ is $12$, as in \Instance the edges of~$E^{(\uparrow vu)}$ have a combined weight of 9.}
		\label{fig:Net-leave+take}
	\end{figure}
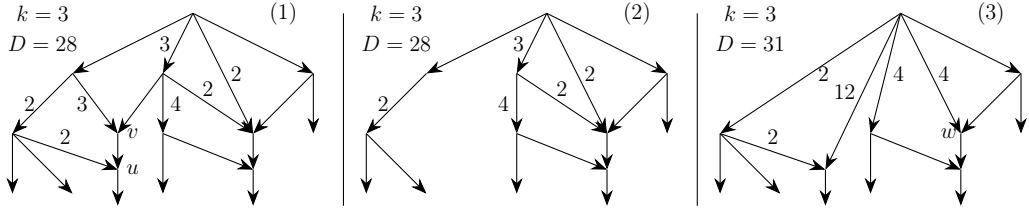%

	Now, we are at the position to define the branching algorithm. Let $\Instance=(\Net,k,D)$ be an instance of \MAPPD.
	If $\Net$ is a phylogenetic tree, solve the instance \Instance with Faith's Algorithm~\cite{steel,FAITH1992}.
	Otherwise, let $v$ be a reticulation of $\Net$.
	Then, return \yes if $\take(\Instance,v)$ or $\leave(\Instance,v)$ is a \yes-instance of \MAPPD and \no, otherwise.

	\proofpara{Correctness}
	The correctness of the base case is given by the correctness of Faith's Algorithm.
	We show that if \Net contains a reticulation $v$, then \Instance is a \yes-instance of \MAPPD if and only if $\take(\Instance,v)$ or $\leave(\Instance,v)$ is a \yes-instance of \MAPPD.

	Consider any set of taxa $Y \subseteq X$.
	Firstly, we claim that if $Y \cap \off(e) = \emptyset$, then~$\apPDsub{\Net'}(Y) = \apPD(Y)$, where $\Net'$ is the network in $\leave(\Instance,v)$.
	%
	%
	Indeed,~$\Net'$ contains all the vertices and edges of \Net that have an offspring outside of $\off(v)$. Therefore, $\apPDsub{\Net'}(Y) = \apPD(Y)$. 
	%
	%
	Secondly, we claim that if $Y$ contains a vertex of~$\off(v)$, then $\apPDsub{\Net'}(Y) = \apPD(Y) + \Dbar$, where $\Net'$ is the network in~$\take(\Instance,v)$.
	%
	%
	Recall that each edge $e = u_1 u_2$ with $u_1 \ne \rho$ of $E(\Net')$ is also an edge of $\Net$ and $\w'(e) = \w(e)$.
	Further, for each edge $e = \rho u_2$ with $u_2\ne u$ of~$E(\Net')$ there are edges $e_1 = u_{i_1} u_2, \dots, e_t = u_{i_t} u_2$ of $E(\Net)$ with $\w'(e) = \sum_{i=1}^t \w(e_i)$.
	Now, let~$Q = Q_1 \cup Q_2 \cup \{ \rho u \}$ be the edges of $\Net'$ that have at least one offspring in~$Y$, of which edges in $Q_1$ have both endpoints in $V(\Net')\setminus\{ \rho \}$, and $Q_2$ are outgoing edges of~$\rho$.
	Further, let $P = P_1 \cup P_2 \cup E^{(\uparrow vu)}$ be the edges of $\Net$ that have at least one offspring in $Y$, of which edges in $P_1$ have both endpoints in $V(\Net')$, and $P_2$ are edges with one endpoint in $\anc(v)\setminus \{v\}$ and one endpoint in $V(\Net')\setminus\{ \rho \}$.
	Observe that since any vertex in $V(\Net')$ has the same offspring in $\Net$ as in $\Net'$, we know $Q_1 = P_1$, and~$\w'_\Sigma(Q_1) = \w_\Sigma(P_1)$. 
	Further, $\w'_\Sigma(Q_2) = \w_\Sigma(P_2)$ as for each $u_2 \in V(\Net')\setminus\{ \rho \}$, the total weight of edges~$u_1 u_2$ with $u_1 \in \anc(v)\setminus \{v\}$ in $\Net$ is equal to the weight of the edge $\rho u_2$ in $\Net'$.
	It follows that
	\begin{eqnarray*}
		\apPDsub{\Net'}(Y) 
		 & = & \w'_\Sigma(Q_1) + \w'_\sigma(Q_2) + \w'( \rho u )\\
		 & = & \w_\Sigma(P_1) + \w_\Sigma(P_2) + \w_\Sigma(E^{(\uparrow vu)}) + \Dbar\\
		 & = & \apPD(Y) + \Dbar.
	\end{eqnarray*}
	
	By the above it follows that if $Y$ is a solution for \Instance, then either $Y$ is a solution for $\leave(\Instance,v)$ or $Y$ is a solution for $\take(\Instance,v)$.
	Conversely, if~$Y$ is a solution for~$\leave(\Instance,v)$ then $Y \cap \off(e) = \emptyset$ and thus $ \apPD(Y) = \apPDsub{\Net'}(Y) \geq D$, so~$Y$ is also a solution for \Instance.
	Finally, if $Y$ is a solution for $\take(\Instance,v)$ then $Y \cap \off(e) \neq \emptyset$, as otherwise
	\begin{eqnarray*}
		\apPDsub{\Net'}(Y) 
		& \le & \apPDsub{\Net'}(X) - \w'(\rho y)\\
		& = & D + 2\Dbar - (\w_\Sigma(E^{(\uparrow vu)}) + \Dbar) \le D + \Dbar - 1 < D'.
	\end{eqnarray*}
	Then, $\apPDsub{\Net'}(Y) = \apPD(Y) + \Dbar$.
	It follows that $\apPD(Y) \geq D' - \Dbar = D$ and $Y$ is also a solution for \Instance.

	%
	%
	%
	%
	%
	%
	%

	\proofpara{Running Time}
	Let \Instance be an instance of \MAPPD that contains a reticulation $v$.
	The number of reticulations in \Instance is greater than the number of reticulations in~$\take(\Instance,v)$ and $\leave(\Instance,v)$, because at least the reticulation $v$ is removed and no new reticulations are added.
	Therefore, the search tree contains $\Oh(2^{\ret})$ nodes.
	It can be checked in $\Oh(m)$ time, if $\Net$ contains a reticulation.
	Faith's Algorithm takes~$\Oh(k \cdot m)$ time \cite{steel}.
	
	The sets $\off(v)$ and $\anc(v)$ for a vertex $v$, and $T_Y$ for a set $Y$ can be computed in~$\Oh(m)$ time.
	Once $\anc(v)$ is computed, we can iterate over $E$ to find the edges that are outgoing from $\anc(v)$ and in $\Oh(m)$ time we can compute the value for an edge $\rho w$ in $\Net'$, which is also the time needed to compute $\w(\rho u)$ which needs \Dbar and the weight of $E^{(\uparrow vu)}$.
	Therefore, the instances $\take(\Instance,v)$ and $\leave(\Instance,v)$ can be computed in~$\Oh(m)$~time.
	
	Thus, a solution for \MAPPD can be computed in $\Oh(2^{\ret} \cdot k \cdot m)$ time.
\end{proof}

Bordewich et al. showed that \MAPPD can be solved in polynomial time on level-1-networks \cite{bordewichNetworks}.
We extend this result by showing that \MAPPD is fixed-parameter tractable with respect to treewidth.

\begin{theorem}
	\label{thm:Net-tw}
	\MAPPD can be solved in $\Oh(9^\twN \cdot \twN \cdot k^2 \cdot m)$ time.
\end{theorem}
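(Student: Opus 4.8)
The plan is to solve \MAPPD by dynamic programming over a nice tree decomposition of the underlying undirected graph of $\Net$, in the spirit of the treewidth algorithm for \sPDD (Theorem~\ref{thm:PDD-tw}) and of the standard connectivity-style treewidth algorithms. First I would compute a nice tree decomposition $T$ of width $\twN$ in the sense of Definition~\ref{def:tw}, with introduce-vertex, introduce-edge, forget and join nodes, and root it. For a node $t$, let $V_t$ and $G_t$ be the vertices and edges introduced at $t$ or below it. The quantity that matters for $\apPD$ is, for each edge $e = uv$, whether its head reaches a selected taxon: by definition $\off(e) = \off(v)$, so $e$ is affected exactly when $\off(v)$ meets the solution $Y$, and its weight $\w(e)$ is then counted once, regardless of how many paths witness the reachability.

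For each node $t$ I would index subproblems by a coloring $\phi\colon Q_t \to \{C,A,B\}$ of the bag together with an integer $s \in [k]_0$, and store in $\DP[t,\phi,s]$ the maximum total weight of affected edges of $G_t$ over selections $Y \subseteq V_t \cap X$ with $|Y| = s$ that are consistent with $\phi$. Here $C$ marks a vertex that reaches no selected taxon; $A$ marks a vertex that reaches a selected taxon and whose reachability is already \emph{justified} inside $G_t$ (it is a selected leaf, or it has an already-introduced out-edge to a reaching vertex); and $B$ marks a vertex \emph{claimed} to reach a selected taxon, but whose witnessing out-edge still lies among edges not yet introduced. A forgotten vertex is required to be in state $A$ or $C$, never $B$, since by the time a vertex is forgotten all its incident edges have been introduced and its reachability is final.

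The transitions are then routine but must respect these semantics. At an introduce-vertex node a taxon may be placed in $A$ (selected, incrementing $s$) or $C$, while a non-taxon may start in $C$ or, tentatively, in $B$. At an introduce-edge node for $v \to w$, if $w$ is in state $A$ or $B$ the edge is affected, so I add $\w(vw)$; moreover this out-edge may upgrade $v$ from $B$ to $A$. At a forget node I drop the forgotten vertex, keeping only colorings in which it is $A$ or $C$. At a join node two bag-colorings combine: a vertex becomes $A$ iff it is $A$ in at least one child (and not $C$ in the other), $B$ iff it is $B$ in both, and $C$ iff it is $C$ in both; the two stored weights are summed (each edge is introduced exactly once, so nothing is double counted) and $s = s_1 + s_2$ is split. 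This $3 \times 3$ compatibility table, exactly as in Figure~\ref{fig:PDD-joinColorings}, yields the $9^{\twN}$ factor, while the two budget dimensions give the $k^2$. The answer is read off at the empty root bag, accepting iff some entry with $s \le k$ stores at least $D$.

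The main obstacle I anticipate is the correctness of the pending state $B$: it encodes a \emph{promise} that a vertex's reachability will be certified later, which is needed precisely because the ancestor--descendant order of the DAG need not agree with the processing order of the tree decomposition. I would prove, by induction up $T$, that every such promise is eventually discharged: since a $B$ vertex can never be forgotten and the root bag is empty, each $B$ vertex must be converted to $A$ through a chain of introduce-edge steps that ultimately bottoms out at a genuinely selected leaf, so an accepting run corresponds to an honest assignment of reachability values and hence to a feasible set $Y$ with $\apPD(Y) \ge D$. For the running time, each node stores $\Oh(3^{\twN}\cdot k)$ entries, the join nodes dominate with $\Oh(9^{\twN}\cdot k^2)$ work after an $\Oh(\twN)$ factor for handling bags, and a nice tree decomposition has $\Oh(m)$ nodes, giving the claimed $\Oh(9^{\twN}\cdot \twN \cdot k^2 \cdot m)$ bound.
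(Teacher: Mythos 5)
Your DP is in substance the paper's own algorithm for Theorem~\ref{thm:Net-tw}: your states $A/B/C$ are the paper's green/red/black, your pending-claim mechanism is the paper's condition that a red vertex (incoming but no outgoing selected edge) may never be forgotten, your discharge argument via the empty root bag is the paper's closure condition (F1), and your join is the same $3\times 3$ compatibility table (Figure~\ref{fig:Net-joinColorings}) giving the $9^{\twN}$ factor. One difference in formulation is harmless: the paper maximizes over an explicitly \emph{chosen} edge set $F$ in which every non-leaf head must have an outgoing edge in $F$, rather than forcing every edge whose head reaches a selected taxon to be counted. Under a maximization objective with positive weights the two formulations have the same optimum, and this also defuses the corner case your text leaves open (an introduce-edge node with tail in $C$ and head in $A\cup B$): in the relaxed reading such entries are merely suboptimal, not contradictory, though with your exact-reachability semantics you must either forbid them or argue, as above, that they are benign. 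Relatedly, your join table differs from the paper's in excluding mixed $A/C$ and $B/C$ combinations; that is consistent with your global-claim semantics, whereas the paper's black is a local statement about $F$, so both tables are internally sound.

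The genuine flaw is the budget bookkeeping. You increment $s$ when a selected taxon is \emph{introduced} and set $s = s_1 + s_2$ at join nodes. In a nice tree decomposition a vertex is forgotten exactly once but may be introduced several times --- once in each child subtree of any join node whose bag contains it. A selected taxon $x \in Q_t \cap X$ at a join node is therefore counted in both $s_1$ and $s_2$, so the combined entry records $|Y| + |Y \cap Q_t \cap X|$ rather than $|Y|$; feasible solutions whose selected taxa sit in join bags get stored at inflated indices and can be rejected by the final check $s \le k$, making the algorithm incomplete as stated. The paper avoids exactly this by charging a taxon at the introduce-\emph{edge} node for its unique incoming edge (every edge is introduced exactly once by convention); equivalently you could charge at the taxon's unique forget node, or repair the join recurrence to $s = s_1 + s_2 - |\{x \in Q_t \cap X \mid \phi(x) = A\}|$. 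With that correction the rest of your argument, including the $\Oh(9^{\twN} \cdot \twN \cdot k^2 \cdot m)$ running-time accounting, goes through and matches the paper's proof.
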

We first provide a sketch of the main ideas, here.
This algorithm has many similarities with Theorem~\ref{thm:PDD-tw}, however, here we have a tree-decomposition over a network and not a food-web.

We aim to find a set of edges $E'$ that have an overall weight of at least $D$ and that are incident with at most $k$ leaves.
Further, for each edge $e = uv \in E'$ we require that either~$v$ is a leaf or there is an edge $vw \in E'$.
In this dynamic program algorithm over a nice tree decomposition, we index feasible partial solutions by a~3-coloring of the vertices.
At a given node of the tree decomposition, a vertex~$v$ is colored:
\begin{itemize}
	\item red, if it is still mandatory that we select an outgoing edge of $v$ (because we have selected an incoming edge of $v$),
	\item green, if we can select incoming edges of~$v$ and do not need to select an outgoing edge of~$v$ (because $v$ is a leaf or we have already selected an outgoing edge of~$v$),
	\item black, if we have to not yet selected an edge incident with~$v$ (such that only the selection of an incoming edge of~$v$ makes the selection of an outgoing edge of $v$ necessary).
\end{itemize}
We introduce each leaf as a green vertex and the other vertices as black vertices.
In order to consider only feasible solutions, a vertex must be green or black when it is forgotten.
The most important step of the algorithm is in the introduction of an edge, where colors may be adjusted depending on whether or not the new edge is included in $E'$.

\begin{proof}
	We define a dynamic programming algorithm over a nice tree-decomposition~$T$ of the underlying undirected graph of an $X$-network $\Net=(V,E,\w)$ in which vertices are introduced without incident edges and all edges are induced exactly once.
	We note that at every join bag the graph~$\Net_t[Q_t]$ is edgeless, if we only introduce edges right before we forget one of the incident vertices.
	Here, $\Net_t$ is the graph with vertices~$V_t$ and edges~$E_t$ that are introduced at~$t$ or at decedants of~$t$ and are not forgotten.
	
	\proofpara{Definition of the Table}
	Let $\Instance=(\Net,k,D)$ be an instance of \MAPPD and let~$T$ be a nice tree decomposition of  $\Net$.
	We index solutions by a partition $R\cup G\cup B$ of $Q_t$, and a non-negative integer $s$.
	%
	%
	For a set of edges~$F \subseteq E_t$, we call a vertex $u \in V_t$ \emph{\textit{green} with respect to $F$} if $u$ is a leaf or has an outgoing edge in $F$. We call $u$ \emph{\textit{red} with respect to $F$} if $u$ is not a leaf and has an incoming but no outgoing edge in $F$. Finally, we call $u$  \emph{\textit{black} with respect to $F$} if $u$ is not a leaf and has no incident edges in $F$.
	For a node~$t\in T$, a partition $R\cup G\cup B$ of $Q_t$ and an integer $s$, we call a set of edges~$F$ over $V_t$ \textit{feasible for~$t$, $R$, $G$, $B$, and~$s$}, if all the following conditions hold:
	\begin{enumerate}
		\item[(F1)]\label{it:safeness} If $uv$ is an edge in $F$ and $v\notin Q_t$, then $v$ is a leaf of $\Net$ or $v$ has an outgoing edge in $F$.
		\item[(F2)]\label{it:red} The vertices $R\subseteq Q_t$ are red with respect to $F$.
		\item[(F3)]\label{it:green} The vertices $G\subseteq Q_t$ are green with respect to $F$. 
		\item[(F4)]\label{it:black} The vertices $B\subseteq Q_t$ are black with respect to $F$.
		\item[(F5)]\label{it:taxa} The number of leaves in $\Net$ with an incoming edge in $F$ is $s$.
	\end{enumerate}
	We define \Sstar to be the family of all sets~$F$ that are feasible for~$t$, $R$, $G$, $B$, and~$s$.
	%

	We define a dynamic programming algorithm over a nice tree decomposition~$T$.
	In a table entry $\DP[t,A,R,G,B,s]$, we store the greatest weight $\w_\Sigma(F)$ of a set~$F$ that is feasible for~$t$, $R$, $G$, $B$, and~$s$.
	If there is no feasible $F$, then we store a big negative value.
	Let~$r$ be the root of the nice tree-decomposition $T$.
	Then, $\DP[r,\emptyset,\emptyset,\emptyset,k]$ stores the greatest phylogenetic diversity that can be preserved with a budget of $k$.
	Hence, we can return \yes if $\DP[r,\emptyset,\emptyset,\emptyset,k]\ge D$ and \no, otherwise.

	Now, we have everything we need to define the dynamic programming algorithm.
	In the calculations that follow, any time a value $\DP[t,R,G,B,s]$ is called for which~$\DP[t,R,G,B,s]$ is not defined (in particular, if $s < 0$), we take $\DP[t,R,G,B,s]$ to be a large negative value. For our purposes a value of $-m\cdot \max_\w-1$ suffices, as even if every edge would be chosen, the entry at the root of the tree decomposition is still negative.

	\proofpara{Leaf Node}
	For a leaf~$t$ of~$T$ the bags~$Q_t$ and $V_t$ are empty. So if $s = 0$, we trivially store
	\begin{equation}
		\label{tw:leaf}
		\DP[t,\emptyset,\emptyset,\emptyset,s] = 0.
	\end{equation}
	Otherwise, we store $\DP[t,R,G,B,s] = -m\cdot \max_\w-1$.

	\proofpara{Introduce Vertex Node}
	Suppose now that~$t$ is an \textit{introduce vertex node}, i.e.~$t$ has a single child~$t'$,~$Q_t = Q_{t'} \cup \{v\}$, and $v$ is an isolated vertex in $\Net_t$.
	If either~$v\in B$, or~$v\in G$ and~$v$ is a leaf, we store
	\begin{equation}
		\label{tw:insertvertex}
		\DP[t,R,G,B,s] = \DP[t',R,G\setminus\{v\},B\setminus\{v\},s].
	\end{equation}
	Otherwise, we store $\DP[t,R,G,B,s] = -m\cdot \max_\w-1$.

	\proofpara{Introduce Edge Node}
	Suppose now that~$t$ is an \textit{introduce edge node}, i.e.~$t$ has a single child~$t'$,~$Q_t = Q_{t'}$, and $e = vw$ is introduced at $t'$.
	The algorithm must decide whether $e$ is affected by the solution, or not.
	If $v\notin G$ or $w\in B$, edge~$e$ can not be an affected edge and so we store $\DP[t,R,G,B,s] = \DP[t',R,G,B,s]$.
	Otherwise, we store
	\begin{equation}
		\label{tw:insertedge}
		\DP[t,R,G,B,s] = \max\{
		\DP[t',R,G,B,s];
		\max_{R',G',B'}\DP[t',R',G',B',s'] + \w(e)
		\},
	\end{equation}
	where the second maximum is over all possible partitions $R'\cup G'\cup B'$ of $Q_t$, such that $S\setminus\{v,w\} = S'\setminus\{v,w\}$ for all $S \in \{R,G,B\}$ (i.e. the two partitions agree on $Q_T\setminus\{v,w\}$), and such that if $w \in G'$ then $w \in G$; and $w \in R$, otherwise.
	Here,~$s' = s-1$ if $w$ is a leaf, and~$s' = s$ if not.

	\proofpara{Forget Node}
	Suppose now that~$t$ is \textit{forget node}, i.e.~$t$ has a single child~$t'$\lb and~$Q_t = Q_{t'} \setminus \{v\}$.
	We store
	\begin{equation}
		\label{tw:forget}
		\DP[t,R,G,B,s] =  \max \{
		\DP[t',R,G,B\cup\{v\},s];
		\DP[t',R,G\cup\{v\},B,s]\}.
	\end{equation}

	\proofpara{Join Node}
	Suppose now that~$t\in T$ is an \textit{join node}, i.e.~$t$ in $T$ has two children~$t_1$ and~$t_2$ with~$Q_t = Q_{t_1} = Q_{t_2}$.
	We call two partitions $R_1\cup G_1\cup B_1$ and $R_2\cup G_2 \cup B_2$ of $Q_t$ \textit{qualified} for $R\cup G\cup B$ if $R = (R_1\cup R_2) \setminus (G_1 \cup G_2)$ and $G = G_1\cup G_2$ (and consequently $B = B_1 \cap B_2$). See~Figure~\ref{fig:Net-joinColorings}.
	We store
	\begin{equation}
		\label{tw:join}
		\DP[t,R,G,B,s] = \max_{(\pi_1, \pi_2) \in{\cal Q},s'} ~ \DP[t_1,R_1,G_1,B_1,s'] + \DP[t_2,R_2,G_2,B_2,s-s'],
	\end{equation}
	where $s' \in [s]_0$, and ${\cal Q}$ is the set of pairs of partitions $\pi_1 = R_1\cup G_1\cup B_1$\lb and $\pi_2 = R_2\cup G_2 \cup B_2$ that are qualified for $R,G,B$.
	
	\newcommand{\darkgreen}{green!50!black}
	\begin{figure}
		\begin{center}
			\begin{tabular}{c|c|c|c|}\cline{2-4}
				& $B_1$ & \color{red}{$R_1$} & \color{\darkgreen}{$G_1$}\\\hline
				\multicolumn{1}{|l|}{$B_2$}  & $B$ & \color{red}{$R$} & \color{\darkgreen}{$G$} \\\hline
				\multicolumn{1}{|l|}{\color{red}{$R_2$}}  & \color{red}{$R$} & \color{red}{$R$} & \color{\darkgreen}{$G$} \\\hline
				\multicolumn{1}{|l|}{\color{\darkgreen}{$G_2$}}  & \color{\darkgreen}{$G$} & \color{\darkgreen}{$G$} & \color{\darkgreen}{$G$} \\\hline
			\end{tabular} 
			
			\caption{This table shows the relationship between the three partitions $R_1\cup G_1\cup B_1$, $R_2\cup G_2 \cup B_2$ and $R\cup G\cup B$ in the case of a join node, when $R_1\cup G_1\cup B_1$ and $R_2\cup G_2 \cup B_2$ are qualified for $R\cup G\cup B$. The table shows which of the sets $R$, $G$, or~$B$ an element~$v \in Q_t$ will be in, depending on its membership in $R_1,G_1,B_1,R_2,G_2$, and $B_2$. For example if~$v \in R_1$ and $v \in B_2$, then $v \in R$.}
			\label{fig:Net-joinColorings}
		\end{center}
	\end{figure}

	\proofpara{Correctness}
	Let $t$ be a node of the nice tree-decomposition $T$, $s$ an integer,\lb and~$R\cup G\cup B$ a partition of $Q_t$.
	We show that the value of $\DP[t,R,G,B,s]$ is correct, for each type of node individually.

	If $t$ is a \textit{leaf node}, then $Q_t$ and $V_t$ are empty.
	Consequently, $R=G=B=\emptyset$ and any feasible set $F$ is also empty.
	Therefore, we also conclude with (F5) that~$s=0$\lb and~$\w_\Sigma(F)=0$ for any $F \in \Sstar$.
	Hence, we store the correct value in Recurrence~(\ref{tw:leaf}).

	Let $t$ be an \textit{introduce vertex node} with child $t'$ and $Q_t=Q_{t'}\cup\{v\}$.
	Then, the graph $\Net_t$ is the graph $\Net_{t'}$ with an additional isolated vertex $v$.
	Thus, $v$ is black with respect to any $F \subset E_t$, unless $v$ is a leaf in which case $v$ is green.
	Thus, there is no feasible set $F$ for $\Sstar$ unless either $v \in B$ or $v \in G$ and $v$ is a leaf.
	Assuming this is the case, any set $F \subseteq E_t$ is feasible for~$t$, $R$, $G$, $B$, and~$s$ if and only if it is feasible for $t'$, $R$, $G\setminus\{v\}$, $B\setminus\{v\}$, and~$s$.
	Hence, we store the right value.

	Let $t$ be an \textit{introduce edge node} with child $t'$ and $Q_t=Q_{t'}$ and $\Net_t-e=\Net_{t'}$. Define $e := vw$.
	Clearly, every set $F\subseteq E_t\setminus\{e\}$  is feasible for~$t$, $R$, $G$, $B$, and~$s$ if and only if $F$ is also feasible for~$t'$, $R$, $G$, $B$, and~$s$, because then $\Net_t[F]$ is isomorphic to $\Net_{t'}[F]$.
	%
	Now, consider a set~$F\subseteq E_t$ with $vw \in F$.
	Note that $v$ is green, and $w$ cannot be black, with respect to any such set $F$.
	Therefore, such an $F$ only exists if~$v \in G$ and $w \notin B$.
	Now, let $F'$ be~$F\setminus\{vw\}$. 
	Every vertices $u$, that is not $v$ or $w$, is red/green/black with respect to $F$ if and only if~$u$ is red/green/black with respect to $F'$.
	Observe that if $w$ is green with respect to $F$ if and only if it is green with respect to~$F'$ (as it has the same outgoing edges in $F$ and $F'$).
	If $w$ is red or black with respect to~$F'$, then it is red with respect to~$F$ (as it has an incoming edge and no outgoing edges).
	On the other hand, $v$ could be any color with respect to $F'$, but is necessarily green with resepct to $F$.
	Likewise, we can show the correctness of $s'$.
	Thus, the correct value is stored.

	Let $t$ be a \textit{forget node} with child $t'$ and $Q_t=Q_{t'}\setminus\{v\}$.
	We show that a set $F$ is feasible for~$t$, $R$, $G$, $B$, and~$s$ if and only if $F$ is also feasible for~$t'$, $R$, $G$, $B\cup\{v\}$, and~$s$ or for~$t'$, $R$, $G \cup \{v\}$, $B$, and~$s$.
	It then follows that $\DP[t,R,G,B,s]$ stores the correct value.
	To this end, let $F$ be a set of edges that is feasible for~$t$, $R$, $G$, $B$, and~$s$.
	Since~$v \notin Q_t$ and condition (F1) satisfies, $v$ either has an outgoing edge in $F$, is a leaf, or does not have any incoming edge in $F$.
	It follows that $v$ is green or black with respect to $F$.
	From this it is straightforward to confirm that $F$ is feasible\lb for~$t'$, $R$, $G$, $B\cup\{v\}$, and~$s$, or for~$t'$, $R$, $G \cup \{v\}$, $B$, and~$s$.
	Conversely, if $F$ is feasible for $t'$, $R$, $G$, $B\cup\{v\}$, and~$s$, or for $t'$, $R$, $G \cup \{v\}$, $B$, and~$s$, then~$F$ also satisfies condition (F1) for~$t'$, $R$, $G$, $B\cup\{v\}$, and~$s$ (in particular, the condition is satisfied for~$v$ as~$v$ is green or black with respect to $F$).
	It is straightforward to confirm that~$F$ the remaining conditions to be feasible with respect to $\Sstar$.

	Let $t$ be a \textit{join node} with children $t_1$ and $t_2$.
	We show that there is an $F\in \Sstar$ if and only if
	there are partitions $R_1 \cup G_1 \cup B_1$ and $R_2 \cup G_2 \cup B_2$, qualified for~$R$,~$G$, and~$B$
	and~$s'\le s$ such that
	there are $F_1\in \SstarPrime{t_1,R_1,G_1,B_1,s'}$ and $F_2\in \SstarPrime{t_2,R_2,G_2,B_2,s-s'}$
	with~$F_1\cup F_2 = F$.

	First, let $F$ be feasible for~$t$, $R$, $G$, $B$, and~$s$.
	Let $F_i$ be the subset of edges of~$F$ that occur in~$\Net_{t_i}$ for $i\in\{1,2\}$.
	Because every edge is introduced exactly once, the sets~$F_1$ and~$F_2$ are disjoint.
	We define $s'$ to be the number of edges of $F_1$ that are incident with a leaf.
	Further, we define $R_i$ to be the set of vertices in $Q_t$ that are not leaves in \Net and have an incoming but no outgoing edge in $F_i$ for $i\in \{1,2\}$.
	Similarly, we define $G_i$ to be the set of vertices that are leaves or have an outgoing edge in~$F_i$ for~$i\in \{1,2\}$.
	We show that $F_1\in \SstarPrime{t_1,R_1,G_1,B_1,s'}$ and $F_2\in \SstarPrime{t_2,R_2,G_2,B_2,s-s'}$.
	Let $e = uv$ be an edge in $F_i$ and let $v$ be an internal-vertex and $v\not\in Q_{t_i}$ for~$i\in \{1,2\}$.
	Because $Q_t=Q_{t_i}$, we conclude that $v\not\in Q_{t}$.
	Thus, with (F1) we conclude that~$v$ has an outgoing edge~$e^*$ in~$F$.
	Because $v$ is not in $Q_t$ and $v$ is in $V_{t_i}$, the vertex $v$ is not in~$V_{t_{3-i}}$.
	Thus, the edge $e^*$ is also in $F_i$ and so $F_1$ and $F_2$ satisfy (F1).
	By definition, $F_1$ and~$F_2$ satisfy conditions (F2) to (F4).
	Also by definition, $F_1$ satisfies~(F5).
	As $F_2$ contains the edges of $F$ that are not in~$F_1$, there are $s-s'$ edges in~$F_2$ that are incident with a leaf.
	Hence, $F_2$ satisfies (F5) and $F_1\in \SstarPrime{t_1,R_1,G_1,B_1,s'}$ and $F_2\in \SstarPrime{t_2,R_2,G_2,B_2,s-s'}$.
	It remains to show that~$R_1\cup G_1 \cup B_1$ and $R_2\cup G_2 \cup B_2$ are qualified for~$R$, $G$, and~$B$.
	If~$v$ is a leaf, we conclude that $v$ is in $G_1,G_2$ and $G$.
	Further, $v$ has an outgoing edge in $F$, if and only if $v$ has an outgoing edge in~$F_1$ or~$F_2$.
	We conclude that $G=G_1\cup G_2$.
	For each $v\in R$ there is an incoming edge~$e$ but no outgoing edge in $F$.
	Consequently, if~$v\in R$ then $v\not\in G_1\cup G_2$.
	Further, without loss of generality, $e\in F_1$.
	However, there is no outgoing edge of $v$ in $F_1$.
	Thus, $v\in R_1$ and so~$R\subseteq (R_1\cup R_2) \setminus (G_1 \cup G_2)$.
	If $v\in (R_1\cup R_2) \setminus (G_1 \cup G_2)$ then there is an incoming edge of $v$ in $F_1$ or $F_2$ and therefore in $F$, but no outgoing edges.
	Thus, $v$ is red and in~$R$.

	Secondly, assume that there are $R_1,G_1,R_2$, and $G_2$ qualified for~$R$ and $G$,\lb
	and~$s'\le s$ such that
	there are $F_1\in \SstarPrime{t_1,R_1,G_1,B_1,s'}$ and $F_2\in \SstarPrime{t_2,R_2,G_2,B_2,s-s'}$.
	We show that $F:=F_1\cup F_2$ is feasible for~$t$, $R$, $G$, $B$, and~$s$.
	Let $e = uv$ be an edge in $F$ with~$v$ is not a leaf and $v\not\in Q_t$. 
	Without loss of generality, $e\in F_1$.
	Because $Q_t=Q_{t_1}$, we conclude that $v\not\in Q_{t_1}$.
	Thus, $v$ has an outgoing edge in $F_1$ and therefore in~$F$, we know~$F$ satisfies condition~(F1).
	Let~$v$ be a vertex of~$Q_t$.
	If~$v\in R=(R_1\cup R_2) \setminus (G_1 \cup G_2)$, then there is an edge $e$ incoming at $v$ in $F_1$ or $F_2$, but~$F_1$ and~$F_2$ do not contain an outgoing edge of $v$.
	Consequently, $v$ has an incoming edge in~$F$ but no outgoing edges.
	Analogously, we can see that if $v\in G=G_1\cup G_2$ then $v$ has an outgoing edge in $F$ and if $v\in B=B_1\cap B_2$, then~$v$ is not incident with edges.
	Because that covers all options, $F$ satisfies conditions (F2) to (F4).
	Let~$\hat E_1$ and $\hat E_2$ be the edges of $F_1$ and $F_2$, respectively, that are incident with a leaf. Thus, $|\hat E_1|=s'$ and $|\hat E_2|=s-s'$.
	Because every edge is introduced exactly once, the sets~$F_1$ and $F_2$ and therefore $\hat E_1$ and $\hat E_2$ are disjoint.
	Consequently, $\hat E_1\cup \hat E_2$ contains the $s$ edges that are incoming at leaves.
	Thus, $F$ satisfies condition (F5).
	Hence, the value that is stored with Recurrence~(\ref{tw:join}) is correct.

	\proofpara{Running Time}
	The tree decomposition $T$ has $\Oh(m)$ nodes.

	For a leaf node, an introduce vertex nodes and a forget node, the value of each entry~$\DP[t,R,G,B,s]$ can clearly be computed in time linear in $\twN$.
	Checking the conditions in an introduce edge node requires checking the colors of $v$ and $w$.
	Further, the partition~$R'\cup G'\cup B'$ agrees on all vertices but $v$ and $w$.
	Therefore, we can also compute the values of entries in time linear in $\twN$ for an introduce edge node.
	Altogether, we can compute the value of all entries of nodes that are not join nodes in $\Oh(3^\twN \cdot \twN \cdot k \cdot m)$ time.

	For the computation in a join node $t$, we first store a big negative integer in each entry $\DP[t,R,G,B,s]$ for partitions $R\cup G\cup B$ of $Q_t$ and an integer $s\in [k]_0$.
	Then, iterate over partitions $R_1\cup G_1\cup B_1$ and~$R_2\cup G_2\cup B_2$ of $Q_t$.
	From the part-\lb itions~$R_1\cup G_1\cup B_1$ and~$R_2\cup G_2\cup B_2$, compute the implicitly defined part-\lb ition~$R\cup G\cup B$ of~$Q_t$ in~$\Oh(\twN)$~time.
	See Figure~\ref{fig:Net-joinColorings}.
	Iterate over $s\in [k]_0$ and $s'\in [s]_0$.
	If $\DP[t_1,R_1,G_1,B_1,s'] + \DP[t_2,R_2,G_2,B_2,s-s']$ exceeds the value of~$\DP[t,R,G,B,s]$, then replace it. Otherwise, let $\DP[t,R,G,B,s]$ stay unchanged. 
	After the iterations, we have computed the values of $\DP[t,R,G,B,s]$ for all partitions $R\cup G\cup B$ of $Q_t$ and integer $s\in [k]_0$.
	Therefore, $\Oh(9^{\twN} \cdot \twN \cdot k^2)$ time is required to compute all values of entries for each join node $t$.

	Hence, a solution for \MAPPD can be computed in $\Oh(9^\twN \cdot \twN \cdot k^2 \cdot m)$ time.
\end{proof}

\section{A Kernelization for Reticulation-Edges}
\label{sec:Net-kernel}
In Theorem~\ref{thm:Net-ret}, we presented a branching algorithm that proves that \MAPPD is \FPT when parameterized by the number of reticulations, $\vret$.
In this section, we show that \MAPPD admits a kernelization algorithm of polynomial size {with respect to~\eret.}
Recall that~\eret is the number of reticulation-edges which need to be removed such that~\Net is a tree.
Observe~$\eret \ge \vret$ and in binary networks they are equal.

{We first show how to bound the number of vertices and edges by a polynomial in~\eret, without giving any such bound on the weights of the edges. Afterwards, we will apply a result from \cite{etscheid,frank} to get an appropriate bound on the edge weights.}

\begin{theorem}
	\label{thm:Net-vkernel}
	Given an instance $\Instance = (\Net, k, D)$ of \MAPPD,
	an equivalent instance~$\Instance^* = (\Net^* = (V^*,E^*,{\w^*}), k^*, D^*)$ of \MAPPD
	with $|V^*|,|E^*| \in \Oh(\eret^2)$ and $k^* \in \Oh(\eret)$
	can be computed in~{$\Oh(m^2 \log^2 m \cdot \log \max_\w)$}~time.
\end{theorem}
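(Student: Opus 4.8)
The plan is to carry out the kernelization in three structural phases --- normalization, a generator-based size bound, and a budget reduction --- while leaving the edge weights untouched; the weights are shrunk afterwards by the black-box application of Etscheid et al.~\cite{etscheid,frank} that yields the full polynomial kernel (Theorem~\ref{thm:Net-kernel}). First I would normalize $\Net$ so that its combinatorial size is governed by its reticulations. Using Lemma~\ref{lem:Net-degree} I may assume $\Net$ is binary, so that $\vret = \eret$. Then I apply two reduction rules preserving all-paths diversity: (i) delete every vertex and edge with no leaf descendant, since such an edge is affected by no set of taxa and never contributes to $\apPD(\cdot)$; and (ii) suppress every tree vertex of in-degree and out-degree $1$ by replacing its incident edges $uv,vw$ with a single edge $uw$ of weight $\w(uv)+\w(vw)$. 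Rule (ii) is safe because $v$'s only child is $w$, so $\off(v)=\off(w)$ and both edges are affected by exactly the same subsets of taxa. I would also dispose of the trivial case $k\ge |X|$ by checking $\apPD(X)\ge D$ and returning a constant-size instance. After this phase every internal tree vertex has out-degree at least $2$.

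Second, I would bound $|V^*|$ and $|E^*|$ using the generator of $\Net$. The generator --- obtained by deleting leaves and pendant subtrees and re-suppressing degree-$2$ vertices --- has $\Oh(\eret)$ vertices and $\Oh(\eret)$ edges (its ``sides'') for a binary network with $\eret$ reticulation edges, and everything outside it is a disjoint collection of reticulation-free pendant subtrees, each attached at a single vertex. The crucial observation for \MAPPD is that a pendant subtree interacts with the reticulation structure only through \emph{whether at least one of its leaves is saved}: saving a single descendant already activates the whole path of edges above the attachment point, and every further saved leaf contributes only tree-local diversity, which is governed by the greedy/matroid structure of phylogenetic diversity (Faith's greedy algorithm). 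I would use this to merge the pendant subtrees hanging off each side and to truncate each to $\Oh(\eret)$ retained leaves, replacing the remainder by a gadget (for instance a caterpillar) reproducing the relevant part of its concave marginal-value curve. With $\Oh(\eret)$ sides and $\Oh(\eret)$ retained leaves per side this gives $|X^*|\in\Oh(\eret^2)$ and hence $|V^*|,|E^*|\in\Oh(\eret^2)$.

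Third, to push the budget down to $k^*\in\Oh(\eret)$ I would exploit the same greedy structure: on the reticulation-free regions an optimal selection is forced to be greedily best for whatever count is allocated there, so any ``surplus'' leaf-savings that do not affect which reticulation edges become activated can be \emph{pre-committed} --- removed from the instance while decrementing $k$ and $D$ by their count and their greedy value. Iterating this until only the $\Oh(\eret)$ reticulation-relevant decisions remain caps $k^*$ at $\Oh(\eret)$. Correctness throughout is argued by showing that each rule maps solutions to solutions in both directions, and the running time is dominated by repeatedly computing greedy orders and contracting the network, together with arithmetic on weights of magnitude up to $m\cdot\max_\w$, giving $\Oh(m^2\log^2 m\cdot\log\max_\w)$.

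I expect the main obstacle to lie in the pendant-subtree and budget reductions of the second and third phases: one must simultaneously (a) preserve, for every possible activation pattern of the reticulation edges, the exact achievable value of $\apPD(\cdot)$ as a function of the budget spent in each region, and (b) guarantee that pre-committing greedy savings never destroys an optimal interaction with a reticulation higher up in the network. Turning the clean ``one representative per reticulation'' intuition into a precise per-side leaf bound and a precise budget decrement is where the technical care concentrates; by contrast, the subsequent weight reduction is a direct invocation of \cite{etscheid,frank}.
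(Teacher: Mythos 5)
Your three-phase architecture coincides at the top level with the paper's: normalize the network, exploit the generator/side-path structure (the paper's core-vertex/side-path analysis, Observation~\ref{obs:Net-core-retic-bound}, supplies exactly the $\Oh(\eret)$ bound on sides that you invoke), bound the budget by committing forced saves, truncate what remains, and postpone the weight compression to the black-box step of Theorem~\ref{thm:Net-kernel} --- that last division of labor is exactly right. The first genuine gap is an ordering circularity in your phase~2. You truncate each pendant subtree to $\Oh(\eret)$ retained leaves by replacing it with a gadget realizing its concave marginal-value curve, but such a gadget must reproduce the curve for every budget that might be spent in that region, i.e.\ it needs $\min(k,t)$ leaves where $t$ is the subtree's leaf count --- and at that point of your argument $k$ is still unbounded, since your budget cap only arrives in phase~3. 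The paper proceeds in the opposite order: it first dissolves pendant trees entirely (Reduction Rule~\ref{rr:Net-trees} repeatedly relocates a dominated sibling leaf to become a child of the root), then bounds the budget via forced saves (Reduction Rule~\ref{rr:Net-top-strings}: once $k > |B|+|Y| \in \Oh(\eret)$, some save is guaranteed surplus and the best candidate, $x^*$ or $y^*$, can be committed), and only then truncates, cutting every side-path after $k+1$ side-vertices (Reduction Rule~\ref{rr:Net-pathlength}), which is what yields $|V^*|,|E^*| \in \Oh(\eret^2)$.

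The second gap is that ``merge the pendant subtrees hanging off each side'' is unsound as stated, and the soundness of your pre-committing step is precisely the missing exchange argument. Subtrees attached at different side-vertices of the same side activate different prefixes of the side-path, so their value curves are not interchangeable; any aggregation must fold the path weights into the comparison. This is exactly what the paper's Reduction Rule~\ref{rr:Net-a>a+b} accomplishes: it installs along every side-path the domination inequality $\w(v_i x_i) \ge \w(v_i v_{i+1}) + \w(v_{i+1} x_{i+1})$, whence one may assume that a solution saving a deeper side leaf also saves all shallower ones; Reduction Rule~\ref{rr:Net-strings} then shortcuts long side-paths to the root, and the forced-save rule compares the best root leaf $a^*$ against the best depth-one side leaf $c^*$ to decide which save to commit. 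Your claim that surplus savings ``never destroy an optimal interaction with a reticulation higher up'' is true only because of these domination orders, and you leave them unproven; without them neither your $k^*\in\Oh(\eret)$ bound nor your per-side leaf bound follows, so the proposal is a correct skeleton with its two load-bearing lemmas missing. (Two minor points: binarizing via Lemma~\ref{lem:Net-degree} is harmless, since it preserves $\eret$, but the paper works on general networks and never needs it; and your rule deleting vertices without leaf descendants is vacuous, as in a phylogenetic network every vertex has a leaf descendant.)
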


Throughout this section, assume that $\Instance = (\Net = (V,E,\w), k, D)$ is an instance of \MAPPD with $\rho$ being the root of \Net.
Let $R \subseteq V$ be the set of reticulation vertices of \Net.
{We apply the following reduction rules exhaustively, and each rule is applied only if none of the previous rules apply.}
After any of the reduction rules let~$\Net' = (V,E',w')$ denote the new network.

\begin{rr}
	\label{rr:Net-trees}
	Let $v\in V$ be a vertex with children $x$ and~$y$ that are leaves.
	Assume~$\w(vx) \ge \w(vy)$.
	If $v\ne \rho$, then replace the edge~$vy$ with an edge~$\rho y$ of weight~$\w'(\rho y) = \w(vx)$.
\end{rr}
\begin{lemma}
	\label{lem:Net-trees}
	Reduction Rule~\ref{rr:Net-trees} is correct and can be applied
	in~$\Oh(n)$~time.
\end{lemma}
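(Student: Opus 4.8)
The plan is to prove correctness by showing that the reduction rule preserves the set of solutions (or at least the \yes/\no-status), and then to verify the running-time bound. First I would observe what structure Reduction Rule~\ref{rr:Net-trees} exploits: we have an internal vertex~$v$ with two leaf-children~$x$ and~$y$, with $\w(vx) \ge \w(vy)$. The rule reattaches the smaller-weight leaf~$y$ directly to the root~$\rho$, giving its new incoming edge the \emph{larger} weight~$\w(vx)$. Intuitively, this is safe because in any solution, selecting~$y$ only ever ``pays off'' if its incoming edge contributes the weight of the path above it; by moving~$y$ to the root and giving it weight~$\w(vx)$, we make sure that selecting~$y$ contributes exactly as much as selecting~$x$ would have, so the two leaves become interchangeable in terms of diversity.

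The key step is the correctness argument, which I would split into the two directions of an equivalence between solutions of~$\Instance$ and~$\Instance'$. The central claim is that there is an optimal solution~$Z$ that does \emph{not} contain~$y$ (or symmetrically, one can always replace~$y$ by~$x$ without loss): since $\w(vx) \ge \w(vy)$ and both leaves sit below~$v$, any solution containing~$y$ but not~$x$ can be modified by swapping~$y$ for~$x$ without decreasing~$\apPD$. Concretely, if $y \in Z$ and $x \notin Z$, then replacing~$y$ with~$x$ in~$Z$ keeps the size unchanged and does not decrease the all-paths diversity, because the edges on the path from~$\rho$ to~$v$ are affected in either case, and $\w(vx) \ge \w(vy)$. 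Hence one may assume a solution never relies on~$y$ being below~$v$. After moving~$y$ to hang from~$\rho$ with weight~$\w(vx)$, I would check that for every subset~$Z \subseteq X$, the diversity $\apPDsub{\Net'}(Z)$ equals $\apPDsub{\Net}(Z')$ for a corresponding set~$Z'$ of the same size: the contribution of~$y$ in~$\Net'$ is exactly~$\w(vx)$ whenever $y \in Z$, matching the contribution~$x$ would make in~$\Net$, and the rest of the tree is unchanged. This gives a size-preserving bijection on solutions, establishing that~$\Instance$ is a \yes-instance if and only if~$\Instance'$ is.

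I would also need to confirm that the rule does not create an invalid network: since~$y$ was a leaf with in-degree~1 and we merely redirect its single incoming edge to~$\rho$, the degree constraints and acyclicity are preserved, and the root still has in-degree~$0$. One subtlety worth addressing is that after applying the rule, $v$ may now have only~$x$ as a leaf-child (or fewer children), which is exactly the intended effect---the rule progressively funnels redundant leaf-siblings up to the root so that no internal vertex retains two leaf-children other than possibly at the root itself.

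For the running time, the observation is that a single application requires only identifying a vertex~$v$ with two leaf-children, comparing the two edge weights, and performing one edge-redirection with a weight update, all of which is~$\Oh(1)$ given suitable adjacency structures, while scanning for such a vertex costs~$\Oh(n)$. The main obstacle I anticipate is not the mechanics but the correctness bookkeeping: one must argue carefully that the swap-argument (exchanging~$y$ for~$x$) and the weight reassignment together yield a genuinely \emph{equivalent} instance for \emph{all} budgets~$k$ simultaneously, rather than merely preserving optimal diversity for one fixed~$k$; I would handle this by phrasing the correspondence as a diversity-preserving and size-preserving map in both directions, so that the threshold~$D$ and budget~$k$ carry over unchanged.
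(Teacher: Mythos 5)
Your overall plan---an exchange argument showing $y$ can be swapped for $x$, followed by relating diversities in $\Net$ and $\Net'$---is the same strategy the paper uses, and your $\Oh(n)$ running-time argument matches the paper's. However, the load-bearing step of your correctness argument, a size- and diversity-preserving bijection based on the claim that ``the contribution of $y$ in $\Net'$ is exactly $\w(vx)$, matching the contribution $x$ would make in $\Net$,'' is false in two ways. First, if $y\in Z$ and $x\notin Z$, then $x$'s contribution in $\Net$ is $\w(vx)$ \emph{plus} the weight of the path from $\rho$ to $v$ whenever no other taxon of $Z$ covers that path, so the claimed exact match is off by that path weight. Second, you never treat the case $x\in Z$ and $y\in Z$: there the edges above $v$ are affected in both networks, every other edge contributes identically, and one computes $\apPDsub{\Net'}(Z)-\apPD(Z)=\w'(\rho y)-\w(vy)$, which is in general nonzero---so no equal-diversity set $Z'$ of the same size need exist, and a two-way diversity-preserving map of the kind you describe cannot be constructed. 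The paper avoids this by arguing \emph{asymmetrically}: it first normalizes a solution $S$ of $\Instance$ via the exchange so that $x\in S$ or $y\notin S$ and checks case by case that $S$ remains a solution of $\Instance'$; for the converse it argues separately that every edge of $E(\Net')\setminus\{\rho y\}$ affected in $\Net'$ is also affected in $\Net$, giving an inequality $\apPDsub{\Net'}(S)\le\apPD(S)$ rather than an equality or bijection.

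Be aware also that the weight you adopted from the rule's statement, $\w'(\rho y)=\w(vx)$, makes the claimed equivalence genuinely unprovable: take $\rho$ with children $v$ (edge weight $1$) and a leaf $z$ (weight $1$), and $v$ with leaf children $x$, $y$ with $\w(vx)=10$, $\w(vy)=2$; for $k=2$ and $D=20$ the set $\{x,y\}$ attains $1+10+10=21\ge D$ in $\Net'$, while no two-element set in $\Net$ exceeds $13$. The paper's converse inequality only holds when the relocated leaf keeps its \emph{own} weight, $\w'(\rho y)=\w(vy)$---which is consistent with the analogous Reduction Rule~\ref{rr:Net-a>a+b}, where the moved leaf retains its weight, and with the intended role of $y$ after normalization: once $x\in S$ is guaranteed, the relocated $y$ must stand in for its own \emph{marginal} contribution $\w(vy)$ (the path above $v$ being already paid for by $x$), not for $x$'s contribution. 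Under either reading of the weight, your symmetric ``$y$ becomes interchangeable with $x$'' bookkeeping does not close the proof; you need the case analysis and the one-sided inequalities as in the paper.
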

\begin{proof}
	\proofpara{Correctness}
	We first show that if \Instance is a \yes-instance of \MAPPD\lb then $\Instance' := (\Net',k,D)$ is a \yes-instance of \MAPPD.
	Afterward, we show the converse.
	If \Instance is a \yes-instance then let $S\subseteq X$ be a solution.
	If~$y\in S$ but~$x\not\in S$,\lb then define~$S' := {(S\cup \{x\})} \setminus \{y\}$.
	We conclude that~$S'$ is also a solution\lb because $\apPD(S') = \apPD(S) + \w(vx) - \w(vy) \ge \apPD(S)$.
	Therefore, we may assume without loss of generality that~$x \in S$ or~$y \notin S$.
	If~$x \notin S$ and~$y \notin S$, then observe that $\apPD(S) = \apPDsub{\Net'}(S)$ since all edges affected by~$S$ appear in both networks with the same weight. 
	Similarly, if $x \in S$ and $y \notin S$,\lb then $\apPD(S) = \apPDsub{\Net'}(S)$.
	Finally if $x \in S$ and $y \in S$, then any\lb edge $e \in E(\Net)\setminus \{vy\}$ is affected by~$S$ in~$\Net$ if and only if it is affected by~$S$ in~$\Net'$ (in particular, if $y$ is an offspring of $e$ in $\Net$ then $x$ is an offspripng of $e$ in $\Net'$).
	It follows that  $\apPD(S) = \apPDsub{\Net'}(S) - \w'(\rho y) + \w(vx) = \apPDsub{\Net'}(S)$.
	Thus, $S$ is a solution of $\Instance'$ in all cases.

	Conversely, observe that if an edge $e \in E(\Net')\setminus \{\rho y\}$ is affected by $S \subseteq X$ in~$\Net'$ then it is also affected by~$S$ in~$\Net$.
	Thus, $\apPDsub{\Net'}(S) \le \apPD(S)$ and consequently each solution for $\Instance'$ is also a solution for \Instance. Thus, if $\Instance'$ is a \yes-instance of \MAPPD, then so is \Instance.

	\proofpara{Running Time}
	In~$\Oh(n)$ time, we can determine whether there exists a vertex~$v$ which has two children in~$X$.
	If such a vertex~$v$ exists, we only need to compare~$\w(vx)$ and~$\w(vy)$ in~$\Oh(\log\max_{\w})$~time.
\end{proof}

Note that Reduction Rule~\ref{rr:Net-trees} might create degree-2-vertices; these are handled by the next reduction rule.
\begin{rr}
	\label{rr:Net-deg-2}
	Let $v\in V$ be a vertex of degree-2.
	Let $u$ be the parent and $w$ be the child of $v$.
	Remove~$v$, its incident edges and create an edge~$uw$ with a weight of~$\w'(uw) = \w(uv) + \w(vw)$.
\end{rr}
\begin{lemma}
	\label{lem:Net-deg-2}
	Reduction Rule~\ref{rr:Net-deg-2} is correct and can be applied
	in~$\Oh(n)$~time.
\end{lemma}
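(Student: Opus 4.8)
The statement to prove is that Reduction Rule~\ref{rr:Net-deg-2}---contracting a degree-2 vertex $v$ with parent $u$ and child $w$ into a single edge $uw$ of weight $\w(uv)+\w(vw)$---is correct and can be applied in $\Oh(n)$ time.

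\textbf{The approach.} The plan is to show that the transformation preserves the all-paths phylogenetic diversity of every subset of taxa exactly, from which both directions of the equivalence (and hence correctness) follow immediately. The key observation is that $v$ has exactly one incoming edge $uv$ and exactly one outgoing edge $vw$, so the offspring sets coincide: $\off(uv) = \off(vw) = \off(uw)$, where $uw$ is the new edge in $\Net'$. This is because every directed path from $u$ that passes through $v$ must continue through $w$, and conversely. Consequently, for any taxon set $S$, the edge $uv$ is affected by $S$ in $\Net$ if and only if $vw$ is affected by $S$ in $\Net$, if and only if $uw$ is affected by $S$ in $\Net'$.

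\textbf{Key steps.} First I would fix an arbitrary $S \subseteq X$ and compare $\apPD(S)$ in $\Net$ with $\apPDsub{\Net'}(S)$ in $\Net'$. Every edge of $\Net$ other than $uv$ and $vw$ is an edge of $\Net'$ with unchanged weight and unchanged offspring set, so it contributes identically to both sums. The only difference is that $\Net$ has the two edges $uv, vw$ where $\Net'$ has the single edge $uw$. By the offspring-equality observation, either all three are affected by $S$ or none is. In the affected case they contribute $\w(uv)+\w(vw)$ to $\apPD(S)$ and $\w'(uw) = \w(uv)+\w(vw)$ to $\apPDsub{\Net'}(S)$; in the unaffected case they contribute $0$ to both. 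Hence $\apPD(S) = \apPDsub{\Net'}(S)$ for all $S$. Setting $\Instance' := (\Net', k, D)$, it follows that $S$ is a solution for $\Instance$ if and only if $S$ is a solution for $\Instance'$, so the instances are equivalent and the rule is correct. For the running time, locating a degree-2 vertex and reading off its unique parent and child takes $\Oh(n)$ time (the degree information is available by scanning the incidence structure), and performing the contraction---deleting $v$ and its two incident edges and inserting $uw$ with the summed weight---takes constant time plus $\Oh(\log\max_\w)$ for the single addition, which is dominated by $\Oh(n)$.

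\textbf{Main obstacle.} There is essentially no deep obstacle here; the rule is the simplest of the three. The only point requiring a little care is verifying the offspring-equality $\off(uw) = \off(uv) = \off(vw)$ rigorously, since this is what makes the weight-merging sound, and checking that the set of taxa $X$ is genuinely unchanged by the contraction (the rule removes only the internal degree-2 vertex $v$, never a leaf). I would also note in passing that this rule is the natural cleanup companion to Reduction Rule~\ref{rr:Net-trees}, which may create degree-2 vertices, so that after exhaustive application of both rules no degree-2 vertices remain---a property that will be needed later when bounding $|V^*|$ and $|E^*|$ in terms of $\eret$ in Theorem~\ref{thm:Net-vkernel}.
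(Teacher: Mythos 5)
Your proposal is correct and follows essentially the same route as the paper: the paper's proof simply asserts that $\apPD(S) = \apPDsub{\Net'}(S)$ for every $S \subseteq X$ and notes the $\Oh(n)$ search plus constant-time contraction, while you additionally spell out why the identity holds via $\off(uv) = \off(vw) = \off(uw)$. No gaps.
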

\begin{proof}
	\proofpara{Correctness}
	The correctness follows from the observation that for any~$S \subseteq X$ we have~$\apPD(S) = \apPDsub{\Net'}(S)$.
	
	\proofpara{Running Time}
	In~$\Oh(n)$ time, we can find out whether there exists a vertex with a degree of~2.
	If such a vertex $v$ exists, it takes constant~time to remove $v$ and its incident edges and create the new edge $uw$ with the appropriate weight.
\end{proof}

To evaluate the size of the network after applying reduction rules, we adapt the language of \emph{network generators}.
We refer the reader to~\cite{GambetteGenerators2009, GeneratorsDef2009} for an in-depth study.

\begin{definition}
	\label{def:cor-side}
	~
	\begin{propEnum}
		\item A vertex $v \in V\setminus R$ is a \emph{core-vertex} if $v$ has two children, $u_1$ and $u_2$,\lb where~$u_1 \ne u_2$, and $\desc(u_i) \cap R \ne \emptyset$ for each $i \in \{1, 2\}$.

		\item Let $Q$ be the set of core-vertices of \Net.
		
		\item A \emph{side-path} in $\Net$ is a path from $u$ to $w$ for two vertices $u,w \in R \cup Q$ with no internal vertices in $R\cup Q$.
		The internal vertices of a side-path are \emph{side-vertices}.

		\item Let $Z$ be the set of side-vertices of \Net.
	\end{propEnum}
\end{definition}

Note that after applying Reticulation Rules~\ref{rr:Net-trees} and~\ref{rr:Net-deg-2}, every non-leaf vertex is either a reticulation or has at least one reticulation as a descendant.
Therefore, every non-leaf vertex is in $Q$, in $R$, or in $Z$.
Every side-vertex has exactly one child which is a leaf.
The following result is similar to one in~\cite{GambetteGenerators2009}. For completeness we prove it here.

\begin{observation}
	\label{obs:Net-core-retic-bound}
	Every network~\Net has $\Oh(\eret)$ side-paths.
	Further, it is satisfied that~$|R|+|Q| \in \Oh(\eret)$.
\end{observation}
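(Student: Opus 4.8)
The plan is to reduce everything to a double counting argument on the subgraph of $\Net$ induced by the vertices that still reach a reticulation. Define $V_R \subseteq V$ to be the set of vertices $v$ with $\desc(v) \cap R \neq \emptyset$; since $v \in \desc(v)$ this contains $R$ itself. First I would record the trivial bound $|R| \le \eret$: from $\eret = \sum_{r \in R}(\deg^-(r)-1)$ and $\deg^-(r) \ge 2$ for every reticulation, each summand is at least $1$. Next I would argue that, after Reduction Rules~\ref{rr:Net-trees} and~\ref{rr:Net-deg-2} have been applied exhaustively, $V_R$ is exactly the disjoint union $R \cup Q \cup Z$ (every non-leaf vertex lies in $R\cup Q\cup Z$ by the remark preceding the statement, each of $R$, $Q$, $Z$ has a reticulation descendant, and leaves do not), and that the root $\rho$ lies in $Q$: it is a non-leaf, cannot be a reticulation as it has in-degree $0$, and cannot be a side-vertex as those have in-degree $1$.

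The key structural step, and the part I expect to be the main obstacle, is pinning down the local shape of side-vertices in a possibly non-binary network. For $v \in V_R$ let $f(v)$ denote the number of children of $v$ lying in $V_R$, equivalently the number of children carrying a reticulation descendant. I would show $f(z) = 1$ for each side-vertex $z$: $z$ has in-degree $1$ and, being neither a reticulation nor (by Reduction Rule~\ref{rr:Net-deg-2}) a degree-$2$ vertex, has out-degree at least $2$; since $z \notin Q$ at most one child carries a reticulation descendant, while the side-path through $z$ forces at least one, so exactly one. Any other child of $z$ roots a reticulation-free subtree, and such a subtree must be a single leaf: a deepest internal vertex of it would have two leaf children and differ from $\rho$, contradicting exhaustiveness of Reduction Rule~\ref{rr:Net-trees}. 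As $z \neq \rho$ cannot itself have two leaf children either, $z$ has out-degree exactly $2$, one child a leaf and one in $V_R$. By Definition~\ref{def:cor-side} we have $f(q) \ge 2$ for every core vertex $q$, and $f(r) \le 1$ for every reticulation $r$ since it has out-degree $1$.

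With this in hand I would double count the edges of $G_R := \Net[V_R]$. Every parent of a vertex of $V_R$ again lies in $V_R$, so the in-degree of each $v$ in $G_R$ equals its in-degree in $\Net$; summing in-degrees gives $|E(G_R)| = \sum_{r\in R}\deg^-(r) + (|Q|+|Z|-1) = \eret + |V_R| - 1$, where $\sum_{r} \deg^-(r) = |R| + \eret$ and the $-1$ accounts for $\rho$ having in-degree $0$. Summing out-degrees gives $|E(G_R)| = \sum_{v\in V_R} f(v) \ge 2|Q| + |Z|$, using $f(z)=1$, $f(q)\ge 2$, and $f(r)\ge 0$. Comparing the two expressions yields $|Q| \le \eret + |R| - 1 \le 2\eret - 1$, hence $|R| + |Q| \in \Oh(\eret)$. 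Finally, each side-vertex has in-degree and out-degree exactly $1$ in $G_R$, so contracting all side-vertices (each contraction deletes one vertex and one edge) turns $G_R$ into a multigraph on $R \cup Q$ whose arcs are precisely the side-paths; therefore the number of side-paths equals $|E(G_R)| - |Z| = \eret + |R| + |Q| - 1 \in \Oh(\eret)$. I would finish by treating the degenerate case $R = \emptyset$ (a tree) separately, where $Q$, $Z$, and the set of side-paths are all empty and the bounds hold trivially.
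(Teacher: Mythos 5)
Your double-counting idea is sound in spirit and ultimately yields the right bounds, but as written it proves a weaker statement than the one claimed, and one intermediate step is false. First, the observation is stated for \emph{every} network, and the paper needs it in that generality: in the proof of Theorem~\ref{thm:Net-vkernel} it is invoked for the \emph{original, unreduced} input network (to bound the number of applications of Reduction Rule~\ref{rr:Net-strings} by the number of side-paths in the original network). Your argument leans on the exhaustive application of Reduction Rules~\ref{rr:Net-trees} and~\ref{rr:Net-deg-2} at essential places (the classification $V_R = R\cup Q\cup Z$, the derivation of $f(z)=1$ via out-degree at least~$2$, and reticulation-free subtrees collapsing to single leaves), so it only covers reduced networks; the paper's proof uses no reduction rules at all. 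Second, your claim that $\rho\in Q$ --- and hence the exact equality $V_R = R\cup Q\cup Z$ --- can fail even after the rules: nothing forces \emph{two} children of the root to carry reticulation descendants. For instance, after the rules the root may have one leaf child and one child that is a core vertex; then $\rho$ is neither a reticulation, nor core, nor a side-vertex (a side-vertex is an \emph{internal} vertex of a path between two vertices of $R\cup Q$, and nothing lies above $\rho$), and the same holds for every vertex on the chain from $\rho$ down to the first $R\cup Q$ vertex. (The remark preceding the statement, that every non-leaf lies in $R\cup Q\cup Z$, has the same blind spot at the root.) This invalidates both of your exact identities $|E(G_R)| = \eret + |V_R| - 1$ and $\#\text{side-paths} = |E(G_R)| - |Z|$, since the chain edges lie on no side-path.

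The damage is repairable: each vertex of that root chain has in-degree $1$ (the root: $0$) and exactly one child in $V_R$, so the chain contributes equally to your in-degree and out-degree sums and cancels out of the bound $|Q|\le \eret+|R|-1$, and subtracting its edges restores the side-path count; but you would need to carry this extra vertex class explicitly through the computation. It is worth noting how the paper sidesteps all of this with a purely local count that is valid for arbitrary networks: every side-path enters its bottom endpoint, which lies in $Q\cup R$, through a distinct incoming edge, so the number of side-paths is at most $\sum_{v\in Q\cup R}\deg^-(v) \le \eret + |R| + |Q| \le 2\eret + |Q|$; conversely every core vertex emits at least two side-paths, so there are at least $2|Q|$ of them; combining gives $|Q|\le 2\eret$ and at most $4\eret$ side-paths, with $|R|\le\eret$ exactly as you observed. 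Your global argument buys slightly sharper constants and a clean structural picture of the reduced network, but at the cost of generality and of the root edge case you overlooked.
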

\begin{proof}
	Observe that $|R| = \vret \leq \eret$. 
	As every side-path in $\Net$ ends at a core vertex or reticulation, we have that the number of side-paths is at most 
	\begin{eqnarray}
		\sum_{v \in Q\cup R} \deg^-(v)
		&=& \sum_{r \in  R} (\deg^-(r) - 1) + |Q| + |R|\\
		&=& \eret +  |Q| + |R| \leq 2\cdot \eret + |Q|.
	\end{eqnarray}

	Conversely, each core-vertex has at least two side-paths leaving it, from\lb which it follows that the number of side-paths is at least $2\cdot |Q|$.
	We conclude\lb that~$2\cdot |Q| \leq  2\cdot \eret + |Q|$ and so $|Q| \leq 2\cdot \eret$.
	This implies that the number of side-paths is at most $4\cdot \eret$.
\end{proof}

\begin{rr}
	\label{rr:Net-a>a+b}
	Let~$v$ and~$w$ be side-vertices and let~$v$ be the parent of~$w$.
	Further, let~$x_v$ and~$x_w$ be leaves which are children of $v$ and~$w$, respectively.\lb
	If~$\w(vx_v) \le \w(wx_w) + \w(vw)$ then replace the edge~$vx_v$ with an edge~$\rho x_v$ with a weight of~$\w'(\rho x_v) = \w(vx_v)$.
\end{rr}
\begin{lemma}
	\label{lem:Net-a>a+b}
	Reduction Rule~\ref{rr:Net-a>a+b} is correct and can be applied
	in~$\Oh(n)$~time.
\end{lemma}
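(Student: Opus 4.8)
The plan is to follow the template of Lemma~\ref{lem:Net-trees}: setting $\Instance' := (\Net', k, D)$, I would show that $\Instance$ is a \yes-instance of \MAPPD if and only if $\Instance'$ is. Throughout I use that $v$ and $w$ are consecutive side-vertices, so $v$ has exactly the two children $w$ and $x_v$, while $w$ has the leaf child $x_w$ together with one further child continuing the side-path; in particular, every path from the root to a descendant of $w$ passes through the edge $vw$, then through $v$, and then through all ancestors of $v$. The reattachment only relocates the leaf $x_v$ from $v$ to $\rho$, giving the new edge $\rho x_v$ the same weight $\w(vx_v)$ as the old edge $vx_v$.

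For the (easier) backward direction I would argue by offspring-monotonicity. For every edge $e$ of $\Net'$ other than $\rho x_v$ we have $\off_{\Net'}(e) \subseteq \off_{\Net}(e)$, where the containment can be strict only for the ancestor edges of $v$, which lose $x_v$ from their offspring; and the pair $\rho x_v$, $vx_v$ contributes $\w(vx_v)$ in its respective network exactly when $x_v$ is selected. Consequently $\apPD(S) \ge \apPDsub{\Net'}(S)$ for every $S \subseteq X$, so any solution of $\Instance'$ is also a solution of $\Instance$.

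For the forward direction I would use an exchange argument driven by the hypothesis $\w(vx_v) \le \w(wx_w) + \w(vw)$. Given a solution $S$ of $\Instance$, the only way $\apPDsub{\Net'}(S)$ can drop below $\apPD(S)$ is if $x_v \in S$ while the ancestor edges of $v$ owe their affectedness solely to $x_v$, i.e.\ no descendant of $w$ lies in $S$. In that case $x_w \notin S$, and replacing $S$ by $S' := (S \setminus \{x_v\}) \cup \{x_w\}$ keeps the size unchanged while changing the diversity by $\w(wx_w) + \w(vw) - \w(vx_v) \ge 0$, since selecting $x_w$ newly affects $wx_w$ and $vw$ and loses only $vx_v$ (all ancestor edges of $v$ were already affected by $x_v$). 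After this swap some descendant of $w$ is selected, so the ancestor edges of $v$ are affected in $\Net'$ as well, and the unique contribution of $x_v$ in $\Net$ (namely just $vx_v$) equals the contribution of the reattached edge $\rho x_v$ in $\Net'$. Thus $\apPDsub{\Net'}(S') = \apPD(S') \ge \apPD(S) \ge D$, yielding a solution of $\Instance'$.

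The main obstacle I anticipate is the careful bookkeeping of which edges above $v$ are affected: I must verify that, once a descendant of $w$ is in the selection, reattaching $x_v$ to the root changes neither the set of affected ancestor edges nor their total weight, so that the equality $\apPDsub{\Net'}(S') = \apPD(S')$ holds rather than a mere inequality. For the running time, detecting applicability amounts to scanning the side-paths for a parent--child pair of side-vertices that both carry a leaf child and testing the single weight inequality, which is possible in $\Oh(n)$ time once Reduction Rules~\ref{rr:Net-trees} and~\ref{rr:Net-deg-2} have been applied.
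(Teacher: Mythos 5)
Your proof is correct and takes essentially the same route as the paper's: the identical exchange step $S' := (S\setminus\{x_v\})\cup\{x_w\}$ in the case $x_v\in S$ and $S\cap\off(w)=\emptyset$, the same accounting $\w(wx_w)+\w(vw)-\w(vx_v)\ge 0$ via the rule's condition, the same observation that afterwards $\apPDsub{\Net'}(S')=\apPD(S')$, and the same offspring-monotonicity argument for the converse direction. One side remark in your setup is inaccurate---descendants of $w$ lying below a reticulation can be reached by root-paths avoiding the edge $vw$, so not \emph{every} such path passes through $vw$---but since affectedness only requires the existence of \emph{some} directed path (which always runs through $v$ for the relevant ancestor edges), your argument never actually relies on this claim and nothing breaks.
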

\begin{proof}
	
	\proofpara{Correctness}
	The proof follows similar lines as the proof of Lemma~\ref{lem:Net-trees}.\lb
	Let $S\subseteq X$ be a solution of $\Instance$.
	Assume that $x_v \in S$ but $S \cap \off(w) = \emptyset$.
	Then,\lb let  $S' := {(S\cup \{x_w\})} \setminus \{x_v\}$.
	Observe that $wx_w$ and $vw$ are affected by~$S'$\lb but not by~$S$, while the only edge affected by~$S$ and not by~$S'$ is~$vx_v$.\lb
	Thus, $\apPD(S') = \apPD(S) - \w(vx_v) + \w(wx_w) + \w(vw)$, which by the  condition of the reduction rule is at least  $\apPD(S)$.
	Therefore, also~$S'$ is a solution of~\Instance with $x_v\not\in S$.
	Thus, we may now assume that $S$ contains a taxon from~$\off(w)$ or~$x_v\not\in S$. 
	In either case, we can observe that $\apPD(S) = \apPDsub{\Net'}(S)$.

	Conversely, observe that any edge $e \in E(\Net')\setminus\{\rho x_v\}$ affected by $S$ in $\Net'$ is also affected by $S$ in $\Net$. Thus $\apPDsub{\Net'}(S) \le \apPD(S)$ for any $S \subseteq X$, and so each solution for $\Instance'$ is also a solution for \Instance.
	
	\proofpara{Running Time}
	In~$\Oh(n)$ time, we can find out whether there exists a vertex satisfying the conditions of~$v$ in Reduction Rule~\ref{rr:Net-a>a+b}, and if so edit the network accordingly.
\end{proof}

\begin{rr}
	\label{rr:Net-strings}
	
	Let~$u$ and~$w$ be core-vertices or reticulations with $u \neq \rho$ and
	for some~$\ell > 1$
	let~$v_1,\dots,v_\ell$ be side-vertices such that $uv_1, v_\ell w, v_iv_{i+1} \in E$ for each $i\in [\ell-1]$.
	Remove the edge~$v_1 v_2$ and add edges~$\rho v_2$ and $v_1 w$ with a weight of~$\w'(\rho v_2) = \w(v_1 v_2)$, and~$\w'(v_1 w) = 0$.
\end{rr}
An application of this reduction rule is depicted in Figure~\ref{fig:Net-strings}.
Here, we slightly bend our own definitions in which we required that $\w(e) > 0$ for each edge~$e$ and that each vertex either has an in-degree of~1 or an out-degree of~1.
We note that after applying the reduction rules exhaustively, we can adjust the instance to ensure these requirements are met.
First, we can replace each vertex~$v$ which has an in-degree and an out-degree of larger than~1 with~$v_{\text{in}}$ receiving all the incoming edges of~$v$ and~$v_{\text{out}}$ receiving $v$ in all the outgoing edges and adding an edge~$v_{\text{in}} v_{\text{out}}$ with a weight of~0.
Furthermore, we can multiply each weight and $D$ by {$m+1$}.
Then, we can set the edges with a weight of~0 to a weight of~1.


\begin{figure}[t]
	\centering
	\begin{tikzpicture}[scale=0.7,every node/.style={scale=0.7}]
		\node (u) {$u$}
		child {node {} edge from parent[draw=none]}
		child {node {$v_1$} edge from parent[-{Stealth[length=6pt]}]
			child {node {$x_1$} edge from parent[-{Stealth[length=6pt]}]}
			child {node {$v_2$} edge from parent[-{Stealth[length=6pt]}]
				child {node {$x_2$} edge from parent[-{Stealth[length=6pt]}]}
				child {node {$v_\ell$} edge from parent[dotted,-{Stealth[length=.1pt]}]
					child {node {$x_\ell$} edge from parent[solid,-{Stealth[length=6pt]}]}
					child {node {$w$} edge from parent[solid,-{Stealth[length=6pt]}]};
				};
			};
		};
		
		\node (root) at (2,0) {$\rho$};
		\draw[bend right, dashed,-{Stealth[length=6pt]}] (root) to (u);
		
		\node at (3,0) {(1)};
		\draw (4,.25) to (4,-6.25);
	\end{tikzpicture}
	\begin{tikzpicture}[scale=0.7,every node/.style={scale=0.7}]
		\node (root) {$\rho$}
		child {node (v2) {$v_2$} edge from parent[-{Stealth[length=6pt]}]
			child {node {$v_3$} edge from parent[-{Stealth[length=6pt]}]
				child {node {$v_\ell$} edge from parent[dotted,-{Stealth[length=.1pt]}]
					child {node (w) {$w$} edge from parent[solid,-{Stealth[length=6pt]}]}
					child {node {$x_\ell$} edge from parent[solid,-{Stealth[length=6pt]}]}
				}
				child {node {$x_3$} edge from parent[-{Stealth[length=6pt]}]}
			}
			child {node {$x_2$} edge from parent[-{Stealth[length=6pt]}]}
		};
		\node[left of=v2, xshift=-35mm] (u) {$u$}
		child {node {} edge from parent[draw=none]}
		child {node (v1) {$v_1$} edge from parent[-{Stealth[length=6pt]}]
			child {node {$x_1$} edge from parent[-{Stealth[length=6pt]}]}
			child {node {} edge from parent[draw=none]}
		};
		\draw[-{Stealth[length=6pt]}] (v1) to (w);
		\draw[bend right, dashed,-{Stealth[length=6pt]}] (root) to (u);
		
		\node at (-.75,-.5) {$\w(u v_1)$};
		\node at (-3.25,-4.75) {$0$};
		
		\node at (1,0) {(2)};
	\end{tikzpicture}
	\caption{This figure in (1) depicts the path from a core-vertex~$v$ to another core vertex~$w$ and in~(2) an application of Reduction Rule~\ref{rr:Net-strings} to the path depicted in (1).}
	\label{fig:Net-strings}
\end{figure}
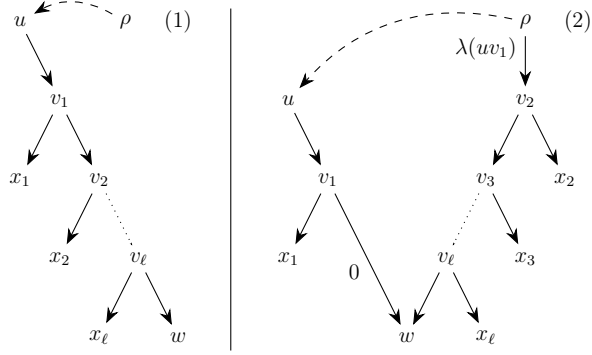%

\begin{lemma}
	\label{lem:Net-strings}
	Reduction Rule~\ref{rr:Net-strings} is correct and can be applied 
	in $\Oh(m)$ time.
\end{lemma}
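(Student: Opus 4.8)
The plan is to prove Reduction Rule~\ref{rr:Net-strings} in the same two-part style as the previous reduction rules: first establish correctness by showing that an instance $\Instance$ is a \yes-instance of \MAPPD if and only if the reduced instance $\Instance'$ is, and then bound the running time. The key structural observation is that the side-vertices $v_1, \dots, v_\ell$ form a path hanging between $u$ and $w$, where each $v_i$ has exactly one leaf-child $x_i$, and the reduction detaches the subpath $v_2, \dots, v_\ell$ from $v_1$ (reattaching it to the root) while creating a zero-weight edge $v_1 w$ so that $v_1$ retains a route to $w$'s subtree.

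First I would argue the forward direction. Let $S \subseteq X$ be a solution for $\Instance$. The main point is to understand how the set of affected edges changes. In $\Net$, the edge $v_1 v_2$ is affected by $S$ precisely when $S$ meets $\off(v_2) = \{x_2, \dots, x_\ell\} \cup \off(w)$. In $\Net'$, this role is split: the edge $\rho v_2$ (of the same weight $\w(v_1 v_2)$) is affected by $S$ under the same condition, since $v_2$ has the same offspring in both networks. The newly added edge $v_1 w$ has weight $0$, so it contributes nothing to the diversity and its presence only matters for maintaining connectivity of $v_1$ to $\off(w)$; every other edge (including $u v_1$ and $v_1 x_1$) is affected by $S$ in $\Net$ if and only if it is affected in $\Net'$. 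Tracking these correspondences should yield $\apPD(S) = \apPDsub{\Net'}(S)$, so $S$ is a solution for $\Instance'$. The converse direction is symmetric: any edge affected in $\Net'$ corresponds to an equally-weighted affected edge in $\Net$, giving $\apPDsub{\Net'}(S) \le \apPD(S)$, hence every solution of $\Instance'$ is a solution of $\Instance$.

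I anticipate the main obstacle to be the bookkeeping around the vertex $v_1$ and the offspring relation after the surgery. The subtlety is that after the rule, $v_1$ becomes a vertex whose only weight-bearing outgoing edge goes to its leaf $x_1$, but it still reaches $\off(w)$ through the zero-weight edge $v_1 w$; meanwhile $v_2$ acquires a new parent $\rho$ and so $v_1$ is no longer an ancestor of $x_2, \dots, x_\ell$. I would need to verify carefully that the offspring sets $\off(e)$ of the \emph{surviving} edges are preserved in the sense required, and in particular that the edge $u v_1$ keeps its weight and its offspring (now $\{x_1\} \cup \off(w)$ rather than the full original set) in a way that does not change whether it is affected by any given $S$. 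This is exactly the kind of case analysis (does $S$ meet $\off(w)$? does $S$ meet $\{x_2,\dots,x_\ell\}$? does $S$ contain $x_1$?) that makes the equality of $\apPD$ values delicate, and it is where the zero-weight trick must be justified against the convention that edge weights are positive integers, as flagged in the remark preceding the lemma.

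Finally, for the running time I would argue that in $\Oh(m)$ time one can traverse the network to detect a maximal side-path of length $\ell > 1$ between two vertices of $R \cup Q$ (using the classification of vertices into core-vertices, reticulations, and side-vertices established in Definition~\ref{def:cor-side}), and that performing the edge deletion and the two edge insertions takes constant time once such a path is located. Since identifying the relevant side-vertices and their leaf-children is a single pass over the vertex and edge sets, the total cost of one application is $\Oh(m)$, matching the claimed bound.
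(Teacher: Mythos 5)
There is a genuine gap in your forward direction. You claim that for a solution $S$ of $\Instance$ the correspondences yield $\apPD(S) = \apPDsub{\Net'}(S)$, and you even identify the critical edge $uv_1$ but then assert its affectedness is unchanged. It is not: in $\Net'$ the offspring of $v_1$ shrink to $\{x_1\} \cup \off(w)$, so if $S$ contains some $x_i$ with $i \ge 2$ but contains neither $x_1$ nor any offspring of $w$, then $uv_1$ is affected by $S$ in $\Net$ but \emph{not} in $\Net'$, and hence $\apPDsub{\Net'}(S) = \apPD(S) - \w(uv_1) < \apPD(S)$. In this case $S$ itself may fail to be a solution of $\Instance'$, so the rule does not preserve solutions pointwise, and no amount of bookkeeping over the surviving edges can rescue the equality.

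What the argument actually needs --- and what the paper's proof supplies --- is an exchange step that exploits the order in which the rules are applied: since Reduction Rule~\ref{rr:Net-a>a+b} is inapplicable, one has $\w(v_i x_i) \ge \w(v_i v_{i+1}) + \w(v_{i+1} x_{i+1})$ along the side-path, and chaining these gives $\w(v_1 x_1) \ge \w(v_i x_i) + \sum_{j=1}^{i-1} \w(v_j v_{j+1})$. In the problematic case one therefore replaces $x_i$ by $x_1$, i.e.\ sets $S' := (S \cup \{x_1\}) \setminus \{x_i\}$, and the inequality shows $\apPD(S') \ge \apPD(S)$; since $x_1 \in S'$, the equality $\apPDsub{\Net'}(S') = \apPD(S')$ now does hold, so $S'$ witnesses that $\Instance'$ is a \yes-instance. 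Your converse direction (every solution of $\Instance'$ is one of $\Instance$, via $\apPDsub{\Net'}(S) \le \apPD(S)$ and the zero weight of $v_1w$) and your $\Oh(m)$ running-time argument both match the paper and are fine; the missing piece is precisely this weight-based exchange, without which the forward direction fails.
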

\begin{proof}
	\proofpara{Correctness}
	Observe that if an edge $e \in E(\Net')\setminus\{\rho v_2, v_1w\}$ is affected by~$S \subseteq X$ in $\Net'$ then $e$ is also affected by $S$ in $\Net$.
	Furthermore, $\rho v_2$ is affected by~$S$ in~$\Net'$ if and only if $v_1v_2$ is affected by~$S$ in~$\Net$.
	Therefore, for every set~$S \subseteq X$ we have $\apPDsub{\Net'}(S) \le \apPD(S) + w(v_1w) = \apPD(S)$, and each solution for $\Instance'$ is also a solution for \Instance.
	
	We now show the converse.
	We first observe some facts.
	Let $x_i$ denote the leaf child of $v_i$ for each $i \in [\ell]$.
	We observe that due to Reduction Rule~\ref{rr:Net-a>a+b}, we may assume that $\w(v_ix_i) \ge \w(v_i v_{i+1}) + \w(v_{i+1} x_{i+1})$ for each~$i \in [\ell - 1]$.
	Consequently,
	\begin{eqnarray}
		\w(v_1x_1)
		&\geq& \w(v_2x_2) + \w(v_1 v_2)\\
		&\geq& \w(v_3x_3) + \w(v_2 v_3) + \w(v_2 v_1)\\
		&\geq& \w(v_i x_i) + \sum_{j = 1}^{i-1}\w(v_j v_{j+1})
	\end{eqnarray}
	for each $i \in [\ell - 1]$. 
	
	Let $S$ be a solution of \Instance.
	Observe that if $S$ contains $x_1$ or an offspring of~$w$, then~$\apPD(S) = \apPDsub{\Net'}(S)$.
	Similarly, if $S$ does not contain one of the taxa~$x_1, \dots, x_\ell$ then~$\apPD(S) = \apPDsub{\Net'}(S)$.
	So, assume now that $S$ does not contain~$x_1$ nor an offspring of~$w$ but~$x_i \in S$ for some~$i\in [\ell]$ with~$i>1$.
	Define~$S':= (S \cup \{x_1\}) \setminus \{x_i\}$.
	Then,
	\begin{eqnarray}
		\apPD(S') &\ge& \apPD(S) + \w(v_1x_1) - \w(v_i x_i) - \sum_{j = 1}^{i-1}\w(v_j v_{j+1}),
	\end{eqnarray}
	which by the observation above is at least $\apPD(S)$.
	As $x_1 \in S'$, we have that~$\apPDsub{\Net'}(S') = \apPD(S')$, and so $S'$ is  a solution for $\Instance'$.

	\proofpara{Running Time}
	To find the vertices~$u$ and~$w$ we iterate over the vertices in~$Q$ as vertex~$u$ and consider each outgoing edge as a path to~$u$.
	Therefore, all possible combinations of~$u$ and~$w$ are found in~$\Oh(m)$~time.
	Each operation can be executed in constant time.
\end{proof}

We now categorize the vertices in some sets to be able to easier refer to them.

\begin{definition}
	\label{def:top-strings}
	~
	\begin{propEnum}
		\item Define $A$ to be the intersection of $X$ with children of $\rho$.
		\item Define $B$ to be the set of vertices which can be reached from $R \cup (Q \setminus \{\rho\})$.
		\item Define $Y$ to be the side-vetices which are children of $\rho$ and define $Y_X$ to be the intersection of $X$ with the children of $Y$.
		\item Define $Z$ to be the side-vetices which are not in $B\cup Y$ and define $Z_X$ to be the intersection of $X$ with the children of $Z$.
		\item Define $a^* := \max_{x\in A} \w(\rho x)$ and $x^* := \arg\max_{x\in A} \w(\rho x)$.
		\item Define $c^* := \max_{y\in Y_X}\w(v_y y)$ 
		and $y^* := \arg\max_{y\in Y_X} \w(v_y y)$ where $v_y$ is the parent of $y$.
	\end{propEnum}
\end{definition}

In the following, for a taxon~$x\in X$, we use an operation~\emph{remove~$x$} when we delete~$x$ from~$X$ and~$V$ as well as the incoming edge of~$x$ from~$E$.
Further, we use an operation \emph{save~$x$} consisting of these steps:
\begin{itemize}
	\itemsep-.35em
	\item Reduce $k$ by $1$,
	\item reduce $D'$ by $\apPD(\{x\})$,
	\item delete~$x$ from~$X$,
	\item remove all edges in $E_{\{x\}}$, and
	\item identify all vertices with no incoming edge to a single root.
\end{itemize}

\begin{lemma}
	\label{lem:Net-save-remove}
	For a given instance \Instance of \MAPPD and a taxon~$x$,
	\begin{propEnum}
		\item \Instance has a solution~$S$ with~$x \in S$ if and only if~$S\setminus \{x\}$ is a solution for the instance after~$x$ is saved, and
		\item \Instance has a solution~$S$ with~$x \notin S$ if and only if~$S$ is a solution for the instance after~$x$ is removed.
	\end{propEnum}
\end{lemma}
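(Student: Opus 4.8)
The statement is a pair of straightforward equivalences relating the operations \emph{save} and \emph{remove} to solutions of the original instance. The plan is to prove each of the two items by a direct argument that tracks exactly how the phylogenetic diversity and the constraints $k$ and $D$ transform under each operation. Both items share the same skeleton: identify the new instance $\Instance'$ produced by the operation, then verify that the defining conditions of a solution (size at most $k$, all-paths phylogenetic diversity at least $D$) are preserved in both directions. The whole argument rests on understanding how the set of affected edges $E_{\{x\}}$ and the edge weights interact with the operation, so the first step I would take is to make the effect of each operation on $\apPD$ completely explicit.

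For item~(b), the \emph{remove} operation, I would first observe that deleting a taxon $x$ (together with its incoming edge) does not change which edges are affected by any set $S$ with $x \notin S$: an edge $e$ is affected by $S$ in $\Net$ if and only if it has an offspring in $S$, and since $x \notin S$, removing $x$ does not alter $\off(e) \cap S$ for any remaining edge. Hence $\apPDsub{\Net}(S) = \apPDsub{\Net'}(S)$ and $|S|$ is unchanged, so $S$ is a solution for $\Instance$ with $x \notin S$ if and only if $S$ is a solution for the instance after removing $x$. Care is needed to confirm that no edge weight or other constraint is touched, but this is immediate from the definition of the operation.

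For item~(a), the \emph{save} operation, the key identity is that for any $S \subseteq X$ containing $x$, the edges affected by $S$ in $\Net$ split into those affected by $S \setminus \{x\}$ together with the edges in $E_{\{x\}}$. Since \emph{save} reduces $D$ by $\apPD(\{x\})$ and deletes precisely the edges in $E_{\{x\}}$ (while contracting all newly rootless vertices into a single root, which does not affect diversity), I would show that $\apPDsub{\Net'}(S\setminus\{x\}) = \apPDsub{\Net}(S) - \apPD(\{x\})$. Simultaneously $k$ is reduced by $1$, matching the removal of $x$ from $S$. Combining these gives that $\apPDsub{\Net}(S) \ge D$ and $|S| \le k$ hold if and only if $\apPDsub{\Net'}(S\setminus\{x\}) \ge D - \apPD(\{x\})$ and $|S\setminus\{x\}| \le k-1$, which is exactly the equivalence claimed.

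\textbf{Main obstacle.} The subtle point, and the one I would spend the most care on, is the diversity bookkeeping in item~(a): after saving $x$, some edges of $E_{\{x\}}$ may still be affected by other taxa in $S\setminus\{x\}$, so it is not immediately obvious that deleting \emph{all} of $E_{\{x\}}$ while subtracting exactly $\apPD(\{x\}) = \w_\Sigma(E_{\{x\}})$ yields the correct residual diversity. The resolution is that an edge in $E_{\{x\}}$ contributes its weight to $\apPD(S)$ exactly once regardless of how many taxa of $S$ it serves, and that same weight is subtracted once via $\apPD(\{x\})$; the edges deleted are precisely those whose only recorded contribution was through $x$, and any edge affected both by $x$ and by another member of $S$ is shared and must be handled consistently. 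I would therefore argue precisely that the edges removed are exactly those with $\off(e) \cap S = \off(e) \cap \{x\}$-type dependence, verifying that no double-counting or under-counting occurs, and that contracting the rootless components introduces no spurious edges or weight. This case analysis on the overlap between $E_{\{x\}}$ and the edges affected by $S\setminus\{x\}$ is the only genuinely non-routine part of the proof.
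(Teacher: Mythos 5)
Your proposal is correct and takes essentially the same route as the paper's proof: both verify the claims by direct bookkeeping of affected edge sets, using that for $x \in S$ the set $E_S$ decomposes as $(E_S \setminus E_{\{x\}}) \cup E_{\{x\}}$ with $E_{\{x\}} \subseteq E_S$, so that $\apPDsub{\Net'}(S\setminus\{x\}) = \apPD(S) - \apPD(\{x\})$ where $\Net'$ is the saved network (the paper records only the two inequalities $\apPD(S\cup\{x\}) \geq \apPDsub{\Net'}(S) + \apPD(\{x\})$ and $\apPDsub{\Net'}(S\setminus\{x\}) \geq \apPD(S) - \apPD(\{x\})$, which suffice, but your exact equalities do hold). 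One sentence in your obstacle paragraph misstates which edges are deleted---the save operation removes \emph{all} of $E_{\{x\}}$, including edges also affected by other taxa of $S$, not only those whose sole contribution is through $x$---but this does not damage the argument, since the accounting you give first (each edge of $E_{\{x\}}$ has its weight counted exactly once in $\apPD(S)$ and subtracted exactly once via $\apPD(\{x\})$) is the correct resolution, and the surviving edges keep their offspring because any path from the head of an edge outside $E_{\{x\}}$ to a taxon other than $x$ cannot use an edge of $E_{\{x\}}$.
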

\begin{proof}
	To see the first claim, let $\Net'$ be the network after saving $x$.
	Any edge in $\Net'$ affected by some $S \subseteq X\setminus\{x\}$ has a corresponding edge in $E(\Net)\setminus E_{\{x\}}$ with the same offspring, from which it follows that $\apPD(S\cup\{x\}) \geq \apPDsub{\Net'}(S) + D'$.
	Conversely, if $x \in S \subseteq X$ then any edge in $E_S \setminus E_{\{x\}}$ has a corresponding edge in $\Net'$ affected by $S \setminus \{x\}$, from which it followed that $\apPDsub{\Net'}(S\setminus \{x\}) \geq \apPD(S) - D'$.
	
	The second claim follows immediately from the definition of $\apPD(S)$.
\end{proof}

\begin{rr}
	\label{rr:Net-top-strings}
	If $k > |B| + |Y|$ and~$a^* > c^*$, then save~$x^*$.
	If $k > |B| + |Y|$ and~$a^* \le c^*$, then save $y^*$.
\end{rr}
\begin{lemma}
	\label{lem:Net-top-strings}
	Reduction Rule~\ref{rr:Net-top-strings} is correct and can be applied 
	in $\Oh(m)$ time.
\end{lemma}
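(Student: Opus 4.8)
The plan is to show that under the condition $k > |B|+|Y|$ there is always a maximum-weight solution of size~$k$ that contains $x^*$ (when $a^*>c^*$) or $y^*$ (when $a^*\le c^*$); correctness of the rule then follows immediately from Lemma~\ref{lem:Net-save-remove}, and its converse direction (re-adding the saved taxon) is routine. I would first dispose of the trivial case $k\ge |X|$, where $X$ itself is the only maximal candidate and contains both $x^*$ and $y^*$. Otherwise fix a size-$k$ solution $S$ maximising $\apPD(S)$ among all size-$k$ sets; since enlarging a set never decreases $\apPD$, such an $S$ exists and satisfies $\apPD(S)\ge D$ whenever $\Instance$ is a \yes-instance. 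The goal is to transform $S$ into an equally good size-$k$ solution containing the prescribed taxon.

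The clean marginal contributions guaranteed by the earlier reduction rules are the main tool. A taxon $x\in A$ is a leaf child of $\rho$, so adding it to any set raises $\apPD$ by exactly $\w(\rho x)\le a^*$, and $x^*$ always contributes exactly $a^*$; a taxon $y\in Y_X$ is the unique leaf child of a side-vertex $v_y$ that is itself a child of $\rho$, so adding $y$ raises $\apPD$ by at least its dedicated edge weight $\w(v_yy)$, and $y^*$ contributes at least $c^*$. The decisive structural input is that, because Reduction Rule~\ref{rr:Net-a>a+b} has been applied exhaustively, the leaf-edge weights strictly decrease down every side-path $\rho\to v_1\to v_2\to\cdots$ from the root: $\w(v_tx_t) > \w(v_tv_{t+1})+\w(v_{t+1}x_{t+1})$, where $x_t$ is the leaf child of $v_t$. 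Telescoping gives $c^*\ge \w(v_1x_1) > \w(v_ix_i)$ for every $Z_X$-taxon $x_i$ on such a path.

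The heart of the argument is a top-closedness property of maximum solutions: if a $Z_X$-taxon $x_i$ (with $i\ge 2$) lies in $S$, then its chain-parent leaf $x_{i-1}$ lies in $S$ too. I would prove this by contradiction — if some proper ancestor $v_j$ ($j<i$) has only $x_i$ from $S$ below it, then replacing $x_i$ by the topmost such leaf $x_j$ strictly increases $\apPD$, since the telescoped inequality yields $\w(v_jx_j) > \sum_{t=j}^{i-1}\w(v_tv_{t+1})+\w(v_ix_i)$, contradicting maximality. Iterating, every saved $Z_X$-taxon forces its whole chain up to a $Y_X$-taxon $x_1$ into $S$, whose marginal in $S$ is then only the dedicated edge $\w(v_1x_1)\le c^*$ (the edge $\rho v_1$ is kept affected by $x_2\in S$). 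Now the counting finishes the job: the taxa of $X$ outside $A\cup Z_X$ all lie in $Y_X\cup(X\cap B)$ and number at most $|B|+|Y|<k=|S|$, so $S$ meets $A\cup Z_X$. If $a^*>c^*$ and $S\cap A\neq\emptyset$, swap an $A$-taxon $z$ for $x^*$ (loss $\w(\rho z)\le a^*$, gain $a^*$); if $S\cap A=\emptyset$ then $S$ contains a $Z_X$-taxon, hence its chain-top $x_1$, and swapping $x_1$ for $x^*$ gains $a^*-\w(v_1x_1)\ge a^*-c^*>0$, contradicting maximality, so this subcase cannot occur. The case $a^*\le c^*$ is symmetric, swapping into $y^*$ against a loss of at most $\max\{a^*,\w(v_1x_1)\}\le c^*$. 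The running time is dominated by computing $B$, the core vertices $Q$, the quantities $a^*,c^*,x^*,y^*$, and one save operation, each doable in $\Oh(m)$ time by graph traversal.

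The main obstacle is the top-closedness property together with its move-up exchange: the bookkeeping of exactly which internal path edges are paid for solely by the candidate taxon, and the verification that the telescoped decreasing-weight inequality cancels the lost internal path edges, is the delicate point where the argument could fail. Once that structural fact is in hand, the remaining case analysis is routine counting combined with the clean, dedicated-edge marginal values that the preceding reduction rules secure.
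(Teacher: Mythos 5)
Your proposal is correct and follows essentially the same route as the paper's proof: the telescoped inequality $\w(v_{j,1}z_{j,1}) \ge \w(v_{j,i}z_{j,i}) + \sum_{h=1}^{i-1}\w(v_{j,h}v_{j,h+1})$ secured by the exhaustive application of Reduction Rule~\ref{rr:Net-a>a+b}, the exchange argument pushing any saved $Z_X$-taxon's chain up into $Y_X$, the counting step that $k > |B|+|Y|$ forces a maximum solution to meet $A \cup Z_X$, the final swap into $x^*$ or $y^*$ by comparing $a^*$ with $c^*$, and the appeal to Lemma~\ref{lem:Net-save-remove}, with the same $\Oh(m)$ accounting. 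Your step-by-step ``top-closedness'' lemma is merely a more explicit (and somewhat more careful) rendering of the paper's direct telescoping to the chain top, so no substantive difference remains.
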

\begin{proof}
	\proofpara{Correctness}
	Let $v_{1,1}, v_{2,1}, \dots v_{|Y|,1}$ denote the vertices of $Y$.
	Let~$v_{j,1} v_{j,2},\dots v_{j, \ell_j}$ be the side-vertices on the path from $v_{j,1}$ to a core vertex for each $j \in [|Y|]$.
	Let~$z_{j,i}$ be the leaf child of $v_{j,i}$ for each $j \in [|Y|]$, and~$i \in [\ell_j]$.
	This mapping is unique after~Reduction Rule~\ref{rr:Net-trees} has been applied exhaustively.
	Observe that $Z$ is the set~$\{v_{j,i} \mid 2 \leq j \leq |Y|, i \in [\ell_j]\}$.

	Similarly, we have $\w(v_{j,1} y_{j,1}) \geq \w(v_{j,i} y_{j,i}) + \sum_{h=1}^{i-1}\w(v_{j,h} v_{j,h+1})$ for each $j \in [|Y|]$, and each~$i \in [\ell_j]$.
	Consequently, we may assume that if $y_{j,i}$ is in a solution for some~$i > 1$, then so is~$y_{j,1}$.
	
	Furthermore, for any solution that contains $y_{j,i}$ and $y_{j,1}$ for $i>1$, we can assume the solution contains~$y^*$ as otherwise replacing $y_{j,i}$ with $y^*$ gives another solution, because~$w(v_{y^*} y^*) \geq\w(v_{j,i}y_{j,i})$ where~$v_{y^*}$ the parent of~$y^*$ and for each $j \in [|Y|]$, and each~$i \in [\ell_j]$.
	Thus, if a solution~$S$ contains any element of~$Z_X$ we can assume~$S$ also contains $y^*$.
	
	Now, suppose $k > |B| + |Y|$.
	Then, any solution~$S$ contains at least one element of~$A \cup Z_X$.
	This implies that $S$ contains at least one taxon of $A \cup \{y^*\}$ as~$S$ contains~$y^*$ if it contains any taxon in $Z_X$.
	If $a^* > c^*$, then $S$ contains $x^*$, as otherwise we could replace a taxon from~$(A\setminus \{x^*\}) \cup \{y^*\}$.
	So, in this case, $S$ contains $x^*$, and we can save $x^*$ by~Lemma~\ref{lem:Net-save-remove}.
	Otherwise, we may assume $S$ contains $y^*$, as otherwise we can replace an element from $A$ with $y^*$.

	\proofpara{Running Time}
	We can compute the size of~$B$ and~$Y$ and find $a^*$ and~$c^*$ in $\Oh(n)$ time.
	Saving $x^*$ or $y^*$ takes $\Oh(m)$ time.
\end{proof}

\begin{rr}
	\label{rr:Net-A}
	Let $x_1,\dots,x_{|A|}$ be the taxa in $A$ such that $\w(\rho x_i) \ge \w(\rho x_{i+1})$ for each $i\in [|A|-1]$.
	If $k > |A|$, then remove $x_{k+1},\dots,x_{|A|}$ and their incident edges from~$\Net$ if $|A| > k$.
\end{rr}
\begin{lemma}
	\label{lem:Net-A}
	Reduction Rule~\ref{rr:Net-A} is correct and can be applied 
	in $\Oh(n\log n)$~time.
\end{lemma}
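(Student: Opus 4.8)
The plan is to prove correctness and the running time of Reduction Rule~\ref{rr:Net-A} by reusing the structure of the preceding lemmas in this section. The rule only applies when $k > |A|$ is false---that is, when $|A| > k$---and in that case it removes the lightest taxa among the children of the root that are leaves, keeping only the $k$ heaviest. The key structural observation is that the taxa in $A$ are children of the root $\rho$, so each $x_i \in A$ contributes exactly $\w(\rho x_i)$ to the all-paths diversity independently of any other taxa: the edge $\rho x_i$ is affected by a set $S$ if and only if $x_i \in S$, and it shares no affected edges with the rest of the network below other vertices. This independence is what makes the greedy removal safe.

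First I would establish correctness via Lemma~\ref{lem:Net-save-remove}(b): removing a taxon $x$ is safe precisely when some optimal solution avoids $x$. So it suffices to show that if $\Instance$ is a \yes-instance, then there is a solution $S$ avoiding all of $x_{k+1},\dots,x_{|A|}$. Take any solution $S$. Since $|S| \le k$ and the taxa in $A$ are leaves whose incoming edges are pairwise non-overlapping in their affected-edge sets, I would argue by an exchange argument: if $S$ contains some $x_j$ with $j > k$, then because $|S \cap A| \le k < |A|$, there is some $x_i$ with $i \le k$ not in $S$; since $\w(\rho x_i) \ge \w(\rho x_j)$ by the chosen ordering, replacing $x_j$ by $x_i$ yields a set $S'$ with $\apPD(S') \ge \apPD(S) \ge D$ and $|S'| \le k$. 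Iterating this exchange gives a solution contained in $\{x_1,\dots,x_k\} \cup (X \setminus A)$, which therefore avoids every removed taxon. The converse direction---that a solution of the reduced instance is a solution of $\Instance$---is immediate, since removing taxa only shrinks the set of available taxa and cannot increase diversity of any surviving set.

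For the running time, I would note that computing $A$ (the leaf-children of $\rho$) takes $\Oh(n)$ time, sorting the at most $n$ values $\w(\rho x_i)$ takes $\Oh(n \log n)$ time, and deleting the $|A| - k$ lightest taxa together with their single incoming edges takes $\Oh(n)$ time. Hence the dominant cost is the sort, giving the claimed $\Oh(n \log n)$ bound.

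The main obstacle I anticipate is making the exchange argument fully rigorous in the presence of the earlier reduction rules: I must confirm that the edges $\rho x_i$ for $x_i \in A$ genuinely have disjoint affected-edge contributions, so that swapping one leaf child of the root for another changes $\apPD$ by exactly the difference of the two edge weights and nothing else. This follows because each $x_i \in A$ is a leaf attached directly to $\rho$, so $E_{\{x_i\}} = \{\rho x_i\}$ and these singletons are pairwise disjoint; the swap therefore affects the diversity additively and the monotone inequality $\w(\rho x_i) \ge \w(\rho x_j)$ cleanly controls the sign. Once this disjointness is stated, the rest is a routine greedy exchange and the running time analysis is straightforward.
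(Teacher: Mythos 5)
Your proof is correct and follows essentially the same route as the paper: an exchange argument swapping any $x_i \in S$ with $i > k$ for an unused heavier $x_j$ with $j \le k$ (possible since $|S|\le k$), plus the trivial converse direction, with the $\Oh(n\log n)$ cost dominated by sorting $A$. Your explicit remark that $E_{\{x_i\}} = \{\rho x_i\}$ for each $x_i \in A$, so the swap changes $\apPD$ by exactly $\w(\rho x_j) - \w(\rho x_i) \ge 0$, is a detail the paper leaves implicit, and it correctly justifies the exchange step.
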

\begin{proof}
	\proofpara{Correctness}
	Clearly, any solution for the instance~$\Instance'$ is also a solution for~\Instance.
	Therefore, let~$S$ be a solution for~$\Instance$.
	If $S \cap \{x_{k+1},\dots,x_{|A|}\} = \emptyset$, then $S$ is also a solution for~$\Instance'$.
	Assume that $x_i \in S$ for some~$i\in \{k+1, \dots, |A|\}$.
	As~$|S| \le k$ we conclude that there is a taxon~$x_j$ for $j\in [k]$ with $x_j \not\in S$.
	Because $\w(\rho x_j) \ge \w(\rho x_i)$, we conclude that $(S \cup \{x_j\}) \setminus \{x_i\}$ is also a solution for \Instance.
	Thus, we may assume that~$S \cap \{x_{k+1},\dots,x_{|A|}\} = \emptyset$.

	\proofpara{Running Time}
	We can sort $A$ in $\Oh(n\log n)$~time.
\end{proof}

\begin{rr}
	\label{rr:Net-pathlength}
	
	Let $z_0,\dots,z_{k+1}\in Y\cup Z$ be vertices with $z_{i}$ being the parent of $z_{i+1}$ for $i\in [k]_0$.
	Add an edge~$z_{k-1} z_{k+1}$ of weight $\w(z_{k-1} z_{k}) + \w(z_{k} z_{k+1})$.
	Remove the vertices which are reachable by~$z_k$ but not by $z_{k+1}$ and the incident edges from~$\Net$.
\end{rr}
\begin{lemma}
	\label{lem:Net-pathlength}
	Reduction Rule~\ref{rr:Net-pathlength} is correct and can be applied in $\Oh(n)$~time.
\end{lemma}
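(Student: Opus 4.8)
The plan is to prove correctness and the running time bound for Reduction Rule~\ref{rr:Net-pathlength}, following the same two-part structure used in the previous reduction-rule lemmas.

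\textbf{Correctness.} First I would observe the structural setting: the vertices $z_0, \dots, z_{k+1}$ form a descending path of side-vertices (each in $Y \cup Z$), so by the discussion preceding~Reduction Rule~\ref{rr:Net-top-strings}, each $z_i$ for $i \in [k+1]$ has a unique leaf child, say $x_i$. The key combinatorial fact to establish is that no solution needs to save a taxon hanging off the portion of the path strictly below $z_{k+1}$ (i.e.\ among the vertices reachable from $z_k$ but not from $z_{k+1}$, together with the leaf $x_{k+1}$ if it is cut). The argument mirrors Lemma~\ref{lem:Net-strings} and Lemma~\ref{lem:Net-top-strings}: after~Reduction Rule~\ref{rr:Net-a>a+b} has been applied exhaustively, we have the monotonicity inequality $\w(z_i x_i) \ge \w(z_j x_j) + \sum_{h=i}^{j-1} \w(z_h z_{h+1})$ for $i < j$ along the path. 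This means that whenever a solution saves a deep taxon $x_j$ with $j \ge k+1$, the taxon $x_0$ (the topmost, which feeds at least as much diversity through the shared path edges) is at least as profitable, so we may assume $x_0$ is saved instead. Since any solution has size at most $k$ and the path $z_0, \dots, z_{k+1}$ contributes $k+2$ distinct leaf-children, a solution cannot save all of them; I would argue that we may always reroute so that the solution avoids the taxa below $z_{k+1}$. Then, as in the earlier lemmas, for a solution $S$ that avoids these deep taxa we have $\apPD(S) = \apPDsub{\Net'}(S)$, where $\Net'$ is obtained by contracting $z_k$ (adding edge $z_{k-1}z_{k+1}$ with the summed weight) and deleting the now-unneeded subtree. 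The converse direction is the routine observation that every edge of $\Net'$ affected by $S$ corresponds to an affected edge of $\Net$ of equal weight, so $\apPDsub{\Net'}(S) \le \apPD(S)$ and any solution of $\Instance'$ lifts to one of $\Instance$.

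\textbf{Running time.} I would locate the descending path of length $k+2$ starting from a side-vertex, which can be done by a single downward traversal along side-vertices; finding such a path and performing the edge addition and the deletion of the subtree below $z_{k+1}$ costs $\Oh(n)$ time, since each vertex and incident edge is touched a constant number of times.

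\textbf{Main obstacle.} The hard part will be the exchange argument justifying that a solution can always be assumed to avoid every taxon strictly below $z_{k+1}$ on the path. The subtlety is that after $z_{k+1}$ the path may branch (the removed region is ``reachable from $z_k$ but not from $z_{k+1}$''), so the deleted taxa are precisely those on the $z_k$-side that $z_{k+1}$ cannot reach, and I must verify that cutting exactly this set both preserves the monotonicity chain needed for the rerouting and does not inadvertently disconnect diversity that some solution genuinely requires. I expect to lean on the fact that Reduction Rules~\ref{rr:Net-trees},~\ref{rr:Net-deg-2}, and~\ref{rr:Net-a>a+b} have been applied exhaustively first, so that the path is ``clean'' (each side-vertex has exactly one leaf child and the weight inequalities hold), reducing the branching concern to the single leaf $x_{k+1}$ and the subtree genuinely below it.

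\endinput
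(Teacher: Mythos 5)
Your overall strategy is the paper's: use the weight monotonicity guaranteed by the exhaustive application of Reduction Rules~\ref{rr:Net-trees} and~\ref{rr:Net-a>a+b} along the side-path, run an exchange argument to push saved taxa upward, conclude from the budget~$k$ that the deep leaf-child can appear in no solution, and then invoke Lemma~\ref{lem:Net-save-remove}; the $\Oh(n)$ running-time argument also matches.

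There is, however, a genuine gap, and it stems from a misreading of what the rule deletes. The vertices reachable from~$z_k$ but not from~$z_{k+1}$ are exactly $z_k$ and its unique leaf child~$x_k$: since $z_k$ is a side-vertex and the earlier rules have been applied exhaustively, its only children are the next path vertex~$z_{k+1}$ and a single leaf. Nothing ``strictly below~$z_{k+1}$'' is removed --- that part of the network is kept and reattached through the new edge~$z_{k-1}z_{k+1}$. So the ``main obstacle'' you anticipate (a possibly branching deleted region) does not exist; but, conversely, your exchange argument targets the wrong taxa: you argue that taxa~$x_j$ with $j \ge k+1$ can be avoided, whereas the taxon the rule actually deletes is~$x_k$. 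As written, your proof would justify cutting below~$z_{k+1}$ (which the rule keeps) and would not justify removing~$x_k$ (which the rule deletes). Two further repairs are needed to make the argument airtight. First, your single swap ``replace a deep~$x_j$ by~$x_0$'' fails when~$x_0$ is already in the solution; the correct form, as in the paper, is the prefix statement: since $\w(z_i x_i) > \w(z_{i+1} x_{i+1}) + \w(z_i z_{i+1})$ for consecutive path vertices, one may assume that whenever~$x_i$ is saved, so are all of~$x_0, \dots, x_{i-1}$ (always swap the deep taxon for the \emph{shallowest} unsaved one). Second, the counting should then be applied to~$x_k$: if $x_k \in S$, the prefix property forces $x_0,\dots,x_{k-1} \in S$ as well, so $|S| \ge k+1 > k$, a contradiction; hence~$x_k$ lies in no solution, and removing $z_k$ and~$x_k$ while bridging with the edge~$z_{k-1}z_{k+1}$ of weight $\w(z_{k-1} z_{k}) + \w(z_{k} z_{k+1})$ is safe by Lemma~\ref{lem:Net-save-remove}.
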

\begin{proof}
	\proofpara{Correctness}
	Let~$x$ be a taxon which is reachable by~$z_k$ but not by $z_{k+1}$.
	Because we applied Reduction Rules~\ref{rr:Net-trees} and~\ref{rr:Net-a>a+b} exhaustively, each vertex $z_i$ contains at most one child~$x_i$ in~$X$ and~$\w(z_i x_i) > \w(z_{i+1} x_{i+1}) + \w(z_i z_{i+1})$.
	Consequently, we can assume that if~$x_i$ is in a solution then so are~$x_0, \dots, x_{i-1}$.
	Therefore, $x_{k}$ can not be in a solution, and an application of~Lemma~\ref{lem:Net-save-remove} is correct.
	
	\proofpara{Running Time}
	We can find an appropriate vertex $z_k$ in $\Oh(n)$ time and edit the network in constant time.
\end{proof}

Finally, we have everything to proof this section's main theorem.
\begin{proof}[Proof of Theorem~\ref{thm:Net-vkernel}]
	For a given instance~\Instance apply all of the reduction rules exhaustively to receive instance~$\Instance^* = (\Net^* = (V^*,E^*), k^*, D^*)$ of \MAPPD.
	We denote with~$R$, $Q$, and so the respective set in the original instance and with $R^*$, $Q^*$, and so the respective set in~$\Instance^*$.
	
	The correctness follows from the correctness of the reticulation rules.
	
	\proofpara{Running time}
	Observe that Reduction Rules~\ref{rr:Net-top-strings},~\ref{rr:Net-A}, and~\ref{rr:Net-pathlength} reduce $|X|$ by $1$, while Reduction Rules~\ref{rr:Net-trees} and~\ref{rr:Net-a>a+b} reduce $|X\setminus A|$ by $1$.
	As none of the reduction rules increase $|X|$ or $|X\setminus A|$, these rules are applied at most $|X| + |X\setminus A| \leq 2n$ times in total.
	For Reduction Rule~\ref{rr:Net-deg-2}, we note that only Reduction Rules~\ref{rr:Net-trees},~\ref{rr:Net-a>a+b},~\ref{rr:Net-top-strings},~\ref{rr:Net-A}, and~\ref{rr:Net-pathlength} can create a degree-2 vertex.
	Thus, every application of Reduction Rule~\ref{rr:Net-deg-2} occurs after one of these other rules, and thus Reduction Rule~\ref{rr:Net-deg-2} is applied at most~$2 \cdot |X|$ times.
	
	For Reduction Rule~\ref{rr:Net-strings}, we observe that each application of this rule reduces the number of side-paths starting at a non-root vertex of length of at least~2.
	None of the reduction rules increase this measure, and so the number of application of Reduction Rule~\ref{rr:Net-strings} is bounded by the number of side-paths in the original network, which is $\Oh(\eret)$ by Observation~\ref{obs:Net-core-retic-bound}.
	
	So, the total number of applications of all reduction rules is $\Oh(|X| + \eret)$.\lb
	Observe that a single application of any reduction rule is done in~$\Oh(m+ n\log n)$~time by Lemmas~\ref{lem:Net-trees},~\ref{lem:Net-deg-2},~\ref{lem:Net-a>a+b},~\ref{lem:Net-strings},~\ref{lem:Net-top-strings},~\ref{lem:Net-A}, and~\ref{lem:Net-pathlength}.
	We conclude that applying all rules exhaustively takes $\Oh((|X| + \eret) \cdot (m+ n\log n)\cdot \log m \cdot  \log\max_\w)$ time{, which can be summarized as~$\Oh(m^2 \log^2 m \cdot \log \max_\w)$}.
	
	\proofpara{Size}
	Each application of Reduction Rule~\ref{rr:Net-strings} increases the number of reticulation-edges by one.
	Therefore, observe that with \eret we refer to the parameter in the original network.
	
	Observe that none of the reduction rules (except for Reduction Rule~\ref{rr:Net-strings})
	change the number of reticulations or core vertices in the network.
	Reduction Rule \ref{rr:Net-strings} may turn some core vertices into reticulations but otherwise does not create new core vertices or reticulations.
	Therefore, we have $|R^*| + |Q^*| = |R| + |Q| \in \Oh(\eret)$ by Observation~\ref{obs:Net-core-retic-bound}.
	
	As Reduction Rule~\ref{rr:Net-strings} is exhaustively applied, we have that each side-path in~$\Net^*$ not leaving the root has at most one internal vertex, and this vertex has one leaf child. 
	There are $\Oh(\eret)$ side-paths in $\Net^*$ by Observation~\ref{obs:Net-core-retic-bound} and the fact that none of the reduction rules increase the number of side-paths not leaving the root.
	Thus, we have that the total number of vertices reachable from $R^* \cup (Q\setminus \rho)$ is at most $|R^*| + |Q^*| + \Oh(\eret) = \Oh(\eret)$. That is, $|B^*| \in \Oh(\eret)$.
	
	The size of~$Y^*$ is the number of paths from~$\rho$ to a core-vertex or a reticulation.
	There are at most~$|Q| + \eret$ such paths in the original instance.
	Each application of Reduction Rule~\ref{rr:Net-strings} adds one such path.
	We saw that~$|Q| \in \Oh(\eret)$ and Reduction Rule~\ref{rr:Net-strings} can be applied~$\Oh(\eret)$ times.
	We conclude~$|Y^*| \in \Oh(\eret)$.
	
	After Reduction Rule~\ref{rr:Net-top-strings} has been applied exhaustively, we may conclude\lb that~$k^* \le |B^*| + |Y^*| \in \Oh(\eret)$.
	After Reduction Rule~\ref{rr:Net-A} has been applied exhaustively, there are at most~$k^*$ vertices in~$A^*$
	and 
	after Reduction Rule~\ref{rr:Net-pathlength} has been applied exhaustively, each path in~$Z^*$ has length of most~$k^*-1$ such that~$|Y^*| + |Z^*| \le |Y^*| \cdot k^* \in \Oh(\eret^2)$.
	We conclude~$|V^*| \in \Oh(\eret^2)$.
	
	We have~$\eretwithoutN_{\Net^*} \le \eret + |Q^*| + |R^*| \in \Oh(\eret)$
	because~$\eret = |E| - |V|$ in any network.
	Hence, we conclude
	$|E^*| = \eretwithoutN_{\Net^*} + |V^*| \in \Oh(\eret^2)$.
\end{proof}

From Theorem~\ref{thm:Net-vkernel}, we have that in polynomial time we can reduce any instance~\Instance of \MAPPD to an equivalent instance $\Instance^* = (\Net^* = (V^*,E^*,{\w^*}), k^*, D^*)$ in which~$|V^*|$, $|E^*|$, and~$k^*$ are all bounded by a polynomial in $\eret$.
This does not guarantee a polynomial kernel, as the encoding size of $D^*$ or $\max_{\w^*}$
could be much larger than~$|V^*|$ or~$|E^*|$.
Fortunately, we can apply a result of \cite{etscheid,frank} to bound these values, as follows.

Let~$e_1,\dots,e_{m^*}$ be an order of the edges after applying all reduction rules.
We define~$w_i := \w^*( e_i )$ for each~$i\in [m^*]$ and~$W := D^*$.
In polynomial time, we can compute positive numbers~$w_1', \dots, w_{m^*}'$, and~$W'$ such that the total encoding length is~$\Oh( (m^*)^4 ) \in \Oh( \eret^8 )$ with~$\sum_{i \in S} w_i \ge W$ if and only if~$\sum_{i \in S} w_i' \ge W'$ for every~$S \subseteq [m^*]$ by~\cite[Corollary~2]{etscheid}.

We directly conclude the following.

\begin{theorem}
	\label{thm:Net-kernel}
	\MAPPD admits a polynomial size kernelization algorithm for the number of reticulation-edges~\eret.
\end{theorem}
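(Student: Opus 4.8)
The goal is to upgrade the vertex-and-edge kernel of Theorem~\ref{thm:Net-vkernel} into a genuine polynomial kernel, that is, to bound the \emph{encoding length} of the entire instance—not just the number of vertices and edges—by a polynomial in \eret. The obstacle is that after exhaustively applying Reduction Rules~\ref{rr:Net-trees}--\ref{rr:Net-pathlength} we obtain an instance $\Instance^* = (\Net^* = (V^*,E^*,\w^*), k^*, D^*)$ with $|V^*|, |E^*| \in \Oh(\eret^2)$ and $k^* \in \Oh(\eret)$, but the edge weights $\w^*(e)$ and the threshold $D^*$ could still be astronomically large (doubly-exponential in $\eret$, say), so their binary encoding need not be polynomial in $\eret$. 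Thus the plan is: first invoke Theorem~\ref{thm:Net-vkernel} to shrink the combinatorial structure, then separately compress the numbers.

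\textbf{Key steps.} First I would run the algorithm of Theorem~\ref{thm:Net-vkernel} to obtain $\Instance^*$ in time polynomial in $|\Instance|$, giving $m^* := |E^*| \in \Oh(\eret^2)$. The crucial observation is that the only quantity in the \MAPPD\ feasibility condition that involves the numbers is the inequality $\apPD(S) = \sum_{e \in E_S} \w^*(e) \ge D^*$, which is precisely a threshold on a sum of a \emph{subset} of the $m^*$ edge weights. This is exactly the setting to which the weight-compression result of Etscheid et al.~\cite{etscheid} (building on Frank and Tardos~\cite{frank}) applies. Concretely, I would order the edges $e_1,\dots,e_{m^*}$, set $w_i := \w^*(e_i)$ and $W := D^*$, and apply \cite[Corollary~2]{etscheid} to compute in polynomial time new positive integers $w_1',\dots,w_{m^*}'$ and $W'$ of total encoding length $\Oh((m^*)^4) \in \Oh(\eret^8)$ such that for every subset $S \subseteq [m^*]$ we have $\sum_{i \in S} w_i \ge W$ if and only if $\sum_{i \in S} w_i' \ge W'$. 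Replacing each $\w^*(e_i)$ by $w_i'$ and $D^*$ by $W'$ then yields an instance whose \emph{every} numerical field has encoding length $\Oh(\eret^8)$.

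\textbf{Assembling the kernel.} The equivalence of the compressed instance follows because the sets $E_S$ of affected edges depend only on the graph structure of $\Net^*$, which is untouched by the number-compression, so the \MAPPD\ feasibility condition $\apPD(S) \ge D^*$ is preserved edge-subset by edge-subset under the substitution guaranteed by \cite{etscheid}. Combining the bounds, the final instance has $|V^*|, |E^*| \in \Oh(\eret^2)$, parameter $k^* \in \Oh(\eret)$, and total numerical encoding length $\Oh(\eret^8)$, so the whole instance is encodable in $\poly(\eret)$ bits and the full procedure runs in polynomial time. This is exactly a polynomial kernelization for \eret, establishing the theorem.

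\textbf{Main difficulty.} The combinatorial shrinking is already done by Theorem~\ref{thm:Net-vkernel}, so the only non-routine point is recognizing that \MAPPD\ on the reduced instance reduces to a \emph{single} subset-sum-style linear threshold over the $\Oh(\eret^2)$ edge weights, which is the precise hypothesis of the Frank--Tardos/Etscheid compression. Once that reformulation is in place the result is a direct citation; no delicate new argument is needed, only care that the graph structure (and hence the family of affected-edge sets) is genuinely invariant under replacing the weights.
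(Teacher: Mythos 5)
Your proposal matches the paper's own proof essentially verbatim: it first applies Theorem~\ref{thm:Net-vkernel} to bound $|V^*|$, $|E^*|$, and $k^*$ polynomially in~\eret, and then invokes \cite[Corollary~2]{etscheid} (building on~\cite{frank}) on the edge weights $w_i := \w^*(e_i)$ and $W := D^*$ to replace them with numbers of total encoding length $\Oh((m^*)^4) \in \Oh(\eret^8)$ while preserving the threshold $\sum_{i\in S} w_i \ge W$ for every subset~$S$. The observation that the affected-edge sets depend only on the untouched graph structure is exactly the (implicit) justification the paper relies on, so the argument is correct and complete.
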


\section{Hardness of Max-Net-PD}
\label{sec:Net-reduction-MaxNPD}
In this section, we examine another measure of phylogenetic diversity on phylogenetic networks than \MAPPD, namely \MaxNPD.
Recall that in \MaxNPD, each reticulation-edge is given an \iprop.
Similar to the computation of diversity in \GNAP, also in \MaxNPD we consider \emph{expected} phylogenetic diversity.

In this section, we present a polynomial-time reduction from \ucNAP to \MaxNPD, in which the level of the phylogenetic network is~1.
Recall that in \ucNAP, we are given a phylogenetic tree, a \sprop for every taxon, and two integers~$k$ and~$D$,
and it is asked whether there is a set~$S$ of size at most~$k$ taxa such that has an expected phylogenetic diversity of at least~$D$.

By Theorem~\ref{thm:GNAP-ucNAP-hardness}, \ucNAP is \NP-hard even if the phylogenetic tree has a height of~2.
We conclude the following.

\begin{theorem}
	\label{thm:Net-level-1}
	\MaxNPD is \NP-hard even if the input network has a level of~1 and the distance between the root and each leaf is~4.
\end{theorem}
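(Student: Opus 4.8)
The plan is to reduce from \ucNAP, which by Theorem~\ref{thm:GNAP-ucNAP-hardness} is \NP-hard already on instances whose phylogenetic tree~$\Tree$ has height~$2$ and whose root has out-degree~$1$. Such an instance has a root~$\rho$ with unique child~$v$, leaves $x_1,\dots,x_n$ that are children of~$v$, positive integer edge weights $\w(\rho v)$ and $\w(v x_i)$, and for each taxon a \sprop~$w_i\in(0,1)$ attainable at unit cost (and $0$ otherwise); the budget~$B$ bounds the number of saved taxa. Recall that for a selected set~$S$ the expected diversity is $\PD(S) = \w(\rho v)\bigl(1-\prod_{x_i\in S}(1-w_i)\bigr) + \sum_{x_i\in S}\w(v x_i)\,w_i$. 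The idea is to realise each \sprop by a reticulation carrying an inheritance proportion, while keeping the network at level~$1$.

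First I would build a network~$\Net$ by replacing every leaf~$x_i$ of~$\Tree$ with a small ``diamond''. I keep $\rho$, $v$ and the edge $\rho v$, and for each~$i$ I add two tree vertices $a_i,b_i$ as children of~$v$, a reticulation~$r_i$ whose two parents are $a_i$ and~$b_i$, and the leaf~$x_i$ as the unique child of~$r_i$. In the underlying undirected graph, the only cycle is the $4$-cycle $v\,a_i\,r_i\,b_i$, which forms a block containing exactly the single reticulation~$r_i$; distinct diamonds meet only in the cut vertex~$v$, so every block contains at most one reticulation and $\Net$ has level~$1$. Moreover every directed path from~$\rho$ to any leaf~$x_i$ (through $a_i$ or through $b_i$) has length exactly~$4$, as required by the statement.

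Next I would fix the inheritance proportions $p(a_i r_i):=w_i$ and $p(b_i r_i):=0$. A direct computation with Definition~\ref{def:gamma} then gives $\gam{Z}(v a_i)=w_i\cdot[x_i\in Z]$, $\gam{Z}(v b_i)=0=\gam{Z}(b_i r_i)$, and $\gam{Z}(\rho v)=1-\prod_{x_i\in Z}(1-w_i)$, so the edge $\rho v$ already reproduces the ``common ancestor'' term of~$\PD$. For the network weights~$\w'$ I scale by $M:=2$: set $\w'(\rho v):=M\,\w(\rho v)$, split the leaf weight as $\w'(v a_i):=M\,\w(v x_i)-1$ and $\w'(a_i r_i):=1$ (both positive integers, since $M\,\w(v x_i)\ge 2$), and give the contribution-free edges $v b_i$, $b_i r_i$ and the leaf edge $r_i x_i$ weight~$1$. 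Summing all edge contributions, the $w_i$-proportional parts combine as $(M\,\w(v x_i)-1)w_i + w_i = M\,\w(v x_i)w_i$ while $r_i x_i$ adds~$1$ per selected taxon, yielding the key identity $\NetPD(Z) = M\cdot\PD(Z) + |Z|$ for every $Z\subseteq X$.

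With this identity the equivalence is routine: put $k':=B$ and $D':=M\,D + k'$, where~$D$ is the \ucNAP threshold. Since both $\NetPD$ and $\PD$ are monotone, a \yes-instance admits a solution of size exactly~$k'$, for which $\NetPD(Z)=M\,\PD(Z)+k'\ge D'$ iff $\PD(Z)\ge D$; conversely any $Z$ with $|Z|\le k'$ and $\NetPD(Z)\ge D'$ satisfies $M\,\PD(Z)\ge M\,D+(k'-|Z|)\ge M\,D$, hence $\PD(Z)\ge D$. The main obstacle is precisely the bookkeeping of the \emph{unavoidable} diversity created inside each diamond: turning $x_i$ into a reticulation descendant forces the edges $a_i r_i$ and $r_i x_i$ to carry positive weight, and $r_i x_i$ contributes an amount independent of~$w_i$. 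This is what forces the two design choices above --- distributing the intended leaf weight between $v a_i$ and $a_i r_i$ so that all weights stay positive integers, and invoking monotonicity to fix $|Z|=k'$ so that the stray term $\sum_{x_i\in Z}\w'(r_i x_i)=|Z|$ collapses to the constant~$k'$ absorbable into~$D'$. I would finally double-check that the $b_i$-side genuinely contributes nothing (it does, as $p(b_i r_i)=0$) and that the source \ucNAP \sprops are rational, so that all proportions $p(a_i r_i)=w_i$ are representable and the reduction is polynomial.
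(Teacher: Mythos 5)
Your proof is correct, and although it reduces from the same source as the paper (\ucNAP on height-$2$ trees with a degree-$1$ root, via Theorem~\ref{thm:GNAP-ucNAP-hardness}), the gadget and the analysis are genuinely different. The paper keeps each taxon $x$ as an \emph{internal} vertex and attaches below it a gadget with two leaves $x^*$ and $x^-$, choosing \iprops $q(x)/2$ and $q(x)/(2-q(x))$ so that the recombination rule of Definition~\ref{def:gamma} recovers exactly $q(x)$ on the edge into $x$ whenever $x^*$ is saved and $x^-$ is not; the price is that the unit-weight gadget edges pollute the diversity by up to $3$ per selected taxon, which forces the paper to introduce large constants $Q,M$, to argue that no solution ever picks an $x^-$ leaf, and to run a granularity argument ($\PD(S)$ is a multiple of $1/\prod_{x\in S} d_x$ and $d^{-k}>3/Q$) to absorb the error terms. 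Your diamond gadget with a dead branch --- $p(a_i r_i)=w_i$, $p(b_i r_i)=0$, and the taxon's weight split as $(M\,\w(v x_i)-1)+1$ --- instead yields the \emph{exact} identity $\NetPD(Z)=M\cdot\PD(Z)+|Z|$, so after padding $|Z|$ to exactly $k'$ (legitimate by monotonicity) the equivalence is a two-line computation with no error control at all; this is arguably cleaner than the published argument. Your level-$1$ and distance-$4$ verifications are also sound: $v$ is a cut vertex, so every block of the underlying graph is a single edge or one $4$-cycle containing exactly one reticulation, and both directed root--leaf paths have length $4$. Two cosmetic caveats. First, the zero proportion is admissible since the paper defines $p(e)\in[0,1]$, but note it is load-bearing: the branch through $b_i$ is precisely what keeps $r_i$ a reticulation so that $p(a_i r_i)=w_i$ has any semantic effect (pruning it would make the instance a tree on which the probability disappears); if one insisted on strictly positive proportions, you would be pushed back toward the paper's error-control machinery. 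Second, \MaxNPD takes an integer threshold while the \ucNAP target $D$ is rational, so $D'=2D+k'$ need not be integral; fix this by taking $M$ to be a suitable multiple of the denominator of $D$ with $M\ge 2$ rather than $M=2$ (the paper's own choice $D'=kQ(M^2+D)$ has the same wrinkle, absorbed by the choice of $Q$).
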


\begin{proof}
	By Theorem~\ref{thm:GNAP-ucNAP-hardness},
	\ucNAP is \NP-hard on trees with a height of~2.\lb
	Let \Tree be an $X$-tree with a height of~2 for some set of taxa $X$ and let an\lb instance~$\Instance = (\Tree, \w, q, k, D)$ of \ucNAP be given.
	
	\begin{figure}[t]
		\centering
		\begin{tikzpicture}[scale=1,every node/.style={scale=0.7}]
			\tikzstyle{txt}=[circle,fill=white,draw=white,inner sep=0pt]
			\tikzstyle{nde}=[circle,fill=black,draw=black,inner sep=2.5pt]
			\node[txt] at (5.2,2.9) {$QM^2$};
			\node[txt] at (5.1,3.4) {$q/2$};
			\node[txt] at (6,3.9) {$q/(2-q)$};
			
			\node[txt] at (5.3,4.5) {$x$};
			\node[txt] at (3.7,4) {$v_x^{1}$};
			\node[txt] at (5.9,3.5) {$v_x^{2}$};
			\node[txt] at (3.7,3) {$x^-$};
			\node[txt] at (5.8,2.5) {$x^*$};
			
			\node[nde] (l) at (5,4.5) {};
			\node[nde] (vl1) at (4,4) {};
			\node[nde] (vl2) at (5.5,3.5) {};
			\node[nde] (lm) at (4,3) {};
			\node[nde] (ls) at (5.5,2.5) {};
			
			\draw[dashed,-stealth] (5.4,5.2) to[bend right] (l);
			\draw[-stealth] (l) to (vl1);
			\draw[-stealth] (l) to (vl2);
			\draw[-stealth] (vl1) to (vl2);
			\draw[-stealth] (vl1) to (lm);
			\draw[-stealth] (vl2) to (ls);
		\end{tikzpicture}
		\caption{Illustration of the leaf-gadget. Omitted edge-weights are~1 and $q(x)$ is abbreviated to $q$.}
		\label{fig:Net-leaf-gadget}
	\end{figure}
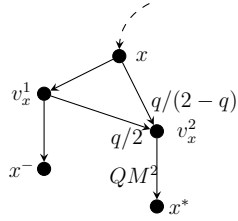%
	We define a leaf-gadget which is illustrated in Figure~\ref{fig:Net-leaf-gadget}.
	Let $x\in X$ be a leaf of~\Tree with \sprop~$q(x)$.
	Add four vertices $v_x^{1}$, $v_x^{2}$, $x^*$, and~$x^-$, and edges~$x v_x^{1}$, $x v_x^{2}$, $v_x^{1}v_x^{2}$, $v_x^{1}x^-$, and~$v_x^{2}x^*$.
	The only reticulation in this gadget is $v_x^{2}$ with incoming edges $x v_x^{2}$ and $v_x^{1}v_x^{2}$.
	We set \iprops of these reticulation edges~$p(x v_x^{2}) := q(x) / (2 - q(x))$ and~$p(v_x^{1}v_x^{2}) := q(x) / 2$ which are both in~$\mathbb{R}_{[0,1]}$ because~$q(x) \in \mathbb{R}_{[0,1]}$.
	
	Let \Net be the network which results from replacing each leaf of \Tree with the corresponding leaf-gadget.
	The leaves of \Net are $X' := \{ x^*, x^- \mid x \in X \}$.
	Let $d$ denote the largest denominator of a \sprop $q(x)$ for some~$x \in X$, so that every~$q(x)$ is expressible as~$c'/d'$ for some pair of integers~$c$, and~$d$ such that~$d' \leq d$. 
	Let~$M$ and~$Q$ be large integers, such that $M$ is bigger than $\PDsub{\Tree}(X) \ge |X| \ge k$, 
	and $Q\cdot D$ and $Q\cdot d^{-k}$ are both bigger than~3.

	Observe that the number of bits necessary to write $M$ and $Q$ is polynomial in the size of~\Instance.
	We set the weight of edges $e \in E(\Tree)$ in \Net to $\w'(e) = kQ \cdot \w(e)$.
	For each taxon~$x\in X$, set~$\w'(v_x^{2}x^*) := Q\cdot M^2$ and $\w'(e) := 1$ for each~$e \in \{x v_x^{1}, x v_x^{2}, v_x^{1}v_x^{2}, v_x^{1}x^-\}$.

	Finally, let $\Instance' := (\Net, \w', p, k, D' := kQ(M^2 + D))$ be an instance of \MaxNPD.
	Each leaf-gadget is a level-1-network.
	As the leaf-gadgets are connected by a tree, \Net is a level-1-network.
	Recall that the height of the tree \Tree is 2, and as such the distance between the root and each leaf in in \Net  is~4.
	
	We first show that $\gam{Z}(e) = q(x)$ for the edge~$e$ incoming at~$x$, if $Z$ contains~$x^*$ but not~$x^-$.
	Indeed, because $x^-\not\in Z$, we conclude~$\gam{Z}(x v_x^{2}) = p(x v_x^{2}) = q(x) / (2 - q(x))$ and~$\gam{Z}(x v_x^{1}) = \gam{Z}(v_x^{1}v_x^{2}) = p(v_x^{1}v_x^{2}) = q(x) / 2$.
	Subsequently, 
	\begin{eqnarray}
		\label{eqn:gamma}
		\gam{Z}(e)
		&=& 1 - \left(1 - \gam{Z}(x v_x^{1})\right)\left(1 - \gam{Z}(x v_x^{2})\right)\\
		\nonumber
		&=& 1 - \frac{2 - q(x)}{2} \cdot \frac{2 - 2q(x)}{2 - q(x)}\\
		\nonumber
		&=& 1 - \frac{2 - q(x)}{2 - q(x)} \cdot \frac{2 - 2q(x)}{2}\\
		\nonumber
		&=& 1 - \frac{2 - 2q(x)}{2}
		= q(x).
	\end{eqnarray}
	
	We now show that if \Instance is a \yes-instance of \ucNAP, then $\Instance'$ is a \yes-instance of \MaxNPD.
	Afterward, we show the converse. 
	
	Suppose that \Instance is a \yes-instance of \ucNAP and that $S\subseteq X$ is a solution of \Instance, that is~$|S|\le k$ and~$\PDsub{\Tree}(S) \ge D$.
	Let $S' := \{ x^* \mid x \in S \}$ be a subset of~$X'$.
	Clearly, $|S'| = |S| \le k$.
	Because~\Tree does not contain reticulation edges and~$\gam{Z}(e) = q(x)$ with~$e$ being the edge incoming at $x$, we conclude that
	\begin{equation}
		\NetPD(S')
		\ge kQ\cdot \PDsub{\Tree}(S) + k\cdot \w'(v_x^{2}x^*)
		\ge kQ\cdot (D + M^2)
		= D'.
	\end{equation}
	Hence, $S'$ is a solution of $\Instance'$.
	
	Conversely,
	let $S'$ be a solution of $\Instance'$.
	We define $S^- = S \cap \{ x^- \mid x\in X \}$ and~$S^* = S \cap \{ x^* \mid x\in X \}$.
	Towards a contradiction, assume that~$S^-$ is non-empty.
	Then however, using $3 < Q\cdot D$,
	\begin{align*}
		\NetPD(S')
		& \le  \sum_{x^- \in S^-}(\w'(v_x^{1}x^-) + \w'(x v_x^{1})) +  |S^*| \cdot (QM^2 + 3)
		+ \sum_{e\in E(\Tree)} \w'(e)\\
		& \le 2 \cdot |S^-| + |S^*| \cdot (QM^2 + 3) +  kQM \\
		& \le  2 + (k-1)(QM^2 + 3) + kQM \\
		&	< kQM^2 - QM^2 + kQM + 3k\\
		&   < k(QM^2 + 3)
		< k(QM^2 + QD)
		= D'
	\end{align*}
	contradicts that $S'$ is a solution.
	Therefore, we conclude that $S' \subseteq \{ x^* \mid x\in L \}$ and we assume that the size of~$S'$ is~$k$.
	Define $S := \{ x \mid x^*\in S' \}$.
	Subsequently, with Equation~\eqref{eqn:gamma} we conclude 
	\begin{align*}
		kQ(M^2 + D) 
		= D' 
		& \le \NetPD(S') \\
		& = k\cdot QM^2 + \sum_{x \in S}\bigg(\underbrace{\frac{q(x)}2 + \frac{q(x)}2 + \frac{q(x)}{2-q(x)}}_{\leq 3}\bigg) + kQ\cdot \PDsub{\Tree}(S).
	\end{align*}
	It follows that $\PDsub{\Tree}(S) \ge \frac{1}{kQ} \cdot (kQ(M^2 + D) - kQM^2 - 3k) = D - 3/Q$.
	
	It remains to show that $\PDsub{\Tree}(S)$ cannot take any values between~$D - 3/Q$ and~$D$, and therefore~$\PDsub{\Tree}(S) \ge D - 3/Q$ implies~$\PDsub{\Tree}(S) \ge D$.
	To this end, let $c_x$, and~$d_x$ be the unique positive integers such that $q(x) = c_x/d_x$ for each taxon $x \in X$.
	Then, $q(x)$ and $(1 - q(x))$ are multiples of $1/d_x$, by construction.
	It follows that for any edge~$e$ of the tree $\Tree$ that $\gamma'_S(e) = (1 - \prod_{x \in \off(e) \cap S}(1- q(x)))$ is a multiple of $1/(\prod_{x \in S}d_x)$.
	As all edge-weights are integers, we also have that $\PDsub{\Tree}(S)$ is a multiple of $1/(\prod_{x \in S}d_x)$. It follows that either $\PDsub{\Tree}(S) \geq D$ or $D - \PDsub{\Tree}(S) \geq 1/(\prod_{x \in S}d_x)$.
	Because~$d_x \leq d$ for any $x$, this difference is at least $d^{-k} > 3/Q$.
	It follows that if $\PDsub{\Tree}(S) \ge D - 3/Q$ then in fact $\PDsub{\Tree}(S) \ge D$.
	
	We conclude $\PDsub{\Tree}(S)\ge D$.
	And therefore with $|S| = |S'| = k$ we follow that $S$ is a solution of~\Instance.
	Hence, \Instance is a \yes-instance of \ucNAP.
\end{proof}







\section{Discussion}
\label{sec:Net-discussion}
In this chapter, we considered two problems, \MAPPDlong and~\MaxNPD, in maximizing phylogenetic diversity in networks, and analyzed them within the framework of parameterized complexity.

We showed that \MAPPD is \Wh 2-hard parameterized by $k$, the size of the solution.
We further were able to show an equivalence between \MAPPD parameterized by $k$ and \wpSClong parameterized by the size of the solution.
Thus, establishing the exact complexity class of \wpSClong, would also establish the exact complexity class of \MAPPD.
On the positive side, we showed that \MAPPD is \FPT when parameterized with the number of reticulations~$\vret$ and with respect to the treewidth~$\twN$ of the network.
We further showed that \MAPPD admits a kernelization algorithm with respect to the number of reticulation-edges~$\eret$.
Finally, we have shown that \MaxNPD remains \NP-hard on phylogenetic networks with a level of~1.

Because $\vret$ is smaller than~$\eret$, it is natural to ask if the kernelization result can also be shown for $\vret$.
We also ask whether \MAPPD parameterized by $k$ is~\Wh 2-complete.

For \MaxNPD we also raise some questions.
We have proven that \MAPPD is \FPT when parameterized by the threshold of diversity~$D$ and the acceptable loss of diversity~$\Dbar$.
Also, we showed that \MAPPD admits a kernelization algorithm of polynomial size with respect to \eret.
It is natural to ask, if these positive results transfer to \MaxNPD.
Further, based on the presented hardness for \MaxNPD we ask for some improvements.
First of all, can \MaxNPD be solved in pseudo-polynomial time on level-1-networks?
Secondly, is \MaxNPD polynomial time solvable on level-1-networks if we require the network to be ultrametric, i.e. when all root-leaf paths have the same length?
Finally, does \MaxNPD{} remain \Wh{1}-hard when the parameter is the number of species~$k$ to save plus the level of the network?

\chapter{Conclusion}
\label{ctr:conclusion}
We studied several problems concerning the maximization of preserved phylogenetic diversity within the constraint of limited resources, which allow for the selection of only a few taxa.
The considered problem definitions model biological processes more realistically compared to the basic problem of maximization phylogenetic diversity,~\MPD.
For these problem definitions, within the framework of parameterized algorithms, we presented several lower and upper running time bounds for algorithms.

Before presenting a broader perspective on this work, we will briefly review the different problem definitions and the results obtained.
Finally, we will offer an outlook on potential future research directions.

\section{Summary of Problems and Results}
The first problem we considered is \GNAPLong, in Chapter~\ref{ctr:GNAP}.
With \GNAP one is able to model different costs for saving taxa and also an uncertainty as to whether an intervention actually saves a taxon.

We showed that \GNAP is \XP with respect to the number of unique costs plus the number of unique \sprops.
We further proved that \GNAP is \Wh{1}-hard with respect to the number of taxa.
We also showed that \ucNAP, the special case of \GNAP in which all projects with a positive \sprop have a cost of~1, is \NP-hard even on phylogenetic trees with a height of~2.
It remains open if~\GNAP, or even \ucNAP, is strongly \NP-hard or if \GNAP can be solved in pseudo-polynomial running time.

In Chapter~\ref{ctr:TimePD} we considered \tPDs and \tPDws.
These problems are motivated by the concern that some taxa need to be treated earlier than others because not all taxa go extinct at the same point in time.
As it is necessary for a solution of an instance of either of the two problems to indicate how to schedule the available time in order to save all of these taxa,
a connection to the field of scheduling is given.

Both problems, \tPDs and \tPDws, are \FPT with respect to the target diversity $D$.
Further, \tPDs is also \FPT when parameterized by the acceptable loss of phylogenetic diversity $\Dbar$.
In contrast, this result does not hold for \tPDws, unless \PneqNP.
It remains open whether \tPDs is solvable in pseudo-polynomial time.

With \MPD it is not possible to check whether taxa in a selected set have crucial dependencies to taxa which are not selected.
In \PDDlong, which we studied in Chapter~\ref{ctr:FoodWebs}, one not only searches for a phylogenetically diverse set of taxa but also for one where in the given food-web each taxon either is a source or has an edge from another saved taxon.
Therefore, with \PDD it is possible to compute a small set of taxa that has a phylogenetic diversity above a certain threshold and each selected is self-sufficient in the ecological system or feeds on the saved set of taxa.

We presented algorithms with a color-coding approach to show that \PDD is \FPT when parameterized with the solution size plus the height of the phylogenetic tree.
\sPDD---the special case of \PDD in which the phylogenetic tree is a star---is \Wh{1}-hard when parameterized by the acceptable loss of phylogenetic diversity.
We also considered the structure of the food-web.
Here, we proved that \PDD remains \NP-hard if every connected component of the food-web is a star or a clique.
However, \PDD is \FPT with respect to the distance to co-cluster of the food-web.
We showed further that \sPDD is \FPT with respect to the treewidth of the food-web.
It is open whether \PDD is \FPT with respect to the size of the solution,
or if \PDD can be solved in polynomial time if every connected component in the food-web contains at most two vertices.

In some constellations, the evolutionary history of taxa may be explained better with a phylogenetic network than with a phylogenetic tree.
Models for generalizing phylogenetic diversity on explicit phylogenetic networks have been defined rather recently~\cite{WickeFischer2018,bordewichNetworks}.
In Chapter~\ref{ctr:Networks}, we considered two corresponding problems,~\MAPPDlong and~\MaxNPD.

We presented reductions to showcase an equivalence between \MAPPD and a special case of \SC, both parameterized by the size of the solution and thereby establishing a \Wh{2}-hardness of \MAPPD with respect to~$k$.
We further showed that \MAPPD is \FPT when parameterized with the number of reticulations or with the treewidth of the network.
Finally, \MaxNPD remains \NP-hard on phylogenetic networks of level~1.
These results have the practical implications that exact solutions can be computed in a reasonable time if the phylogenetic network is relatively treelike.
It remains open whether \MAPPD admits a kernelization of polynomial size for $\vret$.
Further, we wonder if \MaxNPD is \FPT when parameterized by~$k$ plus the level of the phylogenetic network.

\section{A Broader View on Our Results}
Let us now provide a broader view of the results of this work.
Firstly, we want to observe that the regarded problems can be distinguished between two categories.
While \BNAP, \tPDs, and \tPDws are primarily extending \MPD in a direction to better model interventions of men,
\PDD, \MAPPD, and \MaxNPD solely model biological processes.

The algorithms that we presented are predominantly dynamic programming algorithms or use the technique of color-coding.
It is rather not surprising to have a lot of dynamic programming approaches when dealing with a tree-structure.
In the color-coding algorithms, we mostly used perfect hash-families and sometimes universal sets.
While these techniques provide the desired \FPT-results, representative sets~\cite[Chapter~12.3]{cygan,monien1985} actually yield faster running times for other problems.
We, therefore, ask whether it is possible to apply representative sets to at least improve some of the running times.

We observe that certain hardness results, both~\NP-hardness and \Wh{$i$}-hardness, were mostly given even on phylogenetic trees or networks that only have a constant height.
On the contrary, it seemed that if a certain algorithm works on trees of height three or four, then the approach also generalizes to arbitrary heights.
As phylogenetic trees tend to have significant height in practice, this is a rather positive observation because some algorithmic ideas are not strongly impeded by tall trees.

The \ucNAP problem presented itself as an interesting problem for reductions.
This is for example utilized in Theorem~\ref{thm:Net-level-1}.
We have been able to prove that \ucNAP is \NP-hard in Theorem~\ref{thm:GNAP-ucNAP-hardness}.
However, we did not succeed in presenting a pseudo-polynomial running-time algorithm for \ucNAP, nor in showing that \ucNAP is strongly \NP-hard.
If \ucNAP was strongly \NP-hard, then we could also exclude pseudo-polynomial running-time algorithms for \GNAP and \MaxNPD restricted to level-1-networks, unless~\PeqNP.

We now have a look into some specific parameters considered in this work.
The parameterization by the acceptable loss of phylogenetic diversity, \Dbar, is to the best of our knowledge introduced by us.
Considering the fact that we are probably interested in preserving the majority of available phylogenetic diversity, we may assume \Dbar to be relatively small in real-world instances.
Considering \Dbar as a parameter proved to be fruitful, as \tPDs and \MAPPD are \FPT with respect to~\Dbar.
Both these algorithms use the technique of color-coding.
Theorem~\ref{thm:timePD-DBar}, in which we prove that \tPDs is \FPT when parameterized with \Dbar, is arguably the most technically advanced algorithm in this thesis.

Most of the considered problems are \Wh{1}-hard with respect to the size of the solution.
This results from the fact that these problems are generalizations of \KP or \SC, which are known to be \Wh{1}-hard for the solution size.
Further, most of the problems can be solved with trivial $\Oh(2^{|X|})$ running-time algorithms.
One can wonder if any improvements can be made to these algorithms.
At least for \MAPPD, we know that the algorithm running in~$\Oh(2^{|X|})$ time is tight under \SETH, by~Corollary~\ref{cor:Net-X}.

\section{Research Ideas Based on This Work}
In the remainder of the discussion, we want to delve a bit into general ideas for further research.
Within this thesis, we presented many algorithmic ideas and discussed their correctness and running time upper bounds.
Given the theoretical nature of this work, it would be valuable to evaluate the practical efficiency of these algorithms.
Therefore, one would need to implement some of the algorithmic ideas to evaluate their practical efficiency.
For instance, the branching algorithm for \MAPPD (Theorem~\ref{thm:Net-ret}) and the algorithms for the parameterization with~$D$ (Theorems~\ref{thm:timePD-D} and~\ref{thm:PDD-D}) or with~$\Dbar$ (Theorems~\ref{thm:timePD-DBar} and~\ref{thm:Net-Dbar}) would be recommendable.
Implemented algorithms could do both, give further insight into the complexity of the problems, and also act as a bridge between theory and practical decision-making in conservation projects.

In some real-world preservation actions, decision-makers choose to preserve natural reservoirs instead of specific taxa, for example be due to political pressure.
In \textsc{Optimizing Diversity with Coverage} (\textsc{ODC}), one may select regions for protection in which all taxa are preserved~\cite{moulton}.
\textsc{ODC} is \NP-hard by a reduction from \SC~\cite{moulton} and a generalization has been studied from an approximation point-of-view~\cite{bordewich2008,BudgetedNatureReserveSelection}.
It would be interesting to see whether \textsc{ODC} is \FPT for reasonable parameters.

Naturally, one could ask to combine the aspects of the individual problems.
Given the presented individual hardness results for the problems, this attempt seems rather not so fruitful---and extremely technical.
Nevertheless, the consideration of ecological constraints, as done in \PDD, is indeed important to consider in conservation interventions.
It could present an interesting study to combine these viability constraints with phylogenetic networks.
Do the resulting problems become much harder than~\PDD?

In Chapter~\ref{ctr:Networks}, we considered two existent definitions of phylogenetic diversity on phylogenetic networks.
We have not discussed how well these problems model biological processes.
One of the most important tasks in this field now would be to assess which variants of phylogenetic diversity on networks are biologically most accurate.
This could of course depend on the type of species considered and in particular on the type of reticulate evolutionary events.
Even if the maximization problem cannot be solved efficiently, having a good measure of phylogenetic diversity can still be of great practical use by measuring how diverse a given set of species is.

Finally, we observe that we have considered problems in maximizing phylogenetic diversity.
When it comes to the preservation of taxa we are arguably very interested in the different features of a selected set of taxa and therefore rather in feature or functional diversity.
Computing functional diversity, however, is even impossible in cases~\cite{MPC+18} and so we use phylogenetic diversity as somewhat fitting albeit imperfect proxy of functional diversity~\cite{winter2013}.
We wonder if in special cases parameterized algorithms could be a tool to handle the intractability.
Further, it has been reported that maximizing phylogenetic diversity is only marginally better than selecting a random set of species when it comes to maximizing the functional diversity of the surviving species~\cite{MPC+18}.
The situation could be different, however, when some constraints of this thesis, especially the biological constraints considered in the last two chapters, are incorporated.
Here, investigating the following two questions seems fruitful:
First, do randomly selected viable species sets have a higher functional diversity than randomly selected species?
Second, do viable sets with maximal phylogenetic diversity have a higher functional diversity than randomly selected viable sets?

\newpage\null\thispagestyle{empty}\newpage

\bibliographystyle{alpha}

\newcommand{\etalchar}[1]{$^{#1}$}

\end{document}